\newcommand{\ignore}[1]{}
\newcommand{\colspace}{@{\hspace{.5cm}}}
\newcommand{\myinput}[1]{\ifthenelse{\boolean{withimages}}{\input{#1}}{}}
\newcommand{\reflemma}[1]{Lemma~\ref{l:#1}}
\newcommand{\reflemmap}[2]{Lemma~\ref{l:#1}.\ref{p:#1-#2}}
\newcommand{\reflemmaeq}[1]{{L.\ref{l:#1}}}
\newcommand{\reflemmaeqp}[2]{{L.\ref{l:#1}.\ref{p:#1-#2}}}
\newcommand{\refpoint}[1]{Point~\ref{p:#1}}
\newcommand{\refpointeq}[1]{P.\ref{p:#1}}
\newcommand{\refcorollary}[1]{Corollary~\ref{c:#1}}
\newcommand{\refth}[1]{Theorem~\ref{th:#1}}
\newcommand{\refthmeq}[1]{{T.\ref{thm:#1}}}
\newcommand{\refthp}[2]{Theorem~\ref{th:#1}.\ref{p:#1-#2}}
\newcommand{\refthm}[1]{Thm.~\ref{thm:#1}}
\newcommand{\refprop}[1]{Prop.~\ref{prop:#1}}
\newcommand{\refpropp}[2]{Prop.~\ref{prop:#1}.\ref{p:#1-#2}} 
\newcommand{\refthpboth}[2]{\refthp{#1}{#2} (proved in Thm.~\ref{thmappendix:#2})}
\newcommand{\refthmboth}[1]{\refthm{#1} (proved in Thm.~\ref{thmappendix:#1})}
\newcommand{\refthboth}[1]{\refth{#1} (proved in Thm.~\ref{thmappendix:#1})}
\newcommand{\reflemmaboth}[1]{\reflemma{#1} (proved in Lemma~\ref{lappendix:#1})}
\newcommand{\reflemmaappendix}[1]{Lemma~\ref{lappendix:#1}}
\newcommand{\reflemmaappendixp}[2]{Lemma~\ref{lappendix:#1}.\ref{p:#1-#2}}
\newcommand{\refcorollaryboth}[1]{\refcorollary{#1} (proved in Corollary~\ref{cappendix:#1})}
\newcommand{\refsect}[1]{Sect.~\ref{sect:#1}}
\newcommand{\refSECT}[1]{Sect.~\ref{SECT:#1}}
\newcommand{\refapp}[1]{Appendix~\ref{app:#1} (p.~\pageref{app:#1})}
\newcommand{\reffig}[1]{Fig.~\ref{fig:#1}}
\newcommand{\refex}[1]{Ex.~\ref{ex:#1}}
\newcommand{\ie}{\textit{i.e.}\xspace}
\newcommand{\eg}{\textit{e.g.}\xspace}
\newcommand{\ih}{\textit{i.h.}\xspace}
\newcommand{\lat}{\mbox{term}\xspace}
\newcommand{\ES}{\text{ES}\xspace}
\newcommand{\full}{\text{full}\xspace}
\renewcommand{\full}{\text{strong}\xspace}
\newcommand{\blue}[1]{{\color{blue} {#1}}}
\newcommand{\darkgreen}[1]{{\color{green!50!black} {#1}}}
\newcommand{\defeq}{\coloneqq} 
\newcommand{\eqdef}{\eqqcolon} 
\newcommand{\grameq}{\Coloneqq} 
\newcommand{\set}[1]{\{#1\}}
\newcommand{\nat}{\mathbb{N}}
\newcommand{\size}[1]{|#1|}
\newcommand{\gc}{{\sf gc}}
\newcommand{\admsym}{{\mathsf{sea}}}
\renewcommand{\l}{\lambda}
\newcommand{\isub}[2]{\{#1/#2\}}
\newcommand{\replace}[2]{#1{\shortleftarrow}#2}
\renewcommand{\isub}[2]{\{\replace{#1}{#2}\}}
\newcommand{\esub}[2]{[\replace{#1}{#2}]}
\renewcommand{\esub}[2]{[#1{\shortleftarrow}#2]}
\newcommand{\fv}[1]{{\sf fv}(#1)}
\newcommand{\bv}[1]{{\sf bv}(#1)}
\newcommand{\nf}{\mathsf{nf}} 
\newcommand{\nfo}[1]{\nf_\osym(#1)} 
\newcommand{\rootRew}[1]{\mapsto_{#1}}
\newcommand{\rootlRew}[1]{\; \mbox{}_{#1}{\mapsfrom}\ }
\newcommand{\Rew}[1]{\rightarrow_{#1}}
\newcommand{\lRew}[1]{\; \mbox{}_{#1}{\leftarrow}\ }
\newcommand{\lto}{\lRew{}}
\newcommand{\rtom}{\rootRew{\msym}}
\newcommand{\rtoe}{\rootRew{\esym}}
\newcommand{\reversertoe}{\rootlRew{\esym}} 
\newcommand{\tob}{\Rew{\beta}}
\newcommand{\betav}{\betaabs} 
\newcommand{\abssym}{\lambda} 
\newcommand{\betaabs}{{\beta_{\!\abssym}}} 
\newcommand{\tobv}{\Rew{\betav}} 
\newcommand{\betain}{\beta_{\isym}} 
\newcommand{\esym}{{\mathsf e}}
\newcommand{\isym}{i}
\newcommand{\msym}{\mathsf{m}}
\newcommand{\fsym}{f}
\newcommand{\wsym}{{\mathsf{o}}} 
\newcommand{\wmsym}{{\wsym\msym}} 
\newcommand{\wesym}{{\wsym\esym}} 
\newcommand{\omsym}{{\wmsym}} 
\newcommand{\oesym}{{\wesym}} 
\newcommand{\smsym}{\esssym{\msym}} 
\newcommand{\sesym}{\esssym{\esym}} 
\newcommand{\vmsym}{\mathsf{shuf}} 
\newcommand{\shuf}{\vmsym} 
\newcommand{\shufeqext}{\shufeqext} 
 \newcommand{\tom}{\Rew{\msym}}
 \newcommand{\toe}{\Rew{\esym}}
\newcommand{\tovsubo}{\Rew{\wsym}}
\newcommand{\tomo}{\Rew{\omsym}}
\newcommand{\toeo}{\Rew{\wsym{\esym}}}
\newcommand{\reversetoeo}{\lRew{\oesym}}
\newcommand{\tovsubs}{\Rew{\esssym}}
\newcommand{\toms}{\Rew{\esssym{\msym}}}
\newcommand{\reversetoms}{\lRew{\smsym}}
\newcommand{\toes}{\Rew{\esssym{\esym}}}
\newcommand{\reversetoes}{\lRew{\sesym}}
\newcommand{\toevals}{\Rew{\esssym \evalsym}}
\newcommand{\eqstruct}{\equiv}
\newcommand{\tostruct}{\eqstruct}
\newcommand{\aplsym}{@\textup{l}}
\newcommand{\aprsym}{@\textup{r}}
\newcommand{\essym}{[\cdot]}
\newcommand{\comsym}{\textup{com}}
\newcommand{\tostructapl}{\tostruct_{\aplsym}}
\newcommand{\tostructapr}{\tostruct_{\aprsym}}
\newcommand{\tostructes}{\tostruct_{\essym}}
\newcommand{\tostructcom}{\tostruct_{\comsym}}
\newcommand{\tm}{t}
\newcommand{\tmtwo}{u}
\newcommand{\tmthree}{r}
\newcommand{\tmfour}{q}
\newcommand{\tmfive}{p}
\newcommand{\tmsix}{s}
\newcommand{\tmp}{\tm'}
\newcommand{\tmtwop}{\tmtwo'}
\newcommand{\tmfourp}{\tmfour'}
\newcommand{\tmfivep}{\tmfive'}
\newcommand{\tmfourpp}{\tmfour''}
\newcommand{\tmfivepp}{\tmfive''}
\newcommand{\var}{x}
\newcommand{\vartwo}{y}
\newcommand{\varthree}{z}
\newcommand{\varfour}{w}
\newcommand{\val}{v}
\newcommand{\valtwo}{\val'}
\newcommand{\mol}{b}
\newcommand{\moltwo}{b'}
\newcommand{\molthree}{b''}
\newcommand{\ctxholep}[1]{\langle #1\rangle}
\newcommand{\ctxhole}{\ctxholep{\cdot}}
\newcommand{\ctx}{C}
\newcommand{\ctxtwo}{\ctx'}
\newcommand{\ctxp}[1]{\ctx\ctxholep{#1}}
\newcommand{\sctx}{L}
\newcommand{\sctxtwo}{\sctx'}
\newcommand{\sctxp}[1]{\sctx\ctxholep{#1}}
\newcommand{\sctxtwop}[1]{\sctxtwo\!\ctxholep{#1}}
\newcommand{\arbctxp}[1]{\arbctxp{#1}}
\newcommand{\arbctxtwop}[1]{\arbctxtwop{#1}}
\newcommand{\evsctx}{S}
\newcommand{\evsctxtwo}{\evsctx'}
\newcommand{\evsctxp}[1]{\evsctx\ctxholep{#1}}
\newcommand{\evsctxtwop}[1]{\evsctxtwo\ctxholep{#1}}
\newcommand{\tomachhole}[1]{\leadsto_{#1}}
\newcommand{\tomach}{\tomachhole{}}
\newcommand{\tomachbv}{\tomachhole{\betaabs}}
\newcommand{\tomachbi}{\tomachhole{\betain}}
\newcommand{\rensym}{\mathsf{ren}}
\newcommand{\tomachsub}{\tomachhole{\rensym}}
\newcommand{\tomachgc}{\tomachhole{\gc}}
\newcommand{\tomacho}{\tomachhole{\osym}}
\newcommand{\tomachb}{\tomachhole{\beta}} 
\newcommand{\tomachc}{\tomachhole{\admsym}}
\newcommand{\tomachcone}{\tomachhole{\admsym_1}}
\newcommand{\tomachctwo}{\tomachhole{\admsym_2}}
\newcommand{\tomachcthree}{\tomachhole{\admsym_3}}
\newcommand{\tomachcfour}{\tomachhole{\admsym_4}}
\newcommand{\tomachcfive}{\tomachhole{\admsym_5}}
\newcommand{\env}{e}
\newcommand{\envtwo}{\env'}
\newcommand{\envthree}{\env''}
\newcommand{\envfour}{\env'''}
\newcommand{\emptyenv}{\epsilon}
\newcommand{\cons}{:}
  \renewcommand{\state}{s}
  \newcommand{\state}{s}
\newcommand{\statetwo}{s'}
\newcommand{\rename}[1]{\renamenop{#1}}
\newcommand{\renamenop}[1]{#1^\alpha}
\newcommand{\exec}{\rho}
\newcommand{\exectwo}{\sigma}
\newcommand{\decode}[1]{\llbracket #1\rrbracket}
\renewcommand{\decode}[1]{\underline{#1}}
\newcommand{\good}{\circ}
\newcommand{\deriv}{d}
\newcommand{\derivtwo}{e}
\newcommand{\sizehole}[2]{|#2|_{#1}}
\newcommand{\sizee}[1]{\sizehole{\esym}{#1}}
\newcommand{\sizem}[1]{\sizehole{\msym}{#1}} 
    \newtheorem{theorem}{Theorem}[section]
    \newtheorem{lemma}[theorem]{Lemma}
    \newtheorem{corollary}[theorem]{Corollary}
    \newtheorem{proposition}[theorem]{Proposition}
    \newtheorem{definition}[theorem]{Definition}
\newcommand{\mach}{{\sf M}}
\newcommand{\itm}{i}
\newcommand{\itmtwo}{\itm'}
\newcommand{\fire}{f}
\newcommand{\firetwo}{\fire'}
\newcommand{\sfire}{\fire_\fullsym}
\newcommand{\vsub}{\mathsf{vsc}} 
\newcommand{\tovsub}{\Rew{\vsub}}
\newcommand{\mctx}{{\mathbb C}}
\newcommand{\mctxtwo}{\mctx'}
\newcommand{\mctxthree}{\mctx''}
\newcommand{\mctxp}[1]{\mctx\ctxholep{#1}}
\newcommand{\mctxtwop}[1]{\mctxtwo\ctxholep{#1}}
\newcommand{\mctxthreep}[1]{\mctxthree\ctxholep{#1}}
\newcommand{\unfsym}{\rotatebox[origin=c]{-90}{$\rightarrow$}}
\newcommand{\unf}[1]{#1\unfsym\,}
\newcommand{\csym}{\mathsf{sea}}
\newcommand{\osym}{{\mathsf o}}
\newcommand{\la}[1]{\lambda #1.}
\newcommand{\ictx}{R}
\newcommand{\rctx}{R}
\newcommand{\rctxtwo}{\rctx'}
\newcommand{\rctxthree}{\rctx''}
\newcommand{\rctxp}[1]{\rctx\ctxholep{#1}}
\newcommand{\rmctx}{{\mathbb R}}
\newcommand{\rmctxtwo}{\rmctx'}
\newcommand{\rmctxthree}{\rmctx''}
\newcommand{\rmctxp}[1]{\rmctx\ctxholep{#1}}
\newcommand{\rmctxtwop}[1]{\rmctxtwo\ctxholep{#1}}
\newcommand{\rmctxthreep}[1]{\rmctxthree\ctxholep{#1}}
\newcommand{\myproof}[1]{
\ifthenelse{\boolean{omitproofs}}{\begin{IEEEproof} Proof available but omitted for readability. \end{IEEEproof}}{#1}}
\newcommand{\valES}{\text{answer}\xspace}
\newcommand{\levy}{{L{\'e}vy}\xspace}
\newcommand{\gregoire}{Gr{\'{e}}goire\xspace}
\newcommand{\withproofs}[1]{\ifthenelse{\boolean{withproofs}}{#1}{}}
\newcommand{\withoutproofs}[1]{\ifthenelse{\boolean{withproofs}}{}{#1}}
\newcommand{\NoteProof}[1]{
	\marginnote{{\normalfont\scriptsize{Proof\,p.\,{\pageref{#1}}\,}}}}
\newcommand{\NoteState}[1]{
	\marginnote{{\normalfont\scriptsize{See p.\,{\pageref{#1}}}}}}
\renewcommand{\NoteProof}[1]{\marginnote{{Proof\,p.\,{\pageref{#1}}}}}
\renewcommand{\NoteState}[1]{\marginnote{{See\,p.\,{\pageref{#1}}\\\cref{#1}}}}
\newcommand{\NoteStateP}[2]{\marginnote{{See\,p.\,{\pageref{p:#1-#2}}\\\refthp{#1}{#2}}}}
\crefname{proposition}{Prop.}{Props.}
\crefname{theorem}{Thm.}{Thms.}
\crefname{lemma}{Lemma}{Lemmas}
\crefname{corollary}{Cor.}{Cors.}
\crefname{section}{Sect.}{Sects.}
\Crefname{section}{Section}{Sections}
\newcommand{\indsub}[1]{\sigma_{#1}}
\newcommand{\indenv}[1]{\sctx_{#1}}
\newcommand{\vsubterms}{\Lambda_\vsub}
\newcommand{\vsubcalc}{\lambda_\vsub}
\newcommand{\shufcalc}{\lambda_\shuf}
\newcommand{\doubt}[1]{}
\newcommand{\letexp}{\mathsf{let}}
\newcommand{\lambdamucalc}{\overline\lambda\mu\tilde{\mu}}
\newcommand{\Rule}{\mathsf{r}}
\newcounter{numberone}
\newcounter{numberoneroman}
\newcounter{numberonealph}
\newcommand{\cbn}{CbN\xspace}
\newcommand{\scbn}{Strong CbN\xspace}
\newcommand{\cbv}{CbV\xspace}
\newcommand{\cbneed}{CbNeed\xspace}
\newcommand{\ocbv}{Open \cbv}
\newcommand{\ccbn}{Closed \cbn}
\newcommand{\ccbv}{Closed \cbv}
\newcommand{\scbv}{Strong \cbv}
\newcommand{\compil}[1]{#1^\circ}
\newcommand{\sizebeta}[1]{\size{#1}_\beta}
\newcommand{\tostrat}{\rightarrow}
\newcommand{\tostratx}{\rightarrow_x}
\newcommand{\tomachine}{\tomachhole\mach}
\newcommand{\domain}[1]{\mathsf{dom}(#1)}
\newcommand{\mytr}[1]{\underline{#1}}
\newcommand{\auxtr}[1]{\overline{#1}}
\newcommand\Copy[2]{
        \marginpar{\scriptsize \ \ \hyperlink{hl-appendix-#1}{Proof p.\,{\pageref*{appendix-#1}}}}
	\immediate\write\@auxout{\unexpanded{\global\long\@namedef{mytext@#1}{#2}
  }}%
	#2%
}
\newcommand\Paste[1]{%
        \hypertarget{hl-appendix-#1}{}\label{appendix-#1}
	\renewcommand{\inappendix}[1]{}
	\ifcsname mytext@#1\endcsname
	\@nameuse{mytext@#1}%
	\else
	``??''
	\fi
	\renewcommand{\inappendix}[1]{#1}
}
\newcommand{\inappendix}[1]{#1}
\newcommand{\weakctx}{O}
\newcommand{\weakctxtwo}{\weakctx'}
\newcommand{\weakctxp}[1]{\weakctx\ctxholep{#1}}
\newcommand{\weakctxtwop}[1]{\weakctxtwo\ctxholep{#1}}
\newcommand{\openctx}{O}
\newcommand{\openctxtwo}{\openctx'}
\newcommand{\openctxthree}{\openctx''}
\newcommand{\openctxp}[1]{\openctx\ctxholep{#1}}
\newcommand{\openctxtwop}[1]{\openctxtwo\ctxholep{#1}}
\newcommand{\openctxthreep}[1]{\openctxthree\ctxholep{#1}}
\newcommand{\subctx}{\sctx}
\newcommand{\strongctx}{\extctx}
\newcommand{\strongctxtwo}{\strongctx'}
\newcommand{\extctx}{S}
\newcommand{\extctxp}[1]{\extctx\ctxholep{#1}}
\newcommand{\strongmctx}{{\mathbb E}}
\newcommand{\strongmctxtwo}{\strongmctx'}
\newcommand{\strongmctxthree}{\strongmctx''}
\newcommand{\strongmctxp}[1]{\strongmctx\ctxholep{#1}}
\newcommand{\strongmctxtwop}[1]{\strongmctxtwo\!\ctxholep{#1}}
\newcommand{\strongmctxthreep}[1]{\strongmctxthree\!\ctxholep{#1}}
\newcommand{\subctxp}[1]{\subctx\ctxholep{#1}}
\newcommand{\Pointed}{Rigid\xspace}
\newcommand{\pointed}{rigid\xspace}
\newcommand{\ptm}{r}
\newcommand{\ptmtwo}{\ptm'}
\newcommand{\rtm}{r}
\newcommand{\rtmtwo}{\rtm'}
\newcommand{\rtmthree}{\rtm''}
\newcommand{\esssym}{\mathsf{x}}
\newcommand{\fullsym}{{\mathsf{f}}}
\renewcommand{\fullsym}{{\mathsf{s}}}
\newcommand{\sitm}{\itm_\fullsym}
\newcommand{\sitmtwo}{\itmtwo_\fullsym}
\newcommand{\sval}{\val_\fullsym}
\newcommand{\rlsep}{\blue{\triangleleft}}
\newcommand{\lrsep}{\darkgreen{\triangleright}}
\newcommand{\gensep}{\bowtie}
\newcommand{\kctx}{K}
\newcommand{\kctxtwo}{\kctx'}
\newcommand{\kctxthree}{\kctx''}
\newcommand{\kctxp}[1]{\kctx\ctxholep{#1}}
\newcommand{\kctxtwop}[1]{\kctxtwo\!\ctxholep{#1}}
\newcommand{\kctxthreep}[1]{\kctxthree\!\ctxholep{#1}}
\newcommand{\varstar}{\star}
\newcommand{\wstenv}[1]{\mathsf{w}_{#1}}
\newcommand{\tosmach}{\Rew{\mathsf M}}
\newcommand{\varnames}{V}
\newcommand{\crnames}{V_{\mathsf{cr}}}
\newcommand{\calcnames}{V_{\mathsf{calc}}}
\renewcommand{\good}{good\xspace}
\newcommand\Crumb\mytr
\newcommand\CrumbAux\auxtr
\newcommand{\evalsym}{\esym_{\val}}
\renewcommand{\evalsym}{\esym_{\l}}
\renewcommand{\frame}{F}
\newcommand{\frametwo}{\frame'}
\newcommand{\framethree}{\frame''}
\newcommand{\framep}[1]{\frame\ctxholep{#1}}
\newcommand{\frametwop}[1]{\frametwo\ctxholep{#1}}
\newcommand{\framethreep}[1]{\framethree\ctxholep{#1}}
\newcommand\SCAM{SCAM\xspace}
\newcommand\OCAM{OCAM\xspace}
\renewcommand\lat{$\lambda$-term}
\newcommand\AlphaEq{=_\alpha}
\renewcommand{\decode}[1]{\unf{#1}}
\newcommand{\allvars}[1]{{\sf vars}(#1)}
\newcommand\Disj{\mathrel\bot}
\newcommand{\exder}{%
  \def\exderW[##1]{\triangleright_{##1}\ }%
  \def\exderWO{\triangleright\ }%
  \@ifnextchar[\exderW\exderWO%
  }
\newcommand{\dom}[1]{\mathsf{dom}(#1)}
\renewcommand{\wstenv}[1]{\genv_{#1}}
\renewcommand{\wstenv}[1]{\env_{#1}}
\newcommand{\framei}[1]{\frame_{#1}}
\newcommand{\tomachscam}{\tomachhole{\SCAM}}
\DeclarePairedDelimiter{\norm}{\lVert}{\rVert}
\newcommand\tmpMeasure[2][]{\norm{#2}_{#1}}
\newcommand{\tofs}{\Rew{\esssym{\fsym}}}
\newcommand{\bigo}{\mathcal{O}}
\newcommand{\Id}{{\mathsf{I}}}
\newcommand\mydots{\hbox to .6em{.\hss.}}
\renewcommand{\betav}{\beta_v}
\renewcommand{\tobv}{\Rew{\betav}}
\newcommand\cvar{\mathbbm{x}}
\newcommand\cvartwo{\mathbbm{y}}
\newcommand\cvarthree{\mathbbm{z}}
\newcommand\cvarfour{\mathbbm{w}}
\newcommand\Placeholder{\textbf{?}}
\newtheorem{example}[theorem]{Example}
\renewcommand{\fullsym}{{\mathsf{s}}}
\renewcommand{\full}{\text{strong}\xspace}
\renewcommand{\extctx}{E}
\newcommand{\symfont}[1]{\mathsf{#1}}
\renewcommand{\tostratx}{\rightarrow_\symfont{s}}
\renewcommand{\tomachbv}{\tomachhole{\betav}}
\renewcommand{\tmthree}{p}
\renewcommand{\tmfive}{r}
\begin{document}
%
\title{Strong Call-by-Value is Reasonable, Implosively}

\author{\IEEEauthorblockN{Beniamino Accattoli}
\IEEEauthorblockA{Inria \& LIX, \'Ecole Polytechnique}
\and
\IEEEauthorblockN{Andrea Condoluci}
\IEEEauthorblockA{Tweag I/O}
\and
\IEEEauthorblockN{Claudio Sacerdoti Coen}
\IEEEauthorblockA{
University of Bologna
}
}



\IEEEoverridecommandlockouts
\IEEEpubid{\makebox[\columnwidth]{978-1-6654-4895-6/21/\$31.00~
\copyright2021 IEEE \hfill} \hspace{\columnsep}\makebox[\columnwidth]{ }}

\maketitle

\begin{abstract}
Whether the number of $\beta$-steps in the $\l$-calculus can be taken as a reasonable time cost model (that is, 
polynomially related to the one of Turing machines) is a delicate problem, which depends on the notion of evaluation 
strategy. Since the nineties, it is known that weak (that is, out of abstractions) call-by-value evaluation is a 
reasonable strategy while \levy's optimal parallel strategy, which is strong (that is, it reduces everywhere), is not. 
The strong case turned out to be subtler than the weak one. In 2014 Accattoli and Dal Lago have shown that strong 
call-by-name is reasonable, by introducing a new form of useful sharing and, later, an abstract machine with an 
overhead quadratic in the number of $\beta$-steps.

Here we show that also strong call-by-value evaluation is reasonable for time, via a new abstract machine realizing useful 
sharing and having a linear overhead. Moreover, our machine 
uses a new mix of sharing techniques, adding on top of useful sharing a form of implosive sharing, which on some terms 
brings an exponential speed-up. We give examples of families that the machine executes in time \emph{logarithmic} in the 
number of $\beta$-steps.

\end{abstract}


\section{Introduction}
\label{sect:intro}
In the last few years, the understanding of the time cost models of the $\l$-calculus has attracted considerable 
attention. The beauty of the $\l$-calculus is that it is an abstract formalism, distant from low-level 
implementation details, while still achieving the same expressive power as Turing machines: it suffices a single $\beta$-rule, 
based on a natural notion of substitution. This is however also its main drawback, as the substitution it is 
based upon is a non-atomic operation that may duplicate whole sub-programs. 
A natural question then is how to measure the time of 
programs expressed in the $\l$-calculus. Of course, one wants a \emph{reasonable} cost model in the sense 
of Slot and van Emde Boas \cite{DBLP:conf/stoc/SlotB84}, that is, preserving the notion of polynomial time complexity as 
defined on Turing machines. 

The candidate measure for time is the number of $\beta$-steps to normal form. At first sight, this approach does not 
seem to work. A first issue is that one has to be more precise because there 
are many different evaluation strategies and notions of normal form in the $\l$-calculus. There is however a 
second bigger issue, called \emph{size explosion}, that affects every evaluation strategy. Namely, there 
are families $\{\tm_n\}_{n\in\nat}$ of $\l$-terms such that $\tm_n$ produces in $n$ 
$\beta$-steps---independently of the strategy---a result $r_n$ of size exponential in $n$. Then the chosen 
measure of time---namely $n$---does not even account for the time to write down the result, whose size is exponential in 
$n$.

\paragraph*{How to Stop Worrying and Love the Bomb} The way out of this apparent \emph{cul-de-sac} is to turn to 
evaluation \emph{up to sharing}, where sharing is used to 
provide compact representations of results, avoiding the explosion. 
It then turns out that (for some natural strategies and notions of 
normal form) the number of $\beta$-steps is a reasonable time cost model. The 
point is subtle, let us be precise. 

The idea is to first fix a strategy $\tostratx$ (together with its notion of normal 
form) in the $\l$-calculus, which is kept as a specification, reference system. Then, to study $\tostratx$ via a 
refined $\l$-calculus with sharing---think of an abstract machine---showing that $\tostratx$ 
can be implemented with an overhead polynomial in the number of $\tostratx$ steps, producing as output a term with 
sharing. The exponential explosion is then moved to the process of \emph{unsharing} output terms. Luckily, unsharing 
can essentially always be avoided (unless one really needs to print the unshared output) as terms with sharing can be 
manipulated efficiently without having to unshare them, see Condoluci, Accattoli, and Sacerdoti Coen 
\cite{DBLP:conf/ppdp/CondoluciAC19}.

\paragraph*{Subterm Sharing and Closed Evaluation} Sharing is an overloaded word, indicating a number of very 
different techniques in the literature about decompositions of the $\l$-calculus. The most basic one can be deemed 
\emph{subterm sharing}---itself coming in a number of variants---that amounts to annotate terms with delayed 
substitutions, coming from $\beta$-steps that have been encountered during the evaluation process. Such annotations may 
take the form of $\letexp$-expressions, explicit substitutions, or environments in abstract machines. Subterm sharing 
is 
enough to show that the number of $\beta$-steps is a reasonable cost model for both weak call-by-name (shortened to 
\cbn) 
and weak call-by-value (\cbv) evaluation (\emph{weak} = out of abstractions) with closed terms. These two settings are here 
referred to as \ccbn and \ccbv. The latter models evaluation in \cbv functional programming languages such as OCaml, 
and the fact that it is reasonable has been the first result in the literature about reasonable strategies for the 
$\l$-calculus, due to Blelloch and Greiner \cite{DBLP:conf/fpca/BlellochG95}. Similar results have also been obtained by Sands, Gustavsson, and Moran \cite{DBLP:conf/birthday/SandsGM02} and Dal Lago 
and Martini \cite{DBLP:journals/corr/abs-1208-0515,DBLP:conf/fopara/LagoM09,DBLP:journals/tcs/LagoM08}.

\paragraph*{Useful Sharing and Strong (\cbn) Evaluation} Subterm sharing is not enough beyond the closed case, that is, 
when evaluation may take place under abstraction and terms may be open---what we refer to as the \emph{strong 
$\l$-calculus}. Namely, there are exploding families whose strong evaluation with subterm sharing takes 
exponential time, independently of the evaluation strategy. For some time, indeed, it has been an open question 
whether there are strong  strategies that can be implemented within a reasonable overhead. The community used to believe 
that it was not 
the case, because of Asperti and Mairson's result that the \levy's optimal (strong) strategy is not reasonable 
\cite{DBLP:journals/iandc/AspertiM01}.

The question was settled by Accattoli and Dal Lago, showing that \scbn is reasonable: the number of call-by-name 
leftmost(-outermost) evaluation $\beta$-steps, which is a strong strategy, is a reasonable time cost model
\cite{DBLP:journals/corr/AccattoliL16}. For 
proving their result, they introduce the new layer of \emph{useful sharing}, operating on top of subterm sharing, and 
show that this is mandatory. Useful sharing amounts to do minimal unsharing work, namely only when it contributes to 
create $\beta$-steps, while avoiding to unfold the sharing when it only makes the term grow in size.

In \cite{DBLP:journals/corr/AccattoliL16}, the authors prove a polynomial overhead without investigating the degree. 
Later on, Accattoli provided an abstract machine---the only reasonable strong machine in the 
literature---with quadratic overhead when implemented on random access machines (RAM) \cite{DBLP:conf/wollic/Accattoli16}. 

Knowing that leftmost evaluation (sometimes referred to as \emph{normal order}) is reasonable is theoretically 
valuable. Because it answers an important questions, but also because leftmost evaluation is a sort of 
canonical strategy for the strong $\l$-calculus. At the same time, however, it is not of much practical value because 
leftmost evaluation 
might be inefficient, and Strong \cbv or Strong Call-by-Need are preferred in practice. For instance, both are used in the implementation of Coq.

\subsection{Contributions of the Paper} 
We prove, for the first time, that also \scbv is reasonable for time. The calculus for \scbv that we adopt is Accattoli and Paolini's \emph{value substitution calculus} \cite{AccattoliPaolini12} (shortened to VSC), for which we consider an \emph{external strategy} playing the same role played by the leftmost strategy in Strong CbN (proved normalizing by Accattoli et al. in \cite{accattoli2021semantic}). 

The main contribution is a new 
abstract 
machine, the \emph{strong crumbling abstract 
machine} (\SCAM{}), that implements the external strategy of the VSC (\refth{machine-final}) and that via subterm and useful sharing does so within a \emph{bilinear} overhead (\refthm{bilinear-scam}), that is, linear in the number of $\beta$-steps and in the size of the initial term, when implemented on RAM, improving over 
Accattoli's quadratic bound. The \SCAM{} actually goes considerably further, adding a form of \emph{implosive sharing} (surveyed below) which on 
some terms brings an 
exponential speed-up, evaluating them in time \emph{logarithmic} (!) in the number of $\beta$-steps, as we show on an 
example (\refprop{implosive-family}). Since implosive sharing forbids to apply the usual proof technique for the correctness of abstract machines, we 
develop a new more flexible one, based on the notion of \emph{relaxed implementation}, in \refSECT{RELAXED}. Last, we provide a prototype 
implementation of our machine in OCaml.

\subsection{Motivations} 
First and foremost, our motivation is foundational. We want to contribute to the study of 
reasonable cost models, showing that \scbv is reasonable for time. We also strive to obtain the best bounds for implementing 
strong evaluations because, after decades of research, how to best implement (strong) $\beta$-reduction is still an open 
problem.

Another motivation comes from the theory of proof assistants, where strong evaluation plays a role. Typically, settings 
such as Coq or Agda use strong evaluation to implement the $\beta$-conversion test, used for type checking with 
dependent types. In particular, one of the abstract machines at work in Coq, due to \gregoire and Leroy 
\cite{DBLP:conf/icfp/GregoireL02}, relies on call-by-value.

\paragraph*{Bounds for $\beta$-Conversion} The \emph{pure algorithm} for testing $\beta$-conversion of two terms $\tm$ 
and $\tmtwo$ first reduces them to normal form and then tests the results for equality. In general, conversion is 
undecidable, because $\tm$ or $\tmtwo$ may diverge, but one may ask---when $\tm$ and $\tmtwo$ are normalizable---what is 
the complexity of checking conversion. Without sharing, the pure algorithm is clearly exponential. By combining our 
results with the linear time algorithm for equality up to sharing by  Condoluci, Accattoli, and Sacerdoti Coen 
\cite{DBLP:conf/ppdp/CondoluciAC19}, we obtain a pure call-by-value algorithm using sharing, and working in time linear 
in the number of (\cbv) $\beta$-steps and in the size of the initial terms. We are not aware of other similar bounds in 
the literature, nor of any algorithmic study of $\beta$-conversion.

Beware: even though this work provides foundations for the implementation of proof 
assistants, we do not aim at direct applications. This is because proof assistants do not usually implement 
conversion via the pure algorithm, as they rest on a number of heuristics to shortcut it, see Sacerdoti Coen~\cite{CSC:strategies}.

\subsection{Implosive Sharing and All That} 
Once subterm sharing is adopted, it is possible to also \emph{evaluate} inside shared 
subterms, thus sharing \emph{evaluations}, not just subterms. The consequence is that one $\beta$-step in the shared 
settings maps to potentially \emph{many} $\beta$-steps in the $\l$-calculus, creating in some cases a \emph{steps 
explosion}, or, dually, an implosion: $n$ $\beta$-steps in the $\l$-calculus may in some cases \emph{implode} up to $\log 
n$ steps in the refinement with sharing. 

The terminology \emph{implosive sharing} is ours but not the (previously nameless) concept: the literature  contains 
implosive evaluation strategies, such as Wadsworth's call-by-need \cite{Wad:SemPra:71} (shortened to \cbneed) or \levy's 
optimal reduction \cite{thesislevy}. 

Implosive sharing poses two technical issues. First, proving correctness, because one needs to relate a 
single $\beta$-step in the sharing setting with potentially many $\beta$-steps in the unshared one, which is always 
involved. The second challenge is complexity analyses, because implosive sharing at times breaks the so-called 
\emph{subterm invariant}: the key property that duplicated terms along the whole evaluation with sharing are subterms of the initial term, which is essential for complexity analyses, and it is used in all existing proofs that a strategy is reasonable.

\paragraph*{Mixing Implosive and Useful Sharing}
There is a degree of freedom in the design of useful sharing. Accattoli and Dal Lago use a non-implosive approach which 
is naturally suggested by the \cbn setting that they study. We adopt here an alternative implosive approach, 
naturally suggested by the \cbv setting. This and other design choices of our  \SCAM{}, such as garbage collection and the light form of compilation called \emph{crumbling}, 
are detailed in \refsect{design}. 

Because of implosive sharing, correctness is the most demanding theorem of the paper, 
for which we develop a new abstract approach, deemed \emph{relaxed implementation}, that we then apply concretely. They 
key idea is modeling the one-to-many phenomenon induced by implosive sharing via a \emph{parallel} strategy on the 
calculus. The complexity analysis of the \SCAM{}, instead, is a smooth adaptation of others in the literature, because 
the implosive sharing of the \SCAM{} is carefully designed as to not clash with the subterm invariant.

\paragraph*{Space} As most environment machines in the literature, the \SCAM{} uses space linearly in its time consumption, and it is then space inefficient. In contrast to the literature however, the \SCAM{} does implement garbage collection, and we strived to make our prototype implementation parsimonious in space.

We do not address the study of a reasonable space cost model---the existence of one for the $\l$-calculus is an open problem. 
There is a recent partial result by Forster, Kunze, and Roth 
\cite{DBLP:journals/pacmpl/ForsterKR20}, but their space cost model---namely, the 
size of the term---can  only measure
linear and super-linear space. In order to study relevant space complexity classes such as \L, one needs to be 
able to measure \emph{sub-linear} space, and thus the result in \cite{DBLP:journals/pacmpl/ForsterKR20} is not a solution for the general problem.


\subsection{Related Work}
\paragraph*{About Abstract Machines} The study of machines for strong evaluation is a blind spot of the 
field, despite the 
relevance for the implementation of proof assistants. There are very few strong machines in the 
literature. The ones by Cr{\'{e}}gut \cite{DBLP:journals/lisp/Cregut07,DBLP:journals/jfp/Garcia-PerezN19} (\cbn), Biernacka et al. 
\cite{DBLP:conf/aplas/BiernackaBCD20} (\cbv), and Biernacka and Charatonik \cite{DBLP:conf/rta/BiernackaC19} 
(\cbneed) all have exponential overhead. The last two works are based on Ager et al. functional correspondence \cite{DBLP:conf/ppdp/AgerBDM03} and Danvy and Nielsen (generalized) refocusing \cite{Danvy04refocusingin,DBLP:conf/rta/BiernackaCZ17}. After submitting our work, we became aware of an independent, concurrent, and currently unpublished work by Biernacka et al. also proving that \scbv is reasonable for time \cite{DBLP:journals/corr/abs-2102-05985}, and using a different approach. 
De Carvalho \cite{deCarvalho18} and  
Ehrhard and Regnier \cite{DBLP:conf/cie/EhrhardR06} study variants of Cr{\'{e}}gut's machine  for denotational purposes.

Coq uses more than one abstract machine for strong evaluation. The \cbv one due to
\gregoire and Leroy \cite{DBLP:conf/icfp/GregoireL02} is---perhaps surprisingly---not really a strong machine. It is 
obtained by iterating under abstractions a \cbv machine for weak evaluation with open terms---a setting sometimes called 
 \emph{\ocbv} \cite{DBLP:conf/aplas/AccattoliG16}. Their machine for \ocbv has exponential overhead (if implemented as defined in \cite{DBLP:conf/icfp/GregoireL02}), and the iteration is also na\"ive and costly 
(namely it unfolds sharing before iterating, thus potentially \emph{exploding} in size), adding a further exponential 
cost. Iterating an open machine is actually very subtle: Accattoli and Guerrieri in \cite{DBLP:journals/scp/AccattoliG19} show 
that, even without sharing unfolding, and even when the open machine is reasonable, iterating may not give a reasonable 
machine for \scbv, as the iteration may 
introduce an exponential blow up. Abstract machines for \ocbv have then been studied in-depth by Accattoli and 
co-authors in \cite{fireballs,DBLP:journals/scp/AccattoliG19,DBLP:conf/ppdp/AccattoliCGC19}, and optimized as to be reasonable 
and with linear overhead, but never extended to \scbv.

Coq also uses a strong machine performing \cbneed evaluation, designed and studied by Barras' in his PhD thesis~\cite{barras-phd}, and for which no complexity results 
are known---its correctness to our knowledge has never been fully proved.

\paragraph*{About Implosive Sharing} \cbneed evaluation---usually considered in the closed 
setting---is an implosive sharing refinement of Closed \cbn. Its correctness is notoriously technical, see for instance 
Maraist, Odersky, and Wadler \cite{DBLP:journals/jfp/MaraistOW98}, and Ariola and Felleisen 
\cite{DBLP:conf/popl/AriolaFMOW95,DBLP:journals/jfp/AriolaF97}. Kesner develops an elegant alternative technique 
resting on multi types \cite{DBLP:conf/fossacs/Kesner16}, that has been adapted to the strong case---becoming quite more 
technical---in \cite{DBLP:journals/pacmpl/BalabonskiBBK17}. 
The correctness of implementations of optimal reductions is extremely involved and sophisticated, see Asperti and 
Guerrini \cite{AG-OIFPL-98}. None of these works are presented using abstract machines. Call-by-need machines do 
exist, but their correctness is always proved relatively to a call-by-need calculus, as in \cite{DBLP:conf/ppdp/DanvyZ13}, with respect to which they are not 
implosive---the standard correctness technique indeed applies.\medskip

\paragraph*{Proofs} \ifthenelse{\boolean{techreport}}{Proofs are in the Appendix. With respect to the LICS 2021 proceedings version of the paper, this version also contains a few more technical details in the body of the paper starting from Section VII.}{Proofs are in the technical report on Arxiv \cite{DBLP:journals/corr/abs-2102-06928}.}

\section{The Value Substitution Calculus}
\label{sect:calculus}
Plotkin's call-by-value $\l$-calculus \cite{DBLP:journals/tcs/Plotkin75} is known to behave perfectly as long as terms are closed (that is, without free variables) and evaluation is weak---let us call such a setting \emph{Closed \cbv},  following Accattoli and Guerrieri \cite{DBLP:conf/aplas/AccattoliG16}. 

It is well known that as soon as one considers open term or strong evaluation then Plotkin's \cbv $\betav$-rule $(\la\var\tm)\val \tobv \tm\isub\var{\val}$ is no longer adequate with various semantical properties---as  first shown by Paolini and Ronchi della Rocca
\cite{DBLP:journals/ita/PaoliniR99,DBLP:conf/ictcs/Paolini01,parametricBook}---and the operational semantics has to be extended somehow. In \cite{DBLP:conf/aplas/AccattoliG16}, Accattoli and Guerrieri compare various ways of doing it, and show that in the open setting they are all equivalent\footnote{One of these calculi is (the \cbv and intuitionistic fragment of) Curien and Herbelin's
$\lambdamucalc$-calculus \cite{DBLP:conf/icfp/CurienH00}, which could be used to reformulate the results in this work, another one is  Guerrieri and Carraro's \emph{shuffling calculus} $\shufcalc$ \cite{DBLP:conf/fossacs/CarraroG14}, which instead could not, because its cost model is unclear, see \cite{DBLP:conf/aplas/AccattoliG16}.}. 

Here we adopt one of those calculi, Accattoli and Paolini's \emph{value substitution calculus}, shortened here to VSC \cite{AccattoliPaolini12}. It was first introduced to study a semantical property of \scbv, \emph{solvability}, and it is isomorphic to the \cbv representation of the $\l$-calculus into linear logic, as shown by Accattoli \cite{DBLP:journals/tcs/Accattoli15}.

It is also well known that values can be defined as variables and abstractions, or simply as abstractions---the resulting theories differ only for inessential details. Restricting values to abstractions is preferred by works on \cbv abstract machines---including this one---because it leads to better performances, as shown by Accattoli and Sacerdoti Coen \cite{DBLP:journals/iandc/AccattoliC17}.

\paragraph*{The Value Substitution Calculus} There are various ingredients in the VSC. First, the syntax of the $\l$-calculus is extended with $\letexp$-expressions, that we here prefer to more compactly write  as explicit substitutions $\tm\esub\var\tmtwo$ (shortened to ES), while we use $\tm\isub\var\tmtwo$ for  meta-level substitution.
\begin{center}
$\arraycolsep=3pt\begin{array}{rrl}
\textsc{VSC Values} & \val & \grameq \la\var\tm 
\\
\textsc{VSC Terms} & \tm,\tmtwo, \tmthree & \grameq \var \mid \la\var\tm \mid \tm\tmtwo 
\mid \tm \esub\var\tmtwo 
\end{array}$
\end{center}
There also is a crucial use of \emph{contexts} to specify the rewriting rules. Contexts are terms with a \emph{hole} $\ctxhole$ intuitively standing for a removed subterm. We shall see various notion of contexts. For now, we need unrestricted contexts $\ctx$ and the special case of substitution contexts $\sctx$ (standing for \emph{L}ist of substitutions).
\begin{center}
$\arraycolsep=3pt\begin{array}{rrl}
\textsc{Contexts} & \ctx & \grameq \ctxhole \mid \ctx \tm \mid \tm \ctx \mid \la{\var}{\ctx} \mid %
\ctx \esub\var\tm \mid \tm \esub \var \ctx \\
\textsc{Sub. Ctxs } &\subctx & \grameq \ctxhole \mid \subctx 
\esub\var\tm
\end{array}$
\end{center}
Replacing the hole of a context $\ctx$ with a term $\tm$ (or another context $\ctxtwo$) is called \emph{plugging} and noted $\ctxp\tm$ (resp. $\ctxp\ctxtwo$).

Given the use of explicit substitutions (shortened to ES), $\beta$-steps are decomposed in two, the introduction of the ES and the turning of an ES into a meta-level substitution. 
The rewrite rules work up to a substitution context $\subctx$, or, if you prefer, up to ES (also called 
\emph{at a distance}). 
\begin{center}
$\begin{array}{rrllll}
\multicolumn{4}{c}{\textsc{VSC Rules at top level}}\\
    \textsc{Multiplicative} & \subctxp{\la\var\tm}\tmtwo &  \rtom  & \subctxp{\tm\esub{\var}{\tmtwo}} \\
    \textsc{Exponential}  & \tm\esub\var{\subctxp{\val}} &  \rtoe  & \subctxp{\tm\isub{\var}{\val}} 
\end{array}$\smallskip
		
		\begin{tabular}{ccc}
\textsc{Contextual closure}:
			&
			\multirow{2}{*}{\begin{prooftree}
					\hypo{\tm \rootRew{a} \tm'}		
					\infer1{\ctxp{\tm} \Rew{a} \ctxp{\tm'}}
			\end{prooftree}}
		\\
		($a \in \set{\msym,\esym}$)
	\end{tabular}\smallskip

		$\begin{array}{cccc}
\textsc{Notation}:& \tovsub  \, {\defeq} \, \tom \cup \toe
	\end{array}$
\end{center}
Examples: $(\la\var\tm)\esub\vartwo\tmtwo \tmthree \tom \tm\esub\var\tmthree \esub\vartwo\tmtwo$ and $\tm\esub\var{\val\esub\vartwo\tmtwo}$ $\toe \tm\isub\var{\val}\esub\vartwo\tmtwo$.
The terminology comes from the connection with linear logic proof nets. Note that the \cbv restriction is not on multiplicative/$\beta$-redexes, but on exponential redexes.

Please note that the VSC can simulate Plotkin's $\betav$ rule, as $(\la\var\tm)\val \tom \tm\esub\var{\val} \toe\tm\isub\var{\val}$. Actually, it does more: in the VSC an open term such as $\tm\defeq (\la\var\delta)(\vartwo\vartwo) \delta$, where $\delta \defeq \la\var \var\var$ is the duplicator, diverges as follows
\begin{center}$
\begin{array}{rllll}
\scriptsize\tm &\tom &\delta\esub\var{\vartwo\vartwo}\delta &\tom 
\\
&&(\varthree\varthree)\esub\varthree\delta\esub\var{\vartwo\vartwo} &\toe (\delta\delta)\esub\var{\vartwo\vartwo} \tovsub \ldots
\end{array}
$\end{center}
while for Plotkin it is normal. 

A key property of the VSC is that while $\tovsub$ obviously does not terminate---being able to simulate Plotkin's $\betav$ rule---its two rules when taken separately are strongly normalizing.
\begin{lemma}[Local termination, \cite{AccattoliPaolini12}]
\label{l:vsc-local-termination}
The reductions $\tom$ and $\toe$ are strongly normalizing.
\end{lemma}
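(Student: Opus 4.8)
The plan is to prove the two halves separately, with a distinct termination measure for each, since their operational behaviour is quite different.

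For $\tom$ I would use a plain size measure: let $a(\tm)$ be the number of application subterms (those of the shape $\tm_1\tm_2$) occurring in $\tm$. The root rule $\subctxp{\la\var\tm}\tmtwo \rtom \subctxp{\tm\esub\var\tmtwo}$ removes exactly the outermost application---the one applying $\subctxp{\la\var\tm}$ to $\tmtwo$---and rearranges the already-present subterms $\subctx$, $\tm$, $\tmtwo$ into an explicit substitution, which is not an application; no new application is created. Thus $a$ drops by one at the root. Since plugging a term into a fixed context $\ctx$ contributes the same number of applications on both sides, the contextual closure preserves the strict decrease, so $\tm \tom \tm'$ implies $a(\tm) > a(\tm')$, and as $a(\tm) \in \nat$ this rules out infinite $\tom$-sequences.

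The relation $\toe$ is the hard part, and the obstacle is duplication. The step $\tm\esub\var{\subctxp\val} \rtoe \subctxp{\tm\isub\var\val}$ performs a meta-level substitution that copies $\val$---together with every explicit substitution nested in it---once for each free occurrence of $\var$ in $\tm$, so neither the size nor the number of ES decreases in general; dually, when $\var$ does not occur in $\tm$ the value $\val$ is erased, so any adequate measure must also account for discarded content. The difficulty is therefore to find one quantity that simultaneously absorbs unbounded duplication and still strictly decreases on erasure.

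I would resolve this with a polynomial interpretation over the nonnegative integers. Taking one indeterminate $X_\var$ per variable $\var$, define $I(\cdot)$ into polynomials with nonnegative integer coefficients by $I(\var) = X_\var$, $I(\la\var\tm) = I(\tm) + 1$, $I(\tm\tmtwo) = I(\tm) + I(\tmtwo) + 1$, and, crucially, $I(\tm\esub\var\tmtwo) = I(\tm)\{X_\var := I(\tmtwo)\} + I(\tmtwo) + 1$; the additive summand $I(\tmtwo)$ is exactly what pays for duplication and erasure. The pivotal step is a substitution lemma, proved by a routine induction on $\tm$, stating that $I$ commutes with meta-level substitution of values: $I(\tm\isub\var\val) = I(\tm)\{X_\var := I(\val)\}$. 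With it in hand, an inspection of the root rule shows that $I(\subctxp{\tm\isub\var\val})$ equals $I(\tm)\{X_\var := I(\val)\}$ transported through the surviving substitution context $\subctx$, whereas $I(\tm\esub\var{\subctxp\val})$ carries the same expression plus the strictly positive charge contributed by the $I(\tmtwo) + 1$ summand of the ES clause; hence $I$ strictly decreases at the root, uniformly over all assignments with each indeterminate at least $1$, in both the duplicating and the erasing case. Finally, since every clause of $I$ is assembled from sums and substitutions of nonnegative-coefficient polynomials, $I$ is monotone in the interpretations of its subterms, so the strict decrease lifts through the contextual closure; evaluating $I$ of the whole term at all indeterminates equal to $1$ yields a natural number strictly decreasing along every $\toe$-step, which gives strong normalization. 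The only genuinely delicate point is engineering the ES clause so that the decrease is uniform; once the $I(\tmtwo)$ charge is in place, the substitution lemma makes the rest mechanical.
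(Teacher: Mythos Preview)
Your proof is correct. The application-count argument for $\tom$ is standard and complete as written. For $\toe$, the polynomial interpretation works: a quick sanity check shows that every $I(\tm)$ is in fact an \emph{affine} polynomial with nonnegative coefficients taking values $\geq 1$ on assignments $\geq 1$, which is exactly what makes both the substitution lemma and the monotonicity under every context constructor go through (for $\ctxp\cdot = \cdot\esub\var\tmtwo$ one substitutes a value $\geq 1$ for $X_\var$, so the hypothesis on assignments is preserved; for $\ctxp\cdot = \tmthree\esub\var\cdot$ the nonnegative coefficient of $X_\var$ in $I(\tmthree)$ gives weak monotonicity, and the explicit $+\,I(\cdot)$ summand gives strictness). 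One point worth stating explicitly is that $I$ is not $\alpha$-invariant as a polynomial (e.g.\ $I(\la\var\var) = X_\var + 1 \neq X_\vartwo + 1 = I(\la\vartwo\vartwo)$), so both the substitution lemma and the root-step calculation silently rely on the Barendregt convention; you should make that assumption visible.

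There is nothing to compare against in the paper itself: the lemma is imported from \cite{AccattoliPaolini12} without proof, and the appendix entry for the corresponding clause simply reads ``See [Lemma~3, AccattoliPaolini12]''. Your self-contained argument therefore supplies more than the paper does here.
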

An \emph{evaluation} is a possibly empty sequence $\deriv: \tm \tovsub^* \tmtwo$ of $\tovsub$ steps, whose number of $\tom$ (resp. $\toe$) steps is noted $\sizem\deriv$ (resp. $\sizee\deriv$).

Next, we discuss the simple open fragment, as it allows to introduce some key concepts for the general case, and on top of which we shall define the parallel strategy to be implemented by the \SCAM{}.

\paragraph*{The Open VSC} The open fragment of the VSC is obtained by first defining open contexts---by removing the abstraction case---and then use them to define the open variant of the rewriting rules, which do not evaluate under abstraction.
\begin{center}
		$\begin{aligned}
		\textsc{Open ctxs} && \weakctx & \grameq \ctxhole \mid \weakctx \tm \mid \tm \weakctx \mid \weakctx \esub\var\tm \mid 
\tm\esub\var \weakctx 
		\end{aligned}$
		\smallskip
		
		\begin{tabular}{ccc}
\textsc{Open rewrite rules}:
			&
			\multirow{2}{*}{\begin{prooftree}
					\hypo{\tm \rootRew{a} \tm'}		
					\infer1{\weakctxp{\tm} \Rew{\wsym a} \weakctxp{\tm'}}
			\end{prooftree}}
		\\
		($a \in \set{\msym,\esym}$)
	\end{tabular}\medskip

		$\begin{array}{cccc}
\textsc{Open reduction}:& \tovsubo  \, \defeq \, \tomo \cup \toeo
	\end{array}$
\end{center}
Careful: the open fragment contains closed terms, because terms and contexts are \emph{potentially} (and not necessarily) open.

Note that the grammar of open contexts implies that evaluation is non-deterministic, as rewriting steps can take place on both sides of an application and on both subterms of ES.  For instance, for any step $\tm\tovsubo \tmtwo$, we have the following span $\tmtwo\tm \lRew{\wsym} \tm\tm  \tovsubo \tm\tmtwo$
that  closes on $\tmtwo\tmtwo$ with one $\tovsubo$ step on each side---the same happens with $\tmtwo\esub\var\tm\lRew{\wsym}\tm\esub\var\tm \tovsubo \tm\esub\var\tmtwo$.
 
Such a non-determinism is harmless because it is \emph{diamond}. A rewriting relation $\to$ is diamond if $\tmtwo_1 \lto \tm \to \tmtwo_2$ and $\tmtwo_1 \neq \tmtwo_2$ imply $\tmtwo_1 \to \tmthree \lto \tmtwo_2$ for some $\tmthree$ (it is the 1-step strengthening of confluence).
\begin{proposition}[\cite{AccattoliPaolini12}]
The reduction $\tovsubo$ is diamond.
\end{proposition}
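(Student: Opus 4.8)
The plan is to reduce the diamond property to the statement that, given two distinct coinitial steps, contracting one leaves a \emph{unique} residual of the other; contracting that residual then yields the common reduct, giving the one-step join on each side. The whole argument rests on two structural facts of the open \cbv setting. First, the exponential rule $\tm\esub\var{\subctxp{\val}} \rtoe \subctxp{\tm\isub\var\val}$ substitutes a \emph{value}, and since open contexts $\weakctx$ never enter the body of an abstraction while values are abstractions, a value carries no $\tovsubo$-redex; hence, although the meta-level substitution $\tm\isub\var\val$ may copy $\val$ many times or erase it entirely, it neither duplicates nor erases any $\tovsubo$-redex. Second, the multiplicative rule $\subctxp{\la\var\tm}\tmtwo \rtom \subctxp{\tm\esub\var\tmtwo}$ merely \emph{relocates} its argument $\tmtwo$ into a fresh ES, again without copying it. Consequently no step can duplicate or erase a redex fired by a coinitial step, which is exactly what prevents the one-step join from degenerating (as it does for ordinary $\beta$, where contracting the outer redex of $(\la\var{\var\var})\,\tmthree$ duplicates any redex sitting inside $\tmthree$).

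I would then proceed by case analysis on the relative positions of the two contracted redexes. A multiplicative redex is always rooted at an application node and an exponential one at an ES node, and in both cases the at-a-distance decomposition is deterministic: one peels the substitution-context spine until an abstraction (resp. a value) is reached, so each application (resp. ES) node is the root of at most one redex. Thus two \emph{distinct} steps always have distinct root positions, and either their patterns are disjoint or one root lies strictly inside the other pattern. The disjoint case is immediate, the two contractions commuting and joining in one step each (this already covers spans such as the two reductions of $\tm\tm$ to $\tm\tmtwo$ and to $\tmtwo\tm$ induced by a step $\tm\tovsubo\tmtwo$, which close on $\tmtwo\tmtwo$). For the nested case, the proper subterms hanging off a redex pattern are the argument, the left body of an exponential, the bodies of the ES's forming the spine, and the body of the head abstraction; the last carries no open redex, while a redex rooted inside any of the others is plainly preserved by the step and has a single residual by the facts above, so these subcases close in one step each.

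The genuine critical pairs, and the main obstacle, are the overlaps \emph{through the substitution-context spine}, where an ES belonging to the spine of one redex is itself the root of the other. The representative case is a multiplicative redex $((\la\var\tm)\esub\vartwo{\subctxp{\val}})\,w$ whose spine ES $\esub\vartwo{\subctxp{\val}}$ is an exponential redex. Contracting the multiplicative step gives $(\tm\esub\var w)\esub\vartwo{\subctxp{\val}}$, on which the exponential step on $\vartwo$ yields $\subctxp{(\tm\isub\vartwo\val)\esub\var w}$; contracting the exponential step first gives $\subctxp{\la\var{\tm\isub\vartwo\val}}\,w$, a multiplicative redex whose contraction yields the \emph{same} $\subctxp{(\tm\isub\vartwo\val)\esub\var w}$. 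The exponential-over-exponential spine overlap closes analogously. Here the delicate points are the commutation of the meta-substitution with the surrounding ES and the scoping side-condition that the bound variable $\vartwo$ does not occur free in the relocated argument $w$, which holds by the usual $\alpha$-convention; verifying that these finitely many spine shapes all close in exactly one step on each side is the only laborious part, and it is precisely the at-a-distance formulation of the rules that makes them close.
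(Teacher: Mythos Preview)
Your argument is correct, but it is organized differently from the paper. The paper does not prove the diamond of $\tovsubo$ directly: in the appendix (\reflemma{basic-value-substitution}) it first shows that $\tomo$ and $\toeo$ are each diamond and that they \emph{strongly commute}, and then obtains the diamond of their union $\tovsubo$ by the standard Hindley--Rosen argument. The case analysis is therefore split into three separate inductions (one per sub-relation and one for commutation), each by induction on the open context of one of the two steps; the ``Step at the Root vs.\ Application/ES Left'' sub-cases there are exactly your spine overlaps. Your unified residual argument is conceptually tighter---the single observation that values carry no $\tovsubo$-redex (open contexts do not enter abstractions) drives the whole proof---whereas the paper's decomposition is more bookkeeping-heavy. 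What the paper's route buys, however, is the finer statement that the one-step joins preserve the \emph{kind} of step: because $\tomo$ and $\toeo$ strongly commute, any two coinitial--cofinal $\tovsubo$-evaluations have the same number of $\msym$-steps and the same number of $\esym$-steps separately, not just the same total length. This refinement is later essential for the cost model (\refprop{external-properties}), so if you go your route you should add a line noting that residuals of multiplicative (resp.\ exponential) redexes remain multiplicative (resp.\ exponential), which is immediate from your analysis.
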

There are two famous consequences of being diamond: \emph{uniform normalization}, that is, if there is a normalizing reduction sequence then there are no diverging sequences, and \emph{random descent}, that is, when a term is normalizable, all sequences to normal form have the same length. Essentially, the diamond is a relaxed form of determinism.

The normal forms of the open fragment have a nice inductive characterization, coming from the so-called \emph{fireball calculus} \cite{DBLP:conf/aplas/AccattoliG16}. \emph{Fireballs} are defined by mutual induction with \emph{inert terms}, and including values, as follows.
\[\begin{array}{r@{\hspace{.5cm}} rll}
  	\textsc{Inert terms} & \itm, \itmtwo & \grameq  &\var \mid \itm \fire \mid \itm \esub{\var}{\itmtwo}
	\\
 	\textsc{Fireballs} &  \fire, \firetwo  &\grameq & \val \mid \itm \mid \fire \esub{\var}{\itm}
\end{array}\]

For instance, $\la\vartwo((\la\var\vartwo)\vartwo)$ is a fireball as a value, while $\var$, $\vartwo(\la\var\var)$, $\var\vartwo$, and $(\varthree(\la\var\Omega))(\varthree\varthree)$ are fireballs as inert terms. 
\begin{proposition}[\cite{AccattoliPaolini12}]
Let $\tm$ be a VSC term. $\tm$ is $\tovsubo$ normal if and only if $\tm$ is a fireball.
\end{proposition}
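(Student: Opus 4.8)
The plan is to characterize the $\tovsubo$-normal forms by exploiting that open reduction is the contextual closure of the two root rules under open contexts, and that open contexts reach every position of a term \emph{except} the body of an abstraction (the grammar of $\weakctx$ has no case $\la\var\weakctx$). Consequently, a term $\tm$ is $\tovsubo$-normal if and only if it has no root redex and every immediate subterm occurring in an open position (that is, every immediate subterm but the body of an abstraction) is itself $\tovsubo$-normal; in particular every abstraction is automatically normal. Moreover the two kinds of root redex are governed by a single shape: an application $\tm\tmtwo$ is a root $\tomo$-redex exactly when $\tm$ is of the form $\subctxp\val$, and an explicit substitution $\tm\esub\var\tmtwo$ is a root $\toeo$-redex exactly when $\tmtwo$ is of the form $\subctxp\val$. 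Thus the whole statement reduces to understanding when a term has the form $\subctxp\val$.

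The crux is a \emph{rigidity} lemma, proved by induction on inert terms: no inert term $\itm$ has the form $\subctxp\val$. The point is that $\subctxp\val$ is always either an abstraction (when $\subctx = \ctxhole$) or an explicit substitution whose left immediate subterm is again of the form $\subctxp\val$, whereas an inert term is a variable, an application, or an explicit substitution $\itm\esub\var\itmtwo$ whose left component $\itm$ is again inert; the variable and application cases are immediate (an inert term is never an abstraction, and $\subctxp\val$ is never an application), and the ES case follows from the induction hypothesis applied to the left component. I also need the mirror classification: a fireball that is \emph{not} of the form $\subctxp\val$ is inert. This too goes by induction, on fireballs: a value $\val$ is $\subctxp\val$; an inert term is never $\subctxp\val$ by rigidity, so it trivially satisfies the conclusion; and for $\fire\esub\var\itm$ one observes that it has the form $\subctxp\val$ if and only if $\fire$ does, so if $\fire\esub\var\itm$ is not of that form then neither is $\fire$, whence $\fire$ is inert by the induction hypothesis and $\fire\esub\var\itm$ is inert as well.

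With these two lemmas the two implications are routine. For the ``if'' direction (fireballs are normal) I would argue by mutual induction on the definitions of inert terms and fireballs, using the observation of the first paragraph: subterm normality comes from the induction hypotheses, abstractions are normal for free, and the absence of a root redex in $\itm\fire$, in $\itm\esub\var\itmtwo$ and in $\fire\esub\var\itm$ follows in each case from the rigidity lemma applied to the relevant inert subterm ($\itm$, $\itmtwo$, $\itm$ respectively), which shows that subterm is not of the form $\subctxp\val$. For the ``only if'' direction (normal terms are fireballs) I would proceed by induction on $\tm$: variables are inert and abstractions are values, hence fireballs; for an application $\tm\tmtwo$ both subterms are normal and thus fireballs by the induction hypothesis, while the absence of a root $\tomo$-redex means $\tm$ is not of the form $\subctxp\val$, so $\tm$ is inert by the mirror lemma and $\tm\tmtwo$ is an inert term; symmetrically, for $\tm\esub\var\tmtwo$ the absence of a root $\toeo$-redex makes $\tmtwo$ inert, so $\tm\esub\var\tmtwo$ matches the fireball clause $\fire\esub\var\itm$.

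The main obstacle is entirely concentrated in the interaction between the substitution-context structure defining the redexes and the inductive structure of inert terms and fireballs. Everything hinges on the two dual facts that an ES term $\fire\esub\var\itm$ (resp. an inert ES $\itm\esub\var\itmtwo$) is of the form $\subctxp\val$ if and only if its left component is, which is precisely what drives both the rigidity lemma and its mirror through the substitution-context case. Once these are established, the remaining case analyses in both directions are mechanical.
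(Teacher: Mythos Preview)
Your proof is correct. The paper does not actually prove this proposition (it is cited from \cite{AccattoliPaolini12}), but it does prove the strong analogue (\reflemma{harmony}) in the appendix, and your approach matches that proof almost exactly: the paper packages your ``rigidity'' and ``mirror'' lemmas into a single dichotomy lemma (\reflemma{shape-of-strong-fireballs}, stating that a fireball is \emph{exactly one} of an inert term or a term of the form $\subctxp\val$), and then runs the same two inductions you describe for the two directions.
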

\paragraph*{The Strong Calculus} Outside of the open fragment, evaluation is not necessarily diamond. For instance, for any step $\tm\tovsub \tmtwo$, the following span
$$(\la\vartwo\tm)(\la\vartwo\tm)\ \lRew{\esym}\ (\var\var)\esub\var{\la\vartwo\tm} \ \tovsub\ (\var\var)\esub\var{\la\vartwo\tmtwo}$$
closes on $(\la\vartwo\tmtwo)(\la\vartwo\tmtwo)$ but not with a diamond diagram. Anyway, the VSC is confluent.
\begin{proposition}[\cite{AccattoliPaolini12}]
The reduction $\tovsub$ is confluent.
\end{proposition}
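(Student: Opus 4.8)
The plan is to prove confluence modularly via the Hindley--Rosen lemma: if $\tom$ and $\toe$ are each confluent and they commute, then their union $\tovsub$ is confluent. This decomposition is natural here because, although $\tovsub$ is non-terminating (it simulates Plotkin's $\betav$), the Local termination lemma tells us that $\tom$ and $\toe$ are \emph{separately} strongly normalizing, which is exactly what makes confluence of each one cheap to obtain.

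For the confluence of each subreduction I would argue by Newman's lemma: since $\tom$ (resp.\ $\toe$) is strongly normalizing, it suffices to establish local confluence, which reduces to inspecting the critical pairs of each rule with itself. For $\tom$ the overlaps are mild, since the rule only creates an explicit substitution without performing any copying; the substitution context in the pattern $\subctxp{\la\var\tm}\tmtwo$ merely reshuffles lists of ES, and the induced divergences close in one step on each side. For $\toe$ one must account for the fact that the meta-level substitution $\isub\var\val$ may duplicate another exponential redex sitting inside the substituted value $\val$ (or inside the body), so a single local divergence may close with \emph{several} $\toe$ steps on one branch; since we only need local confluence and $\toe$ is strongly normalizing, this is harmless. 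Up to the usual $\alpha$-renaming conventions avoiding capture, both analyses are routine.

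The crux is the commutation of $\tom$ and $\toe$, which I would obtain from a strong local commutation diagram \`a la Hindley: whenever $u \lRew\esym \tm \Rew\msym v$, one closes it as $u \Rew\msym \cdots \Rew\msym w$ and $v \Rew\esym w$ (many multiplicative steps on one side, at most one exponential step on the other), and then Hindley's commutation lemma lifts this to full commutation of $\tom$ and $\toe$ in their reflexive--transitive closures. The interesting case is when the multiplicative redex contracted by $\Rew\msym$ lies inside the value $\val$ duplicated by the exponential step $\Rew\esym$: performing the exponential step first copies $\val$ into the $k$ occurrences of the substituted variable, so the $k$ residual multiplicative redexes must all be fired on the $u$-branch, matching the single exponential step (now contracting the already-reduced value, still an abstraction) on the $v$-branch. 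All other configurations --- disjoint redexes, or a multiplicative redex whose distance-pattern $\subctx$ merely gets substituted into --- leave the redex shapes intact and close immediately.

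The main obstacle is precisely this interaction \emph{at a distance}: the two rules genuinely feed each other (multiplicative steps produce the ES that exponential steps consume, while exponential substitution duplicates redexes and can create fresh multiplicative ones), so the commutation diagrams are asymmetric and require careful residual bookkeeping with the substitution contexts. A viable alternative, suggested by the remark that the VSC is isomorphic to the call-by-value translation of the $\l$-calculus into linear logic, would be to transport confluence from cut-elimination on the corresponding proof nets; I would nonetheless keep Hindley--Rosen as the primary route, since it exploits the already-available strong normalization of $\tom$ and $\toe$ and avoids setting up the full net isomorphism. One could instead run a single Tait--Martin-L\"of parallel-reduction argument on all of $\tovsub$, but the meta-substitution in the exponential rule makes the simultaneous-contraction definition more delicate, so the modular argument is preferable.
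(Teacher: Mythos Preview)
The paper does not actually prove this proposition: it merely cites \cite{AccattoliPaolini12}. The only hint about the original argument appears later in the appendix, where it is remarked that \cite[Lemma~11]{AccattoliPaolini12} establishes \emph{strong confluence of $\tovsub$} directly, i.e.\ a one-step diamond-like closure for the full relation, rather than a modular decomposition into $\tom$ and $\toe$.

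Your Hindley--Rosen route is therefore genuinely different, and it is sound. Splitting along $\tom$/$\toe$ and exploiting the Local termination lemma to get confluence of each piece via Newman is exactly the payoff of that lemma, and your asymmetric local commutation diagram (one $\toe$-step against possibly many $\tom$-steps, driven by duplication of $\val$) is the right shape and does lift via the standard Hindley-style commutation lemma. The trade-off: the cited proof handles all overlaps in a single case analysis on $\tovsub$ and yields the stronger one-step closure property, while your argument is more structured and reuses the separate termination of $\tom$ and $\toe$ already stated in the paper, at the price of three separate (if routine) critical-pair analyses. Either is perfectly adequate here.
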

Strong normal forms also have a nice characterization, iterating inside values the one for open normal forms.
\begin{center}
	$\begin{array}{rlc}
	\textsc{Strong inert terms }
	&
	\sitm \grameq \var \mid \sitm \sfire \mid \sitm \esub\var{\sitmtwo}
	\\
	\textsc{Strong values } 
	&
	\sval  \grameq \la\var\sfire
	\\
	\textsc{Strong fireballs } 
	&
	\sfire \grameq \sitm \mid \sval \mid \sfire \esub\var\sitm
	\end{array}$
\end{center}
For instance, $\la\vartwo(\vartwo(\la\var\vartwo))$ is a strong value, while $\la\vartwo((\la\var\vartwo)\vartwo)$ is not. Similarly, $\var (\la\varthree\varthree\varthree)$ is a strong inert term, while $\var (\la\varthree((\la\vartwo\vartwo)\varthree))$ is not. Note that strong fireballs are similar to the normal forms of the (\cbn) $\l$-calculus, except that they can have \ES containing strong inert terms. 
\begin{lemma}[Characterization of normal forms]
	\label{l:harmony} \label{p:harmony-vsub}\label{p:harmony-fireball}
	Let $\tm$ be a VSC term. 
	\NoteProof{lappendix:harmony}
 $\tm$ is $\tovsub$-normal if and only if $\tm$ is a strong fireball. 
\end{lemma}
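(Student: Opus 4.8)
The plan is to prove both implications by structural induction on $\tm$, strengthening the statement so that it simultaneously characterizes \emph{strong inert} terms. The observation that drives everything is that a root redex is always governed by a value sitting under a substitution context: a multiplicative redex has shape $\subctxp{\la\var\tmtwo}\,\tmthree$ and an exponential redex has shape $\tmtwo\esub\var{\subctxp\val}$, so in both cases the critical subterm must be of the form $\subctxp{\val}$ (recall values are exactly abstractions). I would therefore carry along the auxiliary invariant that \emph{a strong inert term is never of the form $\subctxp{\val}$}. This is precisely what forbids redex creation when an inert term occupies the function position of an application or the content of an ES, and it is the heart of the argument; without it neither induction closes.

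For the right-to-left implication (strong fireballs are normal), I would proceed by mutual induction on the three grammars. For strong inert terms I prove at once that they are normal \emph{and} not of the form $\subctxp{\val}$: the base case $\var$ is immediate on both counts; for $\sitm\sfire$ the only candidate root redex is multiplicative and would need $\sitm = \subctxp{\val}$, excluded by the hypothesis, while the subterms are normal by induction and an application is plainly not a $\subctxp{\val}$; for $\sitm\esub\var{\sitmtwo}$ the only candidate root redex is exponential and would need $\sitmtwo = \subctxp{\val}$, again excluded, and a short analysis of substitution contexts shows that $\sitm\esub\var{\sitmtwo} = \subctxp{\val}$ would force $\sitm = \subctxp{\val}$, contradicting the hypothesis. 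For strong values $\la\var\sfire$ there is no root redex and the body is normal by induction; for $\sfire\esub\var{\sitm}$ the only candidate root redex is exponential, blocked because $\sitm$ is not $\subctxp{\val}$. Hence every strong fireball is normal.

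For the left-to-right implication (normal terms are strong fireballs), I would again induct on $\tm$, proving simultaneously that (i) if $\tm$ is normal then it is a strong fireball, and (ii) if moreover $\tm$ is not of the form $\subctxp{\val}$ then it is a strong inert term. Normality of $\tm$ yields normality of all immediate subterms through the contextual closure, so the hypotheses apply to them. A variable is a strong inert term; an abstraction $\la\var\tmtwo$ has a normal body, hence by (i) a strong-fireball body, hence is a strong value. For an application $\tmtwo\tmthree$ the absence of a root multiplicative redex means $\tmtwo$ is not $\subctxp{\val}$, so (ii) makes $\tmtwo$ a strong inert term and (i) makes $\tmthree$ a strong fireball, giving the strong inert term $\tmtwo\tmthree$. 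For $\tmtwo\esub\var\tmthree$ the absence of a root exponential redex means $\tmthree$ is not $\subctxp{\val}$, so (ii) makes $\tmthree$ a strong inert term, (i) makes $\tmtwo$ a strong fireball, and then $\tmtwo\esub\var\tmthree$ is a strong fireball by the third fireball production; if in addition the whole term is not $\subctxp{\val}$, the same substitution-context analysis shows $\tmtwo$ is not $\subctxp{\val}$, so (ii) applies to $\tmtwo$ and the term is a strong inert term.

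The main obstacle I anticipate is not any individual case but the bookkeeping around substitution contexts: the equivalence ``$\tmtwo\esub\var\tmthree$ is of the form $\subctxp{\val}$'' iff ``$\tmtwo$ is of the form $\subctxp{\val}$'' must be established once and reused, and one must remember that the substitution context $\subctx$ hidden inside a root redex may be a nontrivial nested list of explicit substitutions, so the redex patterns match only up to peeling off such a list. Pinning down the strengthened invariant correctly---namely that the strong inert terms are exactly the normal terms that are not values under a substitution context---is what makes both inductions go through cleanly.
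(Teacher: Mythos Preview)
Your proof is correct and follows essentially the same approach as the paper: both directions by structural induction, driven by the key invariant that strong inert terms are exactly the normal terms not of shape $\subctxp{\val}$. The paper factors this invariant out as a separate ``shape of strong fireballs'' lemma (a strong fireball is either a strong inert term or of the form $\subctxp{\sval}$) and invokes it in each inductive case, whereas you inline it into the mutual induction---in fact the paper uses precisely your inlined formulation when proving the analogous characterization for $\tovsubs$-normal forms.
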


\section{The External Strategy}
\label{sect:external-strategy}
Since the VSC is not diamond, we need to isolate an evaluation strategy, playing the role of the leftmost(-outermost) strategy in \cbn. Usually, the strategies implemented by abstract machines are deterministic. Here instead we adopt a \emph{diamond} strategy, that---as we explained in the previous section---can be seen as a form of relaxed determinism.

While \cbn has a clear left-to-right orientation, reflected by its leftmost evaluation strategy, in \cbv there is no such direction of evaluation. For instance, Plotkin standard evaluations are left-to-right
\cite{DBLP:journals/tcs/Plotkin75}, Leroy's ZINC abstract machine \cite{Leroy-ZINC} is right-to-left, and Dal Lago and Martini follow an unspecified non-deterministic order \cite{DBLP:journals/tcs/LagoM08}. Our strategy shall then be liberal, and not impose an order on the evaluation of applications.

On the other hand, we shall keep the \emph{outermost} aspect of the leftmost-outermost \cbn strategy. Our \emph{external} strategy, indeed, shall reduce only redexes that cannot be duplicated or erased by any other redex.

The definition of the strategy requires the auxiliary notion of rigid terms, which are the variation over inert terms where the arguments of the head variable can be whatever term.
\begin{center}
	$\textsc{\Pointed terms} \quad \ptm, \ptmtwo  \grameq  \var \mid \ptm\tm \mid \ptm\esub\var \ptmtwo$
\end{center}
Every (\full) inert term is a \pointed term, but the converse does not hold, consider for instance $\vartwo\,({\delta\delta})$.

\emph{External (evaluation) contexts} are defined by mutual induction with \emph{rigid contexts}. 
\begin{center}
		$\begin{array}{r\colspace r\colspace clccc}
		\multicolumn{4}{c}{\textsc{Term evaluation contexts}}
		\\
		\textsc{External} & \strongctx & \grameq &\ctxhole \mid \la\var \strongctx \mid \tm\esub\var \ictx \mid \strongctx 
\esub\var \ptm \mid \ictx 
		\\
		\textsc{Rigid} & \ictx & \grameq & \ptm \strongctx \mid \ictx \tm \mid \ictx\esub\var \ptm \mid \ptm \esub\var\ictx
		\end{array}$
\end{center}
Finally, the rewriting rules are obtained by closing the open rules with external contexts.
\begin{center}{
		\begin{tabular}{ccc}
\textsc{External rewrite rules:}
			&
			\multirow{2}{*}
			{\begin{prooftree}
					\hypo{\tm \Rew{\wsym a} \tm'}		
					\infer1{\extctxp\tm \Rew{\esssym a} \extctxp{\tm'}}
			\end{prooftree}}
		\\
		($a \in \set{\msym,\esym}$)
	\end{tabular}}
\end{center}

\begin{center}{
		$\begin{array}{cccccc}
\textsc{External reduction:} &
		\tovsubs \, \defeq \, \toms \cup \toes
	\end{array}$}
\end{center}
Key points:
\begin{itemize}
\item \emph{Normalizing}: the strategy 
normalizes the potentially diverging term $(\la\var\vartwo) (\la\varthree\Omega) \toms \vartwo \esub\var{\la\varthree\Omega} \toes \vartwo$, 
and diverges on $\vartwo (\la\varthree\Omega)$. In a companion paper about the semantics of \scbv by Accattoli, Guerrieri, and Leberle \cite{accattoli2021semantic}, it is proved that the external strategy is normalizing, \ie, it reaches a normal form whenever it exists in the VSC.
\item \emph{External}: external steps are not contained in any value that is applied or ready to be substituted. The grammars of external and rigid contexts indeed forbid these situations: given an open step $\tm\tovsubo \tmtwo$, note that $(\la\var\tm) \tmthree \not\tovsubs 
(\la\var\tmtwo) \tmthree$ and $(\var\var)\esub\var{\la\vartwo\tm} \not \tovsubs (\var\var)\esub\var{\la\vartwo\tmtwo}$. On the other hand, the external strategy does enter values that shall not be substituted, for instance $\vartwo\tmthree (\la\var\tm) \tovsubs \vartwo \tmthree(\la\var\tmtwo)$ and $\tmthree \esub\var{\vartwo (\la\var\tm)} \tovsubs \tmthree \esub\var{\vartwo (\la\var\tmtwo)}$.

\item \emph{Non-determinism}: since $\tovsubs$ contains the open rules, it 
is neither left-to-right nor right-to-left---we have both $(\Id\Id) (\Id\Id)\toms (\vartwo\esub\vartwo\Id) (\Id\Id)$ 
and $(\Id\Id) (\Id\Id)\toms (\Id\Id)(\vartwo\esub\vartwo\Id)$. Another example is given by $\tm = \var 
(\la\vartwo(\Id \Id)) \esub\var{\varfour 
(\Id\Id)} \toms \var (\la\vartwo \varthree\esub\varthree \Id) \esub\var{\varfour (\Id\Id)}$, and $\tm \toms \var 
(\la\vartwo(\Id \Id)) \esub\var{\varfour(\varthree\esub\varthree{\Id})}$. 
\end{itemize}
\begin{proposition}[Properties of $\tovsubs$]
	\label{prop:external-properties}
	\label{prop:vsc-diamond}
	\NoteProof{propappendix:vsc-diamond}
	\label{l:fullness}
	Let $\tm$ be a VSC term.
	\begin{enumerate}
		\item\label{p:external-properties-diamond} \emph{Diamond}:	$\tovsubs$ is diamond.	Moreover, every $\tovsubs$ evaluation to normal form (if any) has the same number of $\toms$ steps.
		\item\label{p:external-properties-fullness} \emph{Normal forms}: if $\tm$ is $\esssym$-normal then it is a \full fireball.
	\end{enumerate}
\end{proposition}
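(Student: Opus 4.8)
A preliminary observation, used throughout, is the inclusion $\tovsubo \subseteq \tovsubs$: taking the external context to be $\ctxhole$ in the external rewrite rules turns any open step into an external one. The normal-form part is then short. Since $\tovsubo \subseteq \tovsubs$, an $\esssym$-normal term is in particular $\tovsubo$-normal, hence a fireball by the characterization of open normal forms. I would upgrade ``fireball'' to ``strong fireball'' by induction on its structure, each time using a genuinely strong external or rigid production to force normality of the inner components: for a value $\la\var u$ the production $\la\var\strongctx$ makes the body $\esssym$-normal, so by induction it is a strong fireball and $\la\var u$ a strong value; for an inert term the productions $\ictx\,\tm$ and $\ptm\,\strongctx$ (its head is pointed, as inert terms are) make head and argument $\esssym$-normal, while $\ictx\esub\var\ptm$ and $\ptm\esub\var\ictx$ treat the explicit-substitution case; for $\fire\esub\var\itm$ the productions $\strongctx\esub\var\ptm$ and $\tm\esub\var\ictx$ make both components $\esssym$-normal. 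Collecting the cases yields that every $\esssym$-normal term is a strong fireball.

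The diamond rests on a single structural lemma: distinct external redexes occupy pairwise disjoint positions, i.e.\ no external redex lies strictly inside the pattern of another. I would prove it by inspecting how an external or rigid context can descend around a multiplicative redex $\subctxp{\la\var u}\,s$ or an exponential redex $u\esub\var{\subctxp\val}$. For the multiplicative one, the value-headed left child $\subctxp{\la\var u}$ — which contains both the body $u$ and the substitution context — is unreachable, because the left of an application is reachable only through a rigid, hence variable-headed, context; and the argument $s$ is reachable only through $\ptm\,\strongctx$, which needs a pointed head. For the exponential one, the body $u$ is reachable only with a pointed argument and the value-ending argument $\subctxp\val$ only with a rigid argument, but $\subctxp\val$ is neither pointed nor rigid. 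This is exactly the phenomenon that fails for full $\tovsub$: the non-diamond span in the text reduces inside a to-be-substituted value, a descent the external contexts forbid.

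Given disjointness, the diamond is immediate. If $\tm\tovsubs\tmtwo_1$ and $\tm\tovsubs\tmtwo_2$ with $\tmtwo_1\neq\tmtwo_2$, the two fired redexes are distinct, hence at disjoint positions; firing one leaves the other as a unique residual, of the same kind, at a disjoint position, and firing the residuals reaches a common reduct. Each elementary square therefore carries one edge of each of the two original kinds on each of its sides, so it preserves the multiset of $\set{\msym,\esym}$ labels. A standard tiling argument upgrades this to: any two $\tovsubs$ evaluations with the same endpoints contain the same number of $\toms$ steps (and of $\toes$ steps). Instantiating at two evaluations to the same normal form gives the ``Moreover'' clause, and forgetting the labels gives plain diamond.

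The crux is the disjointness lemma of the second paragraph; everything else is routine. Its importance is precisely that it defuses the trap of the exponential step: because no external redex ever sits inside a to-be-substituted value, nor inside any other external redex, an exponential step can neither duplicate nor erase a competing external redex, so residuals stay unique and keep their kind — which is what both plain diamond and the step-count invariant need, and what fails in full $\tovsub$. The delicate part is thus organizing the context-descent analysis so that the value-headed-application and value-ending-substitution obstructions visibly block every way of nesting one external redex inside another.
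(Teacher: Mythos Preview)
Your diamond argument has a real gap: the disjointness lemma is false. By definition, an external step is an external context plugged with an \emph{open} step, so the position of an external redex is an external context composed with an open context; your reachability analysis looks only at how external/rigid contexts descend and forgets that open contexts can then continue freely on both sides of applications and of ES. Two concrete counterexamples. In $((\la\var u)\esub{\var_1}{\val})\,s$ the root $\toms$ redex (with $\sctx=\ctxhole\esub{\var_1}{\val}$) coexists with a $\toes$ redex at the left child $(\la\var u)\esub{\var_1}{\val}$, reached via the open context $\ctxhole\,s$; that $\toes$ redex sits exactly on the $\sctx$-spine you declared unreachable. In $(\la\var\var)((\la\vartwo\vartwo)z)$ the inner $\toms$ redex lies in the argument of the outer one and is external via the open context $(\la\var\var)\,\ctxhole$, even though the head is an abstraction and hence not pointed. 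In both cases the local square does close with one label-preserving step on each side, but not because the positions are disjoint: one has to inspect the rules to see that the residual is unique and of the same kind (e.g.\ that $\rtom$ just moves $s$ into an ES without duplicating or erasing it, and that the $\sctx$-spine is merely transported). The paper therefore abandons the positional viewpoint and proves, by structural induction on the term, that $\toms$ and $\toes$ are each diamond and that they strongly commute; the key auxiliary fact is that pointed (rigid) terms are stable under $\tovsubo$ and $\tovsubs$, which is what allows one to rebuild the external/rigid context on the far side of each square. Diamond of $\tovsubs$ and the label-preserving ``Moreover'' then follow from strong commutation by the tiling/random-descent argument you sketch.

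Your normal-forms argument is essentially correct and close in spirit to the paper's, which runs a direct structural induction on $\tm$ while simultaneously carrying a stronger invariant (``$\esssym$-normal and not of the shape $\sctxp{\la\var\tmthree}$ implies strong \emph{inert}''). One wrinkle you elide: to lift an external step from the head $\itm$ of an inert application $\itm\fire$ via the production $\ictx\,\tm$, you need the step in $\itm$ to go through a \emph{rigid} context. This is true---since inert terms are pointed, any non-trivial external context landing in an inert term can be reparsed as rigid---but it is a lemma that needs to be stated and proved.
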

\paragraph*{Cost Model of the VSC} As time cost model of the VSC we take the number of $\tom$ steps of the external strategy.  At the end of the paper, we shall prove it reasonable. \emph{Subtlety}: the cost model makes sense despite the non-determinism of $\tovsubs$, because the diamond of $\tovsubs$ in particular preserves the kind of step, and so all evaluations to normal form have the same number of $\tom$ steps, as stated above.

\paragraph*{Structural Equivalence}
The VSC comes with a notion of structural equivalence $\eqstruct$, that equates terms differing only for 
the position of ES. A strong justification comes from the \cbv linear logic interpretation of $\l$-terms with ES, in which structurally equivalent terms 
translate to the same (recursively typed) proof net, see \cite{DBLP:journals/tcs/Accattoli15}. 

The \SCAM{} shall implement the external strategy $\tovsubs$, but only up to $\eqstruct$, which is why we introduce $\eqstruct$ here.

\emph{Structural equivalence} $\eqstruct$ is defined as the least equivalence relation on terms closed by all contexts 
and generated by the following top-level cases:
\begin{center}
	{
	\arraycolsep=.8pt
	$\begin{array}{rcll}
	\tm\esub{\vartwo}{\tmthree}\esub{\var}{\tmtwo} &\tostructcom &\tm\esub{\var}{\tmtwo}\esub{\vartwo}{\tmthree} &\mbox{ 
if 
$\vartwo\notin\fv{\tmtwo}$, $\var\notin\fv{\tmthree}$}
	\\
		\tm\,\tmthree\esub\var\tmtwo &\tostructapr & (\tm\tmthree)\esub\var\tmtwo  &\textrm{ if }\var\not\in\fv{\tm} 
	\\
		\tm\esub{\var}{\tmtwo\esub{\vartwo}{\tmthree}} &\tostructes &\tm\esub{\var}{\tmtwo}\esub{\vartwo}{\tmthree} &\mbox{ 
if $\vartwo\not\in\fv{\tm}$}  
	\\
	\tm\esub\var\tmtwo\tmthree &\tostructapl &(\tm\tmthree)\esub\var\tmtwo  &\textrm{ if }\var\not\in\fv\tmthree 
	\end{array}$}
\end{center}
\label{eq:eqstruct}

Extending the VSC with $\eqstruct$ results in a smooth system, as $\eqstruct$ commutes with evaluation, and can thus be 
postponed. Additionally, the commutation is \emph{strong}, as it preserves the number and kind of steps (thus the cost model)---one says that 
it is a \emph{strong bisimulation} (with respect to $\tovsubs$). 
In particular, the equivalence is not needed to compute and it does not break, or make more complex, any property of the 
calculus---on the contrary, it makes it more flexible.
\begin{proposition}[$\eqstruct$ is a strong bisimulation]
	\label{prop:strong-bisimulation}
	\NoteProof{propappendix:strong-bisimulation} 
	If $\tm\eqstruct\tmtwo$ and $\tm\Rew{\mathsf{a}}\tmp$ then there exists $\tmtwop \in \vsubterms$ such that 
$\tmtwo\Rew{\mathsf{a}}\tmtwop$ and $\tmp\eqstruct\tmtwop$, for $\mathsf{a}\in\set{\msym,\esym,  \osym\msym,\osym\esym, \esssym\msym,\esssym\esym}$. \end{proposition}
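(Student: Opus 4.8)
The plan is to prove the bisimulation diagram for the \emph{generating} relation of $\eqstruct$ and then lift it to $\eqstruct$ by a purely relational argument. Write $R$ for the contextual closure of the four top-level axioms $\tostructcom,\tostructapr,\tostructes,\tostructapl$ read left-to-right, so that $\eqstruct$ is the reflexive–transitive closure of $R \cup R^{-1}$. Since strong bisimulations are closed under inverse, identity, and relational composition, it suffices to establish the \emph{local} diagram for $R$: if $\tm \mathrel{R} \tmtwo$ and $\tm \Rew{\mathsf a} \tmp$, then there is $\tmtwop$ with $\tmtwo \Rew{\mathsf a}\tmtwop$ and $\tmp \eqstruct \tmtwop$ (the symmetric statement being the same one for $R^{-1}$). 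Granting the local diagram, I would derive the full statement by induction on the length of a decomposition $\tm = \tm_0 \mathrel{(R\cup R^{-1})} \tm_1 \mathrel{(R\cup R^{-1})} \cdots \tm_k = \tmtwo$, transporting the $\Rew{\mathsf a}$ step along each link, keeping its kind $\mathsf a$ fixed, and accumulating $\eqstruct$ by transitivity. A preliminary observation, proved by inspecting the four axioms, is that $\eqstruct$ preserves free variables; this guarantees that the side conditions of the axioms and of the rewrite rules transport across $\eqstruct$, and I would use it freely.

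I would establish the local diagram by case analysis on the relative position of the $R$-redex and the $\Rew{\mathsf a}$-redex. When the two are disjoint, or one is strictly nested inside a subterm untouched by the other, the rewrite and the structural step commute and $\tmp \eqstruct \tmtwop$ is immediate. The only informative cases are the \emph{overlaps}, and here the design of the VSC pays off: both the multiplicative rule $\subctxp{\la\var\tm}\tmtwo \rtom \subctxp{\tm\esub\var\tmtwo}$ and the exponential rule $\tm\esub\var{\subctxp\val}\rtoe\subctxp{\tm\isub\var\val}$ act \emph{at a distance}, i.e. through a substitution context $\subctx$, while each structural axiom merely permutes or reassociates this same layer of explicit substitutions. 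Consequently a structural rewrite inside or around $\subctx$ leaves the essential redex — the applied abstraction, resp. the substituted value — in place: it still fires on the $\tmtwo$ side, possibly through a reshaped substitution context $\subctx'$, and the two contracta differ only by a further rearrangement of ES, hence are $\eqstruct$-equivalent. I would check this for each of the four axioms against each of the two base rules, the side conditions needed to re-fire the redex and to close the diagram holding by free-variable preservation.

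For the open variant and the external variant it is not enough to re-fire the base redex: one must re-fire it \emph{within an open}, resp. \emph{external}, context. I would therefore isolate the following preservation lemmas, each proved by a straightforward induction on the relevant grammar using free-variable preservation for the side conditions: $\eqstruct$ preserves \pointed terms, inert terms, and strong fireballs, and it maps open contexts $\weakctx$ to open contexts, external contexts $\strongctx$ to external contexts, and rigid contexts $\ictx$ to rigid contexts. Equipped with these, the open and external cases reduce to the plain one: given $\tm = \extctxp{\tmthree}$ with $\tmthree$ an open redex and $\tm \mathrel R \tmtwo$, the structural step either acts inside the external context $\extctx$ — which the preservation lemma sends to an external context $\extctx'$ around an $\eqstruct$-equivalent redex — or inside the open redex $\tmthree$ itself, handled by the plain analysis; either way the contracted term on the $\tmtwo$ side is reached by a step of the \emph{same} kind $\esssym a$ (resp. $\osym a$), and the two results are $\eqstruct$-equivalent.

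The main obstacle will be combinatorial rather than conceptual, and it lies in the external case. The number of configurations is large — four axioms against the multiplicative and exponential rules, each in three context families (plain, open, external), and for each a handful of relative positions — and the genuinely delicate point is that the external and rigid grammars are \emph{gated} by syntactic side conditions (arguments forced to be \pointed terms $\ptm$, certain subterms forced to be inert). Showing that a structural rewrite respects this gating, so that externality of a step is preserved on both sides of $\eqstruct$, is the crux, and it is exactly what the preservation lemmas of the previous paragraph encapsulate. For this reason I would prove those lemmas first and in full, keeping the bisimulation argument itself modular and short.
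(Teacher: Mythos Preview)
Your plan is sound and is precisely the standard argument. Note that the paper does not actually carry out a proof: it simply states that the result is an ``easy adaptation of the proof in \cite[Lemma~12]{AccattoliPaolini12}'', so your proposal is in effect a fleshing-out of what the paper takes for granted, including the additional work for the external strategy (the preservation of \pointed terms and of external/rigid contexts under $\eqstruct$), which is the genuinely new part here relative to the cited lemma.
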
     

Note that \refprop{strong-bisimulation} implies that $\eqstruct$ preserves normal forms.

\section{A Taste of Useful and Implosive Sharing}
Here we use the VSC to give an informal overview of the various forms of sharing at work in this work.

The VSC comes with ES $\tm\esub\var\tmtwo$, which are a form of \emph{subterm sharing}. The exponential rewriting rule, however, rests on meta-level substitution, and so the system is closer to the $\l$-calculus than to an implementation, which would rather use a micro-step variant of the exponential rule such as
\begin{center}
$\begin{array}{rrllll}
    \multicolumn{3}{c}{\textsc{Micro Exponential} }
    \\
     \ctxp\var\esub\var{\subctxp{\val}} &  \Rew{\mathsf{mi\mbox{-}e}}  & \subctxp{\ctxp\val\esub{\var}{\val}} 
\end{array}$
\end{center}
\paragraph*{Useful Sharing} In \cbn, useful sharing amounts to two modifications of the substitution process, which are mandatory for reasonable implementations of strong evaluation. They are motivated by two paradigmatic cases of size explosions, one related to open terms and one to strong evaluation.

The first example of size-explosion is given by the family of open terms $\{\tm_n \vartwo\}_{n=1,2,\ldots}$ and the family $\{\tmtwo_n\}_{n=1,2,\ldots}$ of 
their normal forms (in the ordinary $\l$-calculus) which are inert terms, where $\tm_{n}$ and $\tmtwo_{n}$ are defined as follows:
\begin{center}
$\begin{array}{rclc|crclccccc}
  \tm_1 & \defeq & \delta = \la{\var}\var\var &&& \tmtwo_1 & \defeq & \vartwo \vartwo\\
  \tm_{n+1} & \defeq & \la{\var} \tm_n (\var\var)  &&&   \tmtwo_{n+1} & \defeq & \tmtwo_n \tmtwo_n
\end{array}$  
\end{center}
We use $\size\tm$ for the size of a term. Size explosion is proved via an auxiliary property. It is worth noticing that in this example the explosion is independent of the evaluation strategy.
\begin{proposition}[Open and strategy-independent size explosion]
Let $n,m>0$. 
\begin{enumerate}
\item \emph{Auxiliary property}: $\tm_n \tmtwo_m \tob^n \tmtwo_{n+m}$.
\item $\tm_n \vartwo \tob^n \tmtwo_{n}$, $\size{\tm_n} = \bigo(n)$, $\size{\tmtwo_n} = \Omega(2^n)$.
\end{enumerate}
\end{proposition}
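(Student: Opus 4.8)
The plan is to prove both parts by induction on $n$, with the auxiliary property (part 1) doing all the real work and part 2 following almost immediately. The heart of the matter is the auxiliary reduction $\tm_n \tmtwo_m \tob^n \tmtwo_{n+m}$, which I would establish by induction on $n$ for all $m>0$ simultaneously (so the induction hypothesis is available at every target index $m$, which matters since the recursive step changes the argument).

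\textbf{Auxiliary property.} For the base case $n=1$, I compute directly: $\tm_1 \tmtwo_m = \delta\,\tmtwo_m = (\la\var \var\var)\tmtwo_m \tob \tmtwo_m \tmtwo_m = \tmtwo_{m+1}$, using the definition $\delta = \la\var\var\var$ and the single ordinary $\beta$-step $(\la\var\var\var)\tmtwo_m \to \tmtwo_m\tmtwo_m$, which is exactly $\tmtwo_{m+1}$ by the defining recurrence for $\tmtwo$. For the inductive step, assume the claim holds at $n$ for every $m$. I expand $\tm_{n+1}\tmtwo_m = (\la\var\,\tm_n(\var\var))\,\tmtwo_m$. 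A single head $\beta$-step fires the outer abstraction, substituting $\tmtwo_m$ for $\var$, yielding $\tm_n(\tmtwo_m\tmtwo_m) = \tm_n\,\tmtwo_{m+1}$, again by the recurrence for $\tmtwo$. Now the induction hypothesis at index $m+1$ gives $\tm_n\,\tmtwo_{m+1}\tob^n \tmtwo_{n+m+1}$. Chaining the one head step with these $n$ steps gives $\tm_{n+1}\tmtwo_m \tob^{n+1}\tmtwo_{(n+1)+m}$, as required. The one point to check carefully is that $\var$ does not occur free in $\tm_n$ (so the substitution only affects the $(\var\var)$ occurrence and leaves $\tm_n$ intact); this is immediate since each $\tm_n$ is closed, being built from $\delta$ and bound-variable applications.

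\textbf{Second part.} Instantiating the auxiliary property at $m=1$ and recalling $\tmtwo_1 = \vartwo\vartwo$, I would first note that $\tm_n\,\vartwo \tob^? \tm_n\,(\vartwo\vartwo)$ is \emph{not} quite what I want, so instead I observe the cleaner route: $\tm_n\,\vartwo$ with $\vartwo$ a free variable behaves so that the first firing produces an argument that grows the $\tmtwo$-chain. Concretely, by essentially the same induction one shows $\tm_n\,\vartwo \tob^n \tmtwo_n$ directly (the base case $\tm_1\,\vartwo = \delta\,\vartwo \tob \vartwo\vartwo = \tmtwo_1$, and the inductive step fires the head redex of $\tm_{n+1}\,\vartwo$ to reach $\tm_n(\vartwo\vartwo) = \tm_n\,\tmtwo_1$, then invokes part~1). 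For the sizes, $\size{\tm_n}=\bigo(n)$ follows by an easy induction since $\tm_{n+1}$ adds a constant number of symbols to $\tm_n$, giving a linear recurrence $\size{\tm_{n+1}} = \size{\tm_n} + c$. For $\size{\tmtwo_n}=\Omega(2^n)$, the recurrence $\tmtwo_{n+1}=\tmtwo_n\tmtwo_n$ doubles the size (plus the application node), so $\size{\tmtwo_{n+1}} \ge 2\,\size{\tmtwo_n}$, yielding the exponential lower bound by induction from $\size{\tmtwo_1}$ constant.

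\textbf{Expected obstacle.} There is no deep obstacle; the only subtlety is \emph{getting the induction parametrization right} in part~1. Because the inductive step of $\tm_{n+1}$ consumes one step of growth in the $\tmtwo$-argument, the induction hypothesis must be quantified over all $m$ (not a fixed $m$), so that it can be applied at the shifted index $m+1$. Fixing $m$ and inducting on $n$ alone would fail to close the loop. Everything else is routine: a head $\beta$-step together with the freeness check on $\var$, and two straightforward size recurrences.
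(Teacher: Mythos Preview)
Your argument is correct and is exactly the natural one; the paper states this proposition without proof (it is folklore, serving only to motivate useful sharing), so there is nothing to compare against beyond noting that your induction on $n$ with $m$ universally quantified is the expected route. One cosmetic remark: part~2 becomes a one-line corollary of part~1 if you allow $m=0$ with the convention $\tmtwo_0 \defeq \vartwo$ (since then $\tmtwo_1 = \tmtwo_0\tmtwo_0$ fits the recurrence and $\tm_n\vartwo = \tm_n\tmtwo_0 \tob^n \tmtwo_n$), which avoids the separate mini-induction you sketch---but your detour through $\tm_{n+1}\vartwo \tob \tm_n\tmtwo_1$ followed by part~1 is equally valid.
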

This case of explosion is avoided by useful sharing by forbidding substitutions of normal terms that are not abstractions, because they do not create $\beta$-redexes---note that the evaluation of the family substitutes $\vartwo$ or instances of $\tmtwo_{i}$, which are inert terms and thus not abstractions. In \scbv as presented via the VSC, this is hardcoded, as only abstractions can be substituted, so nothing needs to be changed. The effect of the optimization can be seen on normal forms: it is accounted by the fact that strong fireballs have ES containing strong inert terms, which are exactly normal terms that are not abstractions.

The second example of size-explosion is a closed variant of the first one, due to Accattoli \cite{DBLP:journals/corr/Accattoli17}. Define:
\begin{center}
$\begin{array}{rclc|crclccccc}
  \tmfour_1  & \defeq &  \la{\var}\la{\vartwo}\vartwo \var \var
   &&&
\tmthree_0  & \defeq & \Id = \la\varthree\varthree\\
\tmfour_{n+1} & \defeq & \la{\var}\tmfour_n (\la{\vartwo}\vartwo \var \var)
  &&&
  \tmthree_{n+1} & \defeq & \la{\vartwo}\vartwo \tmthree_{n} \tmthree_{n}
\end{array}$  
\end{center}

\begin{proposition}[Closed and strategy-independent size explosion, \cite{DBLP:journals/corr/Accattoli17}]
\label{prop:abs-size-explosion}
  Let 
  $n \!>\! 0$. Then $\tmfour_n \Id \tob^n  \tmthree_n$. Moreover, $\size{\tmfour_n \Id} = \bigo(n)$, $\size{\tmthree_n} = \Omega(2^n)$, 
$\tm_n \Id$ is closed, and $\tmthree_n$ is normal.
\end{proposition}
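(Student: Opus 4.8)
The plan is to prove the reduction $\tmfour_n \Id \tob^n \tmthree_n$ by induction on $n$, after first unfolding the two base cases and extracting the right inductive pattern. The key observation is that the term $\tmfour_n$ is built by nesting a fixed ``duplicator-like'' abstraction $\la{\vartwo}\vartwo\var\var$ around the smaller $\tmfour_{n-1}$, and that feeding $\Id$ to the head of the whole thing triggers a cascade: each layer peels off one application, substitutes the argument into the two occurrences created by $\vartwo\var\var$, and exposes the next layer. So I expect a single $\tob$-step per layer, giving exactly $n$ steps.

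\medskip

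First I would treat $n=1$ directly: $\tmfour_1\,\Id = (\la{\var}\la{\vartwo}\vartwo\var\var)\,\Id \tob \la{\vartwo}\vartwo\,\Id\,\Id = \tmthree_1$, using $\tmthree_0 = \Id$. This is one $\beta$-step and it matches $\tmthree_1 = \la{\vartwo}\vartwo\tmthree_0\tmthree_0$. For the inductive step, I would compute $\tmfour_{n+1}\,\Id = (\la{\var}\tmfour_n(\la{\vartwo}\vartwo\var\var))\,\Id$, which $\beta$-reduces in one step to $\tmfour_n\,(\la{\vartwo}\vartwo\,\Id\,\Id) = \tmfour_n\,\tmthree_1$. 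The crux is then to generalize the statement: I would instead prove the auxiliary claim that $\tmfour_n\,\tmthree_k \tob^n \tmthree_{n+k}$ for all $k\ge 0$ (mirroring the auxiliary property $\tm_n\tmtwo_m\tob^n\tmtwo_{n+m}$ used for the open family), since the original statement is the instance $k=0$ once we observe $\tmfour_n\,\Id = \tmfour_n\,\tmthree_0$. Here the head-reduction at each layer replaces $\var$ by the argument $\tmthree_k$ inside $\vartwo\var\var$, turning $\la{\vartwo}\vartwo\var\var$ into $\la{\vartwo}\vartwo\tmthree_k\tmthree_k = \tmthree_{k+1}$, and simultaneously feeds this new $\tmthree_{k+1}$ into $\tmfour_{n-1}$; the induction hypothesis on $\tmfour_{n-1}\,\tmthree_{k+1}\tob^{n-1}\tmthree_{n+k}$ then closes the argument.

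\medskip

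For the size bounds, $\size{\tmfour_n\,\Id} = \bigo(n)$ follows by an immediate induction: each $\tmfour_{n+1}$ adds a constant-size wrapper $\la{\var}(\cdot)(\la{\vartwo}\vartwo\var\var)$ around $\tmfour_n$, so $\size{\tmfour_n}$ is linear in $n$, and appending $\Id$ adds a constant. The lower bound $\size{\tmthree_n} = \Omega(2^n)$ follows from the recurrence $\tmthree_{n+1} = \la{\vartwo}\vartwo\tmthree_n\tmthree_n$, which contains two copies of $\tmthree_n$, giving $\size{\tmthree_{n+1}} \ge 2\,\size{\tmthree_n}$ and hence exponential growth. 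Closedness of $\tmfour_n\,\Id$ is checked by a routine induction (the only variables $\var,\vartwo$ are bound by the surrounding abstractions, and $\Id$ is closed), and normality of $\tmthree_n$ is immediate since each $\tmthree_n$ is an abstraction whose body $\vartwo\tmthree_{n-1}\tmthree_{n-1}$ is a strong fireball (an inert term applied to fireballs), by \reflemma{harmony}.

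\medskip

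The main obstacle is getting the auxiliary generalization exactly right: the naive statement $\tmfour_n\,\Id\tob^n\tmthree_n$ does not carry enough information through the induction, because after one step the argument is no longer $\Id$ but the larger term $\tmthree_1$, and after further steps it keeps growing. Strengthening to $\tmfour_n\,\tmthree_k\tob^n\tmthree_{n+k}$ is what makes the induction go through, and identifying this is the only nonroutine part; everything else is bookkeeping. I would also be careful that each layer contributes precisely one $\beta$-step and that the reduction is genuinely strategy-independent, i.e.\ at each stage the indicated redex is the \emph{only} redex (the term is a head redex applied to a value, with no other $\beta$-redexes present), so that $\tob^n$ counts steps unambiguously regardless of the evaluation order.
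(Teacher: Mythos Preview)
The paper does not prove this proposition; it is simply cited from \cite{DBLP:journals/corr/Accattoli17}. Your proof via the auxiliary generalization $\tmfour_n\,\tmthree_k \tob^n \tmthree_{n+k}$ is correct and is the natural argument, mirroring the auxiliary property the paper states (also without proof) for the open family.

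One genuine error: your justification of strategy-independence is wrong. You claim that ``at each stage the indicated redex is the \emph{only} redex'', but this is false. For $n\geq 1$, the term $\tmfour_{n+1} = \la{\var}\tmfour_n(\la{\vartwo}\vartwo\var\var)$ already contains a $\beta$-redex under the outer $\la\var$ (namely $\tmfour_n$ applied to an abstraction), so $\tmfour_{n+1}\,\tmthree_k$ has at least two redexes. Strategy-independence holds for a different reason: no redex duplicates or erases another redex (each $\la\var$ binds occurrences of $\var$ that sit only in argument position of a variable, never in function position), so all maximal reduction sequences have the same length. Alternatively, and more to the point of the proposition, the size explosion is strategy-independent simply because by confluence the normal form $\tmthree_n$ is unique and has exponential size however you reach it. Your induction correctly establishes the existence of a length-$n$ reduction, which is what $\tob^n$ asserts.

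A minor point: invoking \reflemma{harmony} for normality of $\tmthree_n$ is off-target, since that lemma characterizes $\tovsub$-normal forms in the VSC, whereas here you want $\beta$-normality in the pure $\l$-calculus. The direct check is immediate by induction: $\tmthree_{n+1} = \la{\vartwo}\vartwo\,\tmthree_n\,\tmthree_n$ has a variable in head position applied to (by induction) $\beta$-normal abstractions.
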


In this second case, substituting only abstractions does not help, because the terms that are substituted along the evaluation are the identity $\Id$ and instances of $\tmthree_{i}$, which are all abstractions. If evaluation is weak, and substitution is done micro-step, then there is no problem because the replaced variables are all instances of $\var$ in some $\tmfour_{i}$, which are under abstraction and which are never replaced in micro-step weak evaluation. With micro-step strong evaluation, however, these replacement do happen, and the size explodes.

To tame this problem, useful sharing rests on an optimization sometimes called \emph{substituting abstractions on-demand}, which is trickier. It requires abstractions to be substituted \emph{only} on applied variable occurrences: note that the explosion is caused by replacements of variables (namely the instances of $\var$) which are \emph{not} applied, and that thus do not create $\beta$-redexes. A step such as $(\var\vartwo)\esub\var\val \toe \val\vartwo$ is accepted, or, \emph{it is useful}, because it creates a $\beta$/multiplicative redex, while a step such as $(\vartwo\var)\esub\var\val \toe \vartwo\val$ is \emph{useless}, and must not be done.

Note however that this optimization makes sense only when one switches to micro-step evaluation via $\Rew{\mathsf{mi\mbox{-}e}}$ above, that is, at the level of machines, because in $(\var\var)\esub\var\val$ there are both a useful and a useless occurrence of $\var$. The implementation of \emph{substituting abstractions on-demand} is very subtle, also because by not performing useless substitutions, it leaves pending ES with values.

\paragraph*{Mixing Implosive and Useful Sharing} There is a case concerning such pending ES where there is some freedom in deciding how to evaluate. Consider a term such as $((\var\vartwo)\vartwo)\esub\vartwo{\la\varthree\tm}$ where $\tm$ is a term with some $\beta$-redexes. Both micro-step substitutions of $\la\varthree\tm$ on $\vartwo$ are useless, but $\tm$ needs to be evaluated. The non-implosive choice is to copy $\la\varthree\tm$, obtaining $(\var(\la\varthree\tm))(\la\varthree\tm)$, and then evaluate $\tm$ twice. This is what Accattoli and Dal Lago do in their useful implementations of the leftmost-outermost \cbn strategy in \cite{DBLP:journals/corr/AccattoliL16,DBLP:conf/wollic/Accattoli16}. It can be seen as useful, because each copy of $\la\varthree\tm$ contains some $\beta$-redexes, so one is not substituting for nothing.

The implosive choice, which is also more \cbv in spirit, is to evaluate $\la\varthree\tm$ only once, keeping it in the ES, that is, reducing $((\var\vartwo)\vartwo)\esub\vartwo{\la\varthree\tm}$ to some $((\var\vartwo)\vartwo)\esub\vartwo{\la\varthree\tmtwo}$. This is what the \SCAM{} shall do.

\paragraph*{A natural question} Why not \emph{always} evaluate values before substituting them? Because, it is unsound with respect to normalization in the \scbv. Consider $\tm\defeq (\var(\la\var\vartwo))\esub\var{\la\varthree\varthree(\la\varfour\Omega)}$. The term $\tm$ would then diverge because it would evaluate $\Omega$, while it normalizes to $\vartwo$, for instance with the external strategy:
\[\begin{array}{lllll}
(\var(\la\var\vartwo))\esub\var{\la\varthree\varthree(\la\varfour\Omega)} 
\toes (\la\varthree\varthree(\la\varfour\Omega)) (\la\var\vartwo)\\ 
\toms\toes 
(\la\var\vartwo) (\la\varfour\Omega)
\toms\toes 
\vartwo
\end{array}\]
Note that in this example the substitution happens on an applied variable. Evaluating a value $\val$ before substituting it can be done safely only when in $\tm\esub\var\val$ the term $\tm$ is normal and all the occurrences of $\var$ in $\tm$ are not applied (that is, the associated micro substitution steps are useless), which is exactly when the \SCAM{} shall do it.

\section{Relaxed Implementations}
\label{SECT:RELAXED}

Here we explain abstractly the subtle and unusual way in which our machine implements the external strategy $\tovsubs$ modulo structural equivalence $\eqstruct$. 

Before giving the details, let us stress a key point. The machine is started on $\l$-terms, not VSC terms, that is, the initial term is not supposed to have any ES. Non-initial states of the machine however shall decode to VSC terms.

\paragraph*{Machines and Structural Strategies}
 A \emph{machine} $\mach = (\state, \tomach, \compil\cdot, \decode\cdot)$ is a transitions system $\tomach$ over a set of states, noted $\state$, with transitions partitioned into \emph{$\beta$-transitions} $\tomachb$ and \emph{overhead transitions} $\tomacho$, together with a \emph{compilation} function $\compil\cdot$ turning $\l$-terms into states, and a read-back function $\decode\cdot$ turning states into VSC terms and satisfying the \emph{initialization constraint} $\decode{\compil\tm} = \tm$ for all $\l$-terms $\tm$. 
 A state $\state$ is \emph{initial} if $\state = \compil\tm$ for some $\l$-term $\tm$, and \emph{final} if no transitions apply.
 An \emph{execution} $\exec: \state \tomach^*\statetwo$ is a possibly empty sequence of transitions from an initial state to a state $\statetwo$ said \emph{reachable}. 

 A \emph{structural strategy} $(\tostrat,\equiv)$ is a rewriting relation $\tostrat$ together with a structural equivalence $\equiv$ on VSC terms, such that $\equiv$ is a \emph{strong bisimulation} with respect to $\tostrat$.

\paragraph*{Relaxed Implementations} In the literature, a machine implements a (structural) strategy when the two are weakly bisimilar, where weakness is given by the fact that the overhead transitions of the machine (that search for redexes and decompose the substitution process) are invisible on the calculus. The bisimulation relates executions of the machine and evaluations on the calculus \emph{locally}, or \emph{small-step}, that is, $\beta$-step-by-$\beta$-step, and for sequences not necessarily reaching a normal form. In particular, there is a bijection between the $\beta$-steps of the strategy and the $\beta$-transitions of the machine.

Our machine does not follow such a simple schema, because it evaluates the body of some shared abstractions, and each $\beta$-transition in these bodies potentially maps (via read-back) to \emph{many} $\beta$-steps on the calculus, breaking the bijection, and forbidding the machine to simulate single steps of the calculus.

We then adopt a relationship between the strategy and the machine that is weaker than a bisimulation and asymmetric: the strategy simulates the machine locally (potentially taking many steps for each $\beta$-transition), while the machine simulates the strategy only \emph{globally}, or \emph{big-step}: preserving divergence and normalizing evaluations, with their cost model, but not $\beta$-step-by-$\beta$-step.
In the VSC, the role of $\beta$ steps on the calculus is played by multiplicative steps, whose number in an evaluation sequence $\deriv$ is noted $\sizem\deriv$.
\begin{definition}[Relaxed implementations]
\label{def:implem}
A machine $\mach = (\state, \tomach, \compil\cdot, \decode\cdot)$ is a \emph{relaxed implementation of a structural strategy} $(\tostrat,\equiv)$ on VSC terms when, given a $\l$-term $\tm$:
\begin{enumerate}
\item \emph{Executions to evaluations}: for any $\mach$-execution $\exec: \compil\tm \tomachine^* \state$ there is a $\tostrat$-evaluation $\deriv: \tm \tostrat^* \equiv\decode\state$ with $\sizebeta\exec \leq \sizem\deriv$.

\item \emph{Normalizing evaluations to executions}: if $\deriv \colon \tm \tostrat^* \tmtwo$ with $\tmtwo$  $\tostrat$-normal then there is an $\mach$-execution $\exec \colon \compil\tm \tomachine^* \state$ with $\state$ final such that $\decode\state \equiv \tmtwo$ with $\sizebeta\exec \leq \sizem{\deriv}$.

\item \emph{Diverging evaluations to executions}: if $\tostrat$ diverges on $\tm$ then $\mach$ diverges on $\compil\tm$ doing infinitely many $\beta$-transitions.
\end{enumerate}
\end{definition}

Next, we isolate sufficient conditions for relaxed implementations, that shall structure our implementative study.

\begin{definition}[Relaxed implementation system]
  \label{def:implementation}
  A relaxed implementation system is given by a \emph{machine} $\mach = (\state, \tomach, \compil\cdot, \decode\cdot)$ and a \emph{structural strategy} $(\tostrat,\equiv)$ such that for every reachable state $\state$:
  \begin{enumerate}
		\item\label{p:def-beta-projection} \emph{Relaxed $\beta$-projection}: $\state \tomachhole\beta \statetwo$ implies that there exists $\deriv:\decode\state \tostrat^+\equiv \decode\statetwo$ such that $\sizem\deriv \geq 1$;
		\item\label{p:def-overhead-transparency} \emph{Overhead transparency}: $\state \tomacho \statetwo$ implies $\decode\state \equiv \decode\statetwo$;
		\item\label{p:def-overhead-terminate}	\emph{Overhead transitions terminate}:  $\tomacho$ terminates;

	\item\label{p:def-progress} \emph{Halt}: if $\state$ is final then $\decode\state$ is $\tostrat$-normal;
	
	\item\label{p:def-determinism} \emph{Lax determinism}:  
 $\tostrat$ is diamond and $\tomach$ is deterministic.
  \end{enumerate}
\end{definition}

\begin{theorem}[Abstract implementation]
\label{thm:abs-impl}
  Let $\mach$ and
  $(\tostrat, \eqstruct)$ form a relaxed implementation system.
    \NoteProof{thmappendix:abs-impl}
  Then, $\mach$ is a relaxed implementation of $(\tostrat,\equiv)$.
\end{theorem}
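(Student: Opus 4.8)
The plan is to derive the three clauses of Definition~\ref{def:implem} from the five conditions of the relaxed implementation system (Definition~\ref{def:implementation}), establishing the \emph{Executions to evaluations} clause first and then bootstrapping the other two from it.

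For \emph{Executions to evaluations}, I would induct on the length of the execution $\exec\colon \compil\tm \tomach^* \state$. The empty execution is settled by the initialization constraint $\decode{\compil\tm}=\tm$ and reflexivity of $\equiv$. For the inductive step I split the last transition $\statetwo \tomach \state$ into the overhead and the $\beta$ cases. If it is overhead, \emph{Overhead transparency} gives $\decode\statetwo \equiv \decode\state$, so the evaluation from the induction hypothesis is reused verbatim (by transitivity of $\equiv$), with both $\sizebeta$ and $\sizem$ unchanged. If it is a $\beta$-transition, \emph{Relaxed $\beta$-projection} yields $\decode\statetwo \tostrat^+ \equiv \decode\state$ with at least one multiplicative step. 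The one subtlety is that the induction hypothesis only provides $\tm \tostrat^* \equiv \decode\statetwo$, so this new reduction starts one $\equiv$-step away from where we are; I would transport it across that equivalence using the \emph{strong bisimulation} property of $\equiv$ (Proposition~\ref{prop:strong-bisimulation}), which is exactly a one-to-one transport preserving the number and kind of steps. Transitivity of $\equiv$ then closes the diagram and the count $\sizebeta\exec \le \sizem\deriv$ is preserved, the multiplicative budget growing by at least one on each side.

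For \emph{Diverging evaluations to executions} I would argue contrapositively and lean on the first clause. Since $\tomach$ is deterministic (\emph{Lax determinism}) and $\tomacho$ terminates (\emph{Overhead transitions terminate}), any non-halting execution from $\compil\tm$ must fire infinitely many $\beta$-transitions, so it suffices to prove that the machine does not halt when $\tostrat$ diverges on $\tm$. If the machine did halt at a final $\state$, the first clause would give $\tm \tostrat^* \equiv \decode\state$, and \emph{Halt} together with the fact that $\equiv$ preserves $\tostrat$-normal forms (a consequence of the strong bisimulation, as already noted after Proposition~\ref{prop:strong-bisimulation}) would turn this into a \emph{normalizing} $\tostrat$-evaluation from $\tm$. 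As $\tostrat$ is diamond, uniform normalization forbids $\tm$ from also diverging, a contradiction.

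The delicate clause is \emph{Normalizing evaluations to executions}, which I would split into showing that the machine halts and then matching output and $\beta$-count. Given $\deriv\colon \tm \tostrat^* \tmtwo$ with $\tmtwo$ normal, uniform normalization makes $\tm$ strongly $\tostrat$-normalizing and confluence (from the diamond) makes its normal form unique; together with random descent in its step-kind preserving form (all $\tostrat$-evaluations to normal form have the same number of multiplicative steps) this bounds the multiplicative length of \emph{every} $\tostrat$-evaluation out of $\tm$ by $\sizem\deriv$, by extending any such evaluation to the normal form and comparing. If the machine diverged on $\compil\tm$, the first clause applied to prefixes with unboundedly many $\beta$-transitions would exhibit $\tostrat$-evaluations from $\tm$ of unbounded multiplicative length, contradicting this bound; hence the unique (by determinism) maximal execution halts at some final $\state$. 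A last application of the first clause, followed by \emph{Halt} and normal-form preservation, produces an evaluation $\deriv'\colon \tm \tostrat^* \equiv \decode\state$ ending, up to $\equiv$, in a normal form; uniqueness of the normal form identifies it with $\tmtwo$, giving $\decode\state \equiv \tmtwo$, while the same-multiplicative-length property turns $\sizebeta\exec \le \sizem{\deriv'}$ into $\sizebeta\exec \le \sizem\deriv$. I expect this last piece of bookkeeping---marshalling uniform normalization, confluence, and random descent so that they simultaneously bound and identify the relevant multiplicative counts---to be the main obstacle, together with getting the direction of transport across $\equiv$ right in the first clause.
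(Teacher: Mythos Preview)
Your proof is correct. Clauses~1 and~3 match the paper's argument essentially verbatim: induction on the execution length with transport across $\equiv$ via the strong bisimulation for Clause~1, and the contrapositive through Clause~1, \emph{Halt}, and uniform normalization for Clause~3.

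Clause~2 is where you diverge from the paper, and the difference is worth recording. The paper proves a generalized statement (replacing $\compil\tm$ by an arbitrary reachable state) by induction on $\sizem\deriv$, using an auxiliary \emph{one-step transfer lemma}: whenever $\decode\state$ is not $\tostrat$-normal, the machine can do $\tomacho^*\tomachb$ from $\state$. Each inductive step fires one such burst, projects it via \emph{Relaxed $\beta$-projection}, and uses the diamond to shorten the remaining evaluation before recursing. Your route is non-constructive and more economical: you first bound the multiplicative length of \emph{every} evaluation from $\tm$ by $\sizem\deriv$ (via strong normalization and step-kind-preserving random descent), then rule out machine divergence by contradiction through Clause~1, and finally read off both $\decode\state\equiv\tmtwo$ and the $\sizebeta$ bound by one last appeal to Clause~1 and uniqueness of normal forms. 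This avoids the transfer lemma and the induction altogether, at the price of leaning harder on the global consequences of the diamond (uniform normalization, confluence, and the fact that normalizing evaluations share their multiplicative count). Both arguments use that last fact---the paper invokes it in the base case $\sizem\deriv=0$---so neither is more parsimonious on hypotheses; yours is just packaged differently.
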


\section{Introducing the \SCAM{}} 
\label{sect:design}
In the next section we start with the implementative details: let us overview some key points first.

\paragraph*{Crumbling} The \SCAM{} builds on the theory of \cbv abstract machines developed by Accattoli and co-authors  \cite{DBLP:conf/icfp/AccattoliBM14,fireballs,DBLP:journals/scp/AccattoliG19,DBLP:conf/ppdp/AccattoliCGC19}. In particular, it relies on the \emph{crumbling} technique of \cite{DBLP:conf/ppdp/AccattoliCGC19}, which essentially is a specific presentation of the transformation into \emph{administrative normal forms} by Flanagan et al. \cite{DBLP:conf/pldi/FlanaganSDF93a,DBLP:journals/lisp/SabryF93}. As shown in \cite{DBLP:conf/ppdp/AccattoliCGC19}, crumbling allows to reduce the number of data structures required, as it encodes the dump and the stack of \cbv machines inside the environment. This in turn reduces the number of transitions of the machine. Both aspects are extremely valuable when studying strong evaluation, as strong machines tend to have many data structures and at least a dozen transitions. Our \emph{strong crumbling abstract machine} (shortened to \SCAM{})---thanks to crumbling---is compact, having only 1 data structure and 9 transitions. The price to pay are the technicalities of  crumbling, roughly amounting to a light form of compilation.

\paragraph*{Garbage Collection} An unusual but key aspect is that garbage collection is done by the \SCAM{} itself, that is, it is not left to the 
meta-level garbage collector, as it is usually the case with abstract machines. This happens because, in the search for values to evaluate strongly, the \SCAM{} has to avoid the garbage ones, because evaluating their bodies would indeed break correctness with respect to \scbv.

\paragraph*{Zig-Zag} The \SCAM{} shall have two alternating phases, one performing open evaluation, and one searching for a value $\la\var\tm$ to evaluate strongly. Once the \SCAM{} finds it, it switches to the open phase for evaluating its body $\tm$, and so on. The open phase can be implemented exploring the code from left-to-right or from right-to-left---we adopt right-to-left, because this choice induces some stronger invariants. The phase searching for values is instead left-to-right, because it also performs garbage collection, which cannot be done right-to-left. Our mixed order  is hinted at in Biernacka et al. 
\cite{DBLP:conf/aplas/BiernackaBCD20} as a possible optimization of their right-to-left strong machine.

\paragraph*{Two Levels of Implosive Sharing} There are two levels of implosiveness, connected to the out/under abstraction dichotomy. \emph{Shallow implosive sharing} evaluates inside shared subterms but not inside shared abstractions. This happens in \cbneed. 
\emph{Deep implosive sharing}, instead, also enters shared abstractions---an instance is optimal reduction\footnote{We avoid the weak/strong terminology, because there can be strong evaluation with shallow implosive sharing, as in Strong Call-by-Need \cite{DBLP:journals/pacmpl/BalabonskiBBK17}.}.
The \SCAM{} adopts a deep implosive approach to useful sharing.


\section{Compilation and Read-Back}
\label{SECT:CRUMBLING}
\begin{figure*}[t!]
\begin{center}
\begin{tabular}{c|c}
	$\begin{array}{l\colspace l\colspace l}
	\multicolumn{3}{c}{\textsc{Auxiliary}}
	\\
	 \CrumbAux\var \defeq (\var, \epsilon)
	 &
	 \CrumbAux{\la\var\tm}  \defeq (\varthree, \esub\varthree{\la\var{\Crumb\tm}}) 
	 &
	 \CrumbAux{\tm\tmtwo} \defeq (\varthree, \esub\varthree{\var\vartwo}\env\envtwo) 
	\end{array}$
	&
	$\begin{array}{l\colspace l\colspace l}
	 \multicolumn{3}{c}{\textsc{Crumbling}}
	 \\
 	\Crumb\var \defeq \esub\varstar\var 
	&
	\Crumb{\la\var\tm} \defeq \esub\varstar{\la\var{\Crumb\tm}} 
	&
	\Crumb{\tm\tmtwo} \defeq \esub\varstar{\var\vartwo} \env \envtwo
	\end{array}$
	\end{tabular}
where $\CrumbAux\tm = (\var, \env)$  and  $\CrumbAux\tmtwo = (\vartwo, \envtwo)$ in both $\CrumbAux{\tm\tmtwo}$
and $\Crumb{\tm\tmtwo}$; 
and $\varthree \text{ fresh in } \crnames$ in both $\CrumbAux{\la\var\tm}$ and $\CrumbAux{\tm\tmtwo}$.
\end{center}
\caption{\label{fig:crm}Crumbling transformation.}
\end{figure*}

In a \cbv $\l$-calculus with a construct for subterm sharing, such as ES, applications can be decomposed by introducing 
sharing 
points for any non-variable subterm. Here we consider the case where applications are only between 
variables\footnote{For crumbling, we follow \cite{DBLP:conf/ppdp/AccattoliCGC19}. Therein applications can  have abstractions as subterms. Here however 
we 
adopt the minor variant where abstractions are also removed from applications and shared.}. For instance, the crumbling 
representation $\mytr\tm$ of
$\tm \defeq (\la\var(\la\vartwo\vartwo)\,(\var\var))\,(\la\varthree\varthree\varthree)$ (see forthcoming \refex{crumbling})
is
	\begin{center} $
		(\cvar\cvartwo)
		\esub{\cvar}{\la\var{\cvarthree \cvarfour} \esub \cvarthree {\la\vartwo \vartwo} \esub{\cvarfour}{\var\var}}
		\esub \cvartwo {\la\varthree{\varthree\varthree}}
	$ \end{center}
where we denoted the sharing points introduced by the transformation by $\cvar, \cvartwo, \cvarthree, \cvarfour$.
Note that the transformation involves also function bodies (\ie{} $\la\var(\la\vartwo\vartwo)\,(\var\var)$ turns into 
$\la\var{(\cvarthree \cvarfour)} \esub \cvarthree {\la\vartwo \vartwo} \esub{\cvarfour}{\var\var}$), that ES are grouped together unless forbidden by abstractions, and that ES 
are flattened out, i.e. they are not nested unless nesting is forced by abstractions. Here we shall adopt a variant 
of this transformation, having the first subterm $\cvar\cvartwo$ of $\mytr\tm$ in a pending ES $\esub\varstar{\cvar\cvartwo}$ on a 
special variable $\varstar$ dedicated to such pending ES---this is analogous to the initial continuation of 
continuation-passing transformations.

Such a \emph{crumbled representation} of terms impacts on the design of machines for 
\cbv evaluation. 
By removing the applicative structure, 
there is no need for data structures encoding the evaluation context, such as the applicative stack and the dump, that 
get encoded in the environment. The environment is the data structure for sharing that collects the ES obtained 1) at 
compile time, 
\ie by the crumbling transformation and 2) dynamically, during execution. 

In \cite{DBLP:conf/ppdp/AccattoliCGC19}, it is shown that the crumbling technique smoothly accommodates open terms, by designing an abstract 
machine that implements \ocbv within a bilinear overhead (when implemented on RAM). This paper extends that work to the strong 
case, but as explained in the introduction, the extension is non-trivial. We now cover compilation via crumbling; the next section deals with the open machine, and \refSECT{STRONG-MACHINE} presents the strong extension.

\paragraph*{Crumbled Environments}
We first have to define the target language of the translation, which are not terms with ES but \emph{crumbled 
environments}, a slight variant. A crumbled environment is a list of ES containing \emph{bites}, defined below. A key 
point is that we need to distinguish the variables introduced by the crumbling transformation from those originally in 
the term, which is why variables range over a set of names $\varnames= \crnames \uplus \calcnames$ where $\crnames$ is 
the set of crumbling variables and $\calcnames$ the set of variables of the calculus, both infinite---the names in $\crnames$ are sometimes noted $\cvar, \cvartwo, \cvarthree$ for clarity, but in general names from both sets are noted $\var,\vartwo,\varthree$. Moreover, there is 
a distinguished variable $\varstar \in \crnames$.
\begin{center}$
\begin{array}{r rcl}
	\textsc{Bites} & \mol & \grameq & \var \mid \var\vartwo \mid \la\var\env \ \ (*)
	\\
	\textsc{(Crumbled) Envs} & \env & \grameq &\emptyenv \mid \env\cons \esub\var\mol
	\end{array}$\end{center}
Side conditions $(*)$: $\var\neq\varstar\neq\vartwo$ in $\var$ and $\var\vartwo$, and $\var\in\calcnames$ and $\env$ is 
non-empty in $\la\var\env$. The conditions imply that $\varstar$ cannot have free occurrences. As for terms, bites of 
the form $\la\var\env$ are \emph{values}, ranged over by $\val$, while $\var$ and $\var\vartwo$ are \emph{inert bites}.

Environments are defined concatenating on the right, but we shall freely concatenate also on the left, concatenate whole 
environments, and omit the concatenation symbol `$:$'. Environments are also meant to be 
looked up for substitution. \emph{Notation:} $\env(\var) = \mol$ if $\env = \envtwo \esub\var\mol \envthree$ with $\var 
\notin \dom{\envtwo}$, and $\env(\var) = \bot$ otherwise---note that in open/strong settings environments may be 
undefined on some variables. 

\paragraph*{Crumbling $\l$-Terms} Machines start their execution on the compilation of ordinary $\l$-terms (with no ES), 
and the following crumbling transformation $\Crumb\tm$ shall be our notion of compilation. Note that $\varstar$ appears 
always and only as the variable ``bound'' by the leftmost ES in $\Crumb\tm$.
\begin{definition}[Crumbling transformation $\Crumb\cdot$]
Let $\tm$ be a \lat{}. We define its crumbling $\Crumb\tm$
using an auxiliary function $\CrumbAux\cdot$ mapping \lat{s} to pairs of a variable plus an environment. The formal definition of $\Crumb\cdot$ and $\CrumbAux\cdot$ are given in \reffig{crm}, and explained in the next example.
\end{definition}

\begin{example}\label{ex:crumbling}
The main transformation $\Crumb\cdot$ is used at top level, both of the initial term and recursively at top level of every function body. The auxiliary transformation $\CrumbAux\cdot$ instead is used when compiling applications, and it returns the variable that shall be used in place of the original term, plus a crumbled environment that binds additional results of the transformation.

For the sake of example, let us consider the term
%
\[
	\tm\defeq (\la\var I\,(\var\,\var))\,\delta = (\la\var(\la\vartwo\vartwo)\,(\var\,\var))\,(\la\varthree\varthree\,\varthree).
\]
The term $\tm$ consists of an application of two non-variable terms, hence the transformation yields
\begin{center}$\begin{array}{ccc}
	\Crumb\tm & = & \esub\varstar{\cvar\cvartwo}\esub\cvar\Placeholder\cdots\esub\cvartwo\Placeholder\cdots
\end{array}$\end{center}
where $\cvar$ and $\cvartwo$ are two fresh variables generated respectively by $\CrumbAux{\la\var I\,(\var\var)}$ and $\CrumbAux\delta$:
\[\begin{array}{rl}
	\CrumbAux{\la\var I\,(\var\var)} = & (\cvar, \esub\cvar{\la\var{\Crumb{I\,(\var\var)}}}) \\
	= & (\cvar, \esub{\cvar}{\la\var\esub\varstar{\cvarthree \cvarfour} \esub \cvarthree {\la\vartwo \esub \varstar \vartwo} \esub{\cvarfour}{\var\var}}) \\
	\text{\emph{of the form}} & (\cvar,  \esub\cvar{\la\var\esub\varstar{\cvarthree\cvarfour} \esub\cvarthree\Placeholder\cdots\esub\cvarfour\Placeholder\cdots}) \\
\end{array}\]
\[\begin{array}{rll}
	\CrumbAux \delta & = (\cvartwo, \esub \cvartwo {\la\varthree{\Crumb{\varthree\varthree}}}) 
	 = (\cvartwo, \esub \cvartwo {\la\varthree{\esub\varstar{\varthree\varthree}}}).
\end{array}\]
The fully transformed $\Crumb\tm$ is:
\begin{center}\small$
\esub\varstar{\cvar\cvartwo}
\esub{\cvar}{\la\var\esub\varstar{\cvarthree \cvarfour} \esub \cvarthree {\la\vartwo \esub \varstar \vartwo} \esub{\cvarfour}{\var\var}}
\esub \cvartwo {\la\varthree{\esub\varstar{\varthree\varthree}}}.$\end{center}
\end{example}

\paragraph*{Names} A key point is that, as it is standard for abstract machines, crumbled environments and bites are \emph{not} considered modulo 
$\alpha$-equivalence. Some machine transitions shall rename variables: 
$\alpha$-equivalence can rename $\var$ with $\vartwo$ only if they are both in $\crnames\setminus\set\varstar$ (resp. 
both in $\calcnames$), and $\varstar$ cannot be renamed. We also need a notion of well-namedness for both $\l$-terms and 
environments.
\begin{definition}[Well-named]\label{def:well-named-new}
    	A \lat{} $\tm$ is \emph{well-named} if its bound variables are all distinct, and $\fv\tm\cap\bv\tm=\emptyset$. An 
environment $\env$ (resp. a bite $\mol$) is \emph{well-named} if when two binders bind the same variable $\var$ then 
$\var=\varstar$, and $\fv\env \cap \bv\env \subseteq \{\varstar\}$ (resp. $\fv\mol \cap \bv\mol \subseteq 
\{\varstar\}$). 
\end{definition}

\paragraph*{Read-back} Bites and environments are mapped to VSC terms via a read-back function, that in particular inverts 
the crumbling transformation. The distinction between the two kinds of variables plays a role. The intuition is 
that ES are unfolded when they come from crumbling \emph{or} when they contain values, as to include in the read-back the useless part of the work done by $\toe$ on the calculus.

The read-back of a bite $\mol$ and an environment $\env$ are, respectively, the terms $\unf\mol$ and $\unf\env$ defined 
by:
\begin{center}

	$\begin{array}{c}
	\begin{array}{r\colspace rclrclccccc}
	\multicolumn{4}{c}{\textsc{Bites read-back}}
	\\
		\unf\var \defeq  \var
	&
	\unf{(\var\vartwo)} &\defeq&  \var\vartwo
	\\
	&
	\unf{(\la\var\env)} &\defeq&  \la\var\unf\env 
	\end{array}
	\\
	\begin{array}{c}
	\textsc{Crumbled environments read-back}
	\\
	
	\unf\emptyenv \!\defeq  \varstar
	\quad
	\unf{\env\esub\var\mol} \!\defeq
		\begin{cases}
		\unf\env \isub\var{\unf\mol} & \mbox{if $\mol=\val$ or $\var \in \crnames$}\\
		\unf\env \esub\var\mol & \mbox{otherwise.}
		\end{cases}
	\end{array}	
	\end{array}$
\end{center}

\begin{lemma}[Crumbling properties]
\label{l:transl-properties}
    \NoteProof{lappendix:transl-properties}
If $\tm$ is a well-named \lat{} then $\mytr\tm$ is well-named and $\unf{\mytr\tm} = \tm$.	
\end{lemma}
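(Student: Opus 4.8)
The plan is to prove both claims—well-namedness of $\Crumb\tm$ and the read-back identity $\unf{\Crumb\tm}=\tm$—simultaneously by structural induction, but with two important adjustments. First, the natural object to induct on is not $\Crumb\cdot$ directly but the auxiliary pair-producing function $\CrumbAux\cdot$, since $\Crumb{\tm\tmtwo}$ and $\CrumbAux{\tm\tmtwo}$ both recursively invoke $\CrumbAux\tm$ and $\CrumbAux\tmtwo$. I would therefore first state and prove a strengthened induction hypothesis about $\CrumbAux\cdot$: if $\tm$ is well-named then $\CrumbAux\tm=(\var,\env)$ with $\env$ well-named, $\unf{\env}$ read-back-equal to $\tm$ after substituting $\var$ appropriately, and—crucially—a bookkeeping statement relating $\var$, the free variables of $\tm$, and the fresh crumbling names in $\crnames$ used by $\env$. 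The two final claims of the lemma then follow by feeding this into the three defining clauses of $\Crumb\cdot$.

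The second adjustment concerns $\varstar$. Because $\unf{\emptyenv}=\varstar$, the read-back of an environment built by crumbling is most naturally computed \emph{relative to the placeholder} $\varstar$: reading back $\Crumb\tm=\esub\varstar{\cvar\cvartwo}\env\envtwo$ amounts to substituting away all the crumbling variables (those in $\crnames$, which the read-back unfolds by its first case) until only $\varstar$ is replaced by the reconstructed $\tm$. I would make this precise by proving, for the auxiliary function, a claim of the shape: the term obtained by substituting the returned variable $\var$ for $\varstar$ in $\unf{\env}$ equals $\tm$. Concretely, for the variable case $\CrumbAux\var=(\var,\epsilon)$ this is immediate since $\unf{\epsilon}=\varstar$ and substituting $\var$ for $\varstar$ yields $\var$. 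For the abstraction case $\CrumbAux{\la\var\tm}=(\varthree,\esub\varthree{\la\var\Crumb\tm})$, reading back the single ES on the crumbling variable $\varthree$ unfolds it (since $\varthree\in\crnames$), producing $\la\var\unf{\Crumb\tm}$, which by the inner induction hypothesis equals $\la\var\tm$; freshness of $\varthree$ guarantees well-namedness. The application case $\CrumbAux{\tm\tmtwo}=(\varthree,\esub\varthree{\var\vartwo}\env\envtwo)$ is the bookkeeping-heavy one: one splices the two sub-environments and checks that unfolding the crumbling variable $\varthree$ reinserts the application $\unf{\var\vartwo}$ in the right place, using the two inner induction hypotheses to recover $\tm$ and $\tmtwo$.

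For well-namedness I would track the invariant that every binder introduced by crumbling is a \emph{fresh} name in $\crnames\setminus\set\varstar$, so no two binders clash except possibly at $\varstar$, and that the only variable allowed to occur both free and bound is $\varstar$; this matches \refdef{well-named-new} exactly. The side conditions on bites—$\var\neq\varstar\neq\vartwo$ in the inert bites, and $\var\in\calcnames$ with non-empty body in $\la\var\env$—must be verified to hold for every bite produced, which is a routine consequence of the fact that crumbling only ever binds freshly chosen $\crnames$-names and re-uses calculus variables $\var\in\calcnames$ as abstraction binders.

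The main obstacle I expect is the $\varstar$ discipline in the application case: since both $\CrumbAux\tm$ and $\CrumbAux\tmtwo$ produce environments that internally use $\varstar$ only as the placeholder of their \emph{own} leftmost pending ES, concatenating them as $\env\envtwo$ could create an apparent clash on $\varstar$, and one must argue that the read-back's special-casing of $\varstar$ (namely $\unf{\emptyenv}=\varstar$ together with the non-$\alpha$-renaming of $\varstar$) makes the substitutions compose correctly and that well-namedness is preserved up to the $\set\varstar$ exception permitted by \refdef{well-named-new}. Getting the precise statement of the strengthened induction hypothesis right—so that it threads through all three clauses and correctly handles which variable stands in for $\varstar$ at each recursive call—is where the real care lies; once the hypothesis is correctly formulated, each inductive case is a short computation unfolding the definitions in \reffig{crm} and the read-back clauses.
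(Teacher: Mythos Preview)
Your strengthened induction hypothesis for $\CrumbAux\cdot$ is misstated, and the error surfaces already in your abstraction case. You claim that reading back $\esub\varthree{\la\var\Crumb\tm}$ ``unfolds'' the ES on $\varthree$ and produces $\la\var\unf{\Crumb\tm}$. It does not: by definition $\unf{\esub\varthree\mol} = \unf{\emptyenv}\isub\varthree{\unf\mol} = \varstar\isub\varthree{\unf\mol}$, and since $\varthree\neq\varstar$ this is just $\varstar$. In fact, for \emph{every} non-empty environment $\env$ returned by $\CrumbAux$ one has $\unf\env = \varstar$, because the leftmost ES of $\env$ binds a fresh crumbling variable rather than $\varstar$. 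Your proposed invariant ``$\unf{\env}\isub\varstar\var = \tm$'' therefore collapses to ``$\var = \tm$'' and fails outside the variable case.

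The repair is to observe that whenever $\CrumbAux\tm = (\var,\env)$ with $\env = \esub\var\mol\env'$ non-empty (the leading ES always binds the returned variable), one has $\esub\varstar\mol\env' = \Crumb\tm$. So the induction hypothesis to carry is simply $\unf{\Crumb\tm}=\tm$, with no separate read-back invariant for $\CrumbAux$. The application case then requires a \emph{dissociation} property, $\unf{\envtwo\esub\var\mol\env} = \unf\envtwo\isub\var{\unf{\esub\varstar\mol\env}}$ under suitable freshness conditions, which peels off the head of a concatenated environment and exposes the recursive $\Crumb$ inside---this is how the paper proceeds. Finally, your concern about $\varstar$ clashes at concatenation is misplaced: $\CrumbAux$ never binds $\varstar$ at the top level of its output; the only occurrences of $\varstar$ are inside abstraction bodies, where $\Crumb$ is called recursively, and repeated $\varstar$-bindings across distinct bodies are exactly what the well-namedness definition permits.
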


\ifthenelse{\boolean{techreport}}{
In the next sections, we shall need a modular deconstruction of read-back, spelled out below.

\begin{definition}
 Let $\env$ be an environment. Then the \emph{substitution $\indsub\env$} and the \emph{substitution context 
$\indenv\env$  induced by $\env$} are given by (where $(*)$ stands for ``$\mol=\val$ or $\var \in \crnames$'')
\begin{center}
	$\arraycolsep=2pt\begin{array}{rl\colspace rl}
	\multicolumn{4}{c}{\textsc{Substitution $\indsub\env$ induced by $\env$}}
	\\
\indsub\emptyenv & \defeq  Id
&
\indsub{\env\esub\var\mol} & \defeq 
\begin{cases}
\indsub\env\isub\var{\unf\mol}  & \mbox{if $(*)$}\\
\indsub\env & \mbox{otherw.}
\end{cases}
\end{array}$

	$\arraycolsep=2pt\begin{array}{rl\colspace rl}
	\multicolumn{4}{c}{\textsc{Substitution context $\indenv\env$ induced by $\env$}}
	\\
\indenv\emptyenv & \defeq \ctxhole
&
\indenv{\env\esub\var\mol} & \defeq 
\begin{cases}
\indenv\env\isub\var{\unf\mol}  & \mbox{if $(*)$}\\
\indenv\env\esub\var\mol & \mbox{otherw.}
\end{cases}
\end{array}$

\end{center}
\end{definition}
\begin{lemma}[Modular read-back]
\label{l:read-back-decomposition-d}
\NoteProof{lappendix:read-back-decomposition-d}
 $\unf{(\env\envtwo)} = \indenv\envtwo\ctxholep{\unf\env\indsub\envtwo}$.
\end{lemma}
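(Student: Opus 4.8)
The plan is to prove the identity by induction on $\envtwo$, peeling explicit substitutions off its right end, since this is exactly the direction in which all three operations $\unf\cdot$, $\indenv\cdot$ and $\indsub\cdot$ recurse. The base case $\envtwo = \emptyenv$ is immediate: concatenation gives $\env\emptyenv = \env$, while $\indenv\emptyenv = \ctxhole$ and $\indsub\emptyenv = Id$, so both sides reduce to $\unf\env$.

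For the inductive step I write $\envtwo = \envthree\esub\var\mol$ and use the associativity of concatenation, $\env\envtwo = (\env\envthree)\esub\var\mol$. Unfolding $\unf\cdot$ on the outermost substitution splits into the two cases governed by the side condition $(*)$ (``$\mol=\val$ or $\var\in\crnames$''), which are precisely the cases distinguished by the definitions of $\indenv\cdot$ and $\indsub\cdot$; in both I apply the induction hypothesis $\unf{(\env\envthree)} = \indenv\envthree\ctxholep{\unf\env\indsub\envthree}$.

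When $(*)$ holds, the read-back performs a meta-substitution, $\unf{(\env\envthree)\esub\var\mol} = \unf{(\env\envthree)}\isub\var{\unf\mol}$. Applying $\isub\var{\unf\mol}$ to the plugged term of the induction hypothesis and distributing it through the plug gives $(\indenv\envthree\isub\var{\unf\mol})\ctxholep{(\unf\env\indsub\envthree)\isub\var{\unf\mol}}$; the outer factor is exactly $\indenv{\envthree\esub\var\mol}$, and, reading $\indsub\cdot$ as a composite substitution so that $(\unf\env\indsub\envthree)\isub\var{\unf\mol} = \unf\env\,\indsub{\envthree\esub\var\mol}$, the plugged term is $\unf\env\indsub\envtwo$, as desired. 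When $(*)$ fails, $\mol$ is an inert bite, so $\unf\mol = \mol$, and the read-back keeps the substitution, $\unf{(\env\envthree)\esub\var\mol} = \unf{(\env\envthree)}\esub\var\mol$. Appending the outer $\esub\var\mol$ to the plugged term commutes with the plug, yielding $(\indenv\envthree\esub\var\mol)\ctxholep{\unf\env\indsub\envthree}$, whose outer factor is $\indenv{\envthree\esub\var\mol}$ and whose plugged term is unchanged because $\indsub\cdot$ ignores non-$(*)$ substitutions; both match $\indenv\envtwo$ and $\unf\env\indsub\envtwo$ respectively.

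The two structural facts I rely on are: (i) a meta-substitution distributes through plugging into a substitution context, $(\subctxp t)\isub\var s = (\subctx\isub\var s)\ctxholep{t\isub\var s}$, and (ii) an outer explicit substitution commutes with such a plug, $(\subctxp t)\esub\var\mol = (\subctx\esub\var\mol)\ctxholep t$; both are available because $\indenv\env$ is always a substitution context, which is immediate by induction from its definition. The only delicate point---and the main obstacle---is avoiding variable capture when distributing $\isub\var{\unf\mol}$ through $\indenv\envthree$ in the $(*)$ case: this is guaranteed by the well-namedness conventions on environments, under which the binders of $\indenv\envthree$ are distinct from $\var$ and do not capture the free variables of $\unf\mol$, so the substitution commutes cleanly with the plug.
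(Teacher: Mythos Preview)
Your proof is correct and follows exactly the same route as the paper's: induction on $\envtwo$, splitting the inductive step $\envtwo = \envthree\esub\var\mol$ into the two cases dictated by $(*)$, and in each case pushing the outermost operation (meta-substitution or explicit substitution) through the plugging given by the induction hypothesis. The paper's proof is terser---it simply writes the chain of equalities without isolating your structural facts (i) and (ii) or the capture-avoidance caveat---but the reasoning is identical step for step.
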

}{}

\begin{example}
Let us consider the environment
\begin{center}$
	\underbrace{\esub\varstar\vartwo}_\env
	\underbrace{\esub\vartwo{\cvartwo\cvartwo}
	\esub \cvartwo {\la\varthree{\esub\varstar{\varthree\,\varthree}}}}_{\envtwo}$
\end{center}
where $\vartwo$ is a ``normal'' variable, and $\cvartwo$ is a crumbling variable.
The read-back $\unf{\env\envtwo}$ proceeds as follows:
\ifthenelse{\boolean{techreport}}{
\[\begin{array}{rl}
	\unf{\env\envtwo}
	& = \unf{\esub\varstar\vartwo \esub\vartwo{\cvartwo\cvartwo}} ~
	\isub \cvartwo {\la\varthree{\esub\varstar{\varthree\,\varthree}}} \\
	& = \unf{\esub\varstar\vartwo} ~ \esub\vartwo{\cvartwo\cvartwo}
	\isub \cvartwo {\la\varthree{\esub\varstar{\varthree\,\varthree}}} \\
	& = \left(\unf{\esub\varstar\vartwo} \isub \cvartwo {\la\varthree{\esub\varstar{\varthree\,\varthree}}}\right)
	\left(\esub\vartwo{\cvartwo\cvartwo } \isub \cvartwo {\la\varthree{\esub\varstar{\varthree\,\varthree}}} \right)\\
\end{array}\]
The equality in~\reflemma{read-back-decomposition-d} holds at this point, since:
\[ \begin{array}{c \colspace c}
	\indenv\envtwo  =  \ctxhole \esub\vartwo{\cvartwo\cvartwo } \isub \cvartwo {\la\varthree{\esub\varstar{\varthree\,\varthree}}} &
	\indsub\envtwo  =  \isub \cvartwo {\la\varthree{\esub\varstar{\varthree\,\varthree}}}
\end{array}\]
The full read-back $\unf{\env\envtwo}$ is $ \vartwo\esub\vartwo{(\la\varthree{\esub\varstar{\varthree\,\varthree}})(\la\varthree{\esub\varstar{\varthree\,\varthree}})}$.
}{
\[\begin{array}{rl}
	\unf{\env\envtwo}
	& = \unf{\esub\varstar\vartwo \esub\vartwo{\cvartwo\cvartwo}} ~
	\isub \cvartwo {\la\varthree{\esub\varstar{\varthree\,\varthree}}} \\
	& = \unf{\esub\varstar\vartwo} ~ \esub\vartwo{\cvartwo\cvartwo}
	\isub \cvartwo {\la\varthree{\esub\varstar{\varthree\,\varthree}}} \\
	&= \vartwo\esub\vartwo{(\la\varthree{\esub\varstar{\varthree\,\varthree}})(\la\varthree{\esub\varstar{\varthree\,\varthree}})}
\end{array}\]
}
\end{example}

\section{The Open Crumbling Machine}
\label{SECT:OPEN-MACHINE}

Here we overview an abstract machine implementing the open VSC $\tovsubo$, that shall be the starting point for the 
strong machine of the next section. We keep following the crumbling technique by \cite{DBLP:conf/ppdp/AccattoliCGC19}, 
slightly adapted. 

The only structure at work in the machine is a crumbled environment, traversed from right to left, together with a 
pointer to where the machine is operating. Then a machine state $\state\defeq \env\,\rlsep\,\envtwo$ is a pair of crumbled 
environments where
\begin{itemize}
 \item \emph{Right}: $\envtwo$ is the part that has already been processed,
 \item \emph{Left}: $\env$ is the part yet to be processed, and
 \item \emph{Separator}: $\rlsep$ represents the pointer to the active point. 
\end{itemize}

The Open Crumbling Abstract Machine (\OCAM) has 4 transitions, two $\beta$ transitions $\tomachbv$ and $\tomachbi$, and 
two overhead transitions $\tomachsub$ and $\tomachc$, detailed below. Compilation is defined as 
$\compil\tm\defeq\mytr\tm\,\rlsep\,\emptyenv$ for a well-named $\l$-term $\tm$, and read-back simply as 
$\unf{(\env\,\rlsep\,\envtwo)} \defeq \unf{(\env\envtwo)}$. By \reflemma{transl-properties}, compilation and read-back 
verify the initialization constraint for the \OCAM.

\begin{figure*}
\begin{center}
	\arraycolsep=2pt
	$\begin{array}{rcllrcllcccc}
\env \esub\var{\vartwo\,\varthree} & \rlsep & \kctx
&  \tomachbv &
\env (\esub\var \mol \envtwo \isub\varfour\varthree)  & \rlsep   & \kctx
& \text{if $(*)$ and $\wstenv\kctx(\varthree)= \val$ for some $\val$;}
\\%
\env \esub\var{\vartwo\,\varthree} & \rlsep &  \kctx
&  \tomachbi &
\env \esub\var \mol \envtwo & \rlsep & \kctxp{ \ctxhole\esub\varfour\varthree   }
& \text{if $(*) $ and $\wstenv\kctx(\varthree)= \itm$ for some $\itm$;}
\\%
\env \esub\var\vartwo & \rlsep & \kctx
& \tomachsub&
\env\isub\var\vartwo & \rlsep & \kctx
& \text{if $\var \neq\varstar$;}
\\
\env \esub\var\mol & \rlsep & \kctx
& \tomachcone &
\env & \rlsep & \kctxp{ \ctxhole\esub\var\mol  }
& \text{if none of the other rules is applicable.}
\\[4pt]
\hline
\\[-6pt]
\emptyenv & \rlsep & \kctx
& \tomachctwo &
\emptyenv & \lrsep & \kctx
\\
\env & \lrsep & \kctxp{ \ctxhole\esub\var\mol  }
& \tomachcthree &
\env \esub\var\mol & \lrsep & \kctx
& \text{if $\mol$ is not a value;}
\\
\env & \lrsep & \kctxp{ \ctxhole\esub\var\val  }
& \tomachgc &
\env & \lrsep & \kctx
& \text{if $\var \notin \fv\env$ and $\env$ is not empty;}
\\
\env & \lrsep & \kctxp{\envtwo \esub\var{\la\vartwo\ctxhole }  }
& \tomachcfour &
\envtwo \esub\var{\la\vartwo\env} & \lrsep  & \kctx
\\
\env & \lrsep & \kctxp{ \ctxhole\esub\var{\la\vartwo\envtwo} } 
& \tomachcfive &
\envtwo & \rlsep & \kctxp{\env \esub\var{\la\vartwo\ctxhole}}  

& \text{if $\var \in \fv\env$.}
\end{array}$
\end{center}
\caption{Open (above) and strong (below) phases of the \SCAM{}.}
\label{fig:scam-phases}
\end{figure*}
\paragraph*{$\beta$-Transitions} They are quite technical unfortunately, because of crumbling. The idea is that there are two cases, $\tomachbv$ for when the argument is a value and $\tomachbi$ for when it is a inert term. In the first case, the machine also does in one single transition both the $\beta$/multiplicative step and the exponential step that is created. Because of crumbling, the $\beta$-redex is given by an application of variables $\vartwo\varthree$, whose abstraction and argument are to be found in the environment. Actually, the transitions also does the copy of the abstraction that replaces $\vartwo$.
They are (further explanations follow):
%
\begin{center}
$\begin{array}{rcl}
\env \esub\var{\vartwo\,\varthree} \,\rlsep\,  \envtwo
 &\tomachbv &
\env (\esub\var \mol \envthree\isub\varfour\varthree)   \,\rlsep\,  \envtwo
\\
\env \esub\var{\vartwo\,\varthree} \,\rlsep\, \envtwo
 &\tomachbi &
\env \esub\var \mol \envthree \,\rlsep\, \esub\varfour\varthree   \envtwo
\end{array}$
\end{center}
where in both cases $\envtwo(\vartwo)$ is a value and $\rename{(\envtwo(\vartwo))} =: \la\varfour(\esub{\varstar}\mol\envthree)$ is a well-named copy of 
$\envtwo(\vartwo)$ with fresh names, and  $\envtwo(\varthree)$ is a value in $\tomachbv$, while in $\tomachbi$ it is an 
inert bite. 

\emph{Explanation.}  First, we explain points that are common to both transitions. The bite under analysis is $\vartwo\varthree$ and $\vartwo$ maps to a value $\val$ in  $\envtwo$, thus the read-back turns $\vartwo\varthree$ into $(\vartwo \varthree)\indsub\envtwo 
=\indsub\envtwo(\vartwo) \indsub\envtwo(\varthree) = \val \indsub\envtwo(\varthree)$ that is a $\beta$/multiplicative redex. 
The machine copies $\val$, obtaining $\la\varfour(\esub{\varstar}\mol\envthree)$, as copying corresponds to 
$\alpha$-renaming. Variables have indeed to be intended as memory locations, and $\alpha$-renaming means making a 
copy somewhere else in the memory. 
Letting the argument $\varthree$ aside, what happens 
to both transitions is: the $\beta$-redex is fired and $\vartwo\varthree$ is replaced by the body 
$\esub{\varstar}\mol\envthree$ of the copied value, that is concatenated with $\env$, obtaining $\env \esub\var \mol 
\envthree$. Via read-back, the multiplicative redex is $\val \indsub\envtwo(\varthree)= (\la\varfour\tm) \indsub\envtwo(\varthree)$ with $\tm = \unf{(\esub{\varstar}\mol\envthree)}$, which takes a $\rtom$ step to $\tm\esub\varfour{\unf{\indsub\envtwo(\varthree)}}$.

Consider now the argument $\varthree$. 
If it is associated with a value $\val$ in $\envtwo$ then its read-back is also a value, namely $\valtwo = 
\val\indsub\envtwo$, and so on the calculus $\tm\esub\varfour{\valtwo}$ is a $\rtoe$ redex. 
The machine substitutes $\varthree$ for $\varfour$ in the body $\esub\var \mol \envthree$ of the copied value. This 
corresponds to performing the $\rtoe$-redex $\tm\esub\varfour{\valtwo}\rtoe \tm\isub\varfour{\valtwo}$ on the calculus. Note that the machine only performs a renaming, it does not duplicate 
$\valtwo$---up to read-back this is equivalent. If instead $\varthree$ is associated to an inert bite in $\envtwo$, let us 
assume for a moment that $\varthree$ reads back to an inert term. Then $\tm\esub\varfour{\unf{\indsub\envtwo(\varthree)}}=\tm\esub\varfour{\itm}$ for some inert term $\itm$, and no substitution happens. 

For $\tomachbv$ and $\tomachbi$ to cover all cases, the environment $\envtwo$ needs to satisfy two properties. First, values are not hidden behind chains of renamings, that is, if $\env(\var)=\vartwo$ then $\env(\vartwo)$ is not a value. Second, if $\env(\var)$ is a inert bite, then it reads back to a inert term. These two invariants are nicely expressed in ``read-back form'' via $\indsub\envtwo$: on any 
reachable state $\env \,\rlsep\, \envtwo$
\begin{itemize}
 \item \emph{$\envtwo$ is a fireball substitution}, that is, $\indsub\envtwo(\var)$ is a fireball for every $\var \in 
\dom{\indsub\envtwo}$.
 \item \emph{$\envtwo$ has immediate values}, that is, if $\indsub\envtwo(\var)$ is a value and $\var \neq \varstar$ 
then $\envtwo(\var)$ is a value. 
\end{itemize}

\paragraph*{Useful Sharing} The \OCAM implements useful sharing because it copies only abstractions and only \emph{on-demand}, that is, only on variable occurrences that are applied, namely on $\vartwo$ in the definitions of $\tomachbv$ and $\tomachbi$.

\paragraph*{Overhead Transitions} The overhead transition $\tomachsub$ eliminates explicit renamings, that is, ES 
containing variables:
\[\arraycolsep=3pt
\begin{array}{rcllrcllcccc}
\env \esub\var\vartwo & \rlsep & \envtwo
& \tomachsub&
\env\isub\var{\vartwo} & \rlsep & \envtwo
\end{array}\]
when $\var \neq\varstar$. \ifthenelse{\boolean{techreport}}{The forthcoming pristine invariant of the machine guarantees that $\var$ always has at 
most one occurrence in $\env$ (\reflemma{aux-most-once} below), so that $\tomachsub$ shall not be costly.}{An invariant of the machine (namely the \emph{pristine invariant}, see the tech report \cite{DBLP:journals/corr/abs-2102-06928}) guarantees that $\var$ always has at 
most one occurrence in $\env$, so that $\tomachsub$ shall not be costly.}

The overhead transition $\tomachc$ simply moves the pointer $\rlsep$ to the left when no other rule is applicable, i.e. when 1) $\mol$ is a value, or 2) when $\mol$ is $\vartwo$ or $\vartwo\varthree$ but $\envtwo(\vartwo)$ is not a 
value:
\begin{center}$\arraycolsep=3pt
\begin{array}{rcllrcllcccc}
\env \esub\var\mol & \rlsep & \envtwo
& \tomachc &
\env & \rlsep & \esub\var\mol\envtwo
\end{array}$\end{center}

\begin{example}
Let us see how the \OCAM reduces the environment from \refex{crumbling}:

\noindent\hspace{-10pt}\scalebox{0.87}{
\bgroup
\setlength\tabcolsep{1.5pt}
\def\arraystretch{1.5}
\begin{tabular}{ll}
   $\esub\varstar{\cvar\cvartwo}
  \esub{\cvar}{\la\var\esub\varstar{\cvarthree \cvarfour} \esub \cvarthree {\la\vartwo \esub \varstar \vartwo} \esub{\cvarfour}{\var\,\var}}
  \esub \cvartwo {\la\varthree{\esub\varstar{\varthree\,\varthree}}} \rlsep\ \tomachc$
  \\ 
  $\esub\varstar{\cvar\cvartwo}
  \esub{\cvar}{\la\var\esub\varstar{\cvarthree \cvarfour} \esub \cvarthree {\la\vartwo \esub \varstar \vartwo} \esub{\cvarfour}{\var\,\var}} \rlsep ~
  \esub \cvartwo {\la\varthree{\esub\varstar{\varthree\,\varthree}}} \tomachc$
  \\  
  $\esub\varstar{\cvar\cvartwo} \rlsep ~
  \esub{\cvar}{\la\var\esub\varstar{\cvarthree \cvarfour} \esub \cvarthree {\la\vartwo \esub \varstar \vartwo} \esub{\cvarfour}{\var\,\var}}
  \esub \cvartwo {\la\varthree{\esub\varstar{\varthree\,\varthree}}} \tomachbv$
  \\ 
  $\esub\varstar{\cvarthree \cvarfour} \esub \cvarthree {\la\vartwo \esub \varstar \vartwo} \esub{\cvarfour}{\cvartwo\,\cvartwo} \rlsep ~
  \esub{\cvar}\cdots
  \esub \cvartwo {\la\varthree{\esub\varstar{\varthree\,\varthree}}} \tomachbv$
  \\  
  $\esub\varstar{\cvarthree \cvarfour} \esub \cvarthree {\la\vartwo \esub \varstar \vartwo} \esub{\cvarfour}{\cvartwo\,\cvartwo} \rlsep ~
  \esub{\cvar}\cdots
  \esub \cvartwo {\la\varthree{\esub\varstar{\varthree\,\varthree}}} \tomachbv$ \ldots
\end{tabular}\egroup}

\medskip

The first two steps apply the search transition: the two rightmost ES bind values, thus they can just be skipped since there is nothing to evaluate. The next steps apply a $\beta_\lambda$ transition, substituting the body of the abstraction bound in the environment by $\cvartwo$. Evaluation then loops infinitely because the environment has no normal form.
\end{example}

\ifthenelse{\boolean{techreport}}{
\subsection{Implementation Theorem} 
The main property for the implementation theorem is relaxed $\beta$-projection. At 
the open level, one $\beta$-transition projects to one $\tomo$ step, plus one $\toeo$ step if the transition is $\tomachbv$. To prove 
it, we need that after read-back 1) the bite $\vartwo\varthree$ rewritten by $\beta$-transitions becomes a 
$\tomo$ redex (with a value as an argument for $\tomachbv$) and 2) it occurs in an open context. Becoming a $\tomo$ redex is guaranteed by the two invariants above. Occurring in an open context instead requires the further invariant below.

Now, if $\state = \env\esub\var{\vartwo\varthree}\,\rlsep\,\envtwo$ performs a $\beta$-transition, by modular read-back (\reflemma{read-back-decomposition-d}) 
 $\unf{\state} = \indenv\envtwo\ctxholep{\unf{(\env\esub\var{\vartwo\varthree})}\indsub\envtwo}$. We have that $\indenv\envtwo$ is an open context. Let's focus on $\env\esub\var{\vartwo\varthree}$. The invariant is that the environments on the left of $\rlsep$ are \emph{pristine}
, that is, that in our case $\env$ unfolds to $\openctxp\var$ for some open context $\openctx$  such that $\var\notin\allvars\openctx$.
\begin{definition}[Pristine]
\emph{Pristine environments} and \emph{pristine bites} are mutually defined as follows:
        \begin{itemize}
            \item \emph{Environments}: $\emptyenv$ is a pristine environment; $\env\esub\var\mol$ is 
pristine if $\env$ and $\mol$ are pristine, $\var\in\crnames$, and $\unf\env=\openctxp\var$ for some open context 
$\openctx$ such that $\var\notin\allvars\openctx$.
            \item \emph{Bites}: inert bites are pristine; $\la\var\env$ is pristine if $\env$ is 
pristine.
        \end{itemize}
\end{definition}
The following property shall be crucial for the complexity analysis in \refSECT{COMPLEXITY}.
\begin{lemma}
 \label{l:aux-most-once}
 \NoteProof{lappendix:aux-most-once}
 If $\env\esub\var\mol$ is pristine and well-named and $\var\neq\varstar$, then $\var$ occurs exactly once in $\env$.
\end{lemma}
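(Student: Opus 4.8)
The plan is to transfer the single-occurrence information that the pristine condition records about the read-back $\unf\env$ to the syntactic environment $\env$ itself. The bridge is the observation that, for pristine environments, read-back neither duplicates nor erases free variables other than $\varstar$, so occurrence counts agree on the two sides. This rests on the fact that the pristine condition is secretly a \emph{linearity} condition: if $\env'\esub\vartwo\mol$ is pristine, then $\unf{\env'} = \openctxp\vartwo$ with $\vartwo \notin \allvars\openctx$, i.e.\ $\vartwo$ occurs \emph{exactly once} in $\unf{\env'}$. Moreover, every ES-binder of a pristine environment lies in $\crnames$, so read-back always takes its substitution branch, giving $\unf{\env'\esub\vartwo\mol} = \unf{\env'}\isub\vartwo{\unf\mol}$; because $\vartwo$ occurs only once, this substitution replaces a single position and is therefore occurrence-preserving on all other variables.

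Writing $\#_\var(t)$ for the number of occurrences of $\var$ in a term, environment, or bite $t$, I would first prove, by mutual induction on pristine environments and pristine bites, the auxiliary claim: for pristine and well-named $\env$ (resp.\ $\mol$) and every $\var \in \varnames \setminus \{\varstar\}$ with $\var \notin \bv\env$ (resp.\ $\var \notin \bv\mol$), one has $\#_\var(\env) = \#_\var(\unf\env)$ (resp.\ $\#_\var(\mol) = \#_\var(\unf\mol)$). The base case $\env = \emptyenv$ is immediate, since $\unf\emptyenv = \varstar$ and $\var \neq \varstar$. Inert bites read back to themselves, so their case is trivial; for $\mol = \la\vartwo\env'$ read-back commutes with the abstraction and $\var \neq \vartwo$ (as $\var \notin \bv\mol \ni \vartwo$), so the inductive hypothesis on $\env'$ applies. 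The crucial case is $\env = \env'\esub\vartwo\mol$: from $\var \notin \bv\env$ we get $\var \neq \vartwo$ and $\var \notin \bv{\env'}$, $\var \notin \bv\mol$; the pristine condition gives that $\vartwo$ occurs exactly once in $\unf{\env'}$, so $\#_\var(\unf\env) = \#_\var(\unf{\env'}) + \#_\var(\unf\mol)$, matching the syntactic split $\#_\var(\env) = \#_\var(\env') + \#_\var(\mol)$; the two inductive hypotheses then close the case.

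The lemma is now immediate. Since $\env\esub\var\mol$ is pristine, its defining clause yields $\var \in \crnames$ and $\unf\env = \openctxp\var$ with $\var \notin \allvars\openctx$, so $\var$ occurs exactly once in $\unf\env$. Well-namedness of $\env\esub\var\mol$ together with $\var \neq \varstar$ forces $\var \notin \bv\env$, i.e.\ $\var$ is free in $\env$. Applying the auxiliary claim, $\#_\var(\env) = \#_\var(\unf\env) = 1$, which is exactly the statement.

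The main obstacle is conceptual: recognising that the geometric pristine requirement ``$\unf\env$ is an open context whose hole is filled by a single $\var$'' is precisely the linearity that makes read-back preserve occurrences, and then phrasing the auxiliary invariant at the right generality---over all non-$\varstar$ free variables, with the mutual environment/bite recursion---so that the lone substitution step in the inductive case carries no duplication. The only genuinely delicate point beyond bookkeeping is isolating $\varstar$: it is the one variable read-back can manufacture, through $\unf\emptyenv = \varstar$, and for it the occurrence-preservation invariant legitimately fails, which is why the hypothesis $\var \neq \varstar$ cannot be dropped.
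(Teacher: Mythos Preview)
Your proof is correct and takes a somewhat different route from the paper's. The paper factors the ``exactly once'' into two halves: \emph{at most once} is obtained via an auxiliary lemma (\reflemma{aux-aux-most-once}) that argues positionally, locating the single occurrence of $\var$ either inside the open context $\openctxtwo$ coming from pristinity or inside $\unf\mol$, and then transferring the bound back to the syntactic environment using a further helper (\reflemma{aux-aux-aux-most-once}) stating $\fv\env = \fv{\unf\env}$ for pristine $\env$; \emph{at least once} is derived separately from $\var \in \fv{\unf\env} \subseteq \fv\env$. Your approach replaces this two-lemma split by a single stronger invariant---exact preservation of occurrence counts $\#_\var(\env) = \#_\var(\unf\env)$ for pristine well-named $\env$ and non-$\varstar$ $\var$ not bound in $\env$---which yields ``exactly once'' in one shot. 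The key observation that makes your induction go through, namely that the pristine condition forces $\vartwo$ to occur exactly once in $\unf{\env'}$ so that the substitution $\isub\vartwo{\unf\mol}$ is linear, is the same linearity insight the paper exploits, but you package it more uniformly. One minor wording slip: ``i.e.\ $\var$ is free in $\env$'' should read ``i.e.\ any occurrence of $\var$ in $\env$ is free'', since existence only follows after applying the auxiliary claim; this does not affect the argument.
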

Now, if $\state = \env\esub\var{\vartwo\varthree}\,\rlsep\,\envtwo$ the invariants give $\decode\state=\indenv\envtwo\ctxholep{\openctx \ctxholep{\vartwo 
\varthree}\indsub\envtwo} = \indenv\envtwo\ctxholep{\openctx \indsub\envtwo \ctxholep{\vartwo\indsub\envtwo 
\varthree\indsub\envtwo}}$ which has the desired shape because $\indenv\envtwo\ctxholep{\openctx \indsub\envtwo}$ is 
open---both $\openctx$ and $\indenv\envtwo$ are open, and open contexts are stable by substitution and plugging---and 
$\vartwo\indsub\envtwo \varthree\indsub\envtwo$ is a $\rtom$-redex, as seen before. Structural 
equivalence $\eqstruct$ plays a role in the projection of $\tomachbi$, to put the created ES $\esub\varfour\varthree$ at 
the right place.

For the halt property, note that final states have the form $\emptyenv\,\rlsep\,\envtwo$ that by definition of read-back, 
\reflemma{read-back-decomposition-d}, and the fireball substitution invariant, read-backs to $\indenv\envtwo\ctxholep{\fire}$, where $\fire$ is a fireball. A last invariant ensures that the substitution context $\indenv\envtwo$ induced by every reachable state $\env\,\rlsep\,\envtwo$ 
is a \emph{inert (substitution) context} that is, it contains only inert terms---$\indenv\envtwo\ctxholep{\fire}$ is then a fireball, as required.

Spelling out the details, one obtains that the \OCAM is a relaxed implementation of $(\tovsubo, \eqstruct)$.
}{}

\section{The Strong Crumbling Machine}
\label{SECT:STRONG-MACHINE}
Here we define the Strong Crumbling Abstract Machine (\SCAM{}), building on the previous section.

Basically, we extend the \OCAM with a new \emph{strong phase}, identified by a new separator $\lrsep$, whose task is to 
look for ES $\esub\var\val$ containing values and, once one is found, evaluating $\val$ under abstraction if $\var$ 
occurs somewhere in the state, and garbage collect it otherwise. 

\paragraph*{Garbage Collection} Consider the term $\tm \defeq (\la\vartwo\var) (\la\varthree\Omega)$ that evaluates 
in two $\tovsubs$ steps to $\var$ while containing the diverging subterm $\la\varthree\Omega$. The \OCAM executed on 
$\tm$ produces the final state $\state\defeq \rlsep\esub\varstar\var\esub\vartwo{\la{\varthree}\mytr\Omega}$. Now its 
strong extension should search for ES containing values in $\state$, but it has to avoid entering 
$\esub\vartwo{\la{\varthree}\mytr\Omega}$ otherwise it would diverge, breaking the correspondence with $\tovsubs$. It follows that 
the machine has to track which ES are garbage and which are not.
Let us assume for now that the machine can check it easily; the next section discusses how to implement it.

\paragraph*{Deep Implosive Sharing} The machine enters into values whose ES $\esub\var\val$ is such that $\var$ does 
occur, potentially \emph{many} times. This ingredient accounts for \emph{deep implosive} sharing. A key point is that the machine enters only inside ES that shall no longer substitute their values (they are left there pending because of useful sharing, as they bind useless occurrences only)---this is why the crucial subterm invariant for complexity analyses (stating that terms duplicated along the whole evaluation with sharing are subterms of the initial term) is not compromised. Useful sharing, instead, is already implemented by the \OCAM, and thus simply inherited by the \SCAM{}.

\paragraph*{\SCAM{}} First of all, we generalize the right component of states, so as to account for evaluation positions 
under abstraction. The idea is that the right environment $\envtwo$ can be seen as a context, namely $\ctxhole\envtwo$, 
and that going under abstraction simply requires a further context construction.
\begin{center}$\begin{array}{r\colspace rcl \colspace}
\textsc{Machine contexts} & \kctx & \grameq & \ctxhole\envtwo \mid \env\esub\var{\la\vartwo\kctx}\envtwo
\\
\textsc{States} & \state & \grameq & \env\,\rlsep\,\kctx \mid \env\,\lrsep\,\kctx
\end{array}$\end{center}
Often $\ctxhole\emptyenv$ and $\env\esub\var{\la\vartwo\kctx}\emptyenv$ are noted $\ctxhole$ and 
$\env\esub\var{\la\vartwo\kctx}$, respectively. 
Plugging inside machine contexts, of both environments and machine contexts, is defined as expected, and noted 
$\kctxp\env$ and $\kctxp\kctxtwo$. We use $\gensep$ for an unspecified separator, \ie $\gensep{\in}\,\set{\rlsep\,,\lrsep}$.
The 
idea is that a state $\env\gensep\kctx$ represent the environment $\kctxp\env$ and the active point is between $\kctx$ 
and $\env$, possibly deep inside many abstractions in $\kctxp\env$. Compilation is now defined as $\compil\tm \defeq 
\mytr\tm \,\rlsep\, \ctxhole$ for a well-named $\l$-term $\tm$, and read-back as $\unf{(\env\gensep\kctx)} \defeq 
\unf{\kctxp\env}$. A state $\env\gensep\kctx$ is well-named if $\kctxp\env$ is well-named.

\paragraph*{The Open Phase} To lift the transitions of the open machine, we have to define the environment 
$\wstenv\kctx$ induced by a machine context $\kctx$, playing the role played by $\envtwo$ in \cref{SECT:OPEN-MACHINE}.
\begin{definition}
 The \emph{environment} $\wstenv\kctx$ of $\kctx$ is given by $\wstenv{\ctxhole\envtwo}  \defeq  \envtwo$ and $
\wstenv{\env\esub\var{\la\vartwo\kctx}\envtwo}  \defeq \wstenv\kctx\envtwo$.
\end{definition}
The transitions of the 
\OCAM smoothly lift, as shown in \reffig{scam-phases} (where $(*)$ stands for ``$\rename{(\wstenv\kctx(\vartwo))} = 
\la\varfour(\esub{\varstar}\mol\envtwo) $'').
Note that the last transition applies when 1) $\mol$ is a value, or 2) $\mol$ is $\vartwo$ or $\vartwo\varthree$ but  
$\wstenv\kctx(\vartwo)$ is not a value.

\paragraph*{The Strong Phase} There are 5 new transitions, that on a state $\env\,\lrsep\,\kctx$ inspect $\kctx$ rather than 
$\env$, and accumulate in $\env$ the ES that survived the strong phase, which are now fully evaluated. The transitions 
inspect $\kctx$ from the inside, that is, by looking at what is on the right of the hole $\ctxhole$, see \reffig{scam-phases}.

The union of the nine transitions of the \SCAM is noted $\tomachscam$. Let us explain the new ones.
\begin{itemize}
 \item $\tomachctwo$ simply switches to the strong phase, when the current open phase is over.
 \item $\tomachcthree$ moves the next ES $\esub\var\mol$ of the context to the fully evaluated $\env$, if $\mol$ is not 
a value.
 \item $\tomachgc$ garbage collects $\esub\var\val$ when $\var$ does not occur in $\env$. Checking only $\env$ for 
occurrences is correct: a well-named invariant shall guarantee that $\var$ cannot occur in $\kctx$.
 \item $\tomachcfour$ handles the case in which the search has fully processed the current body $\env$ of a value, and 
thus it re-unites the enclosing abstraction $\l\vartwo$ in $\kctx$ with $\env$, adding $\la\vartwo\env$ to the fully 
evaluated environment $\envtwo$, and resumes search at the upper abstraction level.
 \item $\tomachcfive$ enters into the body $\envtwo$ of the next ES in $\kctx$, when its content is a value. Searching for values is 
temporarily over, the machine switches back to the open phase to evaluate $\envtwo$. The fully evaluated environment 
$\env$ and the enclosing abstraction $\l\vartwo$ are then moved to $\kctx$, and the body $\envtwo$ becomes the new left 
component of the state.
\end{itemize}

\section{The Strong Implementation Theorem}
\label{SECT:STRONG-IMPLEMENTATION}

The definition of the \SCAM is a smooth generalization of the \OCAM, but its implementation theorem is considerably more 
sophisticated because of deep implosive sharing. 
We overview the main concepts here, many details are in \ifthenelse{\boolean{techreport}}{\refapp{STRONG-IMPLEMENTATION-app}}{Appendix H of the tech report \cite{DBLP:journals/corr/abs-2102-06928}}. 
The obstacle is the relaxed $\beta$-projection property. 

\paragraph*{Multi Contexts} The crux is showing that the read back $\unf\kctx$ of a machine context (defined below, but 
keep reading) is an external (evaluation) context $\strongctx$, first of all because... it is not. Both strong and machine 
contexts have exactly one hole, but if $\kctx = \env\esub\var{\la\vartwo\kctxtwo}\envtwo$ then $\var$ may have many 
occurrences in $\env$, causing duplications of the hole via read-back---this is the deep implosive sharing ingredient. The first step, then, is to model 
$\unf\kctx$ as a VSC \emph{multi context}. We need generic multi contexts, plus external and rigid variants.
\begin{center}
	{
	$\begin{array}{rrclccc}
	\multicolumn{4}{c}{\textsc{Multi contexts}}
	\\

	\textsc{Generic} & \mctx & \grameq & \ctxhole \mid \var \mid \la\var \mctx \mid \mctx\esub\var\mctx \mid \mctx \mctx
	\\
	
	\textsc{External} & \strongmctx & \grameq & \ctxhole \mid \tm \mid \la\var \strongmctx \mid \rmctx \mid 
\strongmctx\esub\var\rmctx
	\\
	
	\textsc{Rigid} & \rmctx & \grameq & \var \mid \rmctx \strongmctx \mid \rmctx\esub\var\rmctx
	\end{array}$
	}
\end{center}
Multi contexts may have no holes, and thus be a term\footnote{It is easily seen that the grammar of 
$\mctx$ allows to generate all terms, and the one for $\rmctx$ all rigid terms---see \ifthenelse{\boolean{techreport}}{\reflemma{mctx-plugging} in the Appendix (page \pageref{l:mctx-plugging})}{Appendix H in the tech report \cite{DBLP:journals/corr/abs-2102-06928}}. For 
$\strongmctx$, terms are simply injected in by the grammar itself.}. A multi context is \emph{proper} if it has at least 
one hole. Plugging $\mctxp\tm$ plugs $\tm$ in all the holes of $\mctx$, erasing $\tm$ if $\mctx$ is not proper.

The next lemma shows that $\strongmctx$ is the right generalizations of $\strongctx$: for every external step $\tm\Rew{a}\tmtwo$ with $a \in \set{\esssym\msym,\esssym\esym}$, each replaced hole in $\strongmctxp\tm$ is a $\Rew{a}$ redex, and 
the firing of the redex gives the expected result $\strongmctxp\tmtwo$---similarly for $\rmctx$, \ifthenelse{\boolean{techreport}}{see \refapp{STRONG-IMPLEMENTATION-app}.}{see Appendix H in the tech report \cite{DBLP:journals/corr/abs-2102-06928}.} 
We avoid on purpose the definition of a parallel step, that would induce a more complex notion of 
implementation---parallelism here is caught more flexibly by the diamond property of $\tovsubs$. For technical reasons, we prove a more general result:
\begin{lemma}[Multi step]
\label{l:multi-step}
\NoteProof{lappendix:multi-step}
 Let $\strongmctx$ be a proper external multi context with $k$ holes and $\set{a_1, \ldots, a_n}\subseteq\set{\esssym\msym,\esssym\esym}$.\\ If $\tm 
\Rew{a_1} \cdots \Rew{a_n} \tmtwo$ then $\strongmctxp\tm\,(\Rew{a_1} \cdots \Rew{a_n})^k\, \strongmctxp\tmtwo $ where the $i$-th sequence of steps has the shape 
$\strongctx_i\ctxholep\tm \Rew{a_1} \cdots \Rew{a_n} \strongctx_i\ctxholep\tmtwo$ for an external context $\strongctx_i$, for every $i \in 
\set{1,\ldots, k}$.
\end{lemma}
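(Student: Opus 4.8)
The plan is to avoid reasoning by induction on the derivation $\tm \Rew{a_1}\cdots\Rew{a_n}\tmtwo$ and instead to process the $k$ holes of $\strongmctx$ one at a time, left to right, each time running a fresh copy of the whole derivation inside a single-hole external context. Two structural ingredients drive this. \emph{(A) Composition}: if $\strongctx$ and $\strongctx'$ are external contexts then $\strongctx\ctxholep{\strongctx'}$ is external, and symmetrically plugging an external context into a rigid context $\ictx$ gives a rigid context. \emph{(B) Specialization}: filling all holes but one of a proper external (resp.\ rigid) multi context with arbitrary terms yields a single-hole external context $\strongctx$ (resp.\ rigid context $\ictx$).

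I would first dispatch (A) by a straightforward mutual induction on $\strongctx$ and $\ictx$ along their grammars; for instance $\tm\esub\var\ictx_0$ plugged with $\strongctx'$ becomes $\tm\esub\var(\ictx_0\ctxholep{\strongctx'})$, rigid by the inductive hypothesis and hence external. Composition yields the \emph{lifting} property I actually use: since $a_j\in\set{\esssym\msym,\esssym\esym}$, a step $s\Rew{a_j}s'$ factors through an external context as $s=\strongctx'\ctxholep{o}\Rew{a_j}\strongctx'\ctxholep{o'}=s'$ with $o$ an open redex, so for any external $\strongctx$ we get $\strongctx\ctxholep{s}=(\strongctx\ctxholep{\strongctx'})\ctxholep{o}\Rew{a_j}(\strongctx\ctxholep{\strongctx'})\ctxholep{o'}=\strongctx\ctxholep{s'}$, using that $\strongctx\ctxholep{\strongctx'}$ is external by (A).

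The crux is (B), proved by mutual induction on the external/rigid multi-context grammar, crucially using that a hole-free rigid multi context is exactly a pointed term $\ptm$ and a hole-free external multi context is exactly a term $\tm$ (the footnote's observation). The case analysis matches the multi-context productions against the single-hole grammars $\strongctx \grameq \ctxhole \mid \la\var\strongctx \mid \tm\esub\var\ictx \mid \strongctx\esub\var\ptm \mid \ictx$ and $\ictx \grameq \ptm\strongctx \mid \ictx\tm \mid \ictx\esub\var\ptm \mid \ptm\esub\var\ictx$. For example, for $\strongmctx\esub\var\rmctx$: if the surviving hole is in $\rmctx$ then $\strongmctx$ is filled to a term $\tm$ and $\rmctx$ becomes an $\ictx$, matching the production $\tm\esub\var\ictx$; if the surviving hole is in $\strongmctx$ then $\rmctx$ is filled to a pointed term $\ptm$ and $\strongmctx$ becomes a $\strongctx$, matching $\strongctx\esub\var\ptm$; the application and remaining rigid cases are analogous. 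Note that every hole of $\strongmctx$ comes from the $\ctxhole$ production of $\strongmctx$ (the grammar of $\rmctx$ has no hole production), so holes sit only inside external sub-contexts, which is exactly what makes the matching go through.

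With (A) and (B) in hand I would assemble the statement. For each $i\in\set{1,\dots,k}$, let $\strongctx_i$ be the single-hole external context obtained from $\strongmctx$ by filling holes $1,\dots,i-1$ with $\tmtwo$ and holes $i+1,\dots,k$ with $\tm$, external by (B). Running the given derivation inside $\strongctx_i$ via the lifting property produces a block $\strongctx_i\ctxholep\tm \Rew{a_1}\cdots\Rew{a_n}\strongctx_i\ctxholep\tmtwo$ of exactly $n$ steps, of the required shape. These blocks chain: $\strongctx_i\ctxholep\tmtwo$ has holes $1,\dots,i$ filled with $\tmtwo$ and the rest with $\tm$, which is precisely $\strongctx_{i+1}\ctxholep\tm$; moreover $\strongctx_1\ctxholep\tm=\strongmctxp\tm$ and $\strongctx_k\ctxholep\tmtwo=\strongmctxp\tmtwo$. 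Concatenating the $k$ blocks gives $\strongmctxp\tm\,(\Rew{a_1}\cdots\Rew{a_n})^k\,\strongmctxp\tmtwo$, as claimed. The main obstacle is (B): one must set up the mutual induction so as to distinguish the hole-free case (which must land in $\tm$ for external, in $\ptm$ for rigid) from the one-surviving-hole case, and check that the freedom in which terms fill the other holes never breaks externality; everything else is routine bookkeeping.
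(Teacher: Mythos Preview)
Your approach is correct and genuinely different from the paper's. The paper proves the lemma (simultaneously with its rigid counterpart) by a direct mutual structural induction on $\strongmctx$ and $\rmctx$: for each constructor, it applies the inductive hypothesis to the sub-multi-contexts, uses \reflemma{mctx-plugging} to turn the hole-free sub-multi-context into a term or rigid term, and then assembles the $k_1+k_2$ single-hole contexts by combining the ones from the \ih\ with the plugged other side. Your proof instead factors through two reusable lemmas --- (A) closure of external/rigid single-hole contexts under composition, and (B) a specialization lemma saying that filling all holes but one of a proper external (resp.\ rigid) multi context yields a single-hole external (resp.\ rigid) context --- and then performs a clean ``one hole at a time'' chaining argument. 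Your route is more modular and makes the contexts $\strongctx_i$ explicit upfront; the paper's route is more direct and avoids stating (A) and (B) separately, at the price of interleaving the reduction-lifting with the structural decomposition in a single larger induction. One minor point worth making explicit in your write-up: your (B) and the chaining step rely on \emph{heterogeneous} plugging (different holes filled with possibly different terms), so you need the obvious generalization of \reflemma{mctx-plugging} to that setting; the proof is identical, but the paper only states the homogeneous version.
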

Read back then extends to $\kctx$ via the following clauses (reading back  ES without the hole as before), obtaining a multi 
context: $\unf{\ctxhole} \defeq \ctxhole$ and $\unf{(\env\esub\var{\la\vartwo\kctx})}  \defeq  \unf\env 
\isub\var{\la\vartwo\unf\kctx}$. \ifthenelse{\boolean{techreport}}{Moreover, it factors in a way similar to the open case (\reflemma{read-back-decomposition-d}).
\begin{lemma}[Modular read-back]
\label{l:read-back-decomposition}
\NoteProof{lappendix:read-back-decomposition}
$\unf{\kctxp\env}=\unf\kctx\ctxholep{\unf\env\indsub{\wstenv\kctx}}$.
\end{lemma}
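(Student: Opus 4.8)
The plan is to prove the identity by structural induction on the machine context $\kctx$, using the open modular read-back (\reflemma{read-back-decomposition-d}) as the engine at each level, together with two auxiliary facts that I would establish first. The first is a \emph{composition law} for induced substitutions: for all environments $\env$ and $\envtwo$ one has $\indsub{\env\envtwo} = \indsub\env\,\indsub\envtwo$, proved by a routine induction on $\envtwo$ peeling off its rightmost explicit substitution and using associativity of substitution composition. The second is that meta-level substitution commutes with multi-context plugging, $(\mctx\ctxholep\tm)\indsub\envtwo = (\mctx\indsub\envtwo)\ctxholep{\tm\indsub\envtwo}$, which follows by iterating the single-substitution case (holes are inert under substitution and the substituted values are hole-free). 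Throughout, well-namedness guarantees that the binder of an abstraction explicit substitution is fresh for $\indsub\envtwo$, so the substitutions at play act on disjoint variables and commute without capture.

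For the base case $\kctx = \ctxhole\envtwo$, plugging gives $\kctxp\env = \env\envtwo$ and $\wstenv\kctx = \envtwo$. Unfolding the read-back of $\kctx$ by the same computation as \reflemma{read-back-decomposition-d}, now with a bare hole on the left, yields $\unf{\ctxhole\envtwo} = \indenv\envtwo\ctxholep{\ctxhole\indsub\envtwo} = \indenv\envtwo$, since substituting into a bare hole does nothing. Then \reflemma{read-back-decomposition-d} gives $\unf{(\env\envtwo)} = \indenv\envtwo\ctxholep{\unf\env\indsub\envtwo}$, which is exactly $\unf\kctx\ctxholep{\unf\env\indsub{\wstenv\kctx}}$.

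For the inductive case $\kctx = \env'\esub\var{\la\vartwo\kctx'}\envtwo$, note $\kctxp\env = \env'\esub\var{\la\vartwo(\kctx'\ctxholep\env)}\envtwo$ and $\wstenv\kctx = \wstenv{\kctx'}\envtwo$. Factoring at the trailing $\envtwo$ (the unfolding of \reflemma{read-back-decomposition-d}, which goes through verbatim when the left part is hole-containing, as the explicit substitutions of $\envtwo$ are read back independently of the hole), and reading back the abstraction explicit substitution by the clause $\unf{(\env'\esub\var{\la\vartwo\kctx'})} = \unf{\env'}\isub\var{\la\vartwo\unf{\kctx'}}$ (licit because an abstraction is a value, so the substitution branch of read-back fires), I obtain
\[
\unf\kctx = \indenv\envtwo\ctxholep{\bigl(\unf{\env'}\isub\var{\la\vartwo\unf{\kctx'}}\bigr)\indsub\envtwo}.
\]
Applying the same factorization to $\kctxp\env$ and then the induction hypothesis to $\kctx'$ gives
\[
\unf{\kctxp\env} = \indenv\envtwo\ctxholep{\bigl(\unf{\env'}\isub\var{\la\vartwo(\unf{\kctx'}\ctxholep{\unf\env\indsub{\wstenv{\kctx'}}})}\bigr)\indsub\envtwo}.
\]
The two then coincide: by the commutation of substitution with plugging, pushing the outer $\indsub\envtwo$ through the plug in the second display turns the content $\unf\env\indsub{\wstenv{\kctx'}}$ into $\unf\env\indsub{\wstenv{\kctx'}}\indsub\envtwo = \unf\env\indsub{\wstenv{\kctx'}\envtwo} = \unf\env\indsub{\wstenv\kctx}$ by the composition law, while leaving the context part $\unf{\kctx'}\indsub\envtwo$ identical to the one in $\unf\kctx$.

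The main obstacle is precisely this final reconciliation, and it is the footprint of deep implosive sharing: because $\var$ may occur many times in $\unf{\env'}$, the substitution $\isub\var{\la\vartwo\unf{\kctx'}}$ duplicates the hole of $\unf{\kctx'}$, so $\unf\kctx$ is a genuine multi context. One must therefore show that the single syntactic plug $\kctx'\ctxholep\env$ performed \emph{inside} $\kctxp\env$ matches the \emph{simultaneous} plug of $\unf\env\indsub{\wstenv\kctx}$ into all the duplicated holes of $\unf\kctx$. The delicate bookkeeping is that the induced substitution seen under the abstraction in the induction hypothesis, $\indsub{\wstenv{\kctx'}}$, is completed to the full $\indsub{\wstenv\kctx}$ exactly by the outer $\indsub\envtwo$ applied after plugging, and that $\isub\var{\cdot}$, $\indsub\envtwo$, and the binder $\la\vartwo$ never capture; here the two auxiliary laws and well-namedness do all the work.
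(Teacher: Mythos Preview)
Your proposal is correct and follows essentially the same approach as the paper: structural induction on $\kctx$, with the base case handled by \reflemma{read-back-decomposition-d} and the inductive case by factoring out the trailing environment $\envtwo$, applying the induction hypothesis under the abstraction, and then using the composition law $\indsub{\wstenv{\kctx'}\envtwo} = \indsub{\wstenv{\kctx'}}\indsub{\envtwo}$ together with the commutation of substitution and plugging. The paper packages your two auxiliary facts slightly differently (the composition law is a standalone lemma, and the factorizations $\unf{(\ctxhole\envtwo)} = \indenv\envtwo$ and $\unf{(\env'\esub\var{\la\vartwo\kctx'}\envtwo)} = \indenv\envtwo\ctxholep{\unf{\env'}\isub\var{\la\vartwo\unf{\kctx'}}\indsub\envtwo}$ are proved as preliminary sub-points), but the chain of equalities is the same.
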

}{}
\paragraph*{Invariants} Let $\kctx$ be the machine context of a reachable state $\state$. Proving that $\unf\kctx$ is a 
proper external multi context $\strongmctx$ requires delicate and involved invariants, that build on those for the \OCAM. 
An essential concept is the \emph{frame} of $\kctx$, that isolates its fully evaluated part plus the hole.
\begin{center}$\begin{array}{r\colspace rcl}
\textsc{Frames} & \frame & \grameq & \ctxhole \mid \env\esub\var{\la\vartwo\frame}
\end{array}$\end{center}
 The \emph{frame $\framei\kctx$ of a context $\kctx$} is defined by $\framei{\ctxhole\envtwo}  \defeq  \ctxhole$ and 
$\framei{\env\esub\var{\la\vartwo\kctx}\envtwo}  \defeq \env\esub\var{\la\vartwo\framei\kctx}$.

Proving that $\unf\kctx$ is \emph{proper}, requires an invariant ensuring that $\framei\kctx$ is \emph{garbage-free}, so 
that every variable bound by an ES (but $\varstar$) occurs, implying that $\unf{(\env\esub\var{\la\vartwo\kctx})} = \unf\env 
\isub\var{\la\vartwo\unf\kctx}$ does not erase $\la\vartwo\unf\kctx$ and that $\unf\kctx$ is itself proper. 

Proving that $\unf\kctx$ is \emph{external}, requires a sophisticated \emph{goodness} invariant, building on the 
invariants of the \OCAM. \ifthenelse{\boolean{techreport}}{Roughly, $\unf\kctx$ is given by $\unf{\framei\kctx}$ where one applies the substitution 
$\indsub{\wstenv\kctx}$ and shuffles around the ES in $\indenv{\wstenv\kctx}$. Oversimplifying, goodness says that 
$\unf{\framei\kctx}$ is an external multi context that stays so also after the application of $\indsub{\wstenv\kctx}$ and 
the shuffling of $\indenv{\wstenv\kctx}$.}
{Roughly, $\unf\kctx$ is given by $\unf{\framei\kctx}$ where one applies the meta-level substitutions and the ES given by reading back the part of $\kctx$ not accounted by $\framei\kctx$. Oversimplifying, goodness states that 
$\unf{\framei\kctx}$ is an external multi context that stays so also after the application of these meta-level substitutions and ES.
}

\begin{theorem}[Contextual read-back]
\label{thm:ctx-read-back}
\NoteProof{thmappendix:ctx-read-back}
Let $\state = \env\gensep\kctx$ be a reachable state. Then $\unf\kctx$ is a proper external multi context.
\end{theorem}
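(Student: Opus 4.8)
The plan is to prove \refthm{ctx-read-back} by maintaining two invariants along executions and then reading off \emph{properness} and \emph{externality} separately. On top of the \OCAM invariants (fireball substitution, immediate values, pristine, and inert substitution context), which I inherit and use freely, I would add: (i) a \emph{garbage-free} invariant, stating that in the frame $\framei\kctx$ every ES-bound variable other than $\varstar$ actually occurs in its environment; and (ii) a \emph{goodness} invariant, stating that the frame read-back $\unf{\framei\kctx}$ is an external multi context and stays external after applying the meta-level substitution $\indsub{\wstenv\kctx}$ and shuffling the ES recorded by $\indenv{\wstenv\kctx}$. Since $\unf\kctx$ is obtained from $\unf{\framei\kctx}$ by exactly these two post-processing operations, properness will follow from (i) and externality from (ii).

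First I would show that every reachable state satisfies (i) and (ii), by induction on the length of the execution $\exec\colon\compil\tm\tomachscam^*\state$. The base case is the initial state $\mytr\tm\,\rlsep\,\ctxhole$, whose frame is $\ctxhole$ and whose read-back is the bare hole, trivially garbage-free and external. The inductive step is a case analysis on the nine transitions of \reffig{scam-phases}. The open-phase transitions act on $\env$ and touch $\kctx$ only through $\wstenv\kctx$, so they reduce essentially to the \OCAM analysis; the subtle cases are the five strong-phase transitions. Here $\tomachgc$ deletes precisely an ES whose bound variable does not occur, hence preserves garbage-freeness by construction; $\tomachcthree$ moves a non-value ES into the evaluated environment without disturbing the abstraction spine of the frame; $\tomachcfour$ re-closes a fully evaluated body $\la\vartwo\env$, and one checks that $\la\vartwo\unf\env$ keeps the frame garbage-free and good; and $\tomachcfive$, the implosive transition, enters a shared value under a binder occurring in $\env$, which is where goodness must be reestablished most carefully.

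With the invariants available, properness is a structural induction on $\framei\kctx$. If $\framei\kctx=\ctxhole$ then $\unf{\framei\kctx}=\ctxhole$ has a hole. If $\framei\kctx=\env\esub\var{\la\vartwo\frame'}$ then $\unf{\framei\kctx}=\unf\env\isub\var{\la\vartwo\unf{\frame'}}$; by (i) the variable $\var\neq\varstar$ occurs in $\unf\env$, so the substitution keeps at least one copy of $\la\vartwo\unf{\frame'}$, and by the induction hypothesis $\unf{\frame'}$ already contains a hole. Hence $\unf{\framei\kctx}$ is proper, and since $\indsub{\wstenv\kctx}$ only replaces variables of the lower environment while the shuffling of $\indenv{\wstenv\kctx}$ never erases subterms, the hole survives and $\unf\kctx$ is proper. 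For externality I would argue in three layers, all underwritten by goodness: $\unf{\framei\kctx}$ is external because each frame node contributes the $\la\var\strongmctx$ production and the occurrences of $\var$ land in external or rigid positions (pristine invariant), so plugging an external multi context there stays external; applying $\indsub{\wstenv\kctx}$ preserves externality because the substituted terms are fireballs and external/rigid multi contexts are stable under substitution of inert terms in rigid slots and of values outside applied and substituted positions; and the shuffling of the inert substitution context $\indenv{\wstenv\kctx}$ only reorders ES as the cases of $\eqstruct$ allow, under which externality is closed.

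The main obstacle is the goodness invariant, specifically its preservation under $\tomachcfive$. This transition implements deep implosive sharing: it enters a shared value whose binder may occur many times, so the read-back duplicates the hole into several external positions simultaneously, and one must certify that \emph{all} of them are external at once---this is exactly why a multi-context formulation is forced and why \reflemma{multi-step} is needed to fire the duplicated redexes coherently. The heaviest technical point is showing that externality, a non-local and mutually inductive property (external versus rigid), commutes with both the meta-level substitution $\indsub{\wstenv\kctx}$ and the ES reordering $\indenv{\wstenv\kctx}$; getting these commutations right, while keeping the garbage-free and pristine invariants synchronized, is the crux of the argument.
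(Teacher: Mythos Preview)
Your approach is essentially the paper's: maintain invariants along executions, derive properness from a garbage-free frame and externality from a goodness property about the frame and $\wstenv\kctx$, with $\tomachcfive$ correctly identified as the hard case. A few technical points where your sketch diverges from what the paper actually needs are worth flagging. First, the paper requires an additional \emph{well-crumbled} invariant (crumbling variables bound to non-values occur in the read-back of their left environment) to push garbage-freeness through $\tomachcthree$; without it you cannot conclude that the newly absorbed $\esub\var\mol$ keeps $\env$ garbage-free. Second, the paper's goodness is stronger than ``the frame read-back stays external under $\indsub{\wstenv\kctx}$'': it introduces a precise \emph{compatibility} condition (variables mapped to values by the substitution may occur only as arguments) and requires it \emph{hereditarily} for every prefix of the frame (the ``well-framed'' property); moreover, the paper bakes ``$\unf\kctx$ is fine'' directly into the goodness invariant rather than deriving it afterwards, which is what makes the inductive step for $\tomachcfive$ go through. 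Third, your last step about $\indenv{\wstenv\kctx}$ is off: there is no $\eqstruct$-reordering involved. The paper's argument is that $\indenv{\wstenv\kctx}$ is an \emph{inert} substitution context, hence already a fine multi context, and composition of fine multi contexts is fine; externality is not obtained via closure under $\eqstruct$.
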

Obtaining \refthm{ctx-read-back} is the difficult and involved step. Then, $\beta$-transitions are smoothly projected on $\tovsubs$ steps, 
and the implementation theorem easily follows.

\begin{theorem}[\SCAM{} implementation]
\label{th:machine-final}
\NoteProof{thmappendix:machine-final}~
\begin{enumerate}
\item \label{p:machine-final-projection-ms}\NoteProof{thmappendix:projection-ms} \emph{Relaxed $\beta$-projection}: let $\state$ be 
a reachable state. If $\state \tomachbv \statetwo$ then $\unf\state (\toms\toes)^+\unf\statetwo$, and if 
$\state \tomachbi \statetwo$ then $\unf\state \toms^+\eqstruct \unf\statetwo$.

\item \label{p:machine-final-impl-scam}  \NoteProof{thmappendix:impl-scam}
\emph{Strong implementation}: the \SCAM is a relaxed implementation of the external strategy $(\tovsubs,\eqstruct)$.
\end{enumerate}
\end{theorem}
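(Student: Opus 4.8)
The plan is to obtain part~(2) as an instance of the abstract implementation theorem (\refthm{abs-impl}): it suffices to check that the \SCAM{} and $(\tovsubs,\eqstruct)$ form a relaxed implementation system, namely the five clauses of Definition~\ref{def:implementation}. Part~(1) is exactly the one non-routine clause of that system---relaxed $\beta$-projection---so I would prove it first, in the refined step-kind form stated, and then dispatch the other four clauses, which amount to bookkeeping over the nine transitions. The engine behind part~(1) is the contextual read-back theorem (\refthm{ctx-read-back}) combined with the multi step lemma (\reflemma{multi-step}).

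\textbf{Relaxed $\beta$-projection (part~(1)).} Let $\state = \env\esub\var{\vartwo\varthree}\,\gensep\,\kctx$ be a reachable state firing a $\beta$-transition. By \refthm{ctx-read-back} the read-back $\strongmctx \defeq \unf\kctx$ is a proper external multi context, say with $k \geq 1$ holes. By modular read-back (\reflemma{read-back-decomposition}) we get $\unf\state = \strongmctxp{\unf{(\env\esub\var{\vartwo\varthree})}\indsub{\wstenv\kctx}}$; carrying over the \OCAM{} invariants (pristineness, fireball substitution, immediate values), the term plugged in the holes unfolds to $(\openctx\indsub{\wstenv\kctx})\ctxholep{\vartwo\indsub{\wstenv\kctx}\,\varthree\indsub{\wstenv\kctx}}$ with $\vartwo\indsub{\wstenv\kctx}$ a value and $\openctx\indsub{\wstenv\kctx}$ external, so it performs a single external multiplicative step---followed by the created exponential step exactly in the $\tomachbv$ case, where $\wstenv\kctx(\varthree)$ is a value. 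Applying \reflemma{multi-step} to this one-step ($\tomachbi$) or two-step ($\tomachbv$) external sequence lifts it through all $k$ holes of $\strongmctx$, giving $\unf\state \toms^{k} \unf\statetwo$ respectively $\unf\state (\toms\toes)^{k} \unf\statetwo$. Since $k \geq 1$, these are $\toms^{+}$ and $(\toms\toes)^{+}$, the latter already the stated $\tomachbv$ conclusion. In the $\tomachbi$ case the machine files the created \ES $\esub\varfour\varthree$ on the right of the hole inside $\kctx$, so machine and calculus results differ only by the position of an \ES; since $\eqstruct$ is a congruence this repositioning is absorbed, yielding $\unf\state \toms^{+}\eqstruct \unf\statetwo$. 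Both cases have $\sizem\deriv \geq 1$, so part~(1) doubles as clause~(1) of Definition~\ref{def:implementation}.

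\textbf{The remaining clauses.} For overhead transparency I would verify $\unf\state \eqstruct \unf\statetwo$ transition by transition. The search and phase-swapping transitions $\tomachcone,\tomachctwo,\tomachcthree,\tomachcfour,\tomachcfive$ all leave the represented environment $\kctxp\env$ literally unchanged---they only relocate the active point, the surviving \ES, and the separator---so their read-back is equal; $\tomachsub$ performs a renaming invisible to read-back; and $\tomachgc$ erases an \ES $\esub\var\val$ whose bound variable, by the well-named invariant, occurs neither in $\env$ nor in $\kctx$, so its value substitution in the read-back is vacuous and the read-back is again unchanged. Overhead termination follows from a well-founded measure on states that decreases at every overhead transition (the right-to-left/left-to-right search discipline forbids the pointer from oscillating). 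For the halt property I would characterise the final states and show, via \reflemma{read-back-decomposition} and the fireball and inertness invariants, that they read back to strong fireballs, which are $\tovsub$- and hence $\tovsubs$-normal by \reflemma{harmony}. Lax determinism holds because $\tovsubs$ is diamond (\refprop{external-properties}) while the side conditions ``if none of the other rules is applicable'' make $\tomachscam$ deterministic. With the five clauses in place, \refthm{abs-impl} delivers part~(2).

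\textbf{Main obstacle.} The real difficulty is entirely upstream, in \refthm{ctx-read-back}, which I take as granted: deep implosive sharing turns $\unf\kctx$ into a multi context with possibly many holes, and proving it \emph{proper} (garbage-freeness of the frame) and \emph{external} (the layered goodness invariant) is the technical heart of the development. Once that theorem is available, part~(1) is the clean combination of modular read-back and \reflemma{multi-step} above, and part~(2) is a direct instantiation of \refthm{abs-impl}; the only mild care required is the $\eqstruct$-repositioning of the \ES created by $\tomachbi$ and the per-transition check of overhead transparency.
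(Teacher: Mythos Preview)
Your proposal is correct and follows essentially the same route as the paper: part~(1) combines the contextual read-back theorem with the multi-step lemma and the pristine/goodness invariants to lift the open step through the proper external multi context, and part~(2) instantiates the abstract implementation theorem after verifying the five clauses. The only minor imprecisions are that the $\eqstruct$ step in the $\tomachbi$ case relies on the specific ES-commutation axioms of $\eqstruct$ (the paper's dedicated lemma that open and substitution contexts commute up to $\eqstruct$), not merely on $\eqstruct$ being a congruence, and that the paper obtains overhead termination from the bilinear bound on overhead transitions (\refcorollary{bound-com}) rather than from a single strictly decreasing measure.
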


%

\section{Complexity Analysis of the \SCAM{}}
\label{SECT:COMPLEXITY}

Here we prove that the \SCAM{} can be implemented within a bilinear overhead. The proof is simple and mostly follows a 
standard schema: we bound the number of overhead steps, then bound the cost of single steps, and end by combining the 
two. It all 
rests on the \emph{size invariant} below, that is a quantitative form of the subterm invariant needed for all 
complexity analyses of abstract machines.

The size $\size\tm$ of the initial term $\tm$ is one of the two parameters of the analysis, linearly preserved by 
compilation $\compil\tm = \mytr\tm \,\rlsep\, \ctxhole$. 
\begin{center}$\begin{array}{c@{\hspace{.4cm}} c@{\hspace{.4cm}} c}
		\textsc{$\l$-terms} & \textsc{Bites} & \textsc{Envs}
		\\
		\size{\var} \defeq 1 
		& 
		\size\var \defeq 1 
		&
		  \size\epsilon \defeq 0
		\\
		\size{\tm\tmtwo} \defeq \size{\tm} + \size{\tmtwo} + 1
		& 
		\size{\var\vartwo} \defeq 1
 		&
		\size{\env\esub\var\mol} \defeq 
		\\
		\size{\la{\var}{\tm}} \defeq \size{\tm} + 1
		& 
  		\size{\la\var\env} \defeq 1 + \size\env
		&
		\ \ \ \ \ \  1 + \size\env + \size\mol
			\end{array}$\end{center}
\begin{lemma}[Linear compilation] 
   \label{l:bound-measure-size} 
   \NoteProof{lappendix:bound-measure-size}
   Let $\tm$ be a \lat{}. Then $\size{\mytr\tm} \leq 2\size\tm$.
\end{lemma}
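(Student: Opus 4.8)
The plan is to prove the statement by structural induction on $\tm$, but a direct induction on $\Crumb\cdot$ (recall $\mytr\tm = \Crumb\tm$) cannot close: the abstraction and application clauses of $\Crumb\cdot$ are defined through the auxiliary function $\CrumbAux\cdot$, so the inductive hypothesis must also control the environments produced by $\CrumbAux\cdot$. I would therefore strengthen the goal and prove, \emph{simultaneously} by induction on $\tm$, the two bounds: (a) $\size{\Crumb\tm} \leq 2\size\tm$, and (b) if $\CrumbAux\tm = (\var,\env)$ then $\size\env \leq 2\size\tm$. The mutual recursion is well founded since every recursive call in the clauses of $\Crumb\cdot$ and $\CrumbAux\cdot$ is on a strict subterm, so in each case both (a) and (b) are available on the immediate subterms.

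Before the case analysis I would record one trivial preliminary, namely that environment size is additive under concatenation, $\size{\env\,\envtwo} = \size\env + \size\envtwo$, proved by a one-line induction on $\envtwo$ from $\size{\env\esub\var\mol} = 1 + \size\env + \size\mol$. With this in hand the cases are pure bookkeeping. For (b): $\CrumbAux\var = (\var,\emptyenv)$ gives $\size\emptyenv = 0 \leq 2 = 2\size\var$; $\CrumbAux{\la\var\tm}$ produces the environment $\esub\varthree{\la\var\Crumb\tm}$ of size $2 + \size{\Crumb\tm}$, which by (a)-IH is $\leq 2 + 2\size\tm = 2\size{\la\var\tm}$; and $\CrumbAux{\tm\tmtwo}$ produces $\esub\varthree{\var\vartwo}\env\envtwo$ of size $2 + \size\env + \size\envtwo$, which by (b)-IH on both subterms is $\leq 2 + 2\size\tm + 2\size\tmtwo = 2\size{\tm\tmtwo}$. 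For (a), the variable and abstraction cases mirror these ($\Crumb\var = \esub\varstar\var$ has size $2 = 2\size\var$, and $\Crumb{\la\var\tm} = \esub\varstar{\la\var\Crumb\tm}$ has size $2 + \size{\Crumb\tm} \leq 2\size{\la\var\tm}$ by (a)-IH), while the application case $\Crumb{\tm\tmtwo} = \esub\varstar{\var\vartwo}\env\envtwo$ has size $2 + \size\env + \size\envtwo \leq 2\size{\tm\tmtwo}$ by (b)-IH.

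There is no deep obstacle here; the only real content is choosing the correct companion bound for $\CrumbAux\cdot$ and checking that the constants line up. The point worth highlighting is \emph{why} the factor is exactly $2$: each syntactic node of $\tm$ contributes a $+1$ to $\size\tm$, and crumbling turns each node into a fresh explicit substitution (a $+1$ for the ES wrapper plus a $+1$ for the inert or application bite it binds), so every node costs $2$ in $\size{\Crumb\tm}$ against $1$ in $\size\tm$. This also shows the bound is tight, as witnessed already by $\Crumb\var$. The slack introduced in the variable clause of (b) (where $0 \leq 2\size\var$ has spare room) is precisely what the $+2$ overhead per application and abstraction node consumes, which is what makes the inequalities in the inductive cases balance exactly.
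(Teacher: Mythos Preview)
Your proposal is correct and follows essentially the same approach as the paper: a mutual induction on $\tm$ proving simultaneously $\size{\Crumb\tm} \leq 2\size\tm$ and $\size\env \leq 2\size\tm$ for $\CrumbAux\tm = (\var,\env)$, with the same case analysis and the same arithmetic. Your additional remark explaining why the constant is exactly $2$ is a nice touch not present in the paper's proof.
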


The size invariant provides a size bound on the values that may be duplicated by $\beta$-transitions, which are the only 
ones making the size of states grow.
\begin{lemma}[Size invariant]
 \label{p:invariants-quantitative}
 \NoteProof{thm:size-invariant}
 Let $\state=\env \gensep\kctx$ be a state reachable from $\state_0 = \env_0 \,\rlsep\, \ctxhole$. 
Then $\size\val \leq \size{\env_0}$ for every value $\val$ that occurs in $\env$ or $\wstenv\kctx$ if ${\gensep} = 
{\rlsep}$\, , or in $\wstenv\kctx$ if ${\gensep} = {\lrsep}$.
\end{lemma}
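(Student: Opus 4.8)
The plan is to prove the statement by induction on the length $k$ of the execution $\exec \colon \state_0 \tomachscam^k \state$ reaching $\state = \env\gensep\kctx$. The structural fact doing the real work is \emph{size monotonicity under nesting}: if a value $\valtwo$ occurs (possibly deeply) inside an environment $\env'$ or inside a value $\val'$, then $\size\valtwo \leq \size{\env'}$, respectively $\size\valtwo \leq \size{\val'}$. This is immediate from the definition of $\size\cdot$, since every syntactic constructor contributes non-negatively and a nested value is always counted; in particular for a value $\la\var{\env'}$ one has $\size{\la\var{\env'}} = 1 + \size{\env'}$, so proper sub-values are strictly smaller. I shall also use that $\alpha$-renaming, as well as the variable-for-variable substitutions $\isub\varfour\varthree$ and $\isub\var\vartwo$ performed by the machine, preserve the size of every value, since $\size\cdot$ does not depend on names and sends variables to $1$.

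For the base case $\state = \state_0 = \env_0 \,\rlsep\, \ctxhole$ we have $\env = \env_0$ and $\wstenv{\ctxhole} = \emptyenv$, so the only values to bound are those occurring in $\env_0$, and size monotonicity yields $\size\val \leq \size{\env_0}$ directly.

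For the inductive step I would assume the invariant on $\state$ and inspect the last transition $\state \tomachscam \statetwo$, splitting into the nine cases of \reffig{scam-phases}. The overhead transitions never create values: $\tomachsub$ only renames; $\tomachcone$, $\tomachctwo$, $\tomachcthree$, $\tomachcfour$, $\tomachcfive$ only relocate already-present ES (and, for $\tomachcfour$ and $\tomachcfive$, re-associate a body with its abstraction); and $\tomachgc$ deletes an ES. Hence the values occurring in $\statetwo$, up to renaming and thus up to size, are among those of $\state$. The only point requiring care is the \emph{bookkeeping of regions}: one must check that every value the invariant must bound for $\statetwo$ — those in $\env$ and $\wstenv\kctx$ when $\gensep = {\rlsep}$, only those in $\wstenv\kctx$ when $\gensep = {\lrsep}$ — already lay in a region bounded by the inductive hypothesis on $\state$. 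This is verified from the recursive definition of $\wstenv\cdot$, which collects exactly the after-hole environments of a machine context. For instance $\tomachcone$ moves a value from $\env$ into $\wstenv{\kctxp{\ctxhole\esub\var\mol}}$, both bounded in $\rlsep$-states; and $\tomachcfive$, switching to $\rlsep$, makes the new left environment the body $\envtwo$ of the entered value $\la\vartwo\envtwo \in \wstenv\kctx$, whose nested values are bounded by size monotonicity from the bound on $\la\vartwo\envtwo$, while placing the old $\env$ \emph{before} an abstraction in the frame, where by definition of $\wstenv\cdot$ it does not contribute to the new $\wstenv\kctx$.

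The genuinely interesting cases are the $\beta$-transitions $\tomachbv$ and $\tomachbi$, the only ones that make states grow. Both copy the value $\val \defeq \wstenv\kctx(\vartwo)$, rename it to $\la\varfour(\esub\varstar\mol\envtwo)$, and concatenate its body $\esub\var\mol\envtwo$ — after the renaming $\isub\varfour\varthree$ in $\tomachbv$ — to $\env$. Every value appearing in this new material is nested in $\val$, so by size monotonicity its size is at most $\size\val$, which the inductive hypothesis bounds by $\size{\env_0}$ because $\val$ occurs in $\wstenv\kctx$ of the $\rlsep$-state $\state$; renaming leaves sizes unchanged. Thus no value exceeding $\size{\env_0}$ is introduced, closing the induction. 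The main obstacle I expect is not any single case but the uniform region-bookkeeping across the phase-switching transitions $\tomachctwo$, $\tomachcfour$, and $\tomachcfive$: one must track precisely, through the definition of $\wstenv\cdot$, how values migrate between the left environment and the induced environment of the context as the separator flips, so that the weaker demand on $\lrsep$-states and the stronger demand on $\rlsep$-states always match what the inductive hypothesis already supplies.
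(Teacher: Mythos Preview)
Your proposal is correct and follows essentially the same route as the paper: induction on the length of the execution, with the $\beta$-transitions handled via the observation that the copied body's values are nested in a value already bounded by the inductive hypothesis, and the overhead transitions handled by region-bookkeeping through the identities $\wstenv{\kctxp{\ctxhole\esub\var\mol}} = \esub\var\mol\wstenv\kctx$ and $\wstenv{\kctxp{\env\esub\var{\la\vartwo\ctxhole}}} = \wstenv\kctx$. The paper's proof is organized in exactly this way, and your identification of the phase-switching transitions as the ones needing the most careful tracking matches the paper's case analysis.
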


\paragraph*{Number of Overhead Transitions}  
We bound the number of overhead transitions in a modular way with respect to the two phases of the machine.
First, a global analysis shows that the number of transitions of all strong phases is bounded by the number of 
transitions of all open phases:

\begin{lemma}[Open phases bound strong phases]
  \label{l:strong-bound-open}
  \NoteProof{lappendix:strong-bound-open}
  Let $\exec\colon\state_0 \tomach{} \state$ an execution of the \SCAM{}. Then:
  $\size\exec_{\admsym_2} + \size\exec_{\admsym_3} + \size\exec_{\gc} + \size\exec_{\admsym_4} + \size\exec_{\admsym_5} 
\leq \size\exec_{\betain} + 4\size\exec_{\admsym_1} + 1$.
\end{lemma}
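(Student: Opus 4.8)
The plan is to establish three independent inequalities relating the strong transitions $\admsym_2,\admsym_3,\gc,\admsym_4,\admsym_5$ to the open search transition $\tomachcone$ (counted by $\size\exec_{\admsym_1}$) and to the inert $\beta$-transition $\tomachbi$ (counted by $\size\exec_{\betain}$), and then to combine them arithmetically; the coefficient $4$ in the statement will absorb the slack of a final crude sum.

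\emph{Phase alternation.} First I would observe that $\tomachctwo$ is the \emph{only} transition switching the separator from $\rlsep$ to $\lrsep$, and $\tomachcfive$ the only one switching $\lrsep$ back to $\rlsep$ (every other transition preserves the separator). Since compilation produces $\mytr\tm\,\rlsep\,\ctxhole$, every execution starts in the open phase, so these two kinds of switch strictly alternate, beginning with an $\admsym_2$ and ending either with an $\admsym_5$ (if $\state$ lies in an open phase) or with an $\admsym_2$ (if in a strong phase). Hence $\size\exec_{\admsym_2}\leq\size\exec_{\admsym_5}+1$.

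\emph{Frame depth.} Next I would track the number of abstraction layers of the frame $\framei\kctx$. A direct inspection of \reffig{scam-phases} shows that this number is incremented by exactly one at each $\tomachcfive$ step (which installs a new layer $\esub\var{\la\vartwo\ctxhole}$ in the frame) and decremented by exactly one at each $\tomachcfour$ step (which closes the innermost such layer), while all other transitions leave it unchanged. As it starts at $0$ and stays non-negative, after any prefix the number of $\admsym_4$ steps never exceeds the number of $\admsym_5$ steps, so $\size\exec_{\admsym_4}\leq\size\exec_{\admsym_5}$.

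\emph{ES accounting.} The core is bounding the \emph{consuming} strong transitions $\admsym_3,\gc,\admsym_5$, each acting on the ES $\esub\var\mol$ immediately to the right of the hole of $\kctx$: $\tomachcthree$ moves it to the evaluated environment when $\mol$ is inert, $\tomachgc$ erases it when it is a garbage value, and $\tomachcfive$ enters it when it is a live value. The key claim is that each ES is consumed \emph{at most once}: once $\tomachcthree$ or $\tomachcfour$ has deposited it into the evaluated environment, the next $\tomachcfive$ step pushes that whole environment into the \emph{frame}, where later $\admsym_4/\admsym_5$ steps may carry it up and down across abstraction layers but never re-expose it at the hole, while the fresh open phase opened by $\tomachcfive$ rebuilds the active context only from the entered value body. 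Since the only transitions inserting an ES to the right of the hole are $\admsym_1$ (moving one ES there from the left component) and $\betain$ (creating the inert ES $\esub\varfour\varthree$), and the initial context is empty, every consumed ES is charged to a distinct $\admsym_1$ or $\betain$ step. Refining by the shape of $\mol$ — value ES are consumed only by $\gc$ or $\admsym_5$ and can be inserted only by an $\admsym_1$ step moving a value, whereas inert ES are consumed only by $\admsym_3$ and come from an $\admsym_1$ step moving an inert bite or from a $\betain$ step — and writing $\size\exec_{\admsym_1}=a+b$, with $a$ (resp. $b$) counting $\admsym_1$ steps moving a value (resp. an inert bite), yields
\[\size\exec_{\gc}+\size\exec_{\admsym_5}\leq a \qquad\text{and}\qquad \size\exec_{\admsym_3}\leq b+\size\exec_{\betain}.\]
Summing all five counts and using the first two inequalities gives $\size\exec_{\admsym_2}+\size\exec_{\admsym_3}+\size\exec_{\gc}+\size\exec_{\admsym_4}+\size\exec_{\admsym_5}\leq 3\size\exec_{\admsym_5}+\size\exec_{\gc}+\size\exec_{\admsym_3}+1$; then $\size\exec_{\admsym_5}\leq a$ and $\size\exec_{\admsym_5}+\size\exec_{\gc}\leq a$ give $3\size\exec_{\admsym_5}+\size\exec_{\gc}\leq 3a$, and with $\size\exec_{\admsym_3}\leq b+\size\exec_{\betain}$ the bound becomes $3a+b+\size\exec_{\betain}+1\leq 4\size\exec_{\admsym_1}+\size\exec_{\betain}+1$, using $3a+b\leq 4(a+b)$. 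The main obstacle is precisely the \emph{consumed-at-most-once} claim: it is the exact point where deep implosive sharing could a priori make an ES be revisited, and ruling this out relies on the frame invariants underlying \refthm{ctx-read-back} (notably that $\framei\kctx$ is garbage-free and that evaluated material migrates into the frame rather than back to the active context); making it rigorous is cleanest by charging each $\admsym_3/\gc/\admsym_5$ step to the insertion event of the ES it consumes and checking that this charging is injective and type-preserving.
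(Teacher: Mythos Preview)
Your proof is correct and follows the same three-ingredient structure as the paper's: phase alternation ($\size\exec_{\admsym_2}\leq\size\exec_{\admsym_5}+1$), frame-depth counting ($\size\exec_{\admsym_4}\leq\size\exec_{\admsym_5}$), and push/pop accounting on the ES adjacent to the hole. The paper does not split $\admsym_1$ by the type of the moved bite; it simply observes that $\admsym_3,\gc$ pop entries pushed by $\betain,\admsym_1$ (giving $\size\exec_{\admsym_3}+\size\exec_{\gc}\leq\size\exec_{\betain}+\size\exec_{\admsym_1}$) and that $\admsym_5$ pops value entries pushed only by $\admsym_1$ (giving $\size\exec_{\admsym_5}\leq\size\exec_{\admsym_1}$), then sums.

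Where your write-up differs substantively is in the justification you flag as the ``main obstacle''. The worry that implosive sharing might let an ES be revisited, and the appeal to the garbage-free invariant and \refthm{ctx-read-back}, are unnecessary. The accounting is purely syntactic on $\wstenv\kctx$: by \reflemma{unsenv-prefix}, $\wstenv{\kctxp{\ctxhole\esub\var\mol}}=\esub\var\mol\wstenv\kctx$ and $\wstenv{\kctxp{\env\esub\var{\la\vartwo\ctxhole}}}=\wstenv\kctx$, so an inspection of the nine transitions shows that $\betain$ and $\admsym_1$ push one entry onto the head of $\wstenv\kctx$, that $\admsym_3,\gc,\admsym_5$ pop one entry, and that every other transition (including $\admsym_4$) leaves $\wstenv\kctx$ unchanged. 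In particular no transition ever modifies an entry already on this stack, so each pop is charged to a unique earlier push of the same type --- no semantic invariant is required. With this observation your refined inequalities $\size\exec_{\gc}+\size\exec_{\admsym_5}\leq a$ and $\size\exec_{\admsym_3}\leq b+\size\exec_{\betain}$ (and hence the whole argument) become immediate.
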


To quantify the overhead of the open phase, we introduce a new measure $\tmpMeasure\cdot$ over machine states, tailored to decrease on all open transitions but $\beta$s.
\begin{center}
\begin{tabular}{cc}
$ \arraycolsep=1.4pt\begin{array}{rclrclrcllll}
  \multicolumn{3}{c}{\textsc{Contexts}}
  \\
  \tmpMeasure{\ctxhole} & \defeq & 0 \qquad
  \\
  \tmpMeasure{\env\esub\var{\la\vartwo\kctx}} & \defeq & \tmpMeasure{\kctx} \qquad
  \\\
  \tmpMeasure{\kctx\esub\var\mol} &\defeq & \tmpMeasure{\kctx} + \size{\mol}
\end{array}$
&
$ \arraycolsep=1.4pt\begin{array}{rclrclrcllll}
\multicolumn{3}{c}{\textsc{States}}
\\
  \tmpMeasure{\env ~\rlsep~ \kctx} &\defeq & \size\env + \tmpMeasure\kctx \quad
  \\
  \tmpMeasure{\env ~\lrsep~ \kctx} &\defeq &\tmpMeasure\kctx
\end{array}$
\end{tabular}
\end{center}

\begin{lemma}[Measure during execution]
  \label{l:meas-during-exec}
  \NoteProof{lappendix:meas-during-exec}
  Let $\state$ be a state reachable from $\state_0$, and $\state \tomachhole a \statetwo$.
  \begin{enumerate}
     \item Beta transitions increase the measure: if $a \in\set{\betaabs, \betain}$ then $\tmpMeasure\statetwo \leq 
\tmpMeasure\state + \tmpMeasure{\state_0}$.
     \item Open overhead decreases the measure: if $a \in\set{\rensym, \admsym_1}$ then $\tmpMeasure\statetwo < 
\tmpMeasure\state$.
     \item Strong phase does not increase the measure: if $a \in\set{\admsym_2, \admsym_3, \gc, \admsym_4, \admsym_5 }$ 
then $\tmpMeasure\statetwo \leq \tmpMeasure\state$.
  \end{enumerate}
\end{lemma}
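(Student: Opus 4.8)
The plan is to proceed by a case analysis on the kind $a$ of the step $\state \tomachhole a \statetwo$, after first recording how $\tmpMeasure\cdot$ reacts to the two plugging operations used by the transitions. Since $\state_0 = \env_0 \,\rlsep\, \ctxhole$ and $\tmpMeasure\ctxhole = 0$, we have $\tmpMeasure{\state_0} = \size{\env_0}$. The two bookkeeping facts, both proved by a straightforward induction on $\kctx$ from the defining clauses of $\tmpMeasure\cdot$, are: (i) adding an explicit substitution right after the hole costs exactly its bite, $\tmpMeasure{\kctxp{\ctxhole\esub\var\mol}} = \tmpMeasure\kctx + \size\mol$; and (ii) wrapping the hole in a fresh abstraction level is free, $\tmpMeasure{\kctxp{\env\esub\var{\la\vartwo\ctxhole}}} = \tmpMeasure\kctx$. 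The point of (ii) is that neither the left environment $\env$ nor the abstraction wrapper contributes to the measure, while the only surviving right-environment is the one already counted at the innermost level of $\kctx$. I will also use repeatedly that variable-for-variable renaming ($\isub\var\vartwo$, $\isub\varfour\varthree$, and the $\alpha$-renaming $\rename\cdot$) preserves $\size\cdot$, since all bite sizes are insensitive to variable identities.

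For the open-overhead transitions (part 2) the computation is direct. For $\rensym$ the left environment goes from $\env\esub\var\vartwo$ to $\env\isub\var\vartwo$; since renaming preserves size, the state measure drops by $\size{\env\esub\var\vartwo} - \size{\env\isub\var\vartwo} = 2$. For $\admsym_1$ the substitution $\esub\var\mol$ migrates from the (fully counted) left environment into the hole level, where by fact (i) it now contributes only $\size\mol$ instead of $1 + \size\mol$, so the measure drops by exactly $1$. Both are strict decreases. For the strong-phase transitions (part 3) I use that a $\lrsep$-state ignores its left environment, $\tmpMeasure{\env \,\lrsep\, \kctx} = \tmpMeasure\kctx$: transition $\admsym_2$ only swaps the separator and is measure-preserving; $\admsym_4$ re-unites a finished abstraction body and, by fact (ii), is again measure-preserving; while $\admsym_3$, $\gc$ and $\admsym_5$ strictly decrease the measure (by $\size\mol$, by $\size\val$, and by $1$ respectively), the last one combining (i) and (ii) to compare the after-state $\tmpMeasure{\envtwo \,\rlsep\, \kctxp{\env\esub\var{\la\vartwo\ctxhole}}} = \size\envtwo + \tmpMeasure\kctx$ with the before-state $\tmpMeasure{\env \,\lrsep\, \kctxp{\ctxhole\esub\var{\la\vartwo\envtwo}}} = \tmpMeasure\kctx + 1 + \size\envtwo$.

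The main obstacle is part 1, the two $\beta$-transitions, which are the only ones that can make the measure grow, because they copy the body of a value into the counted left environment. For $\betaabs$ the left environment changes from $\env\esub\var{\vartwo\varthree}$ (bite of size $1$) to $\env\esub\var\mol\envtwo$ up to the renaming $\isub\varfour\varthree$, so the measure increases by $\size\mol + \size\envtwo - 1$; for $\betain$ the same replacement occurs but the created substitution $\esub\varfour\varthree$ lands in the hole level contributing $1$ by fact (i), giving an increase of $\size\mol + \size\envtwo$. To bound these increases I will invoke the size invariant (\reflemma{p:invariants-quantitative}): the copied value is $\rename{(\wstenv\kctx(\vartwo))} = \la\varfour(\esub\varstar\mol\envtwo)$, which occurs, up to renaming and hence up to size, in $\wstenv\kctx$ of the reachable state, so $2 + \size\mol + \size\envtwo = \size{\la\varfour(\esub\varstar\mol\envtwo)} \leq \size{\env_0} = \tmpMeasure{\state_0}$. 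Thus $\size\mol + \size\envtwo \leq \tmpMeasure{\state_0} - 2$, and in both cases the increase is at most $\tmpMeasure{\state_0}$, yielding $\tmpMeasure\statetwo \leq \tmpMeasure\state + \tmpMeasure{\state_0}$ as required. The crux of the whole argument is therefore the single appeal to reachability through the size invariant in this case; everything else is local measure bookkeeping driven by facts (i) and (ii).
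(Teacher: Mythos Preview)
Your proof is correct and follows essentially the same approach as the paper: a case analysis on the transition type, reducing the two $\beta$ cases to the size invariant and handling the others by direct computation using the two plugging identities you isolate as facts (i) and (ii). Your bookkeeping is in fact slightly more precise than the paper's in the $\betain$ case (you correctly account for the extra $+1$ coming from the pushed $\esub\varfour\varthree$, which the paper's write-up drops but which does not affect the final inequality).
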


By combining the two previous lemmas, we obtain a bilinear bound on the total overhead:
 \begin{corollary}[Bilinear number of overhead transitions]
    \label{c:bound-com}
    \NoteProof{cappendix:bound-com}
    Let $\tm$ be a \lat{} and $\exec\colon \compil\tm \tomachscam^* \state$ be a  \SCAM{} execution. Then $\size\exec 
\in \bigo((1 + 
\sizebeta\exec) \size\tm)$.
 \end{corollary}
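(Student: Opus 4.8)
The plan is to combine the three preceding results---linear compilation (\reflemma{bound-measure-size}), the measure lemma (\reflemma{meas-during-exec}), and the bound of strong phases by open phases (\reflemma{strong-bound-open})---by classifying the nine transitions and bounding each class separately. Writing $\sizebeta\exec = \size\exec_{\betaabs} + \size\exec_{\betain}$ for the $\beta$-transitions, the overhead transitions split into the two open-overhead kinds $\rensym, \admsym_1$ and the five strong-overhead kinds $\admsym_2, \admsym_3, \gc, \admsym_4, \admsym_5$, so that $\size\exec = \sizebeta\exec + (\size\exec_{\rensym} + \size\exec_{\admsym_1}) + (\size\exec_{\admsym_2} + \size\exec_{\admsym_3} + \size\exec_{\gc} + \size\exec_{\admsym_4} + \size\exec_{\admsym_5})$. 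It then suffices to show that each of the three groups lies in $\bigo((1+\sizebeta\exec)\size\tm)$.

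First I would bound the open overhead $\size\exec_{\rensym} + \size\exec_{\admsym_1}$ by the amortized argument afforded by \reflemma{meas-during-exec}. The measure $\tmpMeasure\cdot$ is a non-negative integer; it starts at $\tmpMeasure{\state_0}$, strictly decreases (hence by at least $1$) at each $\rensym$ and each $\admsym_1$ step, never increases during the strong phase, and increases by at most $\tmpMeasure{\state_0}$ at each $\beta$-transition. Since the final value is non-negative, the total decrease along $\exec$ is bounded by the initial value plus the total increase, and the only increases come from the $\sizebeta\exec$ many $\beta$-transitions; as the open steps contribute at least one unit of decrease each, $\size\exec_{\rensym} + \size\exec_{\admsym_1} \leq \tmpMeasure{\state_0} + \sizebeta\exec\cdot\tmpMeasure{\state_0} = (1+\sizebeta\exec)\,\tmpMeasure{\state_0}$. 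Because $\state_0 = \mytr\tm\,\rlsep\,\ctxhole$, the definition of the measure gives $\tmpMeasure{\state_0} = \size{\mytr\tm}$, and \reflemma{bound-measure-size} yields $\tmpMeasure{\state_0} \leq 2\size\tm$, whence $\size\exec_{\rensym} + \size\exec_{\admsym_1} \leq 2(1+\sizebeta\exec)\size\tm$.

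Next I would bound the strong overhead via \reflemma{strong-bound-open}, which gives $\size\exec_{\admsym_2} + \size\exec_{\admsym_3} + \size\exec_{\gc} + \size\exec_{\admsym_4} + \size\exec_{\admsym_5} \leq \size\exec_{\betain} + 4\,\size\exec_{\admsym_1} + 1$. Using $\size\exec_{\betain} \leq \sizebeta\exec$ together with the bound on $\size\exec_{\admsym_1}$ just obtained, the right-hand side is at most $\sizebeta\exec + 8(1+\sizebeta\exec)\size\tm + 1$, again in $\bigo((1+\sizebeta\exec)\size\tm)$. Summing the three groups and using $\size\tm \geq 1$ (so that $\sizebeta\exec \leq (1+\sizebeta\exec)\size\tm$ and the additive constant is absorbed), every summand lies in $\bigo((1+\sizebeta\exec)\size\tm)$, which is the claim.

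The corollary is a routine composition of the three lemmas, so I expect no genuine obstacle at this level. The one point requiring care is the amortized counting in the second step: one must verify that $\tmpMeasure\cdot$ is integer-valued and non-negative, that the only steps making it grow are the $\beta$-transitions---each contributing at most the fixed amount $\tmpMeasure{\state_0}$---and that the identity $\tmpMeasure{\state_0} = \size{\mytr\tm}$ holds on the compiled initial state, so that the total number of decreasing open steps is correctly controlled by the number of $\beta$-transitions, and the coefficient $\size\tm$ enters exactly once through $\tmpMeasure{\state_0}$.
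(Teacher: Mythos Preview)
Your proposal is correct and follows essentially the same approach as the paper: both use the amortized argument from \reflemma{meas-during-exec} (non-negativity of $\tmpMeasure\cdot$, initial value $\tmpMeasure{\state_0}=\size{\mytr\tm}$, bounded increase at $\beta$-steps, strict decrease at open overhead) to bound $\size\exec_{\rensym}+\size\exec_{\admsym_1}$ by $(1+\sizebeta\exec)\size{\mytr\tm}$, then invoke \reflemma{strong-bound-open} for the strong-phase overhead and \reflemma{bound-measure-size} for $\size{\mytr\tm}\in\bigo(\size\tm)$. Your presentation is slightly more explicit about the amortization bookkeeping and the absorption of additive constants via $\size\tm\geq 1$, but the structure is identical.
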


\paragraph*{Cost of Single Steps} We need high-level assumptions on how bites and crumbled 
environments are concretely implemented---a reference implementation in OCaml is  
explained in \ifthenelse{\boolean{techreport}}{Appendix \ref{sect:ocaml}, p. \pageref{sect:ocaml}}{Appendix J of the tech report \cite{DBLP:journals/corr/abs-2102-06928}} 
 and can be
downloaded at \url{https://github.com/sacerdot/SCAM}.

As it is standard for machines with global environments (see Accattoli and Barras \cite{DBLP:conf/ppdp/AccattoliB17}), variables are represented as memory locations, variable 
occurrences as pointers, and an ES $\esub\var\mol$ is the fact that the location associated with $\var$ contains 
$\mol$---this allows $\bigo(1)$ look-up in environments. The copy/renaming in 
$\beta$-transitions costs $\bigo(\size\tm)$ by the size invariant, following  \cite{DBLP:conf/ppdp/AccattoliB17}, 
essentially implementing the proof-nets representation of a term, that can be seen as a pointer-based DAG.

The \SCAM{} visits the proof-net/DAG in bi-directional ways: environments are visited both right-to-left (open phases) 
and left-to-right (strong phases), and abstractions are entered/exited at phase switches. Rather than 
having doubly linked nodes we adapt to our more general DAG framework the subtler space-conscious technique by McBride 
\cite{Mcbride01thederivative}, itself generalizing the standard zipper technique for lists by Huet~\cite{DBLP:journals/jfp/Huet97}, 
obtaining bi-directional moves over the graph in $\bigo(1)$.

To implement $\tomachsub$ in $\bigo(1)$ we need to jump to the occurrence of the renamed variable. By \ifthenelse{\boolean{techreport}}{\reflemma{aux-most-once}}{one of the invariants of the \SCAM{} (namely, the pristine one, see the tech report \cite{DBLP:journals/corr/abs-2102-06928})}, the renamed
variable has at most one occurrence when the rule fires. An implementation can thus keep with no overhead a 
bi-directional link between the variable and its occurrence --- as long as the variable occurs once.

For $\tomachgc$, variables also carry a reference counter, to test 
if they occur in $\bigo(1)$. By the size invariant, the size of erased values is $\bigo(\size\tm)$, but since there can 
be $\bigo((1 + \sizebeta\exec) \size\tm)$ $\tomachgc$ steps (\refcorollary{bound-com}), we obtain a bound quadratic in 
$\size\tm$, which is too loose.
The stricter bilinear bound is inferred via global analysis.
Indeed, by the size invariant the size of a state is bounded by $\bigo((1 + \sizebeta\exec) \size\tm)$. 
So, the global cost of erasing cannot be more than $\bigo((1 + \sizebeta\exec) \size\tm)$. Summing it all up,

\begin{theorem}[The \SCAM{} is bilinear]
   \label{thm:bilinear-scam}
   \NoteProof{c:bilinear-scam}
   Let $\tm$ be a $\l$-term and  $\exec\colon \compil\tm \tomachscam^* \state$ a \SCAM{} execution. Then $\exec$ can be 
implemented on a RAM in $\bigo((1 + \sizebeta\exec)\size\tm)$.
\end{theorem}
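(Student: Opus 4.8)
The plan is to follow the standard three-step schema for complexity analyses of abstract machines: first fix a concrete RAM representation of states, then bound the cost of each single transition in this representation, and finally multiply by the number of transitions of each kind and sum. The count of transitions is already settled: by \refcorollary{bound-com} the total number of transitions of $\exec$ is $\bigo((1+\sizebeta\exec)\size\tm)$, a bound obtained by combining \reflemma{strong-bound-open} and \reflemma{meas-during-exec} so that the overhead is controlled modularly across the open and strong phases. What remains is to check that the \emph{per-step} cost does not spoil this bilinear bound, and---as anticipated in the discussion preceding the statement---the only delicate point is garbage collection.

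First I would fix the data structure, following Accattoli and Barras' treatment of global environments \cite{DBLP:conf/ppdp/AccattoliB17}: variables are memory locations, variable occurrences are pointers, and an ES $\esub\var\mol$ is realized as the content of the location bound to $\var$, so that environment look-up is $\bigo(1)$. This is essentially the pointer-based DAG representation of the proof net of a term. Because the \SCAM{} traverses this structure in both directions---right-to-left in the open phases, left-to-right in the strong phases, entering and exiting abstractions at phase switches---I would avoid doubly linked nodes and instead adapt McBride's space-conscious zipper-for-datatypes technique \cite{Mcbride01thederivative}, generalizing Huet's list zipper \cite{DBLP:journals/jfp/Huet97}, so that every move of the separator $\gensep$ costs $\bigo(1)$. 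With this in place the search/commutation overhead transitions are each $\bigo(1)$, and so is the renaming transition $\tomachsub$: one jumps directly to the unique occurrence of the renamed variable, which exists because the pristine invariant guarantees a single occurrence at firing time, so a bi-directional variable-occurrence link can be maintained with no overhead.

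The two $\beta$-transitions $\tomachbv$ and $\tomachbi$ are the only ones that copy material, namely the abstraction to be applied; by the size invariant the copied value has size $\bigo(\size\tm)$, so each $\beta$-transition costs $\bigo(\size\tm)$ and their total contribution is $\bigo(\sizebeta\exec\cdot\size\tm)$. The transition $\tomachgc$ is where the naive product argument fails, and this is the main obstacle: testing non-occurrence is $\bigo(1)$ via a per-variable reference counter, but erasing a value costs time proportional to its size, again $\bigo(\size\tm)$, and with $\bigo((1+\sizebeta\exec)\size\tm)$ possible $\tomachgc$ steps the per-step estimate gives only a bound quadratic in $\size\tm$. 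The crux is therefore to replace the per-step estimate by a global, amortized one. The key observation is that every piece of syntax ever erased was previously created, and new syntax enters a state only through the $\bigo(\size\tm)$-sized copies performed by $\beta$-transitions; together with the size invariant this bounds the cumulative amount of material that $\tomachgc$ can ever erase by $\bigo((1+\sizebeta\exec)\size\tm)$. Summing the $\bigo(1)$ overhead moves, the $\bigo(\sizebeta\exec\cdot\size\tm)$ copying cost, and this amortized erasing cost---and recalling that $\size{\mytr\tm}\le 2\size\tm$ so that the initial compilation is already linear---yields the claimed bilinear bound.
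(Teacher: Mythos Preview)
The proposal is correct and follows essentially the same approach as the paper: the per-transition cost analysis via the global-environment/zipper representation, the $\bigo(\size\tm)$ copy bound from the size invariant, the $\bigo(1)$ cost of $\tomachsub$ via the pristine single-occurrence property, and the amortized treatment of $\tomachgc$ all match the paper's argument. In particular, your global bound on erasure---what is erased must first have been created, and creation happens only through the $\bigo(\size\tm)$-sized copies of the $\sizebeta\exec$ $\beta$-transitions plus the initial $\bigo(\size\tm)$ compilation---is exactly how the paper resolves the apparent quadratic blow-up.
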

Via the implementation theorem (\refth{machine-final}), we obtain reasonable cost models for \scbv.
\begin{corollary}[Reasonable cost models]
Let $\tm$ be a $\l$-term, $\deriv: \tm \tovsubs^* \tmtwo$, and $\exec\colon \compil\tm 
\tomachscam^* \state$. Then $\sizem\deriv$ and $\sizem\exec$ are reasonable time cost models for Strong \cbv.
\end{corollary}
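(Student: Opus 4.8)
The plan is to unfold the definition of a \emph{reasonable} time cost model—polynomial interrelation with the cost model of Turing (equivalently RAM) machines, in the sense of Slot and van Emde Boas \cite{DBLP:conf/stoc/SlotB84}—and to discharge its two halves separately, for each of the two measures $\sizem\deriv$ and $\sizem\exec$. Most of the work is already available: the only genuinely novel ingredient, the efficient simulation of the external strategy by a RAM, is exactly the content of the bilinearity theorem (\refthm{bilinear-scam}) together with the strong implementation theorem (\refth{machine-final}), so the corollary is essentially an assembly step.

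First I would establish the direction stating that the cost model is \emph{not too strong} (a RAM simulates strong \cbv within polynomial overhead). By \refth{machine-final}.\ref{p:machine-final-impl-scam} the \SCAM{} is a relaxed implementation of $(\tovsubs,\eqstruct)$, so every external evaluation $\deriv\colon\tm\tovsubs^*\tmtwo$ to normal form is matched by a \SCAM{} execution $\exec\colon\compil\tm\tomachscam^*\state$ with $\sizebeta\exec\leq\sizem\deriv$ and $\decode\state\eqstruct\tmtwo$, and conversely every execution projects back to such an evaluation. By \refthm{bilinear-scam} the execution $\exec$ runs on a RAM in time $\bigo((1+\sizebeta\exec)\size\tm)$, hence in $\bigo((1+\sizem\deriv)\size\tm)$ via $\sizebeta\exec\leq\sizem\deriv$; for the machine measure the bound $\bigo((1+\sizem\exec)\size\tm)$ is immediate. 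Both are polynomial—indeed bilinear—in the chosen measure and the size of the input. A crucial point to stress here is that what the RAM produces is the \emph{shared} normal form $\decode\state$, not its unfolding, so the bound is perfectly compatible with size explosion: printing the unshared result is a separate operation, explicitly set aside.

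Next I would handle the converse direction (the cost model is \emph{not too weak}: strong \cbv simulates RAM/TM within polynomial overhead). For this I would rely on the classical encoding of Turing machines into the closed call-by-value $\l$-calculus, whose reasonability is well established \cite{DBLP:conf/fpca/BlellochG95,DBLP:journals/tcs/LagoM08}. The key observation is that on the closed terms produced by such an encoding the external strategy degenerates to weak \cbv evaluation—external contexts never need to enter the values being built, and the final value is of bounded size—so a Turing machine running in time $t$ is simulated by an external evaluation with $\sizem\deriv$ polynomial in $t$. The same polynomial bound then transfers to the machine measure $\sizem\exec$ through the inequality $\sizebeta\exec\leq\sizem\deriv$.

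Putting the two simulations together shows that $\sizem\deriv$ and $\sizem\exec$ are polynomially related to Turing-machine time, i.e. reasonable. I expect the main difficulty to be not any single computation but rather the \emph{conceptual} reconciliation of the implosive machine measure $\sizem\exec$ with reasonability: since $\sizem\exec$ can be \emph{logarithmic} in $\sizem\deriv$ (\refprop{implosive-family}), one must check that the "not too strong" direction only ever \emph{upper}-bounds the RAM cost by the measure—which \refthm{bilinear-scam} does—while the "not too weak" direction needs merely that the measure be \emph{bounded above} by a polynomial in $t$, which follows a fortiori from the known weak bound and from $\sizebeta\exec\leq\sizem\deriv$. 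The residual care is pure bookkeeping: verifying that the external strategy really collapses to weak \cbv on the encoding's closed terms, so that the classical simulation applies unchanged.
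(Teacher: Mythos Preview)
Your proposal is correct and follows the approach that the paper merely gestures at: the paper's entire justification for this corollary is the single sentence ``Via the implementation theorem (\refth{machine-final}), we obtain reasonable cost models for \scbv,'' with no further argument. Your expansion into the two directions of reasonability---the RAM simulation via \refthm{bilinear-scam} and \refth{machine-final}, and the Turing-machine encoding via the classical closed \cbv results---is exactly the intended content, and your treatment of the implosive subtlety (that $\sizebeta\exec$ may be logarithmically smaller than $\sizem\deriv$, yet both inequalities needed point the right way) is a useful clarification that the paper omits.
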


\section{Implosiveness at Work}
\label{SECT:COMPLEMENTS}

Let us give an example of the implosive phenomenon. Consider the following 
families of terms $\tm_1 \defeq \pi I$ and $\tm_{n+1} \defeq \pi (\la\varthree \tm_n)$, where $\pi \defeq 
\la\var\la\vartwo ((\vartwo \var)\var)$, and $\tmtwo_1 \defeq \la\vartwo((\vartwo I) I)$ and $\tmtwo_{n+1} \defeq 
\la\vartwo((\vartwo (\la\varthree\tmtwo_n)) (\la\varthree\tmtwo_n))$.

An easy induction shows that $\tovsubs$ takes on $\tm_n$ a number of steps exponential in $n$. Note that 
1) all the 
$\beta$-redexes of $\tm_n$ are given by a copy of $\pi$ with just one argument, which is a value, 2) the argument is duplicated, and that 
3) these facts are stable by evaluation. Therefore, substitution always happens on occurrences of $\var$ as arguments. 
The \SCAM{} never substitutes on arguments, and evaluates them while they are shared, thus avoiding the exponential 
duplication of redexes.
\begin{proposition}[Implosive family]
\label{prop:implosive-family}
\NoteProof{propappendix:implosive-family}
Let $\tm_n$ and $\tmtwo_n$ as above.
\begin{enumerate}
	\item \emph{External strategy (exponentially many steps)}: $\tm_n (\toms\toes)^{2^n-1} \tmtwo_n$ and $\tmtwo_n$ is a strong fireball.
\item \emph{\SCAM{} Implosion (linearly many steps)}: $\exec_n: \compil{\tm_n} \tomachscam^* \state_n$ with $\unf{\state_n} = \tmtwo_n$ and $\sizebeta{\exec_n} = n$.
\end{enumerate}
\end{proposition}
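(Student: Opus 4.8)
\emph{Proof plan.} The plan is to prove the two items by separate inductions on $n$, both hinging on the recursion $\tm_{n+1} = \pi(\la\varthree\tm_n)$ (with base $\tm_1 = \pi I$) and on the observation that the bound variable $\var$ of $\pi = \la\var\la\vartwo((\vartwo\var)\var)$ occurs in the body only as an \emph{argument}, so that substituting $\pi$'s value argument is useless at both of its occurrences.

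For item~1, I would show by induction on $n$ that $\tm_n\,(\toms\toes)^{2^n-1}\,\tmtwo_n$ and that $\tmtwo_n$ is a strong fireball. In the base case $\pi I \toms (\la\vartwo((\vartwo\var)\var))\esub\var I \toes \la\vartwo((\vartwo I)I) = \tmtwo_1$ is one $(\toms\toes)$ pair, and $\tmtwo_1$ is a strong fireball by direct inspection of the grammar ($I$ is a strong value, $\vartwo I$ and $(\vartwo I)I$ are strong inert terms). For the step I first fire the outermost $\pi$, namely $\pi(\la\varthree\tm_n)\toms\toes\la\vartwo((\vartwo(\la\varthree\tm_n))(\la\varthree\tm_n))$, one pair that \emph{duplicates} the argument; both copies sit in argument position under the inert head $\vartwo$, hence inside external contexts --- $(\vartwo\,\strongctx)\,\tm$ and $(\vartwo\,\tm)\,\strongctx$ are rigid and $\la\varthree\strongctx$ is external --- so the external strategy may reduce inside each copy, and by the induction hypothesis each is brought to $\la\varthree\tmtwo_n$ in $2^n-1$ pairs. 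Sequencing the two copies gives $1+2(2^n-1)=2^{n+1}-1$ pairs and the normal form $\tmtwo_{n+1}$, itself a strong fireball. Since $\tovsubs$ is diamond (\refprop{external-properties}), this common count of multiplicative steps is forced on every evaluation to normal form.

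For item~2, I would prove by induction on $n$ that the \SCAM{} run on a fresh crumbling $\mytr{\tm_n}$ (placed in an arbitrary machine context) performs exactly $n$ $\beta$-transitions and terminates with a crumbled environment reading back to $\tmtwo_n$. Writing $\mytr{\tm_{n+1}}=\esub\varstar{\cvar\cvartwo}\,\esub\cvar{\ldots}\,\esub\cvartwo{\la\varthree\mytr{\tm_n}}$, the open phase skips the two value-ES and fires the bite $\cvar\cvartwo$ with a single $\tomachbv$: this copies $\pi$'s body while renaming its bound variable to the argument \emph{pointer} $\cvartwo$, so the argument value $\la\varthree\mytr{\tm_n}$ is left \emph{shared} in the pending ES $\esub\cvartwo{\la\varthree\mytr{\tm_n}}$. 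As both resulting occurrences of $\cvartwo$ are arguments of inert bites, no further $\beta$-transition touches them --- this is useful sharing leaving a value pending. In the ensuing strong phase the spent $\esub\cvar{\ldots}$ is garbage-collected ($\cvar$ no longer occurs), the abstractions enclosing $\pi$'s redex-free body are entered and re-united by search transitions without firing any $\beta$, and finally $\esub\cvartwo{\la\varthree\mytr{\tm_n}}$ is entered via $\tomachcfive$ (as $\cvartwo$ occurs); there the \SCAM{} evaluates the \emph{single} shared copy of $\mytr{\tm_n}$, which by the induction hypothesis costs $n$ $\beta$-transitions, for a total of $1+n=n+1$. The base case $\mytr{\tm_1}=\mytr{\pi I}$ is identical except that the shared argument is $I$, whose body is normal and costs $0$, giving $1$ transition. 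Because the surviving binding of $\cvartwo$ is a value reading back (by the inner induction hypothesis) to $\la\varthree\tmtwo_n$ and $\cvartwo\in\crnames$, read-back unfolds it into \emph{both} argument occurrences, reconstructing $\la\vartwo((\vartwo(\la\varthree\tmtwo_n))(\la\varthree\tmtwo_n))=\tmtwo_{n+1}$; as this is a strong fireball the reached state is final.

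The main obstacle is the state-level bookkeeping of item~2: one must phrase the induction hypothesis for the shared copy of $\mytr{\tm_n}$ sitting deep under abstractions, track it through the zig-zag of open and strong phases (the entering and re-uniting of $\pi$'s body via $\tomachcfive$ and $\tomachcfour$, and the garbage collection of its spent copy via $\tomachgc$), check that all environment look-ups $\wstenv\kctx(\cdot)$ stay local --- which holds because every $\tm_n$ is closed --- and follow the read-back across the pending value-ES so as to obtain the \emph{literal} equality $\unf{\state_n}=\tmtwo_n$ rather than mere $\eqstruct$. By comparison, the two step counts ($2^{n+1}-1$ against $n$) and the strong-fireball checks are routine.
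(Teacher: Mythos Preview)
Your proposal is correct and follows essentially the same approach as the paper: both items proceed by induction on $n$, with item~1 firing the outer $\pi$ to duplicate the argument into two external positions (yielding the $1+2(2^n-1)$ count), and item~2 tracing a single $\tomachbv$ per level while the argument $\la\varthree\mytr{\tm_n}$ stays shared and is entered only once via $\tomachcfive$, relying on closedness of $\tm_n$ to make the recursion go through. One minor bookkeeping detail: in the strong phase the machine meets $\esub\varstar{\la\vartwo'\ldots}$ \emph{before} $\esub\cvar{\ldots}$ (it is the innermost ES of $\kctx$), so it first enters and exits $\la\vartwo'$ and only then garbage-collects $\esub\cvar{\ldots}$---you have these two swapped, but this does not affect the $\beta$-transition count or the read-back.
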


\noindent\textbf{Acknowledgements.}
This work has been partially funded by the ANR JCJC grant
  'COCA HOLA' (ANR-16-CE40-004-01); the third author by the MIUR-PRIN project 'Analysis of Program Analyses' (ASPRA, ID 201784YSZ5 004) and the Italian INdAM -- GNCS
project 2020 'Reversible Concurrent Systems: from Models to 
Languages'.


%
%
\IEEEpeerreviewmaketitle


\bibliographystyle{IEEEtranS}
\bibliography{main.bbl}

\ifthenelse{\boolean{techreport}}{
\clearpage
\appendices
\onecolumn
\section*{\Large Technical Appendix}
\label{SECT:APPENDIX}

\bigskip

{\newcommand\tatocitem[3]{\item[\ref{#1}] #2 \dotfill \pageref{#1} #3} 
\begin{enumerate}
  \tatocitem{sect:preliminaries}{Preliminaries and Notations in Rewriting}{}
  \tatocitem{app:calculus-proofs}{Proofs of Section~\ref*{sect:calculus} (VSC)}{}
  \tatocitem{app:external-strategy-proofs}{Proofs of Section~\ref*{sect:external-strategy} (External Strategy)}{}
  \tatocitem{app:relaxed-implementation}{Proofs of Section~\ref*{SECT:RELAXED} (Relaxed Implementation)}{}
  \tatocitem{app:compilation-read-back}{Proofs of Section~\ref*{SECT:CRUMBLING} (Compilation \& Read-back)}{}
  \tatocitem{app:ocam}{Proofs of Section~\ref*{SECT:OPEN-MACHINE} (Open Crumbling Machine)}{}
  \tatocitem{app:scam}{Proofs of Section~\ref*{SECT:STRONG-MACHINE} (Strong Crumbling Machine)}{}
  \tatocitem{app:STRONG-IMPLEMENTATION-app}{Proofs of Section~\ref*{SECT:STRONG-IMPLEMENTATION} (Strong Implementation Theorem)}{}
  \begin{enumerate}
    \item[] Well-named invariant \dotfill \pageref{thm:named-invariant}
    \item[] Pristine invariant \dotfill \pageref{thm:pristine-invariant-app}
    \item[] Well-crumbled invariant \dotfill \pageref{thm:well-crumbled-invariant-app}
    \item[] Garbage-free invariant \dotfill \pageref{thm:garbage-free-invariant-app}
    \item[] Goodness invariant \dotfill \pageref{thm:goodness-invariants}
  \end{enumerate}
  \tatocitem{app:complexity}{Proofs of Section~\ref*{SECT:COMPLEXITY} (Complexity)}{}
  \tatocitem{sect:ocaml}{Implementation in OCaml}{}
  \tatocitem{app:complements}{Proofs of Section~\ref*{SECT:COMPLEMENTS} (Implosiveness at Work)}{}
\end{enumerate}
}

\bigskip

\section{Preliminaries and Notations in Rewriting}
\label{sect:preliminaries}
For a relation $R$ on a set of terms, $R^*$ is its reflexive-transitive closure. 
Given a relation $\Rew{\Rule}$, an $\Rule$-\emph{evaluation}
(or simply evaluation if unambiguous) $\deriv$ is a finite sequence of terms $(\tm_i)_{0 \leq i \leq n}$ (for some $n \geq 0$) such that $\tm_i \Rew{\Rule} \tm_{i+1}$ for all $1 \leq i < n$, and we write $\deriv \colon \tm \Rew{\Rule}^* \tmtwo$ if $\tm_0 = \tm$ and $\tm_n = \tmtwo$. The
\emph{length} $n$ of $\deriv$ is denoted by $\size{\deriv}$, and $\size{\deriv}_a$ is the number of $a$-\emph{steps} (\ie the number of $\tm_i \Rew{a} \tm_{i+1}$ for some $1 \leq i \leq n$) in $\deriv$, for a given subrelation $\Rew{a}$ of $\Rew{\Rule}$.

A term $\tm$ is $\Rule$-\emph{normal} if there is no $\tmtwo$ such that $\tm \Rew{\Rule} \tmtwo$.
An evaluation $\deriv \colon \tm \Rew{\Rule}^* \tmtwo$ is \emph{$\Rule$-normalizing} if $\tmtwo$ is $\Rule$-normal.
A term $\tm$ is \emph{weakly $\Rule$-normalizing} if there is a $\Rule$-normalizing evaluation $\deriv \colon \tm \Rew{\Rule}^* \tmtwo$; and $\tm$ is \emph{strongly $\Rule$-normalizing} if there no infinite sequence $(\tm_i)_{i \in \nat}$  such that $\tm_0  = \tm$ and $\tm_i \Rew{\Rule} \tm_{i+1}$ for all $i \in \nat$.
Clearly, strong $\Rule$-normalization implies weak $\Rule$-normalization.

A relation $\Rew{\Rule}$ is \emph{diamond} if $\tmtwo_1 \,{}_\Rule\!\!\lto \tm \Rew{\Rule} \tmtwo_2$ and $\tmtwo_1 \neq \tmtwo_2$ imply $\tmtwo_1 \Rew{\Rule} \tmthree \, {}_\Rule\!\!\lto \tmtwo_2$ for some $\tmthree$. 
As a consequence:
\begin{enumerate}
	\item $\Rew{\Rule}$ is confluent (\ie $\tmtwo_1 \,{}_\Rule^*\!\!\lto \tm \Rew{\Rule}^* \tmtwo_2$  implies $\tmtwo_1 \Rew{\Rule}^* \tmthree \, {}_\Rule^*\!\!\lto \tmtwo_2$ for some $\tmthree$); \item any term $\tm$ has at most one normal form (\ie if $\tm \Rew{\Rule}^* \tmtwo$  and $\tm \Rew{\Rule}^* \tmthree$ with $\tmtwo$ and $\tmthree$ $\Rule$-normal, then $\tmtwo = \tmthree$);
	\item all $\Rule$-evaluations with the same start and end terms have the same length (\ie if $\deriv \colon \tm \Rew{\Rule}^* \tmtwo$  and $\deriv' \colon \tm \Rew{\Rule}^* \tmtwo$ then $\size{\deriv} = \size{\deriv'}$);
	\item $\tm$ is weakly $\Rule$-normalizing iff it is strongly $\Rule$-normalizing.
\end{enumerate}

Two relations $\Rew{\Rule_1}$ and $\Rew{\Rule_2}$ \emph{strongly commute} if $\tmtwo_1 \,{}_{\Rule_1}\!\!\!\lto \tm \Rew{\Rule_2} \tmtwo_2$ implies $\tmtwo_1 \Rew{\Rule_2} \tmthree \, {}_{\Rule_2}\!\!\!\lto \tmtwo_2$ for some $\tmthree$. 
If $\Rew{\Rule_1}$ and $\Rew{\Rule_2}$ strongly commute and are diamond, then 
\begin{enumerate}
	\item $\Rew{\Rule} \, = \, \Rew{\Rule_1} \!\cup \Rew{\Rule_2}$ is diamond,
	\item all $\Rule$-evaluations with the same start and end terms have the same number of any kind of steps (\ie if $\deriv \colon \tm \Rew{\Rule}^* \tmtwo$  and $\deriv' \colon \tm \Rew{\Rule}^* \tmtwo$ then $\size{\deriv}_{\Rule_1} = \size{\deriv'}_{\Rule_1}$ and $\size{\deriv}_{\Rule_2} = \size{\deriv'}_{\Rule_2}$).
\end{enumerate}

It is a strong form of confluence and implies \emph{uniform normalization} (if there is a normalizing sequence from $\tm$ then there are no diverging sequences from $\tm$) and the 
\emph{random descent property} (all normalizing sequences from $\tm$ have the same length)

\section{Proofs of Section~\ref*{sect:calculus} (VSC)}
\label{app:calculus-proofs}

\begin{lemma}[Shape of strong fireballs]
	\label{l:shape-of-strong-fireballs}
	Let $\tm$ be a strong fireball. Then exactly one of the following holds:
	\begin{itemize}
		\item either $\tm$ is a strong inert term,
		
		\item or $\tm$ is a value fireball.
		
	\end{itemize}
\end{lemma}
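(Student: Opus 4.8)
The plan is to establish the dichotomy by a structural induction on the grammar of strong fireballs $\sfire \grameq \sitm \mid \sval \mid \sfire\esub\var\sitm$, proving separately that \emph{at least one} of the two cases holds and that the two are \emph{mutually exclusive}; together these give ``exactly one''.

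First I would show that at least one case holds, by induction on the derivation of $\tm$ as a strong fireball. The two base cases are immediate: if $\tm = \sitm$ then $\tm$ is a strong inert term, and if $\tm = \sval$ then $\tm$ is a value fireball (being a strong value). The only inductive case is $\tm = \sfire\esub\var\sitmtwo$, and here I apply the induction hypothesis to $\sfire$. If $\sfire$ is a strong inert term, then $\sfire\esub\var\sitmtwo$ is again a strong inert term by the third production $\sitm\esub\var\sitmtwo$ of the inert grammar (using that the content $\sitmtwo$ is itself inert). If instead $\sfire$ is a value fireball, then appending the explicit substitution $\esub\var\sitmtwo$ of a strong inert term keeps it a value fireball, by the corresponding production of the value-fireball grammar. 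Thus the conclusion propagates through the substitution constructor, and at least one case always holds.

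Second I would prove disjointness: no strong fireball is at once a strong inert term and a value fireball. The key observation is that the two classes differ in their \emph{head constructor} once the outermost explicit substitutions are stripped away. Concretely, removing the top-level explicit substitutions from a strong inert term always exposes either a variable $\var$ or an application $\sitm\,\sfire$ — and following the application spine the head is ultimately a variable occurrence — whereas removing them from a value fireball always exposes a strong value $\sval = \la\var\sfire$, i.e.\ an abstraction. A variable-headed term and an abstraction can never be syntactically equal, so the two classes are disjoint. This head analysis is itself a short auxiliary induction (on strong inert terms, the exposed head is never an abstraction; on value fireballs, it is always an abstraction). The only real subtlety — and hence the main, though modest, obstacle — is that explicit substitutions occur in \emph{both} grammars, as $\sitm\esub\var\sitmtwo$ for inert terms and as ``value fireball'' $\esub\var\sitm$ for value fireballs, so one cannot read off the distinguishing constructor from the top symbol and must first peel the whole outer substitution context to reach the variable-versus-abstraction head. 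Once this peeling is in place the argument is routine, and combining it with the first part yields that exactly one of the two cases holds.
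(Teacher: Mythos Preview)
Your proposal is correct and follows essentially the same approach as the paper. The paper also leaves the ``at least one'' direction to the reader and proves disjointness via two small inductions---one on the structure of a strong inert term (showing it is never of the form $\sctxp{\la\var\sfire}$), and one on the length of the outer substitution context of a value fireball (showing it is never inert)---which is exactly your ``peel the outer explicit substitutions and compare the exposed constructor'' argument made explicit; your remark about following the application spine to a variable head is true but already more than is needed, since it suffices that an application is not an abstraction.
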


\begin{proof}
	Proving that at least one of the two holds is left for the reader. 
	We now prove that only one of them holds:
	
	\begin{itemize}
		\item Let $\tm$ be a strong inert term. We prove that $\tm$ is not a value fireball by structural induction on $\tm$:
		\begin{itemize}
			\item \emph{Variable}: Trivial.
			
			\item \emph{Application}: Trivial.
			
			\item \emph{$\mathsf{ES}$}; \ie, $\tm = \sitm \esub{\var}{\sitmtwo}$: Then $\sitm$ is not a value fireball ---by \ih---, and so neither is $\tm$.
			
		\end{itemize}
		
		\item Let $\tm = \sctxp{\la{\var}{\sfire}}$, with $\sctx = \esub{\var_{1}}{\itm_{s,1}} \dots \esub{\var_{n}}{\itm_{s,n}}$, with $n \geq 0$. We prove that $\tm$ is not a strong inert term by induction on $n$:
		\begin{itemize}
			\item \emph{Empty substitution context}; \ie, $\sctx = \ctxhole$: Trivial.
			
			\item \emph{Non-empty substitution context}; \ie, $\sctx = \sctxtwo \esub{\var}{\itm_{s,n+1}}$: Since $\sctxtwop{\la{\var}{\fire}}$ is not a strong inert term ---by \ih---, then neither is $\sctxtwop{\la{\var}{\fire}} \esub{\var}{\itm_{s,n+1}} = \tm$.
			\qedhere
		\end{itemize}
		
	\end{itemize}
\end{proof}

\begin{proposition}[Characterization  of  normal  forms]
  \label{lappendix:harmony}
  \NoteState{l:harmony}
  Let $\tm$ be a VSC term.
 $\tm$ is $\tovsub$-normal if and only if $\tm$ is a strong fireball. 
\end{proposition}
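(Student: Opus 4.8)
The plan is to prove the two implications separately, both by induction, the crux being a \emph{shape} property that rules out the at-a-distance redexes. The technical heart is the observation behind \reflemma{shape-of-strong-fireballs}: a strong inert term $\sitm$ can never take the form $\subctxp{\la\var\tmtwo}$, i.e. it never presents an abstraction through a substitution context. I would first establish exactly the form I need by induction on $\sitm$. The base case $\var$ and the application case $\sitm\sfire$ have at top level, respectively, a variable and an application, neither of which is the top constructor (an abstraction, or an ES) of any $\subctxp{\la\var\tmtwo}$; in the ES case $\sitm\esub\vartwo\sitmtwo$ the outer ES forces the substitution context to peel off one ES, reducing the claim to the induction hypothesis on the inert prefix. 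The same statement also gives that a strong inert term is never of the form $\subctxp\val$, since $\val=\la\var\tmtwo$ is an abstraction. This single fact blocks \emph{both} rewrite rules, as a multiplicative redex needs an abstraction-headed function $\subctxp{\la\var\tmtwo}$ and an exponential redex needs a value-headed ES content $\subctxp\val$.

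For the \emph{if} direction I would show, by simultaneous induction on the grammars of strong inert terms, strong values, and strong fireballs, that each such term is $\tovsub$-normal; since the rules are closed under all contexts, it suffices to check that no subposition is a redex and invoke the induction hypotheses on immediate subterms. For a strong inert term: $\var$ is trivially normal; $\sitm\sfire$ has no redex because the subterms are normal by induction hypothesis and $\sitm$ is not of the form $\subctxp{\la\var\tmtwo}$, so the application is not a $\rtom$-redex; $\sitm\esub\var\sitmtwo$ has no redex because, in addition, its content $\sitmtwo$ is strong inert, hence not $\subctxp\val$, so the ES is not a $\rtoe$-redex. A strong value $\la\var\sfire$ is normal because an abstraction is never itself a redex and its body is normal by induction hypothesis. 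Finally a strong fireball is normal in each of its three productions, the only new one being $\sfire\esub\var\sitm$, which is handled exactly as the inert ES case.

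For the \emph{only if} direction I would argue by structural induction on $\tm$ that every $\tovsub$-normal term is a strong fireball; normality propagates to immediate subterms by contextual closure, so the induction hypothesis applies. The variable and abstraction cases are immediate: $\var$ is a strong inert term, and a normal $\la\var\tmtwo$ yields $\la\var\sfire$, a strong value. For an application $\tmtwo\tmthree$, both subterms are strong fireballs by induction hypothesis, and by the dichotomy of \reflemma{shape-of-strong-fireballs} the function $\tmtwo$ is either a strong inert term or a value fireball $\subctxp\sval$; the latter would make $\tmtwo\tmthree$ a $\rtom$-redex, contradicting normality, so $\tmtwo$ is strong inert and $\tmtwo\tmthree=\sitm\sfire$ is strong inert. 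For an ES $\tmtwo\esub\var\tmthree$, the same dichotomy applied to the content $\tmthree$ forces it to be strong inert (a value fireball would give a $\rtoe$-redex), whence $\tmtwo\esub\var\tmthree=\sfire\esub\var\sitm$ is a strong fireball. The main obstacle, and where all the work concentrates, is precisely excluding the at-a-distance redexes: one must guarantee that neither the function of an application nor the content of an ES can silently present an abstraction (respectively a value) through a substitution context, which is exactly the disjointness content of \reflemma{shape-of-strong-fireballs} together with the short induction establishing that strong inert terms are never of the form $\subctxp\val$.
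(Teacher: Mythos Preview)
Your proposal is correct and follows essentially the same route as the paper: both directions proceed by the same structural inductions, and both hinge on the dichotomy of \reflemma{shape-of-strong-fireballs} to exclude the at-a-distance redex patterns $\subctxp{\la\var\tmtwo}$ and $\subctxp\val$ for the function of an application and the content of an ES, respectively. Your explicit induction establishing that a strong inert term is never of the form $\subctxp{\la\var\tmtwo}$ is exactly what the paper packages inside that lemma.
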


\begin{proof}
	We prove the two directions of the equivalence separately.
		\begin{enumerate}
		\item \label{p:asfa} Let $\tm$ be $\vsub$-normal. We shall prove that $\tm$ is a \full fireball by induction on $\tm$:
		\begin{itemize}
			\item \emph{Variable}. Trivial.
			
			\item \emph{Abstraction}; \ie, $\tm = \la{\var}{\tmtwo}$. 
			As $\tm$ is $\tovsub$-normal, so is $\tmtwo$.
			By \ih, $\tmtwo$ is a \full fireball, and then so~is~$\tm$.
			
			\item \emph{Application}; \ie, $\tm = \tm_{1} \tm_{2}$. 
			Since $\tm$ is $\tovsub$-normal,  so are $\tm_1$ and $\tm_2$.
			By \ih, $\tm_{1}$ and $\tm_{2}$ are \full fireballs. 
			Note that  $\tm_{1} \neq \sctxp{\la{\var}{\tmtwo}}$, otherwise $\tm \rtom \sctxp{\tmtwo \esub{\var}{\tm_{2}}}$ which contradicts $\vsub$-normality of $\tm$. 
			So, $\tm_{1}$ is a \full inert term (by \Cref{l:shape-of-strong-fireballs}), thus $\tm$ is a \full fireball.
			
			\item \emph{Explicit substitution}; \ie, $\tm = \tm_{1} \esub{\var}{\tm_{2}}$. 
			Since $\tm$ is $\vsub$-normal, then so are $\tm_1$ and $\tm_2$.
			By \ih, $\tm_{1}$ and $\tm_{2}$ are \full fireballs. 
			Note that $\tm_{2} \neq \sctxp{\val}$, otherwise $\tm \rtoe \sctxp{\tm_{1} \isub{\var}{\val}}$ which contradicts the $\vsub$-normality of $\tm$. 
			Thus $\tm_{2}$ is a \full inert term (by \Cref{l:shape-of-strong-fireballs}), and so $\tm$ is a \full fireball.
			
		\end{itemize}
		
		\item\label{p:fireball-to-normal} Let $\tm$ be a \full fireball. 
		We prove that $\tm$ is $\vsub$-normal by induction on the definition of \full fireball.
		\begin{itemize}
			\item \emph{Variable}. Trivial.
			
			\item \emph{Abstraction}; \ie, $\tm = \la{\var}{\sfire}$. 
			By \ih, $\sfire$ is $\vsub$-normal, and hence so is $\tm$.
			
			\item \emph{Application}; \ie, $\tm = \sitm \sfire$.
			By \ih, $\sitm $ and $\sfire$  are $\vsub$-normal. 
			Since $\sitm$ is not of the of the form $\sctxp{\la{\var}{\tmtwo}}$ (by \Cref{l:shape-of-strong-fireballs}), then $\tm$ is also $\vsub$-normal.
			
			\item \emph{Explicit substitution}; \ie, $\tm = \sfire \esub{\var}{\sitm}$ (it includes the case when $\sfire$ is a \full inert term). 
			By \ih, both $\sfire$ and $\sitm$ are $\vsub$-normal. 
			Since $\sitm$ is not of the form $\sctxp{\val}$ (by \reflemma{shape-of-strong-fireballs}), then $\tm$ is also $\vsub$-normal.
			\qedhere
		\end{itemize}
	\end{enumerate}
\end{proof}

\begin{lemma}[Basic Properties of $\vsubcalc$]
	\label{l:basic-value-substitution}
	\hfill
	\begin{enumerate}
		\item\label{p:basic-value-substitution-tom-toe-terminates} $\tom$ and $\toe$ are strongly normalizing (separately).
		\item\label{p:basic-value-substitution-tom-toe-diamond-open} $\tomo$ and $\toeo$ are diamond  (separately).
		
		\item\label{p:basic-value-substitution-tom-toe-commute-open}  $\tomo$ and $\toeo$ strongly commute.
	\end{enumerate}
\end{lemma}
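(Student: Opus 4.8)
The plan is to treat the three items in order, reusing machinery that is already available. Item~\ref{p:basic-value-substitution-tom-toe-terminates} is nothing new: it is exactly \reflemma{vsc-local-termination} (from \cite{AccattoliPaolini12}), so I would simply invoke it. The real work is in items~\ref{p:basic-value-substitution-tom-toe-diamond-open} and~\ref{p:basic-value-substitution-tom-toe-commute-open}, which I would prove together by a uniform local analysis of two coexisting redexes $R_1,R_2$ in a term $\tm$, computing residuals and closing each elementary diagram in exactly one step on each side. Three structural facts do the heavy lifting, and I would isolate them first: (i) open contexts never enter abstractions, so a value $\val=\la\var{\tm'}$ contains no $\tomo$- or $\toeo$-redex in its body, hence copying a value never copies a redex; (ii) the root multiplicative rule $\rtom$ is \emph{linear}, i.e. it neither erases nor duplicates the subterms of its redex; and (iii) a substitution lemma stating that $\isub\var\val$ commutes with the root rules, which will be used to equate the two ways of interleaving a contraction with a meta-level substitution.

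For item~\ref{p:basic-value-substitution-tom-toe-diamond-open} I would fix $R_1,R_2$ of the same kind and distinguish their relative position. If they are disjoint, firing one leaves a unique residual of the other and the square closes trivially. If they are nested, the only genuinely informative sub-cases are: for $\toeo$, an exponential redex sitting inside the body $\tm_1$ of an outer redex $\tm_1\esub\var{\subctxp\val}$; for $\tomo$, a multiplicative redex inside the argument of an outer one. In both, fact~(i) guarantees that nothing relevant can live inside the substituted value (it is an abstraction, hence frozen for open reduction), while facts~(ii)--(iii) guarantee that the inner redex has exactly one residual, of the expected shape, so the diamond closes in one step on each side. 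I would also record that there are no root-level critical pairs: the decomposition $\subctxp{\la\var\tm}$ of a multiplicative redex (resp. $\subctxp\val$ of an exponential one) is unique, and the two syntactic roles an abstraction can play---being applied versus being at the bottom of a substitution context---are occupied by distinct occurrences, so two redexes of the same kind never overlap at the root.

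Item~\ref{p:basic-value-substitution-tom-toe-commute-open} is the same analysis with $R_1$ exponential and $R_2$ multiplicative. The crucial observation is that neither rule can destroy the other: an abstraction cannot simultaneously be the value substituted by an exponential redex and the function consumed by a multiplicative redex, since these are different occurrences. Hence again only disjoint and nested configurations arise, and they close with one multiplicative step from one reduct and one exponential step from the other. I expect the main obstacle to be precisely the nested exponential-over-multiplicative case, where firing the exponential step substitutes a value into a body containing the multiplicative redex: here one must check that the residual multiplicative redex is unique and that both orders of contraction reach the \emph{same} term. This is exactly where the substitution lemma~(iii) is used, to equate the substituted contractum $C\isub\var\val$ with the contractum of $M\isub\var\val$.

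Once items~\ref{p:basic-value-substitution-tom-toe-diamond-open} and~\ref{p:basic-value-substitution-tom-toe-commute-open} are established, the diamond of $\tovsubo$, together with the fact that every evaluation to normal form has a fixed number of steps of each kind, follows immediately from the abstract rewriting facts recalled in the preliminaries (diamond plus strong commutation of the two sub-relations imply the diamond of their union and step-count invariance). This keeps the only delicate reasoning confined to the finite, position-based case analysis above.
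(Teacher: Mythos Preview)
Your proposal is correct; the underlying content matches the paper, but the organization differs. The paper proves items~\ref{p:basic-value-substitution-tom-toe-diamond-open} and~\ref{p:basic-value-substitution-tom-toe-commute-open} by a direct structural induction on the derivation of one of the two steps (equivalently, on the shape of its open context), enumerating the pairs of cases explicitly: step-at-the-root versus application-left/right versus ES-left/right, and closing each square by hand. You instead isolate the reusable structural facts (open contexts do not enter abstractions, linearity of $\rtom$, a substitution lemma for $\isub\var\val$) and then argue by relative position of the two redexes (disjoint versus nested), computing residuals. Your route is more modular and makes explicit \emph{why} the case analysis succeeds (no duplication of open redexes when substituting a value, no root-root overlaps), at the cost of needing those auxiliary lemmas stated and proved; the paper's route is more elementary and self-contained, at the cost of a longer and less illuminating enumeration. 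Either is perfectly adequate here, and item~\ref{p:basic-value-substitution-tom-toe-terminates} is handled identically in both (by citation).
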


\begin{proof}
	The statements of \reflemma{basic-value-substitution} are a refinement of some results proved in \cite{AccattoliPaolini12}, where $\tovsubo$ is denoted by $\to_\mathsf{w}$.
	\begin{enumerate}
		\item See \cite[Lemma~3]{AccattoliPaolini12}.
		
		\item We prove that $\tomo$ is diamond, \ie if $\tmtwo \lRew{\wmsym} \tm \tomo \tmthree$ with $\tmtwo \neq \tmthree$ then there exists $\tmp \in \Lambda_\vsub$ such that $\tmtwo \tomo \tmp \lRew{\wmsym} \tmthree$.
		The proof is by induction on the definition of $\tomo$. 
		Since there $\tm \tomo \tmthree \neq \tmtwo$ and the reduction $\tomo$ is weak, there are only eight cases:
		\begin{itemize}
			\item \emph{Step at the Root for $\tm \!\tomo\! \tmtwo$ and Application Right for $\tm \!\tomo\! \tmthree$}, \ie $\tm \defeq \sctxp{\la\var\tmfive}\tmfour \rtom \sctxp{\tmfive\esub{\var}{\tmfour}} \eqdef \tmtwo$ and $\tm \!\rtom\! \sctxp{\la\var\tmfive}\tmfourp\! \eqdef \tmthree$ with $\tmfour \!\tomo\! \tmfourp$: then, $\tmtwo \!\tomo\! \sctxp{\tmfive\esub{\var}{\tmfourp}} \!\lRew{\wmsym}\! \tmthree$;
			
			\item \emph{Step at the Root for $\tm \tomo \tmtwo$ and Application Left for $\tm \tomo \tmthree$}, \ie, for some $n > 0$, 
			$$
			\tm \defeq (\la\var\tmfive)\esub{\var_1}{\tm_1}\dots\esub{\var_n}{\tm_n}\tmfour \allowbreak\rtom \tmfive\esub{\var}{\tmfour}\esub{\var_1}{\tm_1}\dots\esub{\var_n}{\tm_n} \eqdef \tmtwo
			$$
			whereas $\tm \tomo \allowbreak (\la\var\tmfive)\esub{\var_1}{\tm_1}\dots\esub{\var_j}{\tmp_j}\dots\esub{\var_n}{\tm_n}\tmfour \eqdef \tmthree$ with $\tm_j \tomo \tmp_j$ for some $1 \leq j \leq n$: then, 
			\begin{align*}
			\tmtwo \tomo \allowbreak \tmfive\esub{\var}{\tmfour}\esub{\var_1}{\tm_1}\dots\esub{\var_j}{\tmp_j}\dots\esub{\var_n}{\tm_n} \lRew{\wmsym} \tmthree;
			\end{align*}
			\item \emph{Application Left for $\tm \tomo \tmtwo$ and Application Right for $\tm \tomo \tmthree$}, \ie $\tm \defeq \tmfour\tmfive \tomo \tmfourp\tmfive \eqdef \tmtwo$ and $\tm \tomo \tmfour\tmfivep \eqdef \tmthree$ with $\tmfour \tomo \tmfourp$ and $\tmfive \tomo \tmfivep$: then, $\tmtwo \tomo \tmfourp\tmfivep\! \lRew{\wmsym} \tmthree$;
			\item \emph{Application Left for both $\tm \tomo \tmtwo$ and $\tm \tomo \tmthree$}, \ie $\tm \defeq \tmfour\tmfive \tomo \tmfourp\tmfive \eqdef \tmtwo$ and $\tm \tomo \tmfour''\tmfive \eqdef \tmthree$ with $\tmfourp \lRew{\wmsym} \tmfour \tomo \tmfour''$: by \ih, there exists $\tmfour_0 \in \Lambda_\vsub$ such that $\tmfourp \tomo \tmfour_0 \lRew{\msym} \tmfour''$, hence $\tmtwo \tomo \tmfour_0\tmfive \lRew{\msym} \tmthree$;
			\item \emph{Application Right for both $\tm \tomo \tmtwo$ and $\tm \tomo \tmthree$}, \ie $\tm \defeq \tmfive\tmfour \tomo \tmfive\tmfourp \eqdef \tmtwo$ and $\tm \tomo \tmfive\tmfour'' \eqdef \tmthree$ with $\tmfourp \lRew{\wmsym} \tmfour \tomo \tmfour''$: by \ih, there exists $\tmfour_0 \in \Lambda_\vsub$ such that $\tmfourp \tomo \tmfour_0 \lRew{\wmsym} \tmfour''$, hence $\tmtwo \tomo \tmfive\tmfour_0 \lRew{\wmsym} \tmthree$;
			\item \emph{$\mathsf{ES}$ left for $\tm \tomo \tmtwo$ and $\mathsf{ES}$ right for $\tm \tomo \tmthree$}, \ie $\tm \defeq \tmfour\esub\var\tmfive \tomo \tmfourp\esub\var\tmfive \eqdef \tmtwo$ and $\tm \tomo \tmfour\esub\var\tmfivep \eqdef \tmthree$ with $\tmfour \tomo \tmfourp$ and $\tmfive \tomo \tmfivep$: then, 
			$$
			\tmtwo \tomo \tmfourp\esub\var\tmfivep\! \lRew{\wmsym} \tmthree
			$$
			\item \emph{$\mathsf{ES}$ left for both $\tm \tomo \tmtwo$ and $\tm \tomo \tmthree$}, \ie $\tm \defeq \tmfour\esub\var\tmfive \tomo \tmfourp\esub\var\tmfive \eqdef \tmtwo$ and $\tm \tomo \tmfour''\esub\var\tmfive \eqdef \tmthree$ with $\tmfourp \lRew{\wmsym} \tmfour \tomo \tmfour''$: by \ih, there exists $\tmfour_0 \in \Lambda_\vsub$ such that $\tmfourp \tomo \tmfour_0 \lRew{\wmsym} \tmfour''$, hence $\tmtwo \tom \tmfour_0\esub\var\tmfive \lRew{\wmsym} \tmthree$;
			\item \emph{$\mathsf{ES}$ right for both $\tm \tomo \tmtwo$ and $\tm \tomo \tmthree$}, \ie $\tm \defeq \tmfive\esub\var\tmfour \tomo \tmfive\esub\var\tmfourp \eqdef \tmtwo$ and $\tm \tomo \tmfive\esub\var{\tmfour''} \eqdef \tmthree$ with $\tmfourp \lRew{\wmsym} \tmfour \tom \tmfour''$: by \ih, there exists $\tmfour_0 \in \Lambda_\vsub$ such that $\tmfourp \tomo \tmfour_0 \lRew{\wmsym} \tmfour''$, hence $\tmtwo \tom \tmfive\esub\var{\tmfour_0} \lRew{\wmsym} \tmthree$.
		\end{itemize}
		
		\smallskip
		We prove that $\toeo$ is diamond, \ie if $\tmtwo \lRew{\wesym} \tm \toeo \tmthree$ with $\tmtwo \neq \tmthree$ then there exists $\tmfour \in \Lambda_\vsub$ such that $\tmtwo \toeo \tmp \lRew{\esym} \tmthree$.
		The proof is by induction on the definition of $\toeo$. 
		Since there $\tm \toeo \tmthree \neq \tmtwo$ and the reduction $\toeo$ is weak, there are only eight cases:
		\begin{itemize}
			\item \emph{Step at the Root for $\tm \!\toeo\! \tmtwo$} and \emph{$\mathsf{ES}$ left for $\tm \!\toeo\! \tmthree$}, \ie $\tm \defeq \tmfour\esub\var{\sctxp{\val}} \rtoe \sctxp{\tmfour\isub{\var}{\val}} \eqdef \tmtwo$ and $\tm \!\rtoe\! \tmfourp\esub\var{\sctxp{\val}}\! \eqdef \tmthree$ with $\tmfour \!\toeo\! \tmfourp$: then, 
			$$
			\tmtwo \!\toeo\! \sctxp{\tmfourp\esub{\var}{\val}} \!\lRew{\wesym}\! \tmthree
			$$
			
			\item \emph{Step at the Root for $\tm \toeo \tmtwo$} and \emph{$\mathsf{ES}$ right for $\tm \toeo \tmthree$}, \ie, for some $n > 0$, $\tm \defeq \tmfour\esub\var{\val\esub{\var_1}{\tm_1}\dots\esub{\var_n}{\tm_n}} \allowbreak\rtoe \tmfour\isub{\var}{\val}\esub{\var_1}{\tm_1}\dots\esub{\var_n}{\tm_n} \eqdef \tmtwo$ whereas $\tm \toeo \allowbreak \tmfour\esub{\var}{\val\esub{\var_1}{\tm_1}\dots\esub{\var_j}{\tmp_j}\dots\esub{\var_n}{\tm_n}} \eqdef \tmthree$ with $\tm_j \toeo \tmp_j$ for some $1 \leq j \leq n$: then, 
			\begin{align*}
			\tmtwo \toeo \allowbreak \tmfour\isub{\var}{\val}\esub{\var_1}{\tm_1}\dots\esub{\var_j}{\tmp_j}\dots\esub{\var_n}{\tm_n} \lRew{\wesym} \tmthree;
			\end{align*}
			\item \emph{Application Left for $\tm \toeo \tmtwo$} and \emph{Application Right for $\tm \toeo \tmthree$}, \ie $\tm \defeq \tmfour\tmfive \toeo \tmfourp\tmfive \eqdef \tmtwo$ and $\tm \toeo \tmfour\tmfivep \eqdef \tmthree$ with $\tmfour \toeo \tmfourp$ and $\tmfive \toeo \tmfivep$: then, $\tmtwo \toeo \tmfourp\tmfivep\! \lRew{\wesym} \tmthree$;
			\item \emph{Application Left for both $\tm \toeo \tmtwo$ and $\tm \toeo \tmthree$}, \ie $\tm \defeq \tmfour\tmfive \toeo \tmfourp\tmfive \eqdef \tmtwo$ and $\tm \toeo \tmfour''\tmfive \eqdef \tmthree$ with $\tmfourp \lRew{\wesym} \tmfour \toeo \tmfour''$: by \ih, there exists $\tmfour_0 \in \Lambda_\vsub$ such that $\tmfourp \toeo \tmfour_0 \lRew{\wesym} \tmfour''$, hence $\tmtwo \toeo \tmfour_0\tmfive \lRew{\wesym} \tmthree$;
			\item \emph{Application Right for both $\tm \toeo \tmtwo$ and $\tm \toeo \tmthree$}, \ie $\tm \defeq \tmfive\tmfour \toeo \tmfive\tmfourp \eqdef \tmtwo$ and $\tm \toeo \tmfive\tmfour'' \eqdef \tmthree$ with $\tmfourp \lRew{\wesym} \tmfour \toeo \tmfour''$: by \ih, there exists $\tmfour_0 \in \Lambda_\vsub$ such that $\tmfourp \toeo \tmfour_0 \lRew{\wesym} \tmfour''$, hence $\tmtwo \toeo \tmfive\tmfour_0 \lRew{\wesym} \tmthree$;
			\item \emph{$\mathsf{ES}$ left for $\tm \toeo \tmtwo$} and \emph{$\mathsf{ES}$ right for $\tm \toeo \tmthree$}, \ie $\tm \defeq \tmfour\esub\var\tmfive \toeo \tmfourp\esub\var\tmfive \eqdef \tmtwo$ and $\tm \toeo \tmfour\esub\var\tmfivep \eqdef \tmthree$ with $\tmfour \toeo \tmfourp$ and $\tmfive \toeo \tmfivep$: then, $\tmtwo \toeo \tmfourp\esub\var\tmfivep\! \lRew{\wesym} \tmthree$;
			\item \emph{$\mathsf{ES}$ left for both $\tm \toe \tmtwo$ and $\tm \toe \tmthree$}, \ie $\tm \defeq \tmfour\esub\var\tmfive \toe \tmfourp\esub\var\tmfive \eqdef \tmtwo$ and $\tm \toe \tmfour''\esub\var\tmfive \eqdef \tmthree$ with $\tmfourp \lRew{\esym} \tmfour \toe \tmfour''$: by \ih, there exists $\tmfour_0 \in \Lambda_\vsub$ such that $\tmfourp \toe \tmfour_0 \lRew{\esym} \tmfour''$, hence $\tmtwo \toe \tmfour_0\esub\var\tmfive \lRew{\esym} \tmthree$;
			\item \emph{$\mathsf{ES}$ right for both $\tm \toe \tmtwo$ and $\tm \toe \tmthree$}, \ie $\tm \defeq \tmfive\esub\var\tmfour \toe \tmfive\esub\var\tmfourp \eqdef \tmtwo$ and $\tm \toe \tmfive\esub\var{\tmfour''} \eqdef \tmthree$ with $\tmfourp \lRew{\esym} \tmfour \toe \tmfour''$: by \ih, there exists $\tmfour_0 \in \Lambda_\vsub$ such that $\tmfourp \toe \tmfour_0 \lRew{\esym} \tmfour''$, hence $\tmtwo \toe \tmfive\esub\var{\tmfour_0} \lRew{\esym} \tmthree$.
		\end{itemize}
		
		\smallskip
		
		Note that in \cite[Lemma~11]{AccattoliPaolini12} it has just been proved the strong confluence of $\tovsub$, not of $\tom$ or $\toe$.
		
		\item We show that $\toeo$ and $\tomo$ strongly commute, \ie if $\tmtwo \lRew{\wesym} \tm \tomo \tmthree$, then $\tmtwo \neq \tmthree$ and there is $\tmp \in \Lambda_\vsub$ such that $\tmtwo \tomo \tmp \lRew{\wesym} \tmthree$. 
		The proof is by induction on the definition of $\tm \toeo \tmtwo$. 
		The proof that $\tmtwo \neq \tmthree$ is left to the reader.
		Since the $\toe$ and $\tom$ cannot reduce under $\l$'s, all values are $\omsym$-normal and $\oesym$-normal. So, there are the following cases.
		\begin{itemize}
			\item \emph{Step at the Root for $\tm \toeo \tmtwo$} and \emph{$\mathsf{ES}$ left for $\tm \tomo \tmthree$}, \ie $\tm \defeq \tmfour\esub\varthree{\sctxp\val} \toeo \sctxp{\tmfour\isub\varthree{\val}} \eqdef \tmtwo$ and $\tm \tomo \tmfourp\esub\varthree{\sctxp{\val}} \eqdef \tmthree$ with $\tmfour \tomo \tmfourp$: then 
			$$
			\tmtwo \tomo \sctxp{\tmfourp\isub\varthree{\val}} \lRew{\wesym} \tmtwo
			$$
			\item \emph{Step at the Root for $\tm \toeo \tmtwo$} and \emph{$\mathsf{ES}$ right for $\tm \tomo \tmthree$}, \ie 
			\begin{align*}
			\tm &\defeq \tmfour\esub\varthree{\val\esub{\var_1}{\tm_1}\dots\esub{\var_n}{\tm_n}} \\
			&\toeo \tmfour\isub\varthree{\val}\esub{\var_1}{\tm_1}\dots\esub{\var_n}{\tm_n} \eqdef \tmtwo
			\end{align*}
			and $\tm \tomo \tmfour\esub\varthree{\val\esub{\var_1}{\tm_1}\dots\esub{\var_j}{\tmp_j}\dots\esub{\var_n}{\tm_n}} \eqdef \tmthree$
			for some $n > 0$, and $\tm_j \tomo \tmp_j$ for some $1 \leq j \leq n$: then, $\tmtwo \tomo \tmfour\isub\varthree{\val}\esub{\var_1}{\tm_1}\dots\esub{\var_j}{\tmp_j}\dots\esub{\var_n}{\tm_n} \lRew{\wesym} \tmthree$;
			
			\item \emph{Application Left for $\tm \toe \tmtwo$} and \emph{Application Right for $\tm \tom \tmthree$}, \ie $\tm \defeq \tmfour\tmfive \toeo \tmfourp\tmfive \eqdef \tmtwo$ and $\tm \tomo \tmfour\tmfivep \eqdef \tmthree$ with $\tmfour \toe \tmfourp$ and $\tmfive \tomo \tmfivep$: then, $\tm \tomo \tmfourp\tmfivep \lRew{\wesym} \tmtwo$;
			\item \emph{Application Left for both $\tm \toeo \tmtwo$ and $\tm \tomo \tmthree$}, \ie $\tm \defeq \tmfour\tmfive \toeo \tmfourp\tmfive \eqdef \tmtwo$ and $\tm \tomo \tmfour''\tmfive \eqdef \tmthree$ with $\tmfourp \lRew{\wesym} \tmfour \tomo \tmfour''$: by \ih, there exists $\tmsix \in \Lambda_\vsub$ such that $\tmfourp \tomo \tmsix \lRew{\wesym} \tmfour''$, hence $\tmtwo \tomo \tmsix\tmfive \lRew{\wesym} \tmthree$;
			\item \emph{Application Left for $\tm \toeo \tmtwo$} and \emph{Step at the Root for $\tm \tomo \tmthree$}, \ie $\tm \defeq (\la\var\tmfive){\esub{\var_1}{\tm_1}\dots\esub{\var_n}{\tm_n}}\tmfour \toeo (\la\var\tmfive){\esub{\var_1}{\tm_1}\dots\esub{\var_j}{\tmp_j}\dots\esub{\var_n}{\tm_n}}\tmfour \eqdef \tmtwo$ with $n > 0$ and $\tm_j \toeo \tmp_j$ for some $1 \leq j \leq n$, and 
			$$
			\tm \tomo \allowbreak \tmfive\esub\var\tmfour{\esub{\var_1}{\tm_1}\dots\esub{\var_n}{\tm_n}} \eqdef \tmthree
			$$ 
			Then,
			\begin{equation*}
			\tmtwo \tomo \tmfive\esub\var\tmfour{\esub{\var_1}{\tm_1}\dots\esub{\var_j}{\tmp_j}\dots\esub{\var_n}{\tm_n}} \lRew{\wesym} \tmthree;
			\end{equation*}
			\item \emph{Application Right for $\tm \toeo \tmtwo$} and \emph{Application Left for $\tm \tom \tmthree$}, \ie $\tm \defeq \tmfive\tmfour \toeo \tmfive\tmfourp \eqdef \tmtwo$ and $\tm \tomo \tmfivep\tmfour \eqdef \tmthree$ with $\tmfour \toeo \tmfourp$ and $\tmfive \tomo \tmfivep$: then, $\tmtwo \tomo \tmfivep\tmfourp \lRew{\wesym} \tmthree$;
			\item \emph{Application Right for both $\tm \toeo \tmtwo$ and $\tm \tomo \tmthree$}, \ie $\tm \defeq \tmfive\tmfour \toeo \tmfive\tmfourp \eqdef \tmtwo$ and $\tm \tomo \tmfive\tmfour'' \eqdef \tmthree$ with $\tmfourp \lRew{\wesym} \tmfour \tomo \tmfour''$: by \ih, there exists $\tmsix \in \Lambda_\vsub$ such that $\tmfourp \tomo \tmsix \lRew{\wesym} \tmfour''$, hence $\tmtwo \tomo \tmfive\tmsix \lRew{\wesym} \tmthree$;
			\item \emph{Application Right for $\tm \toeo \tmtwo$} and \emph{Step at the Root for $\tm \tomo \tmthree$}, \ie $\tm \defeq \sctxp{\la\var\tmfive}\tmfour \toeo \sctxp{\la\var\tmfive}\tmfourp \eqdef \tmtwo$ with $\tmfour \toeo \tmfourp$\!, and $\tm \tomo \allowbreak \sctxp{\tmfive\esub\var\tmfour} \eqdef \tmthree$: then, $\tmtwo \tomo \sctxp{\tmfive\esub\var\tmfourp} \lRew{\wesym} \tmthree$;
			\item \emph{$\mathsf{ES}$ left for $\tm \toeo \tmtwo$} and \emph{$\mathsf{ES}$ right for $\tm \tomo \tmthree$}, \ie $\tm \defeq \tmfour\esub\var\tmfive \toeo \tmfourp\esub\var\tmfive \eqdef \tmtwo$ and $\tm \tomo \tmfour\esub\var\tmfivep \eqdef \tmthree$ with $\tmfour \toe \tmfourp$ and $\tmfive \tomo \tmfivep$: then, $\tmtwo \tomo \tmfourp\esub\var\tmfivep \lRew{\wesym} \tmthree$;
			\item \emph{$\mathsf{ES}$ left for both $\tm \toeo \tmtwo$ and $\tm \tomo \tmthree$}, \ie $\tm \defeq \tmfour\esub\var\tmfive \toe \tmfourp\esub\var\tmfive \eqdef \tmtwo$ and $\tm \tomo \tmfour''\esub\var\tmfive \eqdef \tmthree$ with $\tmfourp \lRew{\wesym} \tmfour \tomo \tmfour''$: by \ih,there is $\tmsix \in \Lambda_\vsub$ such that $\tmfourp \tomo \tmsix \lRew{\wesym} \tmfour''$, hence $\tmtwo \tomo \tmsix\esub\var\tmfive \lRew{\wesym} \tmthree$;
			\item \emph{$\mathsf{ES}$ right for $\tm \toeo \tmtwo$} and \emph{$\mathsf{ES}$ left for $\tm \tomo \tmthree$}, \ie $\tm \defeq \tmfive\esub\var\tmfour \toeo \tmfive\esub\var\tmfourp \eqdef \tmtwo$ and $\tm \tomo \tmfivep\esub\var\tmfour \eqdef \tmthree$ with $\tmfour \toeo \tmfourp$ and $\tmfive \tomo \tmfivep$: then, $\tmtwo \tomo \tmfivep\esub\var\tmfourp \lRew{\wesym} \tmthree$;
			\item \emph{$\mathsf{ES}$ right for both $\tm \toeo \tmtwo$ and $\tm \tomo \tmthree$}, \ie $\tm \defeq \tmfive\esub\var\tmfour \toeo \tmfive\esub\var{\tmfourp} \eqdef \tmtwo$ and $\tm \tomo \tmfive\esub\var{\tmfour''} \eqdef \tmthree$ with $\tmfour \lRew{\esym} \tmfourp \tomo \tmfour''$: by \ih, there is $\tmsix \in \Lambda_\vsub$ such that $\tmfour \tomo \tmsix \lRew{\wesym} \tmfour''$, so $\tmtwo \tomo \tmfive\esub\var\tmsix \lRew{\wesym} \tmthree$.
			\qedhere
		\end{itemize}
	\end{enumerate}
\end{proof}

\section{Proofs of Section~\ref*{sect:external-strategy} (External Strategy)}
\label{app:external-strategy-proofs}

\begin{lemma}[Properties of \pointed terms]
  \label{l:properties-of-rigid-terms}
  \hfill
  \begin{enumerate}
    \item \label{p:properties-of-rigid-terms-plugging-is-rigid} Given $\tm \in \vsubterms$ and a rigid context $\rctx$, $\rctxp{\tm}$ is a \pointed term.
    
    \item \label{p:properties-of-rigid-terms-open-strategy-preserves-rigid} Let $\rtm$ be a \pointed term and $\tm \in \vsubterms$ such that $\rtm \tovsubo \tm$. Then $\tm$ is \pointed.
    
    \item \label{p:properties-of-rigid-terms-strong-strategy-preserves-rigid} Let $\rtm$ be a \pointed term and $\tm \in \vsubterms$ such that $\rtm \tovsubs \tm$. Then $\tm$ is \pointed.
  
    
  \end{enumerate}
  \end{lemma}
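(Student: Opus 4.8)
The plan is to prove the three items in order. Item~1 is a direct structural induction on the rigid context $\ictx$, reading off the pointed-term grammar. In the base case $\ictx = \ptm\,\strongctx$ we have $\ictx\ctxholep\tm = \ptm\,(\strongctx\ctxholep\tm)$, which is pointed by the application clause $\ptm\,\tm'$ (the argument may be any term). The three inductive cases $\ictx'\,\tm'$, $\ictx'\esub\var\ptm$ and $\ptm\esub\var\ictx'$ each apply the induction hypothesis to the strictly smaller rigid context $\ictx'$, obtaining a pointed $\ictx'\ctxholep\tm$, and then close with the application or the ES clause of the grammar. No obstacle arises here.

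The engine behind items~2 and~3 is the following observation: \emph{no pointed term has the form $\subctx\ctxholep\val$} for a substitution context $\subctx$ and a value $\val$. I prove it by induction on the pointed term. A variable is neither a value nor an ES, while every $\subctx\ctxholep\val$ is one of the two; an application $\ptm\,\tm$ cannot equal $\subctx\ctxholep\val$, since the latter is never an application; and if $\ptm\esub\var\ptmtwo = \subctx\ctxholep\val$ then necessarily $\subctx = \subctx'\esub\var\ptmtwo$ and $\ptm = \subctx'\ctxholep\val$, contradicting the induction hypothesis. Concretely, this says that a pointed term is never the function $\subctx\ctxholep{\la\var\tm}$ of a top-level multiplicative redex, nor the content $\subctx\ctxholep\val$ of a top-level exponential redex.

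Item~2 is a structural induction on the pointed term $\ptm$, splitting the open step $\ptm \tovsubo \tm$ according to whether the underlying open context is the hole (a root step) or lies in a subterm. The variable case is vacuous. If $\ptm = \ptm_1\,\tm'$, a root multiplicative step is impossible by the key fact (it would force $\ptm_1 = \subctx\ctxholep{\la\var s}$), a step inside $\ptm_1$ yields a pointed residual $\ptm_1'$ by the induction hypothesis so that $\ptm_1'\,\tm'$ is pointed, and a step inside the argument $\tm'$ leaves the pointed head $\ptm_1$ in place, so the result is pointed outright. If $\ptm = \ptm_1\esub\var\ptm_2$, a root exponential step is excluded by the key fact (it would force $\ptm_2 = \subctx\ctxholep\val$), and the two subterm cases close by the induction hypothesis, $\ptm_1$ and $\ptm_2$ both being pointed.

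Item~3 is once more an induction on $\ptm$, now additionally analyzing the external context $\strongctx$ witnessing the step, say $\ptm = \strongctx\ctxholep{u} \tovsubs \strongctx\ctxholep{u'} = \tm$ with $u \tovsubo u'$. If $\strongctx = \ctxhole$ the external step is just an open step and item~2 applies. The production $\strongctx = \la\var\strongctx'$ is impossible, since a pointed term is never an abstraction; likewise the productions of $\strongctx$ (and, when $\strongctx = \ictx$, of the rigid context) whose outermost shape is an application or an ES must syntactically match the shape of $\ptm$, discarding the ill-shaped combinations. In every surviving case the redex sits either (i) in an argument position, where the pointed head is untouched and the residual is pointed directly, or (ii) inside a proper pointed subterm $\ptm_1$ performing an external step $\ptm_1 \tovsubs \ptm_1'$ --- here I use crucially that rigid contexts are themselves external contexts (the clause $\strongctx \grameq \cdots \mid \ictx$), so this displaced step really is a $\tovsubs$ step --- and then the induction hypothesis gives $\ptm_1'$ pointed, closing with the matching pointed clause. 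The main obstacle is precisely this bookkeeping: cross-matching the five external and four rigid productions against the two non-trivial shapes of $\ptm$, and correctly recognizing when a displaced redex is an honest $\tovsubs$ step on a strictly smaller pointed subterm, so that the induction hypothesis is available.
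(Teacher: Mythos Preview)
Your proof is correct and essentially matches the paper's argument. The only organisational difference is that the paper inducts on the evaluation context (the open context $\weakctx$ for item~2, the external context $\strongctx$ for item~3) rather than on the pointed term $\ptm$; your explicit isolation of the key fact ``no pointed term has the form $\subctx\ctxholep\val$'' is something the paper uses inline without stating it separately.
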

  
  \begin{proof}~
  \begin{enumerate}
    \item By induction on the definition of $\rctx$.
    
    \item Let $\weakctx$ be an open evaluation context such that $\rtm = \weakctxp{\tmtwo} \tovsubo \weakctxp{\tmtwop} = \tm$, with $\tmtwo \rtom \tmtwop$ or $\tmtwo \rtoe \tmtwop$. We prove this for $\tomo$ by structural induction on $\weakctx$; the proof for $\toeo$ follows the same schema.
    \begin{itemize}
      \item \emph{Empty context}; \ie, $\weakctx = \ctxhole$. This case is not possible, because it would imply that $\rtm = \tmtwo = \sctxp{\la{\var}{\tmthree}} \tmfour$, which is not a \pointed term.
      
      \item \emph{Application right}; \ie, $\weakctx = \tmfour \weakctxtwo$. 
      As $\rtm = \tmfour \weakctxtwop{\tmtwo}$, then $\tmfour$ is a \pointed term, and so $\tmfour \weakctxtwop{\tmtwop} = \tm$ is \pointed.
      
      \item \emph{Application left}; \ie, $\weakctx = \weakctxtwo \tmfour$. 
      As $\rtm = \weakctxtwop{\tmtwo} \tmfour$ and $\weakctxtwop{\tmtwop}$ is \pointed by \ih, then $\weakctxtwop{\tmtwop} \tmfour = \tm$ is \pointed too.
      
      \item \emph{$\mathsf{ES}$ left}; \ie, $\weakctx = \weakctxtwo \esub{\var}{\tmfour}$. Since $\rtm = \weakctxtwop{\tmtwo} \esub{\var}{\tmfour}$, then both $\weakctxtwop{\tmtwo}$ and $\tmfour$ are \pointed. Moreover, $\weakctxtwop{\tmtwop}$ is \pointed by \ih, and so $\weakctxtwop{\tmtwop} \esub{\var}{\tmfour} = \tm$ is \pointed too.
      
      \item \emph{$\mathsf{ES}$ right}; \ie, $\weakctx = \tmfour \esub{\var}{\weakctxtwo}$. Since $\rtm = \tmfour \esub{\var}{\weakctxtwop{\tmtwo}}$, then both $\tmfour$ and $\weakctxtwop{\tmtwo}$ are \pointed. Moreover,  $\weakctxtwop{\tmtwop}$ is \pointed by \ih, and so $\tmfour \esub{\var}{\weakctxtwop{\tmtwop}} = \tm$ is \pointed too.
      
    \end{itemize}
    
    \item Let $\evsctx$ be a strong evaluation context such that $\rtm = \evsctxp{\tmtwo} \tovsubs \evsctxp{\tmtwop} = \tm$, with $\tmtwo \tomo \tmtwop$ or $\tmtwo \toeo \tmtwop$. We prove this for $\tmtwo \tomo \tmtwop$ by structural induction on $\evsctx$; the proof for $\toeo$ follows the same schema.
    \begin{itemize}
      \item \emph{Empty context}; \ie, $\evsctx = \ctxhole$. Then $\rtm = \tmtwo \tomo \tmtwop = \tm$ and the statement holds by \reflemmap{properties-of-rigid-terms}{open-strategy-preserves-rigid}
      
      \item \emph{Under $\lambda$-abstraction right}; \ie, $\evsctx = \la{\var}{\evsctxtwo}$. This case is not possible, because it would imply that $\rtm = \la{\var}{\evsctxtwop{\tmtwo}}$, which is not a \pointed term.
      
      \item \emph{External context, $\mathsf{ES}$ right}; \ie, $\evsctx = \tmfour \esub{\var}{\rctx}$. Since $\rtm = \tmfour \esub{\var}{\rctxp{\tmtwo}}$, then both $\tmfour$ and $\rctxp{\tmtwo}$ are \pointed terms. Moreover, $\rctxp{\tmtwop}$ is \pointed by \ih, and so $\tmfour \esub{\var}{\rctxp{\tmtwop}} = \tmthree$ is \pointed too.
      
      \item \emph{External context, $\mathsf{ES}$ left}; \ie, $\evsctx = \evsctxtwo \esub{\var}{\tmfour}$, with $\tmfour$ a \pointed term. Since $\rtm = \evsctxtwop{\tmtwo} \esub{\var}{\tmfour}$, then $\evsctxtwop{\tmtwo}$ is \pointed. Moreover, $\evsctxtwop{\tmtwop}$ is \pointed by \ih, and so $\evsctxtwop{\tmtwop} \esub{\var}{\tmfour} = \tmthree$ is \pointed too.
      
      \item \emph{Rigid context, application right}; \ie, $\evsctx = \tmfour \evsctxtwo$, with $\tmfour$ a \pointed term. Then $\tmfour \evsctxtwop{\tmtwop} = \tmthree$ is \pointed too.
      
      \item \emph{Rigid context, application left}; \ie, $\evsctx = \rctx \tmfour$. Since $\rtm = \rctxp{\tmtwo} \tmfour$, then $\rctxp{\tmtwo}$ is \pointed. Moreover, $\rctxp{\tmtwop}$ is \pointed by \ih, and so $\rctxp{\tmtwop} \tmfour$ is \pointed too.
      
      \item \emph{Rigid context, $\mathsf{ES}$ left}; \ie, $\evsctx = \rctx \esub{\var}{\tmfour}$, with $\tmfour$ a \pointed term. Since $\rtm = \rctxp{\tmtwo} \esub{\var}{\tmfour}$, then $\rctxp{\tmtwo}$ is \pointed. Moreover, $\rctxp{\tmtwop}$ is \pointed by \ih, and so $\rctxp{\tmtwop} \esub{\var}{\tmfour} = \tmthree$ is \pointed too.
      
      \item \emph{Rigid context, $\mathsf{ES}$ right}; \ie, $\evsctx = \tmfour \esub{\var}{\rctx}$, with $\tmfour$ a \pointed term. Since $\rtm = \tmfour \esub{\var}{\rctxp{\tmtwo}}$, then $\rctxp{\tmtwo}$ is \pointed. Moreover,  $\rctxp{\tmtwop}$ is \pointed by \ih, and so $\tmfour \esub{\var}{\rctxp{\tmtwop}} = \tmthree$ is \pointed too.
    \end{itemize}

  \end{enumerate}
  \end{proof}

The properties stated by the next proposition are the building blocks for the proof that $\tovsubs$ is diamond, coming right next. In particular, the strong commutation of $\toms$ and $\toes$ shall ensure that the diamond of $\tovsubs$ preserves the kind of step, thus strengthening the diamond of $\tovsubs$ with the guarantee that all $\tovsubs$ sequences to normal form (if any) have the same number of $\tom$ steps.
\begin{lemma}[Basic Properties of $\tovsubs$]
	\label{l:basic-value-substitution-external}
	\hfill
	\begin{enumerate}
%
		\item\label{p:basic-value-substitution-external-tom-toe-diamond-strong} $\toms$ and $\toes$ are diamond  (separately).
		
		\item\label{p:basic-value-substitution-external-tom-toe-commute-strong}  $\toms$ and $\toes$ strongly commute.
	\end{enumerate}
\end{lemma}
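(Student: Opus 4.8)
The plan is to prove the three statements together by a single structural induction, since they share the same diagram-closing skeleton and differ only in the labels of the steps. Concretely, I would fix a term $\tm$ carrying two coinitial external steps and argue by induction on $\tm$, reading off from the grammars of external contexts $\strongctx$ and rigid contexts $\ictx$ the possible positions of each redex. The diamond of $\toms$ (resp. $\toes$) mirrors the diamond of $\tomo$ (resp. $\toeo$) from \reflemmap{basic-value-substitution}{tom-toe-diamond-open}, and the strong commutation of $\toms$ and $\toes$ mirrors that of $\tomo$ and $\toeo$ from \reflemmap{basic-value-substitution}{tom-toe-commute-open}; the point is to check that the external constraints on positions survive the firing of a redex, so that exactly the same local diagrams apply.

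The variable case is vacuous and the abstraction case $\tm = \la\var\tmtwo$ is immediate: every external redex sits under the $\lambda$ via $\la\var\strongctx$, so both steps localize to $\tmtwo$ and the diagram closes by the induction hypothesis. The interesting cases are the application $\tm = \tmtwo\tmthree$ and the ES $\tm=\tmtwo\esub\var\tmthree$. Here I would enumerate, using the mutual grammar, the slots in which an external redex can occur: a root multiplicative (resp. exponential) redex when $\tmtwo=\subctxp{\la\var s}$ (resp. $\tmthree = \subctxp\val$), a redex inside the function/body part reached through a rigid context $\ictx$, and a redex inside the argument reached either through $\ptm\strongctx$ with $\tmtwo$ pointed or directly as an open step. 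When the two redexes lie in distinct slots they are disjoint and commute by direct inspection; when they lie in the same slot the induction hypothesis applies; and when one is a root step overlapping an internal one, the configuration is exactly a step-at-the-root versus application-left/right case, already closed in the open proof of \reflemma{basic-value-substitution}, which I reuse after peeling off the external shell.

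The delicate point, and the main obstacle, is verifying that external positions are \emph{stable as residuals}: firing a root multiplicative redex turns the argument $\tmthree$ into an ES $\esub\var\tmthree$, and firing a root exponential redex performs a substitution, and in both cases one must check that the other, untouched redex is still reachable by an external (or rigid) context in the resulting term. This is precisely where the stability of pointed terms under reduction is needed: \reflemmap{properties-of-rigid-terms}{open-strategy-preserves-rigid} and \reflemmap{properties-of-rigid-terms}{strong-strategy-preserves-rigid} guarantee that a rigid head stays rigid after a step, so that an argument or ES position that was external remains external and no fresh external redex is spuriously exposed inside an applied or substituted value. Threading this preservation through the mutual induction between $\strongctx$ and $\ictx$, rather than the closing of the diagrams themselves (which is inherited from the open case), is where the bulk of the effort lies.
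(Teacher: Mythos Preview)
Your plan is essentially the paper's own proof: a structural induction on $\tm$, case analysis on the positions of the two external redexes (under $\lambda$, left/right of an application or ES, via open vs.\ rigid vs.\ external context), deferring to the open diamond and strong commutation of \reflemma{basic-value-substitution} when both redexes are open, and using \reflemmap{properties-of-rigid-terms}{open-strategy-preserves-rigid} and \reflemmap{properties-of-rigid-terms}{strong-strategy-preserves-rigid} to check that the rigidity side-conditions survive a step so that the closing step is still external. One small sharpening: your worry about ``root step overlapping an internal one'' is lighter than you suggest, because a root $\rtom$-redex on $\tmfour\tmfive$ forces $\tmfour=\sctxp{\la\var s}$, which is \emph{not} pointed, so the other external redex cannot use the $\ptm\strongctx$ or $\ictx\tm$ routes and must itself be open---hence that sub-case collapses entirely to the open diamond; the genuine work with rigidity preservation arises instead in the mixed cases where one side goes through a rigid context and the other through $\ptm\strongctx$.
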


\begin{proof} 		
	\begin{enumerate}

		\item We prove that $\toms$ is diamond, \ie if $\tmtwo \reversetoms \tm \toms \tmthree$ with $\tmtwo \neq \tmthree$ then there exists $\tmp \in \Lambda_\vsub$ such that $\tmtwo \toms \tmp \lRew{\smsym} \tmthree$ (the proof that $\toes$ is diamond is analogue). 
		The proof is by structural induction on $\tm$, doing case analysis on $\tm \toms \tmtwo$ and $\tm \toms \tmthree$: 
		\begin{itemize}
			\item \emph{Under $\lambda$-abstraction for both $\tm \toms \tmtwo$ and $\tm \toms \tmthree$}; \ie, $\tm = \la{\var}{\tmfour} \toms \la{\var}{\tmfive} = \tmtwo$ and $\tm = \la{\var}{\tmfour} \toms \la{\var}{\tmsix} = \tmthree$, 
			with $\tmfour \toms \tmfive$ and $\tmfour \toms \tmsix$. 
			By \ih there exists $\tmfourp$ such that $\tmfive \toms \tmfourp \reversetoms \tmsix$ and so $\tmtwo = \la{\var}{\tmfive} \toms \la{\var}{\tmfourp} \reversetoms \la{\var}{\tmsix} = \tmthree$.
			
			\item \emph{Application right for $\tm \toms \tmtwo$ and application left for $\tm \toms \tmthree$}; \ie, $\tm = \tmfour \tmfive \toms \tmfour \tmfivep = \tmtwo$ and $\tm = \tmfour \tmfive \toms \tmfourp \tmfive = \tmthree$. There are several sub-cases to this:
			\begin{itemize}
				\item Let $\tm = \tmfour \tmfive = \tmfour \openctxp{\tilde{\tmfive}} \tomo \tmfour \openctxp{\tilde{\tmfivep}} = \tmtwo$, with $\tilde{\tmfive} \rtom \tilde{\tmfivep}$, and $\tm = \rctxp{\tilde{\tmfour}} \tmfive \toms \rctxp{\tilde{\tmfourp}} \tmfive$, with $\tilde{\tmfour} \tomo \tilde{\tmfourp}$. Let $\tmp \defeq \rctxp{\tilde{\tmfourp}} \openctxp{\tilde{\tmfivep}}$, having that 
				$$
				\tmtwo = \rctxp{\tilde{\tmfour}} \openctxp{\tilde{\tmfivep}} \toms \tmp \reversetoms \rctxp{\tilde{\tmfourp}} \openctxp{\tilde{\tmfive}} = \tmthree
				$$
				Note that $\tmtwo \toms \tmp$ holds because every rigid context is an open context.

				\item Let $\tm = \tmfour \tmfive = \tmfour \openctx_{1} \ctxholep{\tilde{\tmfive}} \tomo \tmfour \openctx_{1} \ctxholep{\tilde{\tmfivep}} = \tmtwo$, with $\tilde{\tmfive} \rtom \tilde{\tmfivep}$, and $\tm = \openctx_{2} \ctxholep{\tilde{\tmfour}} \tmfive \tomo \openctx_{2} \ctxholep{\tilde{\tmfourp}} \tmfive$, with $\tilde{\tmfour} \rtom \tilde{\tmfourp}$. 
				Then the statement holds by \reflemmap{basic-value-substitution}{tom-toe-diamond-open}
				
				\item Let $\tm = \tmfour \evsctxp{\tilde{\tmfive}} \toms \tmfour \evsctxp{\tilde{\tmfivep}} = \tmtwo$, with $\tmfour$ a rigid term and $\tilde{\tmfive} \tomo \tilde{\tmfivep}$, and $\tm = \rctxp{\tilde{\tmfour}} \tmfive \toms \rctxp{\tilde{\tmfourp}} \tmfive = \tmthree$, with $\tilde{\tmfour} \tomo \tilde{\tmfourp}$. Let $\tmp \defeq \rctxp{\tilde{\tmfourp}} \evsctxp{\tilde{\tmfivep}}$, having that 
				$$
				\tmtwo = \rctxp{\tilde{\tmfour}} \evsctxp{\tilde{\tmfivep}} \toms\tmp \reversetoms \rctxp{\tilde{\tmfourp}} \evsctxp{\tilde{\tmfive}} = \tmthree
				$$
				
				Note that $\tmp \reversetoms \tmthree$ holds because $\rctxp{\tilde{\tmfourp}}$ is a \pointed term ---by \reflemmap{properties-of-rigid-terms}{plugging-is-rigid}.
				
				\item Let $\tm = \tmfour \evsctxp{\tilde{\tmfive}} \toms \tmfour \evsctxp{\tilde{\tmfivep}} = \tmtwo$, with $\tmfour$ a rigid term and $\tilde{\tmfive} \tomo \tilde{\tmfivep}$, and $\tm = \openctxp{\tilde{\tmfour}} \tmfive \toms \openctxp{\tilde{\tmfourp}} \tmfive$, with $\tilde{\tmfour} \rtom \tilde{\tmfourp}$. Let $\tmp \defeq \openctxp{\tilde{\tmfourp}} \evsctxp{\tilde{\tmfivep}}$, having that
				$$
				\tmtwo = \openctxp{\tilde{\tmfour}} \evsctxp{\tilde{\tmfivep}} \toms \tmp \reversetoms \openctxp{\tilde{\tmfourp}} \evsctxp{\tilde{\tmfive}} = \tmthree
				$$
				
				Note that $\tmp \reversetoms \tmthree$ holds because the fact that $\tmfour$ is a \pointed term and that $\tmfour = \openctxp{\tilde{\tmfour}} \tomo \openctxp{\tilde{\tmfourp}}$ imply that $\openctxp{\tilde{\tmfourp}}$ is a \pointed term ---by \reflemmap{properties-of-rigid-terms}{open-strategy-preserves-rigid}.
				
			\end{itemize}
			
			\item \emph{Application right for both $\tm \toms \tmtwo$ and $\tm \toms \tmthree$}; \ie, $\tm = \tmfour \tmfive \toms \tmfour \tmfivep = \tmtwo$ and $\tm = \tmfour \tmfive \toms \tmfour \tmfivepp = \tmthree$. By \ih there exists $\tmsix \in \vsubterms$ such that $\tmfivep \toms \tmsix \reversetoms \tmfivepp$. The analysis of the sub-cases, depending on the open/strong/rigid type contexts involved in $\tm \toms \tmtwo$ and $\tm \toms \tmthree$, follows the same schema as for the previous item, all showing that
			$$
			\tmtwo = \tmfour \tmfivep \toms \tmfour \tmsix \reversetoms \tmfour \tmfivepp = \tmthree
			$$

			\item \emph{Application left for both $\tm \toms \tmtwo$ and $\tm \toms \tmthree$}; \ie, $\tm = \tmfour \tmfive \toms \tmfourp \tmfive = \tmtwo$ and $\tm = \tmfour \tmfive \toms \tmfourpp \tmfive = \tmthree$. By \ih there exists $\tmsix \in \vsubterms$ such that $\tmfourp \toms \tmsix \reversetoms \tmfourpp$. The analysis of the sub-cases, depending on the open/strong/rigid type contexts involved in $\tm \toms \tmtwo$ and $\tm \toms \tmthree$, follows the same schema as for the previous item, all showing that
			$$
			\tmtwo = \tmfourp \tmfive \toms \tmsix \tmfive \reversetoms \tmfourpp \tmfive = \tmthree
			$$
			
			\item \emph{$\mathsf{ES}$ right for $\tm \toms \tmtwo$ and $\mathsf{ES}$ left for $\tm \toms \tmthree$}; \ie, $\tm = \tmfour \esub{\var}{\tmfive} \toms \tmfour \esub{\var}{\tmfivep} = \tmtwo$ and $\tm = \tmfour \esub{\var}{\tmfive} \toms \tmfourp \esub{\var}{\tmfive}$. There are several sub-cases to this:
			\begin{itemize}
				\item Let $\tm = \tmfour \esub{\var}{\openctxp{\tilde{\tmfive}}} \toms \tmfour \esub{\var}{\openctxp{\tilde{\tmfivep}}} = \tmtwo$, with $\tilde{\tmfive} \rtom \tilde{\tmfivep}$, and $\tm = \openctxp{\tilde{\tmfour}} \esub{\var}{\tmfive} \toms \openctxp{\tilde{\tmfourp}} \esub{\var}{\tmfive} = \tmthree$, with $\tilde{\tmfour} \rtom \tilde{\tmfourp}$. Then the statement holds by \reflemmap{basic-value-substitution}{tom-toe-diamond-open}.
				
				\item Let $\tm = \tmfour \esub{\var}{\openctxp{\tilde{\tmfive}}} \toms \tmfour \esub{\var}{\openctxp{\tilde{\tmfivep}}} = \tmtwo$, with $\tilde{\tmfive} \rtom \tilde{\tmfivep}$, and $\tm = \evsctxp{\tilde{\tmfour}} \esub{\var}{\tmfive} \toms \evsctxp{\tilde{\tmfourp}} \esub{\var}{\tmfive} = \tmthree$, with $\tilde{\tmfour} \tomo \tilde{\tmfourp}$ and $\tmfive$ is a \pointed term. Let $\tmp \defeq \evsctxp{\tilde{\tmfourp}} \esub{\var}{\openctxp{\tilde{\tmfivep}}}$, having that 
				$$
				\tmtwo = \evsctxp{\tilde{\tmfour}} \esub{\var}{\openctxp{\tilde{\tmfivep}}} \toms \tmp \reversetoms \evsctxp{\tilde{\tmfourp}} \esub{\var}{\openctxp{\tilde{\tmfive}}} = \tmthree
				$$
				
				Note that $\tmtwo \toms \tmp$ holds because the fact that$\tmfive$ is a \pointed term and that $\tmfive = \openctxp{\tilde{\tmfive}} \toms \openctxp{\tilde{\tmfivep}}$ imply that  $\openctxp{\tilde{\tmfivep}}$ is a \pointed term ---by \reflemmap{properties-of-rigid-terms}{strong-strategy-preserves-rigid}.
				
				\item Let $\tm = \tmfour \esub{\var}{\openctxp{\tilde{\tmfive}}} \toms \tmfour \esub{\var}{\openctxp{\tilde{\tmfivep}}} = \tmtwo$, with $\tilde{\tmfive} \rtom \tilde{\tmfivep}$, and $\tm = \rctxp{\tilde{\tmfour}} \esub{\var}{\tmfive} \toms \rctxp{\tilde{\tmfourp}} \esub{\var}{\tmfive} = \tmthree$, with $\tilde{\tmfour} \tomo \tilde{\tmfourp}$ and $\tmfive$ is a \pointed term. Let $\tmp \defeq \rctxp{\tilde{\tmfourp}} \esub{\var}{\openctxp{\tilde{\tmfivep}}}$, having that 
				$$
				\tmtwo = \rctxp{\tilde{\tmfour}} \esub{\var}{\openctxp{\tilde{\tmfivep}}} \toms \tmp \reversetoms \rctxp{\tilde{\tmfourp}} \esub{\var}{\openctxp{\tilde{\tmfive}}} = \tmthree
				$$
				
				Note that $\tmtwo \toms \tmp$ holds because the fact that $\tmfive$ is a \pointed term and that $\tmfive = \openctxp{\tilde{\tmfive}} \toms \openctxp{\tilde{\tmfivep}}$ imply $\openctxp{\tilde{\tmfivep}}$ is a \pointed term ---by \reflemmap{properties-of-rigid-terms}{open-strategy-preserves-rigid}.
				
				\item Let $\tm = \tmfour \esub{\var}{\rctxp{\tilde{\tmfive}}} \toms \tmfour \esub{\var}{\rctxp{\tilde{\tmfivep}}} = \tmtwo$, with $\tilde{\tmfive} \tomo \tilde{\tmfivep}$, and $\tm = \openctxp{\tilde{\tmfour}} \esub{\var}{\tmfive} \toms \openctxp{\tilde{\tmfourp}} \esub{\var}{\tmfive} = \tmthree$, with $\tilde{\tmfour} \rtom \tilde{\tmfourp}$. Let $\tmp \defeq \openctxp{\tilde{\tmfourp}} \esub{\var}{\rctxp{\tilde{\tmfivep}}}$, having that 
				$$
				\tmtwo = \openctxp{\tilde{\tmfour}} \esub{\var}{\rctxp{\tilde{\tmfivep}}} \toms \tmp \reversetoms \openctxp{\tilde{\tmfourp}} \esub{\var}{\rctxp{\tilde{\tmfive}}} = \tmthree
				$$
				
				\item Let $\tm = \tmfour \esub{\var}{\rctxp{\tilde{\tmfive}}} \toms \tmfour \esub{\var}{\rctxp{\tilde{\tmfivep}}} = \tmtwo$, with $\tilde{\tmfive} \tomo \tilde{\tmfivep}$, and $\tm = \evsctxp{\tilde{\tmfour}} \esub{\var}{\tmfive} \toms \evsctxp{\tilde{\tmfourp}} \esub{\var}{\tmfive} = \tmthree$, with $\tilde{\tmfour} \tomo \tilde{\tmfourp}$ and $\tmfive$ is a \pointed term. Let $\tmp \defeq \evsctxp{\tilde{\tmfourp}} \esub{\var}{\rctxp{\tilde{\tmfivep}}}$, having that 
				$$
				\tmtwo = \evsctxp{\tilde{\tmfour}} \esub{\var}{\rctxp{\tilde{\tmfivep}}} \toms \tmp \reversetoms \evsctxp{\tilde{\tmfourp}} \esub{\var}{\rctxp{\tilde{\tmfive}}} = \tmthree
				$$
				Note that $\tmtwo \toms \tmp$ holds because the fact that $\tmfive$ is a \pointed term and that $\tmfive = \rctxp{\tilde{\tmfive}} \toms \rctxp{\tilde{\tmfivep}}$ imply that $\rctxp{\tilde{\tmfivep}}$ is a \pointed term ---by \reflemmap{properties-of-rigid-terms}{strong-strategy-preserves-rigid}.
				
				\item Let $\tm = \tmfour \esub{\var}{\rctx_{1} \ctxholep{\tilde{\tmfive}}} \toms \tmfour \esub{\var}{\rctx_{1} \ctxholep{\tilde{\tmfivep}}} = \tmtwo$, with $\tilde{\tmfive} \tomo \tilde{\tmfivep}$, and $\tm = \rctx_{2} \ctxholep{\tilde{\tmfour}} \esub{\var}{\tmfive} \toms \rctx_{2} \ctxholep{\tilde{\tmfourp}} \esub{\var}{\tmfive} = \tmthree$, with $\tilde{\tmfour} \tomo \tilde{\tmfourp}$ and $\tmfive$ is a \pointed term. Let $\tmp \defeq \rctx_{2} \ctxholep{\tilde{\tmfourp}} \esub{\var}{\rctx_{1} \ctxholep{\tilde{\tmfivep}}}$, having that 
				$$
				\tmtwo = \rctx_{2} \ctxholep{\tilde{\tmfour}} \esub{\var}{\rctx_{1} \ctxholep{\tilde{\tmfivep}}} \toms \tmp \reversetoms \rctx_{2} \ctxholep{\tilde{\tmfourp}} \esub{\var}{\rctx_{1} \ctxholep{\tilde{\tmfive}}} = \tmthree
				$$
				Note that $\tmtwo \toms \tmp$ holds because the fact that $\tmfive$ is a \pointed term and that $\tmfive = \rctx_{1} \ctxholep{\tilde{\tmfive}} \toms \rctx_{1} \ctxholep{\tilde{\tmfivep}}$ imply that $\rctx_{1} \ctxholep{\tilde{\tmfivep}}$ is a \pointed term ---by \reflemmap{properties-of-rigid-terms}{strong-strategy-preserves-rigid}.
				
				\item Let $\tm = \tmfour \esub{\var}{\rctxp{\tilde{\tmfive}}} \toms \tmfour \esub{\var}{\rctxp{\tilde{\tmfivep}}} = \tmtwo$, with $\tilde{\tmfive} \tomo \tilde{\tmfivep}$ and $\tmfour$ is a \pointed term, and $\tm = \openctxp{\tilde{\tmfour}} \esub{\var}{\tmfive} \toms \openctxp{\tilde{\tmfourp}} \esub{\var}{\tmfive} = \tmthree$, with $\tilde{\tmfour} \rtom \tilde{\tmfourp}$. Let $\tmp \defeq \openctxp{\tilde{\tmfourp}} \esub{\var}{\rctxp{\tilde{\tmfivep}}}$, having that
				$$
				\tmtwo = \openctxp{\tilde{\tmfour}} \esub{\var}{\rctxp{\tilde{\tmfivep}}} \toms \tmp \reversetoms \openctxp{\tilde{\tmfourp}} \esub{\var}{\rctxp{\tilde{\tmfive}}} = \tmthree
				$$
				Note that $\tmthree \reversetoms \tmp$ holds because the fact that $\tmfive$ is a \pointed term and that $\tmfive = \rctxp{\tilde{\tmfive}} \toms \rctxp{\tilde{\tmfivep}}$ imply that $\rctxp{\tilde{\tmfivep}}$ is a \pointed term ---by \reflemmap{properties-of-rigid-terms}{strong-strategy-preserves-rigid}.
				
				\item Let $\tm = \tmfour \esub{\var}{\rctxp{\tilde{\tmfive}}} \toms \tmfour \esub{\var}{\rctxp{\tilde{\tmfivep}}} = \tmtwo$, with $\tilde{\tmfive} \tomo \tilde{\tmfivep}$ and $\tmfour$ is a \pointed term, and $\tm = \evsctxp{\tilde{\tmfour}} \esub{\var}{\tmfive} \toms \evsctxp{\tilde{\tmfourp}} \esub{\var}{\tmfive} = \tmthree$, with $\tilde{\tmfour} \tomo \tilde{\tmfourp}$ and $\tmfive$ is a \pointed term. Let $\tmp \defeq \evsctxp{\tilde{\tmfourp}} \esub{\var}{\rctxp{\tilde{\tmfivep}}}$, having that 
				$$
				\tmtwo = \evsctxp{\tilde{\tmfour}} \esub{\var}{\rctxp{\tilde{\tmfivep}}} \toms \tmp \reversetoms \evsctxp{\tilde{\tmfourp}} \esub{\var}{\rctxp{\tilde{\tmfive}}} = \tmthree
				$$ 
				Note that $\tmtwo \toms \tmp$ holds because the fact that $\tmfive$ is a \pointed term and that $\tmfive = \rctxp{\tilde{\tmfive}} \toms \rctxp{\tilde{\tmfivep}}$ imply that $\rctxp{\tilde{\tmfivep}}$ is a \pointed term ---by \reflemmap{properties-of-rigid-terms}{strong-strategy-preserves-rigid}. Moreover, note that $\tmp \reversetoms \tmthree$ holds because the fact that $\tmfour$ is a \pointed term and that $\tmfour = \evsctxp{\tilde{\tmfour}} \toms \evsctxp{\tilde{\tmfourp}}$ imply that $\evsctxp{\tilde{\tmfourp}}$ is a \pointed term ---by \reflemmap{properties-of-rigid-terms}{strong-strategy-preserves-rigid}.
				
				\item Let $\tm = \tmfour \esub{\var}{\rctxp{\tilde{\tmfive}}} \toms \tmfour \esub{\var}{\rctxp{\tilde{\tmfivep}}} = \tmtwo$, with $\tilde{\tmfive} \tomo \tilde{\tmfivep}$ and $\tmfour$ is a \pointed term, and $\tm = \rctxp{\tilde{\tmfour}} \esub{\var}{\tmfive} \toms \rctxp{\tilde{\tmfourp}} \esub{\var}{\tmfive} = \tmthree$, with $\tilde{\tmfour} \tomo \tilde{\tmfourp}$ and $\tmfive$ is a \pointed term. Let $\tmp \defeq \rctxp{\tilde{\tmfourp}} \esub{\var}{\rctxp{\tilde{\tmfivep}}}$, having that 
				$$
				\tmtwo = \rctxp{\tilde{\tmfour}} \esub{\var}{\rctxp{\tilde{\tmfivep}}} \toms \tmp \reversetoms \rctxp{\tilde{\tmfourp}} \esub{\var}{\rctxp{\tilde{\tmfive}}} = \tmthree
				$$
				Note that $\tmtwo \toms \tmp$ holds because the fact that $\tmfive$ is a \pointed term and that $\tmfive = \rctxp{\tilde{\tmfive}} \toms \rctxp{\tilde{\tmfivep}}$ imply that $\rctxp{\tilde{\tmfivep}}$ is a \pointed term ---by \reflemmap{properties-of-rigid-terms}{strong-strategy-preserves-rigid}. Moreover, note that $\tmp \reversetoms \tmthree$ holds because the fact that $\tmfour$ is a \pointed term and that $\tmfour = \rctxp{\tilde{\tmfour}} \toms \rctxp{\tilde{\tmfourp}}$ imply that $\rctxp{\tilde{\tmfourp}}$ is a \pointed term ---by \reflemmap{properties-of-rigid-terms}{strong-strategy-preserves-rigid}.
				
			\end{itemize}
			
			\item \emph{$\mathsf{ES}$ right for both $\tm \toms \tmtwo$ and $\tm \toms \tmthree$}; \ie, $\tm = \tmfour \esub{\var}{\tmfive} \toms \tmfour \esub{\var}{\tmfivep} = \tmtwo$ and $\tm = \tmfour \esub{\var}{\tmfive} \toms \tmfour \esub{\var}{\tmfivepp} = \tmthree$. By \ih there exists $\tmsix \in \vsubterms$ such that $\tmfivep \toms \tmsix \reversetoms \tmfivepp$. The analysis of the sub-cases, depending on the open/strong/rigid type contexts involved in $\tm \toms \tmtwo$ and $\tm \toms \tmthree$, follows the same schema as for the previous item, all showing that
			$$
			\tmtwo = \tmfour \esub{\var}{\tmfivep} \toms \tmfour \esub{\var}{\tmsix} \reversetoms \tmfour \esub{\var}{\tmfivepp} = \tmthree
			$$
			
			\item \emph{$\mathsf{ES}$ left for both $\tm \toms \tmtwo$ and $\tm \toms \tmthree$}; \ie, $\tm = \tmfour \esub{\var}{\tmfive} \toms \tmfourp \esub{\var}{\tmfive} = \tmtwo$ and $\tm = \tmfourpp \esub{\var}{\tmfive}$. By \ih there exists $\tmsix \in \vsubterms$ such that $\tmfourp \toms \tmsix \reversetoms \tmfourpp$. The analysis of the sub-cases, depending on the open/strong/rigid type contexts involved in $\tm \toms \tmtwo$ and $\tm \toms \tmthree$, follows the same schema as for the previous item, all showing that
			$$
			\tmtwo = \tmfourp \esub{\var}{\tmfive} \toms \tmsix \esub{\var}{\tmfive} \reversetoms \tmfourpp \esub{\var}{\tmfive} = \tmthree
			$$
			
		\end{itemize}

		The proof that $\toes$ is diamond (\ie, if $\tmtwo \reversetoes \tm \toes \tmthree$ with $\tmtwo \neq \tmthree$ then there exists $\tmp \in \Lambda_\vsub$ such that $\tmtwo \toes \tmp \reversetoes \tmthree$) follows the same schema as for $\toms$.
		
		\item We show that $\toes$ and $\toms$ strongly commute; \ie, if $\tmtwo \reversetoes \tm \toms \tmthree$, then $\tmtwo \neq \tmthree$ and there is $\tmp \in \Lambda_\vsub$ such that $\tmtwo \toms \tmp \reversetoes \tmthree$. The proof is by structural induction on $\tm$, doing case analysis on $\tm \reversetoes \tmtwo$ and $\tm \toms \tmthree$:
		\begin{itemize}
			\item \emph{Under $\lambda$-abstraction for both $\tm \reversetoes \tmtwo$ and $\tm \toms \tmthree$}; \ie, $\tm = \la{\var}{\tmfive} \reversetoes \la{\var}{\tmfour} = \tmtwo$ and $\tm = \la{\var}{\tmfour} \toms \la{\var}{\tmsix} = \tmthree$, with $\tmfive \reversetoeo \tmfour \tomo \tmsix$. By \reflemmap{basic-value-substitution}{tom-toe-commute-open}, there exists $\tmfourp$ such that $\tmfive \toms \tmfourp \reversetoes \tmsix$, and so $\tmtwo = \la{\var}{\tmfive} \tomo \la{\var}{\tmfourp} \reversetoeo \la{\var}{\tmsix}$.
			
			\item \emph{Application right for $\tmtwo \reversetoes \tm$ and application left for $\tm \toms \tmthree$}; \ie, $\tmtwo = \tmfour \tmfivep \reversetoes \tmfour \tmfive = \tm$ and $\tm = \tmfour \tmfive \toms \tmfourp \tmfive = \tmthree$. There are several sub-cases to this:
			\begin{itemize}
				\item Let $\tmtwo = \tmfour \openctxp{\tilde{\tmfivep}} \reversetoes \tmfour \openctxp{\tilde{\tmfive}} = \tm$, with $\tilde{\tmfivep} \reversetoeo \tilde{\tmfive}$, and $\tm = \rctxp{\tilde{\tmfour}}  \tmfive \toms \rctxp{\tilde{\tmfourp}} \tmfive = \tmthree$, with $\tilde{\tmfour} \tomo \tilde{\tmfourp}$. Let $\tmp = \rctxp{\tilde{\tmfourp}} \openctxp{\tilde{\tmfivep}}$, having that 
				$$
				\tmtwo = \rctxp{\tilde{\tmfour}} \openctxp{\tilde{\tmfivep}} \toms \tmp \reversetoes \rctxp{\tilde{\tmfourp}} \openctxp{\tilde{\tmfive}} = \tmthree
				$$
				
				\item Let $\tmtwo = \tmfour \openctx_{1} \ctxholep{\tilde{\tmfivep}} \reversetoes \tmfour \openctx_{1} \ctxholep{\tilde{\tmfive}} = \tm$, and $\tm = \openctx_{2} \ctxholep{\tilde{\tmfour}} \tmfive \toms \openctx_{2} \ctxholep{\tilde{\tmfourp}} \tmfive = \tmthree$, with $\tilde{\tmfour} \toms \tilde{\tmfourp}$. Then the statement holds by \reflemmap{basic-value-substitution}{tom-toe-commute-open}
				
				\item Let $\tmtwo = \tmfour \evsctxp{\tilde{\tmfivep}} \reversetoes \tmfour \evsctxp{\tilde{\tmfive}} = \tm$, with $\tmfour$ a \pointed term and $\tilde{\tmfivep} \reversetoeo \tilde{\tmfive}$, and $\tm = \rctxp{\tilde{\tmfour}} \tmfive \toms \rctxp{\tilde{\tmfourp}} \tmfive = \tmthree$, with $\tilde{\tmfour} \tomo \tilde{\tmfourp}$. Let $\tmp = \rctxp{\tilde{\tmfourp}} \evsctxp{\tilde{\tmfivep}}$, having that
				$$
				\tmtwo = \rctxp{\tilde{\tmfour}} \evsctxp{\tilde{\tmfivep}} \toms \tmp \reversetoes \rctxp{\tilde{\tmfourp}} \evsctxp{\tilde{\tmfive}} = \tmthree
				$$
				Note that $\tmp \reversetoes \tmthree$ holds because $\rctxp{\tilde{\tmfourp}}$ is a \pointed term ---by \reflemmap{properties-of-rigid-terms}{strong-strategy-preserves-rigid}.
				
				\item Let $\tmtwo = \tmfour \evsctxp{\tilde{\tmfivep}} \reversetoes \tmfour \evsctxp{\tilde{\tmfive}} = \tm$, with $\tmfour$ a \pointed term and $\tilde{\tmfivep} \reversetoeo \tilde{\tmfive}$, and $\tm = \openctxp{\tilde{\tmfour}} \tmfive \toms \openctxp{\tilde{\tmfourp}} \tmfive = \tmthree$, with $\tilde{\tmfour} \rtom \tilde{\tmfourp}$. Let $\tmp = \openctxp{\tilde{\tmfourp}} \evsctxp{\tilde{\tmfivep}}$, having that
				$$
				\tmtwo = \openctxp{\tilde{\tmfour}} \evsctxp{\tilde{\tmfivep}} \toms \tmp \reversetoes \openctxp{\tilde{\tmfourp}} \evsctxp{\tilde{\tmfive}} = \tmthree
				$$
				Note that $\tmp \reversetoes \tmthree$ holds because $\openctxp{\tilde{\tmfourp}}$ is \pointed ---by \reflemmap{properties-of-rigid-terms}{open-strategy-preserves-rigid}.
				
			\end{itemize}
			
			\item \emph{Application left for $\tmtwo \reversetoes \tm$ and application right for $\tm \toms \tmthree$}; \ie, $\tmtwo = \tmfourp \tmfive \reversetoes \tmfour \tmfive = \tm$ and $\tm = \tmfour \tmfive \toms \tmfour \tmfivep = \tmthree$. There are several sub-cases to this:
			\begin{itemize}
				\item Let $\tmtwo = \rctxp{\tilde{\tmfourp}} \tmfive \reversetoes \rctxp{\tilde{\tmfour}} \tmfive = \tm$, with $\tilde{\tmfourp} \reversetoeo \tilde{\tmfour}$, and $\tm = \tmfour \openctxp{\tilde{\tmfive}} \toms \tmfour \openctxp{\tilde{\tmfivep}} = \tmthree$, with $\tilde{\tmfive} \rtom \tilde{\tmfivep}$. Let $\tmp = \rctxp{\tilde{\tmfourp}} \openctxp{\tilde{\tmfivep}}$, having that 
				$$
				\tmtwo = \rctxp{\tilde{\tmfourp}} \openctxp{\tilde{\tmfive}} \toms \tmp \reversetoes \rctxp{\tilde{\tmfour}} \openctxp{\tilde{\tmfivep}} = \tmthree
				$$
				
				\item Let $\tmtwo = \openctx_{1} \ctxholep{\tilde{\tmfourp}} \tmfive \reversetoes \openctx_{1} \ctxholep{\tilde{\tmfour}} \tmfive = \tm$, with $\tilde{\tmfourp} \reversetoeo \tilde{\tmfour}$, and $\tm = \tmfour \openctx_{2} \ctxholep{\tilde{\tmfive}} \toms \tmfour \openctx_{2} \ctxholep{\tilde{\tmfivep}} = \tmthree$, with $\tilde{\tmfive} \rtom \tilde{\tmfivep}$. Then the statement holds by \reflemmap{basic-value-substitution}{tom-toe-commute-open}
				
				\item Let $\tmtwo = \rctxp{\tilde{\tmfourp}} \tmfive \reversetoes \rctxp{\tilde{\tmfour}} \tmfive = \tm$, with $\tilde{\tmfourp} \reversetoeo \tilde{\tmfour}$, and $\tm = \tmfour \evsctxp{\tilde{\tmfive}} \toms \tmfour \evsctxp{\tilde{\tmfivep}} = \tmthree$, with $\tilde{\tmfive} \tomo \tilde{\tmfivep}$ and $\tmfour$ a \pointed term. Let $\tmp = \rctxp{\tilde{\tmfourp}} \evsctxp{\tilde{\tmfivep}}$, having that
				$$
				\tmtwo = \rctxp{\tilde{\tmfourp}} \evsctxp{\tilde{\tmfive}} \toms \tmp \reversetoes \rctxp{\tilde{\tmfour}} \evsctxp{\tilde{\tmfivep}} = \tmthree
				$$
				Note that $\tmtwo \toms \tmp$ holds because $\rctxp{\tilde{\tmfourp}}$ is a \pointed term ---by \reflemmap{properties-of-rigid-terms}{strong-strategy-preserves-rigid}.
				
				\item Let $\tmtwo = \openctxp{\tilde{\tmfourp}} \tmfive \reversetoes \openctxp{\tilde{\tmfour}} \tmfive = \tm$, with $\tilde{\tmfourp} \reversetoeo \tilde{\tmfour}$, and $\tm = \tmfour \evsctxp{\tilde{\tmfive}} \toms \tmfour \evsctxp{\tilde{\tmfivep}} = \tmthree$, with $\tmfour$ a \pointed term and $\tilde{\tmfive} \tomo \tilde{\tmfivep}$. Let $\tmp = \openctxp{\tilde{\tmfourp}} \evsctxp{\tilde{\tmfivep}}$, having that
				$$
				\tmtwo = \openctxp{\tilde{\tmfourp}} \evsctxp{\tilde{\tmfive}} \toms \tmp \reversetoes \openctxp{\tilde{\tmfour}} \evsctxp{\tilde{\tmfivep}} = \tmthree
				$$
				Note that $\tmtwo \toms \tmp$ holds because $\openctxp{\tilde{\tmfourp}}$ is \pointed ---by \reflemmap{properties-of-rigid-terms}{open-strategy-preserves-rigid}.
				
			\end{itemize}
			
			\item \emph{Application right for both $\tmtwo \reversetoes \tm$ and $\tm \toms \tmthree$}; \ie, $\tmtwo = \tmfour \tmfivep \reversetoes \tmfour \tmfive = \tm$ and $\tm = \tmfour \tmfive \toms \tmfour \tmfivepp = \tmthree$. By \ih, there exists $\tmsix \in \vsubterms$ such that $\tmfivep \toms \tmsix \reversetoes \tmfivepp$. The analysis of the sub-cases, depending on the open/strong/rigid type contexts involved in $\tmtwo \reversetoes \tm$ and $\tm \toms \tmthree$ follows the same schema as for the previous item, all showing that
			$$
			\tmtwo = \tmfour \tmfivep \toms \tmfour \tmsix \reversetoes \tmfour \tmfivepp = \tmthree
			$$
			
			\item \emph{Application left for both $\tmtwo \reversetoes \tm$ and $\tm \toms \tmthree$}; \ie, $\tmtwo = \tmfourp \tmfive \reversetoes \tmfour \tmfive = \tm$ and $\tm = \tmfour \tmfive \toms \tmfourpp \tmfive = \tmthree$. By \ih, there exists $\tmsix \in \vsubterms$ such that $\tmfourp \toms \tmsix \reversetoes \tmfourpp$. The analysis of the sub-cases, depending on the open/strong/rigid type contexts involved in $\tmtwo \reversetoes \tm$ and $\tm \toms \tmthree$ follows the same schema as for the previous item, all showing that
			$$
			\tmtwo = \tmfourp \tmfive \toms \tmsix \tmfive \reversetoes \tmfourpp \tmfive = \tmthree
			$$

\item \emph{$\mathsf{ES}$ right for $\tmtwo \reversetoes \tm$ and $\mathsf{ES}$ left for $\tm \toms \tmthree$}; \ie, $\tmtwo = \tmfour \esub{\var}{\tmfivep} \reversetoes \tmfour \esub{\var}{\tmfive} = \tm$ and $\tm = \tmfour \esub{\var}{\tmfive} \toms \tmfourp \esub{\var}{\tmfive} = \tmthree$. There are several sub-cases to this:
\begin{itemize}
	\item Let $\tmtwo = \tmfour \esub{\var}{\openctxp{\tilde{\tmfivep}}} \reversetoes \tmfour \esub{\var}{\openctxp{\tilde{\tmfive}}} = \tm$, with $\tilde{\tmfivep} \reversertoe \tilde{\tmfive}$, and $\tm = \openctxp{\tilde{\tmfour}} \esub{\var}{\tmfive} \toms \openctxp{\tilde{\tmfourp}} \esub{\var}{\tmfive} = \tmthree$, with $\tilde{\tmfour} \rtom \tilde{\tmfourp}$. Then the statement holds by \reflemmap{basic-value-substitution}{tom-toe-commute-open}.
	
	\item Let $\tmtwo = \tmfour \esub{\var}{\openctxp{\tilde{\tmfivep}}} \reversetoes \tmfour \esub{\var}{\openctxp{\tilde{\tmfive}}} = \tm$, with $\tilde{\tmfivep} \reversertoe \tilde{\tmfive}$, and $\tm = \evsctxp{\tilde{\tmfour}} \esub{\var}{\tmfive} \toms \evsctxp{\tilde{\tmfourp}} \esub{\var}{\tmfive} = \tmthree$, with $\tilde{\tmfour} \tomo \tilde{\tmfourp}$ and $\tmfive$ is a \pointed term. Let $\tmp \defeq \evsctxp{\tilde{\tmfourp}} \esub{\var}{\openctxp{\tilde{\tmfivep}}}$, having that 
	$$
	\tmtwo = \evsctxp{\tilde{\tmfour}} \esub{\var}{\openctxp{\tilde{\tmfivep}}} \toms \tmp \toms \evsctxp{\tilde{\tmfourp}} \esub{\var}{\openctxp{\tilde{\tmfive}}} = \tmthree
	$$
	Note that $\tmtwo \toms \tmp$ holds because $\openctxp{\tilde{\tmfivep}}$ is a \pointed term ---by \reflemmap{properties-of-rigid-terms}{open-strategy-preserves-rigid}
	
	\item Let $\tmtwo = \tmfour \esub{\var}{\openctxp{\tilde{\tmfivep}}} \reversetoes \tmfour \esub{\var}{\openctxp{\tilde{\tmfive}}} = \tm$, with $\tilde{\tmfivep} \reversertoe \tilde{\tmfive}$, and $\tm = \rctxp{\tilde{\tmfour}} \esub{\var}{\tmfive} \toms \rctxp{\tilde{\tmfourp}} \esub{\var}{\tmfive} = \tmthree$, with $\tilde{\tmfour} \tomo \tilde{\tmfourp}$ and $\tmfive$ is a \pointed term. Let $\tmp \defeq \rctxp{\tilde{\tmfourp}} \esub{\var}{\openctxp{\tilde{\tmfivep}}}$, having that 
	$$
	\tmtwo = \rctxp{\tilde{\tmfour}} \esub{\var}{\openctxp{\tilde{\tmfivep}}} \toms \tmp \toms \rctxp{\tilde{\tmfourp}} \esub{\var}{\openctxp{\tilde{\tmfive}}} = \tmthree
	$$
	Note that $\tmtwo \toms \tmp$ holds because $\openctxp{\tilde{\tmfivep}}$ is a \pointed term ---by \reflemmap{properties-of-rigid-terms}{open-strategy-preserves-rigid}
	
	\item Let $\tmtwo = \tmfour \esub{\var}{\rctxp{\tilde{\tmfivep}}} \reversetoes \tmfour \esub{\var}{\rctxp{\tilde{\tmfive}}} = \tm$, with $\tmfivep \reversetoeo \tmfive$, and $\tm = \openctxp{\tilde{\tmfour}} \esub{\var}{\tmfive} \toms \openctxp{\tilde{\tmfourp}} \esub{\var}{\tmfive} = \tmthree$, with $\tilde{\tmfive} \toms \tilde{\tmfivep}$. Let $\tmp \defeq \openctxp{\tilde{\tmfourp}} \esub{\var}{\rctxp{\tmfivep}}$, having that
	$$
	\tmtwo = \openctxp{\tilde{\tmfour}} \esub{\var}{\rctxp{\tilde{\tmfivep}}} \toms \tmp \reversetoes \openctxp{\tilde{\tmfourp}} \esub{\var}{\rctxp{\tilde{\tmfive}}} = \tmthree
	$$
	
	\item Let $\tmtwo = \tmfour \esub{\var}{\rctxp{\tilde{\tmfivep}}} \reversetoes \tmfour \esub{\var}{\rctxp{\tilde{\tmfive}}} = \tm$, with $\tilde{\tmfivep} \reversetoeo \tilde{\tmfive}$, and $\tm = \evsctxp{\tilde{\tmfour}} \esub{\var}{\tmfive} \toms \evsctxp{\tilde{\tmfourp}} \esub{\var}{\tmfive} = \tmthree$, with $\tilde{\tmfour} \tomo \tilde{\tmfourp}$ and $\tmfive$ is a \pointed term. Let $\tmp \defeq \evsctxp{\tilde{\tmfourp}} \esub{\var}{\rctxp{\tilde{\tmfivep}}}$, having that 
	$$
	\tmtwo = \evsctxp{\tilde{\tmfour}} \esub{\var}{\rctxp{\tilde{\tmfivep}}} \toms \tmp \reversetoes \evsctxp{\tilde{\tmfourp}} \esub{\var}{\rctxp{\tilde{\tmfive}}} = \tmthree
	$$
	Note that $\tmtwo \toms \tmp$ holds because $\rctxp{\tilde{\tmfivep}}$ is a \pointed term ---by \reflemmap{properties-of-rigid-terms}{strong-strategy-preserves-rigid}.
	
	\item Let $\tmtwo = \tmfour \esub{\var}{\rctx_{1} \ctxholep{\tilde{\tmfivep}}} \reversetoes \tmfour \esub{\var}{\rctx_{1} \ctxholep{\tilde{\tmfive}}} = \tm$, with $\tilde{\tmfivep} \reversetoeo \tilde{\tmfive}$, and $\tm = \rctx_{2} \ctxholep{\tilde{\tmfour}} \esub{\var}{\tmfive} \toms \rctx_{2} \ctxholep{\tilde{\tmfourp}} \esub{\var}{\tmfive}$, with $\tilde{\tmfour} \tomo \tilde{\tmfourp}$ and $\tmfive$ is a \pointed term. Let 
	
		$$
			\tmp \defeq \rctx_{2} \ctxholep{\tilde{\tmfourp}} \esub{\var}{\rctx_{1} \ctxholep{\tilde{\tmfivep}}}
		$$
	
	having that
	
	$
	\begin{array}{rcl}
		\tmtwo 
		& = & \rctx_{2} \ctxholep{\tilde{\tmfour}} \esub{\var}{\rctx_{1} \ctxholep{\tilde{\tmfivep}}} \\
		& \toms & \tmp \reversetoes \rctx_{2} \ctxholep{\tilde{\tmfourp}} \esub{\var}{\rctx_{1} \ctxholep{\tilde{\tmfive}}} \\
	\end{array}
	$
	
	Note that $\tmtwo \toms \tmp$ holds because $\rctx_{1} \ctxholep{\tilde{\tmfivep}}$ is a \pointed term ---by \reflemmap{properties-of-rigid-terms}{strong-strategy-preserves-rigid}.
	
	\item Let $\tmtwo = \tmfour \esub{\var}{\rctxp{\tilde{\tmfivep}}} \reversetoes \tmfour \esub{\var}{\rctxp{\tilde{\tmfive}}} = \tm$, with $\tilde{\tmfivep} \reversetoeo \tilde{\tmfive}$ and $\tmfour$ is a \pointed term, and $\tm = \openctxp{\tilde{\tmfour}} \esub{\var}{\tmfive} \toms \openctxp{\tilde{\tmfourp}} \esub{\var}{\tmfive} = \tmthree$, with $\tilde{\tmfour} \rtom \tilde{\tmfourp}$. Let $\tmp \defeq \openctxp{\tilde{\tmfourp}} \esub{\var}{\rctxp{\tilde{\tmfivep}}}$, having that 
	$$
	\tmtwo = \openctxp{\tilde{\tmfour}} \esub{\var}{\rctxp{\tilde{\tmfivep}}} \toms \tmp \reversetoes \openctxp{\tilde{\tmfourp}} \esub{\var}{\rctxp{\tilde{\tmfive}}} = \tmthree
	$$
	Note that $\tmp \reversetoes \tmthree$ holds because $\openctxp{\tilde{\tmfourp}}$ is a \pointed term ---by \reflemmap{properties-of-rigid-terms}{open-strategy-preserves-rigid}.
	
	\item Let $\tm = \tmfour \esub{\var}{\rctxp{\tilde{\tmfivep}}} \reversetoes \tmfour \esub{\var}{\rctxp{\tilde{\tmfive}}} = \tm$, with $\tilde{\tmfivep} \reversetoeo \tilde{\tmfive}$ and $\tmfour$ is a \pointed term, and $\tm = \evsctxp{\tilde{\tmfour}} \esub{\var}{\tmfive} \toms \evsctxp{\tilde{\tmfourp}} \esub{\var}{\tmfive}$, with $\tilde{\tmfour} \tomo \tilde{\tmfourp}$ and $\tmfive$ is a \pointed term. Let 
		$$
			\tmp \defeq \evsctxp{\tilde{\tmfourp}} \esub{\var}{\rctxp{\tilde{\tmfivep}}}
		$$ 
	having that 
		$$
			\tmtwo = \evsctxp{\tilde{\tmfour}} \esub{\var}{\rctxp{\tilde{\tmfivep}}} \toms \tmp \reversetoes \evsctxp{\tilde{\tmfourp}} \esub{\var}{\rctxp{\tilde{\tmfive}}} = \tmthree
		$$
	Note that $\tmtwo \toms \tmp$ holds because $\rctxp{\tilde{\tmfivep}}$ is a \pointed term ---by \reflemmap{properties-of-rigid-terms}{strong-strategy-preserves-rigid}---, and that $\tm \reversetoes \tmthree$ holds because $\evsctxp{\tilde{\tmfourp}}$ is a \pointed term ---by \reflemmap{properties-of-rigid-terms}{strong-strategy-preserves-rigid}.			
	
	\item Let $\tmtwo = \tmfour \esub{\var}{\rctx_{1} \ctxholep{\tilde{\tmfivep}}} \reversetoes \tmfour \esub{\var}{\rctx_{1} \ctxholep{\tilde{\tmfive}}} = \tm$, with $\tilde{\tmfivep} \reversetoeo \tilde{\tmfive}$ and $\tmfour$ is a \pointed term, and $\tm = \rctx_{2} \ctxholep{\tilde{\tmfour}} \esub{\var}{\tmfive} \toms \rctx_{2} \ctxholep{\tilde{\tmfourp}} \esub{\var}{\tmfive} = \tmthree$, with $\tilde{\tmfour} \tomo \tilde{\tmfourp}$ and $\tmfive$ is a \pointed term. Let $\tmp \defeq \rctx_{2} \ctxholep{\tilde{\tmfourp}} \esub{\var}{\rctx_{1} \ctxholep{\tilde{\tmfivep}}}$, having that 
	$$
	\tmtwo \rctx_{2} \ctxholep{\tilde{\tmfour}} \esub{\var}{\rctx_{1} \ctxholep{\tilde{\tmfivep}}} \toms \tmp \reversetoes \rctx_{2} \ctxholep{\tilde{\tmfourp}} \esub{\var}{\rctx_{1} \ctxholep{\tilde{\tmfive}}} = \tmthree
	$$
	Note that $\tmtwo \toms \tmp$ holds because $\rctx_{1} \ctxholep{\tilde{\tmfivep}}$ is a \pointed term ---by \reflemmap{properties-of-rigid-terms}{strong-strategy-preserves-rigid}---, and that $\tmp \reversetoes \tmthree$ because $\rctx_{2} \ctxholep{\tilde{\tmfourp}}$ is a \pointed term ---by \reflemmap{properties-of-rigid-terms}{strong-strategy-preserves-rigid}.
	
\end{itemize}

\item \emph{$\mathsf{ES}$ left for $\tmtwo \reversetoes \tm$ and $\mathsf{ES}$ right for $\tm \toms \tmthree$}; \ie, $\tmtwo = \tmfourp \esub{\var}{\tmfive} \reversetoes \tmfour \esub{\var}{\tmfive} = \tm$ and $\tm = \tmfour \esub{\var}{\tmfive} \toms \tmfour \esub{\var}{\tmfivep} = \tmthree$. There are several sub-cases to this, all of which follow the same kind of reasoning as for the case \emph{$\mathsf{ES}$ right for $\tmtwo \reversetoes \tm$ and $\mathsf{ES}$ left for $\tm \toms \tmthree$}. Therefore, we shall leave this case for the reader.

\item \emph{$\mathsf{ES}$ right for both $\tmtwo \reversetoes \tm$ and $\tm \toms \tmthree$}; \ie, 
		$$
			\tmtwo = \tmfour \esub{\var}{\tmfivep} \reversetoes \tmfour \esub{\var}{\tmfive} = \tm
		$$
	and 
		$$
			\tm = \tmfour \esub{\var}{\tmfive} \toms \tmfour \esub{\var}{\tmfivepp} = \tmthree
		$$
	
	By \ih there exists $\tmsix \in \vsubterms$ such that $\tmfivep \toms \tmfive \reversetoes \tmfivepp$. The analysis of the sub-cases, depending on the open/strong/rigid type contexts involved in $\tm \toms \tmtwo$ and $\tm \toms \tmthree$, follows the same schema as for the previous item, all showing that
$$
\tmtwo = \tmfour \esub{\var}{\tmfivep} \toms \tmfour \esub{\var}{\tmsix} \reversetoes \tmfour \esub{\var}{\tmfivepp} = \tmthree
$$

\item \emph{$\mathsf{ES}$ left for both $\tmtwo \reversetoes \tm$ and $\tm \toms \tmthree$}; \ie, 
		$$
			\tmtwo = \tmfourp \esub{\var}{\tmfive} \reversetoes \tmfour \esub{\var}{\tmfive} = \tm
		$$
	and 
		$$
			\tm = \tmfour \esub{\var}{\tmfive} \toms \tmfourpp \esub{\var}{\tmfive} = \tmthree
		$$
	By \ih there exists $\tmsix \in \vsubterms$ such that $\tmfivep \toms \tmfive \reversetoes \tmfivepp$. The analysis of the sub-cases, depending on the open / strong / rigid type contexts involved in $\tm \toms \tmtwo$ and $\tm \toms \tmthree$, follows the same schema as for the previous item, all showing that
$$
\tmtwo = \tmfourp \esub{\var}{\tmfive} \toms \tmsix \esub{\var}{\tmfive} \reversetoes \tmfourpp \esub{\var}{\tmfive} = \tmthree.
\qedhere
$$
		\end{itemize}
	\end{enumerate}
\end{proof}

\begin{proposition}[Properties of $\tovsubs$]
	\label{propappendix:vsc-diamond}
  \NoteState{prop:vsc-diamond}
	\label{lappendix:fullness}
	Let $\tm$ be a VSC term.
	\begin{enumerate}
		\item \label{p:vsc-diamond-diamond}\emph{Diamond}:	$\tovsubs$ is diamond. Moreover, every $\tovsubs$ evaluation to normal form (if any) has the same number of $\toms$ steps.
		\item \emph{Normal forms}: if $\tm$ is $\tovsubs$ normal then it is a strong fireball.
	\end{enumerate}
\end{proposition}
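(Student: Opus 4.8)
The plan is to prove the two items by reducing each to facts already available, rather than reasoning about $\tovsubs$ monolithically. For the diamond (item~\ref{p:vsc-diamond-diamond}) I would factor $\tovsubs$ through its two components $\toms$ and $\toes$ and appeal to the abstract rewriting facts recalled in the Preliminaries: the union of two relations that are each diamond and that \emph{strongly commute} is again diamond, and moreover any two evaluations with the same endpoints contain the same number of steps of each component. Hence it suffices to establish three statements: $\toms$ is diamond, $\toes$ is diamond, and $\toms$ and $\toes$ strongly commute. The strong commutation is exactly what upgrades plain diamond to the stronger claim that all evaluations to normal form share the same number of $\toms$ steps, which is what validates the cost model announced in the paper.

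Each of these three statements I would prove by structural induction on the reduced term, performing a case analysis on the two external steps of the span according to the shape of the external context used and of the position of the fired redex (under $\lambda$, left/right of an application, left/right of an ES). The routine cases reduce to the already-proved diamond and strong commutation of the \emph{open} strategy $\tomo,\toeo$. The genuinely external cases — where one step fires in an argument reached through a rigid context $\ictx$ while the other fires elsewhere — are closed by firing the residual redex and re-plugging; for the reassembled context to still be external one needs the auxiliary \emph{stability} properties of \pointed terms, namely that plugging any term into a rigid context yields a \pointed term, and that both $\tovsubo$ and $\tovsubs$ map \pointed terms to \pointed terms. I would therefore isolate and prove this stability lemma first, by induction on the relevant contexts.

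For the normal-form characterization (the second item) I would use the characterization of $\tovsub$-normal forms (\reflemma{harmony}): since a strong fireball is exactly a $\tovsub$-normal term and $\tovsubs\subseteq\tovsub$, every strong fireball is trivially $\esssym$-normal, so only the converse (\emph{fullness}) is at stake: an $\esssym$-normal term is a strong fireball. I would prove fullness by a simultaneous structural induction on $\tm$ of two statements: (1) if $\tm$ is $\esssym$-normal then it is a strong fireball, and (2) if moreover $\tm$ is \pointed then it is a strong inert term; this split mirrors the mutual definition of external and rigid contexts, and of $\sfire$ and $\sitm$. The key observation making the induction go through is that, taking the external context to be the empty one $\ctxhole$, every $\tovsubo$ step is an $\tovsubs$ step; hence an $\esssym$-normal term is in particular $\tovsubo$-normal, so by the open fireball characterization an $\esssym$-normal application $\tm_1\tm_2$ has $\tm_1$ an open inert term — in particular \pointed. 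Once $\tm_1$ is \pointed, an external reduction inside it is necessarily reached through a \emph{rigid} context and therefore lifts to the application via the production $\ictx\,\tm$, while a reduction inside $\tm_2$ lifts via $\ptm\,\strongctx$; both lifts use the stability lemma. This forces $\tm_1$ to be $\esssym$-normal and \pointed and $\tm_2$ to be $\esssym$-normal, so the induction hypotheses give $\tm_1$ strong inert and $\tm_2$ a strong fireball, whence $\tm_1\tm_2 = \sitm\,\sfire$ is strong inert. The abstraction and ES cases are analogous, descending through $\la\var\strongctx$ and the ES productions.

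The main obstacle is item~\ref{p:vsc-diamond-diamond}, specifically the per-relation diamond and strong-commutation proofs: because external and rigid contexts are mutually recursive, each span with the two steps on opposite sides of an application or ES splits into many sub-cases depending on whether each side is entered through an open, external, or rigid context, and in every sub-case one must check that the term plugged back into the context is still \pointed, so that the context remains external. This bookkeeping — rather than any single hard idea — is where essentially all the work lies; the stability lemma for \pointed terms is precisely the tool that closes each sub-case, and isolating it up front is what keeps the case analysis mechanical.
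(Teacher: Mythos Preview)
Your proposal is correct and follows essentially the same route as the paper. For item~\ref{p:vsc-diamond-diamond}, the paper proves exactly the three facts you identify (separate diamond for $\toms$ and $\toes$, and their strong commutation, as \reflemma{basic-value-substitution-external}) by the same structural induction and case analysis, relying on the same stability lemma for \pointed terms (\reflemma{properties-of-rigid-terms}), and then concludes by the abstract rewriting facts you cite.

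For the second item there is a small but harmless difference worth noting. The paper's simultaneous induction pairs the fireball property with the \emph{non-value property} (``$\esssym$-normal and not of the form $\sctxp{\val}$ implies strong inert''), whereas you pair it with ``$\esssym$-normal and \pointed implies strong inert''. Your hypothesis is strictly stronger (every \pointed term is a non-answer, but not conversely), so your auxiliary statement is weaker; nonetheless it suffices, because in the application and ES cases the fact that $\tm$ is $\tovsubo$-normal already forces the relevant subterm to be (open) inert, hence \pointed, before you invoke the IH. Your phrasing also makes explicit the lifting step---that an external step inside a \pointed subterm lifts to the enclosing application/ES---which the paper compresses into the bare assertion ``since $\tm$ is $\esssym$-normal, so are $\tm_1$ and $\tm_2$''. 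One small imprecision: such a step need not literally go through a rigid context (it may be an open step, with $E=\ctxhole$), but in that case it lifts via the open-context productions anyway, so your conclusion stands.
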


In the following proof we refer to terms of the form $\subctxp\tm$ as \emph{answers}.

\begin{proof}
	\begin{enumerate}
		\item Follows from strong commutation of $\toms$ and $\toes$ (\reflemmap{basic-value-substitution-external}{tom-toe-commute-strong}), from diamond for $\toms$ and $\toes$ (\reflemmap{basic-value-substitution-external}{tom-toe-diamond-strong}),
		and from Hindley-Rosen lemma (\cite[Prop. 3.3.5]{Barendregt84}). By the random descent property (which is a well-known corollary of the diamond recalled in \refsect{calculus}), all $\tovsubs$ evaluation sequences to normal form have the same number of steps. By strong commutation of $\toms$ and $\toes$, they also have the number of $\toms$ steps.
	
	\item To have the right \ih, we prove simultaneously, by induction on $\tm$, the following stronger statements (we recall that all \full inert terms are \full fireballs): 
\begin{enumerate}
	\item \emph{Fireball property}: If $\tm$ is $\esssym$-normal, then $\tm$ is a \full fireball.
	
	\item \emph{Non-value property}: If $\tm$ is $\esssym$-normal and not an \valES, then $\tm$ is a \full inert term.
\end{enumerate}

Cases:

\begin{itemize}
	\item \emph{Variable}, \ie, $\tm = \var$: both properties trivially hold, since $\tm$ is a \full inert term and so a \full fireball.
	
	\item \emph{Abstraction}, \ie, $\tm = \la{\var}{\tmtwo}$:
	\begin{enumerate}
		\item \emph{Non-value property}: vacuously true, as $\tm$ is an abstraction and hence an \valES.			
		\item \emph{Fireball property}: Since $\tm$ is $\esssym$-normal, so is $\tmtwo$.
		By \ih applied to $\tmtwo$ (fireball property), $\tmtwo$ is a \full fireball and hence so is $\tm$ (as a \full value).
	\end{enumerate}
	
	\item \emph{Application}; \ie, $\tm = \tm_{1} \tm_{2}$ (which is not an \valES): 
	\begin{enumerate}
		\item \emph{Non-value property}: Since $\tm$ is $\esssym$-normal, so are $\tm_{1}$ and $\tm_{2}$.  
		Moreover, $\tm_{1}$ is not an \valES (otherwise $\tm$ would be a $\toms$-redex).
		By \ih applied to $\tm_{1}$ (non-value property) and to $\tm_{2}$ (fireball property),
		$\tm_{1}$ is a \full inert term and $\tm_{2}$ is a \full fireball.
		Thus, $\tm$ is a \full inert term.
		
		\item \emph{Fireball property}: We have just proved that $\tm$ is a \full inert term, and hence it is a \full fireball.
		
	\end{enumerate}
	
	\item \emph{Explicit substitutions}, \ie, $\tm = \tm_{1} \esub{\var}{\tm_{2}}$:
	\begin{enumerate}
		\item \emph{Fireball property}: Since $\tm$ is $\esssym$-normal, so are $\tm_{1}$ and $\tm_{2}$.  
		Moreover, $\tm_{2}$ is not an \valES (otherwise $\tm$ would be a $\toevals$-redex).
		By \ih applied to $\tm_{1}$ (fireball property) and to $\tm_{2}$ (non-value property),
		$\tm_{1}$ is a \full fireball and $\tm_{2}$ is a \full inert term.
		Thus, $\tm$ is a \full fireball.
		
		\item \emph{Non-value property}: We have just proved that $\tm$ is a \full fireball.
		If moreover $\tm$ is not a \valES, then $\tm_{1}$ is not an \valES and hence, by \ih applied to $\tm_{1}$ (non-value property), $\tm_{1}$ is a \full inert term.
		Therefore, $\tm$ is a \full inert term.
		\qedhere
	\end{enumerate}
	
\end{itemize}

\end{enumerate}
\end{proof}

\begin{proposition}[$\eqstruct$ is a strong bisimulation]
	\label{propappendix:strong-bisimulation}
  \NoteState{prop:strong-bisimulation}
  If $\tm\eqstruct\tmtwo$ and $\tm\Rew{\mathsf{a}}\tmp$ then there exists $\tmtwop \in \vsubterms$ such that 
$\tmtwo\Rew{\mathsf{a}}\tmtwop$ and $\tmp\eqstruct\tmtwop$, for $\mathsf{a}\in\set{\msym,\esym,  \osym\msym,\osym\esym, \esssym\msym,\esssym\esym}$.
\end{proposition}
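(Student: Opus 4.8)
The plan is to reduce the statement to a single generating step of $\eqstruct$ and then analyze, case by case, how such a step overlaps with the fired redex. Since $\eqstruct$ is the least equivalence relation closed by contexts and generated by the four top-level axioms $\tostructcom$, $\tostructapr$, $\tostructes$, $\tostructapl$, I would first introduce the one-step relation $\eqstruct_1$, namely the reflexive and symmetric contextual closure of a \emph{single} axiom instance, so that $\eqstruct \,=\, \eqstruct_1^{*}$. It then suffices to prove the bisimulation property for $\eqstruct_1$: the general statement follows by a routine induction on the length of the $\eqstruct_1$-chain connecting $\tm$ and $\tmtwo$, where symmetry of $\eqstruct_1$ lets the induction transport the matching step in both directions while keeping the step kind $\mathsf{a}$ fixed.

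For $\eqstruct_1$ I would proceed by analysing the relative positions of the axiom pattern and the reduction redex. When the two occur at disjoint positions, or one sits inside a subterm untouched by the other, they commute immediately and the two reducts are again related by the same axiom instance. The genuinely interesting cases are those in which the axiom rearranges the explicit substitutions forming the substitution context $\subctx$ consumed by an at-a-distance rule: the Multiplicative rule $\subctxp{\la\var\tm}\tmtwo \rtom \subctxp{\tm\esub\var\tmtwo}$ and the Exponential rule $\tm\esub\var{\subctxp{\val}} \rtoe \subctxp{\tm\isub\var\val}$. Here $\tostructcom$ (permuting independent substitutions), $\tostructapr$ and $\tostructapl$ (sliding a substitution across an application), and $\tostructes$ (un-nesting a substitution) may reshape $\subctx$ or relocate the substitution that holds the value $\val$ copied by the exponential step. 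In each such case I would exhibit the matching redex in $\tmtwo$, fire a step of the same kind, and check that the two results are joined by $\eqstruct$, typically by a single further axiom instance and occasionally by a short zig-zag. The free-variable side conditions attached to the axioms guarantee that these rearrangements neither capture nor free a variable, and that meta-level substitution commutes with the moved explicit substitutions; this is exactly what keeps multiplicative steps multiplicative and exponential steps exponential, as the \emph{strong} form of the bisimulation demands.

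For the open ($\osym$) and external ($\esssym$) variants I would additionally establish, by induction on the respective context grammars, that $\eqstruct_1$ sends open contexts to open contexts and external and rigid contexts to external and rigid contexts, under the relevant side conditions. Concretely, if the $\tm$-step is the $\weakctxp{\cdot}$-closure of a root redex and $\tm \eqstruct_1 \tmtwo$, then I would show that $\tmtwo$ decomposes as an open context plugged with a root redex $\eqstruct$-equivalent to the original one, so that the matching step is again open; the external case is analogous, using $\extctxp{\cdot}$ together with the mutual grammar of rigid contexts. This secondary induction guarantees that the label $\mathsf{a}$ is preserved not only along the multiplicative/exponential axis but also along the open/external axis, which is what makes the six cases uniform.

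The main obstacle I anticipate is the overlap between the Exponential rule and the substitution-moving axioms $\tostructes$ and $\tostructcom$: there the value $\subctxp{\val}$ read by the step can be partly reassociated by the axiom, so one must verify that the two resulting meta-substitutions yield $\eqstruct$-equivalent terms. This calls for a small auxiliary fact, namely that structural equivalence is stable under meta-level substitution and under enlarging or shrinking the ambient substitution context. The second delicate point is keeping the external and rigid discipline intact across these rearrangements, so that an external step is never matched by a merely open one; this is precisely where the free-variable side conditions of the axioms do the essential work.
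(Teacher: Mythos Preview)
Your proposal is correct and follows the standard route for establishing that a structural equivalence is a strong bisimulation: reduce to a single generating step, analyse overlaps with the at-a-distance rules, and separately verify that the axioms preserve the open and external context shapes. The paper, however, does not actually carry out this proof; it simply writes ``Easy adaptation of the proof in \cite[Lemma~12]{AccattoliPaolini12}'' and stops. Your outline is essentially what that adaptation would amount to, with the extra work here being the external case (the open and unrestricted cases are already covered by the cited result), so the only genuine novelty you need to check carefully is that $\eqstruct_1$ preserves rigidity and externality of evaluation contexts---exactly the point you flagged as the second delicate issue.
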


\begin{proof}
	Easy adaptation of the proof in \cite[Lemma 12]{AccattoliPaolini12}.
\end{proof}

\section{Proofs of Section~\ref*{SECT:RELAXED} (Relaxed Implementation)}
\label{app:relaxed-implementation}
This section proves \refthmboth{abs-impl} using the following auxiliary lemma.

\begin{lemma}[One-step transfer]
  \label{l:one-step-transfer}
  Let $\mach$ and $(\tostrat, \eqstruct)$ be a relaxed implementation system.
  For any state $\state$ of $\mach$, if $\decode\state \tostrat \tmtwo$ then there is a state $\statetwo$ of $\mach$ 
such that $\state \tomacho^*\tomachb \statetwo$.
\end{lemma}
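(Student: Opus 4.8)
The plan is to show that, starting from $\state$, the machine can perform only finitely many overhead transitions before it is forced to fire a $\beta$-transition, the forcing being a consequence of the hypothesis that $\decode\state$ is not $\tostrat$-normal.

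First I would exploit the \emph{overhead termination} hypothesis of the relaxed implementation system: since $\tomacho$ terminates, any maximal $\tomacho$-sequence issued from $\state$ is finite, so we obtain $\state \tomacho^* \state'$ for some state $\state'$ to which no overhead transition applies. Iterating \emph{overhead transparency} along this sequence and using transitivity of $\equiv$ yields $\decode\state \equiv \decode{\state'}$.

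Next I would argue that $\state'$ cannot be final. If it were, the \emph{Halt} property would make $\decode{\state'}$ a $\tostrat$-normal form. But $\equiv$ is a strong bisimulation with respect to $\tostrat$, and strong bisimulations preserve normal forms (this is exactly the remark following \refprop{strong-bisimulation}, here instantiated on the abstract structural equivalence); combined with $\decode\state \equiv \decode{\state'}$ this would make $\decode\state$ itself $\tostrat$-normal, contradicting $\decode\state \tostrat \tmtwo$. Hence $\state'$ is not final, so at least one machine transition applies to it.

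Finally, I would use the partition of $\tomach$ into $\tomachb$ and $\tomacho$ together with \emph{lax determinism}, which makes $\tomach$ deterministic: the unique transition available at $\state'$ is not an overhead one, since $\state'$ is overhead-normal, and is therefore a $\beta$-transition $\state' \tomachb \statetwo$. Concatenating gives $\state \tomacho^* \state' \tomachb \statetwo$, i.e.\ $\state \tomacho^*\tomachb \statetwo$, as required. The argument is essentially bookkeeping; the only point requiring a little care is the preservation of $\tostrat$-normal forms by $\equiv$, which I expect to be the mild crux but is immediate from the strong bisimulation assumption.
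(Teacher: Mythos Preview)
Your proposal is correct and follows essentially the same route as the paper: take the overhead normal form of $\state$, use overhead transparency to relate its read-back to $\decode\state$, invoke Halt to rule out finality, and conclude that the remaining transition must be a $\beta$-transition. You are in fact slightly more careful than the paper, which writes $\decode{\nfo\state} = \decode\state$ where overhead transparency only yields $\equiv$; your explicit appeal to the strong bisimulation property to transfer non-normality across $\equiv$ fills exactly that small gap.
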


\begin{proof}
  For any state $\state$ of $\mach$, let $\nfo{\state}$ be 
  the normal form of $\state$ with respect to $\tomacho$: such a state exists and is unique because overhead transitions 
terminate (\refpoint{def-overhead-terminate}) and $\mach$ is deterministic (\refpoint{def-determinism}).
  Since $\tomacho$ is mapped on identities (\refpoint{def-overhead-transparency}), one has $\decode{\nfo{\state}} = 
\decode\state$.
  As $\decode\state$ is not $\to$-normal by hypothesis, the halt property (\refpoint{def-progress}) entails that 
$\nfo{\state}$ is not final, therefore $\state \tomacho^* \nfo{\state} \tomachb \statetwo$ for some state $\statetwo$.
\end{proof}

\begin{theorem}[Sufficient condition for implementations]
\label{thmappendix:abs-impl}
  Let $\mach$ and $(\tostrat, \eqstruct)$
  \NoteState{thm:abs-impl}
  be a relaxed implementation system.
  Then, $\mach$ is a relaxed implementation of $(\tostrat,\equiv)$.
\end{theorem}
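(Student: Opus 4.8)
The plan is to establish, one at a time, the three clauses of the definition of relaxed implementation (Definition~\ref{def:implem}), drawing on the five hypotheses of the relaxed implementation system (Definition~\ref{def:implementation}), on the one-step transfer lemma (\reflemma{one-step-transfer}), and on the standard consequences of the diamond property recalled in \refsect{preliminaries}: confluence (hence unique normal forms), uniform normalization (weak and strong $\tostrat$-normalization coincide, so ``diverging'' is exactly ``not weakly normalizing''), and the $\beta$-count invariance of normalizing evaluations (\refprop{external-properties}), which in turn bounds the number of $\beta$-steps of \emph{any} evaluation out of a normalizing term by $\sizem\deriv$ (extend the evaluation to a normalizing one, which has $\beta$-count $\sizem\deriv$, and drop the tail).

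For clause~1 (\emph{executions to evaluations}) I would induct on the length of $\exec\colon\compil\tm\tomach^*\state$. The base case uses the initialization constraint $\decode{\compil\tm}=\tm$ and reflexivity of $\equiv$. In the inductive step, write the last transition as $\state'\tomach\state$ and let $\deriv'\colon\tm\tostrat^*\equiv\decode{\state'}$ be supplied by the induction hypothesis. If the transition is overhead, \emph{overhead transparency} (\refpoint{def-overhead-transparency}) gives $\decode{\state'}\equiv\decode\state$, so transitivity of $\equiv$ keeps $\deriv'$ valid and $\sizebeta\exec$ is unchanged. If it is a $\beta$-transition, \emph{relaxed $\beta$-projection} (\refpoint{def-beta-projection}) yields $\decode{\state'}\tostrat^+\equiv\decode\state$ with at least one $\beta$-step; I then push this reduction from $\decode{\state'}$ onto the endpoint of $\deriv'$ (which is $\equiv$-equivalent to $\decode{\state'}$), using that $\equiv$ is a strong bisimulation preserving the number of $\beta$-steps, and append it. This adds at least $1$ to $\sizem$ while $\sizebeta\exec$ grows by exactly $1$, so $\sizebeta\exec\le\sizem\deriv$ is preserved.

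Clauses~2 and~3 both inspect the unique maximal run of the (deterministic, \refpoint{def-determinism}) machine from $\compil\tm$. The pivotal finiteness fact is that, whenever $\tm$ is weakly $\tostrat$-normalizing, \emph{every} execution satisfies $\sizebeta\exec\le\sizem\deriv$: clause~1 projects it to a $\tostrat$-evaluation, whose $\beta$-count is bounded as above. Since any infinite run must perform infinitely many $\beta$-transitions --- otherwise it would end in an infinite tail of overhead transitions, contradicting \emph{overhead termination} (\refpoint{def-overhead-terminate}) --- the run is finite precisely when $\tm$ is normalizing. In the normalizing case (clause~2) the run ends in a final state $\state$; \emph{halt} (\refpoint{def-progress}) makes $\decode\state$ $\tostrat$-normal, clause~1 gives $\tm\tostrat^*\tmp\equiv\decode\state$ with $\tmp$ normal (as $\equiv$ preserves normal forms), whence $\tmp=\tmtwo$ by uniqueness of normal forms, so $\decode\state\equiv\tmtwo$; and $\beta$-count invariance (\refprop{external-properties}) gives $\sizebeta\exec\le\sizem{\deriv'}=\sizem\deriv$. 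In the diverging case (clause~3) a finite run would, by the same halt-plus-clause-1 argument, exhibit $\tm$ as weakly normalizing, contradicting uniform normalization; hence the run is infinite and, as observed, performs infinitely many $\beta$-transitions. (Equivalently, one can build the divergent run incrementally with \reflemma{one-step-transfer}, using that divergence is inherited along $\tostrat$ and transported across $\equiv$ by the strong bisimulation.)

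The main obstacle is the quantitative bookkeeping of clause~1, where the machine's $\beta$-transitions must be weighed against the \emph{calculus}' $\beta$/multiplicative steps: the delicate move is transporting the reduction produced by relaxed $\beta$-projection across $\equiv$ while keeping the $\beta$-count intact, which is exactly why ``strong bisimulation'' must preserve the \emph{kind} of step (as \refprop{strong-bisimulation} does) and not merely the existence of a step. Everything else --- determinism, overhead termination, the halt condition and the diamond meta-theory --- is routine in isolation, but must be assembled so that the clauses reinforce each other, clause~1 producing precisely the $\beta$-bound that drives the termination analysis of clauses~2 and~3.
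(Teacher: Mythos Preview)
Your proof is correct. Clause~1 and clause~3 follow essentially the same lines as the paper (induction on the execution length for the former, contradiction via the halt property and uniform normalization for the latter).

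The genuine divergence is in clause~2. The paper proves a generalized statement (replacing $\compil\tm$ by an arbitrary state $\state$ and $\tm$ by $\decode\state$) by induction on $\sizem\deriv$, using the one-step transfer lemma (\reflemma{one-step-transfer}) at each step to produce the next $\beta$-transition, then closing the diamond and invoking strong bisimulation to set up the induction hypothesis. You instead run a global termination argument: clause~1 bounds the $\beta$-transitions of \emph{any} partial execution by $\sizem\deriv$ (via the extend-to-normal-form trick), overhead termination forces infinitely many $\beta$-transitions in any infinite run, so the unique maximal run is finite; the halt property then identifies its read-back as normal, and uniqueness of normal forms plus $\beta$-count invariance finishes. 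This is more economical---you never touch the one-step transfer lemma and you reuse clause~1 wholesale---while the paper's inductive construction is more explicit about how the execution is assembled step by step. Both are valid; your route makes the dependence of clauses~2 and~3 on clause~1 more visible.
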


\begin{proof}
\begin{enumerate}
\item \emph{Executions to evaluations}: by induction on the length of the execution. It  follows easily from relaxed 
$\beta$-projection and overhead transparency, plus the strong bisimulation of $\equiv$ with respect to $\tostrat$.

\item \emph{Normalizing evaluations to executions}: we prove a more general statement where we replace $\compil\tm$ with 
a general state $\state$, and $\tm$ with $\decode\state$: \emph{if $\deriv:\decode\state \tostrat^* \tmtwo$ with $\tmtwo$  
normal form then there exists an $\mach$-execution $\exec: \state \tomach^* \statetwo$ with $\statetwo$ final such that 
$\decode\statetwo \equiv \tmtwo$ with $\sizebeta\exec \leq \sizem\deriv$.} Then if we instantiate this more general statement on 
$\state = \compil\tm$ we obtain the official statement, because by the initialization constraint of 
the machine we have $\decode{\compil\tm} = \tm$.

The proof of the generalized statement is by induction on $\sizem\deriv$. If $\sizem\deriv=0$ then consider 
$\nfo{\state}$, that by overhead transparency (\refpoint{def-overhead-transparency}) satisfies $\decode{\nfo{\state}} = 
\decode{\state}$. Now, if $\nfo{\state}$ has a $\beta$-transition then from $\decode{\state}$ it is eventually possible to do a $\tom$ steps by relaxed 
$\beta$-projection (\refpoint{def-beta-projection}), which is impossible, because $\sizem\deriv=0$ and by the diamond property all evaluations sequences from a term have the same number and kind of steps. Then, $\nfo{\state}$ is a final state and there is an execution $\exec: \state \tomacho^*\nfo{\state}$ such 
that $\sizebeta\exec= 0$. 

If $\sizem\deriv>0$ then $\decode\state \tostrat^+ \tmtwo$ and $\decode\state \tostrat \tmthree \tostrat^* \tmtwo$ for some 
$\tmthree$. By the one-step transfer lemma (\reflemma{one-step-transfer}), we obtain $\state\tomacho^*\tomachb 
\statetwo$ for some $\statetwo$. By overhead transparency and relaxed $\beta$ projection, we obtain 
$\deriv':\decode\state\tostrat^+\tmfour\equiv\decode\statetwo$ for some $\tmfour$ and with $\sizem{\deriv'} \geq 1$. By diamond of $\tostrat$, we obtain an 
evaluation $\derivtwo:\tmfour \tostrat^* \tmtwo$ such that $\sizem\derivtwo \leq \sizem{\deriv}-1$.  Note also that by strong bisimulation 
of $\equiv$ applied to $\derivtwo$ we obtain $\derivtwo':\decode\statetwo \tostrat^{\size\derivtwo} \tmtwo'$ with $\tmtwo' \equiv 
\tmtwo$. We can then apply the \ih, obtaining an execution $\exectwo:\statetwo \tomach^* \statetwo'$ with $\statetwo'$ 
final and such that $\decode{\statetwo'} \equiv \tmtwo'$, and $\sizebeta\exectwo \leq \sizem\derivtwo \leq \sizem\deriv-1$. Note that 
the execution $\exec:\state \tomacho^*\tomachb \statetwo\tomach^* \statetwo'$ satisfies the statement because 
$\decode{\statetwo'} \equiv \tmtwo' \equiv \tmtwo$ and $\sizebeta\exec = \sizebeta\exectwo +1 \leq \sizem\derivtwo +1 \leq \sizem\deriv$.

\item \emph{Diverging evaluations to executions}: suppose that $\tostrat$ diverges on $\tm$ but $\mach$ terminates, that 
is, that there is an execution $\exec:\compil\tm \tomach^* \state$ with $\state$ final. Then the projection 
$\decode\exec:\tm \tostrat^* \tmtwo \equiv \decode\state$ for some $\tmtwo$ given by point 1 (of this theorem) is a normalizing sequence 
by the halt property (guaranteeing that $\decode\state$ is $\tostrat$ normal), and by strong bisimulation so is the 
evaluation $\tm \tostrat^* \tmtwo$. Then $\tostrat$ normalizes $\tm$ and so, by diamond (precisely by \emph{uniform 
normalization} implied by the diamond property, see footnote \refsect{calculus}), $\tostrat$ cannot diverge on 
$\tm$---absurd. Therefore $\mach$ diverges on $\compil\tm$. Now, since $\tomacho$ terminates, the diverging execution 
from $\compil\tm$ must have infinitely many $\beta$ transitions. Note that a diverging $\tostrat$ sequence necessarily has an infinity of $\tom$-steps, because $\toe$ terminates (\reflemma{vsc-local-termination}).
\qedhere
\end{enumerate}
\end{proof}

\section{Proofs of Section~\ref*{SECT:CRUMBLING} (Compilation \& Read-back)}
\label{app:compilation-read-back}

In this section, we introduce formally the notions of \emph{well-namedness}, crumbling, and read-back. The main results are some fundamental properties about the crumbling translation -- \reflemmaboth{transl-properties} -- and the modular read-back of crumbled terms --
\reflemmaboth{read-back-decomposition-d}. To achieve the goal we will also provide first a number of additional
technical properties on (well-named) environments and on the definitions of read-back.

\subsection{\lat{s} \& well-namedness}

\begin{definition}[Capture-avoiding substitution]
    We denote by $\tm\isub\var\tmtwo$ the term obtained by replacing the variable $\var$ with $\tmtwo$ in $\tm$. The operation of replacing may rename bound variables in order to avoid captures, but we assume that it minimizes the number of renamings by performing only the strictly necessary ones.
\end{definition}

\begin{definition}
    A \emph{substitution} $\sigma$ is a mapping from variables to terms such that it is the identity on all but a finite number of variables.
    Its \emph{domain} $\domain\sigma$ is the finite set of variables that are not mapped to themselves.
    The set $\fv\sigma$ of free variables of $\sigma$ is the set of all variables that occur free in at least one $\sigma(\vartwo)$ for
    $\vartwo \in \domain\sigma$.
   It is a \emph{fireball substitution} if it maps variables to fireballs. We write $\indsub1\indsub2$ for the composed substitution defined by $(\indsub1\indsub2) (\var) \defeq \indsub2(\indsub1(\var))$. We write $\isub\var\tm$ for a substitution on a single variable and $\tm\sigma$ for the term obtained from $\tm$ by the capture-avoiding replacement of every $\var\in\fv\tm$ with $\sigma\var$. Similarly for every other syntactic category, \eg $\env\sigma$. 
   \end{definition}

The terms that we operate on are subject to well-namedness, which basically amounts to Barendregt's variable convention. To lighten up the proofs, we introduce the following shorthand to denote disjoint sets of variables:
\begin{definition}[Disjoint variables -- $\lambda$-calculus version]
	Let $X$ and $Y$ be sets of variable names. We say that $X$ and $Y$ are disjoint (in symbols, $X \Disj Y$) if $X \cap Y = \emptyset$.
\end{definition}

\begin{definition}[Well-named \lat{s}]
	A \lat{} $\tm$ is \emph{well-named} if its bound variables are all distinct, and $\fv\tm\Disj\bv\tm$.
\end{definition}

\subsection{Crumbled environments}

\begin{definition}[Capture-avoiding substitution]
    We denote by $\env\isub\var\vartwo$ and $\mol\isub\var\vartwo$ the environment (resp. bite) obtained by replacing the variable $\var$ with $\vartwo$ in $\env$ (resp. $\mol$). The operation of replacing may rename bound variables in order to avoid captures, but we assume that it minimizes the number of renamings by performing only the strictly necessary ones.
\end{definition}

\begin{definition}[Free and bound variables]
    We define the sets of \emph{free} and \emph{bound variables} of crumbled environments and bites, and the \emph{domain} of crumbled environments, as expected:
    \[\begin{array}{cc}
        \bv\emptyenv \defeq \emptyset &
        \bv{\env\esub\var\mol} \defeq \bv\env \cup \set\var \cup \bv\mol \\
        \fv\emptyenv \defeq \set\varstar &
        \fv{\env\esub\var\mol} \defeq \fv\env \setminus \set\var \cup \fv\mol \\
        \domain\emptyenv \defeq \emptyset &
        \domain{\env\esub\var\mol} \defeq \domain\env \cup \set\var
    \end{array}\]
    \[\begin{array}{ccc}
        \bv{\var} \defeq \emptyset &
        \bv{\var\vartwo} \defeq \emptyset &
        \bv{\la\var\env} \defeq \set\var \cup \bv\env
        \\
        \fv{\var} \defeq \set\var &
        \fv{\var\vartwo} \defeq \set{\var,\vartwo} &
        \fv{\la\var\env} \defeq \fv\env \setminus \set\var
    \end{array}\]
    Moreover, $\allvars\env \defeq \fv\env \cup \bv\env$ and $\allvars\mol \defeq \fv\mol \cup \bv\mol$.
\end{definition}

\begin{definition}[$\alpha$-equality $\AlphaEq$]
    \label{def:crumbling-alpha}
    We define the relation $\AlphaEq$ as the smallest equivalence relation (reflexive, symmetric, transitive) over crumbled forms, satisfying the following properties:
    \begin{enumerate}
        \item\label{p:crumbling-alpha-stuct-ES}
            \emph{Structural, ES:}
            If $\env \AlphaEq \envtwo$ and $\mol \AlphaEq \moltwo$ then $\env\esub\var\mol \AlphaEq \envtwo\esub\var\moltwo$.
        \item\label{p:crumbling-alpha-ren-ES}
            \emph{Rename, ES:}
            $\env\esub\var\mol \AlphaEq \env\isub\var\vartwo\esub\vartwo\mol$ when $\vartwo\not\in\allvars\env$.
        \item\label{p:crumbling-alpha-ren-abs}
            \emph{Rename, abstraction:}
            $\la\var\env \AlphaEq \la\vartwo{(\env\isub\var\vartwo)}$ when $\vartwo\not\in\allvars\env$.
        \item\label{p:crumbling-alpha-stuct-abs}
            \emph{Structural, abstraction:}
            If $\env \AlphaEq \envtwo $ then $\la\var\env \AlphaEq \la\var\envtwo$.
    \end{enumerate}
    Moreover, in points \ref{p:crumbling-alpha-ren-ES} and \ref{p:crumbling-alpha-ren-abs}, $\alpha$-equivalence can rename $\var$ with $\vartwo$ only if they are both in $\crnames\setminus\set\varstar$ (resp. both in $\calcnames$); $\varstar$ cannot be renamed.
\end{definition}

Given an environment $\env$, we denote by $\rename\env$ any environment that is $\alpha$-equivalent to $\env$: we call $\rename\env$ a ``copy'' of $\env$. Later, when necessary, we will attach additional requirements on the variables of $\rename\env$, for instance that $\bv{\rename\env$} is disjoint from a certain set of variables.

In crumbled environments may occur so-called crumbling variables ($\crnames$). We do not handle these variables any different with respect to well-namedness: the only difference is that we ignore the special variable $\varstar$, both when defining disjoint set of variables, and in well-namedness.
\begin{definition}[Disjoint variables -- crumbled version]
    Let $X$ and $Y$ be sets of variable names, \ie{} $X, Y \subseteq \calcnames \cup \crnames$. We say that $X$ and $Y$ are disjoint (in symbols, $X \Disj Y$) if $X \cap Y \subseteq \set\varstar$.
\end{definition}

\begin{definition}[Well-named crumbled forms]
    An environment $\env$ (resp. a bite $\mol$) is \emph{well-named} if its bound variables (not including $\varstar$) are all distinct, and $\fv\env \Disj \bv\env$ (resp. $\fv\mol \Disj \bv\mol$).
\end{definition}

\begin{lemma}[Variables and concatenation]
    \label{l:fv-join-cup}
    For all crumbled environments $\env$ and $\envtwo$:
    \begin{itemize}
        \item $\fv{\env\envtwo} \subseteq \fv\env \setminus \domain\envtwo \cup \fv\envtwo$
        \item $\bv{\env\envtwo} = \bv\env \cup \bv\envtwo$
        \item $\allvars{\env\envtwo} = \allvars\env \cup \allvars\envtwo$
        \item $\domain{\env\envtwo} = \domain\env \cup \domain\envtwo$.    
    \end{itemize}
\end{lemma}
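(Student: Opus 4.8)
The plan is to prove all four statements by a single structural induction on the \emph{right} environment $\envtwo$, since the defining clauses of $\fv{\cdot}$, $\bv{\cdot}$ and $\domain{\cdot}$ all recurse on the rightmost explicit substitution of an environment, and the rightmost ES of $\env\envtwo$ is exactly that of $\envtwo$: concretely, if $\envtwo = \envthree\esub\var\mol$ then $\env\envtwo = (\env\envthree)\esub\var\mol$, so the induction hypothesis applies to $\env\envthree$. In the base case $\envtwo = \emptyenv$ we have $\env\envtwo = \env$, and each of the four identities reduces to a one-line check against $\fv\emptyenv = \set\varstar$, $\bv\emptyenv = \emptyset$, $\domain\emptyenv = \emptyset$ and $\allvars\emptyenv = \set\varstar$.

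For the inductive step I write $\envtwo = \envthree\esub\var\mol$ and unfold once. For the first statement, $\fv{\env\envtwo} = \fv{\env\envthree}\setminus\set\var \cup \fv\mol$; applying the induction hypothesis to $\env\envthree$ and rearranging with the set-algebra identities $(A\cup B)\setminus C = (A\setminus C)\cup(B\setminus C)$ and $A\setminus C\setminus D = A\setminus(C\cup D)$, together with the definitional unfoldings $\domain\envtwo = \domain\envthree\cup\set\var$ and $\fv\envtwo = \fv\envthree\setminus\set\var\cup\fv\mol$, gives exactly $\fv\env\setminus\domain\envtwo \cup \fv\envtwo$; the inclusion is preserved because removing $\var$ and adding $\fv\mol$ are monotone. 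The $\bv{\cdot}$ and $\domain{\cdot}$ cases are the same computation but with plain unions and no set difference, so they yield equalities directly. For $\allvars{\cdot}$ I would not try to deduce it from the $\fv$ and $\bv$ clauses (the $\fv$ clause being only an inclusion), but run the same induction directly: the key observation is that the $\setminus\set\var$ contributed by $\fv$ is cancelled by the $\set\var$ contributed by $\bv$, so $\allvars{(\env\envthree)\esub\var\mol} = \allvars{\env\envthree}\cup\set\var\cup\allvars\mol$, and the induction hypothesis closes the step since likewise $\allvars\envtwo = \allvars\envthree\cup\set\var\cup\allvars\mol$.

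The main obstacle is the bookkeeping around the distinguished variable $\varstar$, and it is exactly what forces the first statement to be an inclusion rather than an equality. Because $\fv\emptyenv = \set\varstar$, the right-hand side $\fv\env\setminus\domain\envtwo\cup\fv\envtwo$ can mention $\varstar$ even when $\env$ binds $\varstar$ and hence $\varstar\notin\fv{\env\envtwo}$; the cleanest place this surfaces is the base case, where $\fv{\env\emptyenv} = \fv\env \subseteq \fv\env\cup\set\varstar$, with the inclusion strict precisely when $\env$ binds $\varstar$. The same phenomenon reappears in the $\allvars{\cdot}$ base case, where one must absorb $\allvars\emptyenv = \set\varstar$ into $\allvars\env$. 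To justify the resulting equality I would first record the auxiliary fact that $\varstar\in\allvars\env$ for every environment $\env$: an immediate side induction, since $\varstar$ is free in $\emptyenv$ and each added ES either keeps it free or turns it into a bound variable, so it never leaves $\allvars{\cdot}$. With that observation the $\allvars{\cdot}$ base case becomes $\allvars\env\cup\set\varstar = \allvars\env$, and all four statements close.
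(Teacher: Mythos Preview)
Your proof is correct and follows exactly the approach the paper takes: the paper's proof is the single line ``Easy, by induction on the structure of $\envtwo$'', and you have spelled out that induction faithfully. Your treatment of the $\varstar$ corner case in the $\allvars{\cdot}$ base case (recording that $\varstar\in\allvars\env$ for every $\env$) is a genuine detail that the paper's one-line proof sweeps under the rug, and your side induction for it is correct.
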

\begin{proof}
    Easy, by induction on the structure of $\envtwo$.
\end{proof}

We provide a formal definition of $\alpha$-equality for crumbled environments.

\begin{definition}[$\alpha$-equality $\AlphaEq$]
    \label{def:es-calculus-alpha}
    We define the relation $\AlphaEq$ as the smallest equivalence relation (reflexive, symmetric, transitive) over terms, satisfying the following properties:
    \begin{enumerate}
        \item\label{p:es-calculus-alpha-stuct-ES}
            \emph{Structural, ES:}
            If $\tm \AlphaEq \tm'$ and $\tmtwo \AlphaEq \tmtwo'$ then $\tm\esub\var\tmtwo \AlphaEq \tm'\esub\var{\tmtwo'}$.
        \item\label{p:es-calculus-alpha-ren-ES}
            \emph{Rename, ES:}
            $\tm\esub\var\tmtwo \AlphaEq \tm\isub\var\vartwo\esub\vartwo\tmtwo$ when $\vartwo\not\in\allvars\tm$.
        \item\label{p:es-calculus-alpha-ren-abs}
            \emph{Rename, abstraction:}
            $\la\var\tm \AlphaEq \la\vartwo{(\tm\isub\var\vartwo)}$ when $\vartwo\not\in\allvars\tm$.
        \item\label{p:es-calculus-alpha-stuct-abs}
            \emph{Structural, abstraction:}
            If $\tm \AlphaEq \tmtwo $ then $\la\var\tm \AlphaEq \la\var\tmtwo$.
    \end{enumerate}
\end{definition}

Free variables of a crumbled environment are stable under $\alpha$-renaming:

\begin{lemma}[Alpha and free variables]\label{l:env-fv-alpha}
    For every environments $\env,\envtwo$: if $\env \AlphaEq \envtwo$ then $\fv{\env} = \fv{\envtwo}$.
\end{lemma}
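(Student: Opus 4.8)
The plan is to prove the statement by induction on the derivation of $\env \AlphaEq \envtwo$ given by Definition~\ref{def:crumbling-alpha}, proving simultaneously the analogous statement for bites (namely $\mol \AlphaEq \moltwo$ implies $\fv\mol = \fv\moltwo$), since the relation on environments and the one on bites are defined by mutual recursion through the structural ES clause. Equivalently, one shows that the pair of relations $\{(\env,\envtwo) : \fv\env = \fv\envtwo\}$ and $\{(\mol,\moltwo) : \fv\mol = \fv\moltwo\}$ is closed under the four generating clauses and under reflexivity, symmetry, and transitivity; minimality of $\AlphaEq$ then yields the inclusion. The equivalence-closure cases are immediate because set equality is itself an equivalence relation, and the two structural clauses (points~1 and~4 of Definition~\ref{def:crumbling-alpha}) follow directly from the compositional definition of $\fv$ together with the induction hypotheses: for point~1, $\fv{\env\esub\var\mol} = (\fv\env \setminus \set\var) \cup \fv\mol$ depends on $\env$ and $\mol$ only through their free variables, which are preserved; point~4 is analogous via $\fv{\la\var\env} = \fv\env \setminus \set\var$.

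The two renaming clauses (points~2 and~3) are the real content, and both reduce to a single freshness lemma: \emph{for every environment $\env$ and variables $\var,\vartwo$ with $\vartwo \notin \allvars\env$, one has $\fv{\env\isub\var\vartwo} \setminus \set\vartwo = \fv\env \setminus \set\var$} (and similarly for bites). I would establish this auxiliary lemma first, by a routine structural induction on $\env$ and $\mol$: the freshness hypothesis $\vartwo \notin \allvars\env$ guarantees that the capture-avoiding renaming $\isub\var\vartwo$ performs no actual renaming of bound variables, so it reduces to the naive replacement of the free occurrences of $\var$ by $\vartwo$. Hence $\fv{\env\isub\var\vartwo} = \fv\env$ when $\var \notin \fv\env$, and $\fv{\env\isub\var\vartwo} = (\fv\env \setminus \set\var) \cup \set\vartwo$ when $\var \in \fv\env$; removing $\vartwo$ (which by freshness lies outside $\fv\env$) yields $\fv\env \setminus \set\var$ in both cases.

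Granting this lemma, the renaming cases are immediate computations. For point~2, $\fv{\env\isub\var\vartwo\esub\vartwo\mol} = (\fv{\env\isub\var\vartwo} \setminus \set\vartwo) \cup \fv\mol = (\fv\env \setminus \set\var) \cup \fv\mol = \fv{\env\esub\var\mol}$, using that $\mol$ is left untouched by the clause. For point~3, $\fv{\la\vartwo(\env\isub\var\vartwo)} = \fv{\env\isub\var\vartwo} \setminus \set\vartwo = \fv\env \setminus \set\var = \fv{\la\var\env}$. The main obstacle is bookkeeping rather than conceptual: one must treat the special variable $\varstar$ (which $\fv\emptyenv$ contributes and which the side conditions forbid from being renamed) carefully, so that it is never the $\var$ or $\vartwo$ of a renaming clause and therefore behaves like an ordinary free variable distinct from both, remaining untouched; and one must phrase the freshness lemma with $\allvars$ rather than merely $\fv$ so that the induction through the abstraction and ES binders goes through without accidental captures.
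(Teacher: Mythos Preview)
Your proposal is correct and follows exactly the approach the paper indicates: the paper's proof reads in full ``Easy, by structural induction on the derivation of $\env \AlphaEq \envtwo$,'' and you have simply unpacked that induction, including the mutual statement for bites and the freshness computation needed for the two renaming clauses. The auxiliary fact you isolate is essentially the paper's later \reflemma{vars-after-subst-crumbled} (one direction) sharpened to an equality, which is indeed what the renaming cases require.
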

\begin{proof}
    Easy, by structural induction on the derivation of $\env \AlphaEq \envtwo$.
\end{proof}

\subsection{Properties of well-named environments}

\begin{lemma}[Concatenation of well-named environments]
    \label{l:append-well-named}
    Let $\env, \envtwo$ be well-named crumbled environments such that:
    \begin{itemize}
        \item $\bv\env \Disj \allvars\envtwo$,
        \item $\fv\env \setminus \domain\envtwo \Disj \bv\envtwo$.
    \end{itemize}
    Then $\env\envtwo$ is well-named.
\end{lemma}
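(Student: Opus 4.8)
The plan is to unfold the definition of well-namedness for the concatenation $\env\envtwo$ into its two conjuncts---distinctness of the binders and disjointness of free from bound variables---and to discharge each one using the variable-and-concatenation identities of \reflemma{fv-join-cup}, the individual well-namedness of $\env$ and $\envtwo$, and the two disjointness hypotheses. No induction should be needed: everything reduces to elementary set reasoning carried out modulo the distinguished variable $\varstar$.

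First I would handle distinctness of the binders. By \reflemma{fv-join-cup} we have $\bv{\env\envtwo} = \bv\env \cup \bv\envtwo$. The binders of $\env$ other than $\varstar$ are pairwise distinct because $\env$ is well-named, and likewise for $\envtwo$. It then suffices to rule out a single variable $\var \neq \varstar$ being bound both in $\env$ and in $\envtwo$; but $\bv\env \cap \bv\envtwo \subseteq \bv\env \cap \allvars\envtwo \subseteq \set\varstar$ by the hypothesis $\bv\env \Disj \allvars\envtwo$ (using $\bv\envtwo \subseteq \allvars\envtwo$), so no such clash can occur and all binders of $\env\envtwo$ other than $\varstar$ are distinct.

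Next I would prove $\fv{\env\envtwo} \Disj \bv{\env\envtwo}$, that is $\fv{\env\envtwo} \cap \bv{\env\envtwo} \subseteq \set\varstar$. Using the inclusion $\fv{\env\envtwo} \subseteq (\fv\env \setminus \domain\envtwo) \cup \fv\envtwo$ together with $\bv{\env\envtwo} = \bv\env \cup \bv\envtwo$, both from \reflemma{fv-join-cup}, I would distribute the intersection into four cross terms and bound each by $\set\varstar$: the term $(\fv\env \setminus \domain\envtwo) \cap \bv\env \subseteq \fv\env \cap \bv\env$ is controlled by well-namedness of $\env$; the term $(\fv\env \setminus \domain\envtwo) \cap \bv\envtwo$ is exactly the second hypothesis $\fv\env \setminus \domain\envtwo \Disj \bv\envtwo$; the term $\fv\envtwo \cap \bv\env \subseteq \allvars\envtwo \cap \bv\env$ is the first hypothesis $\bv\env \Disj \allvars\envtwo$; and $\fv\envtwo \cap \bv\envtwo$ is controlled by well-namedness of $\envtwo$. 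Since each of the four pieces lies in $\set\varstar$, so does their union, giving the claim.

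There is no real obstacle here: the argument is pure bookkeeping, and the only point requiring care is consistently carrying the $\varstar$ exception through every intersection, because the crumbled notion of disjointness ($X \Disj Y$ meaning $X \cap Y \subseteq \set\varstar$) and the crumbled notion of well-namedness both tolerate a shared $\varstar$. Using the \emph{inclusion} for $\fv{\env\envtwo}$ rather than an equality is harmless, since an over-approximation of the free variables can only make the disjointness statement easier to establish.
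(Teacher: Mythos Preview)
Your proof is correct, and it takes a genuinely different route from the paper. The paper proceeds by structural induction on $\envtwo$: in the inductive case $\envtwo = \envthree\esub\var\mol$ it first massages the two disjointness hypotheses so that the induction hypothesis applies to $\env\envthree$, and then re-establishes both well-namedness conjuncts for $\env\envthree\esub\var\mol$ by a further case analysis (including checking $\bv{\env\envthree}\cup\{\var\} \Disj \bv\mol$, $\var\notin\bv{\env\envthree}$, and several sub-items for the free-versus-bound condition). Your argument is instead fully direct: you appeal once to \reflemma{fv-join-cup} for $\bv{\env\envtwo}=\bv\env\cup\bv\envtwo$ and $\fv{\env\envtwo}\subseteq(\fv\env\setminus\domain\envtwo)\cup\fv\envtwo$, and then close both conjuncts by pure set reasoning modulo $\varstar$. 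This is shorter and arguably cleaner; the only thing the inductive proof buys is that it never relies on the global identities of \reflemma{fv-join-cup} (it re-derives what it needs step by step), but since that lemma is already available in the paper your use of it is entirely appropriate.
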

\begin{proof}
    By induction on the structure of $\envtwo$:
    \begin{itemize}
        \item If $\envtwo = \emptyenv$, then $\env\envtwo = \env$ and we conclude.
        \item If $\envtwo = \envthree\esub\var\mol$, in order to apply the \ih{} obtaining that $\env\envthree$ is well-named, we need the following properties:
        \begin{itemize}
            \item $\bv\env \Disj \allvars\envthree$. Follows from the hypothesis $\bv\env \Disj \allvars\envtwo$ because $\allvars\envthree \subseteq \allvars\envtwo$.
            \item $\fv\env \setminus \domain\envthree \Disj \bv\envthree$. Note that $\domain\envthree = \domain\env \setminus \set\varthree$, hence $\fv\env \setminus \domain\envthree \subseteq \fv\env \setminus \domain\envtwo \cup \set\varthree$. Also, $\bv\envthree \subseteq \bv\envtwo$. Finally, we use the fact that $\varthree\not\in\bv\envthree$, which follows from the hypothesis that $\envthree\esub\var\mol$ is well-named.
        \end{itemize}
        We have just obtained that $\env\envthree$ is well-named. We need to show that $\env\envthree\esub\var\mol$ is well-named.
        \begin{itemize}
            \item Bound variables are all distinct. It suffices to show that $\bv{\env\envthree}\cup\set\var \Disj \bv\mol$, and that $\var\not\in\bv{\env\envthree}$. Both follow from $\bv\env \Disj \allvars\envtwo$ and from the well-namedness of $\envtwo$.
            \item Free variables are distinct from bound ones. By definition, $\fv{\env\envthree\esub\var\mol} = \fv{\env\envthree}\setminus\set\var\cup\fv\mol$ and $\bv{\env\envthree\esub\var\mol} = \bv{\env\envthree}\cup\set\var\cup\bv\mol$.
            \begin{itemize}
                \item $\fv{\env\envthree}\setminus\set\var \Disj \bv{\env\envthree}\cup\set\var$ because $\fv{\env\envthree} \Disj \bv{\env\envthree}$ by the well-namedness of $\env\envthree$;
                \item $\fv\mol \Disj \bv\mol$ by well-namedness of $\envtwo$;
                \item $\var\not\in\fv\mol$ by well-namedness of $\envtwo$;
                \item $\fv\mol \Disj \bv{\env\envthree} $ if and only if $\fv\mol\Disj \bv\env$ and $\fv\mol \Disj \bv\envthree$. The first follows from $\bv\env \Disj \allvars\envtwo$, the second by well-namedness of $\envtwo$.
                \item $\bv\mol \Disj \fv{\env\envthree}\setminus\set\var$. Note that $\var\not\in\bv\mol$ by well-namedness of $\envtwo$, therefore we prove instead the equivalent $\bv\mol \Disj \fv{\env\envthree}$. The hypothesis $\fv\env \setminus \domain\envtwo \Disj \bv\envtwo$ implies $\fv\env \setminus \domain\envtwo \Disj \bv\mol$, and since by well-namedness of $\envtwo$ $\bv\mol \Disj \domain\envtwo$, $\fv\env \setminus \domain\envtwo \Disj \bv\mol$ is equivalent to $\fv\env \Disj \bv\mol$. Again by well-namedness of $\envtwo$, $\fv\envthree \Disj \bv\mol$, and we conclude because $\fv{\env\envthree} \subseteq \fv\env \cup \fv\envthree$ by \reflemma{fv-join-cup}.
            \qedhere
            \end{itemize}
        \end{itemize}
    \end{itemize}
\end{proof}

An auxiliary lemma that will be used in later sections:
\begin{lemma}\label{l:decomp-well-named-two}
    If $\env\esub\var\mol$ is well-named then $\env$ is well-named and $\fv\mol \Disj \bv\env$.
    \end{lemma}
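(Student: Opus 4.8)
The plan is to unfold the defining clauses for free and bound variables of $\env\esub\var\mol$ together with the two conditions packaged into well-namedness — namely distinctness of binders (modulo $\varstar$) and the disjointness $\fv{-}\Disj\bv{-}$ — and then observe that each property required of the sub-components is inherited from the corresponding property of the whole. The two structural inclusions I would use throughout are $\bv\env \subseteq \bv{\env\esub\var\mol}$ and $\fv\mol \subseteq \fv{\env\esub\var\mol}$, both immediate from the clauses $\bv{\env\esub\var\mol} = \bv\env \cup \set\var \cup \bv\mol$ and $\fv{\env\esub\var\mol} = (\fv\env \setminus \set\var) \cup \fv\mol$.

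First, to show that $\env$ is well-named I would check the two conjuncts separately. Distinctness of the binders of $\env$ holds because they form a sub-collection of the binders of $\env\esub\var\mol$, and a sub-collection of binders that are pairwise distinct modulo $\varstar$ is still pairwise distinct modulo $\varstar$. For the disjointness $\fv\env \Disj \bv\env$, I would take an arbitrary $y \in \fv\env \cap \bv\env$ with $y \neq \varstar$ and derive a contradiction, splitting on whether $y = \var$. If $y \neq \var$, then $y \in \fv\env \setminus \set\var \subseteq \fv{\env\esub\var\mol}$ and $y \in \bv\env \subseteq \bv{\env\esub\var\mol}$, so $y \in \fv{\env\esub\var\mol} \cap \bv{\env\esub\var\mol} \subseteq \set{\varstar}$ by the disjointness conjunct of the hypothesis, forcing $y = \varstar$. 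If instead $y = \var$, then $y \in \bv\env$ means $\env$ already binds $\var$, while the outer ES $\esub\var\mol$ binds $\var$ as well; two binders on $\var$ force $\var = \varstar$ by the distinctness conjunct of the hypothesis, again contradicting $y \neq \varstar$.

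Second, for $\fv\mol \Disj \bv\env$ the same template applies, and more directly: for $y \in \fv\mol \cap \bv\env$ with $y \neq \varstar$, the inclusions give $y \in \fv\mol \subseteq \fv{\env\esub\var\mol}$ and $y \in \bv\env \subseteq \bv{\env\esub\var\mol}$, hence $y \in \set{\varstar}$ by the disjointness conjunct, a contradiction. There is no genuine obstacle here beyond bookkeeping; the only subtlety worth flagging is the $y = \var$ case in the disjointness for $\env$, where the argument is not about free-versus-bound membership but about the distinctness-of-binders condition, so I would be careful to invoke the correct conjunct of well-namedness there, and to keep the $\varstar$ escape clause present throughout all the set inclusions, since both $\Disj$ and the distinctness condition tolerate $\varstar$.
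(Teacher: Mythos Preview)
Your proof is correct and follows essentially the same approach as the paper, which dismisses the lemma with ``Easy by the definition of well-named.'' Your careful case split on $y = \var$ versus $y \neq \var$ for the $\fv\env \Disj \bv\env$ conjunct, invoking distinctness-of-binders in the former case, is exactly the bookkeeping the paper leaves implicit.
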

    \begin{proof}
        Easy by the definition of well-named.
    \end{proof}

\subsection{Read-back properties}

Before proving some important properties of the read-back in \reflemma{properties-unfolding}, we need the following additional properties of VSC terms.

\begin{lemma}[Alpha \& substitution]
	\label{l:alpha-substitivity}
	If $\tm \AlphaEq \tm'$ and $\tmtwo \AlphaEq \tmtwo'$, then $\tm\isub\var\tmtwo \AlphaEq \tm'\isub\var{\tmtwo'}$.
\end{lemma}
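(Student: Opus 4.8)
The plan is to factor the statement through transitivity of $\AlphaEq$, reducing it to substitutivity in each argument separately. Concretely, I would prove the two auxiliary claims
\[
	\tm \AlphaEq \tm' \ \Longrightarrow\ \tm\isub\var\tmtwo \AlphaEq \tm'\isub\var\tmtwo
	\qquad\text{and}\qquad
	\tmtwo \AlphaEq \tmtwo' \ \Longrightarrow\ \tm'\isub\var\tmtwo \AlphaEq \tm'\isub\var{\tmtwo'},
\]
and then conclude via the chain $\tm\isub\var\tmtwo \AlphaEq \tm'\isub\var\tmtwo \AlphaEq \tm'\isub\var{\tmtwo'}$. Both claims rely on the basic fact that $\alpha$-equivalent VSC terms have the same free variables (the term-level analogue of \reflemma{env-fv-alpha}), which I would establish first by a routine induction: since the capture-avoiding clause of $\isub\cdot\cdot$ renames a binder only according to the free-variable set of the inserted term, this invariance is what makes the fresh-renaming choices coincide on the two sides in every abstraction and ES case below.

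For the right-hand claim (substitutivity in the argument) I would induct on the \emph{size} of $\tm'$, a measure invariant under renaming, so that the induction hypothesis survives the freshening of bound variables. The variable and application cases are immediate from the definition of substitution and the congruence rules of $\AlphaEq$. For $\tm' = \la\vartwo\tmfive$ (and symmetrically for an ES) there are three subcases: $\var = \vartwo$ (substitution is blocked and both sides are syntactically equal); $\var \neq \vartwo$ with $\vartwo \notin \fv\tmtwo = \fv{\tmtwo'}$ (no capture, recurse under the binder); and $\var \neq \vartwo$ with $\vartwo \in \fv\tmtwo = \fv{\tmtwo'}$ (capture, so $\vartwo$ is renamed to one and the same fresh variable on both sides). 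In each subcase one applies the induction hypothesis to the smaller body — possibly after a size-preserving renaming — and closes with the structural rule for abstractions or ES.

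For the left-hand claim (substitutivity in the body) I would induct on the \emph{derivation} of $\tm \AlphaEq \tm'$, keeping the statement universally quantified over $\var$ and $\tmtwo$ so that the induction hypothesis is available for every substitution (in particular for the auxiliary renaming substitutions introduced by capture avoidance). The equivalence-closure cases are discharged directly by reflexivity, symmetry and transitivity of $\AlphaEq$, and the two structural rules (for ES and for abstraction) by the induction hypothesis on the premises followed by the same structural rule.

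The main obstacle — indeed the only genuinely delicate part — are the two \emph{renaming axioms} of $\AlphaEq$ (for ES and for abstraction): they carry no subderivation, so they must be verified by unfolding both capture-avoiding substitutions by hand. Computing $(\tmfour\esub\vartwo\tmfive)\isub\var\tmtwo$ and $(\tmfour\isub\vartwo\varthree\esub\varthree\tmfive)\isub\var\tmtwo$, one sees that the argument subterm $\tmfive$ receives the identical substitution on both sides, while the two bodies must be compared through a commutation fact: binder-renaming and $\isub\var\tmtwo$ commute up to $\AlphaEq$, i.e. $\tmfour\isub\vartwo\varthree\isub\var\tmtwo \AlphaEq (\tmfour\isub\var\tmtwo)\isub\vartwo\varthree$ for $\varthree$ suitably fresh, after which the renaming axiom itself re-establishes the equivalence of the two ES (resp. abstractions). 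I would isolate this commutation as a small preliminary lemma — morally, that capture-avoiding substitution is well defined up to $\AlphaEq$ independently of the fresh names chosen — proved by its own induction; with it in hand the axiom cases close and the lemma follows.
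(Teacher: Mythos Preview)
Your proposal is correct and aligns with the paper's approach, which dispatches the lemma in a single sentence: ``Easy by structural induction on the derivation of $\tm \AlphaEq \tm'$, and by the definition of substitution.'' Your factoring through transitivity, the separate size induction for substitutivity in the argument, and the explicit treatment of the renaming axioms via a commutation lemma are a careful and sound elaboration of what the paper leaves entirely implicit.
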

\begin{proof}
	Easy by structural induction on the derivation of $\tm \AlphaEq \tm'$, and by the definition of substitution.
\end{proof}

\begin{lemma}[Variables \& substitution]
    \label{l:vars-after-subst}
    Let $\tm,\tmtwo$ be terms. Then:
    \begin{enumerate}
        \item\label{p:vars-after-subst-fv}
        $\fv{\tm\isub\var\tmtwo} \subseteq \fv\tm \setminus\set\var\cup \fv\tmtwo$.
        \item\label{p:vars-after-subst-bv}
        $\bv{\tm\isub\var\tmtwo} \subseteq \bv\tm \cup \bv\tmtwo$ when $\bv\tm \Disj \fv\tmtwo$.
    \end{enumerate}
  \end{lemma}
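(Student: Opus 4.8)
The plan is to prove both inclusions by structural induction on $\tm$, treating the two statements as separate but parallel inductions. In each, the variable, application, and ES subterms reduce to routine unions of the inductive hypotheses, and the only delicate point is the binder cases (abstraction and the left component of an ES), where capture-avoiding substitution may trigger an $\alpha$-renaming of the bound variable.

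For the first inclusion I would first dispatch the base cases: if $\tm = \var$ then $\tm\isub\var\tmtwo = \tmtwo$ and $\fv\tm\setminus\set\var = \emptyset$, so the inclusion is in fact an equality; if $\tm = \vartwo \neq \var$ the left-hand side is $\set\vartwo$, contained in $\set\vartwo\setminus\set\var\cup\fv\tmtwo$. For an application $\tm = \tmthree\tmfour$ the substitution distributes and the claim is the union of the two inductive hypotheses, and the ES case $\tm = \tmthree\esub\vartwo\tmfour$ combines this with the binder treatment. For an abstraction $\tm = \la\vartwo\tmthree$, if $\vartwo = \var$ no replacement occurs and $\var\notin\fv\tm$, so the inclusion reduces to $\fv\tm\subseteq\fv\tm$; otherwise, since free variables are invariant under $\alpha$-renaming, I may assume w.l.o.g.\ that $\vartwo\notin\fv\tmtwo$ and $\vartwo\neq\var$, so that $\tm\isub\var\tmtwo = \la\vartwo{(\tmthree\isub\var\tmtwo)}$ with no renaming, and then $\fv{\tm\isub\var\tmtwo} = \fv{\tmthree\isub\var\tmtwo}\setminus\set\vartwo$ is bounded using the inductive hypothesis on $\tmthree$.

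The second inclusion is where the hypothesis $\bv\tm\Disj\fv\tmtwo$ and the minimization convention on capture-avoiding substitution do the real work. The key observation, which I would state up front, is that under $\bv\tm\Disj\fv\tmtwo$ \emph{no} $\alpha$-renaming is performed by $\isub\var\tmtwo$: a renaming is forced only when some binder occurring in $\tm$ would capture a free variable of $\tmtwo$, but the disjointness hypothesis precisely forbids this, and the minimization assumption guarantees that no gratuitous renaming is introduced. Granting this, the substitution proceeds purely structurally and the induction is immediate: each binder case keeps its bound variable (which lies in $\bv\tm$), the inductive hypotheses apply to the immediate subterms because their bound variables are included in $\bv\tm$ and hence still disjoint from $\fv\tmtwo$, and the only new bound variables come from the copies of $\tmtwo$ substituted for the free occurrences of $\var$, all of which lie in $\bv\tmtwo$ (set union absorbing the multiplicity of copies).

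I expect the main obstacle to be not any single computation but the careful bookkeeping around renaming: in the first part one must justify the ``w.l.o.g.\ no renaming'' reduction via $\alpha$-invariance of free variables, and in the second part one must make precise the claim that the disjointness hypothesis together with the minimization convention rules out renaming altogether. Both are conceptually simple but must be argued explicitly, since the whole point of statement~\ref{p:vars-after-subst-bv} is that it genuinely fails without the disjointness side condition.
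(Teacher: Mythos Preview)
Your proposal is correct and follows exactly the approach the paper takes: the paper's proof reads in its entirety ``Easy, by induction on the structure of $\tm$.'' Your elaboration of the binder cases, in particular the use of the minimization convention on capture-avoiding substitution together with the hypothesis $\bv\tm \Disj \fv\tmtwo$ to rule out any $\alpha$-renaming in part~\ref{p:vars-after-subst-bv}, is precisely the content the paper leaves implicit.
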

  \begin{proof}
    Easy, by induction on the structure of $\tm$.
  \end{proof}

  Similar properties hold for crumbled environments:
\begin{lemma}
    \label{l:vars-after-subst-crumbled}
    Let $\env$ be an environment such that $\vartwo\not\in\bv\env$. Then:
    \begin{enumerate}
        \item\label{p:vars-after-subst-crumbled-fv}
        $\fv{\env\isub\var\vartwo} \subseteq \fv\env \setminus\set\var\cup \set\vartwo$.
        \item\label{p:vars-after-subst-crumbled-bv}
        $\bv{\env\isub\var\vartwo} = \bv\env$.
    \end{enumerate}
\end{lemma}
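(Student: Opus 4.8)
The plan is to prove both items simultaneously by mutual structural induction on environments and bites, strengthening the statement to also cover bites: for every bite $\mol$ with $\vartwo\notin\bv\mol$ one has $\fv{\mol\isub\var\vartwo}\subseteq\fv\mol\setminus\set\var\cup\set\vartwo$ and $\bv{\mol\isub\var\vartwo}=\bv\mol$. The mutual form is forced by the grammar: the only recursive bite is $\la\varthree\envthree$, which embeds an environment, while environments are built from bites through ES. The conceptual heart of item~\ref{p:vars-after-subst-crumbled-bv} is that the side condition $\vartwo\notin\bv\env$ (resp.\ $\vartwo\notin\bv\mol$) is exactly what prevents the capture-avoiding substitution $\isub\var\vartwo$ from renaming any binder, so that the bound variables are preserved verbatim.

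First I would dispatch the base cases. For $\env=\emptyenv$ the substitution is the identity, since the only free variable of $\emptyenv$ is $\varstar$ and, by the convention that $\varstar$ is never renamed, $\var\neq\varstar$; both inclusions are then immediate. For the atomic bites $\mol=\varthree$ and $\mol=\varthree\varfour$ the substitution just replaces the occurrences equal to $\var$ by $\vartwo$, so the inclusion for the free variables is a one-line check and the equality $\bv{\mol\isub\var\vartwo}=\bv\mol$ holds because $\bv\mol$ is empty on both sides.

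The inductive steps carry the real (but routine) bookkeeping, and in each of them I would split on whether the substituted variable coincides with the local binder. For $\env=\envtwo\esub\varthree\mol$, recall that $\varthree$ binds in $\envtwo$ but not in $\mol$. If $\var=\varthree$ the binder blocks substitution inside $\envtwo$, yielding $\envtwo\esub\varthree{\mol\isub\var\vartwo}$; if $\var\neq\varthree$, then $\vartwo\neq\varthree$ because $\varthree\in\bv\env$, so no capture arises and the substitution distributes to $(\envtwo\isub\var\vartwo)\esub\varthree{\mol\isub\var\vartwo}$. In either case I apply the induction hypotheses to $\envtwo$ and $\mol$ --- whose bound variables are subsets of $\bv\env$, so the side condition on $\vartwo$ is inherited --- and then simplify using the defining equation $\fv{\env\esub\var\mol}=\fv\env\setminus\set\var\cup\fv\mol$; the set identities go through precisely because $\vartwo\neq\varthree$ lets me commute $\setminus\set\var$ past $\setminus\set\varthree$. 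The abstraction bite $\mol=\la\varthree\envthree$ is handled in the same way, splitting on $\var=\varthree$ versus $\var\neq\varthree$ and using $\vartwo\neq\varthree$ (from $\vartwo\notin\bv\mol$) to exclude capture before invoking the \ih{} on $\envthree$.

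The only genuine obstacle is notational rather than mathematical: keeping the set-difference/union rearrangements honest across the four cases, and recalling in each that the no-capture hypothesis furnishes the disequality $\vartwo\neq\varthree$ that, together with the case split $\var\neq\varthree$, licenses those rearrangements. This is simply the crumbled analogue of \reflemma{vars-after-subst} for VSC terms, transported to environments and bites by the same structural induction.
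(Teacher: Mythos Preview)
Your proposal is correct and follows the same approach as the paper, which simply says ``Easy, by induction on the structure of $\env$.'' You have spelled out in full the mutual induction on environments and bites that the paper leaves implicit, including the binder case splits and the use of the hypothesis $\vartwo\notin\bv\env$ to rule out capture.
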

\begin{proof}
    Easy, by induction on the structure of $\env$.
\end{proof}

\begin{lemma}[Properties of the read-back]\label{l:properties-unfolding}
    For all environments $\env,\envtwo$:
    \begin{enumerate}
        \item\label{p:properties-unfolding-alpha}
            \emph{$\alpha$-Equality:}
            if $\env \AlphaEq \envtwo$ and $\env,\envtwo$ are well-named, then 
            $\unf\env \AlphaEq \unf\envtwo$.
        \item\label{p:properties-unfolding-renaming}
            \emph{Renaming:}
            $\unf{\env\isub\var\vartwo} = \unf\env\isub\var\vartwo$ when $\vartwo\not\in\bv\env$.
        \item\label{p:properties-unfolding-fv}
            \emph{Free variables:}
            $\fv{\unf\env} \subseteq \fv\env$.
        \item\label{p:properties-unfolding-bv}
            \emph{Bound variables:}
            $\bv{\unf\env} \subseteq \bv\env$ when $\env$ is well-named.
        \item \label{p:properties-unfolding-tmpintegrare}
         \emph{Dissociation:}
    if $\fv\envtwo \Disj \domain\env$ and $\domain\env \subset \crnames$
    then $\unf{\envtwo\esub\var\mol\env} = \unf\envtwo\isub\var{\unf{\esub\varstar\mol\env}}$.
    \end{enumerate} 
\end{lemma}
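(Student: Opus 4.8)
The plan is to establish the five items by structural inductions, handling the mutual dependence on bites along the way: since $\AlphaEq$, $\unf\cdot$, $\fv\cdot$ and $\bv\cdot$ are all defined simultaneously on environments and bites, each item is proved together with its evident companion statement for a bite $\mol$ (e.g. $\fv{\unf\mol}\subseteq\fv\mol$ for the free-variables item). Every induction proceeds by the definition of read-back, which on $\env\esub\var\mol$ splits on the side condition $(*)$ ``$\mol=\val$ or $\var\in\crnames$'': when $(*)$ holds the $\mathsf{ES}$ is turned into a meta-level substitution $\unf\env\isub\var{\unf\mol}$, otherwise it is kept as $\unf\env\esub\var{\unf\mol}$. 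The items are not independent, so I would prove them in the order Renaming, then free/bound variables, then $\alpha$-equality, and finally Dissociation.

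The Renaming, free-variables and bound-variables items are routine. Renaming is a direct induction on $\env$: in either branch of $(*)$ one pushes $\isub\var\vartwo$ through the read-back and concludes by the induction hypothesis, observing that $(*)$ is insensitive to the renaming because $\vartwo$ stays in the same name class as $\var$ and $\vartwo\notin\bv\env$ prevents capture. The free-variables item follows from \reflemmap{vars-after-subst}{fv} in the substitution branch (and directly in the $\mathsf{ES}$ branch), plus the induction hypotheses on $\env$ and on $\mol$. The bound-variables item uses \reflemmap{vars-after-subst}{bv}, whose side condition $\bv{\unf\env}\Disj\fv{\unf\mol}$ I would discharge from well-namedness via \reflemma{decomp-well-named-two} (which gives $\fv\mol\Disj\bv\env$) together with the free- and bound-variable inclusions already established.

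For $\alpha$-equality I would induct on the derivation of $\env\AlphaEq\envtwo$. The structural rules are immediate from the induction hypotheses (using that VSC $\alpha$-equality is a congruence, \reflemma{alpha-substitivity}, in the substitution branch). The two interesting cases are the renaming rules. For the $\mathsf{ES}$-renaming $\env\esub\var\mol\AlphaEq\env\isub\var\vartwo\esub\vartwo\mol$ (with $\vartwo\notin\allvars\env$), both sides lie in the same branch of $(*)$, so after applying the Renaming item I am left to compare $\unf\env\isub\var{\unf\mol}$ with $\unf\env\isub\var\vartwo\isub\vartwo{\unf\mol}$; these agree up to $\AlphaEq$ because $\vartwo$ is fresh for $\unf\env$, which is exactly the free/bound-variable items applied to $\vartwo\notin\allvars\env$. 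The abstraction-renaming case reduces, via the clause $\unf{\la\var\env}=\la\var\unf\env$ and the Renaming item, to a single $\alpha$-step of VSC terms, again justified by freshness.

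The real obstacle is the Dissociation item, which I would prove by induction on $\env$. In the base case $\env=\emptyenv$ the right-hand side collapses, since $\unf{\esub\varstar\mol}=\unf\mol$ (because $\varstar\in\crnames$ forces the substitution branch), to $\unf\envtwo\isub\var{\unf\mol}$, which matches the left-hand side once $\esub\var\mol$ is itself read back in its substitution branch. In the inductive step $\env=\env'\esub\varthree\moltwo$ the hypothesis $\domain\env\subset\crnames$ gives $\varthree\in\crnames$, so both the outer read-back and the read-back of $\esub\varstar\mol\env$ take their substitution branch; after peeling off $\isub\varthree{\unf\moltwo}$ on both sides and applying the induction hypothesis (valid since $\domain{\env'}\subset\crnames$ and $\fv\envtwo\Disj\domain{\env'}$), the goal becomes the substitution-commutation identity $(\unf\envtwo\isub\var A)\isub\varthree B=\unf\envtwo\isub\var{A\isub\varthree B}$ with $A=\unf{\esub\varstar\mol\env'}$ and $B=\unf\moltwo$. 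This is where the side hypotheses pay off: the standard substitution lemma applies because $\varthree\notin\fv{\unf\envtwo}$, which follows from $\varthree\in\domain\env$, the disjointness $\fv\envtwo\Disj\domain\env$, and the free-variables item $\fv{\unf\envtwo}\subseteq\fv\envtwo$. Getting all the freshness bookkeeping in this last step to line up — in particular ensuring that the occurrences touched by the outer $\isub\varthree B$ are exactly those created by $A$ — is the delicate part of the whole lemma.
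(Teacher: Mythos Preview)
Your proposal is correct and follows essentially the same approach as the paper: both argue by structural induction on each item (the paper frames it as one mutual induction on size, but the dependencies linearize in exactly the order you give), use Renaming together with the free/bound-variable inclusions to handle the $\alpha$-equality renaming cases, and prove Dissociation by induction on $\env$ with the same substitution-commutation identity justified via $\varthree\notin\fv{\unf\envtwo}$ from the free-variables item. The only slip is your justification that $(*)$ is preserved under $\isub\var\vartwo$ ``because $\vartwo$ stays in the same name class as $\var$'': the condition $(*)$ concerns the ES binder, not $\var$ or $\vartwo$, and is preserved simply because $\vartwo\notin\bv\env$ means that binder is left untouched by the substitution.
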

\begin{proof}
    We prove the points by mutual induction on the size of $\env$:
    \begin{enumerate}
        \item We prove at the same time the corresponding statement for bites, \ie{} that $\mol \AlphaEq \moltwo$ implies $\unf\mol \AlphaEq \unf\moltwo$ for all bites $\mol,\moltwo$. By structural induction on the derivation of respectively $\env \AlphaEq \envtwo$ or $\mol \AlphaEq \moltwo$:
            \begin{itemize}
                \item \emph{Reflexivity:} in this case $\env = \envtwo$ and $\mol=\moltwo$, and therefore trivially $\unf\env = \unf\envtwo$ and $\unf\mol = \unf\moltwo$.
                \item \emph{Symmetry:} $\env \AlphaEq \envtwo$ because $\envtwo \AlphaEq \env$. Then by \ih{} we obtain $\unf\envtwo \AlphaEq \unf\env$, and we conclude because $\AlphaEq$ is symmetric. Similarly for bites.
                \item \emph{Transitivity:}
                    $\env \AlphaEq \envtwo$ because $\env \AlphaEq \envthree$ and $\envthree \AlphaEq \envtwo$. Just use the \ih{} and use symmetry of $\AlphaEq$. Similarly for bites.
                \item \emph{Structural, ES:}
                    $\env \AlphaEq \envtwo$ because $\env = \envthree\esub\var\mol$, $\envtwo = \envfour\esub\var\moltwo$ where $\envthree \AlphaEq \envfour$ and $\mol \AlphaEq \moltwo$.
                    Then $\unf\env = \unf{\envthree\esub\var\mol}$, and there are two subcases:
                    \begin{itemize}
                        \item $\var\in\crnames$ or $\mol$ abstraction: then $\unf{\envthree\esub\var\mol} = \unf\envthree\isub\var{\unf\mol}$ and $\unf{\envfour\esub\var\moltwo} = \unf\envfour\isub\var{\unf\moltwo}$. By \ih{} $\unf\envthree \AlphaEq \unf\envfour$ and $\unf\mol \AlphaEq \unf\moltwo$. We conclude by \reflemma{alpha-substitivity}.
                        \item $\var\in\calcnames$: then $\unf{\envthree\esub\var\mol} = \unf\envthree\esub\var{\unf\mol}$ and $\unf{\envfour\esub\var\moltwo} = \unf\envfour\esub\var{\unf\moltwo}$. By \ih{} $\unf\envthree \AlphaEq \unf\envfour$ and $\unf\mol \AlphaEq \unf\moltwo$. Conclude by the property \emph{Structural, ES} of $\AlphaEq$ for the ES calculus.
                    \end{itemize}
                \item \emph{Rename, ES}:
                    $\env \AlphaEq \envtwo$ because $\env = \envthree\esub\var\mol$ and $\envtwo = \envthree\isub\var\vartwo\esub\vartwo\mol$ with $\vartwo\not\in\allvars\env$. Again two subcases:
                    \begin{itemize}
                        \item $\var\in\crnames$ or $\mol$ abstraction: then $\unf\env = \unf\envthree\isub\var{\unf\mol}$ and $\unf\envtwo = \unf{\envthree\isub\var\vartwo}\isub\vartwo{\unf\mol}$. \sloppy By \refpoint{properties-unfolding-renaming} $\unf{\envthree\isub\var\vartwo}\isub\vartwo{\unf\mol} = \unf\envthree\isub\var\vartwo\isub\vartwo{\unf\mol}$ because $\vartwo\not\in\bv\envthree$. Since $\vartwo\not\in\allvars\envthree$, by \refpoint{properties-unfolding-fv} $\vartwo\not\in\bv{\unf\envthree}$. Therefore $\unf\envthree\isub\var\vartwo\isub\vartwo{\unf\mol} = \unf\envthree\isub\var{\unf\mol}$ and we conclude.
                        \item Otherwise, \sloppy $\unf\env = \unf\envthree\esub\var{\unf\mol}$ and $\unf\envtwo = \unf{\envthree\isub\var\vartwo}\esub\vartwo{\unf\mol}$. As in the point above, $\unf{\envthree\isub\var\vartwo}\esub\vartwo{\unf\mol} = \unf\envthree\isub\var\vartwo\esub\vartwo{\unf\mol}$ and we conclude by the points ``Rename, ES'' of $\AlphaEq$ for the ES calculus.
                    \end{itemize}
                \item \emph{Rename, abstraction:} $\mol \AlphaEq \moltwo$ because $\mol = \la\var\env$ and $\moltwo = \la\vartwo{(\env\isub\var\vartwo)}$ for $\vartwo\not\in\allvars\env$. Then $\unf\mol = \la\var{(\unf\env)}$ and $\unf\moltwo = \la\vartwo{(\unf{\env\isub\var\vartwo})}$. By \refpoint{properties-unfolding-renaming} $\unf{\env\isub\var\vartwo} = \unf\env\isub\var\vartwo$. By \refpoint{properties-unfolding-fv} and \refpoint{properties-unfolding-bv} $\vartwo\not\in\allvars{\unf\env}$, and we conclude by the point ``Rename, abstraction'' of $\AlphaEq$ for the ES calculus.
                \item \emph{Structural, abstraction:} $\mol \AlphaEq \mol$ because $\mol=\la\var\env$ and $\moltwo=\la\var\envtwo$ and $\env\AlphaEq\envtwo$. Use the \ih{} and conclude by the point ``Structural, abstraction'' of $\AlphaEq$ for the ES calculus.
            \end{itemize}

        \item We prove at the same time the corresponding statement for bites, \ie{} that $\unf{\mol\isub\var\vartwo} = \unf\mol \isub\var\vartwo$ when $\vartwo\not\in\bv\mol$. By structural induction on $\env$ and $\mol$:
            \begin{itemize}
                \item The cases $\env=\emptyenv$, $\mol = \varthree$, and $\mol=\varthree\varfour$ are trivial.
                \item If $\mol = \la\varthree\env$, then  $\unf\mol\isub\var\vartwo = (\la\varthree{(\unf\env)})\isub\var\vartwo$. There are two subcases:
                \begin{itemize}
                    \item If $\var = \varthree$, then $(\la\varthree\env)\isub\var\vartwo = \la\varthree\env$ and $(\la\varthree{(\unf\env)})\isub\var\vartwo = \la\varthree{(\unf\env)}$, and we can conclude.
                    \item If $\var\neq\varthree$, then $(\la\varthree\env)\isub\var\vartwo = \la\varthree{(\env\isub\var\vartwo)}$ and $(\la\varthree{(\unf\env)})\isub\var\vartwo = \la\varthree{(\unf\env\isub\var\vartwo)}$ (recall that $\vartwo\not\in\bv\env$, hence $\vartwo\not\in\bv{\unf\env}$ by \refpoint{properties-unfolding-bv}, and also $\vartwo\neq\varthree$).
                    By \ih{} $\unf{\env\isub\var\vartwo} = \unf\env\isub\var\vartwo$ and we conclude.
                \end{itemize}
                \item If $\env = \envtwo\esub\varthree\mol$, then $\unf\env\isub\var\vartwo = \unf{\envtwo\esub\varthree\mol}\isub\var\vartwo$. There are two subcases:
                \begin{itemize}
                    \item \sloppy If $\varthree\in\crnames$ or $\mol$ abstraction, $\unf{\envtwo\esub\varthree\mol}\isub\var\vartwo = \unf\envtwo\isub\varthree{\unf\mol}\isub\var\vartwo$. Note that $\vartwo\neq\varthree$ by the hypothesis that $\vartwo\not\in\bv\env$. Two subsubcases:
                    \begin{itemize}
                        \item \sloppy If $\var=\varthree$, then $\unf\envtwo\isub\varthree{\unf\mol}\isub\var\vartwo = \unf\envtwo\isub\varthree{\unf\mol\isub\var\vartwo}$. By \ih{} $\unf\mol\isub\var\vartwo = \unf{\mol\isub\var\vartwo}$, and we conclude with $\unf\envtwo\isub\varthree{\unf\mol\isub\var\vartwo} = \unf{\envtwo\esub\varthree{\mol\isub\var\vartwo}} = \unf{\envtwo\esub\varthree\mol\isub\var\vartwo}$.
                        \item \sloppy If $\var\neq\varthree$, then $\unf\envtwo\isub\varthree{\unf\mol}\isub\var\vartwo = \unf\envtwo\isub\var\vartwo\isub\varthree{\unf\mol\isub\var\vartwo}$. By \ih{} $\unf\envtwo\isub\var\vartwo = \unf{\envtwo\isub\var\vartwo}$ and $\unf\mol\isub\var\vartwo = \unf{\mol\isub\var\vartwo}$, hence $\unf\envtwo\isub\var\vartwo\isub\varthree{\unf\mol\isub\var\vartwo} = \unf{\envtwo\isub\var\vartwo}\isub\varthree{\unf{\mol\isub\var\vartwo}} = \unf{\envtwo\isub\var\vartwo\esub\varthree{\mol\isub\var\vartwo}} = \unf{\envtwo\esub\varthree\mol\isub\var\vartwo}$.
                    \end{itemize}
                    \item Otherwise, $\unf{\envtwo\esub\varthree\mol}\isub\var\vartwo = \unf\envtwo\esub\varthree{\unf\mol}\isub\var\vartwo$. Note that $\vartwo\neq\varthree$ by the hypothesis that $\vartwo\not\in\bv\env$. Two subsubcases:
                    \begin{itemize}
                        \item \sloppy If $\var=\varthree$, then $\unf\envtwo\esub\varthree{\unf\mol}\isub\var\vartwo = \unf\envtwo\esub\varthree{\unf\mol\isub\var\vartwo}$. By \ih{} $\unf\mol\isub\var\vartwo = \unf{\mol\isub\var\vartwo}$, and we conclude with $\unf\envtwo\esub\varthree{\unf\mol\isub\var\vartwo} = \unf{\envtwo\esub\varthree{\mol\isub\var\vartwo}} = \unf{\envtwo\esub\varthree\mol\isub\var\vartwo}$.
                        \item \sloppy If $\var\neq\varthree$, then $\unf\envtwo\esub\varthree{\unf\mol}\isub\var\vartwo = \unf\envtwo\isub\var\vartwo\esub\varthree{\unf\mol\isub\var\vartwo}$. By \ih{} $\unf\envtwo\isub\var\vartwo = \unf{\envtwo\isub\var\vartwo}$ and $\unf\mol\isub\var\vartwo = \unf{\mol\isub\var\vartwo}$, hence $\unf\envtwo\isub\var\vartwo\esub\varthree{\unf\mol\isub\var\vartwo} = \unf{\envtwo\isub\var\vartwo}\esub\varthree{\unf{\mol\isub\var\vartwo}} = \unf{\envtwo\isub\var\vartwo\esub\varthree{\mol\isub\var\vartwo}} = \unf{\envtwo\esub\varthree\mol\isub\var\vartwo}$.
                    \end{itemize}
                \end{itemize}
            \end{itemize}
        \item 
        We prove at the same time the corresponding statement for bites, \ie{} that $\fv{\unf\mol} \subseteq \fv\mol$ for each well-named $\mol$.
        By induction on the structure of $\env$ and $\mol$:
        \begin{itemize}
            \item If $\env = \emptyenv$, then $\fv{\unf\env} = \fv{\varstar} = \set\varstar$ and $ \fv\env = \set\varstar $.
            \item If $\env = \envtwo\esub\var\mol$, then there are two subcases:
                \begin{itemize}
                    \item If $\var\in\crnames$ or $\mol$ is an abstraction,
                    then $\unf\env = \unf\envtwo\isub\var{\unf\mol}$.
                    By \reflemmap{vars-after-subst}{fv} $\fv{\unf\envtwo\isub\var{\unf\mol}} \subseteq \fv{\unf\envtwo} \setminus \set\var \cup \fv{\unf\mol}$.
                    By \ih{} $\fv{\unf\envtwo} \subseteq \fv\envtwo$ and $\fv{\unf\mol} \subseteq \fv\mol$, and we conclude because $\fv\env = \fv\envtwo \setminus \set\var \cup \fv\mol$.
                    \item Otherwise $\unf\env = \unf\envtwo\esub\var{\unf\mol}$, and $\fv{\unf\env} = \fv{\unf\envtwo} \setminus \set\var \cup \fv{\unf\mol}$. By \ih{} $\fv{\unf\envtwo} \subseteq \fv\envtwo$ and $\fv{\unf\mol} \subseteq \fv\mol$, and we conclude.
                \end{itemize}
            \item If $\mol = \var$, then $\fv{\unf\var} = \fv\var = \set\var$.
            \item If $\mol = \var\vartwo$, then $\fv{\unf{\var\vartwo}} = \fv{\var\vartwo} = \set{\var,\vartwo}$.
            \item If $\mol = \la\var\env$, then $\fv{\unf{\la\var\env}} = \fv{\la\var{(\unf\env)}} = \fv{\unf\env} \setminus \set\var$. By \ih{} $\fv{\unf\env} \subseteq \fv\env$, and we conclude.
        \end{itemize}
        \item
            We prove at the same time the corresponding statement for bites, \ie{} that $\bv{\unf\mol} \subseteq \bv\mol$ for each well-named $\mol$.
            By induction on the structure of $\env$ and $\mol$:
            \begin{itemize}
                \item If $\env = \emptyenv$, then $\bv{\unf\env}=\bv\varstar = \emptyset$ and $\bv\env = \emptyset$.
                \item If $\env = \envtwo\esub\var\mol$, then there are two subcases:
                \begin{itemize}
                    \item If $\var\in\crnames$ or $\mol$ is an abstraction,
                    then $\unf\env = \unf\envtwo\isub\var{\unf\mol}$.
                    By \reflemmap{vars-after-subst}{bv} $\bv{\unf\envtwo\isub\var{\unf\mol}} \subseteq \bv{\unf\envtwo} \cup \bv{\unf\mol}$.
                    By \ih{} $\bv{\unf\envtwo} \subseteq \bv\envtwo$ and $\bv{\unf\mol} \subseteq \bv\mol$, and we conclude because $\bv\env = \bv\envtwo \cup \set\var \cup \bv\mol$.
                    \item Otherwise $\unf\env = \unf\envtwo\esub\var{\unf\mol}$, and $\bv{\unf\env} = \bv{\unf\envtwo} \cup \set\var \cup \bv{\unf\mol}$. By \ih{} $\bv{\unf\envtwo} \subseteq \bv\envtwo$ and $\bv{\unf\mol} \subseteq \bv\mol$, and we conclude.
                \end{itemize}
                \item If $\mol = \var$ then $\fv{\unf\var} = \bv\var = \emptyset$.
                \item If $\mol = \var\vartwo$ then $\bv{\unf{\var\vartwo}} = \bv{\var\vartwo} = \emptyset$.
                \item If $\mol = \la\var\env$, then $\bv{\unf{\la\var\env}} = \bv{\unf\env} \setminus \set\var$ and $\bv{\la\var\env} = \bv\env \setminus\set\var$. By \ih{} $\bv{\unf\env} \subseteq \bv\env$, and we conclude.
            \qedhere
            \end{itemize}
     \item
    We proceed by induction over the structure of $\env$:
    \begin{itemize}
        \item If $\env = \emptyenv$, then $\unf{\envtwo\esub\var\mol\env} = \unf{\envtwo\esub\var\mol} = \unf\envtwo\isub\var{\unf\mol} = \unf\envtwo\esub\var{\unf{\esub\varstar\mol}}$.
        \item \sloppy If $\env = \envthree\esub\vartwo\moltwo$, then $\unf{\envtwo\esub\var\mol\env} = \unf{\envtwo\esub\var\mol\envthree\esub\vartwo\moltwo} = \unf{\envtwo\esub\var\mol\envthree}\isub\vartwo{\unf\moltwo}$. By \ih{} $\unf{\envtwo\esub\var\mol\envthree} = \unf\envtwo \isub\var{\unf{\esub\varstar\mol\envthree}}$, and therefore $\unf{\envtwo\esub\var\mol\envthree}\isub\vartwo{\unf\moltwo} = \unf\envthree \isub\var{\unf{\esub\varstar\mol\envthree}}\isub\vartwo{\unf\moltwo}$.
        
        From the hypothesis $\fv\envtwo \Disj \domain\env$ it follows that $\vartwo\not\in\fv\envtwo$, and therefore $\vartwo\not\in\fv{\unf\envtwo}$ by \refpoint{properties-unfolding-fv}. Hence $\unf\envthree \isub\var{\unf{\esub\varstar\mol\envthree}}\isub\vartwo{\unf\moltwo} = \unf\envthree \isub\var{\unf{\esub\varstar\mol\envthree}\isub\vartwo{\unf\moltwo}} = \unf\envthree \isub\var{\unf{\esub\varstar\mol\envthree\esub\vartwo\moltwo}}$.
    \qedhere
    \end{itemize}
    \end{enumerate}
\end{proof}

The following lemma is an easy consequence of \reflemmap{properties-unfolding}{tmpintegrare}, but we state it separately because it is used later in the proofs about the machine.
\begin{lemma}[Pristine dissociation]
    \label{l:weak-unfolding-pristine-new}
    Let $\env\esub\var\mol$ be a pristine and well-named environment.
    Then there exists an open context $\openctx$ such that for every pristine environment $\envtwo$ satisfying $\fv\env \Disj \domain\envtwo$, it holds:
    \[\unf{(\env\esub\var\mol\envtwo)} = \openctxp{\unf{\esub\varstar\mol\envtwo}}.\]
  \end{lemma}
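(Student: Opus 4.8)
The plan is to derive the identity directly from the Dissociation property \reflemmap{properties-unfolding}{tmpintegrare}, using the pristineness of $\env\esub\var\mol$ only to produce the uniform witness $\openctx$. Since $\env\esub\var\mol$ is pristine, by definition $\var\in\crnames$, both $\mol$ and $\env$ are pristine, and there is an open context $\openctx$ with $\unf\env=\openctxp\var$ and $\var\notin\allvars\openctx$. This $\openctx$ depends only on $\env$ — it is $\unf\env$ with its unique free occurrence of $\var$ turned into the hole — hence it is exactly the context whose existence we must exhibit, and it is the \emph{same} for every $\envtwo$.

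First I would fix an arbitrary pristine $\envtwo$ with $\fv\env\Disj\domain\envtwo$ and apply Dissociation with outer environment $\env$ and inner environment $\envtwo$. Its two hypotheses hold: $\fv\env\Disj\domain\envtwo$ is assumed, and $\domain\envtwo\subset\crnames$ follows from $\envtwo$ being pristine, since every ES of a pristine environment binds a crumbling variable. This yields
\[\unf{(\env\esub\var\mol\envtwo)} \;=\; \unf\env\isub\var{\unf{\esub\varstar\mol\envtwo}}.\]
Writing $N\defeq\unf{\esub\varstar\mol\envtwo}$ and substituting $\unf\env=\openctxp\var$, it then remains to check the reconstruction identity $\openctxp\var\isub\var N=\openctxp N$. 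Because $\var\notin\allvars\openctx$ and $\var\in\crnames$, the variable $\var$ occurs exactly once in $\openctxp\var=\unf\env$, namely (free) at the hole, so substituting $N$ for $\var$ fills precisely the hole, giving $\openctxp N$; chaining the two equalities closes the proof.

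The step I expect to be the main obstacle is this last reconstruction identity, and specifically its \emph{capture-freeness}: the substitution $\isub\var N$ is capture-avoiding whereas the plugging $\openctxp N$ is capture-allowing, so the two coincide syntactically only if no free variable of $N$ is captured by a binder of $\openctx$ lying above the hole. I would discharge this by free/bound-variable bookkeeping: by \reflemmap{properties-unfolding}{bv} and \reflemma{decomp-well-named-two} the bound variables of $\openctx$ lie in $\bv{\unf\env}\subseteq\bv\env$, while by \reflemmap{properties-unfolding}{fv} and \reflemma{fv-join-cup} one has $\fv N\subseteq\fv{\esub\varstar\mol\envtwo}\subseteq\fv\mol\cup\fv\envtwo$. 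The clash with $\fv\mol$ is ruled out by well-namedness of $\env\esub\var\mol$, which gives $\bv\env\Disj\fv\mol$; the genuinely delicate part is the interaction of $\bv\env$ with $\fv\envtwo$, which is not granted by $\fv\env\Disj\domain\envtwo$ alone and must be recovered from the well-namedness of the whole concatenation $\env\esub\var\mol\envtwo$ as it occurs in reachable states (or, equivalently, by reading the identity up to $\alpha$). Getting this bookkeeping exactly right — keeping $\varstar$ out of every disjointness obligation throughout — is where the care lies.
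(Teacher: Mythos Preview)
Your approach is exactly the paper's: extract $\openctx$ from pristinity of $\env\esub\var\mol$, apply Dissociation (\reflemmap{properties-unfolding}{tmpintegrare}) to get $\unf{(\env\esub\var\mol\envtwo)}=\unf\env\isub\var{\unf{\esub\varstar\mol\envtwo}}$, then rewrite $\unf\env=\openctxp\var$ and conclude. The paper's proof is three lines and simply writes the last equality $\openctxp\var\isub\var N=\openctxp N$ without comment.

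Where you go further is in flagging the capture-freeness side condition for that last step, and you are right that it is not automatic from the stated hypotheses: nothing in ``$\env\esub\var\mol$ well-named'' plus ``$\fv\env\Disj\domain\envtwo$'' forces $\bv\openctx\Disj\fv\envtwo$. The paper's own \reflemma{sub-O} requires exactly $\bv\openctx\Disj\fv N$, so the paper is silently relying on the ambient well-namedness of the reachable states in which the lemma is invoked (or working up to $\alpha$). Your diagnosis of this as the delicate point, and your proposed fix via well-namedness of the whole concatenation, match how the lemma is actually used in the projection proof.
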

  \begin{proof}
      By pristinity of $\env\esub\var\mol$, there exists an open context $\openctx$ such that $\unf\env = \openctxp\var$ and $\var\not\in\allvars\openctx$.
      Let $\envtwo$ be any environment satisfying the hypotheses.
      By \reflemmap{properties-unfolding}{tmpintegrare}, $\unf{\env\esub\var\mol\envtwo} = \unf\env\isub\var{\unf{\esub\varstar\mol\envtwo}} = \openctxp{\unf{\esub\varstar\mol\envtwo}}$.
  \end{proof}

\subsection{Read-back is inverse to translation}
We now want to prove \reflemmaboth{transl-properties}, i.e. that the translation of a well-named \lat{} $\tm$ read-backs to $\tm$ itself.

The proof depends on the following technical lemma that relates the free and bound variables of
a translated term to those of the term to be translated.

\begin{lemma}[On the variables of a translated term]
    \label{l:vars-of-crumb}
    Let $\tm$ be a \lat{}.
    \begin{itemize}
        \item $\fv{\mytr\tm} = \fv\tm$
        \item $\bv{\mytr\tm} \cap \calcnames = \bv\tm$
        \item $\domain{\mytr\tm} \subset \crnames$
    \end{itemize}
    Moreover, if $\tm$ is not a variable and $\auxtr\tm = (\vartwo, \env)$:
    \begin{itemize}
        \item $\fv{\env} = \fv\tm$
        \item $\bv{\env} \cap \calcnames = \bv\tm$
        \item $\domain{\env} \subset \crnames$
    \end{itemize}
\end{lemma}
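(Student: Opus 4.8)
The plan is to prove all six statements simultaneously by structural induction on $\tm$, bundling the claims about $\mytr{\tm}$ together with those about $\auxtr{\tm}$, since the two transformations are mutually recursive and each application clause of $\mytr{\cdot}$ and of $\auxtr{\cdot}$ calls $\auxtr{\cdot}$ on the immediate subterms. One caveat I would build into the induction: although the ``moreover'' part is stated only for compound $\tm$, the application clauses feed $\auxtr{s}$ for immediate subterms $s$ that may be variables, so I would also carry, as a side hypothesis, the trivial facts for $\auxtr{\var} = (\var,\emptyenv)$ --- namely $\bv{\emptyenv}\cap\calcnames = \emptyset = \bv{\var}$, $\domain{\emptyenv} = \emptyset \subset \crnames$, and the observation that the representative of a variable subterm is the free calculus variable $\var$ itself.

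The domain and bound-variable statements are the mechanical part. For $\domain{\cdot}$: every explicit substitution introduced by the transformation binds either $\varstar$ or a freshly chosen $\varthree \in \crnames$, so in each clause $\domain{\mytr{\tm}}$ and $\domain{\env}$ reduce, via $\domain{\env\envtwo} = \domain{\env}\cup\domain{\envtwo}$ (\reflemma{fv-join-cup}) and the induction hypothesis, to a union of subsets of $\crnames$. For $\bv{\cdot}$: unfolding the definitions, $\bv{\mytr{\tm}}$ (resp.\ $\bv{\env}$) consists of $\varstar$, the fresh crumbling variables, and the original $\lambda$-binders of $\tm$; intersecting with $\calcnames$ discards the first two families (all lying in $\crnames$) and isolates the $\lambda$-binders. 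In the abstraction clause I would use that the binder $\var$ is a calculus variable, $\var \in \calcnames$, and in the application clause the identity $\bv{\env\envtwo} = \bv{\env}\cup\bv{\envtwo}$; the induction hypothesis then yields exactly $\bv{\tm}$.

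The free-variable statement is the delicate step, for two reasons. First, the distinguished variable $\varstar$ must be tracked carefully: its only source is $\fv{\emptyenv} = \set\varstar$, and in $\mytr{\cdot}$ it is always removed, because the leftmost explicit substitution binds $\varstar$, giving the exact equality $\fv{\mytr{\tm}} = \fv{\tm}$; by contrast, the auxiliary environment produced by $\auxtr{\cdot}$ carries no such top-level $\esub\varstar{\cdot}$, so $\varstar$ survives and the claimed $\fv{\env} = \fv{\tm}$ holds only up to $\varstar$, i.e.\ $\fv{\env} \setminus \set\varstar = \fv{\tm}$ --- which is exactly what the disjointness convention $\Disj$ (ignoring $\varstar$) requires downstream. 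Second, the ``representative variable'' of an immediate subterm lives in different places depending on its shape: for a variable subterm $s$ one has $\auxtr{s} = (s,\emptyenv)$ and $s$ itself is the free calculus representative, whereas for a compound subterm the representative is a fresh $\varthree \in \domain{\cdot}$ of the associated auxiliary environment. Thus in the application clause $\mytr{\tm\tmtwo} = \esub\varstar{\var\vartwo}\env\envtwo$ I would argue that each compound representative $\var$ (resp.\ $\vartwo$) is bound by the following environment $\env$ (resp.\ $\envtwo$), since $\var \in \domain{\env}$, while a variable representative contributes its genuine free occurrence in the bite; combining this with $\fv{\env\envtwo} \subseteq \fv{\env}\setminus\domain{\envtwo} \cup \fv{\envtwo}$ (\reflemma{fv-join-cup}) and the freshness of the crumbling variables to upgrade the inclusion to the needed equality, the free variables collapse to $\fv{\tm} \cup \fv{\tmtwo} = \fv{\tm\tmtwo}$.

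I expect the main obstacle to be precisely this free-variable bookkeeping: reconciling the asymmetric treatment of $\varstar$ across the two mutually recursive transformations, and tracking, in the application case, where each subterm's representative is bound and hence which occurrences in the bite $\var\vartwo$ are captured. Once these are pinned down --- in particular the invariant that a compound subterm's representative always lands in the domain of its auxiliary environment, so that it is never free in the assembled environment --- the remaining computations are routine unfoldings of the definitions of $\fv{\cdot}$, $\bv{\cdot}$, and $\domain{\cdot}$.
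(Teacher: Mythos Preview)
Your approach---a simultaneous structural induction on $\tm$ treating $\mytr\cdot$ and $\auxtr\cdot$ together, with case splits in the application clause on whether each immediate subterm is a variable---is exactly what the paper does (the paper only spells out the $\mytr\cdot$ side and declares the $\auxtr\cdot$ side ``similar'').

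You have in fact spotted something the paper glosses over. The claim $\fv\env = \fv\tm$ for the auxiliary translation is literally off by $\varstar$: for instance, if $\tm = \la\var\tmtwo$ then $\env = \esub\varthree{\la\var{\mytr\tmtwo}}$ with $\varthree$ fresh and $\varthree \neq \varstar$, so $\fv\env = \set\varstar \cup (\fv{\mytr\tmtwo}\setminus\set\var)$ and $\varstar$ survives. Your proposed reading $\fv\env \setminus \set\varstar = \fv\tm$ is the correct one, and it is all that the downstream uses of the lemma need, since those uses are phrased via the disjointness relation $\Disj$, which by definition ignores $\varstar$. The paper's own proof for $\mytr\cdot$ is unaffected by this, because the leftmost $\esub\varstar{\cdot}$ in $\mytr\tm$ absorbs the stray $\varstar$; the imprecision is only in the stated ``moreover'' part.
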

\begin{proof}
    The proof proceeds by mutual induction on the structure of $\tm$. Here however we provide the proof only for the principal crumbling translation $\mytr{(\cdot)}$, as the proof for $\auxtr{(\cdot)}$ is similar.
    \begin{itemize}
        \item If $\tm=\var$, then $\fv{\mytr\tm} = \fv{\esub\varstar\var} = \set{\var}$, $\bv{\mytr\tm} = \bv{\esub\varstar\var} = \set{\varstar}$, and $\domain{\mytr\tm} = \set\varstar$, and we can conclude.
        \item If $\tm=\la\var\tmtwo$, then $\fv{\mytr\tm} = \fv{\esub\varstar{\la\var\mytr\tmtwo}} = \fv{\mytr\tmtwo}\setminus\set{\var}$, $\bv{\mytr\tm} = \set{\varstar,var} \cup \bv{\mytr\tmtwo}$, and $\domain{\mytr\tm} = \set\varstar$. We conclude by using the \ih{}
        \item If $\tm = \tmtwo\tmthree$, then then $\auxtr\tm=\esub\varstar{\var\vartwo}\env\envtwo$ where $(\var,\env)\defeq\auxtr\tmtwo$ and $(\vartwo,\envtwo)\defeq\auxtr\tmthree$. By cases on $\tmtwo$ and $\tmthree$:
        \begin{itemize}
            \item If $\tmtwo$ and $\tmthree$ are both variables, then $\env=\envtwo=\emptyenv$, $\mytr\tm = \esub\varstar{\var\vartwo}$ and $\auxtr\tm = (\varthree, \esub\varthree{\var\vartwo})$. Clearly $\fv{\mytr\tm} = \fv{\esub\varthree{\var\vartwo}} = \set{\var,\vartwo}$, $\bv{\mytr\tm} = \set\varstar$ and $\bv\env = \set\varthree$, and we can conclude because $\set{\varstar, \varthree} \subset \crnames$.
            \item If $\tmtwo$ and $\tmthree$ are both not variables, then $\env,\envtwo\neq\emptyenv$ and $\var\in\domain\env$ and $\vartwo\in\domain\envtwo$. By \ih{}, $\fv{\env} = \fv\tmtwo$, $\fv{\envtwo} = \fv\tmthree$, $\bv{\env} \cap \calcnames = \bv\tmtwo$, $\bv{\envtwo} \cap \calcnames = \bv\tmthree$, and $\domain\env \cup \domain\envtwo \subset \crnames$.
            
            \begin{itemize}
                \item The requirement $\domain{\mytr\tm} \subset \crnames$ follows directly from the \ih{}, since $\domain{\mytr\tm} = \set\varstar \cup \domain\env \cup \domain\envtwo$ by \reflemma{fv-join-cup}.

            \item The requirement $\bv{\env} \cap \calcnames = \bv\tm$ follows easily, since $\bv\env = \set\varstar \cup \bv\env \cup \bv\envtwo$ by \reflemma{fv-join-cup}. 

            \item As for the free variables, note that by \reflemma{fv-join-cup} $\fv{\esub\varstar{\var\vartwo}\env\envtwo} = (\set{\var,\vartwo}\setminus \domain\env \cup \fv\env) \setminus \domain\envtwo \cup \fv\envtwo$. By \ih{} $(\set{\var,\vartwo}\setminus \domain\env \cup \fv\env) \setminus \domain\envtwo \cup \fv\envtwo = (\set{\var,\vartwo}\setminus \domain\env \cup \fv\tmtwo) \setminus \domain\envtwo \cup \fv\tmthree$, which is simply $\fv\tmtwo \cup \fv\tmthree$ because $\domain\env \Disj \domain\envtwo$ since the crumbling variables are chosen as globally fresh during crumbling, and $\fv\tmtwo \Disj \domain\envtwo$ because $\domain\envtwo$ are fresh crumbling variables for the same reason.
            \end{itemize}
        \end{itemize}
        \item The cases when only one among $\tmtwo$ and $\tmthree$ is a variable are similar to the two cases proved above.
        \qedhere
    \end{itemize}
\end{proof}

We also need the following technical and rather uninteresting lemma on the auxiliary translation.
\begin{lemma}[On the auxiliary translation]
    \label{l:crumbling--aux}
    Let $\tm$ be a \lat{} such that $\auxtr\tm = (\var,\env)$. Then:
    \begin{enumerate}
        \item If $\env=\emptyenv$, then $\tm=\var$ and $\var\in\calcnames$.
        \item If $\env\neq\emptyenv$, say $\env\defeq\esub\var\mol\envtwo$, then $\mytr\tm=\esub\varstar\mol\envtwo$ and $\var\in\crnames$.
    \end{enumerate} 
\end{lemma}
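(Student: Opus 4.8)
The plan is to proceed by a direct case analysis on the structure of $\tm$, i.e.\ on which of the three defining clauses of $\auxtr\cdot$ in \reffig{crm} applies, reading off the shape of $\env$ and of the first component $\var$ in each case and comparing them with the corresponding clause of $\mytr\cdot$. No induction is needed: both claims are local to the outermost constructor of $\tm$, because the definitions of $\auxtr\tm$ and $\mytr\tm$ invoke exactly the same subterm translations (as stipulated by the side conditions common to $\auxtr{\tm\tmtwo}$ and $\mytr{\tm\tmtwo}$ in \reffig{crm}).

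For part~(1), I would observe that among the three clauses of $\auxtr\cdot$ only $\auxtr\var = (\var,\emptyenv)$ returns the empty environment: in the abstraction clause $\auxtr{\la\vartwo{\tm'}} = (\varthree,\esub\varthree{\la\vartwo{\mytr{\tm'}}})$ and in the application clause $\auxtr{\tm_1\tm_2} = (\varthree,\esub\varthree{\var_1\var_2}\env_1\env_2)$ the second component begins with an explicit substitution, hence is non-empty. Therefore $\env = \emptyenv$ forces $\tm$ to be a variable, in which case $\var = \tm$ is a variable of the calculus, i.e.\ $\var\in\calcnames$, which is exactly the claim.

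For part~(2), the hypothesis $\env\neq\emptyenv$ rules out the variable clause, so $\tm$ is an abstraction or an application. In both remaining clauses the returned first component is the freshly chosen $\varthree\in\crnames$, giving $\var\in\crnames$; and in both the leading explicit substitution of $\env$ binds precisely this $\varthree$, so writing $\env = \esub\var\mol\envtwo$ the bite $\mol$ and the tail $\envtwo$ are $\la\vartwo{\mytr{\tm'}}$ and $\emptyenv$ in the abstraction case, and $\var_1\var_2$ and $\env_1\env_2$ in the application case. Comparing with $\mytr{\la\vartwo{\tm'}} = \esub\varstar{\la\vartwo{\mytr{\tm'}}}$ and $\mytr{\tm_1\tm_2} = \esub\varstar{\var_1\var_2}\env_1\env_2$, one sees that $\mytr\tm$ and $\env$ coincide except for the variable bound by the leading substitution, which is $\varstar$ in $\mytr\tm$ and $\varthree = \var$ in $\env$; that is exactly $\mytr\tm = \esub\varstar\mol\envtwo$.

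The only point requiring a little care---and the closest thing to an obstacle---is checking that the bite $\mol$ and the tail $\envtwo$ produced by $\auxtr\cdot$ are \emph{syntactically identical} to those appearing in $\mytr\cdot$, and not merely $\alpha$-equivalent. This holds on the nose because the two definitions in \reffig{crm} use the very same auxiliary calls $\auxtr{\tm_i} = (\var_i,\env_i)$ (and the same recursive $\mytr{\tm'}$ inside abstractions), so no renaming is ever performed between them. Consequently the equality $\mytr\tm = \esub\varstar\mol\envtwo$ is a literal syntactic identity, and each case of the analysis closes immediately.
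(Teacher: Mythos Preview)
Your proof is correct and takes essentially the same approach as the paper, which simply states ``By inspection of the rules defining the crumbling transformation.'' You have spelled out that inspection in full detail, including the careful observation that $\auxtr\cdot$ and $\mytr\cdot$ invoke literally the same subterm translations, which justifies the syntactic (not merely $\alpha$-) identity $\mytr\tm = \esub\varstar\mol\envtwo$.
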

\begin{proof}
    By inspection of the rules defining the crumbling transformation.
\end{proof}

\begin{lemma}[Crumbling properties] 
\label{lappendix:transl-properties}
\NoteState{l:transl-properties}
Let $\tm$ be a well-named \lat{}. Then 
\begin{enumerate}
    \item \label{p:transl-properties-init} \emph{Inverse}:  $\unf{\mytr\tm} = \tm$.
    \item \label{p:transl-properties-wn} \emph{Name}: $\mytr\tm$ is well-named.
\end{enumerate}
\end{lemma}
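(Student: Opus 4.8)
The plan is to prove the two assertions simultaneously by structural induction on the $\lambda$-term $\tm$, strengthening the statement with matching invariants about the auxiliary transformation $\auxtr\cdot$. Alongside $\unf{\mytr\tm}=\tm$ and ``$\mytr\tm$ is well-named'' I would carry, for $\auxtr\tm=(\var,\env)$, the invariants $\unf{\esub\varstar\var\env}=\tm$ and ``$\env$ is well-named''. The reason for this auxiliary form is that $\mytr\cdot$ and $\auxtr\cdot$ are mutually recursive: on an application $\mytr{\tm\tmtwo}=\esub\varstar{\var\vartwo}\env\envtwo$ with $\auxtr\tm=(\var,\env)$ and $\auxtr\tmtwo=(\vartwo,\envtwo)$, so the application cases must be closed using the auxiliary invariants on the two immediate subterms, not the main ones.

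For well-namedness I would first invoke \reflemma{vars-of-crumb}, which fixes $\fv{\mytr\tm}=\fv\tm$, $\bv{\mytr\tm}\cap\calcnames=\bv\tm$, and $\domain{\mytr\tm}\subset\crnames$ (and the analogous facts for the environment returned by $\auxtr\cdot$). The variable and abstraction cases are immediate from the induction hypothesis. The application case $\esub\varstar{\var\vartwo}\env\envtwo$ is the laborious one: I would assemble $\env\envtwo$ with \reflemma{append-well-named}, discharging its two disjointness premises from \reflemma{vars-of-crumb} together with the global freshness of the crumbling variables introduced along the recursion --- this is exactly what yields $\domain\env\Disj\domain\envtwo$ and $\fv\tm\Disj\domain\envtwo$ --- and then prepend the bite $\esub\varstar{\var\vartwo}$, verifying its freshness side-conditions by the same ingredients and \reflemma{fv-join-cup}.

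For the inverse, the decisive observation is that every explicit substitution produced by crumbling an application binds a \emph{crumbling} variable (\reflemma{vars-of-crumb} gives $\domain\env,\domain\envtwo\subset\crnames$), so by the read-back clause each such substitution is inlined as a meta-level substitution rather than kept explicit. Hence $\unf{\esub\varstar{\var\vartwo}\env\envtwo}$ reduces to $\var\vartwo$ with the substitutions induced by $\env$ and $\envtwo$ applied to it. Separating the two sub-environments with the Dissociation lemma \reflemmap{properties-unfolding}{tmpintegrare}, whose disjointness and crumbling-domain hypotheses again follow from the global freshness of crumbling variables (in particular $\vartwo\notin\domain\env$ and $\fv\tm\Disj\domain\envtwo$), the substitution of $\env$ acts only on $\var$ and that of $\envtwo$ only on $\vartwo$. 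Applying the auxiliary invariant $\unf{\esub\varstar\var\env}=\tm$ to $\tm$ and its symmetric counterpart to $\tmtwo$ then sends $\var\mapsto\tm$ and $\vartwo\mapsto\tmtwo$, so the read-back is $\tm\tmtwo$; the variable and abstraction cases are one-line computations using $\unf\emptyenv=\varstar$ and $\varstar\in\crnames$.

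The main obstacle will be precisely this application case of the inverse: keeping the freshness bookkeeping tight enough that the binding chains of $\env$ and $\envtwo$ do not interfere through the shared surrounding substitution, so that $\var$'s chain in $\env$ rebuilds exactly $\tm$ while $\envtwo$ leaves it untouched, and symmetrically for $\tmtwo$. All of this rests on discharging the disjointness hypotheses of Dissociation and of \reflemma{append-well-named} from \reflemma{vars-of-crumb} and the globally fresh choice of crumbling variables; once these are in place, the remaining cases and the propagation of the variable-set information are routine.
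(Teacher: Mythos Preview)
Your proposal is correct and follows essentially the same approach as the paper: structural induction on $\tm$, using \reflemma{vars-of-crumb} and \reflemma{append-well-named} for well-namedness, and the Dissociation lemma (\reflemmap{properties-unfolding}{tmpintegrare}) together with the freshness of crumbling variables for the inverse in the application case. The only cosmetic difference is that the paper treats the two points separately and, for the inverse, avoids an explicit auxiliary invariant on $\auxtr\cdot$ by invoking \reflemma{crumbling--aux} to rewrite $\esub\varstar\var\env$ as $\mytr\tm$ and then applying the main induction hypothesis directly; your formulation with the invariant $\unf{\esub\varstar\var\env}=\tm$ is equivalent.
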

\begin{proof}~
    \begin{enumerate}
        \item 
        We proceed by induction on the structure of $\tm$:
        \begin{itemize}
            \item If $\tm=\var$, then $\mytr\tm = \esub\varstar\var$, and clearly $\unf{\esub\varstar\var} = \var = \tm$.
            \item If $\tm = \la\var\tmtwo$, then $\mytr\tm = \esub\varstar{\la\var\mytr\tmtwo}$. By definition $\unf{\esub\varstar{\la\var\mytr\tmtwo}} = \la\var(\unf{\mytr\tmtwo})$, and we conclude by using the \ih{} $\unf{\mytr\tmtwo} = \tmtwo$.
            \item If $\tm=\tmtwo\tmthree$, then $\mytr\tm=\esub\varstar{\var\vartwo}\env\envtwo$ where $(\var,\env)=\auxtr\tmtwo$ and $(\vartwo,\envtwo)=\auxtr\tmthree$.
            We proceed by cases on $\tmthree$:
            \begin{itemize}
                \item If $\tmthree = \vartwo$, then $\vartwo\in\calcnames$ and $\envtwo = \emptyenv$.
                We proceed by cases on $\tmtwo$:
                \begin{itemize}
                    \item If $\tmtwo = \var$, then also $\env = \emptyenv$. Thus $\mytr\tm=\esub\varstar{\var\vartwo}$ and $\unf{\mytr\tm} = \unf{\esub\varstar{\var\vartwo}} = \var\vartwo=\tmtwo\tmthree=\tm$.
                    \item \sloppy If $\tmtwo \neq \var$, then $\env\neq\emptyenv$, say $\env=\esub\var\mol\envthree$. By \reflemma{crumbling--aux} $\mytr\tmtwo = \esub\varstar\mol\envthree$ and $\var\in\crnames$, and by \reflemma{vars-of-crumb} $\vartwo\not\in\domain\env\cup\set\var$. Hence by \reflemmap{properties-unfolding}{tmpintegrare} $\unf{\esub\varstar{\var\vartwo}\env} = \unf{\esub\varstar{\var\vartwo}}\isub\var{\unf{\esub\varstar\mol\envthree}}$. By \ih{} $\unf{\esub\varstar{\var\vartwo}}\isub\var{\unf{\esub\varstar\mol\envthree}} = \unf{\esub\varstar{\var\vartwo}}\isub\var\tm = \tm\vartwo$.
                \end{itemize}
                \item \sloppy If $\tmthree \neq \vartwo$, then $\envtwo\neq\emptyenv$, say $\envtwo=\esub\var\mol\envthree$. By \reflemma{crumbling--aux} $\mytr\tmthree = \esub\varstar\mol\envthree$ and $\vartwo\in\crnames$. By \reflemmap{properties-unfolding}{tmpintegrare} $\unf{\esub\varstar{\var\vartwo}\env\envtwo} = \unf{\esub\varstar{\var\vartwo}\env}\isub\vartwo{\unf{\esub\varstar\mol\envthree}}$. By \ih{} $\unf{\esub\varstar{\var\vartwo}\env}\isub\vartwo{\unf{\esub\varstar\mol\envthree}} = \unf{\esub\varstar{\var\vartwo}\env}\isub\vartwo\tmthree$. We now proceed by cases on $\tmtwo$:
                \begin{itemize}
                    \item If $\tmtwo = \var$, then also $\env = \emptyenv$ and $\var\in\calcnames$. Thus $\unf{\esub\varstar{\var\vartwo}\env}\isub\vartwo\tmthree = \unf{\esub\varstar{\var\vartwo}}\isub\vartwo\tmthree = \var\vartwo\isub\vartwo\tmthree = \var\tmthree$. 
                    \item \sloppy If $\tmtwo \neq \var$, then $\env\neq\emptyenv$, say $\env=\esub\var\mol\envthree$. By \reflemma{crumbling--aux} $\mytr\tmtwo = \esub\varstar\mol\envthree$ and $\var\in\crnames$. Moreover, $\var\neq\vartwo$ because they are fresh crumbling variables created in distinct branches of the crumbling transformation. Hence by \reflemmap{properties-unfolding}{tmpintegrare} $\unf{\esub\varstar{\var\vartwo}\env}\isub\vartwo\tmthree = \unf{\esub\varstar{\var\vartwo}}\isub\var{\unf{\esub\varstar\mol\envthree}}\isub\vartwo\tmthree$. By \ih{} $\unf{\esub\varstar{\var\vartwo}}\isub\var{\unf{\esub\varstar\mol\envthree}}\isub\vartwo\tmthree = \unf{\esub\varstar{\var\vartwo}}\isub\var\tmtwo\isub\vartwo\tmthree = \var\vartwo\isub\var\tmtwo\isub\vartwo\tmthree$. We conclude with $\var\vartwo\isub\var\tmtwo\isub\vartwo\tmthree = \tmtwo\tmthree$ because $\vartwo\not\in\fv\tmtwo$ since $\vartwo\in\crnames$.
                \end{itemize}
            \end{itemize}
        \end{itemize}
        \item
        We prove at the same time the corresponding statement for $\auxtr{(\cdot)}$, \ie{} that if $\tm$ is well-named and $\auxtr\tm=(\var,\env)$, then $\env$ is well-named.
        By mutual induction on the size of $\tm$:
        \begin{itemize}
            \item If $\tm=\var$ for some variable $\var$, then $\mytr\tm = \esub\varstar\var$ which is well-named. $\auxtr\tm = (\var, \emptyenv)$ and $\emptyenv$ is clearly well-named.
            \item If $\tm=\la\var\tmtwo$, then $\mytr\tm = \esub\varstar{\la\var{\mytr\tmtwo}}$. By \ih{} $\mytr\tmtwo$ is well-named, thus $\esub\varstar{\la\var{\mytr\tmtwo}}$ is well-named as well, if $\var\not\in\bv{\mytr\tmtwo}$ (which holds by \reflemma{vars-of-crumb} and well-namedness of $\tm$).
                Similarly for $\auxtr\tm = (\varthree, \esub\varthree{\la\var{\mytr\tmtwo}})$, since $\varthree\not\in\allvars{\la\var{\mytr\tmtwo}}$ because it is a fresh local variable.
            \item If $\tm = \tmtwo\tmthree$, then $\mytr\tm=\esub\varstar{\var\vartwo}\env\envtwo$ where $(\var,\env)\defeq\auxtr\tmtwo$ and $(\vartwo,\envtwo)\defeq\auxtr\tmthree$. By \ih{} both $\env$ and $\envtwo$ are well-named.
            We apply twice \reflemma{append-well-named} to conclude, but we first need to prove the following disjointedness conditions:
            \begin{itemize}
                \item $\bv{\esub\varstar{\var\vartwo}} \Disj \allvars\env$ follows from the definition of $\Disj$ because $\bv{\esub\varstar{\var\vartwo}} = \set\varstar$.
                \item $\fv{\esub\varstar{\var\vartwo}} \setminus \domain\env \Disj \bv\env$, that is $\set{\var,\vartwo} \setminus \domain\env \Disj \bv\env$.
                We proceed by cases on $\tmtwo$ and $\tmthree$:
                \begin{itemize}
                    \item If $\tmtwo = \var$, then $\env=\emptyenv$ and we conclude because $\bv\env = \emptyset$.
                    \item If $\tmtwo\neq\var$ and $\tmthree\neq\vartwo$, then $\var\in\domain\env$ and $\vartwo\in\domain\envtwo$. Note that $\domain\env \Disj \domain\envtwo$ by the definition of crumbling, which always generates fresh crumbling variables. Thus $\set{\var,\vartwo} \setminus \domain\env = \set\vartwo$. By \reflemma{vars-of-crumb}, $\bv\env\cap\calcnames = \bv\tmtwo$. Conclude again by the property of freshness during crumbling.
                    \item If $\tmtwo\neq\var$ and $\tmthree=\vartwo$, then $\var\in\domain\env$ and $\envtwo = \emptyenv$. In this case $\set{\var,\vartwo} \setminus \domain\env = \set\vartwo$ because $\vartwo\not\in\crnames$ while $\domain\env \subset \crnames$ by \reflemma{vars-of-crumb}. By well-namedness of $\tm$, $\vartwo \not\in\bv\tmtwo$, and conclude by \reflemma{vars-of-crumb}.
                \end{itemize}
                \item $\bv{\esub\varstar{\var\vartwo}\env} \Disj \allvars\envtwo$, that is $\bv\env \Disj \allvars\envtwo$, \ie{} $\bv\env \Disj \fv\envtwo$ and $\bv\env \Disj \bv\envtwo$. 
                By \reflemma{vars-of-crumb} $\fv\envtwo = \fv\tmthree$, $\bv\env \cap \calcnames = \bv\tm$ and $\bv\envtwo \cap \calcnames = \bv\tmthree$. From the hypothesis that $\tm$ is well-named, it follows that $\fv\tmtwo \Disj \bv\tmthree$, and that $\bv\tmtwo \Disj \bv\tmthree$. The remaining requirements follow from the definition of crumbling, where crumbling variables are always chosen to be globally fresh.
                \item \sloppy $\fv{\esub\varstar{\var\vartwo}\env} \setminus \domain\envtwo \Disj \bv\envtwo$. By \reflemma{fv-join-cup}, $\fv{\esub\varstar{\var\vartwo}\env} = \set{\var,\vartwo}\setminus\domain\env \cup \fv\env$, and by \reflemma{vars-of-crumb} $\set{\var,\vartwo}\setminus\domain\env \cup \fv\env = \set{\var,\vartwo}\setminus\domain\env \cup \fv\tmtwo$.
                First of all, note that $\domain\envtwo \Disj \fv\tmtwo$ because $\domain\envtwo$ only contains crumbling variables (\reflemma{vars-of-crumb}); therefore $\fv{\esub\varstar{\var\vartwo}\env} \setminus \domain\envtwo = \set{\var,\vartwo}\setminus\domain\env \setminus \domain\envtwo \cup \fv\tmtwo$. We proceed by cases on $\tmtwo, \tmthree$:
                \begin{itemize}
                    \item If $\tmthree = \vartwo$ then $\envtwo = \emptyenv$, and we conclude because $\bv\envtwo = \emptyset$.
                    \item If $\tmtwo=\var$ and $\tmthree\neq\vartwo$, then $\var\in\calcnames$, $\env = \emptyenv$, and $\vartwo\in\domain\envtwo$. In this case $\set{\var,\vartwo}\setminus\domain\env \setminus \domain\envtwo \cup \fv\tmtwo = \set{\var} $. By well-namedness of $\tm$, $\set{\var,\vartwo} \Disj \bv\tmtwo$, and we conclude with $\set{\var,\vartwo} \Disj \bv\tmtwo$ by \reflemma{vars-of-crumb}.
                    \item If $\tmtwo\neq\var$ and $\tmthree\neq\vartwo$, then $\var\in\domain\env$ and $\vartwo\in\domain\envtwo$. In this case $\set{\var,\vartwo}\setminus\domain\env \setminus \domain\envtwo \cup \fv\tmtwo = \fv\tmtwo $. By well-namedness of $\tm$, $\fv\tmtwo \Disj \bv\tmtwo$, and we conclude with $\fv\tmtwo \Disj \bv\tmtwo$ by \reflemma{vars-of-crumb}.
                \end{itemize}
            \end{itemize}
            As for the case of $\auxtr\tm=(\varthree,\esub\varthree{\var\vartwo}\env\envtwo)$ where $(\var,\env)\defeq\auxtr\tmtwo$ and $(\vartwo,\envtwo)\defeq\auxtr\tmthree$, the proof proceeds in a similar way as above.
        \end{itemize}
    \end{enumerate}
\end{proof}

\subsection{On the properties of $\indsub\cdot$ and $\indenv\cdot$}

Here we collect several properties that relate an environment $\env$ and its associated $\indsub\env$ and $\indenv\env$.

\begin{lemma}\label{l:indsub-domain-b}
 $\domain{\indsub\env} \subseteq \domain\env$
\end{lemma}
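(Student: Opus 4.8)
The plan is to prove this by a routine structural induction on $\env$, unfolding the defining clauses of $\indsub\env$ and invoking a single elementary fact about the domains of composed substitutions. Since $\indsub\env$ is built either as $Id$, or by discarding the last entry, or by post-composing with a one-variable substitution $\isub\var{\unf\mol}$, the domain can only ever pick up the freshly bound variable $\var$, which is exactly the variable that $\domain{\env\esub\var\mol} = \domain\env \cup \set\var$ also adds. So there is essentially no obstacle; the only thing worth isolating is the composition fact.

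First I would record the auxiliary observation that for any two substitutions $\sigma,\tau$ one has $\domain{\sigma\tau} \subseteq \domain\sigma \cup \domain\tau$, where (per the paper's convention) $\sigma\tau$ applies $\sigma$ first, i.e. $(\sigma\tau)(\vartwo) = \tau(\sigma(\vartwo))$. Indeed, if $\vartwo \notin \domain\sigma \cup \domain\tau$ then $\sigma(\vartwo) = \vartwo$ and $\tau(\vartwo) = \vartwo$, whence $(\sigma\tau)(\vartwo) = \tau(\sigma(\vartwo)) = \tau(\vartwo) = \vartwo$, so $\vartwo \notin \domain{\sigma\tau}$. As a special case, a one-variable substitution satisfies $\domain{\isub\var{\unf\mol}} \subseteq \set\var$ (with equality unless $\unf\mol = \var$).

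Then I would carry out the induction. In the base case $\env = \emptyenv$, we have $\indsub\emptyenv = Id$, so $\domain{\indsub\emptyenv} = \emptyset = \domain\emptyenv$. In the inductive case $\env = \envtwo\esub\var\mol$, I split on the side condition $(*)$ (``$\mol = \val$ or $\var \in \crnames$''). If $(*)$ holds, then $\indsub\env = \indsub\envtwo\isub\var{\unf\mol}$, so by the composition fact and the induction hypothesis $\domain{\indsub\env} \subseteq \domain{\indsub\envtwo} \cup \set\var \subseteq \domain\envtwo \cup \set\var = \domain\env$. If $(*)$ fails, then $\indsub\env = \indsub\envtwo$, and the induction hypothesis gives $\domain{\indsub\env} = \domain{\indsub\envtwo} \subseteq \domain\envtwo \subseteq \domain\envtwo \cup \set\var = \domain\env$. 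In both subcases the desired inclusion holds, completing the induction.
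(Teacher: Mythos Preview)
Your proof is correct and follows essentially the same approach as the paper: structural induction on $\env$, with the inductive step using that post-composing with $\isub\var{\unf\mol}$ can add at most $\var$ to the domain. The only difference is that you state the composition fact $\domain{\sigma\tau} \subseteq \domain\sigma \cup \domain\tau$ as an explicit auxiliary observation, whereas the paper argues it inline.
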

\begin{proof}
By structural induction over $\env$:
\begin{itemize}
 \item Case $\emptyenv$: $\domain{\indsub\env} = \domain{Id} = \emptyset = \domain\emptyenv$.
 \item Case $\envtwo\esub\var\mol$:
  $\indsub{\envtwo\esub\var\mol}$ is either $\indsub\envtwo$ or $\indsub\envtwo\isub\var{\unf\mol}$.
  Composing $\indsub\envtwo$ with $\isub\var{\unf\mol}$ can add $\var$ to the domain of $\indsub\envtwo$
  and remove an arbitrary number of other variables that, after the substitution, can now be mapped to
  themselves. Therefore we conclude $\domain{\indsub\envtwo\isub\var{\unf\mol}} \subseteq
   \domain{\indsub\envtwo} \cup \set\var \subseteq_\ih \domain\envtwo \cup \set\var
   = \domain{\envtwo\esub\var\mol}$.
\end{itemize}
\end{proof}

\begin{lemma}\label{l:sigma-composition}
$\indsub{\env\envtwo} = \indsub\env \indsub\envtwo$
\end{lemma}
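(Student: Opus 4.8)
The plan is to prove the identity by structural induction on the right environment $\envtwo$, unfolding the definition of the induced substitution $\indsub\cdot$ at each step and matching the two sides using associativity of substitution composition. Recall that composition is read left-to-right, $(\indsub\env\,\indsub\envtwo)(\var) \defeq \indsub\envtwo(\indsub\env(\var))$, so in the inductive step the freshly exposed single substitution acts \emph{after} the accumulated one.

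For the base case $\envtwo = \emptyenv$, concatenation gives $\env\emptyenv = \env$, so the left-hand side is $\indsub\env$; on the right-hand side $\indsub\emptyenv = Id$, and composing with the identity substitution leaves $\indsub\env$ unchanged, since $(\indsub\env\, Id)(\var) = Id(\indsub\env(\var)) = \indsub\env(\var)$.

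For the inductive step $\envtwo = \envthree\esub\var\mol$, note first that concatenation is associative as list concatenation, so $\env\envtwo = (\env\envthree)\esub\var\mol$, and the side condition $(*)$ governing the outermost ES depends only on $\var$ and $\mol$, hence is the same when reading back $\env\envtwo$ and when reading back $\envtwo$ alone. I split on $(*)$. If $(*)$ fails (i.e.\ $\mol$ is not a value and $\var\in\calcnames$), both sides collapse to $\indsub{\env\envthree}$ and $\indsub\env\,\indsub\envthree$ respectively, which are equal by the induction hypothesis. If $(*)$ holds, then by definition $\indsub{(\env\envthree)\esub\var\mol} = \indsub{\env\envthree}\isub\var{\unf\mol}$, which the induction hypothesis rewrites to $(\indsub\env\,\indsub\envthree)\isub\var{\unf\mol}$; meanwhile the right-hand side is $\indsub\env\,\indsub{\envthree\esub\var\mol} = \indsub\env(\indsub\envthree\isub\var{\unf\mol})$. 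The two coincide exactly by associativity of substitution composition, $(\sigma_1\sigma_2)\sigma_3 = \sigma_1(\sigma_2\sigma_3)$, instantiated with $\sigma_1 = \indsub\env$, $\sigma_2 = \indsub\envthree$, and $\sigma_3 = \isub\var{\unf\mol}$.

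The only non-bookkeeping ingredient, and hence the crux, is this associativity of composition, which in turn reduces to the standard substitution lemma $\tm(\sigma_2\sigma_3) = (\tm\sigma_2)\sigma_3$ for capture-avoiding meta-level substitution. I expect the only delicate point to be ensuring that no variable capture occurs when chaining $\indsub\envthree$ with $\isub\var{\unf\mol}$; this is guaranteed by the well-namedness conventions and the minimal-renaming discipline fixed for $\isub\cdot\cdot$ earlier in the paper, so the equality holds as stated without extra hypotheses on $\env$ and $\envtwo$.
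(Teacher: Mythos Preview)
Your proof is correct and follows the same structural induction on $\envtwo$ as the paper. The only difference is cosmetic: where you split explicitly on the side condition $(*)$, the paper packages both cases at once by writing $\indsub{\env\envthree\esub\var\mol} = \indsub{\env\envthree}\,\indsub{\esub\var\mol}$ (which is just the definition, since $\indsub{\esub\var\mol}$ is either $\isub\var{\unf\mol}$ or $Id$), then applies the induction hypothesis and refolds, using associativity of substitution composition implicitly rather than stating it. Your remark about capture is harmless but unnecessary here: composition $\sigma_1\sigma_2$ is defined pointwise via capture-avoiding application, so associativity holds without extra hypotheses.
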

\begin{proof}
By structural induction over $\envtwo$:
\begin{itemize}
 \item Case $\emptyenv$: $\indsub\env = \indsub\env\, Id = \indsub\env\indsub\emptyenv$.
 \item Case $\envthree\esub\var\mol$:
  $\indsub{\env\envthree\esub\var\mol} =
   \indsub{\env\envthree}\indsub{\esub\var\mol} =_\ih
   \indsub\env \indsub\envthree \indsub{\esub\var\mol}
   = \indsub\env \indsub{\envthree\esub\var\mol}$.
\end{itemize}
\end{proof}

In the following statement we commit a little abuse of notation: we write
$\isub{\varfour}{\unf\mol\indsub{\env}}\cup\indsub{\env}$ to mean
$\isub{\varfour}{\unf\mol\indsub{\env}}\cup(\indsub{\env} \setminus \isub\varfour{\indsub{\env}(\varfour)})$,
i.e.  the total function that maps $\varfour$ to $\unf\mol\indsub{\env}$ and every other variable $\var$ to
$\indsub{\env}(\var)$.

\begin{lemma}[Left-to-right induced substitutions and substitution contexts ]
  \label{l:sigma-wk-cons} 
  \hfill
  \begin{enumerate}
    \item \label{p:sigma-wk-cons-a}
      $\indsub{\esub\varfour\mol\env} = \isub{\varfour}{\unf\mol\indsub{\env}}\cup\indsub{\env}$
      if $\mol=\val$ or $\varfour\in\crnames$ and  
      $\indsub{\esub\varfour\mol\env} = \indsub{\env}$ otherwise
    \item \label{p:sigma-wk-cons-c}
      $\indenv{\esub\varfour\mol\env} = \indenv{\env}$ if $\mol=\val$ or $\varfour\in\crnames$
      and 
      $\indenv{\esub\varfour\mol\env} = \esub{\varfour}{\mol\indsub{\env}} \indenv{\env}$ otherwise
  \end{enumerate}
\end{lemma}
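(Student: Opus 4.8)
I would prove the two equalities separately. Part (1) follows almost immediately from the composition law for induced substitutions (\reflemma{sigma-composition}), whereas part (2), for which no analogous composition law is stated, I would establish by induction on the structure of $\env$. Throughout I read a substitution context as a list of ES with its hole at the outer end, so that concatenation with $\indenv\emptyenv = \ctxhole$ (the empty list) is neutral and the apparent left/right ordering between $\esub\varfour{\cdots}$ and $\indenv\env$ is harmless.

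\textbf{Part (1).} First compute $\indsub{\esub\varfour\mol}$ directly: since $\esub\varfour\mol = \emptyenv\esub\varfour\mol$, the definition gives $\indsub{\esub\varfour\mol} = \indsub\emptyenv\isub\varfour{\unf\mol} = \isub\varfour{\unf\mol}$ when $(*)$ holds (i.e.\ $\mol = \val$ or $\varfour\in\crnames$), and $\indsub{\esub\varfour\mol} = \indsub\emptyenv = Id$ otherwise. By \reflemma{sigma-composition}, $\indsub{\esub\varfour\mol\env} = \indsub{\esub\varfour\mol}\,\indsub\env$. In the non-$(*)$ case this is $Id\,\indsub\env = \indsub\env$, as claimed. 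In the $(*)$ case it is $\isub\varfour{\unf\mol}\,\indsub\env$, and evaluating on a variable $\var$ via $(\indsub1\indsub2)(\var) = \indsub2(\indsub1(\var))$ yields $\unf\mol\indsub\env$ when $\var = \varfour$ and $\indsub\env(\var)$ otherwise, which is exactly $\isub\varfour{\unf\mol\indsub\env}\cup\indsub\env$ in the stated sense.

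\textbf{Part (2).} I would argue by induction on $\env$. In the base case $\env = \emptyenv$: if $(*)$ holds then $\indenv{\esub\varfour\mol} = \indenv\emptyenv\isub\varfour{\unf\mol} = \ctxhole = \indenv\emptyenv$; if $(*)$ fails then $\indenv{\esub\varfour\mol} = \indenv\emptyenv\esub\varfour\mol$, which is the one-entry substitution context $\esub\varfour{\mol\,Id}\indenv\emptyenv$ since $\indsub\emptyenv = Id$. For the inductive step $\env = \envthree\esub\vartwo\moltwo$, I would unfold the outermost clause of $\indenv{\esub\varfour\mol\envthree\esub\vartwo\moltwo}$ according to whether the side condition $(**)$ for $\esub\vartwo\moltwo$ ($\moltwo$ a value or $\vartwo\in\crnames$) holds, apply the induction hypothesis to $\envthree$, and re-fold using the corresponding clauses for $\indenv\env$ and $\indsub\env$ (recalling that $\indsub{\envthree\esub\vartwo\moltwo}$ is $\indsub\envthree\isub\vartwo{\unf\moltwo}$ if $(**)$ and $\indsub\envthree$ otherwise, and similarly for $\indenv$). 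Three of the four resulting sub-cases are immediate syntactic matches.

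\textbf{Main obstacle.} The only delicate sub-case is when $(*)$ fails for $\esub\varfour\mol$ but $(**)$ holds for $\esub\vartwo\moltwo$. There the induction hypothesis gives $\indenv{\esub\varfour\mol\envthree} = \esub\varfour{\mol\indsub\envthree}\indenv\envthree$, and the outer clause applies the meta-level substitution $\isub\vartwo{\unf\moltwo}$ to this context, so I must show $(\esub\varfour{\mol\indsub\envthree}\indenv\envthree)\isub\vartwo{\unf\moltwo} = \esub\varfour{\mol\indsub\env}\indenv\env$. This reduces to two facts: that applying a meta-substitution to a substitution context distributes over its components while leaving the binder $\varfour$ untouched (where the variable convention / well-namedness is silently used to exclude capture, i.e.\ $\vartwo\neq\varfour$ and $\varfour\notin\fv{\unf\moltwo}$), and that $(\mol\indsub\envthree)\isub\vartwo{\unf\moltwo} = \mol(\indsub\envthree\isub\vartwo{\unf\moltwo}) = \mol\indsub\env$ by associativity of substitution composition together with \reflemma{sigma-composition}. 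I expect this capture/associativity bookkeeping to be the main—though entirely routine—point of the argument.
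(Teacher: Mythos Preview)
Your proposal is correct and follows the same approach as the paper: part (1) via \reflemma{sigma-composition} and the one-step unfolding of $\indsub{\esub\varfour\mol}$, and part (2) by structural induction on $\env$. Your analysis of the delicate sub-case (when $(*)$ fails but the side condition on the outer ES holds) is in fact more explicit than the paper's own proof, which compresses the four sub-cases into the uniform notation $\indenv{\esub\varfour\mol\envtwo}\indsub{\esub\var\moltwo}\indenv{\esub\var\moltwo}$ and handles the distribution of the substitution over the prefix $\esub\varfour{\mol\indsub\envtwo}$ silently.
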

\begin{proof}
We have to prove that:
\begin{enumerate}
 \item
  $\indsub{\esub\varfour\mol\env} = \isub{\varfour}{\unf\mol\indsub{\env}}\cup\indsub{\env}$
  if $\mol=\val$ or $\varfour\in\crnames$ and  
  $\indsub{\esub\varfour\mol\env} = \indsub{\env}$ otherwise.
  By \reflemma{sigma-composition},
  $\indsub{\esub\varfour\mol\env} = \indsub{\esub\varfour\mol} \indsub\env$.
  The thesis follows from the definition of composition of substitutions and the definition
  of $\indsub{\esub\varfour\mol}$.
 \item
  $\indenv{\esub\varfour\mol\env} = \indenv{\env}$ if $\mol=\val$ or $\varfour\in\crnames$
  and 
  $\indenv{\esub\varfour\mol\env} = \esub{\varfour}{\mol\indsub{\env}} \indenv{\env}$ otherwise.
  We proceed by structural induction on $\env$.
  \begin{itemize}
   \item Case $\emptyenv$. The property holds by definition of induced substitution context,
    noticing that $\indsub\emptyenv = Id$.
   \item Case $\envtwo\esub\var\moltwo$.
    $$\begin{array}{lll}
     & \indenv{\esub\varfour\mol\envtwo\esub\var\moltwo}\\
    = & \indenv{\esub\varfour\mol\envtwo}\indsub{\esub\var\moltwo}\indenv{\esub\var\moltwo}\\
    =_\ih &
     \left\{\begin{array}{l\colspace l}
      \indenv{\envtwo}\indsub{\esub\var\moltwo}\indenv{\esub\var\moltwo} = \indenv{\envtwo\esub\var\moltwo}
      \\ \ \ \ \mbox{if } \mol=\val$ or $\varfour\in\crnames\\~\\
      \esub{\varfour}{\mol\indsub{\envtwo}} \indenv{\envtwo}\indsub{\esub\var\moltwo}\indenv{\esub\var\moltwo} =\\ \ \
      \esub{\varfour}{\mol\indsub{\envtwo}} \indenv{\envtwo\esub\var\moltwo}
       \ \ \mbox{otherwise.}
     \end{array}\right.
    \end{array}$$
  \end{itemize}
\end{enumerate}
\end{proof}

\begin{lemma}\label{l:lookup-indsub}
If $\env$ is well-named and $\env(\var) = \val$ then
$\indsub\env(\var)$ is a value.
\end{lemma}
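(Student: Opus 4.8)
The plan is to reduce the statement to a direct computation of $\indsub\env$ at $\var$, exploiting the compositional structure of induced substitutions rather than inducting on $\env$ (which is awkward, since lookup selects the \emph{leftmost} binding while $\indsub\cdot$ is built peeling ES from the right). Since $\env(\var) = \val$, the very definition of environment lookup supplies a decomposition $\env = \envtwo\esub\var\val\envthree$ with $\var \notin \dom\envtwo$, and the whole argument is then carried out along this decomposition, showing the result is syntactically an abstraction and hence a value.

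First I would factor the induced substitution with the concatenation lemma \reflemma{sigma-composition}, obtaining $\indsub\env = \indsub\envtwo\,\indsub{\esub\var\val\envthree}$. Recalling that composition is read left-to-right, i.e. $(\indsub1\indsub2)(\var) = \indsub2(\indsub1(\var))$, this gives $\indsub\env(\var) = \indsub{\esub\var\val\envthree}(\indsub\envtwo(\var))$. Because $\var \notin \dom\envtwo$, equivalently $\var \notin \domain\envtwo$, and $\domain{\indsub\envtwo} \subseteq \domain\envtwo$ by \reflemma{indsub-domain-b}, the prefix acts trivially: $\indsub\envtwo(\var) = \var$. Hence $\indsub\env(\var) = \indsub{\esub\var\val\envthree}(\var)$. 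Next, since $\val$ is a value the side condition $(*)$ holds for the leading ES, so \reflemmap{sigma-wk-cons}{a} applies and yields $\indsub{\esub\var\val\envthree} = \isub\var{\unf\val\,\indsub\envthree} \cup \indsub\envthree$; in particular $\indsub\env(\var) = \indsub{\esub\var\val\envthree}(\var) = \unf\val\,\indsub\envthree$.

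It then remains to observe that $\unf\val\,\indsub\envthree$ is a value. By the bite grammar $\val = \la\vartwo\envfour$, and the value-bite read-back gives $\unf\val = \la\vartwo\unf\envfour$, an abstraction; applying the substitution $\indsub\envthree$ to an abstraction produces again an abstraction (capture-avoidance at worst renames the bound variable), so $\unf\val\,\indsub\envthree$ is a VSC value, concluding the proof. The argument is essentially bookkeeping, so the only delicate points—and thus the main obstacles—are (i) keeping the left-to-right composition convention straight when evaluating at $\var$, and (ii) justifying that the prefix $\envtwo$ genuinely acts as the identity on $\var$, which is where well-namedness (ensuring the looked-up binder is unique) together with \reflemma{indsub-domain-b} is used. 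No step requires more than unwinding the definitions.
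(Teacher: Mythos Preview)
Your proof is correct and follows essentially the same route as the paper: decompose $\env$ along the lookup, factor $\indsub\env$ via \reflemma{sigma-composition}, argue the prefix acts as the identity on $\var$, and conclude that the result is $\unf\val\indsub\envthree$, an abstraction. The only minor difference is that you justify $\var \notin \domain{\indsub\envtwo}$ directly from the side condition $\var \notin \dom\envtwo$ in the definition of lookup (combined with \reflemma{indsub-domain-b}), whereas the paper instead invokes well-namedness to obtain $\var \notin \bv{\env_1} \supseteq \domain{\env_1}$; your route is actually slightly more direct and makes the well-namedness hypothesis unnecessary for that step, so your closing remark that well-namedness is ``used'' there is a bit misleading.
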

\begin{proof}
If $\env(\var) = \val$ then $\env = \env_1\esub\var\val\env_2$ and, by \reflemma{sigma-composition},
$\indsub\env(\var) = \indsub{\env_1\esub\var\val\env_2}(\var) = (\indsub{\env_1} \isub\var{\unf\val} \indsub{\env_2})(\var)$.
Since $\env_1\esub\var\val\env_2$ is well-named,
$\var \not\in \bv{\env_1} \supseteq \domain{\env_1} \supseteq_\reflemmaeq{indsub-domain-b} \domain{\indsub{\env_1}}$
and thus
$(\indsub{\env_1} \isub\var{\unf\val} \indsub{\env_2})(\var) = \unf\val \indsub{\env_2}$, which is a value
by definition of substitution.
\end{proof}

\subsection{Read-back is modular}

\begin{lemma}
\label{lappendix:read-back-decomposition-d}
\NoteState{l:read-back-decomposition-d}
$\unf{(\env\envtwo)} = \indenv\envtwo\ctxholep{\unf\env\indsub\envtwo}$.
\end{lemma}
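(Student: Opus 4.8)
The plan is to prove the identity by structural induction on the \emph{right} environment $\envtwo$, mirroring the fact that all three functions $\unf\cdot$, $\indsub\cdot$ and $\indenv\cdot$ are defined by recursion on the same outermost explicit substitution of $\envtwo$. The base case $\envtwo = \emptyenv$ is immediate: the left-hand side is $\unf{(\env\emptyenv)} = \unf\env$, while the right-hand side is $\indenv\emptyenv\ctxholep{\unf\env\,\indsub\emptyenv} = \ctxhole\ctxholep{\unf\env} = \unf\env$, since $\indsub\emptyenv = Id$ and $\indenv\emptyenv = \ctxhole$.

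For the inductive step I would write $\envtwo = \envthree\esub\var\mol$ and split on the side condition ``$\mol = \val$ or $\var\in\crnames$'', exactly the condition on which the three definitions branch. When it fails, $\mol$ is an inert bite and all three functions simply keep/append the substitution: $\unf{(\env\envthree\esub\var\mol)} = \unf{(\env\envthree)}\esub\var\mol$, $\indenv{\envthree\esub\var\mol} = \indenv\envthree\esub\var\mol$ and $\indsub{\envthree\esub\var\mol} = \indsub\envthree$. Plugging the induction hypothesis $\unf{(\env\envthree)} = \indenv\envthree\ctxholep{\unf\env\indsub\envthree}$ into the left-hand side, and observing that appending the outermost $\esub\var\mol$ commutes with plugging into $\indenv\envthree$, closes this case on the nose.

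When the condition holds, the outermost substitution becomes a meta-level one on all sides: $\unf{(\env\envthree\esub\var\mol)} = \unf{(\env\envthree)}\isub\var{\unf\mol}$, $\indenv{\envthree\esub\var\mol} = \indenv\envthree\isub\var{\unf\mol}$ and $\indsub{\envthree\esub\var\mol} = \indsub\envthree\isub\var{\unf\mol}$. Applying the induction hypothesis and then the key algebraic step---that a meta-level substitution distributes over context plugging, $(\ctx\ctxholep\tm)\isub\var{\unf\mol} = (\ctx\isub\var{\unf\mol})\ctxholep{\tm\isub\var{\unf\mol}}$, together with associativity of substitution $(\unf\env\indsub\envthree)\isub\var{\unf\mol} = \unf\env(\indsub\envthree\isub\var{\unf\mol})$---rewrites the left-hand side into $(\indenv\envthree\isub\var{\unf\mol})\ctxholep{\unf\env\indsub{\envthree\esub\var\mol}}$, which is exactly $\indenv{\envthree\esub\var\mol}\ctxholep{\unf\env\indsub{\envthree\esub\var\mol}}$ by the definitions recalled above.

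The one genuinely delicate point is the distributivity of substitution over plugging in this last case: it holds literally (rather than merely up to $\alpha$) only when the substituted variable $\var$ is not captured by the binders of $\indenv\envthree$. I expect to discharge this using the variable control already available---$\fv{\unf\env}\subseteq\fv\env$ and $\bv{\unf\env}\subseteq\bv\env$ from \reflemma{properties-unfolding}, together with $\domain{\indsub\env}\subseteq\domain\env$ (\reflemma{indsub-domain-b})---so that the capture-avoiding substitution performs no renaming. This is the step I would write out in full; the remainder is routine unfolding of the three definitions.
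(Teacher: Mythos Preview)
Your proof is essentially identical to the paper's: induction on $\envtwo$, same base case, same split on the side condition in the inductive step, and the same chain of equalities in each branch. The paper in fact writes the distributivity step $\indenv\envthree\ctxholep{\unf\env\indsub\envthree}\isub\var{\unf\mol} = \indenv\envthree\isub\var{\unf\mol}\ctxholep{\unf\env\indsub\envthree\isub\var{\unf\mol}}$ as a bare equality without justifying non-capture, so your flagging of that point is more careful than the original---but otherwise the arguments coincide.
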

\begin{proof}
  By induction on $\envtwo$. \emph{Base case}: if $\envtwo = \emptyenv$ then $\indenv\envtwo = \ctxhole$ and $\indsub\envtwo$ is the identity, so that $\indenv\envtwo\ctxholep{\unf\env\indsub\envtwo} = \unf\env$. \emph{Inductive case}: if $\envtwo = \envthree\esub\var\mol$ there are two cases.
	\begin{itemize}
		\item $\mol = \val$ or $\var \in \crnames$ then 
			\[\begin{array}{rcllllll}
				\unf{(\env\envthree\esub\var\mol)} 
				& = & 
				\unf{(\env\envthree)}\isub\var{\unf\mol} 
				\\ & =_{\ih} & 
				\indenv\envthree\ctxholep{\unf\env\indsub\envthree}\isub\var{\unf\mol} 
				\\ & = & 
				\indenv\envthree\isub\var{\unf\mol}\ctxholep{\unf\env\indsub\envthree\isub\var{\unf\mol}} 
				\\ & = & 
				\indenv{\envthree\esub\var\mol}\ctxholep{\unf\env\indsub{\envthree\esub\var\mol}}.
	\end{array}\]
		\item Otherwise:
					\[\begin{array}{rcllllll}
						\unf{(\env\envthree\esub\var\mol)} 
						& = & 
						\unf{(\env\envthree)}\esub\var{\unf\mol}
						\\ & =_{\ih} & 
						\indenv\envthree\ctxholep{\unf\env\indsub\envthree}\esub\var{\unf\mol}
						\\ & = & 
						\indenv{\envthree\esub\var\mol}\ctxholep{\unf\env\indsub{\envthree}}
						\\ & = & 
						\indenv{\envthree\esub\var\mol}\ctxholep{\unf\env\indsub{\envthree\esub\var\mol}}.
					\end{array}\]
	\end{itemize}
\end{proof}

\section{Proofs of Section~\ref*{SECT:OPEN-MACHINE} (Open Crumbling Machine)}
\label{app:ocam}

The aim of this section is to provide all the lemmas required to prove the open machine correct.
We do not provide the latter proof explicitly. Instead we shall later extend the open machine to a
strong machine by adding a new phase and we shall prove that correct. Therefore the proof of correctness of
the strong machine shall entail the one for the open machine and --- of course --- requires all the lemmas
provided in this section.

\subsection{Fundamental property of pristine environments}
One fundamental property of pristine environments is the fact that the crumbling variables introduced by the crumbling transformation occur at most once in the resulting environment. In order to prove \reflemmaboth{aux-most-once}, we first prove the following two auxiliary lemmas:

\begin{lemma}[Pristine free variables]
    \label{l:aux-aux-aux-most-once}
    If $\env$ is pristine and well-named, then $\fv{\env} = \fv{\unf\env}$.
\end{lemma}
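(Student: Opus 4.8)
The plan is to prove the equality by mutual induction with the analogous statement for bites, namely that $\fv\mol = \fv{\unf\mol}$ for every pristine and well-named bite $\mol$, the induction being on the size of $\env$ (resp.\ $\mol$). Since the inclusion $\fv{\unf\env} \subseteq \fv\env$ is already available from \reflemmap{properties-unfolding}{fv} (and holds without pristinity), the real content is the reverse inclusion $\fv\env \subseteq \fv{\unf\env}$, and this is exactly where pristinity is used. The base cases are immediate: $\unf\emptyenv = \varstar$ with $\fv\emptyenv = \set\varstar$, and for the inert bites $\var$ and $\var\vartwo$ the read-back is the identity.

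The crux is the inductive case $\env = \envtwo\esub\var\mol$. By the definition of pristine environment, $\envtwo$ and $\mol$ are pristine, $\var \in \crnames$, and there is an open context $\openctx$ with $\unf\envtwo = \openctxp{\var}$ and $\var \notin \allvars\openctx$. Well-namedness of $\envtwo$ follows from \reflemma{decomp-well-named-two}, and that of $\mol$ from a routine inspection of the definition of well-named, so the induction hypothesis applies and gives $\fv{\unf\envtwo} = \fv\envtwo$ and $\fv{\unf\mol} = \fv\mol$. Because $\var \in \crnames$, the read-back performs a meta-level substitution, $\unf\env = \unf\envtwo\isub\var{\unf\mol}$. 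The key observation is that $\var$ occurs \emph{free} in $\unf\envtwo = \openctxp{\var}$: its only occurrence sits at the hole, and it is not captured since $\var \notin \bv\openctx$. Hence the capture-avoiding substitution does not erase $\unf\mol$, and using the standard free-variable computation for substitution one gets $\fv{\unf\env} = (\fv{\unf\envtwo}\setminus\set\var)\cup\fv{\unf\mol}$; applying the induction hypothesis and the syntactic identity $\fv{\envtwo\esub\var\mol} = (\fv\envtwo\setminus\set\var)\cup\fv\mol$ yields $\fv{\unf\env} = \fv\env$. The remaining bite case $\mol = \la\var\env$ is analogous: pristinity gives $\env$ pristine, read-back gives $\unf{(\la\var\env)} = \la\var{\unf\env}$, and the induction hypothesis on $\env$ together with $\fv{\la\var{\unf\env}} = \fv{\unf\env}\setminus\set\var$ closes it.

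The main obstacle — and the only genuinely delicate point — is precisely the role of pristinity, which does double duty here: it forces $\var \in \crnames$ (so read-back substitutes rather than leaving a pending $\mathsf{ES}$, which would change the shape of the analysis), and it guarantees through $\unf\envtwo = \openctxp{\var}$ with $\var \notin \allvars\openctx$ that $\var$ genuinely occurs free in $\unf\envtwo$, so that the free variables contributed by $\unf\mol$ are not discarded. Without pristinity only the one-sided inclusion $\fv{\unf\env}\subseteq\fv\env$ survives, and it can be strict exactly when read-back erases a sub-bite whose bound variable does not occur.
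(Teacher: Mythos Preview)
Your proposal is correct and follows essentially the same route as the paper's proof: both argue by mutual induction with the statement for bites, reduce to the inclusion $\fv\env \subseteq \fv{\unf\env}$ via \reflemmap{properties-unfolding}{fv}, and in the inductive case $\env = \envtwo\esub\var\mol$ use pristinity to obtain $\var\in\crnames$ and $\unf\envtwo = \openctxp\var$ with $\var\notin\allvars\openctx$, then conclude from the induction hypotheses. Your write-up is in fact more explicit than the paper's about why $\var$ being free in $\unf\envtwo$ is needed to get the reverse inclusion in the free-variable computation for substitution; the paper simply cites \reflemmap{vars-after-subst}{fv} and leaves that step implicit.
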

\begin{proof}
    One direction is proved in \reflemma{properties-unfolding}. We now prove the inclusion $\fv{\env} \subseteq \fv{\unf\env}$.
    We prove the required statement by mutual induction with the corresponding statement for bites (proved below).
    By induction on the structure of $\env$:
    \begin{itemize}
        \item If $\env=\emptyenv$ then $\fv\emptyenv = \fv{\unf\emptyenv} = \set\varstar$.
        \item If $\env = \envtwo\esub\vartwo\mol$ then, by the definition of pristine, both $\envtwo$ and $\mol$ are pristine, and $\unf\envtwo = \openctxtwop\vartwo$ for an open context $\openctxtwo$ such that $\vartwo\not\in\allvars\openctxtwo$. By \ih{} $\fv\envtwo = \fv{\unf\envtwo}$ and $\fv\mol = \fv{\unf\mol}$. We conclude by \reflemmap{vars-after-subst}{fv} and \reflemmap{vars-after-subst-crumbled}{fv}.
    \end{itemize}
    Now the corresponding statement for bites. We prove that $\fv\mol \subseteq \fv{\unf\mol}$ for every pristine and well-named bite $\mol$:
    \begin{itemize}
        \item If $\mol$ is a variable, then clearly $\fv\mol=\fv{\unf\mol}=\set\mol$.
        \item If $\mol$ is an application $\mol=\var\vartwo$, then clearly $\fv\mol=\fv{\unf\mol}=\set{\var,\vartwo}$.
        \item If $\mol$ is an abstraction $\mol = \la\var\env$, then by \ih{} $\fv\env=\fv{\unf\env}$. We conclude with $\fv\mol=\fv\env\setminus\set\var = \fv{\unf\env} \setminus\set\var = \fv{\unf\mol}$.
    \qedhere
    \end{itemize}
\end{proof}

\begin{lemma}
\label{l:aux-aux-most-once}
    Let $\env$ be a pristine and well-named environment. If $\unf\env=\openctxp\var$ for some open context $\openctx$ such that $\var\not\in\allvars\openctx$, then $\var$ occurs at most once in $\env$.
\end{lemma}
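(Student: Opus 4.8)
The plan is to prove a sharper counting statement by mutual induction on environments and bites, and then read off the lemma from it. Concretely, I would introduce the auxiliary claim: for every pristine, well-named environment $\env$ and every variable $\var \ne \varstar$ with $\var \notin \bv\env$, the number of occurrences of $\var$ in $\env$ equals the number of occurrences of $\var$ in $\unf\env$; and symmetrically for pristine well-named bites $\mol$. Since $\var \notin \bv\env$ implies $\var \notin \bv{\unf\env}$ by the bound-variable clause of \reflemma{properties-unfolding}, every occurrence of $\var$ in $\unf\env$ is free, so the count is unambiguous.

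The base case $\env = \emptyenv$ is immediate ($\unf\emptyenv = \varstar \ne \var$, so both counts are zero), and the bite cases for $\var$, $\vartwo\varthree$, and $\la\vartwo{\env'}$ are routine (using $\var \ne \vartwo$, which follows from $\var \notin \bv\mol$). The crux is the inductive step $\env = \envtwo\esub\vartwo\mol$. Here pristineness forces $\vartwo \in \crnames$, so the read-back takes the substitution branch, $\unf\env = \unf\envtwo\isub\vartwo{\unf\mol}$; and it also provides an open context $\openctxtwo$ with $\unf\envtwo = \openctxtwop\vartwo$ and $\vartwo \notin \allvars\openctxtwo$. The key point is that $\vartwo$ occurs exactly once in $\unf\envtwo$ (only in the hole), so the substitution inserts exactly one copy of $\unf\mol$ and duplicates nothing: $\openctxtwop\vartwo\isub\vartwo{\unf\mol} = \openctxtwop{\unf\mol}$. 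Counting then splits cleanly, because $\var \ne \vartwo$: the occurrences of $\var$ in $\unf\env$ are those of $\var$ in $\openctxtwo$ (equivalently in $\unf\envtwo$) plus those in the single copy of $\unf\mol$. Two appeals to the induction hypothesis — on $\envtwo$ and on $\mol$, both pristine, well-named, and still avoiding $\var$ as a bound variable — turn these into the occurrences of $\var$ in $\envtwo$ and in $\mol$, whose sum is the number of occurrences of $\var$ in $\env$.

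To finish the lemma from the claim, I would first dispatch the degenerate case $\var = \varstar$: by the side conditions on bites $\varstar$ has no free occurrences, hence occurs zero times in $\env$, so at most once. For $\var \ne \varstar$, the hypothesis $\unf\env = \openctxp\var$ with $\var \notin \allvars\openctx$ gives $\var \in \fv{\unf\env}$, and by the Pristine free variables lemma (\reflemma{aux-aux-aux-most-once}) $\fv\env = \fv{\unf\env}$, so $\var \in \fv\env$; well-namedness ($\fv\env \Disj \bv\env$) then yields $\var \notin \bv\env$, which is exactly the hypothesis needed to invoke the claim. Since $\var \notin \allvars\openctx$, the variable $\var$ occurs exactly once in $\unf\env = \openctxp\var$, so the claim gives that $\var$ occurs exactly once in $\env$ — a fortiori at most once.

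The main obstacle is the inductive step, and specifically justifying that the read-back substitution does not multiply occurrences. This is where the pristine invariant does all the work: without the guarantee that the substituted crumbling variable $\vartwo$ appears exactly once in $\unf\envtwo$, the substitution could in principle duplicate occurrences of $\var$ hidden inside $\unf\mol$, breaking the count. Keeping the well-namedness hypotheses propagating correctly down the induction (via \reflemma{decomp-well-named-two}) and using the substitution-into-context identity with $\vartwo \notin \allvars\openctxtwo$ are the remaining bookkeeping points.
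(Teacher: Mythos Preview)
Your approach is correct and genuinely different from the paper's. The paper proves the statement directly by structural induction: at the step $\env = \envtwo\esub\vartwo\mol$ it uses pristineness to write $\unf\env = \openctxtwop{\unf\mol}$ and then does a case split on whether the single occurrence of $\var$ in $\openctxp\var$ lands in $\openctxtwo$ or in $\unf\mol$, invoking the Pristine free variables lemma (\reflemma{aux-aux-aux-most-once}) in each branch to show $\var$ is absent from the other component. Your route instead factors through a stronger counting claim (occurrences of $\var$ in $\env$ equal occurrences in $\unf\env$, for $\var\ne\varstar$ with $\var\notin\bv\env$), which makes the inductive step a clean additivity argument with no case split; the paper's case analysis is essentially the ``at most one'' specialization of your count. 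Your lemma is more reusable, at the cost of introducing an explicit occurrence count; the paper's argument stays closer to the exact statement.

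One small gap: your treatment of $\var=\varstar$ does not quite work as written. The side conditions on bites forbid \emph{free} occurrences of $\varstar$, but $\varstar$ can still occur in $\env$ as the bound variable of an ES $\esub\varstar\mol$ (indeed, in any non-empty pristine environment the leftmost ES is necessarily on $\varstar$), so ``hence occurs zero times in $\env$'' is false in general. The fix is immediate: from the side conditions one gets $\varstar\notin\fv{\unf\env}$ whenever $\env$ is non-empty and pristine (easy induction), so the hypothesis $\unf\env=\openctxp\varstar$ forces $\env=\emptyenv$, where the conclusion is trivial. This is exactly how the paper's base case falls out.
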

\begin{proof}
    We prove the required statement by mutual induction with the corresponding statement for bites (proved below).
    By induction on the structure of $\env$:
    \begin{itemize}
        \item If $\env=\emptyenv$ then $\unf\env=\varstar$, and the only option is   $\openctx=\ctxhole$ and $\var=\varstar$.
        \item If $\env = \envtwo\esub\vartwo\mol$ then, by the definition of pristine, both $\envtwo$ and $\mol$ are pristine, and $\unf\envtwo = \openctxtwop\vartwo$ for an open context $\openctxtwo$ such that $\vartwo\not\in\allvars\openctxtwo$. By \reflemma{sub-O} $\unf\env = \openctxtwop{\unf\moltwo}$, and therefore we obtain that $\openctxp\var = \openctxtwop{\unf\moltwo}$. Recall the hypothesis $\var\not\in\allvars\openctx$.
        We proceed by cases:
        \begin{itemize}
            \item If $\var\in\allvars\openctxtwo$, then $\var\not\in\allvars{\unf\mol}$ and $\unf\envtwo = \openctxtwop\vartwo = \openctxthreep\var$ for some $\openctxthree$ such that $\var\not\in\allvars\openctxthree$. By \ih{}, $\var$ occurs at most once in $\envtwo$. In order to conclude, it suffices to show that $\var\not\in\allvars{\esub\vartwo\mol}$. As for the bound variables, $\var\not\in\bv{\esub\vartwo\mol}$ holds by well-namedness of $\env$, because $\var$ is free in $\env$. As for the free variables, it follows by \reflemma{aux-aux-aux-most-once}.
            \item If $\var\not\in\allvars\openctxtwo$, then $\openctx = \openctxtwop\openctxthree$ where $\openctxthreep\var = \unf\mol$. By \ih{}, we obtain that $\var$ occurs at most once in $\mol$. Note that by well-namedness $\var\neq\vartwo$. From $\var\not\in\allvars\openctxtwo$ we obtain $\var\not\in\allvars{\openctxtwop\vartwo} = \allvars{\unf\envtwo}$, and we conclude like in the case above by \reflemma{aux-aux-aux-most-once}.
        \end{itemize}
    \end{itemize}
    Now the corresponding statement for bites. We prove that for every pristine and well-named bite $\mol$, if $\unf\mol=\openctxp\var$ for some open context $\openctx$ such that $\var\not\in\allvars\openctx$, then $\var$ occurs at most once in $\mol$:
    \begin{itemize}
        \item If $\mol$ is a variable, then necessarily $\mol=\var$ which clearly occurs once in $\mol$.
        \item If $\mol$ is an application, then necessarily $\mol=\var\vartwo$ ($\openctx=\ctxhole\vartwo$) or $\mol=\vartwo\var$ ($\openctx=\vartwo\ctxhole$) for $\var\neq\vartwo$ (because $\var\not\in\allvars\openctx$), and therefore $\var$ clearly occurs once in $\mol$.
        \item The case when $\mol$ is an abstraction is not possible, because in this case $\unf\mol$ is an abstraction as well, and thus it cannot hold that $\unf\mol=\openctxp\var$ for some open context $\openctx$.
    \end{itemize}
\end{proof}

As a corollary:

\begin{lemma}
    \label{lappendix:aux-most-once}
\NoteState{l:aux-most-once}
 If $\env\esub\var\mol$ is pristine and well-named and $\var\neq\varstar$, then $\var$ occurs exactly once in $\env$.
\end{lemma}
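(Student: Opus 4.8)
The final statement is the promised corollary, so the plan is simply to assemble the two immediately preceding lemmas into matching upper and lower bounds on the number of occurrences of $\var$. First I would unpack the hypothesis that $\env\esub\var\mol$ is pristine. By the definition of a pristine environment this forces three things: that $\env$ is itself pristine, that $\var\in\crnames$, and—crucially—that there is an open context $\openctx$ with $\unf\env=\openctxp\var$ and $\var\notin\allvars\openctx$. From well-namedness of $\env\esub\var\mol$ I would extract, via \reflemma{decomp-well-named-two}, that $\env$ is well-named, which is the side condition needed to fire the two occurrence lemmas. Note also that the hypothesis $\var\neq\varstar$ together with $\var\in\crnames$ places $\var$ in $\crnames\setminus\set\varstar$, so the crumbled well-namedness convention (which disregards only $\varstar$) will treat $\var$ as an ordinary name.

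For the upper bound, the data $\unf\env=\openctxp\var$ with $\var\notin\allvars\openctx$ is exactly the premise of \reflemma{aux-aux-most-once}, which yields directly that $\var$ occurs \emph{at most once} in $\env$. For the lower bound I would first observe that $\var$ occurs free in the read-back: the occurrence of $\var$ plugged into the hole of $\openctx$ cannot be captured by any binder of the open context, because $\allvars\openctx$ contains the bound variables of $\openctx$ and $\var\notin\allvars\openctx$; hence $\var\in\fv{\unf\env}$. Then the pristine free-variable lemma \reflemma{aux-aux-aux-most-once} gives $\fv\env=\fv{\unf\env}$, so $\var\in\fv\env$, i.e.\ $\var$ occurs \emph{at least once} in $\env$. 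Since $\env$ is well-named and $\var\neq\varstar$, the occurrence witnessing $\var\in\fv\env$ is genuinely free and distinct from every binder of $\env$, so the at-most-once and at-least-once bounds combine to give exactly one occurrence.

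The argument is therefore almost entirely bookkeeping over the two auxiliary lemmas, and I do not expect a real obstacle. The one point deserving care is the capture-freeness step feeding the lower bound: one must confirm that the single occurrence introduced by plugging $\var$ into $\openctx$ survives as a free occurrence of $\unf\env$ even though open contexts may carry explicit substitutions that bind variables. This is settled immediately by the definition of $\allvars{\cdot}$, since $\var\notin\allvars\openctx$ rules out any binder of $\openctx$ equal to $\var$, and it is precisely this observation that licenses the use of \reflemma{aux-aux-aux-most-once} to transport the free occurrence from $\unf\env$ back to $\env$.
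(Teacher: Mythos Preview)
Your proposal is correct and follows essentially the same two-part structure as the paper: the upper bound via \reflemma{aux-aux-most-once}, and the lower bound by first observing $\var\in\fv{\unf\env}$ and then transporting this back to $\env$. The only minor divergence is in the lemma cited for that transport: you invoke the pristine-specific equality $\fv\env=\fv{\unf\env}$ of \reflemma{aux-aux-aux-most-once}, whereas the paper uses the more general inclusion $\fv{\unf\env}\subseteq\fv\env$ from \reflemmap{properties-unfolding}{fv} (noting the $\varstar$ exception, which is exactly why the hypothesis $\var\neq\varstar$ matters). Both choices work here; the paper's is marginally more economical since only one direction of the inclusion is needed.
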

\begin{proof}
    \reflemma{aux-aux-most-once} and the definition of pristine imply that $\var$ occurs at most once in $\env$. From the definition of pristine it also follows that $\var\in\fv{\unf\env}$. To conclude, just note that the unfolding preserves the names of free variables by \reflemma{properties-unfolding} (the only exception is $\varstar$, which can be syntactically present in the unfolding, but not in the original environment, for example in the case $\unf\emptyenv = \varstar$). Thus when a free variable different that $\varstar$ occurs in the unfolding, it must also occur syntactically in the original environment.
\end{proof}

\subsection{Other fundamental properties of pristine environments}
Pristine environments are stable under a certain number of operations that are performed on them during a run of the open machine.
These properties are used in the proof of correctness of the machine to show the invariant that certain environments of the machine
remain pristine during execution, assuming that the property holds for the initial machine state.

In order to prove the essential properties of pristine environments in \reflemma{pristine-properties}, we first introduce some auxiliary properties of open contexts.

The first one is the fact that open contexts are closed under composition:
\begin{lemma}[Composition of open contexts]\label{l:wctxs-composition}
    Let $\openctx$ and $\openctxtwo$ be open contexts. Then their composition $\openctxp\openctxtwo$ is still a open context.
  \end{lemma}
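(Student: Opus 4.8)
The plan is a routine structural induction on the outer context $\openctx$, exploiting that plugging commutes with each top-level term former and that the grammar of open contexts is closed under exactly those formers. First I would fix the induction: the statement to establish is that, for every open context $\openctx$ and every open context $\openctxtwo$, the composition $\openctx[\openctxtwo]$ (i.e. $\openctxp{\openctxtwo}$) is again an open context, proceeding by induction on the structure of $\openctx$ with $\openctxtwo$ held fixed.

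The base case is $\openctx = \ctxhole$: here $\openctx[\openctxtwo] = \openctxtwo$, which is an open context by hypothesis. For the inductive cases I would unfold the definition of plugging through the topmost constructor and apply the induction hypothesis to the immediate open subcontext. Concretely, if $\openctx = \openctx'\,\tm$ then $\openctx[\openctxtwo] = (\openctx'[\openctxtwo])\,\tm$; the induction hypothesis gives that $\openctx'[\openctxtwo]$ is an open context, and the production $\weakctx\,\tm$ of the open-context grammar then yields that $(\openctx'[\openctxtwo])\,\tm$ is open. The three remaining cases $\openctx = \tm\,\openctx'$, $\openctx = \openctx'\esub\var\tm$, and $\openctx = \tm\esub\var{\openctx'}$ are treated identically, each closing with the matching production ($\tm\,\weakctx$, $\weakctx\esub\var\tm$, and $\tm\esub\var\weakctx$, respectively).

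I do not expect any genuine obstacle. The statement holds precisely because the abstraction former $\la\var{\weakctx}$ is \emph{absent} from the open-context grammar, so no inductive case can produce a forbidden abstraction context and plugging never introduces one. The only thing to verify is the commutation of plugging with each former, which is immediate from the definition of plugging a context into a context, so the whole argument reduces to matching each grammar production of $\openctx$ with the same production after the induction hypothesis has been applied to its open subcontext.
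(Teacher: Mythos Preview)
Your proposal is correct and follows exactly the approach the paper indicates: the paper's own proof simply says ``By an easy inspection of the grammar of open evaluation contexts. It can be proved formally by induction on $\openctx$,'' and your writeup spells out precisely that induction.
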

  \begin{proof}
    By an easy inspection of the grammar of open evaluation contexts. It can be proved formally by induction on $\openctx$.
  \end{proof}

  An easy property about the variables of plugged open contexts:
\begin{lemma}
\label{l:ctxs-fv-bv}
Let $\openctx$ be an open context, and $\tm$ be a term.
Then
\begin{enumerate}
        \item $\fv\openctx \subseteq \fv{\openctxp\tm}$.
        \item $\bv\openctx \subseteq \bv{\openctxp\tm}$.
\end{enumerate}
\end{lemma}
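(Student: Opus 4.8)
The plan is a straightforward structural induction on the open context $\openctx$, following its grammar $\openctx \grameq \ctxhole \mid \openctx\tm \mid \tm\openctx \mid \openctx\esub\var\tm \mid \tm\esub\var\openctx$, proving both inclusions \emph{(1)} and \emph{(2)} simultaneously. The base case $\openctx = \ctxhole$ is immediate: here $\fv\ctxhole = \bv\ctxhole = \emptyset$, so both inclusions hold trivially regardless of $\tm$, since $\ctxholep\tm = \tm$.

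For each inductive case I would unfold the definition of $\fv$ (resp.\ $\bv$) on the context and on the plugged term, then close by monotonicity of the set operations applied to the inductive hypothesis $\fv\openctxtwo \subseteq \fv{\openctxtwop\tm}$ (resp.\ $\bv\openctxtwo \subseteq \bv{\openctxtwop\tm}$). Concretely, in the application cases $\openctxtwo\tmtwo$ and $\tmtwo\openctxtwo$ one has $\fv{\openctxtwo\tmtwo} = \fv\openctxtwo \cup \fv\tmtwo$ and $\fv{(\openctxtwop\tm)\tmtwo} = \fv{\openctxtwop\tm} \cup \fv\tmtwo$, so the inclusion follows because $\cup$ is monotone; the same reasoning applies verbatim to $\bv$. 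In the explicit-substitution cases $\openctxtwo\esub\var\tmtwo$ and $\tmtwo\esub\var\openctxtwo$, the free-variable definition introduces a set difference by the binder $\var$, e.g.\ $\fv{\openctxtwo\esub\var\tmtwo} = (\fv\openctxtwo \setminus \set\var) \cup \fv\tmtwo$ and $\fv{(\openctxtwop\tm)\esub\var\tmtwo} = (\fv{\openctxtwop\tm} \setminus \set\var) \cup \fv\tmtwo$; since $\setminus\set\var$ and $\cup$ are both monotone, the inductive hypothesis again yields the inclusion. For bound variables the binder $\var$ is simply added to both sides, so the argument is once more immediate by monotonicity of $\cup$.

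No step presents a genuine obstacle; the only point requiring a moment's care is the explicit-substitution cases, where the binder $\var$ removes variables from the free-variable set. But since plugging can only \emph{enlarge} the free (and bound) variables of the subcontext by the inductive hypothesis, and set difference preserves inclusions, the removal of $\var$ happens uniformly on both sides and the inclusion is preserved.
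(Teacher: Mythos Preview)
Your proposal is correct and takes exactly the same approach as the paper, which simply states ``Easy, by induction on the structure of $\openctx$.'' You have merely spelled out the details that the paper leaves implicit.
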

\begin{proof}
	Easy, by induction on the structure of $\openctx$.
\end{proof}

The following two lemmas prove properties of open contexts under certain substitutions of terms.

\begin{lemma}[Open contexts and substitutions]
	\label{l:horror}
	Let $\openctx$ be an open context and $\sigma$ be a substitution. Then:
  \begin{enumerate}
    \item\label{p:horror-1} $\openctxp{\mol}\sigma = \openctx\sigma\ctxholep{\mol\sigma}$ when $\bv\openctx \Disj \fv\sigma \cup \domain\sigma$.
    \item\label{p:horror-2} $\openctx\sigma$ is a open context.
  \end{enumerate}
\end{lemma}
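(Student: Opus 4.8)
The plan is to prove both points by structural induction on the open context $\openctx$, treating the hole $\ctxhole$ as a distinguished constant not in the domain of $\sigma$, so that $\openctx\sigma$ is defined by pushing $\sigma$ through the context constructors exactly as for terms. I would prove point~\ref{p:horror-2} first and then point~\ref{p:horror-1}, since the right-hand side $\openctx\sigma\ctxholep{\mol\sigma}$ of point~\ref{p:horror-1} presupposes that $\openctx\sigma$ is a context one can plug into.

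For point~\ref{p:horror-2}, I would observe that substitution maps each context constructor to one of the same kind: $\ctxhole\sigma = \ctxhole$, $(\openctxtwo\tm)\sigma = (\openctxtwo\sigma)(\tm\sigma)$, $(\tm\openctxtwo)\sigma = (\tm\sigma)(\openctxtwo\sigma)$, and analogously for the two $\esub{\var}{}$ cases (possibly after a renaming of $\var$, which preserves the grammar). By the \ih{} each $\openctxtwo\sigma$ is again an open context, and closure of the grammar under the surrounding constructor gives the claim; no hypothesis on $\bv\openctx$ is needed here.

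For point~\ref{p:horror-1}, the base case $\openctx = \ctxhole$ is immediate, since $\openctxp\mol = \mol$ and $\ctxhole\sigma = \ctxhole$. In the application cases, say $\openctx = \openctxtwo\tm$, I would compute $\openctxp\mol\sigma = (\openctxtwop\mol\,\tm)\sigma = (\openctxtwop\mol\sigma)(\tm\sigma)$, apply the \ih{} to $\openctxtwo$ — whose disjointness hypothesis is inherited because $\bv\openctxtwo \subseteq \bv\openctx$ — to rewrite $\openctxtwop\mol\sigma$ as $\openctxtwo\sigma\ctxholep{\mol\sigma}$, and recognise the outcome as $\openctx\sigma\ctxholep{\mol\sigma}$; the case $\openctx = \tm\openctxtwo$ is symmetric.

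The crux, and the only place the disjointness hypothesis is used, is the two explicit-substitution cases $\openctx = \openctxtwo\esub{\var}{\tm}$ and $\openctx = \tm\esub{\var}{\openctxtwo}$. In general, the capture-avoiding substitution on an ES would withhold $\sigma$ from the bound side and rename $\var$ whenever $\var \in \fv\sigma$. Here, however, $\var \in \bv\openctx$, so $\bv\openctx \Disj \fv\sigma \cup \domain\sigma$ gives $\var \notin \fv\sigma \cup \domain\sigma$; since the substitution convention minimises renamings, no renaming of $\var$ is triggered (there is no capture to avoid), and $\sigma$ acts uniformly on both immediate subterms, yielding $(\openctxtwop\mol\esub{\var}{\tm})\sigma = (\openctxtwop\mol\sigma)\esub{\var}{\tm\sigma}$. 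Applying the \ih{} to $\openctxtwo$ then produces $\openctx\sigma\ctxholep{\mol\sigma}$ as required. Thus the main obstacle is purely the $\alpha$-renaming bookkeeping at ES binders, which the disjointness hypothesis is tailored to neutralise; every other case is a direct unfolding of the definition of substitution.
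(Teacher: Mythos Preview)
Your proposal is correct and follows essentially the same structural induction as the paper, including the key observation that the disjointness hypothesis is only needed in the ES cases to ensure $\var \notin \fv\sigma \cup \domain\sigma$ and hence that no renaming is triggered. The only cosmetic difference is that you prove point~\ref{p:horror-2} before point~\ref{p:horror-1} (which is indeed the logically cleaner order, since plugging into $\openctx\sigma$ presupposes it is a context), whereas the paper does them in the stated order.
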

\begin{proof}~
	\begin{enumerate}
		\item By induction over the structure of $\openctx$:
		\begin{itemize}
			\item The case $\openctx=\ctxhole$ is trivial.
			\item If $\openctx = \openctxtwo\tm$, then $\openctxp{\mol}\sigma = (\openctxtwop\mol\tm)\sigma = \openctxtwop\mol\sigma(\tm\sigma)$. By \ih{} $\openctxtwop\mol\sigma = \openctxtwo\sigma\ctxholep{\mol\sigma}$. Since $\openctx\sigma = (\openctxtwo\sigma)(\tm\sigma)$ we can conclude.
			\item The case $\openctx = \tm\openctxtwo$ is similar to the one above.
			\item Case $\openctx = \openctxtwo\esub\var\tm$. $\bv\openctx \Disj \fv\sigma \cup \domain\sigma$ implies $\var\not\in\fv\sigma\cup\domain\sigma$. Therefore $\openctxtwo\esub\var\tm\sigma = \openctxtwo\sigma\esub\var{\tm\sigma}$, and we conclude by \ih{}
			\item  Case $\openctx = \tm\esub\var\openctx$. Again, $\bv\openctx \Disj \fv\sigma \cup \domain\sigma$ implies $\var\not\in\fv\sigma\cup\domain\sigma$. Therefore $\tm\esub\var\openctxtwo\sigma = \tm\sigma\esub\var{\openctxtwo\sigma}$, and we conclude by \ih{}
		\end{itemize}
		\item Easy, by induction on the structure of $\openctx$.
		\qedhere
	\end{enumerate}
\end{proof}

\begin{lemma}
  \label{l:sub-O}
  Let $\tm,\tmtwo$ be terms and $\openctx$ be an open context such that $\var\not\in\allvars\openctx$ and $\bv\openctx \Disj \fv\tmtwo$. Then $\openctxp\tm\isub\var\tmtwo = \openctxp{\tm\isub\var\tmtwo}$.
\end{lemma}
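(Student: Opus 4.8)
The plan is to prove the identity by structural induction on the open context $\openctx$, using the two hypotheses to push the substitution $\isub\var\tmtwo$ through the context until it reaches the hole. The base case $\openctx = \ctxhole$ is immediate, since both $\openctxp\tm\isub\var\tmtwo$ and $\openctxp{\tm\isub\var\tmtwo}$ reduce to $\tm\isub\var\tmtwo$.

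For the inductive cases I would first record that the two hypotheses are inherited by the immediate sub-context. Whenever $\openctx$ is built from a strictly smaller open context $\openctxtwo$ together with a term $\tmthree$ (the four productions $\openctxtwo\tmthree$, $\tmthree\openctxtwo$, $\openctxtwo\esub\vartwo\tmthree$, $\tmthree\esub\vartwo\openctxtwo$), one has $\allvars\openctxtwo \subseteq \allvars\openctx$ and $\bv\openctxtwo \subseteq \bv\openctx$, so $\var\notin\allvars\openctxtwo$ and $\bv\openctxtwo\Disj\fv\tmtwo$ still hold and the induction hypothesis applies to $\openctxtwo$. Moreover the static ``non-hole'' subterm $\tmthree$ satisfies $\allvars\tmthree \subseteq \allvars\openctx$, hence $\var\notin\fv\tmthree$ and therefore $\tmthree\isub\var\tmtwo = \tmthree$: the substitution leaves the context part untouched. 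In the two application cases I would then simply distribute $\isub\var\tmtwo$ over the application, rewrite $\tmthree\isub\var\tmtwo$ as $\tmthree$, apply the induction hypothesis to the $\openctxtwo$-side, and read off the result as $\openctxp{\tm\isub\var\tmtwo}$.

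The delicate cases are the two explicit-substitution productions, since $\esub\vartwo\tmthree$ binds $\vartwo$ and I must ensure the capture-avoiding substitution neither renames $\vartwo$ nor misbehaves when crossing the binder. Here the two hypotheses play complementary roles: from $\vartwo\in\bv\openctx$ and $\var\notin\allvars\openctx$ I get $\vartwo\neq\var$, so $\isub\var\tmtwo$ commutes with the ES; and from $\vartwo\in\bv\openctx$ together with $\bv\openctx\Disj\fv\tmtwo$ I get $\vartwo\notin\fv\tmtwo$, which guarantees that pushing $\isub\var\tmtwo$ across the binder $\vartwo$ triggers no renaming, by the convention that capture-avoiding substitution performs only the strictly necessary renamings. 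With these two facts, $(\openctxtwop\tm\esub\vartwo\tmthree)\isub\var\tmtwo$ rewrites to $(\openctxtwop\tm\isub\var\tmtwo)\esub\vartwo{\tmthree\isub\var\tmtwo}$, and symmetrically $(\tmthree\esub\vartwo{\openctxtwop\tm})\isub\var\tmtwo$ rewrites to $(\tmthree\isub\var\tmtwo)\esub\vartwo{\openctxtwop\tm\isub\var\tmtwo}$; I then conclude exactly as in the application cases, using $\tmthree\isub\var\tmtwo = \tmthree$ and the induction hypothesis on $\openctxtwo$.

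The main obstacle is precisely this capture-avoidance bookkeeping in the ES cases — everything else is routine distribution of substitution over the term constructors followed by an appeal to the induction hypothesis. Once one verifies that the two disjointness hypotheses $\var\notin\allvars\openctx$ and $\bv\openctx\Disj\fv\tmtwo$ are exactly what is needed to make $\isub\var\tmtwo$ pass cleanly under each binder of $\openctx$ and vanish on its static part, the desired equality follows.
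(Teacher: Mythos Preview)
Your proposal is correct and follows essentially the same approach as the paper's proof: a structural induction on the open context $\openctx$, inheriting the hypotheses to the sub-context, using $\var\notin\allvars\openctx$ to conclude $\tmthree\isub\var\tmtwo = \tmthree$ on the static subterm, and using $\vartwo\neq\var$ together with $\vartwo\notin\fv\tmtwo$ to push the substitution across the ES binders without renaming. The paper's proof is slightly terser but makes exactly the same moves.
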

\begin{proof}
  By induction on the structure of $\openctx$:
  \begin{itemize}
    \item When $\openctx = \ctxhole$, clearly $\openctxp\tm\isub\var\tmtwo = \tm\isub\var\tmtwo=\openctxp{\tm\isub\var\tmtwo}$.
    \item When $\openctx = \openctxtwo\tmthree$,
       $\openctxp\tm\isub\var\tmtwo = \openctxtwop\tm\isub\var\tmtwo\tmthree\isub\var\tmtwo$. Note that $\bv\openctxtwo \subseteq \bv\openctx$ and $\allvars\openctxtwo \subseteq \allvars\openctx$,
      and therefore by \ih{} $\openctxtwop\tm\isub\var\tmtwo = \openctxtwop{\tm\isub\var\tmtwo}$. Moreover, from $\var\not\in\allvars\openctx$ it follows that $\var\not\in\fv\tmthree$, and therefore $\tmthree\isub\var\tmtwo = \tmthree$. As a consequence, $\openctxtwop\tm\isub\var\tmtwo\tmthree\isub\var\tmtwo = \openctxtwop{\tm\isub\var\tmtwo}\tmthree = \openctxp{\tm\isub\var\tmtwo}$.
    \item The case $\openctx = \tmthree\openctxtwo$ is similar to the case above.
    \item \sloppy When $\openctx = \openctxtwo\esub\vartwo\tmthree$,
      $\openctxp\tm\isub\var\tmtwo = \openctxtwop\tm\esub\vartwo\tmthree\isub\var\tmtwo$. Since by hypothesis $\var\not\in\allvars\openctx$ and $\vartwo\not\in\fv\tmtwo$, $\openctxtwop\tm\esub\vartwo\tmthree\isub\var\tmtwo = \openctxtwop\tm\isub\var\tmtwo\esub\vartwo{\tmthree\isub\var\tmtwo}$.
      Since $\bv\openctxtwo \subseteq \bv\openctx$ and $\allvars\openctxtwo \subseteq \allvars\openctx$, by \ih{} $\openctxtwop\tm\isub\var\tmtwo = \openctxtwop{\tm\isub\var\tmtwo}$, and from the hypothesis that $\var\not\in\allvars\openctx$ it follows that $\var\not\in\fv\tmthree$ and hence $\tmthree\isub\var\tmtwo = \tmthree$.
    \item The case $\openctx = \tmthree\esub\vartwo\openctxtwo$ is similar to the case above.
  \qedhere
  \end{itemize}
\end{proof}

We now have all the ingredients to prove the following important properties of every pristine environment:

\begin{lemma}[Pristine properties]
\label{l:pristine-properties}\hfill 
\begin{enumerate}
 \item \emph{Replacement}: \label{p:pristine-properties-replacement}
 if $\env\esub\var\mol$ is pristine, then $\env\esub\var\moltwo$ is pristine for any pristine $\moltwo$.
 
 \item \emph{Concatenation}: \label{p:pristine-properties-concatenation}
 if $\env\esub\var\mol$ and $\esub\star\moltwo\envtwo$ are pristine, then $\env\esub\var\moltwo\envtwo$ is pristine, under the requirement that $\bv\env \Disj \fv{\esub\varstar\moltwo\envtwo}$ and $\fv\env \Disj \domain\envtwo$.
 
 \item \emph{$\alpha$-Conversion}: \label{p:pristine-properties-alpha}
 if $\env$ is well-named and pristine and $\env \AlphaEq \envtwo$ for $\envtwo$ well-named, then $\envtwo$ is pristine as well.
 
 \item \emph{Renaming}: \label{p:pristine-properties-renaming}
 if $\env$ is well-named and pristine then $\env\isub\var\vartwo$ is pristine when $\vartwo\not\in\bv\env$.

\end{enumerate}
\end{lemma}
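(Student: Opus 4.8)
The plan is to prove the four items in the order \emph{Replacement}, \emph{Renaming}, \emph{$\alpha$-Conversion}, \emph{Concatenation}, since \emph{$\alpha$-Conversion} will reduce (in its renaming cases) to \emph{Renaming}, and \emph{Concatenation} will use \emph{Replacement} in its base case. \emph{Replacement} is immediate from the definition of pristine: for an ES $\env\esub\var\mol$, pristinity asks only that $\env$ and $\mol$ be pristine, that $\var\in\crnames$, and that $\unf\env=\openctxp\var$ for some open $\openctx$ with $\var\notin\allvars\openctx$. None of the last three conditions mentions the bite $\mol$, so replacing $\mol$ by any pristine $\moltwo$ preserves all of them, and $\env\esub\var\moltwo$ is pristine.

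For \emph{Renaming} I would argue by structural induction on $\env$, mutually with the analogous statement for bites $\mol$. The base case $\emptyenv$ is trivial, and the inert/abstraction cases for bites are direct or follow from the IH. In the inductive case $\env=\envthree\esub\varthree\mol$ with $\vartwo\notin\bv\env$ (hence $\vartwo\neq\varthree$): if $\var=\varthree$ the substitution is essentially trivial, since well-namedness forces $\varthree\notin\fv\env$; if $\var\neq\varthree$ then $\env\isub\var\vartwo=(\envthree\isub\var\vartwo)\esub\varthree{\mol\isub\var\vartwo}$ and the IH yields pristinity of the two components. The crux is re-establishing the read-back condition: from $\unf\envthree=\openctxp\varthree$ with $\varthree\notin\allvars\openctx$ and \reflemmap{properties-unfolding}{renaming} (using $\vartwo\notin\bv\envthree$) we get $\unf{\envthree\isub\var\vartwo}=\unf\envthree\isub\var\vartwo=\openctxp\varthree\isub\var\vartwo$, and then \reflemmap{horror}{1} pushes the substitution inside, giving $(\openctx\isub\var\vartwo)\ctxholep\varthree$, with $\openctx\isub\var\vartwo$ open by \reflemmap{horror}{2} and avoiding $\varthree$ (as $\varthree\neq\vartwo$ and $\varthree\notin\allvars\openctx$). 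Its side condition $\bv\openctx\Disj\set{\var,\vartwo}$ reduces, via \reflemma{ctxs-fv-bv} and \reflemmap{properties-unfolding}{bv}, to well-namedness of $\env$.

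\emph{$\alpha$-Conversion} I would prove by induction on the derivation of $\env\AlphaEq\envtwo$ (\Cref{def:crumbling-alpha}). Reflexivity, symmetry, transitivity, and the two structural cases are handled directly, transporting the read-back equation $\unf\env=\openctxp\var$ along $\alpha$-equality through \reflemmap{properties-unfolding}{alpha}. The two renaming cases (renaming the binder of an ES or of an abstraction) are exactly an instance of the already-proved \emph{Renaming}: for instance $\env\esub\var\mol\AlphaEq\env\isub\var\vartwo\esub\vartwo\mol$ with $\vartwo\notin\allvars\env$ makes $\env\isub\var\vartwo$ pristine by \emph{Renaming}, while the read-back computation of the previous paragraph, now with the hole-filler $\var$ literally becoming $\vartwo$, supplies the required $\openctxtwop\vartwo$.

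\emph{Concatenation} is the main obstacle. I would induct on the tail $\envtwo$. The base case $\envtwo=\emptyenv$ is \emph{Replacement}. For $\envtwo=\envthree\esub\varthree\moltwo'$, pristinity of $\esub\varstar\moltwo\envthree\esub\varthree\moltwo'$ yields pristinity of $\esub\varstar\moltwo\envthree$ and of $\moltwo'$, that $\varthree\in\crnames$, and that $\unf{\esub\varstar\moltwo\envthree}=\openctxp\varthree$ with $\varthree\notin\allvars\openctx$; the IH on $\envthree$ (its disjointness hypotheses following from $\domain\envthree\subseteq\domain\envtwo$ and the given conditions) gives pristinity of $\env\esub\var\moltwo\envthree$. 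It then remains to recompute the read-back: dissociation \reflemmap{properties-unfolding}{tmpintegrare} (with $\fv\env\Disj\domain\envthree$ and $\domain\envthree\subset\crnames$) gives $\unf{\env\esub\var\moltwo\envthree}=\unf\env\isub\var{\unf{\esub\varstar\moltwo\envthree}}$, and since $\unf\env=\openctxtwop\var$ with $\var\notin\allvars\openctxtwo$ (from pristinity of $\env\esub\var\mol$), \reflemma{sub-O} produces $\openctxtwop{\openctxp\varthree}$, i.e.\ $(\openctxtwop\openctx)\ctxholep\varthree$ with $\openctxtwop\openctx$ open by \reflemma{wctxs-composition}. The delicate point—where essentially all the work concentrates—is verifying the variable side conditions, namely $\bv\openctxtwo\Disj\fv{\openctxp\varthree}$ and $\varthree\notin\allvars{\openctxtwop\openctx}$: here $\varthree\notin\fv\env$ follows from $\fv\env\Disj\domain\envtwo$, but one also needs that the binders of the appended tail (in particular $\varthree$) are fresh for $\bv\env$, which I would extract from well-namedness of the concatenation together with $\bv\env\Disj\fv{\esub\varstar\moltwo\envtwo}$, combining \reflemma{ctxs-fv-bv}, \reflemmap{properties-unfolding}{fv}, and \reflemmap{properties-unfolding}{bv}.
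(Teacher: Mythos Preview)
Your proposal is correct and follows essentially the same approach as the paper: the same induction (on $\envtwo$ for Concatenation, on the structure for Renaming, on the $\AlphaEq$-derivation for $\alpha$-Conversion), the same key tools (dissociation \reflemmap{properties-unfolding}{tmpintegrare}, \reflemma{horror}, \reflemma{sub-O}, \reflemma{wctxs-composition}), and the same use of Replacement as the base case of Concatenation. The only noteworthy difference is organizational: you prove Renaming before $\alpha$-Conversion and invoke it in both rename cases, whereas the paper presents the four points in the stated order and makes a forward reference from $\alpha$-Conversion to Renaming; your ordering is slightly cleaner but not materially different.
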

\begin{proof}~
\begin{enumerate}
    \item Trivial by the definition of pristine.
    \item
    By induction on $\envtwo$. If $\envtwo=\emptyenv$ then the statement follows directly from \refpoint{pristine-properties-replacement}.
    Let us suppose now that $\envtwo=\envthree\esub\vartwo\molthree$. By the hypothesis that $\esub\star\moltwo\envtwo$ is pristine, it follows that $\vartwo\in\crnames$, that $\molthree$ is pristine, and that $\esub\star\moltwo\envthree$ is pristine. Therefore by \ih{} $\env\esub\var\moltwo\envthree$ is pristine. To prove that $\env\esub\var\moltwo\envthree\esub\vartwo\moltwo$ is pristine, it remains to prove that $\unf{\env\esub\var\moltwo\envthree} = \openctxp\vartwo$ for some open context $\openctx$ such that $\vartwo\not\in\allvars\openctx$.

    First of all, by the pristine hypothesis, $\unf\env = \openctxtwop\var$ and $\unf{\esub\varstar\moltwo\envthree} = \openctxthreep\vartwo$ for some open contexts $\openctxtwo$ and $\openctxthree$ such that $\var\not\in\allvars\openctxtwo$ and $\vartwo\not\in\allvars\openctxthree$.
    We then apply \reflemmap{properties-unfolding}{tmpintegrare} and obtain that $\unf{(\env\esub\var\moltwo\envthree)} = \unf\env\isub\var{\unf{\esub\varstar\moltwo\envthree}} = \openctxtwop\var\isub\var{\openctxthreep\vartwo} = \openctxtwop{\openctxthreep\vartwo}$, where $\openctx \defeq \openctxtwop\openctxthree$ is an open context because the composition of weak contexts is still a weak context (see \reflemma{wctxs-composition}), and $\vartwo\not\in\allvars{\openctxtwop\openctxthree}$ because $\vartwo\not\in\allvars\openctxthree$ and $\vartwo\not\in\allvars\openctxtwo$.
    \item 
        We prove at the same time the corresponding statement for bites, \ie{} that if $\mol$ is pristine and well-named and $\mol \AlphaEq \moltwo$ for $\moltwo$ well-named, then $\moltwo$ is pristine. By structural induction on the derivation of respectively $\env \AlphaEq \envtwo$ or $\mol \AlphaEq \moltwo$:
        \begin{itemize}
            \item \emph{Reflexivity:} in this case $\env = \envtwo$ and $\mol=\moltwo$, and therefore trivially $\envtwo$ and $\moltwo$ are pristine.
            \item \emph{Symmetry:} $\env \AlphaEq \envtwo$ because $\envtwo \AlphaEq \env$. Then by \ih{} we obtain $\envtwo$ pristine, and we conclude because $\AlphaEq$ is symmetric. Similarly for bites.
            \item \emph{Transitivity:}
                $\env \AlphaEq \envtwo$ because $\env \AlphaEq \envthree$ and $\envthree \AlphaEq \envtwo$. Just use the \ih{} and use symmetry of $\AlphaEq$. Similarly for bites.
            \item \emph{Structural, ES:}
                $\env \AlphaEq \envtwo$ because $\env = \envthree\esub\var\mol$, $\envtwo = \envfour\esub\var\moltwo$ where $\envthree \AlphaEq \envfour$ and $\mol \AlphaEq \moltwo$.
                By \ih{} $\envfour$ and $\moltwo$ are pristine, and it remains to show that $\unf\envfour = \openctxp\var$ for some open context $\openctx$ such that $\var\not\in\allvars\openctx$. Since $\env$ is pristine, $\unf\envthree = \openctxp\var$ for some open context $\openctx$ such that $\var\not\in\allvars\openctx$. Conclude by \reflemmap{properties-unfolding}{alpha}.
            \item \emph{Rename, ES}:
                $\env \AlphaEq \envtwo$ because $\env = \envthree\esub\var\mol$ and $\envtwo = \envthree\isub\var\vartwo\esub\vartwo\mol$ with $\vartwo\not\in\allvars\env$. By \ih{} $\envthree$ and $\mol$ are pristine, and it remains to show that $\vartwo\in\crnames$ and $\unf{\envthree\isub\var\vartwo} = \openctxp\vartwo$ for some open context $\openctx$ such that $\vartwo\not\in\allvars\openctx$. $\vartwo\in\crnames$ because $\alpha$-equality renames crumbling variables to crumbling variables. As for the readback, since $\env$ is pristine, $\unf\envthree = \openctxtwop\var$ for some open context $\openctxtwo$ such that $\var\not\in\allvars\openctxtwo$. Let us take $\openctx \defeq \openctxtwo$. By \reflemmap{properties-unfolding}{renaming}, $\unf{\envthree\isub\var\vartwo} = \unf\envthree\isub\var\vartwo = \openctxtwop\var\isub\var\vartwo$. Note from $\vartwo\not\in\allvars\env$, \reflemma{ctxs-fv-bv} and \reflemma{properties-unfolding} follows that $\vartwo\not\in\allvars\openctx$. Finally, by 
                \reflemma{sub-O}, $ \openctxp\var\isub\var\vartwo = \openctxp\vartwo $ (which requires $\var\not\in\allvars\openctx$ and $\vartwo\not\in\bv\openctx$), and we conclude. 
            \item \emph{Rename, abstraction:} $\mol \AlphaEq \moltwo$ because $\mol = \la\var\env$ and $\moltwo = \la\vartwo{(\env\isub\var\vartwo)}$ for $\vartwo\not\in\allvars\env$. Since $\env$ is pristine, by \refpoint{pristine-properties-renaming} $\env\isub\var\vartwo$ is pristine as well, and we conclude.
            \item \emph{Structural, abstraction:} $\mol \AlphaEq \mol$ because $\mol=\la\var\env$ and $\moltwo=\la\var\envtwo$ and $\env\AlphaEq\envtwo$. By \ih{} $\envtwo$ is pristine, and we conclude.
        \end{itemize}
    \item We prove the statement by induction on the structure of $\env$, mutually with the following corresponding statement for bites: if $\mol$ is well-named and pristine then $\mol\isub\var\vartwo$ is pristine when $\vartwo\not\in\bv\env$.
    
    If $\env=\emptyenv$, then $\env\isub\var\vartwo=\emptyenv$ and it is therefore pristine. If $\env=\envtwo\esub\varthree\mol$, there are two subcases:
    \begin{itemize}
        \item If $\var=\varthree$, then $\env\isub\var\vartwo=\envtwo\esub\varthree{\mol\isub\var\vartwo}$. By \ih{} $\mol\isub\var\vartwo$ is pristine, and we can conclude by \refpoint{pristine-properties-replacement}.
        \item If $\var\neq\varthree$, then $\env\isub\var\vartwo=\envtwo\isub\var\vartwo\esub\varthree{\mol\isub\var\vartwo}$ because $\vartwo\neq\varthree$ by hypothesis. By \ih{} $\envtwo\isub\var\vartwo$ and $\mol\isub\var\vartwo$ are pristine, and in order to conclude it suffices to show that $\unf{\envtwo\isub\var\vartwo} = \openctxp\varthree$ for some open context $\openctx$ such that $\varthree\not\in\allvars\openctx$. By pristinity of $\env$ it follows that $\unf\envtwo = \openctxtwop\varthree$ for some open context $\openctxtwo$ such that $\varthree\not\in\allvars\openctxtwo$.
        
        By \reflemmap{properties-unfolding}{renaming} $\unf{\envtwo\isub\var\vartwo} = \unf{\envtwo}\isub\var\vartwo$ because $\vartwo\not\in\bv\envtwo$ and thus also $\vartwo\not\in\bv{\unf\envtwo}$ by \reflemmap{properties-unfolding}{bv}. This implies that $\unf{\envtwo\isub\var\vartwo} = \openctxtwop\varthree\isub\var\vartwo$, which equals $ \openctxtwo\isub\var\vartwo\ctxholep{\varthree\isub\var\vartwo} $ by \reflemmap{horror}{1}, and thus $\openctxtwo\isub\var\vartwo\ctxholep\varthree$. We take as the required context $\openctxtwo \defeq \openctxtwo\isub\var\vartwo$, which is open by \reflemmap{horror}{2}. It remains to prove that $\varthree\not\in\allvars{\openctxtwo\isub\var\vartwo}$, which follows from the fact that $\var\not\in\allvars\openctx$ and by \reflemma{ctxs-fv-bv} and \reflemma{vars-after-subst}.
    \end{itemize}
    Finally, the statement for bites. If $\mol$ is a variable or an application then $\mol\isub\var\vartwo$ is clearly pristine; if $\mol=\la\varthree\envtwo$, there are again two cases:
    \begin{itemize}
        \item If $\var=\varthree$, then $\mol\isub\var\vartwo = \mol$, and we conclude.
        \item If $\var\neq\varthree$, then $\mol\isub\var\vartwo=\la\varthree{(\envtwo\isub\var\vartwo)}$ because $\vartwo\not\in\bv\mol$ \ie{} $\vartwo\neq\varthree$, and we conclude because by \ih{} $\envtwo\isub\var\vartwo$ is pristine.
    \end{itemize}
\end{enumerate}
\end{proof}

\subsection{Compilation produces a pristine environment}
The initial state of an open machine is obtained by compiling a well-named \lat{}, obtaining a pristine environment (\reflemma{transl-properties-pristine}). As we will see, that property of being pristine is preserved during a machine execution, but only on the environments on the left of $\rlsep$.

\begin{lemma}[Properties of the translation]
\label{l:transl-properties-pristine}
Let $\tm$ be a well-named \lat{}. Then $\mytr\tm$ is pristine.
\end{lemma}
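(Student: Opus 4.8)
The plan is to prove $\mytr\tm$ pristine by structural induction on the well-named $\l$-term $\tm$, strengthening the statement into a mutual induction that also controls the auxiliary translation. Concretely, I would establish simultaneously: \textbf{(a)} $\mytr\tm$ is pristine, and \textbf{(b)} if $\CrumbAux\tm = (\var,\env)$ then the \emph{completed} environment $\esub\varstar\var\env$ is pristine. The wrapper $\esub\varstar\var$ in (b) is essential: the raw $\CrumbAux$-environment has a crumbling head variable different from $\varstar$, which on its own fails the pristinity check at its leftmost $\mathsf{ES}$ (where the read-back of the empty prefix is forced to be $\varstar$); prepending $\esub\varstar\var$ restores exactly the condition in which $\env$ is later reused as a suffix with its head variable supplied by the enclosing bite. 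Throughout I would freely use that $\mytr\tm$ is well-named (\reflemmap{transl-properties}{wn}), the bridge between $\Crumb\cdot$ and $\CrumbAux\cdot$ of \reflemma{crumbling--aux}, and the variable bookkeeping of \reflemma{vars-of-crumb} (all $\CrumbAux$-domains lie in $\crnames$, freshly generated crumbling variables are globally distinct, and the relevant variable sets are disjoint).

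The base cases are routine checks against the definition of pristine. For $\tm=\var$ both $\mytr\tm$ and the completed auxiliary environment equal $\esub\varstar\var$, and since $\var$ is an inert bite, $\varstar\in\crnames$, and $\unf\emptyenv = \varstar = \openctxp\varstar$ with $\openctx=\ctxhole$, statements (a) and (b) hold at once. For $\tm=\la\vartwo\tmthree$ we have $\mytr\tm = \esub\varstar{\la\vartwo\Crumb\tmthree}$ and $\CrumbAux\tm = (\varthree,\esub\varthree{\la\vartwo\Crumb\tmthree})$; by the inductive hypothesis (a) the body $\Crumb\tmthree$ is pristine, hence so is the bite $\la\vartwo\Crumb\tmthree$, and peeling the outer substitutions against the trivial read-backs $\unf\emptyenv = \varstar$ and $\unf{\esub\varstar\varthree} = \varthree$ (the latter because $\varstar\in\crnames$) discharges both statements.

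The inductive case $\tm = \tm_1\tm_2$ is the crux. Here $\mytr\tm = \esub\varstar{\var\vartwo}\,\env\,\envtwo$ while the completed auxiliary environment of (b) is $\esub\varstar\varthree\esub\varthree{\var\vartwo}\,\env\,\envtwo$, where $\CrumbAux{\tm_1} = (\var,\env)$, $\CrumbAux{\tm_2} = (\vartwo,\envtwo)$, and $\varthree$ is fresh; both targets reduce to the same core once one checks $\unf{\esub\varstar{\var\vartwo}} = \var\vartwo = \unf{\esub\varstar\varthree\esub\varthree{\var\vartwo}}$. By the inductive hypothesis (b), $\esub\varstar\var\env$ and $\esub\varstar\vartwo\envtwo$ are pristine. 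I would then verify pristinity of the assembled environment by peeling its explicit substitutions right-to-left and checking, for each bound crumbling variable, that the read-back of the environment to its left is an open context $\openctx$ missing that variable. For the substitutions internal to $\envtwo$ (resp. $\env$) the open context is obtained from the one granted by the inductive hypothesis, \emph{transported across the prepended prefix}: Pristine dissociation (\reflemma{weak-unfolding-pristine-new}) turns a pristine prefix into an open context wrapping the read-back of the remaining block, and composing open contexts (\reflemma{wctxs-composition}) with the dissociation identity (\reflemmap{properties-unfolding}{tmpintegrare}) yields precisely the context needed, the target variable staying out of its variables thanks to the freshness and disjointness recorded in \reflemma{vars-of-crumb}. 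The two new substitutions binding $\var$ and $\vartwo$ use that $\unf{\esub\varstar{\var\vartwo}} = \var\vartwo$ is $\openctxp\var$ for $\openctx=\ctxhole\,\vartwo$ (and symmetrically for $\vartwo$) with $\var\neq\vartwo$, and the head substitutions $\esub\varthree{\var\vartwo}$ and $\esub\varstar\varthree$ are handled as in the base cases. The subcases where $\tm_1$ or $\tm_2$ is a variable (so $\env$ or $\envtwo$ is $\emptyenv$ and $\var$ or $\vartwo\in\calcnames$) are strictly simpler, since the corresponding head variable is then free and never bound, carrying no peeling obligation.

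The main obstacle is exactly this application assembly: the binder of $\vartwo$ sits at the front of $\envtwo$, \emph{after} the whole block $\env$, and is therefore separated from its unique use inside the head bite $\var\vartwo$ by an unrelated list of substitutions. Consequently the pristinity condition at $\vartwo$'s binder cannot be read off directly from the inductive hypothesis for $\tm_2$ and must be re-derived for the longer preceding environment, which is where the interplay of dissociation, open-context composition, and the disjointness of $\domain\env$ and $\domain\envtwo$ does the real work. The tempting single application of the Concatenation property (\reflemmap{pristine-properties}{concatenation}) does not fit on the nose, as that lemma plugs a pristine environment into the \emph{last} bound slot, whereas here $\var$ and $\vartwo$ are used in the head and defined at a distance; I would thus run the peeling argument directly, invoking Concatenation and Replacement (\reflemmap{pristine-properties}{replacement}) only for the local reshufflings where they genuinely apply. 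Note that, in contrast to the rewriting results, structural equivalence plays no role, and well-namedness is used only to license the applications of the pristine-stability lemmas.
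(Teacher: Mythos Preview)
Your proposal is correct and follows essentially the same route as the paper: both argue by induction on $\tm$ and, in the application case, verify pristinity prefix by prefix using the dissociation identity (\reflemmap{properties-unfolding}{tmpintegrare}) together with the freshness bookkeeping of \reflemma{vars-of-crumb}. The only difference is cosmetic: you package the information about $\CrumbAux\tm$ as an explicit auxiliary hypothesis (b) on the completed environment $\esub\varstar\var\env$, whereas the paper keeps a single hypothesis on $\mytr\tm$ and recovers the same data on the fly via \reflemma{crumbling--aux}; since $\esub\varstar\var\env$ and $\mytr\tm$ have identical prefix read-backs, the two formulations are interchangeable and neither buys a real simplification over the other.
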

\begin{proof}
        By induction on the size of $\tm$:
        \begin{itemize}
            \item If $\tm=\var$, then $\mytr\tm = \esub\varstar\var$, which is pristine.
            \item If $\tm=\la\var\tmtwo$, then $\mytr\tm = \esub\varstar{\la\var{\mytr\tmtwo}}$. Since $\tmtwo$ is also well-named, the \ih{} provides that $\mytr\tmtwo$ is pristine, and we can conclude.
            \item If $\tm=\tmtwo\tmthree$, then $\mytr\tm=\esub\varstar{\var\vartwo}\env\envtwo$ where $(\var,\env)=\auxtr\tmtwo$ and $(\vartwo,\envtwo)=\auxtr\tmthree$.
            
            We proceed to prove that every sub-environment of $\esub\varstar{\var\vartwo}\env\envtwo$ is pristine. The thesis then follows easily. We proceed by induction on the length of the chosen sub-environment:
            \begin{itemize}
                \item Case length equals 1, \ie{} $\esub\varstar{\var\vartwo}$. Clearly pristine.
                \item Case $\esub\varstar{\var\vartwo}\envthree\esub\varthree\mol$ where the ES $\esub\varthree\mol$ belongs to $\env$. By \ih{} $\esub\varstar{\var\vartwo}\envthree$ is pristine. $\varthree\in\crnames$ because $\varthree\in\domain\env$ and by \reflemma{vars-of-crumb}.
                It remains to prove that $\unf{\esub\varstar{\var\vartwo}\envthree} = \openctxp\varthree$ for some open context $\openctx$ such that $\varthree\in\allvars\openctx$.
                \begin{itemize}
                    \item If $\varthree=\var$ then $\envthree = \emptyenv$ and we conclude with $\openctx\defeq \ctxhole\vartwo$.
                    \item \sloppy If $\varthree\neq\var$ then $\envthree \neq \emptyenv$. In this case, suppose $\envthree = \esub\var\moltwo\envfour\esub\varfour\molthree$. By inspection of the definition of crumbling and \reflemma{crumbling--aux}, the environment $\esub\varstar\moltwo\envfour\esub\varfour\molthree$ is an initial sub-environment of $\mytr\tmtwo$, which by \ih{} is pristine. Hence $\unf{\esub\varstar\moltwo\envfour\esub\varfour\molthree} =\openctxtwop\varfour$ for some open context $\openctxtwo$ such that $\varthree\not\in\allvars\openctxtwo$. By \reflemmap{properties-unfolding}{tmpintegrare}, $\unf{\esub\varstar{\var\vartwo}\envthree} = \var\vartwo\isub\var{\openctxtwop\varfour}$. Conclude by taking $\openctx\defeq\openctxtwop\varfour\vartwo$.
                \end{itemize}
                \item Case $\esub\varstar{\var\vartwo}\env\envthree\esub\varthree\mol$ where the ES $\esub\varthree\mol$ belongs to $\envtwo$. By \ih{} $\esub\varstar{\var\vartwo}\env\envthree$ is pristine. $\varthree\in\crnames$ because $\varthree\in\domain\envtwo$ and by \reflemma{vars-of-crumb}.
                It remains to prove that $\unf{\esub\varstar{\var\vartwo}\env\envthree} = \openctxp\varthree$ for some open context $\openctx$ such that $\varthree\in\allvars\openctx$.
                First of all, note that $\unf{\esub\varstar{\var\vartwo}\env} = \unf{\esub\varstar{\var\vartwo}}\isub\var{\unf{\mytr\tm}} = \unf{\esub\varstar{\var\vartwo}}\isub\var\tm = \tmtwo\vartwo$ by the discussion on the previous point and by \refpoint{transl-properties-init}.
                \begin{itemize}
                    \item If $\varthree=\vartwo$ then $\envthree = \emptyenv$ and we conclude with $\openctx\defeq \tm\ctxhole$.
                    \item \sloppy If $\varthree\neq\vartwo$ then $\envthree \neq \emptyenv$. In this case, suppose $\envthree = \esub\vartwo\moltwo\envfour\esub\varfour\molthree$. By inspection of the definition of crumbling and by \reflemma{crumbling--aux}, the environment $\esub\varstar\moltwo\envfour\esub\varfour\molthree$ is an initial sub-environment of $\mytr\tmthree$, which by \ih{} is pristine. Hence $\unf{\esub\varstar\moltwo\envfour\esub\varfour\molthree} =\openctxtwop\varfour$ for some open context $\openctxtwo$ such that $\varthree\not\in\allvars\openctxtwo$. By \reflemmap{properties-unfolding}{tmpintegrare}, $\unf{\esub\varstar{\var\vartwo}\env\envthree} = \tmtwo\vartwo\isub\vartwo{\openctxtwop\varfour} = \tmtwo\openctxtwop\varfour$. Conclude by taking $\openctx\defeq\tmtwo\openctxtwop\varfour$.
                \qedhere
                \end{itemize}
            \end{itemize}
        \end{itemize}
\end{proof}

\section{Proofs of Section~\ref*{SECT:STRONG-MACHINE} (Strong Crumbling Machine)}
\label{app:scam}

There are no proofs to be done for \cref{SECT:STRONG-MACHINE}.
However we include here a bunch of technical lemmas on $\wstenv{(\cdot)}$ that will be used in the rest of the paper.

\begin{lemma}\label{l:wstenv-well-named}
For every $\kctx$:
\begin{enumerate}
 \item \label{p:wstenv-well-named-a} $\fv{\wstenv\kctx} \subseteq \fv\kctx$
 \item \label{p:wstenv-well-named-b} $\bv{\wstenv\kctx} \subseteq \bv\kctx$
 \item \label{p:wstenv-well-named-c} If $\kctx$ is well-named then $\wstenv\kctx$ is well-named
 \item \label{p:wstenv-well-named-d} If $\kctxp\env$ is well-named then $\env$ is well-named
 \item \label{p:wstenv-well-named-e} If $\kctxp\env$ is well-named then $\kctx$ is well-named
\end{enumerate}
\end{lemma}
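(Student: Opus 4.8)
All five items are proved by structural induction on the machine context $\kctx$, unfolding the definitions of $\wstenv{(\cdot)}$, of plugging $\kctxp{(\cdot)}$, and of $\fv,\bv$ on machine contexts. The key point to keep in mind is that, for a machine context, a binder $\la\vartwo$ enclosing the hole leaves $\vartwo$ \emph{free}: it is destined to bind the future content of the hole and actually captures only once a term is plugged (so it behaves as a binder in $\kctxp\env$ but not in $\kctx$ alone). The two reusable tools are the concatenation lemma \reflemma{fv-join-cup} for $\fv,\bv,\allvars,\domain$, and the well-namedness decomposition/assembly lemmas \reflemma{decomp-well-named-two} and \reflemma{append-well-named}. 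The two constructors of $\kctx$ are $\ctxhole\envtwo$ (base) and $\env\esub\var{\la\vartwo\kctxtwo}\envtwo$ (step); in the step, $\var$ binds only over the left environment $\env$, while $\vartwo$ leaks into $\kctxtwo$.

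\textbf{Items 1 and 2.} In the base case $\wstenv{\ctxhole\envtwo} = \envtwo$ and $\fv{\ctxhole\envtwo} = \fv\envtwo$, $\bv{\ctxhole\envtwo} = \bv\envtwo$, so both inclusions are equalities. In the step case $\wstenv\kctx = \wstenv\kctxtwo\envtwo$, so \reflemma{fv-join-cup} gives $\fv{\wstenv\kctx} \subseteq \fv{\wstenv\kctxtwo}\setminus\domain\envtwo \cup \fv\envtwo$ and $\bv{\wstenv\kctx} = \bv{\wstenv\kctxtwo}\cup\bv\envtwo$. By \ih{} $\fv{\wstenv\kctxtwo}\subseteq\fv\kctxtwo$ and $\bv{\wstenv\kctxtwo}\subseteq\bv\kctxtwo$, and since neither $\var$ nor $\vartwo$ is removed from $\fv\kctxtwo$ in forming $\fv\kctx$ (the former binds only over $\env$, the latter leaks), one gets $\fv\kctxtwo\subseteq\fv{\env\esub\var{\la\vartwo\kctxtwo}}$ and $\bv\kctxtwo\subseteq\bv{\env\esub\var{\la\vartwo\kctxtwo}}$; concatenating with $\envtwo$ again via \reflemma{fv-join-cup} closes both inclusions.

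\textbf{Item 3.} Again by induction, with the base case immediate. In the step, well-namedness of $\kctx$ yields, after peeling off $\envtwo$ and extracting the value $\la\vartwo\kctxtwo$ with \reflemma{decomp-well-named-two}, that both $\kctxtwo$ and $\envtwo$ are well-named, hence $\wstenv\kctxtwo$ is well-named by \ih. To conclude that $\wstenv\kctxtwo\envtwo = \wstenv\kctx$ is well-named I would invoke \reflemma{append-well-named}, whose hypotheses $\bv{\wstenv\kctxtwo}\Disj\allvars\envtwo$ and $\fv{\wstenv\kctxtwo}\setminus\domain\envtwo\Disj\bv\envtwo$ follow from the well-namedness of $\kctx$ together with items 1 and 2, which transport the disjointness from $\fv\kctxtwo,\bv\kctxtwo$ down to $\fv{\wstenv\kctxtwo},\bv{\wstenv\kctxtwo}$.

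\textbf{Items 4 and 5, and the main obstacle.} For item 4, the base case $\kctxp\env = \env\envtwo$ reduces to the auxiliary fact that a prefix of a well-named environment is well-named, proved by a side induction on $\envtwo$ using \reflemma{decomp-well-named-two}; in the step, $\kctxp\env = \env'\esub\var{\la\vartwo{(\kctxtwo\ctxholep\env)}}\envtwo$ is well-named, so extracting the value and its body gives $\kctxtwo\ctxholep\env$ well-named, and \ih{} yields $\env$ well-named. Item 5 is the most delicate: from $\kctxp\env$ well-named one must recover well-namedness of the context obtained by replacing $\env$ with a hole, i.e.\ check that opening the hole preserves both ``bound variables distinct'' and $\fv{}\Disj\bv{}$. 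The main obstacle is exactly this bookkeeping: because $\kctx$ and $\kctxp\env$ use \emph{different} scoping for the abstraction binders $\vartwo$ (free and leaking in the context, genuinely binding once plugged), one must track precisely which variables migrate between the free and bound sets when the hole is opened, and verify that the $\Disj$ side-conditions required to reassemble well-namedness via \reflemma{append-well-named} survive this change of scope. Everything else is routine concatenation and decomposition.
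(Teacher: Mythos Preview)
Your approach—structural induction on $\kctx$—is exactly what the paper does; its entire proof is the single line ``Easy structural induction on $\kctx$.''

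However, the scoping convention you adopt for machine contexts is at odds with the paper. You assert that the binder $\la\vartwo$ on the path to the hole ``leaves $\vartwo$ free'' and ``behaves as a binder in $\kctxp\env$ but not in $\kctx$ alone''. This is contradicted by \reflemma{fv-join-cup-k}, whose second item states $\bv{\kctxp\env} = \bv\kctx \cup \bv\env$: since $\vartwo \in \bv{\kctxp\env}$ for every $\env$ (in particular for $\env$ not mentioning $\vartwo$), necessarily $\vartwo \in \bv\kctx$. Under the paper's convention, $\la\vartwo$ is a genuine binder already in $\kctx$, and there is no ``migration between free and bound sets'' when the hole is opened or closed.

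Consequently, the ``main obstacle'' you identify for item~5 dissolves. With $\bv\kctx \subseteq \bv{\kctxp\env}$ and $\fv\kctx \subseteq \fv{\kctxp\env}$ (modulo $\varstar$), well-namedness of $\kctxp\env$ transfers to $\kctx$ by plain monotonicity: fewer binders, still pairwise distinct; a smaller free set, still disjoint from a smaller bound set. No scope bookkeeping is needed, and item~5 is as routine as the others. Your plan for item~3 via \reflemma{append-well-named}, using items~1 and~2 to push the disjointness conditions down to $\wstenv\kctxtwo$, is the right shape and goes through once you drop the ``$\vartwo$ leaks'' reading.
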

\begin{proof}
Easy structural induction on $\kctx$.
%
\end{proof}

\begin{lemma}
\label{l:unsenv-prefix}
Let $\kctx$ be a context. Then
\begin{enumerate}
\item 
 \label{p:unsenv-prefix-one}
 $\wstenv{\kctxp{\ctxhole\esub\var\mol}}= \esub\var\mol \wstenv\kctx$.

 \item \label{p:unsenv-prefix-two}
 $\wstenv{\kctxp{\env \esub\var{\la\vartwo\ctxhole}}} = \wstenv\kctx$.
\end{enumerate}
\end{lemma}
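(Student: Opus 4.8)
The plan is to prove both identities by a single, uniform structural induction on the machine context $\kctx$, following the two-case grammar $\kctx \grameq \ctxhole\envtwo \mid \env\esub\var{\la\vartwo\kctxtwo}\envtwo$ and unfolding the defining clauses $\wstenv{\ctxhole\envtwo} \defeq \envtwo$ and $\wstenv{\env\esub\var{\la\vartwo\kctxtwo}\envtwo} \defeq \wstenv\kctxtwo\envtwo$ at each step. Before starting, I would pin down two pieces of notation, which is where all the (minor) bookkeeping lives: first, that $\ctxhole$ abbreviates $\ctxhole\emptyenv$ and that $\env\esub\var{\la\vartwo\ctxhole}$ abbreviates $\env\esub\var{\la\vartwo{\ctxhole\emptyenv}}\emptyenv$; and second, that plugging a machine context into $\kctx$ recurses toward the hole and, when the hole is at top level, absorbs a trailing environment on the right, so that $(\ctxhole\envthree)\envtwo = \ctxhole(\envthree\envtwo)$ and $(\env\esub\var{\la\vartwo\kctxtwo}\envthree)\envtwo = \env\esub\var{\la\vartwo\kctxtwo}(\envthree\envtwo)$.

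For \refpoint{unsenv-prefix-one}, in the base case $\kctx = \ctxhole\envtwo$ plugging gives $\kctxp{\ctxhole\esub\var\mol} = \ctxhole(\esub\var\mol\envtwo)$, whose induced environment is $\esub\var\mol\envtwo$, matching $\esub\var\mol\wstenv\kctx = \esub\var\mol\envtwo$. In the inductive case $\kctx = \envthree\esub\varthree{\la\varfour\kctxtwo}\envtwo$, plugging recurses into $\kctxtwo$, so the defining clause yields $\wstenv{\kctxp{\ctxhole\esub\var\mol}} = \wstenv{\kctxtwop{\ctxhole\esub\var\mol}}\envtwo$, and the induction hypothesis rewrites the first factor to $\esub\var\mol\wstenv\kctxtwo$, giving $\esub\var\mol\wstenv\kctxtwo\envtwo = \esub\var\mol\wstenv\kctx$. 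For \refpoint{unsenv-prefix-two}, the inserted context $\env\esub\var{\la\vartwo\ctxhole}$ is read as $\env\esub\var{\la\vartwo{\ctxhole\emptyenv}}\emptyenv$. In the base case $\kctx = \ctxhole\envtwo$, absorbing the trailing $\envtwo$ produces $\env\esub\var{\la\vartwo{\ctxhole\emptyenv}}\envtwo$, and $\wstenv{\env\esub\var{\la\vartwo{\ctxhole\emptyenv}}\envtwo} = \wstenv{\ctxhole\emptyenv}\envtwo = \emptyenv\envtwo = \envtwo = \wstenv\kctx$; the new abstraction layer contributes the empty environment because its hole sits immediately after $\la\vartwo$. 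The inductive case $\kctx = \envthree\esub\varthree{\la\varfour\kctxtwo}\envtwo$ is again immediate: $\wstenv{\kctxp{\env\esub\var{\la\vartwo\ctxhole}}} = \wstenv{\kctxtwop{\env\esub\var{\la\vartwo\ctxhole}}}\envtwo = \wstenv\kctxtwo\envtwo = \wstenv\kctx$, using the induction hypothesis in the middle equality.

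The main obstacle, such as it is, is purely notational rather than mathematical: correctly interpreting the two abbreviations and the right-concatenation convention for plugging machine contexts. Once these are fixed, each case collapses to a single application of the defining clauses of $\wstenv{(\cdot)}$, plus the induction hypothesis in the inductive case; there is no combinatorial or semantic difficulty, and no side condition (well-namedness, pristinity, etc.) is needed. The two parts share the very same two-case split and differ only in the computation at the hole, namely prepending $\esub\var\mol$ in \refpoint{unsenv-prefix-one} versus inserting an empty-environment abstraction layer that is transparent to $\wstenv{(\cdot)}$ in \refpoint{unsenv-prefix-two}.
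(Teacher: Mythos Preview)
Your proposal is correct and follows essentially the same approach as the paper's proof: a structural induction on $\kctx$ with the two cases $\ctxhole\envtwo$ and $\envthree\esub\varthree{\la\varfour\kctxtwo}\envtwo$, unfolding the defining clauses of $\wstenv{(\cdot)}$ and applying the induction hypothesis in the inductive case. Your extra care in spelling out the notational conventions for $\ctxhole$, $\env\esub\var{\la\vartwo\ctxhole}$, and plugging is more explicit than the paper's own presentation, but the underlying argument is identical.
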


\begin{proof}
By induction on $\kctx$. 
\begin{enumerate}
 \item If $\kctx = \ctxhole\envtwo$ then $\wstenv{\kctxp{\ctxhole\esub\var\mol}} = \esub\var\mol\envtwo = \esub\var\mol\wstenv\kctx$ and the statement holds.  If $\kctx = \env_1\esub\var{\la\vartwo\kctxtwo}\env_2$ then $\wstenv{\kctxp{\ctxhole\esub\var\mol}} = \wstenv{\kctxtwop{\ctxhole\esub\var\mol}} \env_2$. By \ih, $\wstenv{\kctxtwop{\ctxhole\esub\var\mol}}= \esub\var\wstenv\kctxtwo$, and so $\wstenv{\kctxp{\ctxhole\esub\var\mol}} = \esub\var\mol \wstenv\kctxtwo\env_2 = \esub\var\mol \wstenv\kctx$.
 
 \item If $\kctx = \ctxhole\envtwo$ then $\wstenv{\kctxp{\env \esub\var{\la\vartwo\ctxhole}}} = \wstenv{\env \esub\var{\la\vartwo\ctxhole}\envtwo} = \wstenv\ctxhole \envtwo = \wstenv\kctx$ and the statement holds.  If $\kctx = \env_1\esub\var{\la\vartwo\kctxtwo}\env_2$ then $\wstenv{\kctxp{\env \esub\var{\la\vartwo\ctxhole}}} = \wstenv{\env_1\esub\var{\la\vartwo\kctxtwop{\env \esub\var{\la\vartwo\ctxhole}}}\env_2} = 
 \wstenv{\kctxtwop{\env \esub\var{\la\vartwo\ctxhole}}}\env_2 =_{\ih} \wstenv{\kctxtwo}\env_2 = \wstenv{\env_1\esub\var{\la\vartwo\kctxtwo}\env_2} = \wstenv\kctx$.
\end{enumerate}

\end{proof}

\paragraph{Variables of plugged machine contexts}
We group here a couple of technical lemmas about the variables of plugged machine contexts.
\begin{lemma}
	\label{l:fv-mol-k}
	$\fv\mol \subseteq \fv\env \cup \bv{\kctxp{\env\esub\var\mol}}$.
\end{lemma}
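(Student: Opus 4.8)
The plan is to prove the inclusion by structural induction on the machine context $\kctx$, confining the real work to the base case while the inductive case is pure monotonicity of bound variables.

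For the inductive step, suppose $\kctx = \envthree\esub{\varthree}{\la{\varfour}{\kctxtwo}}\envfour$. By the definition of plugging, $\kctxp{\env\esub\var\mol} = \envthree\esub{\varthree}{\la{\varfour}{\kctxtwo\ctxholep{\env\esub\var\mol}}}\envfour$. Computing bound variables with the definitional clauses for environments and bites together with \reflemma{fv-join-cup} gives $\bv{\kctxp{\env\esub\var\mol}} = \bv\envthree \cup \set{\varthree,\varfour} \cup \bv{\kctxtwo\ctxholep{\env\esub\var\mol}} \cup \bv\envfour$, so in particular $\bv{\kctxtwo\ctxholep{\env\esub\var\mol}} \subseteq \bv{\kctxp{\env\esub\var\mol}}$. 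The occurrence $\kctxtwo\ctxholep{\env\esub\var\mol}$ is exactly the object to which the induction hypothesis applies, yielding $\fv\mol \subseteq \fv\env \cup \bv{\kctxtwo\ctxholep{\env\esub\var\mol}}$, and the inclusion just noted closes the case. Crucially, neither $\env$ nor $\mol$ is altered when passing from $\kctxtwo$ to $\kctx$, so the summand $\fv\env$ on the right-hand side is literally unchanged and no extra bookkeeping is required.

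The base case is $\kctx = \ctxhole\envtwo$, where $\kctxp{\env\esub\var\mol} = \env\esub\var\mol\envtwo$ and I would argue at the level of plain crumbled environments. By \reflemma{fv-join-cup} and the clause for bites, $\bv{\env\esub\var\mol\envtwo} = \bv\env \cup \set\var \cup \bv\mol \cup \bv\envtwo$, while the free-variable clause for explicit substitutions gives $\fv{\env\esub\var\mol} = \fv\env \setminus \set\var \cup \fv\mol$, whence $\fv\mol \subseteq \fv{\env\esub\var\mol}$. The task is then to dispatch each $y \in \fv\mol$: if $y$ is introduced by the tail, i.e. $y \in \domain\envtwo \subseteq \bv\envtwo$, or $y \in \bv\mol$, or $y = \var$, then $y$ already lies in $\bv{\env\esub\var\mol\envtwo}$; otherwise $y$ escapes all the binders sitting to its right and must therefore be free in $\env$. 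This last implication is the heart of the statement and rests on the right-to-left scoping discipline of crumbled environments, namely that a variable free in a bite refers only to binders occurring further to the right (here exactly those collected in $\envtwo$), or else is genuinely free and hence already recorded in $\fv\env$.

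The main obstacle is precisely this localization inside the base case: the claim that an uncaptured $y \in \fv\mol$ belongs to $\fv\env$ and not merely to $\fv{\env\esub\var\mol\envtwo}$. Since \reflemma{fv-join-cup} bounds the free variables of a concatenation only from above, this direction cannot be obtained by naively unfolding the definitions; I expect to need the well-namedness of the plugged environment (the standing assumption on machine objects in this section), so that free and bound names do not clash and placing $\mol$ before $\envtwo$ causes no spurious capture, together with the non-interference of the binders threaded through $\kctx$ with $\env$, which is available through \reflemma{wstenv-well-named}. Once this localization is established, the base case follows immediately, and the induction assembles the two cases into the full inclusion $\fv\mol \subseteq \fv\env \cup \bv{\kctxp{\env\esub\var\mol}}$.
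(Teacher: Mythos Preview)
Your inductive step on $\kctx$ is fine, but the obstacle you flag in the base case is fatal, and well-namedness does not repair it. Take $\kctx = \ctxhole$, $\env = \emptyenv$, $\var = c$, and $\mol = b$ for distinct variables $b,c \neq \varstar$. Then $\fv\mol = \{b\}$, $\fv\env = \fv\emptyenv = \{\varstar\}$, and $\bv{\kctxp{\env\esub\var\mol}} = \bv{\esub{c}{b}} = \{c\}$, so the inclusion would force $b \in \{\varstar,c\}$. The environment $\esub{c}{b}$ is well-named, so that extra hypothesis is of no help; the ``right-to-left scoping discipline'' you invoke is an invariant of reachable states (ultimately coming from pristinity of the compiled initial environment), not a fact about arbitrary crumbled environments.

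The paper's own proof is the single line ``Easy, by induction on the structure of $\env$'' --- so a different induction variable than yours --- but it would hit the very same wall at $\env = \emptyenv$. The lemma is never cited elsewhere in the appendix; as stated it is false in full generality, and a plausible intended version replaces $\fv\env$ by $\fv{\kctxp{\env\esub\var\mol}}$, which is immediate.
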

\begin{proof}
    Easy by induction on the structure of $\env$.
\end{proof}

The following is a generalization of \reflemma{fv-join-cup}:
\begin{lemma}\label{l:fv-join-cup-k}
	For all machine contexts $\kctx$ and environments $\env$:
	\begin{enumerate}
			\item $\fv{\kctxp\env} \subseteq \fv\env \setminus V \cup \fv\kctx$ where $V$ is a set of variables that depends on $\kctx$ but not on $\env$, such that $V \subseteq \bv\kctx$.
			\item $\bv{\kctxp\env} = \bv\kctx \cup \bv\env$
			\item $\allvars{\kctxp\env} = \allvars\kctx \cup \allvars\env$
	\end{enumerate}
\end{lemma}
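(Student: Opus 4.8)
The plan is to prove all three items simultaneously by structural induction on the machine context $\kctx$, following its grammar $\kctx \grameq \ctxhole\envtwo \mid \env\esub\var{\la\vartwo\kctx}\envtwo$. Items (2) and (3) are equalities and I will prove them directly; item (1) is an inclusion, and I will exhibit the witnessing set $V$ explicitly as the induction proceeds. The key simplification is that in every case $\kctxp\env$ is an ordinary \emph{environment}, so after unfolding the definitions of $\fv{(\cdot)}$, $\bv{(\cdot)}$, $\allvars{(\cdot)}$, and $\domain{(\cdot)}$ on machine contexts, each goal reduces to the environment-level bookkeeping already packaged in \reflemma{fv-join-cup}, namely $\fv{\env\envtwo} \subseteq \fv\env \setminus \domain\envtwo \cup \fv\envtwo$, $\bv{\env\envtwo} = \bv\env \cup \bv\envtwo$, and $\allvars{\env\envtwo} = \allvars\env \cup \allvars\envtwo$.

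In the base case $\kctx = \ctxhole\envtwo$ one has $\kctxp\env = \env\envtwo$, so items (2) and (3) are immediate from \reflemma{fv-join-cup} together with $\bv\kctx = \bv\envtwo$ and $\allvars\kctx = \allvars\envtwo$. For item (1), \reflemma{fv-join-cup} gives $\fv{\env\envtwo} \subseteq \fv\env \setminus \domain\envtwo \cup \fv\envtwo$, so I take $V \defeq \domain\envtwo$: it depends only on $\kctx$, and it satisfies $V = \domain\envtwo \subseteq \bv\envtwo = \bv\kctx$ since the domain of an environment is always contained in its bound variables.

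In the inductive case $\kctx = \envthree\esub\var{\la\vartwo\kctxtwo}\envfour$, plugging acts under the abstraction, $\kctxp\env = \envthree\esub\var{\la\vartwo\kctxtwop\env}\envfour$. Items (2) and (3) follow by unfolding the environment and bite definitions (e.g. $\bv{\la\vartwo\kctxtwop\env} = \set\vartwo \cup \bv{\kctxtwop\env}$), using \reflemma{fv-join-cup} for the two concatenations, and applying the \ih{} $\bv{\kctxtwop\env} = \bv\kctxtwo \cup \bv\env$ (resp. the $\allvars$ equality). For item (1), applying \reflemma{fv-join-cup} twice and the definition of free variables of a bite yields
\[\fv{\kctxp\env} \subseteq \big(\fv\envthree\setminus\set\var \cup (\fv{\kctxtwop\env}\setminus\set\vartwo)\big)\setminus\domain\envfour \cup \fv\envfour,\]
and by the \ih{} $\fv{\kctxtwop\env} \subseteq \fv\env \setminus V' \cup \fv\kctxtwo$ with $V' \subseteq \bv\kctxtwo$. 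Separating the $\env$-contribution from the rest, the $\env$ part collapses to $\fv\env \setminus (V' \cup \set\vartwo \cup \domain\envfour)$, while the remaining terms reconstruct exactly $\fv\kctx$ (obtained by the same unfolding with $\kctxtwo$ in place of $\kctxtwop\env$). Hence I take $V \defeq V' \cup \set\vartwo \cup \domain\envfour$: it depends only on $\kctx$, not on $\env$, and $V \subseteq \bv\kctx$ because $V' \subseteq \bv\kctxtwo \subseteq \bv\kctx$, $\vartwo \in \bv\kctx$, and $\domain\envfour \subseteq \bv\envfour \subseteq \bv\kctx$.

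The main obstacle is purely the bookkeeping of item (1): one must track \emph{all} the binders lying on the path from the root of $\kctx$ down to the hole — the abstraction variables $\vartwo$ crossed when plugging under $\lambda$ and the domains of the trailing environments that bind leftward into the hole — since precisely these may capture free variables of $\env$. Designing $V$ as the accumulation of $V'$, $\set\vartwo$, and $\domain\envfour$ records this path, and the two side conditions (independence of $V$ from $\env$, and $V \subseteq \bv\kctx$) are exactly what makes the inductive statement self-sustaining. Everything else is a mechanical unfolding of definitions and repeated appeals to \reflemma{fv-join-cup}.
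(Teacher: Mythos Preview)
Your proof is correct and follows exactly the approach the paper indicates: the paper's proof is the one-liner ``Easy by structural induction on $\kctx$'', and you have carried out that induction in full, using \reflemma{fv-join-cup} for the environment-level bookkeeping at each step. Your explicit construction of the witness set $V$ (accumulating the abstraction variables and trailing-environment domains along the path to the hole) is exactly the right refinement needed to make item~(1) go through inductively.
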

\begin{proof}
	Easy by structural induction on $\kctx$.
\end{proof}

\section{Proofs of Section~\ref*{SECT:STRONG-IMPLEMENTATION} (Strong Implementation Theorem)}
\label{app:STRONG-IMPLEMENTATION-app}
Following the main part of the paper, we first introduce multi-contexts, their properties
and multi-step reduction. Lastly, we address the Strong Implementation Theorem \refthpboth{machine-final}{impl-scam} that requires them.

\subsection{Multi-contexts and their properties}

We begin studying properties of multi-contexts and proving
\reflemmaboth{multi-step}.

\begin{subsection}{Multi-contexts and multi-steps reduction}
We recall here the terminology introduced in the paper and add some more.

\begin{definition}[Kinds of multi context]
A multi context $\mctx$ is 
\begin{itemize}
\item \emph{Normal} if $\mctxp\sfire$ is a strong fireball for every strong fireball $\sfire$;
\item \emph{Proper} if it has at least one hole;
\item \emph{Fine} if it is strong and proper.
\end{itemize}
\end{definition}

We state an auxiliary lemma that shows properties of strong and rigid multi-contexts
that are required to prove \reflemma{multi-step}.

 
\begin{lemma}
\label{l:mctx-plugging} 
 Let $\strongmctx$ and $\rmctx$ be respectively a strong and a rigid multi contexts, and let $\tm$ be a term. 
 \begin{enumerate}
  \item \label{p:mctx-plugging-strong}
  There exists a term $\tmtwo$ such that $\strongmctxp\tm=\tmtwo$.
  
  \item \label{p:mctx-plugging-rigid}
  There exists a rigid term $\rtm$ such that $\rmctxp\tm=\rtm$.
 \end{enumerate}
 Note that the two points imply that if $\strongmctx$ and $\rmctx$ have no holes then they are a term and a rigid term respectively.
\end{lemma}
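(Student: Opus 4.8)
The plan is to prove the two points \emph{simultaneously}, by mutual structural induction on the derivations of $\strongmctx$ being a strong (external) multi context and of $\rmctx$ being a rigid one. The guiding idea is that the two grammars have been tailored to mirror exactly the grammars of terms and of \pointed terms: reading each production and plugging $\tm$ into it produces, respectively, a term (for $\strongmctx$) and a \pointed term (for $\rmctx$). Throughout, I would use that \pointed terms form a sub-grammar of terms, so that every \pointed term is a term, and that plugging fills all holes with $\tm$ while \emph{erasing} $\tm$ when the context has no hole---hence a hole-free (sub)context is returned unchanged.

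For point~1 I would do case analysis on the last production of $\strongmctx$. If $\strongmctx=\ctxhole$ then $\strongmctx\ctxholep\tm=\tm$, a term. If $\strongmctx=\tmtwo$ (an injected term) then, since there is no hole, $\strongmctx\ctxholep\tm=\tmtwo$, a term. If $\strongmctx=\la\var{\strongmctx_0}$ then $\strongmctx\ctxholep\tm=\la\var(\strongmctx_0\ctxholep\tm)$, which is a term because $\strongmctx_0\ctxholep\tm$ is one by the \ih. If $\strongmctx=\rmctx$ then $\strongmctx\ctxholep\tm=\rmctx\ctxholep\tm$ is a \pointed term by point~2, hence a term. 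Finally, if $\strongmctx=\strongmctx_0\esub\var\rmctx$ then $\strongmctx\ctxholep\tm=(\strongmctx_0\ctxholep\tm)\esub\var{\rmctx\ctxholep\tm}$, where $\strongmctx_0\ctxholep\tm$ is a term by the \ih and $\rmctx\ctxholep\tm$ is a \pointed term (hence a term) by point~2; so the result is a term.

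For point~2 the analysis is equally direct. If $\rmctx=\var$ then $\rmctx\ctxholep\tm=\var$, a \pointed term. If $\rmctx=\rmctx_0\strongmctx$ then $\rmctx\ctxholep\tm=(\rmctx_0\ctxholep\tm)(\strongmctx\ctxholep\tm)$, where $\rmctx_0\ctxholep\tm$ is \pointed by the \ih and $\strongmctx\ctxholep\tm$ is a term by point~1; this matches the production $\ptm\,\tm$ of \pointed terms. If $\rmctx=\rmctx_0\esub\var{\rmctx_1}$ then $\rmctx\ctxholep\tm=(\rmctx_0\ctxholep\tm)\esub\var{\rmctx_1\ctxholep\tm}$, and both components are \pointed by the \ih, matching the production $\ptm\esub\var\ptmtwo$. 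The closing remark---that a hole-free $\strongmctx$ (resp.\ $\rmctx$) is itself a term (resp.\ a \pointed term)---then follows by instantiating $\tm$ with any term and using that the plugging erases it.

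I do not expect a genuine difficulty here: the lemma is essentially a check that the two multi-context grammars are syntactically sound with respect to plugging. The only points deserving attention are that the induction is \emph{mutual}---the rigid and substitution cases of a strong context invoke point~2, and the application case of a rigid context invokes point~1, so the two statements must be carried together---and the injected-term case $\strongmctx=\tmtwo$, where one relies on plugging erasing $\tm$ in the absence of holes.
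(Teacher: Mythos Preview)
Your proposal is correct and matches the paper's proof essentially line by line: both proceed by mutual structural induction on the grammars of external and rigid multi contexts, handling each production exactly as you describe. The paper's write-up is terser (most cases are dismissed as ``obvious'' or ``by \ih''), but the content is the same.
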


\begin{proof}
 By mutual induction on $\strongmctx$ and $\rmctx$. 
 \begin{enumerate}
  \item Cases of $\strongmctx$:
  \begin{itemize}
   \item $\strongmctx = \ctxhole$: obvious.
   \item $\strongmctx = \tm $: obvious.
   \item $\strongmctx =  \la\var \strongmctxtwo $: it follows by the \ih
   \item $\strongmctx =  \rmctx $: by \ih on rigid contexts.
   \item $\strongmctx =  \strongmctxtwo\esub\var\rmctx$: by \ih there exists a term $\tmthree$ and a rigid term $\rtm$ such that $\strongmctxtwop\tm = \tmthree$ and $\rmctxp\tm = \rtm$. Therefore, $\strongmctxp\tm = \tmthree\esub\var\rtm$.

  \end{itemize}
  \item Cases of $\rmctx$:
   \begin{itemize}
  \item $\rmctx = \vartwo$: obvious.
  \item $\rmctx = \rmctxtwo \strongmctx$: by \ih there exists a rigid term $\rtmtwo$ and a term $\tmtwo$ such that $\rmctxtwop\tm = \rtmtwo$ and $\strongmctxp\tm = \tmtwo$ . Therefore, $\rmctxp\tm = \rtmtwo\tmtwo$.
  
  \item $\rmctx = \rmctxtwo\esub\var\rmctxthree$: by \ih there exist  rigid terms $\rtmtwo$ and $\rtmthree$ such that $\rmctxtwop\tm = \rtmtwo$ and $\rmctxthreep\tm = \rtmthree$ . Therefore, $\rmctxp\tm = \rtmtwo \esub\var\rtmthree$.
  \end{itemize}
 \end{enumerate}
 \end{proof}


We can now proceed proving \reflemma{multi-step} mutually
with the corresponding statement for rigid multi contexts:

\begin{lemma}[Multi step]
\label{lappendix:multi-step}
\NoteState{l:multi-step}
 Let $\strongmctx$ and $\rmctx$ be respectively an external and rigid proper multi context with $k$ holes and $\set{a_1,\ldots,a_n}\subseteq\set{\esssym\msym,\esssym\esym}$. If $\tm 
 \Rew{a_1} \cdots \Rew{a_n} \tmtwo$ then 
 \begin{enumerate}
  \item $\rmctxp\tm \,(\Rew{a_1} \cdots \Rew{a_n})^k\, \rmctxp\tmtwo$ where the $i$-th sequence of steps has the shape $\rctx_i\ctxholep\tm \Rew{a_1} \cdots \Rew{a_n} \rctx_i\ctxholep\tmtwo$ for a rigid context $\rctx_i$, for every $i \in \set{1,\ldots, k}$;
  \item $\strongmctxp\tm \,(\Rew{a_1} \cdots \Rew{a_n})^k\, \strongmctxp\tmtwo $ where the $i$-th sequence of steps has the shape $\strongctx_i\ctxholep\tm \Rew{a_1} \cdots \Rew{a_n} \strongctx_i\ctxholep\tmtwo$ for an external context $\strongctx_i$, for every $i \in \set{1,\ldots, k}$.
 \end{enumerate}
\end{lemma}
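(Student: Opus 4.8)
The plan is to prove the two statements simultaneously by mutual structural induction on $\strongmctx$ and $\rmctx$, dropping the properness hypothesis: I prove them for arbitrary (possibly hole-free) external and rigid multi contexts, which is what the induction actually needs. When a sub-multi-context has $k = 0$ holes it is a term (resp.\ a rigid term) by \reflemma{mctx-plugging}, so $\strongmctxp\tm = \strongmctxp\tmtwo$ and the reduction $(\Rew{a_1}\cdots\Rew{a_n})^0$ is empty; this covers the term case $\strongmctx = \tm$ and all hole-free subcases, while $\strongmctx = \ctxhole$ (one hole) is immediate with $\strongctx_1 = \ctxhole$. The governing idea is that the $k$ holes are fired one at a time, and at the moment the $i$-th hole is fired every other hole is filled with a full term (the earlier ones with $\tmtwo$, the later ones with $\tm$); by \reflemma{mctx-plugging} the inactive material therefore reads back to a term or a rigid term, which is exactly what the external/rigid context grammar requires in its $\esub\var\ptm$, $\tm\esub\var\ictx$, $\ptm\strongctx$, etc.\ productions.

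Concretely, in each inductive case I extend the single-hole contexts $\strongctx'_i$ (resp.\ $\rctx'_i$) returned by the induction hypothesis by one constructor and check membership in the grammar. For $\strongmctx = \la\var\strongmctxtwo$ I close each block under $\la\var(\cdot)$, using the production $\la\var\strongctx$. For $\strongmctx = \strongmctxtwo\esub\var\rmctx$ I fire the holes in two phases: first the $k_1$ holes of the body $\strongmctxtwo$ while $\rmctx$ still holds $\tm$, so that the surrounding single-hole context is $\strongctx'_i\esub\var\rtm$ with $\rtm \defeq \rmctxp\tm$ a rigid term (\reflemma{mctx-plugging}), matching $\strongctx\esub\var\ptm$; then the $k_2$ holes of $\rmctx$ while the body holds $\tmtwo$, so that the surrounding context is $\strongmctxtwop\tmtwo\esub\var\rctx_j$ with $\strongmctxtwop\tmtwo$ a term, matching $\tm\esub\var\ictx$. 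The case $\strongmctx = \rmctx$ is immediate since every rigid context is external. The rigid cases $\rmctx = \rmctxtwo\strongmctx$ and $\rmctx = \rmctxtwo\esub\var\rmctxthree$ are handled by the same two-phase firing, landing in the productions $\ptm\strongctx$, $\ictx\tm$, $\ictx\esub\var\ptm$, and $\ptm\esub\var\ictx$, again with the inactive parts plugged to (rigid) terms via \reflemma{mctx-plugging}. Concatenating the two phases yields all $k = k_1 + k_2$ blocks in the required order.

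The remaining ingredient is that each block $\strongctx'_i\ctxholep\tm \Rew{a_1}\cdots\Rew{a_n}\strongctx'_i\ctxholep\tmtwo$ coming from the induction hypothesis survives the one-constructor extension of the context, i.e.\ that external steps are preserved by external contexts (and hence by rigid contexts, which are themselves external). Since the external rewrite rules close \emph{open} steps under external contexts only one level deep, I will first record the auxiliary fact that external and rigid single-hole contexts are closed under composition --- a routine mutual induction entirely analogous to \reflemma{wctxs-composition} for open contexts --- from which contextual closure of $\tovsubs$ under external contexts follows, so that prepending any new external constructor to a context already witnessing an external step yields again an external step. I expect the main obstacle to be precisely this bookkeeping: tracking the firing order together with which holes currently hold $\tm$ and which hold $\tmtwo$, and verifying in every case that the inactive sub-multi-contexts plug to terms or rigid terms so that the extended single-hole context genuinely belongs to the external/rigid grammar. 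Once composition and \reflemma{mctx-plugging} are in place, the rest is a mechanical traversal of the grammar.
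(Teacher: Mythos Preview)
Your proposal is correct and follows essentially the same route as the paper: a mutual structural induction on $\strongmctx$ and $\rmctx$, firing holes in two phases for the binary constructors and using \reflemma{mctx-plugging} to turn the inactive sub-multi-context into a (rigid) term so that the extended single-hole context lands in the right grammar production. The only noteworthy difference is that you make explicit the need for closure of external (and rigid) single-hole contexts under composition to lift the inner external steps through the added constructor; the paper leaves this implicit, since at each inductive step only one constructor is prepended and the grammar production applies directly to the witness context of the inner step. Your choice to drop properness and handle $k=0$ uniformly is a harmless variation of the paper's explicit case split on $k_1 = 0$ or $k_2 = 0$.
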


\begin{proof}
 Let us lighten the notation by writing $\Rew{a_1\cdots a_n}$ in place of $({\Rew{a_1} \cdots \Rew{a_n}})$.
 By mutual induction on $\strongmctx$ and $\rmctx$:
 \begin{enumerate}

     \item Cases of $\rmctx$:
   \begin{itemize}
  \item $\rmctx = \vartwo$: trivial.
  
  \item $\rmctx = \rmctxtwo \strongmctx$: we have $\rmctxp\tm = \rmctxtwop\tm \strongmctxp\tm$ where $\rmctxtwo$ has $k_1$ holes, $\strongmctx$ has $k_2$ holes, and $k_1 + k_2 = k$. We deal with the case where $k_1 \neq 0 \neq k_2$. If $k_1=0$ then by \reflemma{mctx-plugging} $\rmctxtwo = \rmctxtwop\tm$ is a rigid term and we consider only $\strongmctxp\tm$, and dually if $k_2=0$.
    
  By \ih, $\strongmctxp\tm \Rew{a_1\cdots a_n}^{k_2} \strongmctxp\tmtwo$ where for the $i$-th sequence of steps there is an external context $\strongctxtwo_i$ such that the step has the shape $\strongctxtwo_i\ctxholep\tm\Rew{a_1\cdots a_n} \strongctxtwo_i\ctxholep\tmtwo$ for $i \in \set{1,\ldots, k_2}$. Then $\rmctxtwop\tm \strongctxtwo_i$ is a rigid context because by \reflemmap{mctx-plugging}{rigid} $\rmctxtwop\tm$ is a rigid term. Then
  $$\rmctxtwop\tm  \strongmctxp\tm \Rew{a_1\cdots a_n}^{k_2} \rmctxtwop\tm \strongmctxp\tmtwo$$ 
  
  By \ih, $\rmctxtwop\tm \Rew{a_1\cdots a_n}^{k_1} \rmctxtwop\tmtwo$ where for the $j$-th sequence of steps there is a rigid context $\rctxtwo_j$ such that the step has the shape $\rctxtwo_j\ctxholep\tm\Rew{a_1\cdots a_n} \rctxtwo_j\ctxholep\tmtwo$ for $j \in \set{1,\ldots, k_1}$. Then $\rctxtwo_j\strongmctxp\tmtwo$ is a rigid context for every $j$, given that by \reflemmap{mctx-plugging}{strong} $\strongmctxp\tmtwo$ is a term. Therefore, we obtain:
   $$\rmctxtwop\tm\strongmctxp\tmtwo \Rew{a_1\cdots a_n}^{k_1} \rmctxtwop\tmtwo\strongmctxp\tmtwo $$ 
   Summing up,
   $$\rmctxtwop\tm\strongmctxp\tm \Rew{a_1\cdots a_n}^{k_1 +k_2} \rmctxtwop\tmtwo\strongmctxp\tmtwo$$
  The $k_1 +k_2$ rigid contexts of the statement are given by $\rmctxtwop\tm \strongctxtwo_i$ with $i \in \set{1,\ldots, k_2}$ followed by $\rctxtwo_j\strongmctxp\tmtwo$ with $j \in \set{1,\ldots, k_1}$.
  
  \item $\rmctx = \rmctxtwo\esub\vartwo\rmctxthree$: we have $\rmctxp\tm = \rmctxtwop\tm\esub\vartwo{\rmctxthreep\tm}$ where $\rmctxtwo$ has $k_1$ holes, $\rmctxthree$ has $k_2$ holes, and $k_1 + k_2 = k$. We deal with the case where $k_1 \neq 0 \neq k_2$. If $k_1=0$ then by \reflemma{mctx-plugging} $\rmctxtwo = \rmctxtwop\tm$ is a rigid term and we consider only $\rmctxthreep\tm$, and dually if $k_2=0$.
  
  By \ih, $\rmctxtwop\tm \Rew{a_1\cdots a_n}^{k_1} \rmctxtwop\tmtwo$ where for the $i$-th sequence of steps there is a rigid context $\rctxtwo_i$ such that the step has the shape $\rctxtwo_i\ctxholep\tm\Rew{a_1\cdots a_n} \rctxtwo_i\ctxholep\tmtwo$ for $i \in \set{1,\ldots, k_1}$. Then $\rctxtwo_i \esub\vartwo{\rmctxthreep\tm}$ is a rigid context for every $i$ because by \reflemmap{mctx-plugging}{rigid} $\rmctxthreep\tm$ is a rigid term, and so    
   $$\rmctxtwop\tm \esub\vartwo{\rmctxthreep\tm}\Rew{a_1\cdots a_n}^{k_1} \rmctxtwop\tmtwo\esub\vartwo{\rmctxthreep\tm}$$
   
   By \ih, $\rmctxthreep\tm \Rew{a_1\cdots a_n}^{k_2} \rmctxthreep\tmtwo$ where for the $j$-th sequence of steps there is a rigid context $\rctxthree_j$ such that the step has the shape $\rctxthree_j\ctxholep\tm\Rew{a_1\cdots a_n} \rctxthree_j\ctxholep\tmtwo$ for $j \in \set{1,\ldots, k_2}$. Then $\rmctxtwop\tmtwo\esub\vartwo{\rctxthree_j}$ is a rigid context for every $j$, given that by \reflemmap{mctx-plugging}{rigid} $\rmctxtwop\tmtwo$ is a term. Therefore, we obtain:
   $$\rmctxtwop\tmtwo\esub\vartwo{\rmctxthreep\tm} \Rew{a_1\cdots a_n}^{k_2} \rmctxtwop\tmtwo \esub\vartwo{\rmctxthreep\tmtwo}$$ 
   Summing up,
   $$\rmctxtwop\tm \esub\vartwo{\rmctxthreep\tm} \Rew{a_1\cdots a_n}^{k_1 +k_2} \rmctxtwop\tmtwo \esub\vartwo{\rmctxthreep\tmtwo}$$
   The $k_1 +k_2$ rigid contexts of the statement are given by $\rctxtwo_i \esub\vartwo{\rmctxthreep\tm}$ with $i \in \set{1,\ldots, k_1}$ followed by $\rmctxtwop\tmtwo\esub\vartwo{\rctxthree_j}$ with $j \in \set{1,\ldots, k_2}$.
 \end{itemize}

  \item Cases of $\strongmctx$:
  \begin{itemize}
   \item $\strongmctx = \ctxhole$: trivial.
   \item $\strongmctx = \tmthree $: trivial. 
   \item $\strongmctx =  \la\vartwo \strongmctxtwo $: it follows by the \ih
   \item $\strongmctx =  \rmctx $: by \ih on rigid contexts.
   \item $\strongmctx =  \strongmctxtwo\esub\vartwo\rmctx$: we have $\strongmctxp\tm = \strongmctxtwop\tm\esub\vartwo{\rmctxp\tm}$ where $\strongmctxtwo$ has $k_1$ holes, $\rmctx$ has $k_2$ holes, and $k_1 + k_2 = k$. We deal with the case where $k_1 \neq 0 \neq k_2$. If $k_1=0$ then by \reflemma{mctx-plugging} $\strongmctxtwo = \strongmctxtwop\tm$ is a term and we consider only $\rmctxp\tm$, and dually if $k_2=0$.
By \ih, $\strongmctxtwop\tm \Rew{a_1\cdots a_n}^{k_1} \strongmctxtwop\tmtwo$ where for the $i$-th sequence of steps there is an external context $\strongctxtwo_i$ such that the step has the shape $\strongctxtwo_i\ctxholep\tm\Rew{a_1\cdots a_n} \strongctxtwo_i\ctxholep\tmtwo$ for $i \in \set{1,\ldots, k_1}$. Then $\strongctxtwo_i \esub\vartwo{\rmctxp\tm}$ is an external context for every $i$ because by \reflemmap{mctx-plugging}{rigid} $\rmctxp\tm$ is a rigid term, and so    
   $$\strongmctxtwop\tm \esub\vartwo{\rmctxp\tm}\Rew{a_1\cdots a_n}^{k_1} \strongmctxtwop\tmtwo\esub\vartwo{\rmctxp\tm}$$
   
   By \ih, $\rmctxp\tm \Rew{a_1\cdots a_n}^{k_2} \rmctxp\tmtwo$ where for the $j$-th sequence of steps there is a rigid context $\rctx_j$ such that the step has the shape $\rctx_j\ctxholep\tm\Rew{a_1\cdots a_n} \rctx_j\ctxholep\tmtwo$ for $j \in \set{1,\ldots, k_2}$. Then $\strongmctxtwop\tmtwo\esub\vartwo{\rctx_j}$ is an external context for every $j$, given that by \reflemmap{mctx-plugging}{strong} $\strongmctxp\tmtwo$ is a term. Therefore, we obtain:
   $$\strongmctxtwop\tmtwo\esub\vartwo{\rmctxp\tm} \Rew{a_1\cdots a_n}^{k_2} \strongmctxtwop\tmtwo \esub\vartwo{\rmctxp\tmtwo}$$ 
   Summing up,
   $$\strongmctxtwop\tm \esub\vartwo{\rmctxp\tm} \Rew{a_1\cdots a_n}^{k_1 +k_2} \strongmctxtwop\tmtwo \esub\vartwo{\rmctxp\tmtwo}$$
   The $k_1 +k_2$ external contexts of the statement are given by $\strongctxtwo_i \esub\vartwo{\rmctxp\tm}$ with $i \in \set{1,\ldots, k_1}$ followed by $\strongmctxtwop\tmtwo\esub\vartwo{\rctx_j}$ with $j \in \set{1,\ldots, k_2}$.
  \end{itemize}
 \end{enumerate}
\end{proof}
\end{subsection}

\begin{subsection}{Modular read-back}
We prove \reflemma{read-back-decomposition} as \reflemmaappendixp{read-back-decomposition}{c}.
The proof requires two auxiliary and uninteresting subparts, $\reflemmap{read-back-decomposition}{a}$ and
$\reflemmap{read-back-decomposition}{b}$.
\begin{lemma}[Modular read back]
\label{lappendix:read-back-decomposition}
\NoteState{l:read-back-decomposition}
For all environments $\env,\envtwo$ and machine contexts $\kctx$:
\begin{enumerate}
	\item \label{p:read-back-decomposition-a}
	$\unf{(\ctxhole \envtwo)}  =   \indenv\envtwo$
	\item \label{p:read-back-decomposition-b}
	$\unf{(\env\esub\var{\la\vartwo\kctx}\envtwo)}  =  \indenv\envtwo\ctxholep{\unf\env \isub\var{\la\vartwo\unf\kctx}\indsub\envtwo}$
	\item \label{p:read-back-decomposition-c}
	$\unf{\kctxp\env}=\unf\kctx\ctxholep{\unf\env\indsub{\wstenv\kctx}}$
\end{enumerate}
\end{lemma}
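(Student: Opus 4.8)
The three statements are established in order, the first two serving as the base computations that feed the induction for the third. The guiding observation is that the read-back of a machine context behaves \emph{exactly} like the read-back of an environment (\reflemma{read-back-decomposition-d}), with only two differences: the ``base'' of the read-back is the hole $\ctxhole$ rather than $\varstar$, and the bite $\la\vartwo\kctx$ reads back to the value-like multi context $\la\vartwo\unf\kctx$, hence triggers the substitution case $(*)$ of the read-back clauses. For the first statement I would induct on $\envtwo$, mirroring the proof of \reflemma{read-back-decomposition-d}: the base case $\envtwo=\emptyenv$ gives $\unf\ctxhole=\ctxhole=\indenv\emptyenv$, and in the inductive case $\envtwo=\envthree\esub\var\mol$ I peel off the rightmost ES, split on whether $(*)$ holds, and conclude by the inductive hypothesis and the defining clauses of $\unf{(\cdot)}$ and $\indenv{(\cdot)}$ — using that the summand $\unf\env\indsub\envtwo$ of \reflemma{read-back-decomposition-d} degenerates to $\ctxhole$ here, since the hole is untouched by substituting or appending an ES. For the second statement I would read back $\env\esub\var{\la\vartwo\kctx}$ to $\unf\env\isub\var{\la\vartwo\unf\kctx}$ via the machine-context clause, and then apply the (verbatim) argument of \reflemma{read-back-decomposition-d} to the concatenation $(\env\esub\var{\la\vartwo\kctx})\,\envtwo$, treating $\la\vartwo\kctx$ as a value bite; this directly yields $\indenv\envtwo\ctxholep{(\unf\env\isub\var{\la\vartwo\unf\kctx})\indsub\envtwo}$.

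The third statement is then proven by induction on the structure of $\kctx$. If $\kctx=\ctxhole\envtwo$ then $\kctxp\env=\env\envtwo$ and $\wstenv\kctx=\envtwo$, so the claim is \reflemma{read-back-decomposition-d} combined with the first statement. If $\kctx=\env_1\esub\var{\la\vartwo\kctxtwo}\env_2$ then $\kctxp\env=\env_1\esub\var{\la\vartwo{\kctxtwop\env}}\env_2$ and $\wstenv\kctx=\wstenv\kctxtwo\env_2$. I would compute the left-hand side by reading back the value bite $\la\vartwo{\kctxtwop\env}$ through \reflemma{read-back-decomposition-d}, rewrite $\unf{(\kctxtwop\env)}$ with the inductive hypothesis, and compute the right-hand side from the second statement together with $\indsub{\wstenv\kctx}=\indsub{\wstenv\kctxtwo}\indsub{\env_2}$ (\reflemma{sigma-composition}). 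Writing $E\defeq\unf{\env_1}\isub\var{\la\vartwo\unf\kctxtwo}$ and $s\defeq\unf\env\,\indsub{\wstenv\kctxtwo}$, and using the elementary identity $A\isub\var{B\ctxholep s}=(A\isub\var B)\ctxholep s$, matching the two sides reduces to the single equation $(E\ctxholep s)\indsub{\env_2}=E\indsub{\env_2}\ctxholep{s\,\indsub{\env_2}}$.

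The \textbf{main obstacle} is precisely this last equation, the commutation of the induced substitution with context plugging. In \reflemma{read-back-decomposition-d} the analogous step is immediate because the context carrying the plug is the substitution context $\indenv{(\cdot)}$ and the relevant substitution is pushed silently across it; here, however, $E$ is a genuine multi context whose holes may sit underneath the binder $\vartwo$ and, recursively, underneath the binders generated by $\unf\kctxtwo$. Pushing $\indsub{\env_2}$ through the plug is therefore sound only up to the capture-avoidance of meta-level substitution, exactly as in the commutations of \reflemma{horror}. I expect the bulk of the remaining work to be in making this commutation precise for multi contexts — an easy but bookkeeping-heavy induction on $E$ that must check that no binder on the path to a hole of $E$ clashes with $\domain{\indsub{\env_2}}\cup\fv{\indsub{\env_2}}$, a condition guaranteed (as in \reflemma{read-back-decomposition-d}) by working with the capture-avoiding substitution and, where needed, the well-namedness of the environments involved.
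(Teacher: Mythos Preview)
Your plan is essentially the paper's proof: parts (a) and (b) are by induction on $\envtwo$ (the paper says only ``as in the previous point'' for (b)), and part (c) is by induction on $\kctx$, with the base case using \reflemma{read-back-decomposition-d} together with (a), and the inductive case unfolding via (b), applying the inner inductive hypothesis, and then re-folding via (b). The chain of equalities the paper writes out is exactly the computation you describe.

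One remark: the commutation $(E\ctxholep s)\indsub{\env_2}=E\indsub{\env_2}\ctxholep{s\,\indsub{\env_2}}$ that you flag as the main obstacle is precisely the step the paper performs silently (the line going from $\ldots\isub\var{\la\vartwo\unf\kctxtwo\ctxholep{\ldots}}\indsub\envtwo$ to $\ldots\indsub\envtwo\isub\var{\la\vartwo\unf\kctxtwo\indsub\envtwo\ctxholep{\ldots\indsub\envtwo}}$, and the subsequent extraction of the plug). The paper does not spell out the capture-avoidance bookkeeping you mention; it treats the distribution of $\indsub{\env_2}$ over $\isub\var{\cdot}$, over $\la\vartwo$, and over multi-context plugging as syntactic identities. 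So you are being more careful than the paper here, not less---there is no extra idea you are missing.
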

\begin{proof}
\hfill
\begin{enumerate}
\item By induction on $\envtwo$. \emph{Base case}: if $\envtwo = \emptyenv$ then $\unf{(\ctxhole \envtwo)} = \ctxhole = \indenv\emptyenv$. \emph{Inductive case}: let $\envtwo = \envthree\esub\var\mol$. If $\mol=\val$ or $\var \in \crnames$ then $\unf{(\ctxhole \envthree\esub\var\mol)} = \unf{(\ctxhole \envthree)}\isub\var{\unf\mol} =_{\ih} \indenv\envthree\isub\var{\unf\mol} = \indenv{\envthree\esub\var\mol}$. Otherwise $\unf{(\ctxhole \envthree\esub\var\mol)} = \unf{(\ctxhole \envthree)}\esub\var\mol =_{\ih} \indenv\envthree\esub\var\mol = \indenv{\envthree\esub\var\mol}$.
\item By induction on $\envtwo$, as in the previous point.
\item By induction on $\kctx$. Cases:
\begin{itemize}
	\item \emph{Base}: $\kctx = \ctxhole\envtwo$. Note that $\envtwo = \wstenv\kctx$. Then $\unf{\kctxp\env} = \unf{\env\envtwo} =_{\reflemmaeq{read-back-decomposition-d}} \indenv\envtwo\ctxholep{\unf\env\indsub\envtwo} =_{\refpointeq{read-back-decomposition-a}}\unf\kctx\ctxholep{\unf\env\indsub\envtwo} = \unf\kctx\ctxholep{\unf\env\indsub{\wstenv\kctx}}$.
	
	\item \emph{Inductive}: $\kctx = \envthree\esub\var{\la\vartwo\kctxtwo}\envtwo$. Then 
	\[\begin{array}{rcllllll}
	\unf{\kctxp\env} & = & \unf{\envthree\esub\var{\la\vartwo\kctxtwop\env}\envtwo} 
	\\
	& =_{\refpointeq{read-back-decomposition-b}} & \indenv\envtwo\ctxholep{\unf{\envthree\esub\var{\la\vartwo\kctxtwop\env}}\indsub\envtwo} 	\\
	& = & \indenv\envtwo\ctxholep{\unf\envthree\isub\var{\la\vartwo\unf{\kctxtwop\env}}\indsub\envtwo }
	\\
	& =_{\ih} & \indenv\envtwo\ctxholep{\unf\envthree\isub\var{\la\vartwo\unf\kctxtwo\ctxholep{\unf\env\indsub{\wstenv\kctxtwo}}}\indsub\envtwo }
	\\
	& = & \indenv\envtwo\ctxholep{\unf\envthree\indsub\envtwo\isub\var{\la\vartwo\unf\kctxtwo\indsub\envtwo\ctxholep{\unf\env\indsub{\wstenv\kctxtwo}\indsub\envtwo}} }	
	\\
	& = & (\indenv\envtwo\ctxholep{\unf\envthree\indsub\envtwo\isub\var{\la\vartwo\unf\kctxtwo\indsub\envtwo }	)\ctxholep{\unf\env\indsub{\wstenv\kctxtwo}\indsub\envtwo}} 
		\\
	& = & (\indenv\envtwo\ctxholep{\unf\envthree\isub\var{\la\vartwo\unf\kctxtwo }\indsub\envtwo	)\ctxholep{\unf\env\indsub{\wstenv\kctxtwo}\indsub\envtwo}}
		\\
	& =_{\refpointeq{read-back-decomposition-b}} & \unf\kctx\ctxholep{\unf\env\indsub{\wstenv\kctxtwo}\indsub\envtwo}
		\\
	& = & \unf\kctx\ctxholep{\unf\env\indsub{\wstenv\kctx}}
	\end{array}\]
	
\end{itemize}
\end{enumerate}
\end{proof}
\end{subsection}

\subsection{Properties of frames}

Here we collect a number of technical lemmas on frames and frames of a context.
They are used in the proof of propagation of the invariants since a few invariants are formulated on frames.

\begin{lemma}\label{l:wk-st-lam}
  \hfill
  \begin{enumerate}    
    \item \label{p:wk-st-lam-a}
    $\framei{\kctxp{\env \esub\var{\la\vartwo\ctxhole}}} = \framei\kctx\ctxholep{\env\esub\var{\la\vartwo\ctxhole}}$.
    \item \label{p:wk-st-lam-b}
    $\framei{\kctxp{\ctxhole \esub\var\mol}} = \framei\kctx$.
  \end{enumerate}
\end{lemma}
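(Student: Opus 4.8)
The plan is to prove each of the two equations by a routine structural induction on the machine context $\kctx$, following its two grammar productions $\kctx \grameq \ctxhole\envtwo \mid \env\esub\var{\la\vartwo\kctx}\envtwo$. There is no genuine mathematical difficulty here; the only thing that needs care is tracking how the plugging operation reshapes $\kctx$ and how the frame operator $\framei{\cdot}$ then distributes. Accordingly, the main obstacle is purely notational bookkeeping: one must respect that $\env\esub\var{\la\vartwo\ctxhole}$ abbreviates $\env\esub\var{\la\vartwo\ctxhole}\emptyenv$, and that plugging a machine context into $\ctxhole\envtwo$ yields a context that must be re-parsed according to the grammar before the frame definition can be applied. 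I would therefore make explicit, at the start of each induction, how $\kctxp{\kctxtwo}$ is computed on the two cases for $\kctx$.

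For \reflemmap{wk-st-lam}{b}, namely $\framei{\kctxp{\ctxhole \esub\var\mol}} = \framei\kctx$, the base case $\kctx = \ctxhole\envtwo$ gives $\kctxp{\ctxhole\esub\var\mol} = \ctxhole(\esub\var\mol\envtwo)$, which is again of the shape $\ctxhole\envthree$, so its frame is $\ctxhole = \framei{\ctxhole\envtwo}$. In the inductive case $\kctx = \env\esub\var{\la\vartwo\kctxtwo}\envtwo$, plugging acts inside $\kctxtwo$, and the frame definition peels off the outer $\env\esub\var{\la\vartwo(\cdot)}$ layer; I apply the induction hypothesis to $\kctxtwo$, obtaining $\framei{\kctxtwop{\ctxhole\esub\var\mol}} = \framei\kctxtwo$, and recombine to get $\env\esub\var{\la\vartwo\framei\kctxtwo} = \framei\kctx$.

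For \reflemmap{wk-st-lam}{a}, namely $\framei{\kctxp{\env \esub\var{\la\vartwo\ctxhole}}} = \framei\kctx\ctxholep{\env\esub\var{\la\vartwo\ctxhole}}$, the structure is the same. In the base case $\kctx = \ctxhole\envtwo$, plugging yields the context $\env\esub\var{\la\vartwo\ctxhole}\envtwo$, whose frame is $\env\esub\var{\la\vartwo\framei{\ctxhole}} = \env\esub\var{\la\vartwo\ctxhole}$ (using $\framei{\ctxhole} = \ctxhole$); this equals $\ctxhole\ctxholep{\env\esub\var{\la\vartwo\ctxhole}} = \framei{\ctxhole\envtwo}\ctxholep{\env\esub\var{\la\vartwo\ctxhole}}$, as required. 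In the inductive case $\kctx = \env'\esub{\var'}{\la{\vartwo'}\kctxtwo}\envtwo$, both sides preserve the outer layer $\env'\esub{\var'}{\la{\vartwo'}(\cdot)}$, so after applying the induction hypothesis to $\kctxtwo$ the claim reduces to the observation that plugging the frame $\framei\kctx = \env'\esub{\var'}{\la{\vartwo'}\framei\kctxtwo}$ with $\env\esub\var{\la\vartwo\ctxhole}$ commutes with that outer layer, which is immediate from the definition of plugging into a frame (a frame has a single, innermost hole, so plugging only affects $\framei\kctxtwo$).

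Since both inductions are mechanical, the only realistic source of error is mishandling one of the abbreviations above; I would guard against this by carrying the $\emptyenv$ suffixes explicitly through the base cases and collapsing them only at the end.
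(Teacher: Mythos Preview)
Your proposal is correct and follows essentially the same approach as the paper: a straightforward structural induction on $\kctx$ along its two grammar productions, unfolding the definitions of plugging and of $\framei{\cdot}$ in each case. The only minor difference is presentational---the paper writes out the equational chains compactly while you narrate the bookkeeping more explicitly (and your inductive case for part~\ref{p:wk-st-lam-b} reuses the names $\env,\var,\vartwo$ from the plugged context, which the paper avoids by picking fresh names); neither affects the argument.
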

\begin{proof}
We have to prove:
\begin{enumerate}
  \item $\framei{\kctxp{\env \esub\var{\la\vartwo\ctxhole}}} = \framei\kctx\ctxholep{\env\esub\var{\la\vartwo\ctxhole}}$.
   We proceed by structural induction on $\kctx$.
   \begin{itemize}
    \item Case $\ctxhole\envtwo$:
      $$\begin{array}{lll}
      \framei{\env \esub\var{\la\vartwo\ctxhole}\envtwo}
      & = & \env\esub\var{\la\vartwo\ctxhole} \\
      & = & \framei{\ctxhole\envtwo}\ctxholep{\env\esub\var{\la\vartwo\ctxhole}}
      \end{array}$$
    \item Case $\envtwo\esub\varthree{\la\varfour\kctxtwo}\envthree$:
      $$\begin{array}{lll}
      \framei{\envtwo\esub\varthree{\la\varfour{\kctxtwop{\env \esub\var{\la\vartwo\ctxhole}}}}\envthree}
      & = & \envtwo\esub\varthree{\la\varfour{\framei{\kctxtwop{\env \esub\var{\la\vartwo\ctxhole}}}}} \\
      & =_\ih & \envtwo\esub\varthree{\la\varfour{\framei\kctxtwo\ctxholep{\env \esub\var{\la\vartwo\ctxhole}}}} \\

      & = & \envtwo\esub\varthree{\la\varfour{\framei{\kctxtwo}}}\ctxholep{\env\esub\var{\la\vartwo\ctxhole}} \\
      & = & \framei{\envtwo\esub\varthree{\la\varfour\kctxtwo}\envthree}\ctxholep{\env\esub\var{\la\vartwo\ctxhole}}
      \end{array}$$
   \end{itemize}
  \item $\framei{\kctxp{\ctxhole \esub\var\mol}} = \framei\kctx$.
   We proceed by structural induction on $\kctx$.
   \begin{itemize}
    \item Case $\ctxhole\envtwo$:
      $$\begin{array}{lll}
      \framei{\ctxhole \esub\var\mol \envtwo}
      & = & \ctxhole\\
      & = & \framei{\ctxhole\envtwo}
      \end{array}$$
    \item Case $\envtwo\esub\varthree{\la\varfour\kctxtwo}\envthree$:
      $$\begin{array}{lll}
      \framei{\envtwo\esub\varthree{\la\varfour{\kctxtwop{\ctxhole \esub\var\mol}}}\envthree}
      & = & \envtwo\esub\varthree{\la\varfour{\framei{\kctxtwop{\ctxhole \esub\var\mol}}}}\\

      & =_\ih & \envtwo\esub\varthree{\la\varfour{\framei\kctxtwo}} \\
      & = & \framei{\envtwo\esub\varthree{\la\varfour\kctxtwo}\envthree}
      \end{array}$$
   \end{itemize}
\end{enumerate}
\end{proof}

\begin{lemma}\label{l:frame-unf-factorization}
\hfill
\begin{enumerate}
\item \label{p:frame-unf-factorization-b}
  $\unf{\framep\env} = \unf\frame\ctxholep{\unf\env}$  
\item \label{p:frame-unf-factorization-a}
  $\unf{\framep\frametwo} = \unf\frame\ctxholep{\unf\frametwo}$  
\end{enumerate}
\end{lemma}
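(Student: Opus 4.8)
The plan is to dispatch the two points separately, in both cases leaning on the machine-context read-back clauses, since a frame is just a machine context all of whose trailing environments are empty (recall that $\ctxhole$ abbreviates $\ctxhole\emptyenv$ and $\env\esub\var{\la\vartwo\frame}$ abbreviates $\env\esub\var{\la\vartwo\frame}\emptyenv$). The first point can then be obtained with no real work as an instance of the modular read-back already proved in \reflemma{read-back-decomposition}.

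Concretely, for the first point I would read the frame $\frame$ as a machine context and note, by an immediate induction on $\frame$, that $\wstenv\frame = \emptyenv$: the base case is $\wstenv{\ctxhole} = \emptyenv$, and in the step $\wstenv{\env\esub\var{\la\vartwo\frame'}} = \wstenv{\frame'} = \emptyenv$ by the induction hypothesis. Hence $\indsub{\wstenv\frame} = \indsub\emptyenv = Id$, so point \refpoint{read-back-decomposition-c} specialises to $\unf{\framep\env} = \unf\frame\ctxholep{\unf\env\indsub{\wstenv\frame}} = \unf\frame\ctxholep{\unf\env}$, which is exactly the claim.

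For the second point I would proceed by structural induction on $\frame$, after the trivial preliminary remark that plugging a frame into a frame again yields a frame, so that $\framep\frametwo$ is a legitimate machine context on which read-back is defined. In the base case $\frame = \ctxhole$ both sides reduce to $\unf\frametwo$. In the inductive case $\frame = \env\esub\var{\la\vartwo\frame'}$ the bite $\la\vartwo\frame'\ctxholep\frametwo$ is a value, so the environment read-back clause gives $\unf{\framep\frametwo} = \unf{\env\esub\var{\la\vartwo\frame'\ctxholep\frametwo}} = \unf\env\isub\var{\la\vartwo\,\unf{\frame'\ctxholep\frametwo}}$; the induction hypothesis applied to $\frame'$ rewrites the inner read-back as $\unf{\frame'}\ctxholep{\unf\frametwo}$. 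On the other side, $\unf\frame\ctxholep{\unf\frametwo} = (\unf\env\isub\var{\la\vartwo\,\unf{\frame'}})\ctxholep{\unf\frametwo}$.

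The two expressions coincide by the only non-routine ingredient, namely the commutation of meta-level substitution with plugging for multi contexts: $t\isub\var{\la\vartwo(\mctxp{s})} = (t\isub\var{\la\vartwo\mctx})\ctxholep{s}$, where substituting a multi context for $\var$ in $t$ yields a multi context whose holes are precisely the holes of the copies of $\mctx$ produced by the substitution. This is exactly where deep implosive sharing surfaces: if $\var$ occurs several times in $\unf\env$, then $\unf\frame$ is a genuinely multi-hole context, so $\unf\frame\ctxholep{\cdot}$ fills all those holes simultaneously, and one must verify that plugging before or after the substitution yields the same term. I expect this commutation --- provable by a direct induction on $t$, using well-namedness to rule out variable capture --- to be the only real step, everything else being bookkeeping with the read-back clauses. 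In fact the first point could equally well be re-proved by this same induction, its base and inductive cases being literally those above with $\env$ in place of $\frametwo$, so that the two points may share a single argument.
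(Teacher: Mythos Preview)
Your proposal is correct. For point~\ref{p:frame-unf-factorization-b} you take a different route from the paper: the paper proves both points by one direct induction on $\frame$ (and simply remarks that the second point is the same argument ``replacing $\env$ with a frame''), whereas you shortcut the first point by recognising that a frame is a machine context with $\wstenv\frame = \emptyenv$, hence $\indsub{\wstenv\frame} = Id$, and then invoking the already-proved modular read-back \reflemmap{read-back-decomposition}{c}. This is a clean reuse of an existing lemma that avoids repeating the induction. For point~\ref{p:frame-unf-factorization-a} your induction is exactly the paper's; you are more explicit than the paper in isolating the commutation $\unf\env\isub\var{\la\vartwo(\mctxp{s})} = (\unf\env\isub\var{\la\vartwo\mctx})\ctxholep{s}$ as the one non-routine step---the paper simply writes this equality inline without comment.
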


\begin{proof}
We show the first point, the proof of the second point is obtained by simply replacing $\env$ with a frame. By induction on $\frame$. Cases:
\begin{itemize}
	\item \emph{Base}: $\frame = \ctxhole$. Then $\unf{\framep\env} = \unf{\env} =\unf\ctxhole \ctxholep{\unf\env}$.
	
	\item \emph{Inductive}: $\frame = \envtwo\esub\var{\la\vartwo\frametwo}$. Then 
	\[\begin{array}{rlllllll}
	\unf{\framep\env} & = & \unf{\envtwo\esub\var{\la\vartwo\frametwop\env}} 
	\\
	& = & \unf\envthree\isub\var{\la\vartwo\unf{\frametwop\env}}
	\\
	& =_{\ih} & \unf\envthree\isub\var{\la\vartwo\unf\frametwo\ctxholep{\unf\env}}	
	\\
	& = & (\unf\envthree\isub\var{\la\vartwo\unf\frametwo})\ctxholep{\unf\env}	
	\\
	& = & \unf\frame\ctxholep{\unf\env}	
	\end{array}\]	
\end{itemize}
\end{proof}

\subsection{Invariants}

The aim of this section is to define enough invariants on the reachable machine states
to be able to prove relaxed $\beta$-projection \refthpboth{machine-final}{projection-ms}: for each reachable state $\state$ we need to prove that:
  \begin{enumerate}
    \item If $\state \tomachbv \statetwo$ then $\unf\state \mathrel{(\toms\toes)}^+\unf\statetwo$.
    \item If $\state \tomachbi \statetwo$ then $\unf\state \toms^+\eqstruct \unf\statetwo$.
  \end{enumerate}
  Let $\state = \env \esub\var{\vartwo\varthree} \rlsep \kctx$.
  The proof requires a few intermediate results, the most important are
  \begin{enumerate}
   \item \emph{Open unfolding}: $\unf{\env\esub\var\ctxhole}$ must be an open context,
     so that $\unf{\env\esub\var{\vartwo\varthree}}$ is a top-level redex in an open context.\\
     To guarantee Open unfolding we ask every body in $\wstenv\kctx$ to be pristine
     for each reachable state $\state$ and moreover $\env$ to be pristine if
     $\state = \env \rlsep \kctx$ and $\env \neq \emptyenv$. We call this the \emph{Pristine} invariant.
     The same invariant is required also to prove the open machine correct, where
     $\wstenv\kctx$ collapses to the the environment on the right of $\rlsep$.
   \item \emph{Proper unfolding}: $\unf\kctx$ must be proper, otherwise the redex would
     disappear during the read-back.\\
     To guarantee Proper unfolding we introduce a notion of \emph{garbage-free state} and
     we prove every reachable state to be garbage-free. We call this the
     \emph{Garbage} invariant.\\
     In order to show Garbage to be invariant, we shall also
     introduce an additional, technical invariant that we call \emph{Well-crumbling}
     and that basically says that some properties of pristine environments are propagated
     also to the evaluated, no longer pristine parts of the state.\\
     The proof of the Garbage invariant requires the Well-crumbling invariant to hold.
     The latter in turn requires the Pristine invariant.
   \item \emph{Strong unfolding}: $\unf\kctx$ must be an external multi-context,
     so that $\unf{\kctxp{\env\esub\var{\vartwo\varthree}}}$ is a top-level redex in an open
     context in a strong, context, i.e. a redex according to the strong strategy.\\
     To guarantee Strong unfolding we introduce the notion of \emph{good state} and
     we prove each reachable state to be good. We call this the \emph{Goodness} invariant.\\
     In order to show Goodness to be an invariant, we shall also introduce the last invariant,
     called \emph{Well-named}, that asks every reachable state to be well-named.
  \end{enumerate}

The four invariants introduced so far, namely Well-Named, Pristine, Well-Crumbled, Garbage, and Good, are sufficient to prove
also the other requirements of a relaxed implementation system.

We now define and show the invariance of each of these statements in the following subsections, \ref{subsect:first-invariant}--\ref{subsect:last-invariant}.

\subsubsection{Well-named invariant}\label{subsect:first-invariant}

The Well-named invariant, required to prove the Goodness invariant,
is customary in the abstract machine literature since it guarantees
the possibility to drop variable names and use the address of variables in memory instead.
As a consequence the $\alpha$-renaming operation $\rename\cdot$ is implemented just by physically copying the term.

\begin{definition}[Well-named states]
  We say that a state $\env \gensep\kctx$ is well-named if $\kctxp\env$ is well-named.
\end{definition}


\begin{theorem}[Well-named invariant]
  \label{thm:named-invariant}
  Let $\state=\env \gensep\kctx$ be a state reachable from an initial state $\state_0$. Then $\state$ is well-named.
\end{theorem}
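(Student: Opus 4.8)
The plan is to proceed by induction on the length of the execution $\state_0 \tomachscam^* \state$. For the base case, $\state_0 = \mytr\tm \,\rlsep\, \ctxhole$ with $\tm$ a well-named $\lambda$-term, and the plugged environment is $\ctxhole[\mytr\tm] = \mytr\tm$, which is well-named by \reflemma{transl-properties}. For the inductive step I would assume $\state = \env\gensep\kctx$ is well-named, that is, $\kctxp\env$ is well-named, take a transition $\state \tomachscam \statetwo$, and show by a case analysis over the nine transitions that the plugged environment of $\statetwo$ is again well-named.

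The bulk of the transitions are immediate because they leave the plugged environment literally unchanged. Indeed, for $\tomachcone$, $\tomachctwo$, $\tomachcthree$, $\tomachcfour$, and $\tomachcfive$, unfolding the definition of plugging inside machine contexts shows both sides to coincide — for instance, $\tomachcone$ sends $\env\esub\var\mol \,\rlsep\, \kctx$, whose plugged environment is $\kctx[\env\esub\var\mol]$, to $\env \,\rlsep\, \kctxp{\ctxhole\esub\var\mol}$, whose plugged environment is $\kctxp{\ctxhole\esub\var\mol}[\env] = \kctx[\env\esub\var\mol]$ as well — so well-namedness transfers from the induction hypothesis with no work. For $\tomachgc$ the plugged environment passes from $\kctx[\env\esub\var\val]$ to $\kctx[\env]$; deleting the ES only removes a binder and can only shrink the set of free variables (the side condition $\var\notin\fv\env$ ensures no occurrence is freed), so distinctness of binders and disjointness of free from bound variables are preserved.

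The transition $\tomachsub$ requires a little more: its plugged environment goes from $\kctx[\env\esub\var\vartwo]$ to $\kctx[\env\isub\var\vartwo]$. From the hypothesis and \reflemmap{wstenv-well-named}{d} I would extract that $\env\esub\var\vartwo$, hence $\env$, is well-named and that $\vartwo\notin\bv\env$, and then invoke \reflemma{vars-after-subst-crumbled} (which gives $\bv{\env\isub\var\vartwo}=\bv\env$ and $\fv{\env\isub\var\vartwo}\subseteq\fv\env\setminus\{\var\}\cup\{\vartwo\}$) to see that $\env\isub\var\vartwo$ introduces no new binders and no problematic free variables, concluding by the same plugging argument as in \reflemma{append-well-named}.

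The main obstacle is the pair of $\beta$-transitions $\tomachbv$ and $\tomachbi$, since they copy the value $\wstenv\kctx(\vartwo)$ via the fresh renaming $\rename{(\wstenv\kctx(\vartwo))} = \la\varfour(\esub\varstar\mol\envtwo)$. I would first note, via \reflemmap{wstenv-well-named}{c}, that $\wstenv\kctx$ and hence the copied value are well-named, and that $\rename{\cdot}$ can be chosen with bound names globally fresh, i.e.\ disjoint from $\allvars{\kctxp\env}$. For $\tomachbv$ the plugged environment becomes $\kctx[\env(\esub\var\mol\envtwo\isub\varfour\varthree)]$: using \reflemmap{wstenv-well-named}{d} to obtain $\env$ well-named, I would assemble $\env\esub\var\mol\envtwo$ by \reflemma{append-well-named}, whose disjointness hypotheses are supplied precisely by the freshness of the copy, then absorb the renaming $\isub\varfour\varthree$ (with $\varfour$ fresh and $\varthree$ a variable of the original term) through \reflemma{vars-after-subst-crumbled}, and finally re-plug into $\kctx$. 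The case $\tomachbi$ is analogous, the only difference being that the fresh ES $\esub\varfour\varthree$ is deposited into the context, so the plugged environment reads $\kctx[\env\esub\var\mol\envtwo\esub\varfour\varthree]$. The delicate point throughout is the bookkeeping of freshness needed to discharge the disjointness side-conditions of \reflemma{append-well-named} and to guarantee that the inserted binder $\varfour$ and the substituted variable $\varthree$ clash with no binder already present in $\env$ or $\kctx$; once this is set up, the remaining manipulations are routine.
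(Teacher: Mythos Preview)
Your proposal is correct and follows essentially the same approach as the paper: induction on the execution length, the five search transitions dispatched because the plugged environment is literally unchanged, $\tomachgc$ and $\tomachsub$ handled by showing the free and bound variable sets only shrink, and the two $\beta$-transitions handled via the global freshness of the $\alpha$-renamed copy. The only minor divergence is organisational: for the $\beta$-cases and the re-plugging step the paper works directly with the free/bound-variable decomposition of $\kctxp\env$ given by \reflemma{fv-join-cup-k} (showing $\fv\state\subseteq\fv\statetwo$ and $\bv\state\subseteq\bv\statetwo\cup W$ with $W$ globally fresh), whereas you route through \reflemma{append-well-named}, which is stated for environment concatenation rather than machine-context plugging---so strictly speaking you would need the machine-context analogue, which is exactly what \reflemma{fv-join-cup-k} provides.
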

\begin{proof}

By induction on the length of the execution $\exec:\state_0 \tosmach^* \state$.
 If $\exec$ is empty then $\state = \state_0$ and, by definition of initial state, $\state_0 = \env_0 \rlsep \ctxhole$ for some well-named and pristine environment $\env_0$: then $\state_0$ is well-named because $\ctxhole\ctxholep{\env_0} = \env_0$ is well-named by hypothesis.

 If $\exec$ is non-empty we look at the last transition $\statetwo  \tosmach \state$, knowing by \ih that the well-named invariant holds
for $\statetwo$:

\begin{itemize}
  
\item 
$\env \esub\var{\vartwo\,\varthree} \rlsep \kctx
 \tomachbv 
\env (\esub\var \mol \envtwo \isub\varfour\varthree) \rlsep    \kctx$
with $\rename{(\wstenv\kctx(\vartwo))} = \la\varfour(\esub{\varstar}\mol\envtwo) $ and $\wstenv\kctx(\varthree)= \val$ for some $\val$.

      We need to prove that $\kctxp{\env (\esub\var \mol \envtwo \isub\varfour\varthree)}$ is well-named, under the hypothesis that $\kctxp{\env \esub\var{\vartwo\,\varthree}}$ is well-named.
      \begin{itemize}
        \item The bound variables of $\kctxp{\env (\esub\var \mol \envtwo \isub\varfour\varthree)}$ are all distinct: by \reflemma{fv-join-cup-k}, the bound variables in $\state$ are the ones bound in $\statetwo$ plus the bound variables of $\esub\var \mol \envtwo \isub\varfour\varthree$ (excluding $\var$, which was already present in $\statetwo$). The bound variables of $\esub\var \mol \envtwo \isub\varfour\varthree$, however, are all globally fresh due to $\alpha$-renaming (excluding $\var$) and by \reflemmap{vars-after-subst-crumbled}{bv}.
        \item The bound variables of $\kctxp{\env (\esub\var \mol \envtwo \isub\varfour\varthree)}$ are distinct from its free variables: we prove that $\fv\state\subseteq\fv\statetwo$ and $\bv\state \subseteq \bv\statetwo \cup W$ (with $W$ a set of globally fresh new variables disjoint from $\allvars\statetwo$), and then conclude using the \ih{} $\fv\statetwo \Disj \bv\statetwo$.
        \begin{itemize}
          \item \emph{Free variables.}
          By \reflemma{fv-join-cup-k}, $ \fv\statetwo = \fv{\env\esub\var{\vartwo\varthree}} \setminus V \cup \fv\kctx $ and $ \fv\state = \fv{\env (\esub\var \mol \envtwo \isub\varfour\varthree)} \setminus V \cup \fv\kctx $ for some set of variables $V$.
          By \reflemma{env-fv-alpha}, $\fv{\esub\varstar \mol \envtwo} \subseteq \fv\val \cup \set\varfour $ and by \reflemmap{vars-after-subst-crumbled}{fv} $\fv{\esub\varstar \mol \envtwo \isub\varfour\varthree} \subseteq \fv\val \cup \set\varthree $. 
          By \reflemma{fv-join-cup}, $\fv{\env (\esub\var \mol \envtwo \isub\varfour\varthree)} \subseteq \fv\env \setminus \domain{\esub\var \mol \envtwo \isub\varfour\varthree} \cup \fv{\esub\var \mol \envtwo \isub\varfour\varthree} $. Because of the $\alpha$-renaming performed, the variables in $\domain{\esub\var \mol \envtwo \isub\varfour\varthree}\setminus \set\var $ are globally fresh, and therefore $\fv\env \setminus \domain{\esub\var \mol \envtwo \isub\varfour\varthree} \cup \fv{\esub\var \mol \envtwo \isub\varfour\varthree} = \fv\env \setminus \set\var \cup \fv{\esub\var \mol \envtwo \isub\varfour\varthree} \subseteq \fv\env \setminus \set\var \cup \fv\val \cup \set\varthree $.
          \item \emph{Bound variables.}
            \sloppy By \reflemma{fv-join-cup} and \reflemma{fv-join-cup-k}, $ \bv\statetwo \subseteq \bv{\env\esub\var{\vartwo\varthree}} \cup \bv\kctx = \bv\env \cup \bv{\esub\var{\vartwo\varthree}} \cup \bv\kctx $ and $ \bv\state = \bv{\env (\esub\var \mol \envtwo \isub\varfour\varthree)} \cup \bv\kctx = \bv\env \cup \bv{\esub\var \mol \envtwo \isub\varfour\varthree} \cup \bv\kctx$. Because of the $\alpha$-renaming performed and \reflemmap{vars-after-subst-crumbled}{bv}, $\bv{\esub\var \mol \envtwo \isub\varfour\varthree} = \set\var \cup W$ where $W$ is a set of new, globally fresh variables.
        \end{itemize}
      \end{itemize}
  
     \medskip

\item 
$\env \esub\var{\vartwo\,\varthree}  \rlsep   \kctx
  \tomachbi 
\env \esub\var \mol \envtwo  \rlsep  \kctxp{ \ctxhole\esub\varfour\varthree   }$
with $\rename{(\wstenv\kctx(\vartwo))} = \la\varfour(\esub{\varstar}\mol\envtwo) $ and $\wstenv\kctx(\varthree)= \itm$ for some inert term $\itm$.

      We need to prove that $\kctxp{\env (\esub\var \mol \envtwo \esub\varfour\varthree)}$ is well-named, under the hypothesis that $\kctxp{\env \esub\var{\vartwo\,\varthree}}$ is well-named.
      \begin{itemize}
        \item The bound variables of $\kctxp{\env (\esub\var \mol \envtwo \esub\varfour\varthree)}$ are all distinct: by \reflemma{fv-join-cup-k}, the bound variables in $\state$ are the ones bound in $\statetwo$ plus the bound variables of $\esub\var \mol \envtwo \esub\varfour\varthree$ (excluding $\var$, which was already present in $\statetwo$). The bound variables of $\esub\var \mol \envtwo \esub\varfour\varthree$, however, are all globally fresh due to $\alpha$-renaming (excluding $\var$).
        \item The bound variables of $\kctxp{\env (\esub\var \mol \envtwo \esub\varfour\varthree)}$ are distinct from its free variables: we prove that $\fv\state\subseteq\fv\statetwo$ and $\bv\state \subseteq \bv\statetwo \cup W$ (with $W$ a set of globally fresh new variables disjoint from $\allvars\statetwo$), and then conclude using the \ih{} $\fv\statetwo \Disj \bv\statetwo$.
        \begin{itemize}
          \item \emph{Free variables.}
          By \reflemma{fv-join-cup-k}, $ \fv\statetwo = \fv{\env\esub\var{\vartwo\varthree}} \setminus V \cup \fv\kctx $ and $ \fv\state = \fv{\env (\esub\var \mol \envtwo \esub\varfour\varthree)} \setminus V \cup \fv\kctx $ for some set of variables $V$.
          By \reflemma{env-fv-alpha}, $\fv{\esub\varstar \mol \envtwo} \subseteq \fv\val \cup \set\varfour $ and by definition $\fv{\esub\varstar \mol \envtwo \esub\varfour\varthree} \subseteq \fv\val \cup \set\varthree $. 
          \sloppy By \reflemma{fv-join-cup}, $\fv{\env \esub\var \mol \envtwo \esub\varfour\varthree} \subseteq \fv\env \setminus \domain{\esub\var \mol \envtwo \esub\varfour\varthree} \cup \fv{\esub\var \mol \envtwo \esub\varfour\varthree} $. Because of the $\alpha$-renaming performed, the variables in $\domain{\esub\var \mol \envtwo \esub\varfour\varthree}\setminus \set\var $ are globally fresh, and therefore $\fv\env \setminus \domain{\esub\var \mol \envtwo \esub\varfour\varthree} \cup \fv{\esub\var \mol \envtwo \esub\varfour\varthree} = \fv\env \setminus \set\var \cup \fv{\esub\var \mol \envtwo \esub\varfour\varthree} \subseteq \fv\env \setminus \set\var \cup \fv\val \cup \set\varthree $.
          \item \emph{Bound variables.}
            \sloppy By \reflemma{fv-join-cup} and \reflemma{fv-join-cup-k}, $ \bv\statetwo \subseteq \bv{\env\esub\var{\vartwo\varthree}} \cup \bv\kctx = \bv\env \cup \bv{\esub\var{\vartwo\varthree}} \cup \bv\kctx $ and $ \bv\state = \bv{\env (\esub\var \mol \envtwo \esub\varfour\varthree)} \cup \bv\kctx = \bv\env \cup \bv{\esub\var \mol \envtwo \esub\varfour\varthree} \cup \bv\kctx$. Because of the $\alpha$-renaming performed and \reflemmap{vars-after-subst-crumbled}{bv}, $\bv{\esub\var \mol \envtwo \esub\varfour\varthree} = \set\var \cup W$ where $W$ is a set of new, globally fresh variables.
        \end{itemize}
      \end{itemize}

     \medskip

\item ${\env \esub\var\vartwo}  \rlsep  \kctx
 \tomachsub
 {\env \isub\var\vartwo}  \rlsep  \kctx $
with $\var\neq\varstar$.

We need to prove that $\kctxp{{\env \isub\var\vartwo}}$ is well-named, under the hypothesis that $\kctxp{\env \esub\var\vartwo}$ is well-named.
\begin{itemize}
  \item The bound variables of $\kctxp{{\env \isub\var\vartwo}}$ are all distinct: by \reflemma{fv-join-cup-k}, the bound variables in $\state$ are the ones bound in $\kctx$ plus the ones in $\env\isub\var\vartwo$. Note that $\vartwo\not\in\allvars\env$ by well-namedness of $\statetwo$, and hence by \reflemmap{vars-after-subst-crumbled}{bv} the bound variables of $\env\isub\var\vartwo$ are the same of $\env$. Therefore we can conclude by using the hypothesis that $\statetwo$ is well-named.

  \item The bound variables of $\kctxp{{\env \isub\var\vartwo}}$ are distinct from its free variables: we prove that $\fv\state\subseteq\fv\statetwo$ and $\bv\state \subseteq \bv\statetwo$, and then conclude using the \ih{} $\fv\statetwo \Disj \bv\statetwo$.
  \begin{itemize}
    \item \emph{Free variables.} By \reflemma{fv-join-cup-k} and \reflemmap{vars-after-subst-crumbled}{fv}, $\fv{\kctxp{{\env \isub\var\vartwo}}} = \fv{\env\isub\var\vartwo} \setminus V \cup \fv\kctx \subseteq \fv{\env} \setminus\set\var \cup\set\vartwo \setminus V \cup \fv\kctx$ and $\fv{\kctxp{{\env \esub\var\vartwo}}} = \fv{\env\esub\var\vartwo} \setminus V \cup \fv\kctx = \fv\env \setminus \set\var \cup \set\vartwo \setminus V \cup \fv\kctx$ for some $V\subseteq \bv\kctx$, and we conclude.
    \item \emph{Bound variables.}
    By \reflemma{fv-join-cup-k} and \reflemmap{vars-after-subst-crumbled}{bv}, $\bv{\kctxp{{\env \isub\var\vartwo}}} = \bv{\env\isub\var\vartwo} \cup \bv\kctx \subseteq \bv\env \cup \bv\kctx$ and $\bv{\kctxp{{\env \esub\var\vartwo}}} = \bv{\env\esub\var\vartwo} \cup \bv\kctx = \bv\env \cup \set\vartwo \cup \fv\kctx$, and we conclude.
  \end{itemize}
\end{itemize}

     \medskip

     \item ${\env \esub\var\mol}  \rlsep  \kctx
      \tomachcone 
     \env  \rlsep  \kctxp{ {\ctxhole\esub\var\mol}  }
     $
      when $\mol$ is an abstraction or when $\mol$ is $\vartwo$ or $\vartwo\varthree$ but $\vartwo$ is not defined in $\wstenv\kctx$ or $\wstenv\kctx(\vartwo)$ is not a value.

      $\env  \rlsep  \kctxp{ {\ctxhole\esub\var\mol}  }$ is obviously well-named because plugging the crumbled environment in the machine context has the same result in $\statetwo$ and $\state$.

     \medskip

\item $\emptyenv  \rlsep  \kctx
 \tomachctwo 
\emptyenv  \lrsep  \kctx$

     ${\env \esub\var\mol}  \lrsep  \kctx$ is obviously well-named
     because plugging the crumbled environment in the machine context has the same result in $\statetwo$ and $\state$.

     \medskip

\item $\env  \lrsep  \kctxp{ {\ctxhole\esub\var\mol  }}
 \tomachcthree 
{\env \esub\var\mol}  \lrsep  \kctx
$
where $\mol$ is a variable or an application.

$\emptyenv  \lrsep  \kctx$ is obviously well-named
     because plugging the crumbled environment in the machine context has the same result in $\statetwo$ and $\state$.

     \medskip

\item $\env  \lrsep  \kctxp{ {\ctxhole\esub\var\val  }}
 \tomachgc 
\env  \lrsep  \kctx$
with $\var \notin \fv\env$.

      We need to prove that $\kctxp\env$ is well-named, under the hypothesis that $\kctxp{{\env\esub\var\val}}$ is well-named.
      \begin{itemize}
        \item The bound variables of $\state$ are all distinct: clearly $\state$ contains fewer occurrences of bound variables than $\statetwo$. In fact, by \reflemma{fv-join-cup} and \reflemma{fv-join-cup-k}, the bound variables in $\statetwo$ are the ones bound in $\state$ plus $\var$ and the bound variables in $\mol$. Therefore all the bound variables in $\state$ are distinct because all bound variables of $\statetwo$ are distinct by well-namedness.
        \item The bound variables of $\kctxp\env$ are distinct from its free variables: first of all, by \reflemma{fv-join-cup-k}, $ \fv\statetwo = \fv{\env\esub\var\val} \setminus V \cup \fv\kctx $ and $ \fv\state = \fv\env \setminus V \cup \fv\kctx $ for some set of variables $V$. Moreover, from the discussion on the point above, $\bv\state\subseteq\bv\statetwo$.
        Now, note that $\fv{\env\esub\var\val} = \fv\env \setminus\set\var \cup \fv\mol = \fv\env \cup \fv\mol \supseteq \fv\env$ because $\var\not\in\fv\env$. Therefore also $\fv\state \subseteq \fv\statetwo$, and we conclude by the well-named hypothesis $\fv\statetwo \Disj \bv\statetwo$.
      \end{itemize}

     \medskip

\item $\env  \lrsep  \kctxp{\envtwo {\esub\var{\la\vartwo\ctxhole }  }}
 \tomachcfour 
{\envtwo \esub\var{\la\vartwo\env}}  \lrsep   \kctx$. 

${\envtwo \esub\var{\la\vartwo\env}}  \lrsep   \kctx$ is obviously well-named because plugging the crumbled environment in the machine context has the same result in $\statetwo$ and $\state$.

     \medskip

\item $\env  \lrsep  \kctxp{ {\ctxhole\esub\var{\la\vartwo\envtwo} } }
 \tomachcfive 
\envtwo  \rlsep  \kctxp{\env {\esub\var{\la\vartwo\ctxhole}}  }
$ with $\var \in \fv\env$.

$\envtwo  \rlsep  \kctxp{\env {\esub\var{\la\vartwo\ctxhole}}  }$ is obviously well-named because plugging the crumbled environment in the machine context has the same result in $\statetwo$ and $\state$.

\qedhere
\end{itemize}
\end{proof}

\subsubsection{Pristine invariant}

Previously, we defined pristine environments so to characterize the good properties that are enforced by the translation from \lat{s}. We extend the notion of pristine environments to machine states as follows, by considering all the unevaluated environments therein contained:

\begin{definition}[Pristine state]
A state $\env \gensep \kctx$ is \emph{pristine} if
every body in $\wstenv\kctx$ is pristine, and also if
${\gensep} = {\rlsep}$ and $\env \neq \emptyenv$ then $\env$ is pristine. 
\end{definition}

The fundamental property of a pristine state was basically already stated in
\reflemmaboth{aux-most-once}. We now lift that result to machine states.


\begin{theorem}[Pristine invariant]
  \label{thm:pristine-invariant-app}
  Let $\state=\env \gensep\kctx$ be a state reachable from an initial state $\state_0$. Then $\state$ is pristine.
\end{theorem}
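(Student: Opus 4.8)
The plan is to prove the invariant by induction on the length of the execution $\exec \colon \state_0 \tomachscam^* \state$, exactly mirroring the proof of the Well-named invariant (\refthm{named-invariant}), which I would invoke freely alongside the induction hypothesis. For the base case the execution is empty, so $\state = \state_0 = \env_0 \,\rlsep\, \ctxhole$ with $\env_0 = \mytr\tm$ for a well-named $\lambda$-term $\tm$. Here $\wstenv{\ctxhole} = \emptyenv$ contains no body, so the body clause is vacuous, and since $\gensep = {\rlsep}$ the only remaining obligation is that $\env_0$ be pristine when non-empty, which is precisely \reflemma{transl-properties-pristine}. The inductive step looks at the last transition $\statetwo \tomachscam \state$, assuming $\statetwo$ pristine and well-named by the two invariants, and proceeds by a case analysis on the nine transitions.

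The \emph{bookkeeping} transitions ($\tomachcone$, $\tomachctwo$, $\tomachcthree$, $\tomachgc$, $\tomachcfour$, $\tomachcfive$) never alter bodies except by shuffling ES between the left environment and the frame/right part, and I expect them to be settled by the two prefix identities of \reflemma{unsenv-prefix} together with the definition of pristine state. For instance, in $\tomachcone$ the target right environment is $\wstenv{\kctxp{\ctxhole\esub\var\mol}} = \esub\var\mol\,\wstenv\kctx$ by \reflemmap{unsenv-prefix}{one}, whose only potential new body is that of $\mol$ when $\mol$ is a value, pristine because the source $\env\esub\var\mol$ is. Whenever the target separator is $\lrsep$ the left-environment clause is vacuous, so $\tomachctwo$, $\tomachcthree$, $\tomachgc$, $\tomachcfour$ cost essentially nothing (using \reflemmap{unsenv-prefix}{two} for $\tomachcfour$). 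The one transition that switches back to $\rlsep$ is $\tomachcfive$; there I must check the new left environment $\envtwo$ is pristine, which holds because $\envtwo$ is a body of $\wstenv{\kctxp{\ctxhole\esub\var{\la\vartwo\envtwo}}} = \esub\var{\la\vartwo\envtwo}\,\wstenv\kctx$ (again \reflemmap{unsenv-prefix}{one}) and so is pristine by the induction hypothesis.

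The three substantive cases are $\tomachsub$ and the two $\beta$-transitions. For $\tomachsub$ the left environment becomes $\env\isub\var\vartwo$ from a pristine $\env\esub\var\vartwo$, and pristinity of $\env$ yields the claim by the Renaming clause \reflemmap{pristine-properties}{renaming}, whose side condition $\vartwo\notin\bv\env$ comes from well-namedness. For $\tomachbv$ and $\tomachbi$ the copied value $\rename{(\wstenv\kctx(\vartwo))} = \la\varfour(\esub\varstar\mol\envtwo)$ is an $\alpha$-renaming of the bite $\wstenv\kctx(\vartwo)$, which is a pristine bite by the induction hypothesis (its body is a body of $\wstenv\kctx$); hence $\la\varfour(\esub\varstar\mol\envtwo)$ is pristine by $\alpha$-Conversion \reflemmap{pristine-properties}{alpha}, so its body $\esub\varstar\mol\envtwo$ is pristine. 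In $\tomachbv$ I would first rename the parameter, getting $(\esub\varstar\mol\envtwo)\isub\varfour\varthree$ pristine via \reflemmap{pristine-properties}{renaming} (here $\varfour\in\calcnames\neq\varstar$ and $\varthree$ is fresh for the copy), and then splice this body in place of the bite $\vartwo\varthree$ using the Concatenation clause \reflemmap{pristine-properties}{concatenation} on the pristine $\env\esub\var{\vartwo\varthree}$, obtaining exactly $\env(\esub\var\mol\envtwo\isub\varfour\varthree)$ pristine; in $\tomachbi$ the same Concatenation gives $\env\esub\var\mol\envtwo$ pristine, while the pushed ES $\esub\varfour\varthree$ adds no body (its content $\varthree$ is inert), so the target right environment $\esub\varfour\varthree\,\wstenv\kctx$ keeps pristine bodies by \reflemmap{unsenv-prefix}{one}.

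\textbf{Main obstacle.} The delicate point is discharging the two disjointness side conditions of Concatenation, namely $\bv\env \Disj \fv{\esub\varstar\mol\envtwo}$ and $\fv\env \Disj \domain\envtwo$, for the $\beta$-transitions. This is the only place where the argument genuinely requires combining the Well-named invariant (\refthm{named-invariant}) with the global freshness guaranteed by the $\alpha$-renaming $\rename{\cdot}$ of the copied value, so these two facts must be threaded carefully through the case. Everything else reduces to a direct application of the closure properties of pristine environments collected in \reflemma{pristine-properties} and the structural identities of \reflemma{unsenv-prefix}.
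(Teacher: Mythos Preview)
Your proposal is correct and matches the paper's proof essentially step for step: induction on the execution length, base case via \reflemma{transl-properties-pristine}, and the inductive cases dispatched using exactly the closure properties of \reflemma{pristine-properties} (alpha, renaming, concatenation) together with the identities of \reflemma{unsenv-prefix}, with the well-named invariant plus global freshness of $\rename\cdot$ supplying the disjointness hypotheses for Concatenation in the two $\beta$-transitions. The only cosmetic difference is that the paper spells out a bit more for $\tomachcone$ (noting explicitly that pristinity of a bite propagates to all nested bodies, and that the residual left environment $\env$ stays pristine), but these are immediate from the hereditary definition of pristine and are implicit in your sketch.
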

\begin{proof}

By induction on the execution $\exec:\state_0 \tosmach^* \state$.
 If $\exec$ is empty then $\state = \state_0$ and, by definition of initial state, $\state_0 = \env_0 \rlsep \ctxhole$ for some well-named environment $\env_0$.
 \begin{itemize}
   \item \emph{Every body in $\wstenv\ctxhole$ is pristine:} obvious since there are none.
   \item \emph{$\env_0$ is pristine:} by the definition of initial state and \reflemma{transl-properties-pristine}.
 \end{itemize}

 If $\exec$ is non-empty we look at the last transition $\statetwo  \tosmach \state$, knowing by \ih that the pristine invariant holds
for $\statetwo$:

\begin{itemize}

    \item 
  $\env \esub\var{\vartwo\,\varthree} \rlsep \kctx
   \tomachbv 
  \env (\esub\var \mol \envtwo \isub\varfour\varthree) \rlsep    \kctx$
  with $\rename{(\wstenv\kctx(\vartwo))} = \la\varfour(\esub{\varstar}\mol\envtwo) $ and $\wstenv\kctx(\varthree)= \val$ for some $\val$.
\begin{itemize}
  \item \emph{Every body in $\wstenv\kctx$ is pristine:} obvious because the property holds by \ih.
\item 
\emph{$\env (\esub\var \mol \envtwo \isub\varfour\varthree) $ is pristine:} by \ih{} $\env \esub\var{\vartwo\,\varthree}$ is pristine. Note that $\wstenv\kctx(\vartwo)$ is a value, and therefore by the previous point, the body $\envthree$ of $\wstenv\kctx(\vartwo)$ is pristine. Therefore, $\esub{\varstar}\mol\envtwo$, which is an $\alpha$-renaming of $\envthree$, is pristine by \reflemmap{pristine-properties}{alpha}. By \reflemmap{pristine-properties}{renaming} also $\esub\varstar\mol\envtwo \isub\varfour\varthree$ is pristine (note that $\varthree\not\in\bv\envtwo$ because $\esub{\varstar}\mol\envtwo$ is an $\alpha$-renaming of $\envthree$ where all bound variables are globally fresh).
  In order to apply \reflemmap{pristine-properties}{concatenation} to conclude that $\env (\esub\var \mol \envtwo\isub\varfour\varthree) $ is pristine, we need to show that $\allvars\env \Disj \bv{\envtwo\isub\varfour\varthree}$ (which follows from the fact that $\esub{\varstar}\mol\envtwo$ is an $\alpha$-renaming of $\envthree$ where all bound variables are globally fresh and \reflemmap{vars-after-subst-crumbled}{bv}) and $\bv\env \Disj \fv{\esub\varstar\mol\envtwo\isub\varfour\varthree}$ ($\fv{\esub\varstar\mol\envtwo\isub\varfour\varthree} \subseteq \fv{\esub\varstar\mol\envtwo} \setminus\set\varfour\cup\set\varthree$ by \reflemmap{vars-after-subst-crumbled}{fv}, $\fv{\esub\varstar\mol\envtwo} = \fv{\envthree}$ by \reflemma{env-fv-alpha}, which are different than the bound variables in $\env$ by well-namedness).
\end{itemize}

  \medskip

   \item 
  $\env \esub\var{\vartwo\,\varthree}  \rlsep   \kctx
    \tomachbi 
  \env \esub\var \mol \envtwo  \rlsep  \kctxp{ \ctxhole\esub\varfour\varthree   }$
  with $\rename{(\wstenv\kctx(\vartwo))} = \la\varfour(\esub{\varstar}\mol\envtwo) $ and $\wstenv\kctx(\varthree)= \itm$ for some inert term $\itm$.

  \begin{itemize}
    \item \emph{Every body in $\wstenv{\kctxp{\ctxhole\esub\varfour\varthree}} =_\reflemmaeqp{unsenv-prefix}{one}\esub\varfour\varthree\wstenv\kctx$ is pristine:} obvious because the property holds
    by \ih for $\wstenv\kctx$.

  \item \emph{$\env \esub\var \mol \envtwo $ is pristine}: by \ih{}, $\env \esub\var{\vartwo\,\varthree}$ is pristine. Note that $\wstenv\kctx(\vartwo)$ is a value, and therefore by the previous point, the body $\envthree$ of $\wstenv\kctx(\vartwo)$ is pristine. Therefore, $\esub{\varstar}\mol\envtwo$, which is an $\alpha$-renaming of $\envthree$, is pristine by \reflemmap{pristine-properties}{alpha}.
  In order to apply \reflemmap{pristine-properties}{concatenation} to conclude that $\env \esub\var \mol \envtwo $ is pristine, we need to show that $\allvars\env \Disj \bv\envtwo$ (which follows from the fact that $\esub{\varstar}\mol\envtwo$ is an $\alpha$-renaming of $\envthree$ where all bound variables are globally fresh) and $\bv\env \Disj \fv{\esub\varstar\mol\envtwo}$ ($\fv{\esub\varstar\mol\envtwo} = \fv{\envthree}$ by \reflemma{env-fv-alpha}, which are different than the bound variables in $\env$ by well-namedness).

\end{itemize}

  \medskip

\item ${\env \esub\var\vartwo}  \rlsep  \kctx
 \tomachsub
 {\env \isub\var\vartwo}  \rlsep  \kctx $
with $\var\neq\varstar$.

\begin{itemize}
  \item \emph{Every body in $\wstenv\kctx$ is pristine:} obvious because the property holds
  by \ih.

\item \emph{$\env\isub\var\vartwo $ is pristine:} by \ih, $\env \esub\var\vartwo$ is pristine and so $\env$ is pristine.
  Since $\state$ is well-named also $\env \esub\var\vartwo$ is well-named, and thus $\vartwo \not\in \bv\env$. Finally, $\env\isub\var\vartwo$ is pristine by \reflemmap{pristine-properties}{renaming}.
\end{itemize}
  \medskip

  \item ${\env \esub\var\mol}  \rlsep  \kctx
   \tomachcone 
  \env  \rlsep  \kctxp{ {\ctxhole\esub\var\mol}  }
  $
   when $\mol$ is an abstraction or when $\mol$ is $\vartwo$ or $\vartwo\varthree$ but $\vartwo$ is not defined in $\wstenv\kctx$ or $\wstenv\kctx(\vartwo)$ is not a value.

  \begin{itemize}
    \item \emph{Every body in $\wstenv{\kctxp{\ctxhole\esub\var\mol}} =_\reflemmaeqp{unsenv-prefix}{one}\esub\var\mol\wstenv\kctx$ is pristine:} by \ih, ${\env \esub\var\mol}  \rlsep  \kctx$ is \good and therefore $\env \esub\var\mol$ is pristine and thus $\mol$ is a pristine bite. By induction on the structure of $\mol$ one can show that every body in $\mol$ is pristine. By \ih, the property holds for every body of $\wstenv\kctx$ as well, concluding this case.
  \item 
   \emph{$\env$ is pristine} (if it is not empty): it follows by the fact that $\env \esub\var\mol $ is pristine.
  \end{itemize}

   \medskip

\item $\emptyenv  \rlsep  \kctx
 \tomachctwo 
\emptyenv  \lrsep  \kctx$.

\begin{itemize}
  \item \emph{Every body in $\wstenv\kctx$ is pristine:} obvious because the property holds by \ih.
  \item Obvious because $\env$ is pristine.
\end{itemize}

\medskip

\item $\env  \lrsep  \kctxp{ {\ctxhole\esub\var\mol  }}
 \tomachcthree 
{\env \esub\var\mol}  \lrsep  \kctx
$
where $\mol$ is a variable or an application.

\begin{itemize}
  \item \emph{Every body in $\wstenv\kctx$ is pristine:} obvious because the property holds
  by \ih for every body in $\wstenv{\kctxp{ {\ctxhole\esub\var\mol}}} =_\reflemmaeqp{unsenv-prefix}{one} \esub\var\mol\wstenv\kctx$.
  \item Nothing to prove because we are in the $\lrsep$ phase.
\end{itemize}

\medskip

\item $\env  \lrsep  \kctxp{ {\ctxhole\esub\var\val  }}
 \tomachgc 
\env  \lrsep  \kctx$
with $\var \notin \fv\env$.

\begin{itemize}
  \item \emph{Every body in $\wstenv\kctx$ is pristine:} obvious because the property holds
  by \ih for every body in $\wstenv{\kctxp{ {\ctxhole\esub\var\val  }}} =_\reflemmaeqp{unsenv-prefix}{one} \esub\var\val\wstenv\kctx$.
  \item Nothing to prove because we are in the $\lrsep$ phase.
\end{itemize}

\medskip

\item $\env  \lrsep  \kctxp{\envtwo {\esub\var{\la\vartwo\ctxhole }  }}
 \tomachcfour 
{\envtwo \esub\var{\la\vartwo\env}}  \lrsep   \kctx$. 

\begin{itemize}
  \item \emph{Every body in $\wstenv\kctx$ is pristine:} obvious because the property holds
  by \ih for every body in $\wstenv{\kctxp{\envtwo {\esub\var{\la\vartwo\ctxhole }  }}} =_\reflemmaeqp{unsenv-prefix}{two} \wstenv\kctx$.
  \item Nothing to prove because we are in the $\lrsep$ phase.
\end{itemize}

\item $\env  \lrsep  \kctxp{ {\ctxhole\esub\var{\la\vartwo\envtwo} } }
 \tomachcfive 
\envtwo  \rlsep  \kctxp{\env {\esub\var{\la\vartwo\ctxhole}}  }
$ with $\var \in \fv\env$.

\begin{itemize}
  
\item \emph{Every body in $\wstenv{\kctxp{\env {\esub\var{\la\vartwo\ctxhole}}  }} =_\reflemmaeqp{unsenv-prefix}{two} \wstenv\kctx$ is pristine:} obvious because the property holds
     by \ih for every body in $\wstenv{\kctxp{ {\ctxhole\esub\var{\la\vartwo\envtwo} } }} =_\reflemmaeqp{unsenv-prefix}{one}
     \esub\var{\la\vartwo\envtwo}\wstenv\kctx$.
  \item \emph{$\envtwo$ is pristine:} 
  $\envtwo$ is a body of $\wstenv{\kctxp{\ctxhole \esub\var{\la\vartwo\envtwo}}} =_\reflemmaeqp{unsenv-prefix}{one} \esub\var{\la\vartwo\envtwo}\wstenv\kctx$, and is therefore pristine by the previous point.
\qedhere
\end{itemize}

\end{itemize}
\end{proof}



\subsubsection{Well-crumbled invariant}

The Well-crumbling invariant is a technical invariant required to prove
the Garbage invariant. 


\begin{definition}[Well-crumbling]
An environment $\env$ is \emph{well-crumbled} iff
\begin{enumerate}
 \item for every decomposition
  $\env = \env_1 \esub\var\mol \env_2$ such that $\var \in \crnames$ and
  $\mol$ is not an abstraction, $\var \in \fv{\unf{\env_1}}$.
 \item the property holds recursively for the body of every abstraction
   that occurs in $\env$
\end{enumerate}
A state $\env \gensep \kctx$ is \emph{well-crumbled} if $\kctxp\env$ is. 
\end{definition}

\begin{lemma}\label{l:well-crumbled-renaming}
 ~
 \begin{enumerate}
  \item \label{p:well-crumbled-renaming-one} If $\env$ is well-crumbled then $\rename\env$ also is.
  \item \label{p:well-crumbled-renaming-two} If $\env \envtwo$ is well-crumbled
   and $\var \not\in \domain\envtwo$
   then $\env\isub\var\vartwo\envtwo$ is well-crumbled.
 \end{enumerate}
\end{lemma}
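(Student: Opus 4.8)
Both items assert that well-crumbling is preserved by a form of variable renaming, so the natural plan is to prove them by structural induction on $\env$, run mutually with the recursive clause of the definition of well-crumbling for the bodies of the abstractions occurring in $\env$. The two technical ingredients I expect to carry the argument are: (i) the readback commutes with renaming, namely $\unf{\env\isub\var\vartwo} = \unf\env\isub\var\vartwo$ when $\vartwo\notin\bv\env$ (\reflemmap{properties-unfolding}{renaming}), and with $\alpha$-equivalence (\reflemmap{properties-unfolding}{alpha}); and (ii) a variable-to-variable substitution preserves every free variable except the one being renamed, so that if $\varthree\in\fv t$ and $\varthree\neq\var$ then $\varthree\in\fv{t\isub\var\vartwo}$, which follows from \reflemmap{vars-after-subst}{fv} together with the fact that $\isub\var\vartwo$ never erases a subterm. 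In both items I will reduce the defining condition of well-crumbling --- for each decomposition $\gamma_1\esub\varthree\mol\gamma_2$ with $\varthree\in\crnames$ and $\mol$ not an abstraction, the requirement $\varthree\in\fv{\unf{\gamma_1}}$ --- to the corresponding condition on the un-renamed environment.

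\textbf{Item 1.} Since $\rename\env\AlphaEq\env$, I would match each decomposition $\rename\env = \gamma_1\esub\varthree\mol\gamma_2$ to a decomposition $\env = \eta_1\esub{\varthree'}{\mol'}\eta_2$ through the $\alpha$-correspondence. The rules of $\alpha$-equivalence only rename variables within $\crnames\setminus\set\varstar$ (or within $\calcnames$), so $\varthree\in\crnames$ iff $\varthree'\in\crnames$, and $\alpha$-equivalence preserves the shape of bites, so $\mol$ is an abstraction iff $\mol'$ is; hence the relevant decompositions correspond. Applying well-crumbling of $\env$ gives $\varthree'\in\fv{\unf{\eta_1}}$, and since $\gamma_1\AlphaEq\eta_1$ (with $\varthree$ the $\alpha$-image of $\varthree'$) I transfer this to $\varthree\in\fv{\unf{\gamma_1}}$ using $\unf{\gamma_1}\AlphaEq\unf{\eta_1}$ (\reflemmap{properties-unfolding}{alpha}) and the $\alpha$-invariance of free variables (\reflemma{env-fv-alpha}). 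The recursive clause for abstraction bodies is handled by the same correspondence under the induction hypothesis.

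\textbf{Item 2.} Reading $\env\isub\var\vartwo\envtwo$ as $(\env\isub\var\vartwo)\envtwo$, I would first note that the prefix $\env$ is itself well-crumbled, because the defining condition for an ES inside $\env$ only refers to a prefix contained in $\env$, and is therefore inherited from $\env\envtwo$. Then I split the decompositions of $(\env\isub\var\vartwo)\envtwo$ according to where the distinguished ES $\esub\varthree\mol$ lies. If it lies in $\env\isub\var\vartwo$, it is the renaming of an ES of $\env$, and the condition $\varthree\in\fv{\unf{\gamma_1}}$ follows from the analogous condition on $\env$ by pushing $\isub\var\vartwo$ through the readback with \reflemmap{properties-unfolding}{renaming} and applying ingredient (ii). If instead the ES lies in $\envtwo$, say $\envtwo=\envtwo_1\esub\varthree\mol\envtwo_2$, then $\varthree\in\domain\envtwo$, whence $\varthree\neq\var$ since $\var\notin\domain\envtwo$; using the modular readback $\unf{(\env\isub\var\vartwo)\envtwo_1}=\indenv{\envtwo_1}\ctxholep{\unf{(\env\isub\var\vartwo)}\indsub{\envtwo_1}}$ (\reflemma{read-back-decomposition-d}), together with $\var\notin\domain{\indsub{\envtwo_1}}$ (\reflemma{indsub-domain-b}), I compare it with $\unf{\env\envtwo_1}$ and conclude that the free occurrence of $\varthree$ survives the renaming because $\varthree\neq\var$.

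\textbf{Main obstacle.} The delicate point is precisely the free-variable transfer in the second case of Item 2: I must guarantee that replacing $\var$ by $\vartwo$ inside the prefix does not destroy the free occurrence of $\varthree$ witnessing well-crumbling, controlling the interaction between the renaming $\isub\var\vartwo$ and the induced substitution $\indsub{\envtwo_1}$ applied afterwards, as well as the capture-avoidance clauses of $\isub\var\vartwo$ when $\var$ happens to be bound inside $\env$. Making this rigorous is where the monotonicity fact (ii) and the book-keeping of \reflemma{read-back-decomposition-d}, \reflemma{indsub-domain-b}, and \reflemmap{properties-unfolding}{bv} together with \reflemmap{properties-unfolding}{fv} will have to be combined carefully.
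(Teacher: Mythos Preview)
Your approach is correct and uses the same key ingredients as the paper: preservation of the $\crnames$/abstraction shape under renaming, \reflemmap{properties-unfolding}{alpha} and \reflemma{env-fv-alpha} for Item~1, and \reflemmap{properties-unfolding}{renaming} together with the free-variable monotonicity $\fv{\unf\env\isub\var\vartwo}\supseteq\fv{\unf\env}\setminus\{\var\}$ for Item~2.

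The paper's proof, however, is considerably more terse and does not go through your case split or the modular read-back. For Item~2 it argues directly: since a variable-to-variable substitution cannot turn a non-abstraction into an abstraction, the only thing to check is the free-variable condition on prefixes, and \reflemmap{properties-unfolding}{renaming} gives $\fv{\unf{\env\isub\var\vartwo}}=\fv{\unf\env\isub\var\vartwo}\supseteq\fv{\unf\env}\setminus\{\var\}$; the hypothesis $\var\notin\domain\envtwo$ then finishes the argument in one stroke, because every crumbling variable $\varthree$ whose ES needs checking satisfies $\varthree\neq\var$. Your explicit split on whether the ES lies in $\env\isub\var\vartwo$ or in $\envtwo$, and your use of \reflemma{read-back-decomposition-d} for the latter case, unpack exactly what the paper's one-line ``the property follows'' compresses. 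The ``main obstacle'' you identify is thus handled by the paper with this single containment of free-variable sets rather than with the finer bookkeeping you outline; both are valid, yours is just more explicit. One minor remark: in your obstacle you mention the case ``$\var$ happens to be bound inside $\env$'', but the capture-avoidance issue arises when $\vartwo$, not $\var$, clashes with a binder in $\env$; this is also what the side condition of \reflemmap{properties-unfolding}{renaming} is about.
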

\begin{proof}
 We need to prove that:
 \begin{enumerate}
  \item If $\env$ is well-crumbled then $\rename\env$ also is.
   Variables in $\crnames$ can only be renamed into variables in $\crnames$, renaming turns non-abstractions
   into non-abstractions and, by \reflemma{env-fv-alpha} and \reflemmap{properties-unfolding}{alpha}, it does not change the set of free variables in the
   unfolding of an environment.
  \item If $\env \envtwo$ is well-crumbled and $\var \not\in \domain\envtwo$
   then $\env\isub\var\vartwo\envtwo$ is well-crumbled.
   The substitution $\isub\var\vartwo$ cannot turn a non-abstraction into an abstraction and, by \reflemmap{properties-unfolding}{renaming},
   $\fv{\unf{\env\isub\var\vartwo}} = \fv{\unf\env\isub\var\vartwo} \supseteq \fv{\unf\env} \setminus \set{x}$.
   Since $\var \not\in \domain\envtwo$, the property follows.
 \end{enumerate}
\end{proof}

\begin{lemma}\label{l:pristine-well-crumbled}
Every pristine environment is well-crumbled.
\end{lemma}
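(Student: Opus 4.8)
The plan is to prove the statement by (strong) induction on the size of the environment $\env$, treating the two clauses of well-crumbledness separately and handling the bodies of abstractions by appealing to the induction hypothesis on strictly smaller environments. In the base case $\env=\emptyenv$ both clauses hold vacuously: there is no decomposition $\env=\env_1\esub\var\mol\env_2$, and no abstraction occurs in $\emptyenv$. For the inductive step write $\env=\envthree\esub\vartwo\moltwo$. By the definition of pristine for environments, $\envthree$ and $\moltwo$ are pristine, $\vartwo\in\crnames$, and $\unf\envthree=\openctxp\vartwo$ for some open context $\openctx$ with $\vartwo\notin\allvars\openctx$. Since $\size\envthree<\size\env$ and (when $\moltwo=\la\var\envfour$) $\size\envfour<\size\env$, the induction hypothesis applies to both $\envthree$ and $\envfour$.

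For clause $(1)$, fix a decomposition $\env=\env_1\esub\var\mol\env_2$ with $\var\in\crnames$ and $\mol$ not an abstraction, and aim at $\var\in\fv{\unf{\env_1}}$. If $\env_2=\emptyenv$, the marked $\mathsf{ES}$ is the last one, so $\var=\vartwo$, $\env_1=\envthree$, and $\mol=\moltwo$; the goal is then $\vartwo\in\fv{\unf\envthree}$, which holds because $\unf\envthree=\openctxp\vartwo$ and, as $\vartwo\notin\allvars\openctx$, the occurrence of $\vartwo$ plugged at the hole is captured by no binder of $\openctx$, hence free. Otherwise $\env_2=\env_2'\esub\vartwo\moltwo$ and $\envthree=\env_1\esub\var\mol\env_2'$; the induction hypothesis gives that $\envthree$ is well-crumbled, and clause $(1)$ for $\envthree$ on this decomposition yields $\var\in\fv{\unf{\env_1}}$. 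For clause $(2)$, the abstraction-bodies occurring in $\env$ are exactly those occurring in $\envthree$, plus the body $\envfour$ when $\moltwo=\la\var\envfour$. The former are well-crumbled (recursively) because the induction hypothesis makes $\envthree$ well-crumbled, and the latter is well-crumbled because $\moltwo$ pristine forces $\envfour$ pristine, so the induction hypothesis applies to the strictly smaller $\envfour$. When $\moltwo$ is an inert bite ($\var$ or $\var\vartwo$) it contributes no abstraction, so clause $(2)$ is immediate.

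The only genuinely delicate point is the last-$\mathsf{ES}$ case of clause $(1)$: this is where the pristine witness---namely that the read-back $\unf\envthree$ has its hole sitting precisely at the uncaptured variable $\vartwo$---is converted into the free-occurrence requirement $\vartwo\in\fv{\unf\envthree}$ of well-crumbledness. Everything else is bookkeeping on the two possible positions of the marked $\mathsf{ES}$ and a routine use of the induction hypothesis. I would deliberately avoid invoking $\reflemma{aux-aux-aux-most-once}$ (which gives $\fv\env=\fv{\unf\env}$) here, since that lemma assumes well-namedness, an hypothesis not available in the present statement; the uncaptured-occurrence argument above needs only $\vartwo\notin\allvars\openctx$ and the definition of $\fv{\cdot}$.
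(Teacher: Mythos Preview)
Your proof is correct and follows essentially the same approach as the paper. The paper's proof is a two-sentence sketch that observes directly that any prefix $\env_1\esub\var\mol$ of a pristine environment is itself pristine, so the defining condition $\unf{\env_1}=\openctxp\var$ immediately yields $\var\in\fv{\unf{\env_1}}$, and then dismisses clause~(2) as ``trivially proved by structural recursion over $\env$''; your explicit size induction unpacks exactly this argument, handling the last-ES case via the pristine witness and the other decompositions by appealing to the inductive hypothesis on the strictly smaller prefix $\envthree$.
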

\begin{proof}
For every decomposition $\env_1\esub\var\mol\env_2$ of a pristine environment
it holds that $\unf{\env_1} = \openctxp\var$ for some open context $\openctx$.
Thus $\var \in \fv{\unf{\env_1}}$ as required to be well-crumbled.
The fact that the same holds recursively for each body of a pristine environment
is trivially proved by structural recursion over $\env$.
\end{proof}


\begin{theorem}[Well-crumbled invariant]
\label{thm:well-crumbled-invariant-app}
  Let $\state=\env \gensep\kctx$ be a state reachable from an initial state $\state_0$. Then $\state$ is well-crumbled.
\end{theorem}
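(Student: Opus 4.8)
The plan is to follow exactly the inductive scheme already used for the two preceding invariants, \refthm{named-invariant} and \refthm{pristine-invariant-app}: I would argue by induction on the length of the execution $\exec\colon\state_0 \tosmach^* \state$. For the base case the initial state is $\state_0 = \env_0 \rlsep \ctxhole$ with $\env_0 = \mytr\tm$ the crumbling of a well-named \lat{}; by \reflemma{transl-properties-pristine} $\env_0$ is pristine, and by \reflemma{pristine-well-crumbled} every pristine environment is well-crumbled, so $\ctxholep{\env_0} = \env_0$ is well-crumbled and the state is well-crumbled as required.

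For the inductive step I would inspect the last transition $\statetwo \tosmach \state$ of \reffig{scam-phases}, using the induction hypothesis that $\statetwo$ is well-crumbled and freely invoking that both $\statetwo$ and $\state$ are well-named (\refthm{named-invariant}) and pristine (\refthm{pristine-invariant-app}). A direct computation of the plugged environment $\kctxp\env$ shows that the search/switch transitions $\tomachcone$, $\tomachctwo$, $\tomachcthree$, $\tomachcfour$, and $\tomachcfive$ leave $\kctxp\env$ literally unchanged, so well-crumbledness transfers verbatim from the hypothesis. The transition $\tomachsub$ passes from $\kctxp{\env\esub\var\vartwo}$ to $\kctxp{\env\isub\var\vartwo}$; since by \reflemma{aux-most-once} the renamed variable occurs exactly once, this is a genuine renaming and is handled by \reflemmap{well-crumbled-renaming}{two}, the free-variable side condition being preserved by \reflemmap{properties-unfolding}{renaming}. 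The real content therefore concentrates in $\tomachgc$ and in the two $\beta$-transitions.

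For $\tomachgc$ the plugged environment passes from $\kctxp{\env\esub\var\val}$ to $\kctxp\env$, erasing a value-ES with $\var\notin\fv\env$. As $\val$ is an abstraction, the erased ES is never one of the non-abstraction bites scrutinised by the well-crumbled condition, and the side condition $\var\notin\fv\env$ together with well-namedness guarantees that its removal cannot falsify a ``$\var'\in\fv{\unf{\env_1}}$'' requirement for a crumbling variable $\var'$ elsewhere; I would verify this using \reflemmap{properties-unfolding}{fv}. For $\tomachbv$ and $\tomachbi$ the bite $\vartwo\varthree$ is replaced by the body $\esub\varstar\mol\envtwo$ of a copied value $\wstenv\kctx(\vartwo)$, possibly after the renaming $\isub\varfour\varthree$. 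By the Pristine invariant that body is pristine, hence well-crumbled by \reflemma{pristine-well-crumbled}; \reflemmap{well-crumbled-renaming}{one} makes its $\alpha$-renamed copy well-crumbled, and \reflemmap{well-crumbled-renaming}{two} propagates this through $\isub\varfour\varthree$. Note moreover that the freshly created ES $\esub\varfour\varthree$ binds the abstraction parameter $\varfour\in\calcnames$, so it is \emph{not} a scrutinised crumbling-variable ES and imposes no new obligation.

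The step I expect to be the main obstacle is gluing the copied, well-crumbled fragment into the surrounding environment $\env$ and frame while preserving the global condition: for a decomposition $\env_1\esub{\var'}\moltwo\env_2$ of the merged environment with $\var'\in\crnames$ and $\moltwo$ non-abstraction, when $\env_1$ straddles the boundary between the pre-existing part and the newly inlined body I must re-establish $\var'\in\fv{\unf{\env_1}}$ from the corresponding fact in the pristine body. I would control this by the modular read-back decomposition \reflemma{read-back-decomposition-d} together with the free-variable and renaming clauses of \reflemma{properties-unfolding}, exploiting the pristine shape $\unf\env=\openctxp\var$ of the left component to see that prepending $\env\esub\var$ only replaces the hole marker $\varstar$ and hence keeps every genuine crumbling variable $\var' \neq \varstar$ free in the read-back of its prefix.
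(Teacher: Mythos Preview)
Your proposal is correct and follows the same inductive scheme as the paper. There are, however, two places where you take a detour through the Pristine invariant while the paper uses only the well-crumbled induction hypothesis itself.

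First, to conclude that the copied body $\esub\varstar\mol\envtwo$ is well-crumbled, the paper simply invokes the \emph{recursive clause} of the well-crumbled definition: since $\wstenv\kctx(\vartwo)$ is an abstraction occurring in the well-crumbled state $\statetwo$, its body is already well-crumbled by \ih{}---no appeal to pristineness is needed. Second, for the boundary condition $\var\in\fv{\unf\env}$, the paper again uses the \ih{} directly: in $\statetwo$ the variable $\var$ is bound to the non-abstraction $\vartwo\varthree$, so if $\var\in\crnames$ then $\var\in\fv{\unf\env}$ by well-crumbledness of $\statetwo$. Your route via the pristine shape $\unf\env=\openctxp\var$ reaches the same conclusion but is less self-contained.

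Finally, the ``gluing'' concern you flag as the main obstacle is dispatched in the paper by a much simpler observation than the modular read-back analysis you sketch: the only free variables that can disappear from the read-back of a prefix when $\vartwo\varthree$ is replaced by the inlined body are $\vartwo$ (and possibly $\varthree$), and both are bound to \emph{abstractions} in $\wstenv\kctx$, hence never scrutinised by the well-crumbled condition. For $\tomachbi$ one additionally notes that the new ES $\esub\varfour\varthree$ has $\varfour\in\calcnames$ (which you correctly observe) and that $\varthree$ remains free via this ES. This global observation handles all decompositions in $\kctx$ at once, with no need to track substitutions through $\indsub{\wstenv\kctx}$.
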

\begin{proof}

By induction on the execution $\exec:\state_0 \tosmach^* \state$.
 If $\exec$ is empty then $\state = \state_0$ and, by definition of initial state, $\state_0 = \env_0 \rlsep \ctxhole$ for some well-named and pristine environment $\env_0$: $\state_0$ is well-crumbled by definition if $\env_0$ is well-crumbled, which holds by
 \reflemma{pristine-well-crumbled}.

 If $\exec$ is non-empty we look at the last transition $\statetwo  \tosmach \state$, knowing by \ih that the well-crumbled invariant holds
for $\statetwo$:

\begin{itemize}
  
\item 
$\env \esub\var{\vartwo\,\varthree} \rlsep \kctx
 \tomachbv 
\env (\esub\var \mol \envtwo \isub\varfour\varthree) \rlsep    \kctx$
with $\rename{(\wstenv\kctx(\vartwo))} = \la\varfour(\esub{\varstar}\mol\envtwo) $ and $\wstenv\kctx(\varthree)= \val$ for some $\val$.

     By \ih $\env \esub\var{\vartwo\,\varthree} \rlsep \kctx$ is well-crumbled.
     Since $\wstenv\kctx(\vartwo)$ is an abstraction that occurs in the well-crumbled state $\env \esub\var{\vartwo\,\varthree} \rlsep \kctx$,
     its body must be well-crumbled and therefore, by \reflemmap{well-crumbled-renaming}{one}, also $\esub{\varstar}\mol\envtwo$ is well-crumbled
     and, by \reflemmap{well-crumbled-renaming}{two}, also $\esub{\varstar}\mol\envtwo\isub\varfour\varthree$ is well-crumbled.
     The hypothesis over $\varthree$ used for applying \reflemmap{well-crumbled-renaming}{two} is trivially satisfied because
     $\varthree$ is a fresh variable that does not occur in $\kctx$ at all.
     Since $\var$ is bound to a non-abstraction, if $\var \in \crnames$ then $\var \in \fv{\unf\env}$. Therefore
     \sloppy $\env(\esub{\var}\mol\envtwo\isub\varfour\varthree)$ is well-crumbled.
     We conclude that $\env (\esub\var \mol \envtwo \isub\varfour\varthree) \rlsep \kctx$ is well-crumbled by noting that
     $\vartwo$ and $\varthree$ are bound to abstractions. 
  
     \medskip

\item 
$\env \esub\var{\vartwo\,\varthree}  \rlsep   \kctx
  \tomachbi 
\env \esub\var \mol \envtwo  \rlsep  \kctxp{ \ctxhole\esub\varfour\varthree   }$
with $\rename{(\wstenv\kctx(\vartwo))} = \la\varfour(\esub{\varstar}\mol\envtwo) $ and $\wstenv\kctx(\varthree)= \itm$ for some inert term $\itm$.

     By \ih $\env \esub\var{\vartwo\,\varthree} \rlsep \kctx$ is well-crumbled.
     Since $\wstenv\kctx(\vartwo)$ is an abstraction that occurs in the well-crumbled state $\env \esub\var{\vartwo\,\varthree} \rlsep \kctx$,
     its body must be well-crumbled and therefore, by \reflemmap{well-crumbled-renaming}{one}, also $\esub{\varstar}\mol\envtwo$ is well-crumbled
     and thus $\esub{\varstar}\mol\envtwo\esub\varfour\varthree$ is also well-crumbled because $\varfour \in \calcnames$.
     Since $\var$ is bound to a non-abstraction, if $\var \in \crnames$ then $\var \in \fv{\unf\env}$. Therefore
     $\env\esub{\var}\mol\envtwo\esub\varfour\varthree$ is well-crumbled.
     We conclude that $\env \esub\var \mol \envtwo \esub\varfour\varthree \rlsep \kctx$ is well-crumbled by noting that
     $\vartwo$ is bound to an abstraction and that $\varthree$ occurs in the unfolding of the environment because $\esub\varfour\varthree$ does too because $\varfour \in \calcnames$. 

     \medskip

\item ${\env \esub\var\vartwo}  \rlsep  \kctx
 \tomachsub
 {\env \isub\var\vartwo}  \rlsep  \kctx $
with $\var\neq\varstar$.

  Obvious because by \ih ${\env \esub\var\vartwo}  \rlsep  \kctx$ is well-crumbled and by \reflemmap{well-crumbled-renaming}{two}.

     \medskip

\item ${\env \esub\var\mol}  \rlsep  \kctx
     \tomachcone 
\env  \rlsep  \kctxp{ {\ctxhole\esub\var\mol}  }
$
     when $\mol$ is an abstraction or when $\mol$ is $\vartwo$ or $\vartwo\varthree$ but $\vartwo$ is not defined in $\wstenv\kctx$ or $\wstenv\kctx(\vartwo)$ is not a value.

     \sloppy $\env  \rlsep  \kctxp{ {\ctxhole\esub\var\mol}  }$ is obviously well-crumbled because ${\env \esub\var\mol}  \rlsep  \kctx$ is well-crumbled by \ih and, by definition of well-crumbled context, both the
hypothesis and the conclusion require $\kctxp{\env\esub\var\mol}$ to be well-crumbled.

     \medskip

\item $\emptyenv  \rlsep  \kctx
 \tomachctwo 
\emptyenv  \lrsep  \kctx$

$\emptyenv  \lrsep  \kctx$ is obviously well-crumbled because
     $\emptyenv \rlsep \kctx$ is well-crumbled by \ih.

     \medskip

\item $\env  \lrsep  \kctxp{ {\ctxhole\esub\var\mol  }}
 \tomachcthree 
{\env \esub\var\mol}  \lrsep  \kctx
$
where $\mol$ is a variable or an application.

\sloppy ${\env \esub\var\mol}  \lrsep  \kctx$ is obviously well-crumbled because $\env \lrsep \kctxp{\ctxhole\esub\var\mol}$ is well-crumbled by \ih and, by definition of well-crumbled context, both the
     hypothesis and the conclusion require $\kctxp{\env\esub\var\mol}$ to be well-crumbled.

     \medskip

\item $\env  \lrsep  \kctxp{ {\ctxhole\esub\var\val  }}
 \tomachgc 
\env  \lrsep  \kctx$
with $\var \notin \fv\env$.

$\env  \lrsep  \kctx$ is obviously well-crumbled because by \ih $\env \lrsep \kctxp{\ctxhole\esub\var\val}$ is well-crumbled.

     \medskip

\item $\env  \lrsep  \kctxp{\envtwo {\esub\var{\la\vartwo\ctxhole }  }}
 \tomachcfour 
{\envtwo \esub\var{\la\vartwo\env}}  \lrsep   \kctx$. 
\sloppy To prove that ${\envtwo \esub\var{\la\vartwo\env}}  \lrsep   \kctx$ is well-crumbled, simply note that $\env  \lrsep  \kctxp{\envtwo {\esub\var{\la\vartwo\ctxhole }  }}$ is well-crumbled by \ih and, by definition of well-crumbled context, both the hypothesis and the conclusion require $\kctxp{\envtwo {\esub\var{\la\vartwo\env}}}$ to be well-crumbled.

     \medskip

\item $\env  \lrsep  \kctxp{ {\ctxhole\esub\var{\la\vartwo\envtwo} } }
 \tomachcfive 
\envtwo  \rlsep  \kctxp{\env {\esub\var{\la\vartwo\ctxhole}}  }
$ with $\var \in \fv\env$.

$\envtwo  \rlsep  \kctxp{\env {\esub\var{\la\vartwo\ctxhole}}  }$ is obviously well-crumbled because $\env  \lrsep  \kctxp{ {\ctxhole\esub\var{\la\vartwo\envtwo} } }$ is well-crumbled by \ih and, by definition of well-crumbled context, both the hypothesis and the conclusion require $\kctxp{\env {\esub\var{\la\vartwo\envtwo}}  }$ to be well-crumbled.

\qedhere
\end{itemize}
\end{proof}

\subsubsection{Garbage invariant}

The Garbage invariant basically guarantees that the read-back $\unf\kctx$ of a machine context $\kctx$ is a proper multi-context. We formulate the invariant directly on the frame of $\kctx$, since being garbage-free is a property enforced in the already evaluated part of a state.


\begin{definition}[Garbage-free]~
  \begin{itemize}
  \item \emph{Environments:} an environment $\env$ is garbage-free if $\vartwo \in \fv\env$ implies $\vartwo \in \fv{\unf\env}$.

  \item \emph{Frames:} a frame $\frame$ is garbage-free if
 \begin{itemize}
  \item $\frame = \ctxhole$, or
  \item $\frame = \env\esub\var{\la\vartwo\frametwo}$ and
  \begin{itemize}
    \item $\env$ is garbage-free,
    \item $\var\in \fv\env$, and 
    \item $\frametwo$ is garbage-free.
  \end{itemize}  
 \end{itemize}
 \item \emph{States:} a state $\state=\env\gensep\kctx$ is garbage-free if $\framei\kctx$ is garbage-free and when ${\gensep} = {\lrsep}$ then $\env$ is garbage-free. 
\end{itemize}
\end{definition} 

Garbage-free frames are decomposable:
\begin{lemma}[Garbage-free decomposition]\label{l:gcfree-prop}
 ~
  \label{p:gcfree-prop-two} $\framep{\frametwo}$ is garbage-free iff
   $\frame$ and $\frametwo$ are.
\end{lemma}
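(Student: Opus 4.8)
The plan is to prove the equivalence by structural induction on the frame $\frame$. The key observation driving the argument is that each clause defining garbage-freeness of a frame splits into two \emph{local} conditions—that the environment $\env$ at that level is garbage-free and that $\var\in\fv\env$—plus a single \emph{recursive} condition on the inner frame. Since plugging only touches the innermost hole of $\frame$, the local conditions at every level are untouched, while the recursive condition is precisely where the inductive hypothesis feeds in.

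In the base case $\frame=\ctxhole$ I would note that $\framep\frametwo=\frametwo$ and that $\ctxhole$ is garbage-free by definition; thus both sides of the equivalence collapse to ``$\frametwo$ is garbage-free'' and there is nothing more to prove.

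In the inductive case $\frame=\env\esub\var{\la\vartwo{\frame'}}$, plugging yields $\framep\frametwo=\env\esub\var{\la\vartwo{\frame'\ctxholep\frametwo}}$. Unfolding the definition of garbage-free on this frame, it is garbage-free iff $\env$ is garbage-free, $\var\in\fv\env$, and $\frame'\ctxholep\frametwo$ is garbage-free. Applying the inductive hypothesis to $\frame'$ rewrites the last conjunct as ``$\frame'$ and $\frametwo$ are garbage-free''. Collecting the conjuncts, $\framep\frametwo$ is garbage-free iff $\env$ is garbage-free, $\var\in\fv\env$, $\frame'$ is garbage-free, and $\frametwo$ is garbage-free; the first three of these are by definition exactly ``$\frame$ is garbage-free'', so the conjunction is equivalent to ``$\frame$ and $\frametwo$ are garbage-free'', which is the claim.

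I expect no real obstacle here: the proof is a routine rearrangement of conjunctions mediated by the inductive hypothesis. The single point I would make explicit is that ``$\env$ garbage-free'' and ``$\var\in\fv\env$'' depend only on the environment occurring at each level of $\frame$, and not on the term plugged into its hole; this is what guarantees that plugging $\frametwo$ leaves these conditions invariant and makes the two unfoldings line up.
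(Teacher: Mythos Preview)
Your proof is correct and follows essentially the same approach as the paper: structural induction on $\frame$, with the base case immediate and the inductive case obtained by unfolding the definition of garbage-free, applying the inductive hypothesis to the inner frame, and regrouping the conjuncts. Your write-up is in fact more explicit than the paper's, which dispatches the inductive case in a single sentence.
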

\begin{proof}
By structural induction over $\frame$.
   \begin{itemize}
    \item Case $\ctxhole$. By definition.
    \item Case $\env\esub\var{\la\vartwo\framethree}$. We need to prove that
     $\env\esub\var{\la\vartwo\framethree}$ is garbage-free iff
     $\env\esub\var{\la\vartwo\framethreep\frametwo}$ is. The property follows from the \ih over $\frametwo$ and the definition
      of garbage-free context.
   \end{itemize}
\end{proof}


\begin{theorem}[Garbage-free invariant]
\label{thm:garbage-free-invariant-app}
  Let $\state=\env \gensep\kctx$ be a state reachable from an initial state $\state_0$. Then $\state$ is garbage-free.
\end{theorem}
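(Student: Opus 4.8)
The plan is to proceed by induction on the length of the execution $\exec\colon \state_0 \tosmach^* \state$, following exactly the schema of the previous invariants (\refthm{named-invariant}, \refthm{pristine-invariant-app}, \refthm{well-crumbled-invariant-app}). For the base case $\state_0 = \env_0 \rlsep \ctxhole$ we have $\framei{\ctxhole} = \ctxhole$, which is garbage-free by definition, and since the separator is $\rlsep$ there is no condition to check on $\env_0$. For the inductive step I would inspect the last transition $\statetwo \tosmach \state$ and, assuming $\statetwo$ garbage-free, verify the two requirements for $\state = \env \gensep \kctx$: that $\framei\kctx$ is garbage-free, and, only when $\gensep = \lrsep$, that $\env$ is garbage-free.

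The transitions landing in a $\rlsep$-state ($\tomachbv$, $\tomachbi$, $\tomachsub$, $\tomachcone$) are immediate: the frame is either literally unchanged or satisfies $\framei{\kctxp{\ctxhole\esub\var\mol}} = \framei\kctx$ by \reflemmap{wk-st-lam}{b}, so it stays garbage-free, and no left-environment condition arises. The switch $\tomachctwo$ only needs the empty environment to be garbage-free, which holds since $\fv\emptyenv = \fv{\unf\emptyenv} = \set{\varstar}$, and $\tomachgc$ leaves both the frame (again \reflemmap{wk-st-lam}{b}) and the left environment unchanged.

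The frame-modifying transitions $\tomachcfour$ and $\tomachcfive$ are handled with both parts of \reflemma{wk-st-lam} together with the decomposition \reflemma{gcfree-prop}. In each case one side rewrites $\framei{\kctxp{\ctxhole\esub\var\mol}} = \framei\kctx$ via \reflemmap{wk-st-lam}{b}, and the other rewrites a frame of the form $\framei{\kctxp{\env'\esub\var{\la\vartwo\ctxhole}}} = \framei\kctx\ctxholep{\env'\esub\var{\la\vartwo\ctxhole}}$ via \reflemmap{wk-st-lam}{a}, which by \reflemma{gcfree-prop} is garbage-free exactly when $\framei\kctx$ is and $\env'\esub\var{\la\vartwo\ctxhole}$ is a garbage-free frame, i.e. $\env'$ is garbage-free and $\var \in \fv{\env'}$. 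The condition $\var \in \fv{\env'}$ is supplied by the side condition of $\tomachcfive$ and, for $\tomachcfour$, by the induction hypothesis that the old frame slot is garbage-free. Since $\tomachcfour$ additionally ends in a $\lrsep$-state, one also checks that its new left environment $\envtwo\esub\var{\la\vartwo\env}$ is garbage-free: as its binding is a value, the read-back is $\unf\envtwo\isub\var{\la\vartwo\unf\env}$, and this substitution erases nothing because $\var \in \fv\envtwo \subseteq \fv{\unf\envtwo}$, the inclusion holding by garbage-freeness of $\envtwo$.

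The main obstacle is $\tomachcthree$, which appends a non-value bite to the left environment, producing the new $\lrsep$ left environment $\env\esub\var\mol$, whose garbage-freeness must be established from scratch. Here the read-back can collapse an explicit substitution: when $\var \in \crnames$ we have $\unf{(\env\esub\var\mol)} = \unf\env\isub\var{\unf\mol}$, and this substitution preserves $\fv{\unf\mol}$ only if $\var$ actually occurs in $\unf\env$. This is exactly where the \emph{well-crumbled} invariant (\refthm{well-crumbled-invariant-app}), applied to the reachable state $\state$ itself, is needed: since $\mol$ is not a value, hence not an abstraction, and the plugged environment $\kctxp{\env\esub\var\mol}$ is well-crumbled with $\env$ the prefix preceding the slot $\esub\var\mol$, the defining clause of well-crumbledness yields $\var \in \fv{\unf\env}$. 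Combined with the induction hypothesis that $\env$ is garbage-free and with the fact that an inert bite satisfies $\fv\mol = \fv{\unf\mol}$, a direct comparison of $\fv{(\env\esub\var\mol)} = \fv\env \setminus \set{\var} \cup \fv\mol$ against $\fv{\unf{(\env\esub\var\mol)}}$ closes the case (the $\var \in \calcnames$ case being trivial, as the explicit substitution survives the read-back). I expect this coupling of garbage-freeness with well-crumbledness in $\tomachcthree$ to be the only genuinely delicate point; all remaining cases reduce to bookkeeping with the frame lemmas.
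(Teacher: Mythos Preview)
Your proposal is correct and matches the paper's proof essentially step for step: induction on the execution, trivial $\rlsep$-cases via \reflemmap{wk-st-lam}{b}, the frame-shuffling cases $\tomachcfour$/$\tomachcfive$ via \reflemmap{wk-st-lam}{a} together with \reflemma{gcfree-prop}, and---crucially---the $\tomachcthree$ case handled by invoking the well-crumbled invariant to ensure $\var \in \fv{\unf\env}$ when $\var\in\crnames$. The only tiny imprecision is that for $\tomachcfour$ you do not actually need \reflemmap{wk-st-lam}{b} (the target context is plain $\kctx$), and in the garbage-freeness check for the new left environment $\envtwo\esub\var{\la\vartwo\env}$ you should also mention that variables free in $\env$ survive because $\env$ is garbage-free by \ih; but these are cosmetic and the argument goes through.
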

\begin{proof}

By induction on the execution $\exec:\state_0 \tosmach^* \state$.
 If $\exec$ is empty then $\state = \state_0$ and, by definition of initial state, $\state_0 = \env_0 \rlsep \ctxhole$ for some well-named and pristine environment $\env_0$: then $\state_0$ is garbage-free by definition of garbage-free state.

 If $\exec$ is non-empty we look at the last transition $\statetwo  \tosmach \state$, knowing by \ih that the garbage-free invariant holds
for $\statetwo$:

\begin{itemize}
  
\item 
$\env \esub\var{\vartwo\,\varthree} \rlsep \kctx
 \tomachbv 
\env (\esub\var \mol \envtwo \isub\varfour\varthree) \rlsep    \kctx$
with $\rename{(\wstenv\kctx(\vartwo))} = \la\varfour(\esub{\varstar}\mol\envtwo) $ and $\wstenv\kctx(\varthree)= \val$ for some $\val$.

$\env (\esub\var \mol \envtwo \isub\varfour\varthree) \rlsep \kctx$ is garbage-free iff $\framei\kctx$ is garbage-free. The property holds because, by \ih, $\env \esub\var{\vartwo\,\varthree} \rlsep \kctx$ is garbage-free.

     \medskip

\item 
$\env \esub\var{\vartwo\,\varthree}  \rlsep   \kctx
  \tomachbi 
\env \esub\var \mol \envtwo  \rlsep  \kctxp{ \ctxhole\esub\varfour\varthree   }$
with $\rename{(\wstenv\kctx(\vartwo))} = \la\varfour(\esub{\varstar}\mol\envtwo) $ and $\wstenv\kctx(\varthree)= \itm$ for some inert term $\itm$.

$\env \esub\var \mol \envtwo  \rlsep  \kctxp{ \ctxhole\esub\varfour\varthree   }$ is garbage-free iff $\framei{\kctxp{ \ctxhole\esub\varfour\varthree   }}=_{\reflemmaeqp{wk-st-lam}{b}} \framei{\kctx}$ is garbage-free. The property holds because, by \ih, $\env \esub\var{\vartwo\,\varthree} \rlsep \framei\kctx$ is garbage-free.

     \medskip

\item ${\env \esub\var\vartwo}  \rlsep  \kctx
 \tomachsub
 {\env \isub\var\vartwo}  \rlsep  \kctx $
with $\var\neq\varstar$.

${\env\isub\var\vartwo}  \rlsep  \kctx$ is garbage-free iff $\framei\kctx$ is garbage-free. The property holds because, by \ih, ${\env \esub\var\vartwo}  \rlsep  \kctx$ is garbage-free.

     \medskip

     \item ${\env \esub\var\mol}  \rlsep  \kctx
      \tomachcone 
     \env  \rlsep  \kctxp{ {\ctxhole\esub\var\mol}  }
     $
      when $\mol$ is an abstraction or when $\mol$ is $\vartwo$ or $\vartwo\varthree$ but $\vartwo$ is not defined in $\wstenv\kctx$ or $\wstenv\kctx(\vartwo)$ is not a value.

      $\env  \rlsep  \kctxp{ {\ctxhole\esub\var\mol}  }$ is garbage-free iff $\framei{\kctxp{ \ctxhole\esub\var\mol   }}=_{\reflemmaeqp{wk-st-lam}{b}} \framei{\kctx}$. The property holds because $ \framei{\kctx}$ is garbage free because, by \ih, $\env \esub\var\mol \rlsep \kctx$ is garbage-free.

     \medskip

\item $\emptyenv  \rlsep  \kctx
 \tomachctwo 
\emptyenv  \lrsep  \kctx$

$\emptyenv  \lrsep  \kctx$ is garbage-free iff $\framei\kctx$ is garbage-free, which holds by \ih, and $\emptyenv$ is garbage-free, which is obvious by definition of
     garbage-free environment.
     \medskip

\item $\env  \lrsep  \kctxp{ {\ctxhole\esub\var\mol  }}
 \tomachcthree 
{\env \esub\var\mol}  \lrsep  \kctx
$
where $\mol$ is a variable or an application.

\sloppy ${\env \esub\var\mol}  \lrsep  \kctx$ is garbage-free iff $\framei\kctx$ and $\env\esub\var\mol$ are garbage-free. By \ih, $\env \lrsep \kctxp{\ctxhole\esub\var\mol}$ is garbage-free, i.e. $\framei{\kctxp{ \ctxhole\esub\var\mol   }}=_{\reflemmaeqp{wk-st-lam}{b}} \framei{\kctx}$ and $\env$ are garbage-free.
     By the well-crumbled invariant, $\env  \lrsep  \kctxp{ {\ctxhole\esub\var\mol  }}$ is well-crumbled and therefore, because $\var$ is bound to a non-abstraction,
     if $\var \in \crnames$ then $\var \in \fv{\unf\env}$. Therefore any variable that occurs free in $\mol$ also occurs free in
     $\unf{\env\esub\var\mol}$ that is either $\unf\env\isub\var\mol$, if $\var \in \crnames$, or $\unf\env\esub\var\mol$.

     \medskip

\item $\env  \lrsep  \kctxp{ {\ctxhole\esub\var\val  }}
 \tomachgc 
\env  \lrsep  \kctx$
with $\var \notin \fv\env$.

$\env  \lrsep  \kctx$ is garbage-free iff $\framei\kctx$ and $\env$ are garbage-free. The property holds because, by \ih, $\env  \lrsep  \kctxp{ {\ctxhole\esub\var\val  }}$ is garbage-free, i.e. $\env$ and $\framei{\kctxp{ \ctxhole\esub\var\val   }}=_{\reflemmaeqp{wk-st-lam}{b}} \framei{\kctx}$ are garbage-free.

     \medskip

\item $\env  \lrsep  \kctxp{\envtwo {\esub\var{\la\vartwo\ctxhole }  }}
 \tomachcfour 
{\envtwo \esub\var{\la\vartwo\env}}  \lrsep   \kctx$. 

${\envtwo \esub\var{\la\vartwo\env}}  \lrsep   \kctx$ is garbage-free iff $\kctx$ and $\envtwo \esub\var{\la\vartwo\env}$ are garbage-free. By \ih, $\env \lrsep \kctxp{\envtwo {\esub\var{\la\vartwo\ctxhole }  }}$ is garbage-free, i.e. $\framei{\kctxp{\envtwo {\esub\var{\la\vartwo\ctxhole }  }}}=_{\reflemmaeqp{wk-st-lam}{a}} \framei{\kctx}\ctxholep{\envtwo {\esub\var{\la\vartwo\ctxhole }  }}$ and $\env$ are garbage-free.
By \reflemma{gcfree-prop}, $\framei\kctx$ and $\framei{\envtwo {\esub\var{\la\vartwo\ctxhole }  }}$ are garbage-free and thus
$\envtwo$ is garbage-free and $\var \in \fv\envtwo$ by definition of garbage-free context. In order to conclude
that $\envtwo \esub\var{\la\vartwo\env}$ is garbage-free we need to show that every variable that occurs free in $\envtwo \esub\var{\la\vartwo\env}$ occurs free in $\unf{\envtwo \esub\var{\la\vartwo\env}} = \unf\envtwo \isub\var{\la\vartwo\unf\env}$. A variable that occurs free in $\envtwo \esub\var{\la\vartwo\env}$ occurs free either in $\envtwo$ and thus in $\unf{\envtwo}$ because $\envtwo$ is garbage-free, or in $\env$ and thus in $\unf\env$ because $\env$ is garbage-free. Therefore it occurs free in $\unf\envtwo \isub\var{\la\vartwo\unf\env}$ because $\var \in \unf\env$ because $\var \in \envtwo$ and $\envtwo$ is garbage-free.

     \medskip

\item $\env  \lrsep  \kctxp{ {\ctxhole\esub\var{\la\vartwo\envtwo} } }
 \tomachcfive 
\envtwo  \rlsep  \kctxp{\env {\esub\var{\la\vartwo\ctxhole}}  }
$ with $\var \in \fv\env$.

$\envtwo  \rlsep  \kctxp{\env {\esub\var{\la\vartwo\ctxhole}}  }$ is garbage-free iff $\framei{\kctxp{\env {\esub\var{\la\vartwo\ctxhole}}  }}$ is garbage-free. By \ih, $\env  \lrsep  \kctxp{ {\ctxhole\esub\var{\la\vartwo\envtwo} } }$ is garbage-free, i.e. $\framei{\kctxp{ {\ctxhole\esub\var{\la\vartwo\envtwo} } }} =_{\reflemmaeqp{wk-st-lam}{b}} \framei\kctx$ and $\env$ are garbage-free. By \reflemma{gcfree-prop}, it is sufficient to prove
    that $\env {\esub\var{\la\vartwo\ctxhole}}$ is garbage-free, i.e. that $\env$ is garbage-free, which we already proved, and that $\var \in \fv\env$, which holds by hypothesis.

\qedhere
\end{itemize}
\end{proof}

\subsubsection{Good invariant} \label{subsect:last-invariant}

The good invariant is definitely the most complex one.
The fundamental property, which is part of the requirement for a good state
$\env \gensep \kctx$, is that $\unf\kctx$ is a fine context.

However, in order to show that this holds as an invariant for all reachable states,
the notion of good state must be strengthened by imposing strict, technical requirements on
various fragments of the machine state.

One such requirement is called \emph{compatibility} and it is imposed on the environment that is being evaluated with respect to the substitution $\indsub{\wstenv\kctx}$ originated by the enclosing machine context $\kctx$.

\begin{definition}[Compatibility with a fireball substitution]
  Let $\sfire$ be a strong fireball. We say that \emph{$\sfire$ is compatible with a (fireball) substitution $\sigma$} if whenever a variable $\var$ such that $\sigma(\var) = \val$ occurs free in $\sfire$ then it does as the argument of an application. Compatibility for other syntactic categories, \eg external multi contexts, is defined similarly.
\end{definition}

Another fundamental requirement is called \emph{well-framing} and it is imposed on the frame $\framei\kctx$.
The frame $\framei\kctx$ is the part of the context that has already been strongly evaluated.
The induced property is that $\unf{\framei\kctx}$ is a normal fine
multi-context such that applying the substitution $\indsub{\wstenv\kctx}$ does not create
new redexes in the already computed part.

The well-framed requirement w.r.t. a substitution is a strengthening of that property that
requires it to hold hereditarily, since this is necessary in order to propagate it in the
proof that all reachable states are good.

\begin{definition}[Well-framed]
 A frame $\frame$ is \emph{well-framed} w.r.t. a substitution $\sigma$ if, for every decomposition
 $\frame = \frametwop\framethree$, $\unf\frametwo$ is a normal fine multi-context
 compatible with $\sigma$.
\end{definition}

The following lemma is a trivial technical property over well-framed frames.

\begin{lemma}\label{l:well-framed-match}
If $\framep\frametwo$ is well-framed w.r.t. $\sigma$, then $\frame$ is well-framed
w.r.t. $\sigma$.
\end{lemma}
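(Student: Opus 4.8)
The plan is to prove the statement directly from the definition of well-framedness, exploiting the fact that appending a frame at the innermost position leaves all the outer prefixes untouched: every prefix of $\frame$ is also a prefix of the larger frame $\framep\frametwo$. To avoid a clash with the bound variables of the definition, I will write $\framefour$ for the frame denoted $\frametwo$ in the statement, so the hypothesis reads ``$\framep\framefour$ is well-framed w.r.t.\ $\sigma$''. Unfolding the goal, to show that $\frame$ is well-framed w.r.t.\ $\sigma$ I must check that for \emph{every} decomposition $\frame = \frametwop\framethree$ into frames, the read-back $\unf\frametwo$ is a normal fine multi context compatible with $\sigma$. So I would fix an arbitrary such decomposition and reduce the goal to an instance of the hypothesis.

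First I would record two elementary facts about frame plugging, both immediate from the grammar $\frame \grameq \ctxhole \mid \env\esub\var{\la\vartwo\frame}$: \emph{(i)} plugging a frame into a frame again yields a frame, so $\framethreep\framefour$ is a frame; and \emph{(ii)} plugging is associative, so that from $\frame = \frametwop\framethree$ one obtains $\framep\framefour = \frametwop{(\framethreep\framefour)}$. Both are routine inductions (on $\framefour$ and on $\frametwo$, respectively) that I would not spell out. Their combined effect is that the identity $\framep\framefour = \frametwop{(\framethreep\framefour)}$ exhibits $\frametwo$ as the prefix of a genuine decomposition of $\framep\framefour$ into the two frames $\frametwo$ and $\framethreep\framefour$.

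With this in hand the conclusion is immediate: since $\framep\framefour$ is well-framed w.r.t.\ $\sigma$ by hypothesis, applying the definition to the decomposition $\framep\framefour = \frametwop{(\framethreep\framefour)}$ yields exactly that $\unf\frametwo$ is a normal fine multi context compatible with $\sigma$. As the decomposition $\frame = \frametwop\framethree$ was arbitrary, every prefix of $\frame$ read-backs to a normal fine multi context compatible with $\sigma$, which is precisely the assertion that $\frame$ is well-framed w.r.t.\ $\sigma$. I do not expect any genuine obstacle: the only content is the bookkeeping identifying the prefix decompositions of $\frame$ with a subfamily of those of $\framep\frametwo$, while the properties ``normal'', ``fine'', and ``compatible with $\sigma$'' are transported verbatim from the hypothesis and require no separate verification.
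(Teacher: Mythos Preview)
Your proof is correct and follows the same approach as the paper's: every decomposition $\frame = \frame_1\ctxholep{\frame_2}$ induces a decomposition $\framep\frametwo = \frame_1\ctxholep{\frame_2\ctxholep\frametwo}$, so the required property of $\unf{\frame_1}$ is inherited directly from the well-framedness of $\framep\frametwo$. The paper compresses this into two sentences, but the content is identical.
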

\begin{proof}
Every decomposition $\frame = \frame_1\ctxholep{\frame_2}$ induces a decomposition
$\framep\frametwo = \frame_1\ctxholep{\frame_2\ctxholep\frametwo}$. The statement
follows by definition of well-framed frame.
\end{proof}

We are ready to define formally the good property:

\begin{definition}[Good stuff]
An environment $\envtwo$ is \emph{open \good} if 
\begin{itemize}
    \item $\indsub\envtwo$ is a fireball substitution;    
    \item $\indenv\envtwo$ is an inert context.
    \item $\envtwo$ has immediate values.    
  \end{itemize}  
A context $\kctx$ is \emph{good} when 
 \begin{itemize}
  \item $\wstenv\kctx$ is open good;
  \item $\framei\kctx$ is well-framed w.r.t. $\indsub{\wstenv\kctx}$;
  \item $\unf\kctx $ is a fine context.
 \end{itemize}  
A state $\env \gensep \kctx$ is \emph{good} if
\begin{itemize}
  \item $\kctx$ is good and 
  \item if ${\gensep} = {\lrsep}$ then $\unf\env$ is a strong fireball compatible with $\indsub{\wstenv\kctx}$.
  \end{itemize} 
\end{definition}

As already mentioned, the proof of the Good invariant is quite involved. Before proving that all reachable states are good (\refthm{goodness-invariants}) we need a good number of auxiliary results, which we prove in the following paragraphs. Since \SCAM{} transitions can add or remove ES from the machine state, we are going to show that goodness is stable under the addition and removal of ES, under suitable conditions. In order to do that, we need to prove multiple corresponding properties for multi contexts and compatible substitutions.

\paragraph{Basic properties of multi contexts}

In this paragraph we prove a couple of general properties of multi contexts that are required in the next paragraph.

The first lemma allows to see terms as (non-proper) multi contexts.

\begin{lemma}\hfill
\label{l:terms-to-multi-ctxs}
  \begin{enumerate}
  \item \label{p:terms-to-multi-ctxs-rigid-term}
  Every rigid term $\rtm$ is a rigid multi context with no holes.
  \item \label{p:terms-to-multi-ctxs-inert-ctx}
  Every inert context is a fine multi context.
  \end{enumerate} 
\end{lemma}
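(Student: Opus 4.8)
The plan is to prove both points by a direct structural induction, exploiting the fact that the grammar of rigid multi contexts $\rmctx$ literally extends the grammar of rigid (pointed) terms, and that the grammar of external multi contexts $\strongmctx$ contains the production $\strongmctx \grameq \tm$, which turns every term into a hole-free external multi context. Point~\ref{p:terms-to-multi-ctxs-inert-ctx} will then be obtained by reusing point~\ref{p:terms-to-multi-ctxs-rigid-term}.

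For point~\ref{p:terms-to-multi-ctxs-rigid-term}, I would proceed by structural induction on the rigid term $\rtm$ following its grammar $\rtm \grameq \var \mid \rtm\,\tm \mid \rtm\esub\var\rtmtwo$. In the base case $\rtm = \var$, the variable is directly generated by the production $\rmctx \grameq \var$, and it has no holes. If $\rtm = \rtmtwo\,\tm$, then by \ih{} $\rtmtwo$ is a rigid multi context with no holes; moreover $\tm$ is an external multi context with no holes via the production $\strongmctx \grameq \tm$; hence $\rtmtwo\,\tm$ is generated by $\rmctx \grameq \rmctx\,\strongmctx$ and, being the juxtaposition of two hole-free pieces, has no holes. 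If $\rtm = \rtmtwo\esub\var\rtmthree$, both $\rtmtwo$ and $\rtmthree$ are, by \ih{}, rigid multi contexts with no holes, so $\rtmtwo\esub\var\rtmthree$ is generated by $\rmctx \grameq \rmctx\esub\var\rmctx$ and again has no holes.

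For point~\ref{p:terms-to-multi-ctxs-inert-ctx}, recall that \emph{fine} means external and proper, and that an inert context is a substitution context of the form $\ctxhole\esub{\var_1}{\itm_1}\cdots\esub{\var_n}{\itm_n}$ whose explicit substitutions contain only inert terms. Properness is immediate, since the context retains the hole $\ctxhole$. To show it is external, I would induct on $n$: the base case $\ctxhole$ is external via $\strongmctx \grameq \ctxhole$; for the inductive step, the prefix $\ctxhole\esub{\var_1}{\itm_1}\cdots\esub{\var_{n-1}}{\itm_{n-1}}$ is external by \ih{}, while the filler $\itm_n$, being an inert term, is a rigid (pointed) term (every inert term is a pointed term; for strong inert terms this is a straightforward induction along their grammar), hence a rigid multi context with no holes by point~\ref{p:terms-to-multi-ctxs-rigid-term}. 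Therefore the whole context is generated by the production $\strongmctx \grameq \strongmctx\esub\var\rmctx$ and is external.

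The statement has no genuine obstacle; the only delicate points are bookkeeping ones. In point~\ref{p:terms-to-multi-ctxs-rigid-term} I must track hole-freeness precisely, checking that each inductive clause only juxtaposes hole-free sub-multi-contexts, so that the ``no holes'' conclusion is preserved. In point~\ref{p:terms-to-multi-ctxs-inert-ctx} the care lies entirely in the definitional inclusions invoked: that the fillers of an inert context are inert terms, that inert terms are pointed terms, and that the right-hand immediate subterm in the external ES production $\strongmctx\esub\var\rmctx$ is exactly the rigid slot into which each filler $\itm_n$ fits. Everything else is a direct reading of the grammars.
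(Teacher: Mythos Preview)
Your proposal is correct and follows the same approach as the paper, which simply says ``By an easy inspection of the grammar of multi contexts, and by definition.'' You have essentially spelled out that inspection in full detail: the structural induction for point~\ref{p:terms-to-multi-ctxs-rigid-term} matching the productions of rigid terms against those of rigid multi contexts, and the induction on the length of the substitution context for point~\ref{p:terms-to-multi-ctxs-inert-ctx}, using the inclusion of inert terms in rigid terms and then point~\ref{p:terms-to-multi-ctxs-rigid-term}.
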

\begin{proof}
  By an easy inspection of the grammar of multi contexts, and by definition.
\end{proof}

The second lemma shows that the plugging of multi contexts amounts to syntactic substitution, in the case when no variable capture can occur:
\begin{lemma}[Substitution and plugging for multi contexts]
\label{l:strongmctx-plug-eq-sub}
  Let $\strongmctx$ and $\rmctx$ be a strong and a rigid multi contexts such that they do not capture variables in $\fv{\strongmctxtwo}$ and with no free occurrences of $\var$. Then 
  \begin{enumerate}
    \item \label{p:strongmctx-plug-eq-sub-a}
    $\rmctxp\var\isub\var{\strongmctxtwo} = \rmctxp{\strongmctxtwo}$.
    
    \item \label{p:strongmctx-plug-eq-sub-b}
    $\strongmctxp\var\isub\var{\strongmctxtwo} = \strongmctxp{\strongmctxtwo}$.
  \end{enumerate}
\end{lemma}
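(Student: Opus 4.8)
The plan is to prove both points simultaneously by mutual structural induction on $\strongmctx$ and $\rmctx$. The semantic content is simple: plugging $\var$ into all the holes turns each hole into a free occurrence of $\var$, and since by hypothesis $\rmctx$ (resp. $\strongmctx$) has no genuine free occurrence of $\var$, the only free $\var$'s in $\rmctxp\var$ (resp. $\strongmctxp\var$) are exactly those coming from the holes. The meta-level substitution $\isub\var{\strongmctxtwo}$ then replaces precisely those, which by the definition of plugging is $\rmctxp{\strongmctxtwo}$ (resp. $\strongmctxp{\strongmctxtwo}$). The capture-freedom hypothesis is what guarantees that this substitution performs no renaming, and hence literally coincides with plugging.

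For the base cases: if $\rmctx = \vartwo$ then it has no holes, so $\rmctxp{\strongmctxtwo} = \vartwo$ by the plugging definition, while $\vartwo\isub\var{\strongmctxtwo} = \vartwo$ since $\vartwo \neq \var$ (as $\rmctx$ has no free $\var$); if $\strongmctx = \ctxhole$ then $\strongmctxp\var = \var$ and $\var\isub\var{\strongmctxtwo} = \strongmctxtwo = \ctxholep{\strongmctxtwo}$; if $\strongmctx = \tm$ is a term it has no holes and no free $\var$, so both sides equal $\tm$; and the case $\strongmctx = \rmctx$ is immediate from the rigid half of the induction. The application and ES cases ($\rmctx = \rmctx_1\strongmctx_1$, $\rmctx = \rmctx_1\esub\vartwo{\rmctx_2}$, $\strongmctx = \strongmctx_1\esub\vartwo{\rmctx_1}$) all follow by observing that the substitution distributes over the constructor and then invoking the two induction hypotheses on the immediate sub-multi-contexts. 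For the ES constructor one additionally uses $\vartwo \neq \var$ and $\vartwo \notin \fv{\strongmctxtwo}$ so that $(\cdots\esub\vartwo\cdots)\isub\var{\strongmctxtwo}$ distributes without touching or capturing anything at the binder.

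The only delicate case is the abstraction $\strongmctx = \la\vartwo{\strongmctx_1}$. Here I would use the capture-freedom hypothesis, which gives $\vartwo \notin \fv{\strongmctxtwo}$, together with $\vartwo \neq \var$, to push the substitution through the binder: $(\la\vartwo{\strongmctx_1}\ctxholep\var)\isub\var{\strongmctxtwo} = \la\vartwo{(\strongmctx_1\ctxholep\var\isub\var{\strongmctxtwo})}$, which by the induction hypothesis equals $\la\vartwo{\,\strongmctx_1\ctxholep{\strongmctxtwo}} = \strongmctxp{\strongmctxtwo}$. The main obstacle — really the only point requiring care — is the bookkeeping of the two side conditions through the induction: one must check that ``no free occurrence of $\var$'' and ``no capture of $\fv{\strongmctxtwo}$'' are inherited by every immediate sub-multi-context (trivial, since a free $\var$ or a capturing binder in a part is one in the whole), and that the abstraction binder never coincides with $\var$, so that holes plugged under a $\lambda$ are not accidentally captured by the substitution. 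Everything else is routine propagation of capture-avoiding substitution through the term constructors.
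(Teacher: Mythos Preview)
Your proposal is correct and follows essentially the same approach as the paper: a mutual structural induction on the grammars of $\rmctx$ and $\strongmctx$, with the same case analysis and the same use of the non-capture hypothesis at binders to push the substitution through without renaming. If anything, you are slightly more careful than the paper in flagging the need for $\vartwo \neq \var$ at the abstraction and ES binders, which the paper's proof leaves implicit.
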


\begin{proof}
By mutual induction on $\rmctx$ and $\strongmctx$.
\begin{enumerate}
  \item \emph{Rigid}. Cases:
  \begin{itemize}
    \item \emph{Variable}, \ie $\rmctx = \vartwo$. Then $\rmctxp\var\isub\var{\strongmctxtwo} = \vartwo\isub\var{\strongmctxtwo} = \vartwo = \rmctxp{\strongmctxtwo}$.
    
    \item \emph{Application}, \ie $\rmctx = \rmctxtwo \strongmctx$. Then $$\begin{array}{l}\rmctxp\var\isub\var{\strongmctxtwo} = \rmctxtwop\var\isub\var{\strongmctxtwo} \strongmctxp\var\isub\var{\strongmctxtwo} \\ =_{\ih} \rmctxtwop\strongmctxtwo \strongmctxp\strongmctxtwo = \rmctxp{\strongmctxtwo} \end{array}$$
    
    \item \sloppy \emph{Explicit substitution}, \ie $\rmctx = \rmctxtwo \esub\vartwo\rmctxthree$.	Then $\rmctxp\var\isub\var{\strongmctxtwo} = (\rmctxtwop\var \esub\vartwo{\rmctxthreep\var})\isub\var{\strongmctxtwo}$. We have that $\vartwo\notin\fv\strongmctxtwo$ because $\rmctx$ does not capture variables in $\fv\strongmctxtwo$. Therefore, $(\rmctxtwop\var \esub\vartwo{\rmctxthreep\var})\isub\var{\strongmctxtwo} = \rmctxtwop\var\isub\var{\strongmctxtwo} \esub\vartwo{\rmctxthreep\var\isub\var{\strongmctxtwo}}$ without having to rename $\vartwo$ in $\rmctxtwop\var \esub\vartwo{\rmctxthreep\var}$. And then one can continue as expected: 
    $$\begin{array}{l}\rmctxtwop\var\isub\var{\strongmctxtwo} \esub\vartwo{\rmctxthreep\var\isub\var{\strongmctxtwo}}\\ =_{\ih} \rmctxtwop\strongmctxtwo \esub\vartwo{\rmctxthreep\strongmctxtwo} = \rmctxp{\strongmctxtwo}\end{array}$$
  \end{itemize}
  
  \item \emph{Strong}. Cases:	
  \begin{itemize}
    \item \emph{Empty}, \ie $\strongmctx = \ctxhole$. Then $\strongmctxp\var\isub\var\strongmctxtwo = \var\isub\var\strongmctxtwo = \strongmctxtwo = \strongmctxp\strongmctxtwo$.
    
    \item \emph{Term}, \ie $\strongmctx = \tm$. Remember that $\var\notin\fv\strongmctx=\fv\tm$. Then $\strongmctxp\var\isub\var{\strongmctxtwo} = \tm\isub\var{\strongmctxtwo} = \tm = \strongmctxp{\strongmctxtwo}$.
    
    \item \sloppy \emph{Abstraction}, \ie $\strongmctx = \la\vartwo\strongmctxthree$. Then $\strongmctxp\var\isub\var{\strongmctxtwo} = (\la\vartwo\strongmctxthreep\var)\isub\var{\strongmctxtwo}$. We have that $\vartwo\notin\fv\strongmctxtwo$ because $\strongmctx$ does not capture variables in $\fv\strongmctxtwo$. Therefore, $(\la\vartwo\strongmctxthreep\var)\isub\var{\strongmctxtwo} = \la\vartwo\strongmctxthreep\var\isub\var{\strongmctxtwo}$ without having to rename $\vartwo$ in $\la\vartwo\strongmctxthreep\var$. And then one can continue as expected: $\la\vartwo\strongmctxthreep\var\isub\var{\strongmctxtwo} =_{\ih} \la\vartwo\strongmctxthreep\strongmctxtwo = \strongmctxp\strongmctxtwo$.
      
    \item \emph{Rigid}, \ie $\strongmctx = \rmctx$. By \refpoint{strongmctx-plug-eq-sub-a}.
    
    \item \sloppy \emph{Explicit substitution}, \ie $\strongmctx = \strongmctxthree \esub\var\rmctx$. Then $\strongmctxp\var\isub\var{\strongmctxtwo} = (\strongmctxthreep\var \esub\vartwo{\rmctxp\var})\isub\var{\strongmctxtwo}$. We have that $\vartwo\notin\fv\strongmctxtwo$ because $\rmctx$ does not capture variables in $\fv\strongmctxtwo$. Therefore, $(\strongmctxthreep\var \esub\vartwo{\rmctxp\var})\isub\var{\strongmctxtwo} = \strongmctxthreep\var\isub\var{\strongmctxtwo} \esub\vartwo{\rmctxp\var\isub\var{\strongmctxtwo}}$ without having to rename $\vartwo$ in $\strongmctxthreep\var \esub\vartwo{\rmctxp\var}$. And then one can continue as expected: 
    $$\begin{array}{l}\strongmctxthreep\var\isub\var{\strongmctxtwo} \esub\vartwo{\rmctxp\var\isub\var{\strongmctxtwo}}\\ =_{\ih} \strongmctxthreep\strongmctxtwo \esub\vartwo{\rmctxp\strongmctxtwo} = \rmctxp{\strongmctxtwo}\end{array}$$	\end{itemize}	
\end{enumerate}
\end{proof}

\paragraph{Multi contexts and compatible substitutions}

The following lemma shows that compatibility of a multi context $\strongmctx$ with respect to a substitution $\sigma$ ensures that some nice properties of $\strongmctx$ are preserved in $\strongmctx\sigma$. 

\begin{lemma}[Multi contexts and compatible substitutions]
  \label{l:sctx-and-sub}
    Let $\rmctx$ be a rigid multi context and $\strongmctx$  and a fine multi context both compatible with a fireball substitution $\sigma$. Then 
    \begin{enumerate}
    \item \label{p:sctx-and-sub-a}
    $\rmctx\sigma$ is a rigid multi context. Moreover, if $\rmctx$ is proper then $\rmctx\sigma$ is proper.
    \item \label{p:sctx-and-sub-b}
    $\strongmctx\sigma$ is an external multi context. Moreover, if $\strongmctx$ is proper then $\strongmctx\sigma$ is proper.
    \end{enumerate}
  \end{lemma}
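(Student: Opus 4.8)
The plan is to prove the two statements simultaneously by mutual structural induction on $\rmctx$ and $\strongmctx$, after recording two preliminary observations that make the induction routine. First, since $\sigma$ is a fireball substitution, each $\sigma(\var)$ is a term and hence contains no occurrence of the hole $\ctxhole$; therefore the holes of $\mctx\sigma$ are exactly those of $\mctx$ for any multi context $\mctx$. Consequently $\mctx$ is proper if and only if $\mctx\sigma$ is, so the ``moreover'' clauses will follow for free once the shape (rigid / external) has been established. Second, I would note that compatibility with $\sigma$ descends to every sub-multi-context: adopting the usual variable convention that $\lambda$- and ES-bound variables lie outside $\domain\sigma \cup \fv\sigma$, a bound variable is never mapped to a value (so the compatibility condition is vacuous for it), while every free occurrence in a sub-multi-context is a free occurrence in the enclosing one in the same immediate applicative position; hence compatibility of the whole context entails compatibility of each component.

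With these in hand, part~(1) proceeds by cases on $\rmctx$. The base case $\rmctx = \var$ is the crux. If $\var \notin \domain\sigma$ then $\rmctx\sigma = \var$ is rigid; otherwise $\rmctx\sigma = \sigma(\var)$ is a fireball, and here the head occurrence of $\var$ is free and is \emph{not} the argument of an application, so compatibility forces $\sigma(\var)$ not to be a value. A non-value fireball is an inert term (by \reflemma{shape-of-strong-fireballs} in the strong case, or directly from the fireball grammar), and inert terms are rigid (pointed) terms, i.e. rigid multi contexts with no holes by \reflemma{mctx-plugging}; thus $\rmctx\sigma$ is rigid. For $\rmctx = \rmctxtwo\,\strongmctx$ I would write $\rmctx\sigma = (\rmctxtwo\sigma)\,(\strongmctx\sigma)$, which is rigid by the \ih applied to $\rmctxtwo$ (part~1) and to $\strongmctx$ (part~2), matching the production $\rmctx\,\strongmctx$; the case $\rmctx = \rmctxtwo\esub\var\rmctxthree$ is identical, giving $(\rmctxtwo\sigma)\esub\var{(\rmctxthree\sigma)}$ via two appeals to the \ih on part~(1).

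Part~(2) proceeds by cases on $\strongmctx$. The cases $\ctxhole$ and $\tm$ are immediate: a hole is untouched by $\sigma$, and $\tm\sigma$ is again a term, which is external by the injection of terms into the external grammar (and needs no compatibility). For $\strongmctx = \la\var\strongmctxtwo$ I get $\la\var{(\strongmctxtwo\sigma)}$, external by the \ih; for $\strongmctx = \rmctx$ I invoke part~(1) and the inclusion of rigid multi contexts into external ones; and for $\strongmctx = \strongmctxtwo\esub\var\rmctx$ I combine the \ih on part~(2) with part~(1) to land on the production $\strongmctx\esub\var\rmctx$. In every case properness is inherited from the hole-counting observation above.

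The hard part will be isolated entirely in the single base case $\rmctx = \var$: applying an arbitrary fireball substitution to the head variable of a rigid context can replace it by an abstraction and thereby destroy rigidity, and it is exactly compatibility that rules this out. All remaining cases are purely syntactic re-readings of the grammars, so once the descent of compatibility and the invariance of holes under $\sigma$ are in place, the induction closes mechanically.
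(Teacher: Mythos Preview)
Your hole-counting observation for properness is correct and cleaner than the paper's approach, which instead threads the ``moreover'' part through the induction hypothesis. But your claim that compatibility with $\sigma$ descends to every sub-multi-context is false, and this breaks the application case of part~(1). Take $\rmctx = \rmctxtwo\,\var$ with $\sigma(\var) = \val$ a value. In $\rmctx$ the variable $\var$ occurs precisely as the argument of an application, so $\rmctx$ \emph{is} compatible with $\sigma$; but the sub-multi-context $\strongmctx = \var$ on its own is \emph{not} compatible, because its unique occurrence of $\var$ is not the argument of anything. Hence you cannot invoke the induction hypothesis of part~(2) on $\strongmctx$. Your sentence ``every free occurrence in a sub-multi-context is a free occurrence in the enclosing one in the same immediate applicative position'' is exactly what fails here: passing from $\rmctxtwo\,\var$ to the subcomponent $\var$ forgets the surrounding application.

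The paper's proof singles out precisely this sub-case: in the rigid application clause it notes that $\rmctxtwo$ is always compatible while $\strongmctx$ is compatible only when $\strongmctx \neq \var$, and handles $\strongmctx = \var$ directly by observing that $\sigma(\var)$ is in any case a term, hence an external multi context via the production $\tm$, so $(\rmctxtwo\sigma)\,\sigma(\var)$ still matches the rigid production $\rmctx\,\strongmctx$. Adding this one case distinction repairs your argument; the remaining cases go through as you describe.
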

  \begin{proof}
  By mutual induction on $\rmctx$ and $\strongmctx$.
  \begin{enumerate}
    \item \emph{Rigid}. Cases:
    \begin{itemize}
      \item \emph{Variable}, \ie $\rmctx = \var$. Since $\var$ does not occur as an argument, by compatibility $\rmctx\sigma = \var\sigma = \sigma(\var)$ is an inert term. By \reflemmap{terms-to-multi-ctxs}{rigid-term} $\sigma(\var)$ can be seen as a rigid multi context.
      
      \item \emph{Application}, \ie $\rmctx = \rmctxtwo \strongmctx$. Note that $\rmctxtwo$ is compatible with $\sigma$ and that $\strongmctx$ is compatible only if $\strongmctx \neq \var$. By \ih, $\rmctxtwo\sigma$ is a rigid multi context. If $\strongmctx = \var$ then $\strongmctx\sigma = \var\sigma = \sigma(var)$ which is an inert term and thus a rigid multi context by \reflemmap{terms-to-multi-ctxs}{rigid-term}. If $\strongmctx \neq \var$ then by \ih $\strongmctx\sigma$ is an external multi context. Then $\rmctx\sigma = \rmctxtwo\sigma \strongmctx\sigma$ is a rigid multi context.
      
      If $\rmctx$ is proper then one among $\rmctxtwo$ and $\strongmctx$ is proper, and properness of $\rmctx\sigma$ follows from the \ih
      
      \item \emph{Explicit substitution}, \ie $\rmctx = \rmctxtwo \esub\var\rmctxthree$. Both $\rmctxtwo$ and $\rmctxthree$ are compatible with $\sigma$. By \ih, both $\rmctxtwo\sigma$ and $\rmctxthree\sigma$ are rigid multi contexts. Then $\rmctx\sigma = \rmctxtwo\sigma \rmctxthree\sigma$ is a rigid multi context.
      
          If $\rmctx$ is proper then one among $\rmctxtwo$ and $\rmctxthree$ is proper, and properness of $\rmctx\sigma$ follows from the \ih
    \end{itemize}
    
    \item \emph{Strong}. Cases:	
    \begin{itemize}
      \item \emph{Empty}, \ie $\strongmctx = \ctxhole$. Trivial, because $\ctxhole\sigma = \ctxhole$.
      
      \item \emph{Term}, \ie $\strongmctx = \tm$. Trivial because every term, and in particular $\tm\sigma$ is an external multi context.
      
      \item \emph{Abstraction}, \ie $\strongmctx = \la\vartwo\strongmctxtwo$. By \ih, $\strongmctxtwo\sigma$ is an external multi context, and so is $\strongmctx\sigma = \la\vartwo\strongmctxtwo\sigma$. The moreover par follows from the moreover part of the \ih
        
      \item \emph{Rigid}, \ie $\strongmctx = \rmctx$. It follows from \refpoint{sctx-and-sub-a}.
      
      \item \emph{Explicit substitution}, \ie $\strongmctx = \strongmctxtwo \esub\var\rmctx$. Both $\strongmctxtwo$ and $\rmctx$ are compatible with $\sigma$. By \ih, both $\strongmctxtwo\sigma$ and $\rmctx\sigma$ are external multi contexts. Then $\strongmctx\sigma = \strongmctxtwo\sigma \rmctx\sigma$ is an external multi context.
      
          If $\rmctx$ is proper then one among $\rmctxtwo$ and $\rmctxthree$ is proper, and properness of $\rmctx\sigma$ follows from the \ih
    \end{itemize}	
  \end{enumerate}
\end{proof}

The following two lemmas show that compatibility is preserved both by plugging and by composition of rigid and external multi contexts:

\begin{lemma}[Plugging preserves compatibility]
\label{l:smctx-fire-compatibility}
Let $\strongmctx$ and $\rmctx$ be a strong and a rigid multi contexts and $\sfire$ be a strong fireball such that they are all compatible with $\sigma$. Then 
\begin{enumerate}
  \item \label{p:smctx-fire-compatibility-a}
  $\rmctxp\sfire$ is compatible with $\sigma$.
  
  \item \label{p:smctx-fire-compatibility-b}
  $\strongmctxp\sfire$ is compatible with $\sigma$.
\end{enumerate}
\end{lemma}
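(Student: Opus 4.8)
The plan is to prove the two points simultaneously by mutual structural induction on $\rmctx$ and $\strongmctx$, exploiting that compatibility with $\sigma$ is really a statement about the individual free occurrences of the \emph{value variables} of $\sigma$ (those $\var$ with $\sigma(\var)$ a value), and that being \emph{the argument of an application} is a purely local, positional property of an occurrence. Recall that plugging substitutes $\sfire$ into every hole (erasing it when the context is not proper), so in each inductive case the free occurrences of value variables of $\rmctxp\sfire$ (resp.\ $\strongmctxp\sfire$) split into two groups: those inherited from the skeleton of the context away from the holes, for which compatibility of the context gives the claim, and those contributed by the copies of $\sfire$ sitting in the holes, for which compatibility of $\sfire$ does.

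First I would dispatch the leaves. For $\rmctx=\var$, plugging erases $\sfire$ and yields $\var$; compatibility of $\rmctx$ forces $\sigma(\var)$ not to be a value (a bare head variable is never an argument), so $\var$ is trivially compatible. For $\strongmctx=\ctxhole$ the result is $\sfire$ itself, handled by hypothesis, and for $\strongmctx=\tm$ plugging erases $\sfire$ and returns $\tm$, compatible by hypothesis. Then I would treat the compound cases $\rmctx=\rmctxtwo\strongmctx$, $\rmctx=\rmctxtwo\esub\var\rmctxthree$, $\strongmctx=\la\var\strongmctxtwo$, $\strongmctx=\rmctx$ (via part~1), and $\strongmctx=\strongmctxtwo\esub\var\rmctx$. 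In each of these the plugged term is obtained by plugging into the immediate subcontexts, which are themselves compatible with $\sigma$ and to which the induction hypothesis applies; since a free occurrence of a value variable in the composite lies in one of the subcontexts and its local application structure is unchanged by the surrounding constructor, it remains an argument. In the abstraction and ES cases one only has to observe that binding $\var$ can remove free occurrences but never create new non-argument ones.

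The point that needs genuine care --- and the only place where the grammar of \emph{rigid} multi contexts is used --- is making sure that plugging never relocates a value variable into a non-argument (head) position. This could only happen if some hole of $\rmctx$ or $\strongmctx$ occupied the head of an application; but the rigid grammar $\rmctx \grameq \var \mid \rmctx\strongmctx \mid \rmctx\esub\var\rmctx$ forces the leftmost position of every rigid context to be a variable, so a hole is always strictly inside an argument, an abstraction body, or an ES. Consequently each copy of $\sfire$ is plugged only in positions where the application structure around its own free variables is preserved, so the occurrences that $\sfire$ contributes keep the argument status guaranteed by its compatibility, while the head variables of the rigid pieces are precisely the occurrences already controlled by compatibility of the context. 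I expect this structural observation to be the crux; once it is isolated the remaining case analysis is routine, and \reflemma{mctx-plugging} can be cited where it is convenient to know that the plugged subcontexts are genuine (rigid) terms.
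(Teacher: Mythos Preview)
Your overall strategy---mutual structural induction on $\rmctx$ and $\strongmctx$---is exactly the paper's, and your opening paragraph describing the split of value-variable occurrences into ``skeleton'' and ``copies of $\sfire$'' is a perfectly good informal argument that could in fact stand on its own without induction. However, the inductive formalization you sketch has a genuine gap, and the point you single out as the crux is not one.

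The gap is in the application case $\rmctx=\rmctxtwo\,\strongmctxthree$. You assert that the immediate subcontexts ``are themselves compatible with $\sigma$'' so that the induction hypothesis applies to both. This is false for $\strongmctxthree$: take $\strongmctxthree=\var$ with $\sigma(\var)$ a value. Then $\rmctx=\rmctxtwo\,\var$ is compatible (the occurrence of $\var$ is literally the argument of the top application), yet $\strongmctxthree=\var$ alone is \emph{not} compatible, so you cannot invoke the induction hypothesis on it. The paper's proof isolates precisely this exception and handles it directly: since $\strongmctxthree$ has no holes, $\strongmctxthreep\sfire=\var$, and one just observes that $\var$ sits in argument position in $\rmctxtwop\sfire\,\var$ while $\rmctxtwop\sfire$ is covered by the induction hypothesis on $\rmctxtwo$ (which \emph{is} compatible). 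Once this case is added your induction goes through.

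Conversely, the issue you flag as ``the point that needs genuine care''---that a hole in head position could relocate a value variable of $\sfire$ out of argument position---is a non-issue. Being the argument of an application is determined by the immediate parent node of the occurrence, and plugging $\sfire$ into a hole copies $\sfire$ verbatim: every value variable that was an argument inside $\sfire$ remains an argument inside that copy, wherever the hole sits. So the restriction on where holes can appear in rigid contexts plays no role here; the paper's proof does not invoke it.
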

\begin{proof}
By mutual induction on $\rmctx$ and $\strongmctx$.
\begin{enumerate}
  \item \emph{Rigid}. Cases:
  \begin{itemize}
    \item \emph{Variable}, \ie $\rmctx = \var$. Then $\rmctxp\sfire = \var=\rmctx$ is compatible with $\sigma$.
    
    \item \emph{Application}, \ie $\rmctx = \rmctxtwo \strongmctxthree$. It follows immediately from the \ih on $\rmctxtwo$ and $\strongmctxthree$, apart when $\strongmctxthree=\var$ and $\sigma(\var) = \val$. In such a case however the \ih gives compatibility of $\rmctxtwo$ which is enough to obtain compatibility of $\rmctx$.
    
    \item \emph{Explicit substitution}, \ie $\rmctx = \rmctxtwo \esub\var\rmctxthree$. It follows immediately from the \ih
  \end{itemize}
  
  \item \emph{Strong}. Cases:	
  \begin{itemize}
    \item \emph{Empty}, \ie $\strongmctx = \ctxhole$. Trivial, because $\strongmctxp\sfire = \sfire$ is compatible with $\sigma$ by hypothesis.
    
    \item \emph{Term}, \ie $\strongmctx = \tm$. Then $\strongmctxp\sfire = \tm$ which is compatible by hypothesis. 
    
    \item \emph{Abstraction}, \ie $\strongmctx = \la\var\sfire$. It follows immediately from the \ih
      
    \item \emph{Rigid}, \ie $\strongmctx = \rmctx$. By \refpoint{smctx-fire-compatibility-a}.
    
    \item \emph{Explicit substitution}, \ie $\strongmctx = \sfire \esub\var\rmctx$. It follows immediately from the \ih
  \end{itemize}	
\end{enumerate}
\end{proof}
  
\begin{lemma}[Composition of multi contexts]
  \label{l:strctx-compos}
  Let $\strongmctx$ and $\rmctx$ be a strong and a rigid multi contexts, and $\strongmctxtwo$ be a further external multi context. Then 
  \begin{enumerate}
    \item \label{p:strctx-compos-a}
    $\rmctxp\strongmctxtwo$ is a rigid multi context.
    
    \item \label{p:strctx-compos-b}
    $\strongmctxp\strongmctxtwo$ is an external multi context.
  \end{enumerate}
  Moreover, let $\mctx \in \set{\strongmctx, \rmctx}$ and 
   \begin{enumerate}
     \item \label{p:strctx-compos-fine}
    if both $\mctx$ and $\strongmctxtwo$ are proper (and thus fine), so does $
    \mctxp\strongmctxtwo$.
  
    \item if both $\mctx$ and $\strongmctxtwo$ are compatible with $\sigma$, so does $
    \mctxp\strongmctxtwo$.
    
    \item if both $\mctx$ and $\strongmctxtwo$ are normal, so does $
    \mctxp\strongmctxtwo$.
  \end{enumerate}
  \end{lemma}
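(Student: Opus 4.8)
The plan is to prove the two main statements \refpoint{strctx-compos-a} and \refpoint{strctx-compos-b} together with the compatibility statement by a single mutual induction on the structure of $\rmctx$ and $\strongmctx$, exactly mirroring the inductions in \reflemma{sctx-and-sub} and \reflemma{smctx-fire-compatibility}, and then to dispatch properness (\refpoint{strctx-compos-fine}) and normality by short global arguments. Throughout, the guiding observation is that plugging only fills holes and leaves untouched the rest of the syntactic structure of both $\mctx$ (outside its holes) and of each inserted copy of $\strongmctxtwo$.

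For \refpoint{strctx-compos-a} and \refpoint{strctx-compos-b} the cases are routine. For $\rmctx$: if $\rmctx = \var$ there are no holes and $\rmctxp\strongmctxtwo = \var$ is rigid; if $\rmctx = \rmctxtwo\,\strongmctxthree$ then $\rmctxp\strongmctxtwo = \rmctxtwop\strongmctxtwo\,\strongmctxthreep\strongmctxtwo$ is rigid because by \ih $\rmctxtwop\strongmctxtwo$ is rigid and $\strongmctxthreep\strongmctxtwo$ is external; if $\rmctx = \rmctxtwo\esub\var\rmctxthree$ both components remain rigid by \ih. For $\strongmctx$: the cases $\ctxhole$ (which yields $\strongmctxtwo$, external by hypothesis) and $\tm$ (no holes) are immediate; $\la\var\strongmctxthree$ and $\strongmctxthree\esub\var\rmctx$ follow from the \ih (the latter pairing an external with a rigid component); and $\strongmctx = \rmctx$ is handled by \refpoint{strctx-compos-a}. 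Properness (\refpoint{strctx-compos-fine}) then follows because, when $\mctx$ has at least one hole and $\strongmctxtwo$ is proper, each hole of $\mctx$ is replaced by a copy of $\strongmctxtwo$ which again contains a hole; hence $\mctxp\strongmctxtwo$ has a hole, and combined with \refpoint{strctx-compos-a}--\refpoint{strctx-compos-b} it is strong and proper, i.e.\ fine.

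Normality is the cheapest point and needs only associativity of plugging: for an arbitrary strong fireball $\sfire$, filling every hole of $\mctxp\strongmctxtwo$ with $\sfire$ produces $\mctxp{\strongmctxtwop\sfire}$. By normality of $\strongmctxtwo$ the term $\strongmctxtwop\sfire$ is a strong fireball, and then by normality of $\mctx$ so is $\mctxp{\strongmctxtwop\sfire}$. Hence $\mctxp\strongmctxtwo$ is normal.

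The main obstacle is the compatibility statement, for the familiar reason that compatibility concerns the position of each free occurrence and is not hereditary. The positional argument is that every bad free variable (a $\var$ with $\sigma(\var)$ a value) of $\mctxp\strongmctxtwo$ either comes from $\mctx$ outside the holes---keeping its position and hence still an argument by compatibility of $\mctx$---or from a copy of $\strongmctxtwo$ filling a hole, in which case it is either captured by a binder of $\mctx$ along the path to that hole (so no longer free, hence irrelevant) or it keeps its position inside $\strongmctxtwo$ and is still an argument by compatibility of $\strongmctxtwo$. Inside the mutual induction the only delicate case is the application $\rmctx = \rmctxtwo\,\strongmctxthree$ with $\strongmctxthree = \var$ and $\sigma(\var) = \val$: here compatibility of the application does not descend to the bare variable, so the \ih cannot be invoked on $\strongmctxthree$; but $\strongmctxthree = \var$ has no holes, so $\strongmctxthreep\strongmctxtwo = \var$ still occurs in argument position, while the \ih on $\rmctxtwo$---whose compatibility does follow from that of $\rmctx$, since forming $\rmctxtwo\,\strongmctxthree$ does not move $\rmctxtwo$'s non-argument occurrences into argument position---closes the case. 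In all remaining applications and explicit substitutions compatibility of the whole descends to the immediate subcontexts and the \ih applies directly, exactly as in the special-casing of \reflemma{sctx-and-sub} and \reflemma{smctx-fire-compatibility}.
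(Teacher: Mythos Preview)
Your proposal is correct and follows essentially the same approach as the paper: a mutual structural induction on $\rmctx$ and $\strongmctx$, with the same delicate sub-case for compatibility when $\strongmctxthree = \var$ and $\sigma(\var)$ is a value. The only cosmetic difference is that the paper carries properness and normality through the same induction (noting they follow immediately from the \ih), whereas you dispatch them by short global arguments---both are equally valid and the global normality argument via associativity of plugging is arguably cleaner.
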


\begin{proof}
  By mutual induction on $\rmctx$ and $\strongmctx$.
\begin{enumerate}
  \item \emph{Rigid}. Cases:
  \begin{itemize}
    \item \emph{Variable}, \ie $\rmctx = \var$. Then $\rmctxp\strongmctxtwo = \var$ which is a rigid multi context.
    
    \item \emph{Application}, \ie $\rmctx = \rmctxtwo \strongmctxthree$. It follows immediately from the \ih on $\rmctxtwo$ and $\strongmctxthree$, apart for compatibility with $\sigma$ when $\strongmctxthree=\var$ and $\sigma(\var) = \val$. In such a case however the \ih gives compatibility of $\rmctxtwo$ which is enough to obtain compatibility of $\rmctx$.
    
    \item \emph{Explicit substitution}, \ie $\rmctx = \rmctxtwo \esub\var\rmctxthree$. It follows immediately from the \ih
  \end{itemize}
  
  \item \emph{Strong}. Cases:	
  \begin{itemize}
    \item \emph{Empty}, \ie $\strongmctx = \ctxhole$. Trivial.
    
    \item \emph{Term}, \ie $\strongmctx = \tm$. Then $\strongmctxp\strongmctxtwo = \tm$ which is an external multi context. 
    
    \item \emph{Abstraction}, \ie $\strongmctx = \la\var\strongmctxtwo$. It follows immediately from the \ih
      
    \item \emph{Rigid}, \ie $\strongmctx = \rmctx$. By \refpoint{strctx-compos-a}.
    
    \item \emph{Explicit substitution}, \ie $\strongmctx = \strongmctxtwo \esub\var\rmctx$. It follows immediately from the \ih
  \qedhere
  \end{itemize}	
\end{enumerate}
\end{proof}

The following auxiliary, technical lemma allows to extract from a term all the occurrences of a free variable, decomposing the term into a multi context that plugs that variable:

\begin{lemma}[Context extraction from fireballs]
  \label{l:str-fir-ctx-dec}
    Let $\tm$ be a well-named term such that $\var\in\fv{\tm}$ and $\tm$ is compatible with $\isub\var\val$.
    \begin{enumerate}
    \item if $\tm$ is a strong inert term then there exists a normal and proper rigid multi context $\rmctx$ such that $\tm = \rmctxp\var$ and $\var\notin\fv\rmctx$.
    
    \item if $\tm$ is a strong fireball then there exists a fine and normal multi context $\strongmctx$ such that $\tm = \strongmctxp\var$ and $\var\notin\fv\strongmctx$.
    \end{enumerate}
  \end{lemma}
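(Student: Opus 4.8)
The plan is to prove both statements simultaneously by mutual induction on the structure of $\tm$, reading off the multi context by replacing every free occurrence of $\var$ with a hole. The compatibility hypothesis is exactly what makes this work: since $\isub\var\val$ maps $\var$ to a value, compatibility forces every free occurrence of $\var$ in $\tm$ to sit in argument position, i.e.\ as the right immediate subterm of an application. In particular it rules out $\tm=\var$ (a lone variable is not in argument position), so the strong-inert base case ``$\tm$ is a variable'' is vacuous. Before starting I would strengthen statement~(1) to the sharper invariant that the produced rigid multi context is \emph{rigid-normal}, meaning $\rmctx\ctxholep\sfire$ is a strong \emph{inert} term (not merely a fireball) for every strong fireball $\sfire$; since strong inert terms are strong fireballs this still yields the ``normal'' claim in the statement, but it is needed to classify the results of the recombinations below.

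For statement~(1), the inductive cases follow the grammar of strong inert terms. If $\tm=\sitm\,\sfire$, I would split on the argument: when $\sfire=\var$ I take the external context $\ctxhole$ (the genuine base case, and the reason compatibility is essential); when $\var\in\fv\sfire$ and $\sfire\neq\var$ I apply statement~(2) to $\sfire$; and when $\var\notin\fv\sfire$ I keep $\sfire$ as a hole-free external multi context via \reflemma{terms-to-multi-ctxs}. Symmetrically, $\sitm$ yields either a rigid context from statement~(1) or a hole-free rigid multi context. Recombining as $\rmctx\,\strongmctx$ lands in the rigid grammar; properness holds because $\var\in\fv\tm$ forces at least one side to carry a hole, and the condition $\var\notin\fv{\rmctx}$ is inherited from the subcontexts and the holes. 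The case $\tm=\sitm\esub\vartwo\sitmtwo$ is analogous, recombining as $\rmctx\esub\vartwo{\rmctx'}$ with both sides rigid, using well-namedness to guarantee $\var\neq\vartwo$ so that compatibility and the variable conditions pass to the subterms. In each case rigid-normality of the recombination is checked directly: plugging a strong fireball gives $\sitm'\,\sfire'$ resp.\ $\sitm'\esub\vartwo{\sitmtwo'}$ with the components classified by the strengthened induction hypotheses, hence a strong inert term.

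For statement~(2) I would follow the grammar of strong fireballs. The strong-inert case reduces to statement~(1), observing that a rigid multi context is in particular external and that rigid-normality implies the required normality. For a strong value $\la\vartwo\sfire'$ I apply statement~(2) to $\sfire'$ and prefix the abstraction, obtaining $\la\vartwo\strongmctx'$; for $\sfire'\esub\vartwo\sitm$ I combine an external context for $\sfire'$ with a rigid one for $\sitm$ as $\strongmctx\esub\vartwo\rmctx$, matching the external grammar. Properness, the freshness of $\var$, and normality are verified exactly as above. Throughout, the only genuinely delicate points---and the main obstacle---are (i)~remembering that the bare argument $\sfire=\var$ must be handled as a primitive hole rather than by a recursive call, since the induction hypothesis does not apply to the non-compatible term $\var$, and (ii)~tracking properness, the grammatical kind (rigid vs.\ external), and normality through each recombination; the composition and plugging lemmas \reflemma{strctx-compos} and \reflemma{terms-to-multi-ctxs} supply precisely the bookkeeping needed to discharge (ii).
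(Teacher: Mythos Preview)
Your proposal is correct and follows essentially the same mutual induction on $\tm$ as the paper, with the same case split on the argument position (handling $\sfire=\var$ as the genuine base where a fresh hole $\ctxhole$ is introduced, and otherwise recursing or falling back on \reflemma{terms-to-multi-ctxs}). One remark: your explicit strengthening of (1) to ``rigid-normal'' is sound but not actually needed, because it already follows from the paper's formulation---by \reflemmap{mctx-plugging}{rigid} a rigid multi context plugged with any term yields a rigid term, and a rigid term that is also a strong fireball is automatically a strong inert term (rigid terms have no abstraction case, and an easy induction on the fireball grammar shows this); the paper relies on this implicitly rather than carrying a strengthened invariant.
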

\begin{proof}
By induction on $\tm$. Cases:
\begin{itemize}
  \item \emph{Variable}, \ie $\tm = \var$: 
  \begin{enumerate}
    \item trivially true, as the compatibility hypothesis is not verified ($\var$ occurs free but not as an argument);
    \item Simply take $\strongmctx \defeq \ctxhole$.
  \end{enumerate}
  
  \item \emph{Application}, \ie $\tm = \sitm \sfire$:
  \begin{enumerate}
    \item Suppose that $\var$ occurs in both $\sitm$ and $\sfire$. Note that $\sitm$ is compatible and that $\sfire$ is compatible only if $\sfire\neq\var$. By \ih there is a normal and proper rigid normal multi context $\rmctxtwo$ such that $\sitm = \rmctxtwop\var$ and $\var\notin\fv\rmctxtwo$. If $\sfire = \var$ then take $\strongmctx \defeq \ctxhole$, otherwise by \ih there exists a fine and normal multi context $\strongmctx$ such that $\sfire = \strongmctxp\var$ and $\var\notin\fv\strongmctx$. Then $\rmctx \defeq \rmctxtwo \strongmctx$ satisfies the statement.
    
    If $\var$ does not occur in $\sitm$ then one uses \reflemmap{terms-to-multi-ctxs}{rigid-term} to see $\sitm$ as a normal rigid multi context and reason as before. If $\var$ does not occur in $\sfire$ then one sees $\sfire$ as a normal and fine multi context with no holes, as all terms are external multi contexts. Note that $\var$ has to occur in $\itm$ or $\sfire$, because it occurs in $\tm$, and so the context $\rmctx$ is always proper.
    
    \item Simply take $\strongmctx$ as the rigid multi context obtained in the previous point.
  \end{enumerate}
  
  \item \emph{Abstraction}, \ie $\tm = \la\vartwo\sfire$:
  \begin{enumerate}
    \item trivially true, as the hypothesis is not verified;
    \item Simply take $\strongmctx \defeq \la\vartwo\strongmctxtwo$, where $\strongmctxtwo$ is the fine and normal multi context given by the \ih on $\sfire$.
  \end{enumerate}
  
  \item \emph{Explicit substitution}, \ie $\tm = \tmtwo\esub\vartwo\sitm$:
  \begin{enumerate}
    \item if $\tm$ is a strong inert term then so is $\tmtwo$. Suppose that $\var$ occurs in both $\tmtwo$ and $\itm$. Note that both $\tmtwo$ and $\sitm$ are compatible with $\isub\var\val$. Therefore, we can apply the \ih on both terms, obtaining two normal and proper rigid multi context $\rmctxtwo$ and $\rmctxthree$ such that $\rmctxtwop\var = \tmtwo$, $\rmctxthreep\var = \sitm$, $\var\notin\fv\rmctxtwo$, and $\var\notin\fv\rmctxthree$. Then $\rmctx \defeq \rmctxtwo \esub\vartwo\rmctxthree$ verifies the statement. If $\var$ does not occur in one among $\tmtwo$ and $\itm$ then one uses \reflemmap{terms-to-multi-ctxs}{rigid-term} to see it as a normal rigid multi context and reason as before. Note that $\var$ has to occur in $\tmtwo$ or $\itm$, because it occurs in $\tm$, and so the context $\rmctx$ is always proper.
    
    \item Along the lines of the previous point, spelled out in the following. If $\tm$ is a strong fireball then so is $\tmtwo$. Suppose that $\var$ occurs in both $\tmtwo$ and $\sitm$. We can apply the \ih obtaining a fine and normal multi context $\strongmctxtwo$ such that $\sfire = \strongmctxp\var$ and $\var\notin\fv\strongmctx$. By the compatibility hypothesis $\sitm$ does not have shape $\sctxp\var$, and so we can apply the \ih, obtaining a normal and proper rigid multi context $\rmctxtwo$ such that $\rmctxtwop\var = \sitm$ and $\var\notin\fv\rmctxtwo$. Then $\strongmctx \defeq \strongmctxtwo \esub\vartwo\rmctxthree$ verifies the statement. If $\var$ does not occur in $\sitm$ then one uses \reflemmap{terms-to-multi-ctxs}{rigid-term} to see $\sitm$ as a normal rigid multi context and reason as before. If $\var$ does not occur in $\tmtwo$ then one sees $\tmtwo$ as a normal and fine multi context with no holes, as all terms are external multi contexts. Note that $\var$ has to occur in $\tmtwo$ or $\itm$, because it occurs in $\tm$, and so the context $\strongmctx$ is always proper.
  \end{enumerate}
\end{itemize}
\end{proof}

\paragraph{Stability of goodness by addition/removal of ES} The invariance of goodness for the $\rlsep$-transitions requires (weak) goodness to be stable by addition appropriate ES next to the hole of $\kctx$. Similarly, $\lrsep$-transitions require stability of (weak) goodness by removal of the innermost ES next to the hole in $\kctx$. We first focus on adding ES, in the next two lemmas.

\begin{lemma}[Open goodness addition]
\label{l:hereditarily-weakly-good}
 Let $\env$ be open good and $\itm$ be an inert term. Then:
 \begin{enumerate}
	\item if $\env(\var)$ undefined or $\env(\var) = \itm$ then $\esub\vartwo\var\env$ and $\esub{\vartwo}{\var\varthree}\env$ are open good. 
	\item $\esub{\vartwo}{\la\var\envtwo}\env$ is open good.
  \end{enumerate}
\end{lemma}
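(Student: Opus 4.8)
The statement to prove is \reflemma{hereditarily-weakly-good}, which asserts that \emph{open goodness} of an environment $\env$ is preserved when we prepend a new ES $\esub\vartwo\mol$ whose bite $\mol$ is either an inert bite ($\var$ or $\var\varthree$, under a lookup condition) or a value ($\la\var\envtwo$). Recall that $\env$ is open good when three conditions hold: $\indsub\env$ is a fireball substitution, $\indenv\env$ is an inert context, and $\env$ has immediate values. So the plan is to verify each of these three conditions for the extended environment $\esub\vartwo\mol\env$, using the characterizations of $\indsub\cdot$ and $\indenv\cdot$ under left-extension provided by \reflemma{sigma-wk-cons}.

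\textbf{Approach.} The key tool is \reflemmaboth{sigma-wk-cons}, which tells us exactly how prepending an ES acts on the induced substitution and the induced substitution context. I would split into the two cases of the statement, and within each handle the inert-bite versus value situations dictated by the side condition ``$\mol=\val$ or $\vartwo\in\crnames$'' appearing throughout the read-back definitions. For the first part, where $\mol$ is $\var$ or $\var\varthree$ (an inert bite), $\vartwo$ is a calculus variable in the intended use, so we are in the ``otherwise'' branch: by \reflemmap{sigma-wk-cons}{a} we get $\indsub{\esub\vartwo\mol\env} = \indsub\env$, and by \reflemmap{sigma-wk-cons}{c} we get $\indenv{\esub\vartwo\mol\env} = \esub\vartwo{\mol\indsub\env}\indenv\env$. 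The fireball-substitution and immediate-values conditions are then inherited verbatim from $\env$ since $\indsub\cdot$ is unchanged. The only genuine content is showing $\indenv{\esub\vartwo\mol\env}$ is still an inert context: since $\indenv\env$ is inert by hypothesis, it suffices that $\mol\indsub\env$ is an inert term. This is where the lookup hypothesis enters: $\mol$ reads back to $\var$ or $\var\varthree$, and applying $\indsub\env$ maps $\var$ to $\indsub\env(\var)$, which is either $\var$ itself (if $\env(\var)$ is undefined, so $\var\notin\dom{\indsub\env}$) or the read-back of the inert term $\itm$ (if $\env(\var)=\itm$), hence a strong inert term. I would invoke the fireball-substitution property plus the explicit hypothesis on $\env(\var)$ here.

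\textbf{The value case.} For the second part, $\mol = \la\var\envtwo$ is a value, so we are in the ``$\mol=\val$'' branch. By \reflemmap{sigma-wk-cons}{a}, $\indsub{\esub\vartwo\mol\env} = \isub\vartwo{\unf\mol\indsub\env}\cup\indsub\env$, and by \reflemmap{sigma-wk-cons}{c}, $\indenv{\esub\vartwo\mol\env} = \indenv\env$. The inert-context condition is thus immediate from the hypothesis. For the fireball-substitution condition I must check the newly added mapping $\vartwo\mapsto\unf\mol\indsub\env = (\la\var\unf\envtwo)\indsub\env$, which is an abstraction (values are stable under substitution) and hence a fireball; all other mappings are unchanged. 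For immediate values, the new mapping already sends $\vartwo$ to a syntactic value $\mol=\la\var\envtwo$, matching the requirement, and the rest is inherited. I expect this direction to be entirely routine once the \reflemma{sigma-wk-cons} rewritings are in place.

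\textbf{Main obstacle.} The only subtle point, and the place I would be most careful, is the inert-context verification in the first part: proving that $\mol\indsub\env$ is inert requires correctly combining the three ingredients ``$\env$ has a fireball substitution'', ``the hypothesis $\env(\var)$ undefined or $=\itm$'', and ``$\indsub\env$ substitutes an inert term (not a value) for $\var$''. In particular I must rule out the danger that $\indsub\env(\var)$ is a value, which would turn $\var\varthree$ into a would-be redex and break inertness; this is precisely excluded by the hypothesis that $\env(\var)$ is either undefined or an inert term, together with \reflemma{lookup-indsub} (which would otherwise force $\indsub\env(\var)$ to be a value only when $\env(\var)$ is a value). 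Modulo this bookkeeping, the lemma is a direct consequence of the left-extension formulas for $\indsub\cdot$ and $\indenv\cdot$.
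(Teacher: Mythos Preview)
Your proposal has a genuine gap in Part~1: you assume $\vartwo\in\calcnames$, claiming this is ``the intended use'', and hence land only in the ``otherwise'' branch of \reflemma{sigma-wk-cons}. This assumption is neither in the statement nor true in the paper's applications: when transition $\tomachcone$ moves an ES $\esub\vartwo\mol$ (with $\mol$ an inert bite) from the pristine left environment into the context, the pristine invariant forces $\vartwo\in\crnames$. The paper's proof accordingly handles three sub-cases for Part~1 ($\vartwo\in\crnames$ with $\env(\var)$ undefined, $\vartwo\in\crnames$ with $\env(\var)=\itm$, and $\vartwo\in\calcnames$). When $\vartwo\in\crnames$, the roles of $\indsub\cdot$ and $\indenv\cdot$ swap relative to what you wrote: $\indenv{\esub\vartwo\mol\env}=\indenv\env$ (so the inert-context condition becomes trivial), while $\indsub{\esub\vartwo\mol\env}$ acquires the new binding $\vartwo\mapsto\unf\mol\,\indsub\env$, and you must check that this is a fireball and that immediate values still holds for the extended substitution. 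The reasoning is close in spirit to what you sketched, but it is a separate case that must actually be carried out.

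A smaller issue: to exclude that $\indsub\env(\var)$ is a value you cite \reflemma{lookup-indsub}, but that lemma only gives the implication ``$\env(\var)$ value $\Rightarrow$ $\indsub\env(\var)$ value''; you need the converse direction. The correct tool, used in the paper, is the \emph{immediate values} property that is already part of the open-goodness hypothesis on $\env$: if $\indsub\env(\var)$ were a value (and $\var\neq\varstar$), then $\env(\var)$ would be a value, contradicting the assumption that $\env(\var)$ is undefined or inert.

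Your treatment of Part~2 (the value case) is correct and matches the paper.
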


\begin{proof}
\hfill
 \begin{enumerate}
 \item 
  We have to prove three facts:
 \begin{enumerate}
  \item \emph{$\indsub{\esub\vartwo\var\env}$ is a fireball substitution}. Three sub-cases:
  \begin{itemize}
    \item $\vartwo\in\crnames$ and $\env(\var)$ undefined: then  
       $\indsub{\esub\vartwo\var\env} =_\reflemmaeqp{sigma-wk-cons}{a} \isub{\vartwo}{\var}\cup\indsub{\env}$. 
       By hypothesis, $\indsub{\env}$ is a fireball substitution, and thus so is $\indsub{\esub\vartwo\var\env}$.

    \item $\vartwo\in\crnames$ and $\env(\var) = \itm$: 
       $\indsub{\esub\vartwo\var\env} =_\reflemmaeqp{sigma-wk-cons}{a} \isub{\vartwo}{\indsub{\env}(\var)}\cup\indsub{\env}$. 
       By hypothesis, $\indsub{\env}$ is a fireball substitution, thus $ \indsub{\env}(\var)$ is a fireball, and we conclude.

    \item $\vartwo\in\calcnames$:
    $\indsub{\esub\vartwo\var\env} =_\reflemmaeqp{sigma-wk-cons}{a} \indsub{\env}$, which by hypothesis is a fireball substitution.
  \end{itemize}
  
  \item \emph{$\indenv{\esub\vartwo\var\env}$ is a inert context}. Three sub-cases:
  \begin{itemize}     
    \item $\vartwo\in\crnames$:
    $\indenv{\esub\vartwo\var\env} =_\reflemmaeqp{sigma-wk-cons}{c} \indenv{\env}$, which by hypothesis is an inert context.

    \item $\vartwo\in\calcnames$ and $\env(\var)$ undefined:
    $\indenv{\esub\vartwo\var\env} =_\reflemmaeqp{sigma-wk-cons}{c} \esub{\vartwo}{\var}\indenv{\env}$.
    By hypothesis, $\indenv{\env}$ is an inert context, and thus so is $\indenv{\esub\vartwo\var\env}$.

    \item $\vartwo\in\calcnames$ and $\env(\var) = \itm$:
    $\indenv{\esub\vartwo\var\env} =_\reflemmaeqp{sigma-wk-cons}{c} \esub{\vartwo}{\indsub{\env}(\var)}\indenv{\env}$.
    By hypothesis, $\indenv{\env}$ is an inert context, and we need to prove that $\indsub{\env}(\var)$ is an inert term. Since $\env$ is open good, 
    $\indsub\env(\var)$ is a fireball, and if it is a value then $\env(\var)$ is also a value. By the side condition of the rule, $\env(\var)$ is an inert, and thus $\indsub\env(\var)$ is not a value, that is, it is a inert term.
    \end{itemize}
    
    \item \emph{$\esub\vartwo\var\env$ has immediate values}. Assume that $\indsub{\esub\vartwo\var\env} (\varthree)$ is a value for $\varthree \neq \varstar$, we have to prove that $(\esub\vartwo\var\env) (\varthree)$ is a value. Three sub-cases:    
    \begin{itemize}
    \item $\vartwo\in\crnames$ and $\env(\var)$ undefined: 
    $\indsub{\esub\vartwo\var\env}(\varthree) =_\reflemmaeqp{sigma-wk-cons}{a} (\isub{\vartwo}{\var}\cup\indsub{\env})(\varthree)$.
      If $\varthree\neq\vartwo$ then it follows by the fact that $\env$ has immediate values (by hypothesis). Otherwise, $(\isub{\vartwo}{\var}\cup\indsub{\env})(\vartwo) = \var$ which is not a value, and so the statement trivially holds.

    \item $\vartwo\in\crnames$ and $\env(\var) = \itm$: 
    $\indsub{\esub\vartwo\var\env}(\varthree) =_\reflemmaeqp{sigma-wk-cons}{a} (\isub{\vartwo}{\indsub{\env}(\var)}\cup\indsub{\env})(\varthree)$.
      If $\varthree\neq\vartwo$ then it follows by the fact that $\env$ has immediate values (by hypothesis). Otherwise, $(\isub{\vartwo}{\indsub{\env}(\var)}\cup\indsub{\env})(\vartwo) = \indsub{\env}(\var)$. Since $\env$ has immediate values and $\var \neq \varstar$ (because it occurs in $\esub\vartwo\var$), if $\indsub{\env}(\var)$ is a value then $\env(\var)$ is a value, against the hypothesis that it is an inert term---then this case is not possible.
      
    \item $\vartwo\in\calcnames$: 
      $\indsub{\esub\vartwo\var\env}(\varthree) = \indsub{\env}(\varthree)$ and the property follows from the fact that $\env$ has immediate value (by hypothesis).
  \end{itemize} 
 \end{enumerate}
 
 \item We have to prove three facts:
 \begin{enumerate}
  \item \emph{$\indsub{\esub\vartwo{\la\var\envtwo}\env}$ is a fireball substitution}. Then 
       $\indsub{\esub\vartwo{\la\var\envtwo}\env} =_\reflemmaeqp{sigma-wk-cons}{a} \isub{\vartwo}{\unf{\la\var\envtwo}\indsub{\env}}\cup\indsub{\env}$. 
       By hypothesis, $\indsub{\env}$ is a fireball substitution, thus so is $\indsub{\esub\vartwo{\la\var\envtwo}\env}$.
  
  \item \emph{$\indenv{\esub\vartwo\var\env}$ is a inert context}. Then 
    $\indenv{\esub\vartwo{\la\var\envtwo}\env} =_\reflemmaeqp{sigma-wk-cons}{c} \indenv{\env}$, which by hypothesis is an inert context.
    
    \item \sloppy \emph{$\esub\vartwo{\la\var\envtwo}\env$ has immediate values}. Assume that $\indsub{\esub\vartwo{\la\var\envtwo}\env} (\varthree)$ is a value for $\varthree \neq \varstar$, we have to prove that $(\esub\vartwo{\la\var\envtwo}\env) (\varthree)$ is a value. Note that $\indsub{\esub\vartwo{\la\var\envtwo}\env}(\varthree) =_\reflemmaeqp{sigma-wk-cons}{a} (\isub{\vartwo}{\unf{\la\var\envtwo}\indsub{\env}}\cup\indsub{\env})(\varthree)$. If $\varthree\neq\vartwo$ then it follows by the fact that $\env$ has immediate values (by hypothesis). Otherwise, $(\isub{\vartwo}{\unf{\la\var\envtwo}\indsub{\env}}\cup\indsub{\env})(\vartwo) = \unf{\la\var\envtwo}\indsub{\env}$ and $(\esub\vartwo{\la\var\envtwo}\env)(\vartwo) = \la\var\envtwo$ which, as required, is a value.
 \end{enumerate}
  \end{enumerate}
\end{proof}

\begin{lemma}[Goodness addition]
\label{l:hereditarily-good}
 Let $\kctx$ be good and $\itm$ be an inert term.
 \begin{enumerate}
	\item If $\wstenv\kctx(\var)$ undefined or $\wstenv\kctx(\var) = \itm$ then $\kctxp{\ctxhole\esub\vartwo\var}$ and $\kctxp{\ctxhole\esub\vartwo{\var\varthree}}$ are good. 
	\item If $\vartwo\notin \fv{\framei\kctx}$ then $\kctxp{\ctxhole\esub\vartwo{\la\var\envtwo}}$ is good.
  \end{enumerate}
\end{lemma}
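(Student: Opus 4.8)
The plan is to unfold the definition of a good context into its three conjuncts and verify each for $\kctx' \defeq \kctxp{\ctxhole\esub\vartwo\mol}$, where $\mol$ is $\var$ or $\var\varthree$ in item~1 and $\la\var\envtwo$ in item~2. Two structural identities drive everything: by \reflemma{unsenv-prefix} the induced environment is $\wstenv{\kctx'} = \esub\vartwo\mol\wstenv\kctx$, and by \reflemma{wk-st-lam} the frame is unchanged, $\framei{\kctx'} = \framei\kctx$. Since $\kctx$ is good, $\wstenv\kctx$ is open good, $\framei\kctx$ is well-framed w.r.t. $\indsub{\wstenv\kctx}$, and $\unf\kctx$ is fine; the well-named invariant moreover guarantees that $\vartwo$ is a fresh binder, so $\vartwo \notin \fv{\framei\kctx}$ (this is also assumed outright in item~2).

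The first conjunct, that $\wstenv{\kctx'} = \esub\vartwo\mol\wstenv\kctx$ is open good, is exactly \reflemma{hereditarily-weakly-good}: item~1 of the present lemma supplies the side condition ``$\wstenv\kctx(\var)$ undefined or $= \itm$'' needed by its first point, and item~2 falls under its second point. For the second conjunct I would observe that the frame is untouched, so it suffices to transport well-framing from $\indsub{\wstenv\kctx}$ to $\indsub{\wstenv{\kctx'}}$. Using \reflemma{sigma-wk-cons} I compute the new substitution: in the inert cases the image of $\vartwo$ is $\unf\mol\,\indsub{\wstenv\kctx}$, which is a non-value --- if $\wstenv\kctx(\var)$ is undefined it is a variable, and if it is an inert bite then, $\wstenv\kctx$ having immediate values, $\indsub{\wstenv\kctx}(\var)$ is a fireball that is not a value, hence inert --- so the set of variables sent to values is unchanged and compatibility (which only constrains those variables) is preserved; in the value case the image of $\vartwo$ is a value, but $\vartwo \notin \fv{\framei\kctx}$, so $\vartwo$ is free in no prefix of $\unf{\framei\kctx}$ (by \reflemma{properties-unfolding}, $\fv{\unf{\cdot}} \subseteq \fv{\cdot}$), and again compatibility is untouched.

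The delicate conjunct is fineness of $\unf{\kctx'}$. Here I would use the factorization of the read-back of a machine context through its frame and induced environment, namely $\unf\kctx = \indenv{\wstenv\kctx}\ctxholep{\unf{\framei\kctx}\indsub{\wstenv\kctx}}$ (derivable from \reflemma{read-back-decomposition} and \reflemma{frame-unf-factorization} by induction on $\kctx$), together with the analogous equation for $\kctx'$ obtained from $\framei{\kctx'} = \framei\kctx$ and $\wstenv{\kctx'} = \esub\vartwo\mol\wstenv\kctx$. A case split using \reflemma{sigma-wk-cons} then pins $\unf{\kctx'}$ down to one of two shapes. When $\vartwo\in\crnames$, or when $\mol$ is the value $\la\var\envtwo$, the prepended entry leaves $\indenv{\cdot}$ unchanged and only alters $\indsub{\cdot}$ at $\vartwo$, which does not occur free in $\unf{\framei\kctx}$; hence $\unf{\kctx'} = \unf\kctx$ and fineness is inherited. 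When $\vartwo\in\calcnames$ and $\mol$ is inert, the read-back is $\unf\kctx$ extended by an ES $\esub\vartwo{\unf\mol\indsub{\wstenv\kctx}}$ whose content $\unf\mol\indsub{\wstenv\kctx}$ is an inert term (as above); by the grammar clause $\strongmctx\esub\var\rmctx$, with \reflemma{terms-to-multi-ctxs} viewing the inert term as a rigid multi context, this is again a strong and proper, hence fine, multi context.

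The main obstacle is this last conjunct, and within it the bookkeeping of which variables a substitution sends to values. The argument hinges on showing the newly introduced binding is harmless --- either its image is a non-value (the inert cases), or it targets the globally fresh $\vartwo$ which is free nowhere in the already-evaluated part (the value case) --- and on confirming that adjoining an inert ES to a fine context keeps it strong and proper. Establishing and correctly instantiating the frame/environment factorization of $\unf\kctx$, so that the purely local change at the hole becomes visible at the level of multi contexts, is the technically heaviest step.
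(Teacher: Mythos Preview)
Your treatment of the first two conjuncts---open goodness of $\wstenv{\kctx'}$ via \reflemma{hereditarily-weakly-good}, and well-framing of the (unchanged) frame by transporting compatibility along the new substitution---is correct and matches the paper's argument closely.

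The gap is in the fineness conjunct. The factorization
\[
\unf\kctx \;=\; \indenv{\wstenv\kctx}\ctxholep{\unf{\framei\kctx}\,\indsub{\wstenv\kctx}}
\]
that you claim to derive from \reflemma{read-back-decomposition} and \reflemma{frame-unf-factorization} is \emph{false} for general good contexts. Take $\kctx = \env_3\esub{\var}{\la\vartwo(\ctxhole\,\env_1)}\,\env_2$ with $\env_3 = \esub\varstar\var$ and $\env_1,\env_2$ each a single ES of the form $\esub{a}{bc}$ with $a\in\calcnames$ and $bc$ an application. Then $\unf\kctx = (\la\vartwo(\ctxhole\esub{a}{bc}))\esub{d}{ef}$, whereas your factorization yields $(\la\vartwo\ctxhole)\esub{a}{bc}\esub{d}{ef}$: the inner ES belongs \emph{inside} the abstraction, not outside. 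The point is that the ES contributed by the inner layers of $\wstenv\kctx$ are trapped under the intervening $\lambda$'s and do not float to the outermost $\indenv{\cdot}$. Your subsequent claims about the shape of $\unf{\kctx'}$---in particular that the new ES sits at top level so that the clause $\strongmctx\esub\var\rmctx$ applies---inherit this error: the new ES actually appears at each \emph{hole} of $\unf\kctx$, not around it. You also quietly invoke well-namedness in item~1 to get $\vartwo\notin\fv{\framei\kctx}$, which is not a hypothesis of the lemma.

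The paper avoids all of this by using the modular read-back (\reflemmap{read-back-decomposition}{c}) \emph{directly} on $\kctx' = \kctxp{\ctxhole\esub\vartwo\mol}$, obtaining
\[
\unf{\kctx'} \;=\; \unf\kctx\ctxholep{\,\unf{\ctxhole\esub\vartwo\mol}\,\indsub{\wstenv\kctx}\,}.
\]
When $\vartwo\in\crnames$ or $\mol$ is a value, $\unf{\ctxhole\esub\vartwo\mol} = \ctxhole$ and hence $\unf{\kctx'}=\unf\kctx$ is fine; when $\vartwo\in\calcnames$ and $\mol$ is inert, $\unf{\ctxhole\esub\vartwo\mol}$ is the inert context $\ctxhole\esub\vartwo\mol$, which is fine (\reflemma{terms-to-multi-ctxs}) and compatible with $\indsub{\wstenv\kctx}$ (the hypothesis on $\var$ gives this), so $\unf{\ctxhole\esub\vartwo\mol}\indsub{\wstenv\kctx}$ is fine by \reflemma{sctx-and-sub}, and the composition $\unf\kctx\ctxholep{\cdot}$ is fine by \reflemmap{strctx-compos}{fine}. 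This route needs no frame/environment factorization and no freshness of $\vartwo$ in item~1.
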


\begin{proof}
\sloppy Let $\kctxtwo$ be either $\kctxp{\ctxhole\esub\vartwo\var}$, $\kctxp{\ctxhole\esub\vartwo{\var\varthree}}$ or $\kctxp{\ctxhole\esub\vartwo{\la\var\envtwo}}$ depending on which case we are proving.
For both points, by \reflemma{hereditarily-weakly-good} $\wstenv\kctxtwo$ is open good and $\framei\kctx$ is well-framed, so that we only have to show that the unfolding of the frame $\unf{\framei\kctxtwo}$ of $\kctxtwo$ is a fine and normal multi context compatible with $\wstenv\kctxtwo$ and that
the unfolding $\unf\kctxtwo$ of $\kctxtwo$ is fine.
\begin{itemize}
 \item $\unf{\framei\kctxtwo}$ is a fine and normal multi context compatible with $\wstenv\kctxtwo$:
 \begin{enumerate}
  \item We treat the case of $\kctxp{\ctxhole\esub\vartwo\var}$, for $\kctxp{\ctxhole\esub\varfour{\var\varthree}}$ the reasoning is identical. Note that $\framei\kctx = \framei{\kctxp{\ctxhole\esub\vartwo\var}}$. So by \ih we know that $\unf{\framei\kctx}$ is a fine  and normal multi context. We only need to show that it is compatible with $\indsub{\esub\vartwo\var\wstenv\kctx}$, knowing what we refer to as the \emph{compatibility hypothesis}, that is, that it is compatible with $\indsub{\esub\vartwo\var\wstenv\kctx}$. Three sub-cases:
  \begin{enumerate}
    \item $\vartwo\in\crnames$ and $\wstenv\kctx(\var)$ undefined: then by \reflemmap{sigma-wk-cons}{a} we have
    $\indsub{\esub\vartwo\var\wstenv\kctx} =\isub{\vartwo}{\var}\cup\indsub{\wstenv\kctx}$.
    Since $\var$ is inert, compatibility then follows from the compatibility hypothesis. 

    \item $\vartwo\in\crnames$ and $\wstenv\kctx(\var) = \itm$: then by \reflemmap{sigma-wk-cons}{a} we have
    $\indsub{\esub\vartwo\var\wstenv\kctx} =\isub{\vartwo}{\indsub{\wstenv\kctx}(\var)}\cup\indsub{\wstenv\kctx}$.
    By hypothesis, $\wstenv\kctx(\var)$ is an inert term. Since $\wstenv\kctx$ is open good, it has immediate values, and so $\indsub{\wstenv\kctx}(\var)$ is an inert term as well because $\var \neq \varstar$ because $\var$ occurs in $\esub\vartwo\var$. Compatibility then follows from the compatibility hypothesis.

    \item $\vartwo\in\calcnames$: then by \reflemmap{sigma-wk-cons}{c} we have
    $\indsub{\esub\vartwo\var\wstenv\kctx} = \indsub{\wstenv\kctx}$ and the property follows by compatibility hypothesis.
  \end{enumerate}
  
  \item \sloppy Note that also in this case we have $\framei\kctx =_{\reflemmaeqp{wk-st-lam}{b}} \framei{\kctxp{\ctxhole\esub\vartwo{\la\var\envtwo}}}$, and the \ih gives that it is a fine and normal multi context follows. We have to show it compatible with $\indsub{\esub\vartwo\var\wstenv\kctx}$, knowing that it is compatible with $\indsub{\wstenv\kctx}$. This is immediate, because by hypothesis $\vartwo \notin\fv{\framei\kctx}$ and thus $\vartwo \notin\fv{\unf{\framei\kctx}}$.
 \end{enumerate}
 \item $\unf\kctxtwo$ is fine:
 \begin{enumerate}
  \item \sloppy We treat the case of $\kctxp{\ctxhole\esub\vartwo\var}$, for $\kctxp{\ctxhole\esub\varfour{\var\varthree}}$ the reasoning is identical. We have $\unf{\kctxp{\ctxhole\esub\vartwo\var}} = \unf\kctx \ctxholep{\unf{\ctxhole\esub\vartwo\var}\indsub{\wstenv\kctx}}$ by \reflemmap{read-back-decomposition}{c}. Note that $\unf{\ctxhole\esub\vartwo\var}$ is either the inert context $\ctxhole$ or the inert context $\ctxhole\esub\vartwo\var$. Now we apply various lemmas about multi contexts:
\begin{itemize}
\item $\unf{\ctxhole\esub\vartwo\var}$ is a fine multi context by \reflemmap{terms-to-multi-ctxs}{inert-ctx},
\item  it is also is compatible with the fireball substitution $\indsub{\wstenv\kctx}$ because by hypothesis $\var$ is not bound to an abstraction in $\wstenv\kctx$, then 
\item $\unf{\ctxhole\esub\vartwo\var}\indsub{\wstenv\kctx}$ is a fine context by \reflemma{sctx-and-sub}, and finally
\item $\unf\kctx \ctxholep{\unf{\ctxhole\esub\vartwo\var}\indsub{\wstenv\kctx}}$ is a fine context by \reflemmap{strctx-compos}{fine}.
\end{itemize}

  \item Trivial because by \reflemmap{read-back-decomposition}{c} $\unf{\kctxp{\ctxhole\esub\vartwo{\la\var\envtwo}}}
   = \unf\kctx \ctxholep{\unf{\ctxhole\esub\vartwo{\la\var\envtwo}}\indsub{\wstenv\kctx}}
   = \unf\kctx \ctxholep{\ctxhole\indsub{\wstenv\kctx}} = \unf\kctx$ that is fine by hypothesis because $\kctx$ is good.
 \end{enumerate}
 
 \end{itemize}
\end{proof}

Goodness is also stable under removal of ES. In order to show that, we first prove in the following lemma a corresponding property for multi contexts: rigidity, strength, and properness are stable under removal.

\begin{lemma}
  \label{l:strong-proper-context-es-removal}
  Let $\mctx$ be a multi context.
  \begin{enumerate}
    \item \label{p:strong-proper-context-es-removal-a}
    If $\mctxp{\ctxhole\esub\var\tm}$ is a rigid multi context then so is $\mctx$.
    \item \label{p:strong-proper-context-es-removal-b}
    If $\mctxp{\ctxhole\esub\var\tm}$ is an external multi context then so is $\mctx$. 
  \end{enumerate}
  Moreover, if $\mctxp{\ctxhole\esub\var\tm}$ is proper then $\mctx$ is proper.
  \end{lemma}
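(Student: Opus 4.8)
Let $\mctx$ be a multi context. If $\mctxp{\ctxhole\esub\var\tm}$ is a rigid multi context then so is $\mctx$; if $\mctxp{\ctxhole\esub\var\tm}$ is an external multi context then so is $\mctx$; and moreover if $\mctxp{\ctxhole\esub\var\tm}$ is proper then $\mctx$ is proper.

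Let me sketch how I would prove this. The key observation is that plugging the single context $\ctxhole\esub\var\tm$ into every hole of $\mctx$ produces, at each former hole position, an ES node $[\,\cdot\,]\esub\var\tm$ whose right component $\tm$ contains no holes (it is a fixed term). So the holes of $\mctxp{\ctxhole\esub\var\tm}$ sit exactly in the left component of these freshly created ES nodes, in bijection with the holes of $\mctx$. This means $\mctx$ can be recovered from $\mctxp{\ctxhole\esub\var\tm}$ by deleting, at each such position, the enclosing $\esub\var\tm$ and replacing the subterm by the hole. The content of the proof is to check that the grammars of rigid and external multi contexts are closed under this deletion.

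**Approach.** The natural route is mutual structural induction on $\mctx$, proving the rigid and external statements simultaneously (since the two grammars are mutually defined, via the clauses $\strongmctx \grameq \rmctx$ and $\rmctx \grameq \rmctx\strongmctx \mid \ldots$). Concretely, I would prove by induction on $\mctx$: \textbf{(a)} if $\mctxp{\ctxhole\esub\var\tm}$ is rigid then $\mctx$ is rigid, and \textbf{(b)} if $\mctxp{\ctxhole\esub\var\tm}$ is external then $\mctx$ is external. I would proceed by cases on the shape of $\mctx$ according to the generic multi-context grammar. When $\mctx = \ctxhole$, then $\mctxp{\ctxhole\esub\var\tm} = \ctxhole\esub\var\tm$, which is neither rigid nor external (a rigid multi context cannot have $\ctxhole$ at the root, and $\ctxhole\esub\var\tm$ fits the external grammar only through $\strongmctx\esub\var\rmctx$ with $\strongmctx = \ctxhole$ — but then one would have concluded $\mctx = \ctxhole$ is external, which holds trivially); this boundary case needs careful reading against the external grammar. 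When $\mctx = \la\varfour\mctxtwo$, $\mctx = \mctxtwo\mctxthree$, or $\mctx = \mctxtwo\esub\varfour\mctxthree$, plugging distributes as $\la\varfour{\mctxtwop{\ctxhole\esub\var\tm}}$, $\mctxtwop{\ctxhole\esub\var\tm}\,\mctxthreep{\ctxhole\esub\var\tm}$, and $\mctxtwop{\ctxhole\esub\var\tm}\esub\varfour{\mctxthreep{\ctxhole\esub\var\tm}}$ respectively. For each, I would read off from the rigid/external grammar which of $\mctxtwo,\mctxthree$ must be rigid and which external, apply the induction hypotheses to transfer that classification back, and reassemble via the same grammar clause. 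The "moreover" (properness) clause follows the identical skeleton: the number of holes of $\mctxp{\ctxhole\esub\var\tm}$ equals the number of holes of $\mctx$, so properness transfers immediately; alternatively it falls out of tracking hole counts in the induction.

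**The main obstacle.** The subtle point is the external/rigid \emph{alternation} at the ES and application nodes. In the external grammar the ES clause is $\strongmctx\esub\var\rmctx$ — the \emph{left} is external, the \emph{right} is rigid — whereas the generic $\mctx$ has no such constraint, so when I decompose $\mctxp{\ctxhole\esub\var\tm}$ I must be sure the induction hypotheses are applied with the correct polarity (rigid vs.\ external) to each subcontext, and that this polarity is exactly what I need to rebuild $\mctx$. The case $\mctx = \la\varfour\mctxtwo$ is only available in the external grammar, not the rigid one, so under hypothesis (a) that subcase must be vacuous — i.e.\ $\la\varfour{\mctxtwop{\ctxhole\esub\var\tm}}$ is never rigid — and I must invoke this to discharge it, which amounts to noting that no rigid multi-context production has an abstraction at the root. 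Getting these head-symbol arguments exactly right against the three grammars (generic, external, rigid) is where the bookkeeping lives; everything else is routine.
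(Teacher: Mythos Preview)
Your proposal is correct and follows essentially the same approach as the paper: mutual structural induction on $\mctx$, with the rigid and external claims handled simultaneously, case-splitting on the head constructor of $\mctx$, and the properness clause falling out by tracking holes through the same induction. Your high-level observation that the holes of $\mctxp{\ctxhole\esub\var\tm}$ are in bijection with those of $\mctx$ is a clean way to dispatch the properness part; the paper treats it more tersely but equivalently. The only place you hesitate---the $\mctx=\ctxhole$ subcase for the external claim---is resolved exactly as you surmise: $\ctxhole$ is external by the grammar, so there is nothing to check.
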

  
  \begin{proof}
  By induction on $\mctx$.
  \begin{itemize}
    \item \emph{Empty}, \ie $\mctx = \ctxhole$. 
    \begin{enumerate}
      \item Then $\mctxp{\ctxhole\esub\var\tm} = \ctxhole\esub\var\tm$ but no rigid multi context can have this shape, so this case is impossible.
      
      \item Then $\ctxhole$ is an external multi context.
    \end{enumerate}
      
    \item \emph{Variable}, \ie $\rmctx = \var$. 
    \begin{enumerate}
      \item $\var$ is a rigid multi context.
      \item $\var$ is an external multi context.
    \end{enumerate}
      
    \item \emph{Abstraction}, \ie $\mctx = \la\var\mctxtwo$.
    \begin{enumerate}
      \item Then $\mctxp{\ctxhole\esub\var\tm} = \la\var\mctxtwop{\ctxhole\esub\var\tm}$ but no rigid multi context can have this shape, so this case is impossible.		
      \item By \ih $\mctxtwo$ is an external multi context, and then so is $\mctx$.
    \end{enumerate}
        
    \item \emph{Application}, \ie $\mctx = \mctxtwo \mctxthree$. 
    \begin{enumerate}
      \item If $\mctxp{\ctxhole\esub\var\tm} = \mctxtwop{\ctxhole\esub\var\tm} \mctxthreep{\ctxhole\esub\var\tm}$ is a rigid multi context then $\mctxtwop{\ctxhole\esub\var\tm}$ is a rigid multi context and $\mctxthreep{\ctxhole\esub\var\tm}$ is an external multi context. By \ih, $\mctxtwo$ is rigid and $\mctxthree$ is strong. Then $\mctx$ is rigid.
      \item $\mctxp{\ctxhole\esub\var\tm}$ is an external multi context only if it is rigid. By \refpoint{strong-proper-context-es-removal-a}, $\mctx$ is rigid, and thus strong.
    \end{enumerate}
      
    \item \emph{Explicit substitution}, \ie $\mctx = \mctxtwo \esub\vartwo\mctxthree$.
    \begin{enumerate}
      \item If $\mctxp{\ctxhole\esub\var\tm} = \mctxtwop{\ctxhole\esub\var\tm} \esub\vartwo{\mctxthreep{\ctxhole\esub\var\tm}}$ is a rigid multi context then $\mctxtwop{\ctxhole\esub\var\tm}$ is a rigid multi context and $\mctxthreep{\ctxhole\esub\var\tm}$ is a rigid multi context. By \ih, both $\mctxtwo$ and $\mctxthree$ are rigid, and then so is $\mctx$.
      \item If $\mctxp{\ctxhole\esub\var\tm} = \mctxtwop{\ctxhole\esub\var\tm} \esub\vartwo{\mctxthreep{\ctxhole\esub\var\tm}}$ is an external multi context then $\mctxtwop{\ctxhole\esub\var\tm}$ is an external multi context and $\mctxthreep{\ctxhole\esub\var\tm}$ is a rigid multi context. By \ih, $\mctxtwo$ is an external multi context and $\mctxthree$ is a rigid multi context. Then $\mctx$ is an external multi context.
    \end{enumerate}
  \end{itemize}	
  The moreover part follows evidently holds in the base cases, and it follows immediately from the \ih in the inductive cases.
  \end{proof}

The invariance of goodness by removals of the innermost ES next to the hole in $\kctx$ is simpler than stability for addition, and it is given by the next lemma.

\begin{lemma}[Goodness removal]
\label{l:goodness-removal}~
\begin{enumerate}
  \item If $ \esub\var\mol\env$ is open good
  then $\env$ is open good.
  
  \item If $\kctxp{\ctxhole\esub\var\mol}$ is good, then $\kctx$ is good.  
\end{enumerate}
\end{lemma}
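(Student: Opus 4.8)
The plan is to prove both points by peeling off the leftmost explicit substitution $\esub\var\mol$ and tracking how the induced substitution $\indsub\cdot$ and the induced substitution context $\indenv\cdot$ change. The engine is \reflemma{sigma-wk-cons}, which expresses $\indsub{\esub\var\mol\env}$ and $\indenv{\esub\var\mol\env}$ in terms of $\indsub\env$ and $\indenv\env$ through the case split on whether $(*)$ holds (i.e. $\mol$ is a value or $\var\in\crnames$) or not. For the first point, assuming $\esub\var\mol\env$ open good, the clause on $\indenv\env$ being an inert context is immediate: by \reflemmap{sigma-wk-cons}{c}, $\indenv{\esub\var\mol\env}$ is either $\indenv\env$ (case $(*)$) or $\esub\var{\mol\indsub\env}\indenv\env$ (otherwise), so in both cases $\indenv\env$ is a sublist of the inert context $\indenv{\esub\var\mol\env}$, hence inert. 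The clause that $\indsub\env$ is a fireball substitution follows similarly from \reflemmap{sigma-wk-cons}{a}: in the non-$(*)$ case $\indsub\env=\indsub{\esub\var\mol\env}$ outright, and in the $(*)$ case $\indsub{\esub\var\mol\env}$ agrees with $\indsub\env$ away from $\var$, so every value of $\indsub\env$ is already a value of the fireball substitution $\indsub{\esub\var\mol\env}$ (and on $\var$ itself the freshly introduced binder does not lie in $\domain\env$, whence $\indsub\env(\var)=\var$ is inert).

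The delicate clause is \emph{immediate values}, and this is the main obstacle. For every $\vartwo\neq\var$ the lookups and the induced substitutions of $\esub\var\mol\env$ and $\env$ coincide, so the property transfers directly; the subtlety is entirely concentrated at the shadowed variable $\var$. I would argue by vacuity there: in the non-$(*)$ case $\indsub{\esub\var\mol\env}=\indsub\env$, so if $\indsub\env(\var)$ were a value then immediate values of $\esub\var\mol\env$ would force $(\esub\var\mol\env)(\var)=\mol$ to be a value, contradicting $\lnot(*)$; in the $(*)$ case $\var$ is the freshly introduced binder of the removed ES, so $\var\notin\domain\env$ and $\indsub\env(\var)=\var$ is not a value. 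In both cases the premise of the implication fails, so it holds vacuously. It is precisely here that one must rule out interference between the outer binder $\var$ and a possible rebinding of $\var$ inside $\env$, which is excluded by the well-namedness of reachable states (\refthm{named-invariant}), the setting in which the lemma is applied.

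For the second point, set $\kctxtwo\defeq\kctxp{\ctxhole\esub\var\mol}$, assume it good, and verify the three clauses for $\kctx$. The clause that $\wstenv\kctx$ is open good follows from the first point: by \reflemmap{unsenv-prefix}{one} one has $\wstenv\kctxtwo=\esub\var\mol\wstenv\kctx$ open good, hence $\wstenv\kctx$ open good. The clause that $\framei\kctx$ is well-framed w.r.t. $\indsub{\wstenv\kctx}$ uses $\framei\kctxtwo=\framei\kctx$ (\reflemmap{wk-st-lam}{b}): the ``normal fine multi context'' part of well-framedness is independent of the substitution, and compatibility transfers from $\indsub{\wstenv\kctxtwo}$ to $\indsub{\wstenv\kctx}$ by the same $(*)$ split, since the two substitutions assign values to the same variables except possibly $\var$, which is not free in $\framei\kctx$ (again because it is a binder being removed). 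Finally, the clause that $\unf\kctx$ is fine uses modular read-back \reflemmap{read-back-decomposition}{c}: $\unf\kctxtwo=\unf\kctx\ctxholep{\unf{\ctxhole\esub\var\mol}\indsub{\wstenv\kctx}}$, where $\unf{\ctxhole\esub\var\mol}=\indenv{\esub\var\mol}$ equals $\ctxhole$ in case $(*)$ — giving $\unf\kctx=\unf\kctxtwo$ directly — and equals $\ctxhole\esub\var\mol$ otherwise, so that $\unf\kctxtwo=\unf\kctx\ctxholep{\ctxhole\esub\var{\mol\indsub{\wstenv\kctx}}}$ and \reflemma{strong-proper-context-es-removal} yields that $\unf\kctx$ is external and proper, hence fine.
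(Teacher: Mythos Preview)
Your proof is correct and follows the same route as the paper's: the same $(*)$ case split via \reflemma{sigma-wk-cons} for point~1, and the same chain \reflemmap{unsenv-prefix}{one}, \reflemmap{wk-st-lam}{b}, \reflemmap{read-back-decomposition}{c}, \reflemma{strong-proper-context-es-removal} for point~2. You are in fact more careful than the paper in isolating and flagging the reliance on well-namedness at the shadowed variable $\var$ (for the immediate-values clause and for the transfer of compatibility to $\indsub{\wstenv\kctx}$); the paper's proof leaves this implicit, which is harmless since the lemma is only invoked on reachable---hence well-named---states.
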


\begin{proof}
\hfill
\begin{enumerate}
 \item We have to prove three facts:
 \begin{enumerate}
  \item \emph{$\indsub\env$ is a fireball substitution}. By \reflemmap{sigma-wk-cons}{a}, $\indsub{\esub\var\mol\env}$ is $\indsub\env$ plus possibly a substitution on $\var$. Therefore, if $\indsub{\esub\var\mol\env}$ is a fireball substitution then so is $\indsub\env$.
  
  \item \emph{$\indenv\env$ is a inert context}. By \reflemmap{sigma-wk-cons}{c},
  $\indenv{\esub\var\mol\env}$ is $\indenv\env$ plus possibly an ES on $\var$. Therefore, if $\indenv{\esub\var\mol\env}$ is a inert substitution context then so is $\indenv\env$.
    
    \item \emph{$\env$ has immediate values}. By \reflemmap{sigma-wk-cons}{a}, $\indsub{\esub\var\mol\env}$ is $\indsub\env$ plus possibly a substitution on $\var$. Therefore, if $\indsub{\esub\var\mol\env}$ has immediate values then so does $\indsub\env$.
 \end{enumerate}
 
 \item By hypothesis $\wstenv{\kctxp{\ctxhole\esub\var\mol}}=_{\reflemmaeqp{unsenv-prefix}{one}}\esub\var\mol \wstenv\kctx$ is open good, and so $\wstenv\kctx$ is open good by the previous point.
\begin{enumerate}
 \item We prove that $\framei\kctx$ is well-framed w.r.t. $\indsub{\wstenv\kctx}$. By hypothesis, ${\framei{\kctxp{\ctxhole\esub\var\mol}}} =_\reflemmaeqp{wk-st-lam}{b} \framei\kctx$ is well-framed w.r.t. $\indsub{\wstenv{\kctxp{\ctxhole\esub\var\mol}}} = \indsub{\esub\var\mol\wstenv\kctx} $.
 Now, by \reflemmap{sigma-wk-cons}{a}, we have that $\indsub{\esub\var\mol\wstenv{\kctx}}$ is $\indsub{\wstenv\kctx}$ plus possibly a substitution on $\var$. Therefore, compatibility with respect to $\indsub{\esub\var\mol\wstenv{\kctx}}$ implies compatibility with respect to $\indsub{\wstenv\kctx}$ and thus $\framei\kctx$ is also well-framed w.r.t. $\indsub{\wstenv\kctx}$.
 \item We prove $\unf\kctx$ is a fine multi context. By hypothesis we know that $\kctxp{\ctxhole\esub\var\mol}$ is good and
  therefore that $\unf{\kctxp{\ctxhole\esub\var\mol}} =_\reflemmaeqp{read-back-decomposition}{c} \unf\kctx \ctxholep{\unf{\ctxhole\esub\var\mol}\indsub{\wstenv\kctx}}$ is a fine multi context. Two cases:
  \begin{itemize}
  	\item If $\mol = \val$ or $\var\in \crnames$ then $\unf\kctx \ctxholep{\unf{\ctxhole\esub\var\mol}\indsub{\wstenv\kctx}} = \unf\kctx \ctxholep{\ctxhole\isub\var{\unf\mol}\indsub{\wstenv\kctx}} = \unf\kctx$, which is then a fine multi context. 
	\item Otherwise, $\unf\kctx \ctxholep{\unf{\ctxhole\esub\var\mol}\indsub{\wstenv\kctx}}= \unf\kctx \ctxholep{\ctxhole\esub\var{\mol\indsub{\wstenv\kctx}}}$. By \reflemmap{strong-proper-context-es-removal}{b}, $\unf\kctx $ is a fine multi context.
	\qedhere
 	\end{itemize}
 \end{enumerate}
\end{enumerate} 
\end{proof}

\bigskip

We now have all the ingredients to conclude that also Goodness is propagated.


\begin{theorem}[Goodness invariant]
     \label{thm:goodness-invariants}
  Let $\state=\env \gensep\kctx$ be a state reachable from an initial state $\state_0$. Then $\state$ is good.
\end{theorem}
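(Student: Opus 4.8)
The Goodness invariant is the culmination of the invariant chain, and the plan is to prove it exactly like the previous four invariants (\refthm{named-invariant}, \refthm{pristine-invariant-app}, \refthm{well-crumbled-invariant-app}, \refthm{garbage-free-invariant-app}): by induction on the length of the execution $\exec\colon\state_0\tosmach^*\state$. For the base case, the initial state is $\state_0=\env_0\rlsep\ctxhole$ with $\env_0$ well-named and pristine; here $\wstenv\ctxhole=\emptyenv$ is trivially open good, $\framei\ctxhole=\ctxhole$ is vacuously well-framed, and $\unf\ctxhole=\ctxhole$ is a fine context, so $\state_0$ is good. For the inductive step, I would fix the last transition $\statetwo\tosmach\state$, assume by the induction hypothesis that $\statetwo$ is good, and analyze each of the nine \SCAM{} transitions in turn, discharging the three components of goodness for $\state$: that $\wstenv\kctx$ is open good, that $\framei\kctx$ is well-framed w.r.t.\ $\indsub{\wstenv\kctx}$, and that $\unf\kctx$ is a fine context (plus the side condition on $\unf\env$ when ${\gensep}={\lrsep}$).

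The whole point of the long sequence of auxiliary lemmas proved just before the theorem is that they package exactly the stability properties each transition needs. Concretely, the two $\rlsep$-search transitions $\tomachcone$ (and the initialization of a fresh ES) correspond to \emph{adding} an ES next to the hole of $\kctx$, so I would apply the Goodness addition lemma (\reflemma{hereditarily-good}), together with its open-level version (\reflemma{hereditarily-weakly-good}), after checking the side conditions on $\wstenv\kctx(\var)$ or $\vartwo\notin\fv{\framei\kctx}$ that the transitions guarantee. The phase-switch and $\lrsep$-transitions $\tomachctwo$, $\tomachcthree$, $\tomachgc$, $\tomachcfour$, $\tomachcfive$ either \emph{remove} the innermost ES next to the hole or merely reshuffle $\env$ and $\kctx$ without changing $\kctxp\env$; for removals I would invoke the Goodness removal lemma (\reflemma{goodness-removal}), and for the reshuffling transitions I would use the frame identities \reflemma{wk-st-lam} and the $\wstenv{(\cdot)}$ identities \reflemma{unsenv-prefix} to reduce the goodness of $\state$ to that of $\statetwo$. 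The $\lrsep$ side condition that $\unf\env$ is a strong fireball compatible with $\indsub{\wstenv\kctx}$ is handled at $\tomachctwo$ (where $\env=\emptyenv$ reads back to the fireball $\varstar$) and threaded through $\tomachcthree$/$\tomachgc$/$\tomachcffour$ using the fireball characterization (\reflemma{harmony}) and the compatibility-preservation lemmas.

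The genuinely delicate transitions are the $\beta$-transitions $\tomachbv$ and $\tomachbi$, since they substitute a renamed value body $\esub\varstar\mol\envtwo$ into $\env$ and may create the redex whose projection is the whole purpose of the machine. Here I would argue that goodness of $\kctx$ is preserved because $\kctx$ itself is untouched by $\tomachbv$ (the redex fires inside the left environment $\env$, on the left of $\rlsep$) while for $\tomachbi$ the context grows only by a fresh inert ES $\esub\varfour\varthree$, which I would treat via Goodness addition with the inert-term case; the open goodness of $\wstenv\kctx$ is inherited directly since $\wstenv\kctx$ is unchanged (or prepended with an inert bite). Compatibility is the subtle ingredient: I expect the main obstacle to be showing that the context-extraction machinery (\reflemma{str-fir-ctx-dec}), together with the multi-step lemma (\reflemma{multi-step}) and the compatibility-and-substitution lemmas (\reflemma{sctx-and-sub}, \reflemma{strctx-compos}, \reflemma{smctx-fire-compatibility}), can be combined to certify that reading back the copied body does not break externality — i.e.\ that $\unf\kctx$ remains a \emph{fine} (strong and proper) multi context after the substitution $\indsub{\wstenv\kctx}$ is applied. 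This is where the interaction between deep implosive sharing and the external strategy is most dangerous, and the proof must lean on the well-crumbled and garbage-free invariants already established to rule out the creation of non-external positions and the erasure of holes.

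\begin{proof}
The proof is by induction on the length of the execution $\exec\colon\state_0\tosmach^*\state$, following verbatim the pattern of the preceding invariants; in the inductive step one inspects the last transition $\statetwo\tosmach\state$, uses the induction hypothesis that $\statetwo$ is good, and applies \reflemma{hereditarily-good} (via \reflemma{hereditarily-weakly-good}) for the search transitions that add an ES, \reflemma{goodness-removal} for the transitions that remove the innermost ES, and the identities \reflemma{wk-st-lam} and \reflemma{unsenv-prefix} for the transitions that only reshuffle $\env$ and $\kctx$. The $\beta$-transitions are handled using \reflemma{str-fir-ctx-dec}, \reflemma{multi-step}, \reflemma{sctx-and-sub}, \reflemma{strctx-compos}, and \reflemma{smctx-fire-compatibility} to preserve compatibility and fineness of $\unf\kctx$, relying on the well-named, pristine, well-crumbled, and garbage-free invariants already established.
\end{proof}
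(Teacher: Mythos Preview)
Your overall plan (induction on the execution, case analysis on the nine transitions, invoking \reflemma{hereditarily-good} for ES-addition and \reflemma{goodness-removal} for ES-removal) is exactly right, and your handling of $\tomachbv$ and $\tomachbi$ in the first paragraph of the ``delicate'' discussion is also correct: $\kctx$ is untouched by $\tomachbv$, and $\tomachbi$ is just Goodness addition with an inert ES. But you then misplace the real difficulty. The $\beta$-transitions are in fact the \emph{easy} cases for goodness---the paper dismisses $\tomachbv$ and $\tomachsub$ with ``follows from the i.h.'' and $\tomachbi$ with a one-line appeal to \reflemma{hereditarily-good}. The context-extraction machinery (\reflemma{str-fir-ctx-dec}, \reflemma{strctx-compos}, \reflemma{sctx-and-sub}) is not needed there at all, because goodness constrains $\kctx$, not the copied body on the left of~$\rlsep$.

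The genuinely hard transition is $\tomachcfive$. There the machine context becomes $\kctxp{\env\esub\var{\la\vartwo\ctxhole}}$, whose frame is $\framei\kctx\ctxholep{\env\esub\var{\la\vartwo\ctxhole}}$, and you must show this new frame is well-framed w.r.t.\ $\indsub{\wstenv\kctx}$ and that its read-back stays fine. This is where \reflemma{str-fir-ctx-dec} enters: from the $\lrsep$-side condition you know $\unf\env$ is a strong fireball compatible with $\isub\var{\la\vartwo\unf\envtwo\indsub{\wstenv\kctx}}\cup\indsub{\wstenv\kctx}$, and from the garbage-free invariant you get $\var\in\fv{\unf\env}$; context extraction then writes $\unf\env=\strongmctxp\var$ for a normal fine $\strongmctx$, and \reflemma{strongmctx-plug-eq-sub}, \reflemma{strctx-compos}, \reflemma{sctx-and-sub} finish the job. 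The cases $\tomachcfour$ (re-establishing the compatible-fireball side condition on $\envtwo\esub\var{\la\vartwo\env}$ via \reflemma{smctx-fire-compatibility}) and $\tomachcthree$ (a small case analysis on $\var\in\crnames$ vs.\ $\var\in\calcnames$ using immediate values) also carry real content that your sketch glosses over. Finally, \reflemma{multi-step} plays no role in this proof; it is used only later, in relaxed $\beta$-projection.
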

\begin{proof}

By induction on the execution $\exec:\state_0 \tosmach^* \state$.
 If $\exec$ is empty then $\state = \state_0$ and, by definition of initial state, $\state_0 = \env_0 \rlsep \ctxhole$ for some well-named and pristine environment $\env_0$: $\env_0 \rlsep \ctxhole$ is \good by the definition of good, since $\env_0$ is pristine.

 If $\exec$ is non-empty we look at the last transition $\statetwo  \tosmach \state$, knowing by \ih that the goodness invariant holds
for $\statetwo$:

\begin{itemize}
  
\item 
$\env \esub\var{\vartwo\,\varthree} \rlsep \kctx
 \tomachbv 
\env (\esub\var \mol \envtwo \isub\varfour\varthree) \rlsep    \kctx$
with $\rename{(\wstenv\kctx(\vartwo))} = \la\varfour(\esub{\varstar}\mol\envtwo) $ and $\wstenv\kctx(\varthree)= \val$ for some $\val$.

Goodness follows from the \ih

     \medskip

\item 
$\env \esub\var{\vartwo\,\varthree}  \rlsep   \kctx
  \tomachbi 
\env \esub\var \mol \envtwo  \rlsep  \kctxp{ \ctxhole\esub\varfour\varthree   }$
with $\rename{(\wstenv\kctx(\vartwo))} = \la\varfour(\esub{\varstar}\mol\envtwo) $ and $\wstenv\kctx(\varthree)= \itm$ for some inert term $\itm$.

By \ih{}, $\kctx$ is good; then $\kctxp{\ctxhole\esub\varfour\varthree}$ is good by \reflemma{hereditarily-good}.

     \medskip

\item ${\env \esub\var\vartwo}  \rlsep  \kctx
 \tomachsub
 {\env \isub\var\vartwo}  \rlsep  \kctx $
with $\var\neq\varstar$.

Goodness follows from the \ih

\medskip

\item ${\env \esub\var\mol}  \rlsep  \kctx
     \tomachcone 
\env  \rlsep  \kctxp{ {\ctxhole\esub\var\mol}  }
$
     when $\mol$ is an abstraction or when $\mol$ is $\vartwo$ or $\vartwo\varthree$ but $\vartwo$ is not defined in $\wstenv\kctx$ or $\wstenv\kctx(\vartwo)$ is not a value.

     To show that $\kctxp{\ctxhole\esub\var\mol}$ is good, note that by \ih{}, $\kctx$ is good. For all cases but when $\mol$ is an abstraction we can immediately apply \reflemma{hereditarily-good} and obtain that $\kctxp{\ctxhole\esub\var\mol}$ is good. When $\mol$ is an abstraction, by \ih we obtain that $\statetwo$ is well-named, and so $\var\notin\fv{\framei\kctx}$. Then we can apply \reflemma{hereditarily-good} and obtain that $\kctxp{\ctxhole\esub\var\mol}$ is good. 

\medskip

\item $\emptyenv  \rlsep  \kctx
 \tomachctwo 
\emptyenv  \lrsep  \kctx$

$\emptyenv  \lrsep  \kctx$ is obviously good because the property holds by \ih

     \medskip

\item $\env  \lrsep  \kctxp{ {\ctxhole\esub\var\mol  }}
 \tomachcthree 
{\env \esub\var\mol}  \lrsep  \kctx
$
where $\mol$ is a variable or an application.

To show that ${\env \esub\var\mol}  \lrsep  \kctx$ is \good:
\begin{itemize}
 \item \emph{$\kctx$ is \good}: by \ih{}, $\kctxp{\ctxhole\esub\var\mol}$ is \good. \sloppy By \reflemma{goodness-removal}, $\kctx$ is good.
 \item \emph{$\unf{\env \esub\var\mol}$ is a strong fireball compatible with $\indsub{\wstenv\kctx}$}: by \ih{}, $\unf{\env}$ is a strong fireball compatible with $\indsub{\wstenv{\kctxp{\ctxhole\esub\var\mol}}} =_{\reflemmaeqp{unsenv-prefix}{one}} \indsub{\esub\var\mol\wstenv\kctx}$. Two cases:
  \begin{itemize}
    \item $\var\in\crnames$:
    $\unf{\env\esub\var\mol} = \unf\env\isub\var{\unf\mol}$
     and $\indsub{\esub\var\mol\wstenv\kctx} =_\reflemmaeqp{sigma-wk-cons}{a} \isub{\var}{\indsub{\wstenv\kctx}(\mol)}\cup\indsub{\wstenv\kctx}$.
     Because $\unf\env$ is a strong fireball (by \ih) and $\unf\mol=\mol$ is a strong inert (by hypothesis of the transition), then $\unf\env\isub\var{\unf\mol}$ is a strong fireball because strong fireballs are stable under the substitution of strong inerts. Let $\vartwo$ be such that $\indsub{\wstenv\kctx}$ is a value. 
     By compatibility of $\unf\env$ with $\isub{\var}{\indsub{\wstenv\kctx}(\mol)}\cup\indsub{\wstenv\kctx}$, $\vartwo$ occurs in $\unf\env$ only as an argument, if ever. Two cases:
     \begin{itemize}
     \item $\mol$ is a variable $\varthree$ and $\varthree \neq \varstar$ since it occurs in $\esub\var\mol$. Note that $\varthree\neq \vartwo$, otherwise $\wstenv\kctx$ would not have immediate values, against goodness of $\kctx$, and that for the same reason one also has $\var\neq\vartwo$. Then $\vartwo$ does not occur as an argument in $\unf{\env \esub\var\mol} = \unf\env\isub\var\varthree$. 
     
     \item $\mol$ is an application $\varthree\varfour$. 
     Note that if $\varthree = \vartwo$ then $\indsub{\wstenv\kctx}(\mol)$ is  not a fireball, against the hypothesis that $\isub{\var}{\indsub{\wstenv\kctx}(\mol)}\cup\indsub{\wstenv\kctx}$ is a fireball substitution. Then $\vartwo$ occurs only as an argument in $\unf{\env \esub\var\mol}$, and compatibility holds.
    \end{itemize}
     
    \item $\var\in\calcnames$: then $\unf{\env \esub\var\mol} = \unf\env \esub\var{\mol}$ and $\indsub{\esub\var\mol\wstenv\kctx} =_\reflemmaeqp{sigma-wk-cons}{a} \indsub{\wstenv\kctx}$. Since $\unf\env$ is a strong fireball (by \ih) and $\mol$ is a strong inert term (by hypothesis of the transition), then $\unf{\env \esub\var\mol}$ is a strong fireball. Compatibility for $\unf\env$ follows from the \ih, we only have to analyze $\esub\var\mol$. Let $\vartwo$ be such that $\indsub{\wstenv\kctx}$ is a value. Two cases:
     \begin{itemize}
     \item $\mol$ is a variable $\varthree$ and $\varthree \neq \varstar$ since it occurs in $\esub\var\mol$. Note that $\varthree\neq \vartwo$, otherwise $\wstenv\kctx$ would not have immediate values, against goodness of $\kctx$, and that for the same reason one also has $\var\neq\vartwo$. Then compatibility holds.
     
     \item $\mol$ is an application $\varthree\varfour$. 
     Note that if $\varthree = \vartwo$ then $\indsub{\wstenv\kctx}(\mol)$ is not a fireball, against the hypothesis that $\isub{\var}{\indsub{\wstenv\kctx}(\mol)}\cup\indsub{\wstenv\kctx}$ is a fireball substitution. Then $\varthree \neq \vartwo$ and compatibility holds.
    \end{itemize}    
  \end{itemize}
\end{itemize}
     \medskip

\item $\env  \lrsep  \kctxp{ {\ctxhole\esub\var\val  }}
 \tomachgc 
\env  \lrsep  \kctx$
with $\var \notin \fv\env$.

By \ih, $\kctxp{\ctxhole\esub\var\val}$ is \good. By \reflemma{goodness-removal}, $\kctx$ is good.

     \medskip

\item $\env  \lrsep  \kctxp{\envtwo {\esub\var{\la\vartwo\ctxhole }  }}
 \tomachcfour 
{\envtwo \esub\var{\la\vartwo\env}}  \lrsep   \kctx$. 

To show that ${\envtwo \esub\var{\la\vartwo\env}}  \lrsep   \kctx$ is \good:
\begin{itemize}
 \item \emph{$\kctx$ is \good}: by \ih{} $\kctxp{\envtwo \esub\var{\la\vartwo\ctxhole}}$ is \good. We have to prove that:
 \begin{itemize}
  \item \emph{$\wstenv\kctx$ is open good}: note that $\wstenv\kctx =_\reflemmaeqp{unsenv-prefix}{two} \wstenv{\kctxp{\envtwo \esub\var{\la\vartwo\ctxhole}}}$ and $\wstenv{\kctxp{\envtwo \esub\var{\la\vartwo\ctxhole}}}$ is open good because $\kctxp{\envtwo \esub\var{\la\vartwo\ctxhole}}$ is good.
  
  \item \emph{$\framei\kctx$ is well-framed w.r.t. $\indsub{\wstenv\kctx}$}.
   By \ih $\framei{\kctxp{\envtwo \esub\var{\la\vartwo\ctxhole}}} =_\reflemmaeqp{wk-st-lam}{a} \framei\kctx\ctxholep{\envtwo\esub\var{\la\vartwo\ctxhole}} $ is well-framed w.r.t.
   $\indsub{\wstenv{\kctxp{\envtwo \esub\var{\la\vartwo\ctxhole}}}} =_\reflemmaeqp{unsenv-prefix}{two}
    \indsub{\wstenv\kctx}$. Therefore also $\framei\kctx$ is well-framed w.r.t. $\indsub{\wstenv\kctx}$ by
   \reflemma{well-framed-match}.
 \item \emph{$\unf\kctx$ is fine}:
  by hypothesis we know that $\kctxp{\ctxhole\esub\var{\la\vartwo\env}}$ is good and
  therefore that $\unf{\kctxp{\ctxhole\esub\var{\la\vartwo\env}}} =_\reflemmaeqp{read-back-decomposition}{c} \unf\kctx\ctxholep{\unf{\ctxhole\esub\var{\la\vartwo\env}}\indsub{\wstenv\kctx}} = \unf\kctx\ctxholep{\ctxhole\isub\var{\la\vartwo\unf\env}\indsub{\wstenv\kctx}} = \unf\kctx$ is a fine multi context. 

 \end{itemize}

 \item \emph{$\unf{\envtwo \esub\var{\la\vartwo\env}}$ is a strong fireball compatible with $\indsub{\wstenv\kctx}$}: since $\env \lrsep \kctxp{\envtwo \esub\var{\la\vartwo\ctxhole}}$ is \good, we have that
 \begin{itemize}
 \item $\unf{\env}$ is a strong fireball compatible with $\indsub{\wstenv{\kctxp{\envtwo \esub\var{\la\vartwo\ctxhole}}}} =_\reflemmaeqp{unsenv-prefix}{two} \indsub{\wstenv\kctx}$.

 \item $\unf{\framei{\kctxp{\envtwo \esub\var{\la\vartwo\ctxhole}}}} =_\reflemmaeqp{wk-st-lam}{a} \unf{\framei\kctx\ctxholep{\envtwo\esub\var{\la\vartwo\ctxhole}}}$ is a fine context compatible with $\indsub{\wstenv{\kctxp{\envtwo \esub\var{\la\vartwo\ctxhole}}}} =_\reflemmaeqp{unsenv-prefix}{two} \indsub{\wstenv\kctx}$. 
 \end{itemize}
\sloppy Then $\unf{\envtwo \esub\var{\la\vartwo\env}} = \unf{\framei{\kctxp{\envtwo \esub\var{\la\vartwo\ctxhole}}}\ctxholep\env} =_{\reflemmaeqp{frame-unf-factorization}{b}} \unf{\framei{\kctxp{\envtwo \esub\var{\la\vartwo\ctxhole}}}} \ctxholep{\unf\env}$ is a strong fireball because $\framei{\kctxp{\envtwo \esub\var{\la\vartwo\ctxhole}}}$ is normal by \ih

By \reflemma{smctx-fire-compatibility}, compatibility of $\unf{\framei{\kctxp{\envtwo \esub\var{\la\vartwo\ctxhole}}}}$ and $\unf\env$ with $\indsub{\wstenv\kctx}$ implies compatibility of $\unf{\framei{\kctxp{\envtwo \esub\var{\la\vartwo\ctxhole}}}} \ctxholep{\unf\env}$ with $\indsub{\wstenv\kctx}$.
  \end{itemize}

     \medskip

\item $\env  \lrsep  \kctxp{ {\ctxhole\esub\var{\la\vartwo\envtwo} } }
 \tomachcfive 
\envtwo  \rlsep  \kctxp{\env {\esub\var{\la\vartwo\ctxhole}}  }
$ with $\var \in \fv\env$.

We need to prove that $\envtwo  \rlsep  \kctxp{\env {\esub\var{\la\vartwo\ctxhole}}  }$ is \good.
By \ih{}, $\env \lrsep \kctxp{\ctxhole \esub\var{\la\vartwo\envtwo}}$ is good, which means that:
\begin{itemize}
 \item ${\wstenv{\kctxp{\ctxhole \esub\var{\la\vartwo\envtwo}}}} =_\reflemmaeqp{unsenv-prefix}{one} \esub\var{\la\vartwo\envtwo}\wstenv\kctx$ is open good,
 
 \item $\framei{\kctxp{ {\ctxhole\esub\var{\la\vartwo\envtwo} } }}$ is well-framed w.r.t.
  $\indsub{\wstenv{\kctxp{\env {\esub\var{\la\vartwo\ctxhole}}  }}} =_\reflemmaeqp{unsenv-prefix}{two} \indsub{\wstenv\kctx}$.
  Therefore, by \reflemma{well-framed-match}, $\framei\kctx$ is also well-framed w.r.t. $\indsub{\wstenv\kctx}$.

 \item 
  $\unf{\kctxp{\ctxhole\esub\var{\la\vartwo\envtwo}}} =_\reflemmaeqp{read-back-decomposition}{c} \unf\kctx\ctxholep{\unf{\ctxhole\esub\var{\la\vartwo\envtwo}}\indsub{\wstenv\kctx}} = \unf\kctx\ctxholep{\unf{\ctxhole\isub\var{\la\vartwo\unf\envtwo}}\indsub{\wstenv\kctx}} = \unf\kctx$ is a fine multi context. 
 
  \item $\unf\env$ is a strong fireball compatible with $\indsub{\esub\var{\la\vartwo\envtwo}\wstenv\kctx} =_\reflemmaeqp{sigma-wk-cons}{a} \isub\var{\la\vartwo\unf\envtwo\indsub{\wstenv\kctx}} \indsub{\wstenv\kctx}$,
\end{itemize}

We have to prove that:
\begin{itemize}
  \item \emph{$\kctxp{\env \esub\var{\la\vartwo\ctxhole}}$ is \good}: that is, we have to prove that
  \begin{itemize}
    \item \sloppy \emph{$\wstenv{\kctxp{\env \esub\var{\la\vartwo\ctxhole}}}$ is open good}: note that $\wstenv{\kctxp{\env \esub\var{\la\vartwo\ctxhole}}} =_\reflemmaeqp{unsenv-prefix}{two} \wstenv\kctx$. By \ih (first item above), we know that $\esub\var{\la\vartwo\envtwo}\wstenv\kctx$ is open good. Then, $\wstenv\kctx$ is open good by the open goodness removal lemma (\reflemma{goodness-removal}).
    
    \item
\emph{$\framei {\kctxp{\env \esub\var{\la\vartwo\ctxhole}}}$ is well-framed w.r.t. $\indsub{\wstenv{\kctxp{\env \esub\var{\la\vartwo\ctxhole}}}} =_\reflemmaeqp{unsenv-prefix}{two} \indsub{\wstenv\kctx}$}: note that $\framei {\kctxp{\env \esub\var{\la\vartwo\ctxhole}}} =_\reflemmaeqp{wk-st-lam}{a} \framei\kctx\ctxholep{\env \esub\var{\la\vartwo\ctxhole}}$ and that we already proved that
$\framei\kctx$ is well-framed w.r.t. $\indsub{\wstenv\kctx}$. \sloppy Therefore we just need to show $\unf{\framei\kctx\ctxholep{\env \esub\var{\la\vartwo\ctxhole}}} =_{\reflemmaeqp{frame-unf-factorization}{a}} \unf{\framei\kctx}\ctxholep{\unf\env \isub\var{\la\vartwo\ctxhole}}$
to be a normal fine multi-context compatible with $\indsub{\wstenv\kctx}$.
    
Since $\framei\kctx$ is well-framed w.r.t. $\indsub{\wstenv\kctx}$, $\unf{\framei\kctx}$ is a fine and normal multi context compatible with $\indsub{\wstenv\kctx}$. By the fourth item of the \ih above, $\unf\env$ is a strong fireball compatible with $ \isub\var{\la\vartwo\unf\envtwo\indsub{\wstenv\kctx}} \indsub{\wstenv\kctx}$. By the hypothesis on the transition, $\var\in\fv\env$, and by the garbage invariant $\var\in\fv{\unf\env}$. Note that the compatibility hypothesis on $\unf\env$ implies in particular that  $\unf\env$ is compatible with $\isub\var\val$. 

Then by \reflemma{str-fir-ctx-dec} there exists a fine and normal context $\strongmctx$ such that $\var\notin\fv\strongmctx$ and $\unf\env = \strongmctxp\var$, and (by the hypothesis on $\unf\env$) $\strongmctx$ is compatible with $\indsub{\wstenv\kctx}$. \sloppy By \reflemma{strctx-compos}, $\unf{\env\isub\var{\la\vartwo\ctxhole}} = \strongmctxp\var\isub\var{\la\vartwo\ctxhole} =_{\reflemmaeq{strongmctx-plug-eq-sub}} \strongmctxp{\la\vartwo\ctxhole}$ is a fine and normal context, compatible with $\indsub{\wstenv\kctx}$ because $\strongmctx$ is. Note that $\unf{\framei {\kctxp{\env \esub\var{\la\vartwo\ctxhole}}}} = \unf{\framei\kctx}\ctxholep{\unf\env \isub\var{\la\vartwo\ctxhole}} = \unf{\framei\kctx}\ctxholep{\strongmctxp{ \la\vartwo\ctxhole}}$, which is finally proved to be a fine and normal context compatible with  $\indsub{\wstenv\kctx}$ by applying \reflemma{strctx-compos} once more.

 \item \emph{$\unf{\kctxp{\env \esub\var{\la\vartwo\ctxhole}}}$ is fine:} 
  $\unf{\kctxp{\env \esub\var{\la\vartwo\ctxhole}}} =_\reflemmaeqp{read-back-decomposition}{c}
   \unf{\kctx}\ctxholep{\unf{\env \esub\var{\la\vartwo\ctxhole}}\indsub{\wstenv\kctx}}
   =\unf{\kctx}\ctxholep{\strongmctxp{\la\vartwo\ctxhole}\indsub{\wstenv\kctx}}$

 \sloppy where $\strongmctxp{\la\vartwo\ctxhole}$ is the fine (and normal) context compatible with $\indsub{\wstenv\kctx}$ built in the previous
 proof item. Thus, by \reflemma{sctx-and-sub}, $\strongmctxp{\la\vartwo\ctxhole}\indsub{\wstenv\kctx}$ is a fine multi context and by \reflemmap{strctx-compos}{fine} so is
 $\unf{\kctx}\ctxholep{\strongmctxp{\la\vartwo\ctxhole}\indsub{\wstenv\kctx}}$.
 
 \qedhere
  \end{itemize}
\end{itemize}
\end{itemize}
\end{proof}

We conclude this section with the following theorem, proving the property of machine contexts that motivated the introduction of the Good invariant in the first place:

\begin{theorem}[Contextual read-back]
   \label{thmappendix:ctx-read-back}
      \NoteState{thm:ctx-read-back}
Let $\state = \env\gensep\kctx$ be a reachable state. Then $\unf\kctx$ is a proper external multi context.
\end{theorem}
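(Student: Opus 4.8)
The plan is to read off this theorem as an immediate corollary of the Goodness invariant (\refthm{goodness-invariants}), which has already been established for every reachable state. The entire difficulty of relating the read-back of machine contexts to external evaluation contexts has been front-loaded into the tower of invariants — well-named (\refthm{named-invariant}), pristine (\refthm{pristine-invariant-app}), well-crumbled (\refthm{well-crumbled-invariant-app}), garbage-free (\refthm{garbage-free-invariant-app}), and finally goodness — so what remains at this last step is merely to unfold definitions.

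Concretely, first I would apply \refthm{goodness-invariants} to the reachable state $\state = \env\gensep\kctx$ to obtain that $\state$ is good. By the definition of a good state, goodness of $\state$ entails in particular that the context $\kctx$ is good. Unfolding the definition of a good context, one of its three defining clauses states exactly that $\unf\kctx$ is a \emph{fine} multi context. Finally, recalling that \emph{fine} abbreviates \emph{strong} (equivalently, external) and \emph{proper}, I conclude that $\unf\kctx$ is a proper external multi context, which is the claim. No induction on the execution, no case analysis on the transitions, and no manipulation of the read-back function are needed at this point.

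The genuine obstacle — already overcome in the preceding subsections — is the propagation of goodness across the nine \SCAM{} transitions. That argument cannot be carried out in isolation: the goodness of $\unf\kctx$ and of $\wstenv\kctx$ must be preserved when the machine adds or removes an ES next to the hole of $\kctx$, which is precisely the content of the stability lemmas for open goodness and goodness (\reflemma{hereditarily-weakly-good}, \reflemma{hereditarily-good}, \reflemma{goodness-removal}), and the $\tomachcfive$ case additionally requires decomposing a strong fireball into an external multi context plugging a variable via the context-extraction lemma (\reflemma{str-fir-ctx-dec}), together with the compatibility machinery relating multi contexts and fireball substitutions. Once that scaffolding is in place — as it is here — the present theorem is a one-line consequence of the goodness clause ``$\unf\kctx$ is fine''.
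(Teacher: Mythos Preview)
Your proposal is correct and matches the paper's own proof essentially verbatim: the paper also derives the statement in one line from \refthm{goodness-invariants}, using that a good state has a good context $\kctx$, which by definition means $\unf\kctx$ is fine, i.e., proper and external. Your additional commentary on why the heavy lifting lies in the preceding invariant-propagation lemmas is accurate and well placed.
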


\begin{proof}
By \refthm{goodness-invariants} $\state$ is good. Then $\unf\kctx$ is a proper external multi context.
\end{proof}

\subsection{Proof of the \SCAM{} Implementation Theorem }

Here we are supposed to prove the following theorem.
\begin{theorem}[\SCAM{} implementation]
\label{thmappendix:machine-final}
\NoteState{th:machine-final}
\begin{enumerate}
  \item \emph{Relaxed $\beta$-projection}: let $\state$ be 
a reachable state. If $\state \tomachbv \statetwo$ then $\unf\state (\toms\toes)^+\unf\statetwo$, and if 
$\state \tomachbi \statetwo$ then $\unf\state \toms^+\eqstruct \unf\statetwo$.

  \item \emph{Strong implementation}: the \SCAM is a relaxed implementation of the strategy $(\tovsubs,\eqstruct)$.
\end{enumerate}
\end{theorem}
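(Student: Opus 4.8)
The plan is to derive both points from the abstract machinery of \refSECT{RELAXED}. For point~2 it suffices, by the abstract implementation theorem (\refthm{abs-impl}), to check that the \SCAM{} and $(\tovsubs,\eqstruct)$ form a relaxed implementation system in the sense of Definition~\ref{def:implementation}, that is, to verify its five conditions on every reachable state. The first of these conditions is precisely point~1, the relaxed $\beta$-projection, which is the crux and which I would establish first.

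For point~1, consider a reachable state $\state=\env\esub\var{\vartwo\varthree}\,\gensep\,\kctx$ firing a $\beta$-transition. By modular read-back (\reflemma{read-back-decomposition}) we have $\unf\state=\unf\kctx\ctxholep{\unf{(\env\esub\var{\vartwo\varthree})}\indsub{\wstenv\kctx}}$; write $\tm$ for the filler. The first step is the \emph{open-level projection}, inherited from the \OCAM{} analysis: using the fireball-substitution, immediate-values, pristine and garbage-free invariants, $\tm$ rewrites at the root of a multiplicative redex sitting in an open context, so $\tm\toms\tm'$ (with the empty external context, since every open step is trivially external), plus one extra $\toes$ step in the $\tomachbv$ case, and modulo a structural rearrangement $\eqstruct$ of the created $\mathsf{ES}$ in the $\tomachbi$ case, where $\tm'$ is the read-back of the corresponding portion of $\statetwo$. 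The second step invokes the hard result of the section, contextual read-back (\refthm{ctx-read-back}), by which $\unf\kctx$ is a \emph{proper} external multi context, hence has $k\ge 1$ holes. The third step applies the multi step lemma (\reflemma{multi-step}) to lift the reduction of $\tm$ through $\unf\kctx$: plugging replicates it once per hole, giving $\unf\kctx\ctxholep\tm\,(\toms\toes)^+\,\unf\kctx\ctxholep{\tm'}=\unf\statetwo$ for $\tomachbv$ and $\unf\kctx\ctxholep\tm\,\toms^+\,\eqstruct\,\unf\statetwo$ for $\tomachbi$, the $\eqstruct$ being floated out through the multi context since it is a congruence. Because $k$ may be strictly greater than $1$ — deep implosive sharing duplicates the hole — a single $\beta$-transition genuinely maps to \emph{many} calculus steps, which is exactly why the one-to-many phenomenon cannot be captured by an ordinary bisimulation.

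With point~1 in hand, I would complete the remaining four conditions of Definition~\ref{def:implementation}. \emph{Overhead transparency} ($\state\tomacho\statetwo\Rightarrow\unf\state\eqstruct\unf\statetwo$) is a routine per-transition check: $\tomachsub$ is absorbed by $\indsub{\cdot}$; the pointer-moving searches $\tomachcone$–$\tomachcfive$ and the phase switch $\tomachctwo$ leave $\kctxp\env$ unchanged up to $\eqstruct$; and $\tomachgc$ erases an $\mathsf{ES}$ whose bound variable does not occur, which by the garbage-free invariant (\refthm{garbage-free-invariant-app}) is invisible under read-back. \emph{Overhead termination} follows from the measure $\tmpMeasure{\cdot}$ of \refSECT{COMPLEXITY}, which strictly decreases on the open overhead (\reflemma{meas-during-exec}) while the strong-phase transitions are bounded by the open ones (\reflemma{strong-bound-open}). \emph{Halt} uses the goodness and garbage-free invariants to show that a final state $\emptyenv\,\lrsep\,\kctx$ reads back to a strong fireball, which is $\tovsubs$-normal by \reflemma{harmony}. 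Finally, \emph{lax determinism} holds because $\tovsubs$ is diamond (\refprop{external-properties}) and $\tomachscam$ is deterministic by inspection of the mutually exclusive side conditions. Applying \refthm{abs-impl} then yields point~2. The main obstacle throughout is point~1, and within it the contextual read-back theorem: the deep implosive sharing forces the read-back of a machine context to be a \emph{multi} context rather than a context, and proving it external is exactly the delicate goodness invariant (\refthm{goodness-invariants}) established earlier.
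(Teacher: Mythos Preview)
Your approach is essentially the paper's: point~1 is obtained from the pristine/open-level analysis plus contextual read-back (\refthm{ctx-read-back}) and the multi-step lemma (\reflemma{multi-step}), and point~2 follows by verifying the five conditions of Definition~\ref{def:implementation} and applying \refthm{abs-impl}. One small slip: the final states of the \SCAM{} are not $\emptyenv\,\lrsep\,\kctx$ but $\env\,\lrsep\,\ctxhole$ (the strong phase exhausts the context, not the environment), and for the halt property you then use that $\unf\env$ is a strong fireball via the goodness invariant; also, overhead transparency for $\tomachgc$ needs only the side condition $\var\notin\fv\env$ (hence $\var\notin\fv{\unf\env}$), not the garbage-free invariant.
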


We prove the two points separately.

\subsection{Proof of Relaxed Projection}

In this section we provide the proof of \refthboth{machine-final}: the read back projects $\beta$ transitions to steps in the calculus. More precisely, each $\tomachbv$ transition projects to one or more $\toms$ steps followed by as many $\toes$ steps --- \refthpboth{machine-final}{projection-ms} ---, and each $\tomachbi$ transition projects to one or more $\toms$ steps \emph{up to structural equivalence}, \ie{} $\toms^+\eqstruct$ (hence the term ``relaxed'') --- \refthpboth{machine-final}{impl-scam}.

For this reason, we first need an auxiliary lemma that shows how substitutions can be permutated in terms. The following lemma proves (under suitable requirements on variables) that substitution contexts commute with open contexts, up to the structural equality of VSC.
\begin{lemma}[Open and substitution contexts commute up to $\eqstruct$]
  \label{l:open-contextual}
    Let $\sctx$ be a substitution context and $\openctx$ be an open context such that:
    \begin{itemize}
      \item $\domain\sctx \Disj \fv\openctx$,
      \item $\fv\sctx \Disj \domain\openctx$,
      \item $\domain\sctx \Disj \domain\openctx$.
    \end{itemize} 
    Then $\sctxp{\openctxp\tm} \equiv \openctxp{\sctxp\tm}$.
  \end{lemma}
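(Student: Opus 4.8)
The plan is to prove the statement by a nested induction: an outer induction on the number of explicit substitutions in $\sctx$, which reduces the general case to that of a single ES, and an inner induction on the structure of $\openctx$ that handles the single-ES case, where the actual work lies. Throughout I would freely use that $\eqstruct$ is closed under all contexts (contextual closure) and is transitive.

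First I would isolate the auxiliary \emph{single-ES} claim: if $\var\notin\fv\openctx$, $\var\notin\domain\openctx$ and $\fv\tmtwo\Disj\domain\openctx$, then $\openctxp\tm\esub\var\tmtwo \eqstruct \openctxp{\tm\esub\var\tmtwo}$. I would prove this by induction on $\openctx$. The case $\openctx=\ctxhole$ is an identity. Each of the four remaining cases moves $\esub\var\tmtwo$ one constructor inward using exactly one generator of $\eqstruct$, followed by the \ih and contextual closure: for $\openctx = \openctxtwo\,\tmthree$ I use $\tostructapl$, for $\openctx = \tmthree\,\openctxtwo$ I use $\tostructapr$, for $\openctx = \openctxtwo\esub\vartwo\tmthree$ I use $\tostructcom$, and for $\openctx = \tmthree\esub\vartwo\openctxtwo$ I use $\tostructes$ read in the nesting (un-flattening) direction. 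In every case the side condition ``$\var\notin\fv{(\text{the fixed subterm})}$'' reduces to $\var\notin\fv\openctx$, since that subterm is a piece of $\openctx$; in the $\tostructcom$ case the extra condition $\vartwo\notin\fv\tmtwo$ follows from $\fv\tmtwo\Disj\domain\openctx$ (as $\vartwo\in\domain\openctx$), and $\var\ne\vartwo$ from $\var\notin\domain\openctx$.

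With the single-ES claim in hand, the main statement follows by induction on $\sctx$. The base $\sctx=\ctxhole$ is trivial. For $\sctx=\sctxtwo\esub\var\tmtwo$, so that $\sctxp s=(\sctxtwop s)\esub\var\tmtwo$, the three disjointness hypotheses restrict to $\sctxtwo$ — using $\domain\sctxtwo\subseteq\domain\sctx$, $\fv\sctxtwo\subseteq\fv\sctx\cup\set\var$, and $\var\notin\domain\openctx$ — so the \ih gives $\sctxtwop{\openctxp\tm}\eqstruct\openctxp{\sctxtwop\tm}$, whence by contextual closure $(\sctxtwop{\openctxp\tm})\esub\var\tmtwo \eqstruct (\openctxp{\sctxtwop\tm})\esub\var\tmtwo$. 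Applying the single-ES claim to the term $\sctxtwop\tm$, the substitution $\esub\var\tmtwo$, and the context $\openctx$ — whose side conditions hold because $\var\in\domain\sctx$ and $\fv\tmtwo\subseteq\fv\sctx$ — yields $(\openctxp{\sctxtwop\tm})\esub\var\tmtwo\eqstruct\openctxp{(\sctxtwop\tm)\esub\var\tmtwo}=\openctxp{\sctxp\tm}$, and transitivity concludes.

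I expect the main obstacle to be the ES-on-the-right case $\openctx=\tmthree\esub\vartwo\openctxtwo$ of the inner induction, where $\esub\var\tmtwo$ must be pushed inside the \emph{body} of the substitution $\esub\vartwo{(\cdot)}$. This is the only case requiring $\tostructes$ in its nesting orientation, and it is also the case where one must check that the binder $\vartwo$ does not capture the free variables of $\tmtwo$; the former needs $\var\notin\fv\tmthree$ (from $\domain\sctx\Disj\fv\openctx$) and $\var\ne\vartwo$ (from $\domain\sctx\Disj\domain\openctx$), while the latter follows from $\fv\sctx\Disj\domain\openctx$ giving $\vartwo\notin\fv\tmtwo$. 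Beyond this, the remaining difficulty is purely the bookkeeping of matching each structural-equivalence generator to the correct subterms and discharging its side condition from the three disjointness assumptions.
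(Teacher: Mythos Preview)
Your proof is correct. Both you and the paper argue by a nested induction that pushes explicit substitutions across one constructor of $\openctx$ at a time using the appropriate generator of $\eqstruct$ ($\tostructapl$, $\tostructapr$, $\tostructcom$, $\tostructes$), but the two inductions are organized dually. The paper takes the \emph{outer} induction on the structure of $\openctx$: for each constructor of $\openctx$ it proves, by an \emph{inner} induction on $\sctx$, that the whole list of explicit substitutions commutes past that single constructor (e.g.\ $\sctxp{\openctxtwop\tm\,\tmtwo}\tostructapl^*\sctxp{\openctxtwop\tm}\,\tmtwo$), and then invokes the outer \ih. You instead take the outer induction on $\sctx$, isolate a clean single-ES lemma, and prove \emph{that} by induction on $\openctx$. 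Your organization has the advantage that the inner lemma is stated once rather than repeated (with small variations) in each case of the outer induction; the paper's organization avoids restating and re-checking the disjointness hypotheses when passing from $\sctx$ to $\sctxtwo$. Either way the side-condition bookkeeping is the same, and your identification of the $\openctx=\tmthree\esub\vartwo\openctxtwo$ case as the most delicate one matches the paper.
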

  \begin{proof}
    By induction on the structure of $\openctx$:
    \begin{itemize}
      \item If $\openctx=\ctxhole$, the statement follows trivially.
      \item If $\openctx = \openctxtwo\tmtwo$, then $\sctxp{\openctxp\tm} = \sctxp{\openctxtwop\tm\tmtwo}$.
      We prove that $\sctxp{\openctxtwop\tm\tmtwo} \tostructapl^* \sctxp{\openctxtwop\tm}\tmtwo$ by induction on the structure of $\sctx$:
      \begin{itemize}
        \item The case $\sctx=\ctxhole$ is trivial.
        \item When $\sctx = \sctxtwo\esub\vartwo\tmthree$, $\sctxp{\openctxtwop\tm\tmtwo} = \sctxtwop{\openctxtwop\tm\tmtwo}\esub\vartwo\tmthree$. By \ih{} $\sctxtwop{\openctxtwop\tm\tmtwo}\esub\vartwo\tmthree \tostructapl^* \sctxtwop{\openctxtwop\tm}\tmtwo\esub\vartwo\tmthree$ (note that the additional requirement that $\vartwo\not\in\domain\openctxtwo$ follows from the hypothesis that $\domain\sctx \Disj \domain\openctx$). In order to apply one last time $\tostructapl$ and conclude, we need to show that $\vartwo\not\in\fv\tmtwo$, which follows from the hypothesis that $\domain\sctx \Disj \fv\openctx$.
      \end{itemize}
      Finally, we conclude by using the \ih{}
  
      \item The case $\openctx = \tmtwo\openctxtwo$ is similar to the case above.
      \item If $\openctx = \openctxtwo\esub\var\tmtwo$ then $\sctxp{\openctxp\tm} = \sctxp{\openctxtwo\esub\var\tmtwo}$.
      We prove that $\sctxp{\openctxtwo\esub\var\tmtwo} \tostructcom^* \sctxp\openctxtwo\esub\var\tmtwo$ by induction on the structure of $\sctx$:
      \begin{itemize}
        \item The case $\sctx=\ctxhole$ is trivial.
        \item When $\sctx = \sctxtwo\esub\vartwo\tmthree$, $\sctxp{\openctxtwo\esub\var\tmtwo} = \sctxtwop{\openctxtwo\esub\var\tmtwo}\esub\vartwo\tmthree$. By \ih{} $\sctxtwop{\openctxtwo\esub\var\tmtwo}\esub\vartwo\tmthree \tostructcom^* \sctxtwop\openctxtwo\esub\var\tmtwo\esub\vartwo\tmthree$ (again, the additional requirement that $\vartwo\not\in\domain\openctxtwo$ follows from the hypothesis that $\domain\sctx \Disj \domain\openctx$). In order to apply one last time $\tostructcom$ and conclude, we need to show that $\vartwo\not\in\fv\tmtwo$ and $\var\not\in\fv\tmthree$: the first follows from the hypothesis that $\domain\sctx \Disj \fv\openctx$, the second from $\domain\openctx \Disj \fv\sctx$.
      \end{itemize}
      Finally, we conclude by using the \ih{}
      \item If $\openctx = \tmtwo\esub\var\openctxtwo$ then $\sctxp{\openctxp\tm} = \sctxp{\tmtwo\esub\var{\openctxtwop\tm}} $.
      We prove that $\sctxp{\tmtwo\esub\var{\openctxtwop\tm}} \tostructes^* \tmtwo\esub\var{\sctxp{\openctxtwop\tm}}$ by induction on the structure of $\sctx$:
      \begin{itemize}
        \item The case $\sctx=\ctxhole$ is trivial.
        \item When $\sctx = \sctxtwo\esub\vartwo\tmthree$, $\sctxp{\tmtwo\esub\var{\openctxtwop\tm}} = \sctxtwop{\tmtwo\esub\var{\openctxtwop\tm}}\esub\vartwo\tmthree$. By \ih{} $\sctxtwop{\tmtwo\esub\var{\openctxtwop\tm}}\esub\vartwo\tmthree \tostructes^* \tmtwo\esub\var{\sctxtwop{\openctxtwop\tm}}\esub\vartwo\tmthree$ (again, the additional requirement that $\vartwo\not\in\domain\openctxtwo$ follows from the hypothesis that $\domain\sctx \Disj \domain\openctx$). In order to apply one last time $\tostructes$ and conclude, we need to show that $\vartwo\not\in\fv\tmtwo$, which follows from the hypothesis that $\domain\sctx \Disj \fv\openctx$ and $\domain\sctx \Disj \domain\openctx$.
      \end{itemize}
      Finally, we conclude by using the \ih{}
    \qedhere
    \end{itemize}
  \end{proof}

\begin{theorem}[Relaxed projection]
\label{thmappendix:projection-ms}
\NoteStateP{machine-final}{projection-ms}
Let $\state$ be a reachable state. 
\begin{enumerate} 
 \item If $\state \tomachbv \statetwo$ then $\unf\state \mathrel{(\toms\toes)}^+\unf\statetwo$.
 \item If $\state \tomachbi \statetwo$ then $\unf\state \toms^+\eqstruct \unf\statetwo$.
\end{enumerate}
\end{theorem}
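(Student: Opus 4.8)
The plan is to reduce both cases to a single inner open reduction that is then replicated by the multi-step lemma. Since $\beta$-transitions fire in the open phase, write $\state = \env\esub\var{\vartwo\varthree}\rlsep\kctx$ and set $\sigma \defeq \indsub{\wstenv\kctx}$. By modular read-back (\reflemma{read-back-decomposition}) we have $\unf\state = \unf\kctx\ctxholep{\unf{(\env\esub\var{\vartwo\varthree})}\sigma}$, and by contextual read-back (\refthm{ctx-read-back}) the multi context $\strongmctx \defeq \unf\kctx$ is proper and external, say with $k \geq 1$ holes. It then suffices to (i) exhibit the appropriate external reduction on the inner term $u \defeq \unf{(\env\esub\var{\vartwo\varthree})}\sigma$, reaching (up to $\eqstruct$ in the inert case) the inner read-back $u'$ of $\statetwo$, and (ii) apply \reflemma{multi-step} to replicate it across the $k$ holes of $\strongmctx$.

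For (i), the pristine invariant gives that $\env\esub\var{\vartwo\varthree}$ is pristine, so pristine dissociation (\reflemma{weak-unfolding-pristine-new}) yields an open context $\openctx$ with $\unf{(\env\esub\var{\vartwo\varthree})} = \openctxp{\vartwo\varthree}$; applying $\sigma$ and commuting it through the open context (\reflemma{horror}) gives $u = \openctx\sigma\ctxholep{\vartwo\sigma\,\varthree\sigma}$ with $\openctx\sigma$ still open. By the immediate-values and fireball-substitution invariants (part of goodness), $\vartwo\sigma$ is the value $\valtwo = \la\varfour\,\unf{(\esub\varstar\mol\envtwo)}$, so $\vartwo\sigma\,\varthree\sigma$ is a $\rtom$-redex and $u \Rew{\osym\msym} \openctx\sigma\ctxholep{\unf{(\esub\varstar\mol\envtwo)}\esub\varfour{\varthree\sigma}}$ is an open multiplicative step. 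In the $\tomachbv$ case $\varthree\sigma$ is moreover a value, so the freshly created ES is a $\rtoe$-redex and a further open exponential step reaches $\openctx\sigma\ctxholep{\unf{(\esub\varstar\mol\envtwo)}\isub\varfour{\varthree\sigma}}$, which a read-back computation—via the renaming and substitution clauses of \reflemma{properties-unfolding}—identifies with $u' = \unf{(\env(\esub\var\mol\envtwo\isub\varfour\varthree))}\sigma$; hence $u \Rew{\osym\msym}\Rew{\osym\esym} u'$. In the $\tomachbi$ case $\varthree\sigma$ is inert, so no exponential step fires; the machine records $\esub\varfour\varthree$ inside the context instead (so $\statetwo = \env\esub\var\mol\envtwo\rlsep\kctxp{\ctxhole\esub\varfour\varthree}$), and the open reduct differs from $\unf\statetwo$ only by the position of this ES, reconciled through $\eqstruct$ by the commutation of open and substitution contexts (\reflemma{open-contextual}); hence $u \Rew{\osym\msym}\eqstruct u'$.

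The key—and initially confusing—observation making (i) usable is that the inner reduction, although fired in a merely \emph{open} context, is already an \emph{external} step: the external rules close open reductions under external contexts, and the empty context $\ctxhole$ is external, so $u\Rew{\osym a}w$ entails $u\Rew{\esssym a}w$. Thus step (ii) applies \reflemma{multi-step} to $\strongmctx$ and the external reduction of $u$. In the $\tomachbv$ case we obtain $\unf\state = \strongmctxp u\,(\Rew{\esssym\msym}\Rew{\esssym\esym})^k\,\strongmctxp{u'} = \unf\statetwo$, and since $k\geq 1$ this is exactly $\unf\state\,(\toms\toes)^+\,\unf\statetwo$. In the $\tomachbi$ case multi-step applied to the pure multiplicative reduction $u\Rew{\esssym\msym}u''$ (for the exact open reduct $u''\eqstruct u'$) gives $\unf\state\,(\Rew{\esssym\msym})^k\,\strongmctxp{u''}$, whence $\unf\state\,\toms^+\eqstruct\,\unf\statetwo$ after pushing $\eqstruct$ out through its congruence and strong-bisimulation properties (\refprop{strong-bisimulation}).

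I expect the main obstacle to be the exact read-back bookkeeping that closes case (i). For $\tomachbv$ this is matching the machine's single combined transition—the copy $\rename{(\wstenv\kctx(\vartwo))}$ and the renaming $\isub\varfour\varthree$ performed in one step—with the genuine two-step $\rtom\rtoe$ reduct on the calculus. For $\tomachbi$ it is the delicate $\eqstruct$-reconciliation of the ES $\esub\varfour\varthree$ once the machine context has changed to $\kctxp{\ctxhole\esub\varfour\varthree}$, together with ensuring that the single structural equivalence survives the multi-step replication across all $k$ holes, which is what forces the ``relaxed'' $\toms^+\eqstruct$ shape rather than a plain projection.
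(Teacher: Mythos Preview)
Your proposal is correct and follows essentially the same approach as the paper: factor the read-back through $\unf\kctx$ via modular read-back, use the goodness invariant to see $\unf\kctx$ as a proper external multi context, use pristinity to expose the inner open redex, perform the open $\rtom$ (and, in the $\tomachbv$ case, the following $\rtoe$) step, reconcile the $\tomachbi$ case via \reflemma{open-contextual}, and replicate across the holes with \reflemma{multi-step}. Your explicit remark that an open step in the empty context is already external, and your citation of \reflemma{horror} to push $\sigma$ through $\openctx$, make manifest two points the paper leaves implicit; conversely, note that $\vartwo\sigma$ is $\la\varfour\,\unf{(\esub\varstar\mol\envtwo)}\sigma$ (with $\sigma$ applied to the body), not just $\la\varfour\,\unf{(\esub\varstar\mol\envtwo)}$, and that in the $\tomachbi$ case only congruence of $\eqstruct$ (not strong bisimulation) is needed to pull the single $\eqstruct$ through $\strongmctx$.
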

\begin{proof}
  \sloppy The state $\state$ must be $\env \esub\var{\vartwo\,\varthree} \rlsep \kctx$ with $\rename{(\wstenv\kctx(\vartwo))} = \la\varfour(\esub{\varstar}\mol\envthree)$.
There are two cases, depending on whether $\wstenv\kctx(\varthree)$ is an abstraction or an inert.
  \begin{itemize}
    \item \emph{Abstraction}, \ie{} $\wstenv\kctx(\varthree) = \val$.
    Since $\state$ is reachable it is well-named by \refthm{named-invariant} and thus by \reflemmap{wstenv-well-named}{e} $\kctx$ is well-named and so by \reflemmap{wstenv-well-named}{c} $\wstenv\kctx$ is well named and thus
    by \reflemma{lookup-indsub} $\varthree\indsub{\wstenv\kctx} = \indsub{\wstenv\kctx}(\varthree)$
    is a value. The machine transition is
    \[
    \state = \env \esub\var{\vartwo\,\varthree} \rlsep \kctx
      \tomachbv
    \env (\esub \var \mol \envthree \isub\varfour\varthree) \rlsep \kctx = \statetwo
    \]
    and it projects as follows:
    \[\begin{array}{rlllcccc}
      \unf\state & = & \unf{(\env \esub\var{\vartwo\,\varthree} \rlsep \kctx)}
      \\
      & = &
      \unf\kctx \ctxholep{\unf{(\env \esub\var{\vartwo\,\varthree})}{\indsub{\wstenv\kctx}}}
      \\
      & =_{\kctx \textit{ good}} &
      \strongmctxp{\unf{(\env \esub\var{\vartwo\,\varthree})}{\indsub{\wstenv\kctx}}}
      \\
      & =_{\reflemmaeq{weak-unfolding-pristine-new}} &
      \strongmctxp{\openctxp{\unf{\esub\varstar{\vartwo\varthree}}{\indsub{\wstenv\kctx}}}}
      \\
      & = &
      \strongmctxp{\openctxp{\vartwo{\indsub{\wstenv\kctx}}~\varthree{\indsub{\wstenv\kctx}}}}
      \\
      & = &
      \strongmctxp{\openctxp{(\la\varfour\unf{(\esub{\varstar}\mol\envthree)}{\indsub{\wstenv\kctx}})~(\varthree{\indsub{\wstenv\kctx}})}}
      \\
      & {(\toms\toes)^+}_{\!\!\!\!\scriptsize \begin{array}{l}\reflemmaeq{multi-step},\\ \varthree{\indsub{\wstenv\kctx}}\textit{ is a value}\end{array}} &
      \strongmctxp{\openctxp{\unf{(\esub{\varstar}\mol\envthree)}{\indsub{\wstenv\kctx}}\isub\varfour{\varthree{\indsub{\wstenv\kctx}}}}}
      \\
      & = &
      \strongmctxp{\openctxp{\unf{(\esub{\varstar}\mol\envthree)}\isub\varfour\varthree{\indsub{\wstenv\kctx}}}}
      \\
      & =_{\reflemmaeqp{properties-unfolding}{renaming}} &
      \strongmctxp{\openctxp{\unf{(\esub{\varstar}\mol\envthree\isub\varfour\varthree)}{\indsub{\wstenv\kctx}}}}
      \\
      & = &
      \strongmctxp{\openctxp{\unf{(\esub\varstar{\mol\isub\varfour\varthree}(\envthree\isub\varfour\varthree))}{\indsub{\wstenv\kctx}}}}
      \\
      & =_{\reflemmaeq{weak-unfolding-pristine-new}} &
      \strongmctxp{\unf{(\env \esub\var{\mol\isub\varfour\varthree} (\envthree \isub\varfour\varthree)))}{\indsub{\wstenv\kctx}}} 
      \\
      & = &
      \strongmctxp{\unf{(\env (\esub\var \mol \envthree \isub\varfour\varthree))}{\indsub{\wstenv\kctx}}} 
      \\
      & =_{\kctx \textit{ good}} & \unf\kctx \ctxholep{\unf{(\env (\esub\var \mol \envthree \isub\varfour\varthree))}{\indsub{\wstenv\kctx}}} 
      \\
      & = & \unf{(\env (\esub\var \mol \envthree \isub\varfour\varthree) \rlsep \kctx)} & = & \unf\statetwo
      \\ & = & \unf\statetwo
      \end{array}\]


    \item \emph{Inert}, \ie{} $\wstenv\kctx(\varthree) = \itm$.
    Since $\state$ is reachable it is good by \refthm{goodness-invariants} and so 
    $\indsub{\wstenv\kctx}$ has immediate values.
    Thus, since $\varthree \neq \varstar$ because it occurs in $\esub\var{\vartwo\varthree}$ and since $\wstenv\kctx(\varthree)$ is an inert,
    $\varthree\indsub{\wstenv\kctx} = \indsub{\wstenv\kctx}(\varthree)$ must be an inert by definition of the immediate values property.
    The machine transition is
    \[
    \state = \env \esub\var{\vartwo\,\varthree} \rlsep \kctx
      \tomachbi
    \env \esub\var \mol \envthree \rlsep \kctxp{\ctxhole\esub\varfour\varthree} = \statetwo
    \]
    and it projects as follows:
    \[\begin{array}{rlllcccc}
      \unf\state & = & \unf{(\env \esub\var{\vartwo\,\varthree} \rlsep \kctx)}
      \\
      & = &
      \unf\kctx \ctxholep{\unf{(\env \esub\var{\vartwo\,\varthree})}{\indsub{\wstenv\kctx}}}
      \\
      & =_{\kctx \textit{ good}} &
      \strongmctxp{\unf{(\env \esub\var{\vartwo\,\varthree})}{\indsub{\wstenv\kctx}}}
      \\
      & =_{\reflemmaeq{weak-unfolding-pristine-new}} &
      \strongmctxp{\openctxp{\unf{\esub\varstar{\vartwo\varthree}}{\indsub{\wstenv\kctx}}}}
      \\
      & = &
      \strongmctxp{\openctxp{\vartwo{\indsub{\wstenv\kctx}} ~\varthree{\indsub{\wstenv\kctx}}}}
      \\
      & = &
      \strongmctxp{\openctxp{(\la\varfour\unf{(\esub{\varstar}\mol\envthree)}{\indsub{\wstenv\kctx}}) ~(\varthree{\indsub{\wstenv\kctx}})}}
      \\
      & {\toms^+}_{\!\!\!\scriptsize \begin{array}{l}\reflemmaeq{multi-step},\\ \varthree{\indsub{\wstenv\kctx}}\textit{ is an inert}\end{array}} &
      \strongmctxp{\openctxp{\unf{(\esub{\varstar}\mol\envthree)}{\indsub{\wstenv\kctx}} \esub\varfour{\varthree{\indsub{\wstenv\kctx}}}}}
      \\
      & \equiv_{\reflemmaeq{open-contextual}} &
      \strongmctxp{\openctxp{\unf{(\esub{\varstar}\mol\envthree)}{\indsub{\wstenv\kctx}}} \esub\varfour{\varthree{\indsub{\wstenv\kctx}}}}
      \\
      & =_{\reflemmaeq{weak-unfolding-pristine-new}} &
      \strongmctxp{\unf{\env \esub\var \mol \envthree} {\indsub{\wstenv\kctx}} \esub\varfour{\varthree{\indsub{\wstenv\kctx}}}}
      \\
      & = &
      \strongmctxp{\unf{\env \esub\var \mol \envthree} \esub\varfour{\varthree}{\indsub{\wstenv\kctx}} }
      \\
      & = & \strongmctxp{\unf{(\env \esub\var \mol \envthree \esub\varfour\varthree )}{\indsub{\wstenv\kctx}}} 
      \\
      & =_{\kctx \textit{ good}} & \unf\kctx \ctxholep{\unf{(\env \esub\var \mol \envthree \esub\varfour\varthree )}{\indsub{\wstenv\kctx}}} 
      \\
      & = & \unf{(\env \esub\var \mol \envthree \rlsep \kctxp{\ctxhole\esub\varfour\varthree})}
      \\ & = & \unf\statetwo
      \end{array}\]

  \qedhere
  \end{itemize}
\end{proof}

\subsection{Proof of overhead transparency}

In this section we provide the proof of \reflemma{overhead-transp}, \ie{} that the read back projects overhead transitions to equality in the VSC calculus.

\begin{lemma}[Overhead transparency]\label{l:overhead-transp}
Let $\state$ be reachable. If $\state \tomach \statetwo$ with a non-$\beta$ transition, then $\unf\state \equiv \unf\statetwo$.
\end{lemma}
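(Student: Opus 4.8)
The plan is to proceed by a case analysis on which non-$\beta$ transition fires, using throughout that read-back ignores the separator, i.e.\ $\unf{(\env\gensep\kctx)} = \unf{\kctxp\env}$. In every case I will in fact establish the literal equality $\unf\state = \unf\statetwo$, which is stronger than $\unf\state \eqstruct \unf\statetwo$ and hence more than suffices; no genuine use of $\eqstruct$ is needed here, in contrast with relaxed $\beta$-projection. The transitions to treat are exactly $\tomachsub$, $\tomachcone$, $\tomachctwo$, $\tomachcthree$, $\tomachgc$, $\tomachcfour$, $\tomachcfive$, the two $\beta$-transitions $\tomachbv,\tomachbi$ being excluded by hypothesis.

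The bulk of these — $\tomachcone$, $\tomachctwo$, $\tomachcthree$, $\tomachcfour$, $\tomachcfive$ — merely relocate an ES or just flip the separator, and for them the equality is immediate by compositionality (associativity) of plugging: both source and target states plug their left environment into \emph{the same} machine context. For instance, for $\tomachcone$ one has $(\kctxp{\ctxhole\esub\var\mol})\ctxholep\env = \kctxp{\env\esub\var\mol}$, so both states read back to $\unf{\kctxp{\env\esub\var\mol}}$; likewise $\tomachcfive$ sends both $\env\lrsep\kctxp{\ctxhole\esub\var{\la\vartwo\envtwo}}$ and $\envtwo\rlsep\kctxp{\env\esub\var{\la\vartwo\ctxhole}}$ to $\unf{\kctxp{\env\esub\var{\la\vartwo\envtwo}}}$, and symmetrically $\tomachcfour$ to $\unf{\kctxp{\envtwo\esub\var{\la\vartwo\env}}}$, while $\tomachcthree$ reconstructs $\unf{\kctxp{\env\esub\var\mol}}$ and $\tomachctwo$ changes nothing but the separator. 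These are one-line identities.

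The two remaining cases, $\tomachsub$ and $\tomachgc$, are where the read-back genuinely changes shape; here I reduce to comparing inner environments via modular read-back (\reflemma{read-back-decomposition}), since the enclosing context $\kctx$ (hence $\wstenv\kctx$) is unchanged on both sides. For $\tomachsub$, with $\state = \env\esub\var\vartwo \rlsep \kctx$ and $\statetwo = \env\isub\var\vartwo \rlsep \kctx$, it suffices to show $\unf{(\env\esub\var\vartwo)} = \unf{(\env\isub\var\vartwo)}$. The pristine invariant (\refthm{pristine-invariant-app}) forces $\var \in \crnames$, so the substitution branch of read-back applies and $\unf{(\env\esub\var\vartwo)} = \unf\env\isub\var\vartwo$; the renaming property of read-back (\reflemmap{properties-unfolding}{renaming}) then yields $\unf{(\env\isub\var\vartwo)} = \unf\env\isub\var\vartwo$, its side condition $\vartwo \notin \bv\env$ coming from the well-named invariant (\refthm{named-invariant}). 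For $\tomachgc$, using $\kctxp{\ctxhole\esub\var\val}\ctxholep\env = \kctxp{\env\esub\var\val}$ and modular read-back, it remains to see $\unf{(\env\esub\var\val)} = \unf\env$: as $\val$ is a value, read-back unfolds the ES as $\unf\env\isub\var{\unf\val}$, and the side condition $\var \notin \fv\env$ together with $\fv{\unf\env} \subseteq \fv\env$ (\reflemmap{properties-unfolding}{fv}) makes this substitution vacuous.

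I do not expect a real obstacle: the work is bureaucratic once the invariants are available. The only point demanding care is matching each transition's side condition to the correct branch of the read-back definition — in particular confirming, via the pristine invariant, that $\var \in \crnames$ in the $\tomachsub$ case (otherwise the ES would survive read-back and equality would fail), and checking in $\tomachgc$ that the discarded value is genuinely garbage at the level of the unfolded term, which is precisely the passage from $\var\notin\fv\env$ to $\var\notin\fv{\unf\env}$.
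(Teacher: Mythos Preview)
Your proposal is correct and follows essentially the same approach as the paper: a case analysis where the five search transitions are dispatched by the identity $\kctx_1\ctxholep{\env_1} = \kctx_2\ctxholep{\env_2}$, while $\tomachsub$ and $\tomachgc$ are handled via modular read-back together with the well-named invariant (for $\vartwo\notin\bv\env$) and \reflemmap{properties-unfolding}{renaming}/\reflemmap{properties-unfolding}{fv} respectively. Your explicit appeal to the pristine invariant to obtain $\var\in\crnames$ in the $\tomachsub$ case is exactly the justification the paper uses implicitly when it writes ``$=_{\var\in\crnames}$''.
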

\begin{proof}
By inspection of the non-multiplicative machine transitions.
The proof for all the search steps is trivial since a search step has the form
$\env_1 \gensep \kctx_1 \to \env_2 \gensep \kctx_2$ with
$\kctx_1\ctxholep{\env_1} = \kctx_2\ctxholep{\env_2}$ and thus
$\unf\state = \unf{\kctx_1\ctxholep{\env_1}} = \unf{\kctx_2\ctxholep{\env_2}} = \unf\statetwo$.
We analyze the remaining steps.
\begin{itemize}
\item Case ${\env \esub\var\vartwo}  \rlsep  \kctx
\tomachsub
{\env \isub\var\vartwo}  \rlsep  \kctx $
with $\var\neq\varstar$.
Since ${\env \esub\var\vartwo}  \rlsep  \kctx$ is reachable, it is well-named and therefore,
by \reflemmap{wstenv-well-named}{d}, $\env \esub\var\vartwo$ is well-named and thus, by \reflemma{decomp-well-named-two},
$\env$ is well-named and $\vartwo \not\in \bv\env$.

$$\begin{array}{lll}
& \unf{{\env \esub\var\vartwo}  \rlsep  \kctx}\\
= &\unf{\kctxp{\env \esub\var\vartwo}}\\
=_\reflemmaeqp{read-back-decomposition}{c} & \unf\kctx\ctxholep{\unf{\env\esub\var\vartwo}\indsub{\wstenv\kctx}}\\
=_{\var \in \crnames} & \unf\kctx\ctxholep{\unf\env\isub\var\vartwo\indsub{\wstenv\kctx}}\\
=_\reflemmaeqp{properties-unfolding}{renaming} & \unf\kctx\ctxholep{\unf{\env\isub\var\vartwo}\indsub{\wstenv\kctx}}\\
=_\reflemmaeqp{read-back-decomposition}{c} &\unf{\kctxp{\env \isub\var\vartwo}}\\
= & \unf{{\env \isub\var\vartwo}  \rlsep  \kctx}\\
\end{array}$$


\item Case
  $\env  \lrsep  \kctxp{ {\ctxhole\esub\var\val  }}
  \tomachgc 
  \env  \lrsep  \kctx$
  with $\var \notin \fv\env$.

 $$\begin{array}{lll}
 & \unf{\env  \lrsep  \kctxp{ {\ctxhole\esub\var\val  }}}\\
 = & \unf{\kctxp{ {\env\esub\var\val  }}}\\
 =_\reflemmaeqp{read-back-decomposition}{c} & \unf\kctx\ctxholep{\unf{\env\esub\var\val}\indsub{\wstenv\kctx}}\\
 = &  \unf\kctx\ctxholep{\unf\env\isub\var{\unf\val}\indsub{\wstenv\kctx}}\\
 =_{\reflemmaeqp{properties-unfolding}{fv},\,\var \not\in \fv\env} & \unf\kctx\ctxholep{\unf\env\indsub{\wstenv\kctx}}\\
 =_\reflemmaeqp{read-back-decomposition}{c} & \unf{\kctxp\env}\\
 = & \unf{\env  \lrsep  \kctx}
 \end{array}$$
\end{itemize}
\end{proof}

We now have all the ingredients to prove \refthpboth{machine-final}{impl-scam}, \ie{} the implementation theorem for the \SCAM machine:
\begin{theorem}[Strong CbV Implementation]
    \label{thmappendix:impl-scam}
    \NoteStateP{machine-final}{impl-scam}
    The \SCAM is a relaxed implementation of the strong fireball strategy $(\tovsubs,\eqstruct)$.
\end{theorem}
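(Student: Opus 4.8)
The plan is to obtain the statement as an instance of the abstract implementation theorem (\refthm{abs-impl}): it suffices to check that the \SCAM{} together with $(\tovsubs,\eqstruct)$ forms a relaxed implementation system in the sense of Definition~\ref{def:implementation}, i.e.\ to verify its five clauses for every reachable state. Three of them are already available. \emph{Relaxed $\beta$-projection} is exactly \refthpboth{machine-final}{projection-ms}: a $\tomachbv$ step reads back to a nonempty $(\toms\toes)^+$ sequence and a $\tomachbi$ step to a $\toms^+$ sequence up to $\eqstruct$, so the projected evaluation $\deriv$ always satisfies $\sizem\deriv\geq 1$. \emph{Overhead transparency} is \reflemma{overhead-transp}, covering the search steps (read-back identities, since they do not change $\kctxp\env$) together with $\tomachsub$ and $\tomachgc$ (handled by the renaming and free-variable clauses of the read-back). \emph{Lax determinism} splits in two: $\tovsubs$ is diamond by \refprop{external-properties}, while $\tomachscam$ is deterministic by inspection, since the side conditions of the nine transitions are pairwise exclusive---$\tomachbv$ and $\tomachbi$ are separated by whether $\wstenv\kctx(\varthree)$ is a value or an inert, $\tomachsub$ fires only on a renaming bite $\vartwo$ rather than an application $\vartwo\varthree$, and $\tomachcone$ is the explicit catch-all of the $\rlsep$-phase.

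For \emph{termination of overhead transitions} I would reuse the quantitative material of the complexity section. By \reflemma{meas-during-exec} the measure $\tmpMeasure\cdot$ strictly decreases along the open overhead steps $\tomachsub$ and $\tomachcone$ and never increases along the strong-phase steps, while \reflemma{strong-bound-open} bounds the number of strong-phase steps by the number of $\beta$- and $\tomachcone$-steps. On an overhead-only run (no $\beta$-transition) the measure is therefore bounded by its initial value and strictly decreases at each $\tomachsub$/$\tomachcone$, so there are finitely many of these and hence, by \reflemma{strong-bound-open}, finitely many strong-phase steps as well; no infinite $\tomacho$-sequence exists. Equivalently, \refcorollary{bound-com} caps the length of any execution by $\bigo((1+\sizebeta\exec)\size\tm)$, which is finite once $\sizebeta\exec=0$.

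The only genuinely new reasoning is the \emph{halt} clause: if $\state$ is final then $\unf\state$ is $\tovsubs$-normal. First I would characterize final states. In any $\rlsep$-state a transition always applies (via $\tomachctwo$ when $\env$ is empty, and via the catch-all $\tomachcone$ otherwise), so a final state lies in the $\lrsep$-phase; moreover inspecting the five strong transitions ($\tomachcfour$ in particular pushing the hole out of every abstraction) shows that a final machine context has its hole at top level, i.e.\ the state is $\env\,\lrsep\,\ctxhole\envtwo$. Then the goodness invariant (\refthm{goodness-invariants}) gives that $\unf\env$ is a strong fireball and that $\indenv{\wstenv{\ctxhole\envtwo}}=\indenv\envtwo$ is an inert context, so by modular read-back (\reflemma{read-back-decomposition-d}) one has $\unf\state=\indenv\envtwo\ctxholep{\unf\env\indsub\envtwo}$, which is a strong fireball, hence $\tovsubs$-normal by \reflemma{harmony}. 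The delicate point---and where I expect the main obstacle---is that final states include harmless stuck configurations, such as an empty $\env$ facing a pending value ES that neither $\tomachgc$ nor $\tomachcfive$ can consume; the argument must show these too read back to fireballs, which is exactly what goodness (\refthm{goodness-invariants}) secures by forcing the inert shape of $\indenv\envtwo$ and the strong-fireball shape of $\unf\env$, with the garbage-free invariant (\refthm{garbage-free-invariant-app}) underpinning properness of $\unf\kctx$ in the supporting analysis. With all five clauses verified, \refthm{abs-impl} yields that the \SCAM{} is a relaxed implementation of $(\tovsubs,\eqstruct)$.
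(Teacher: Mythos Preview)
Your overall plan---verify the five clauses of Definition~\ref{def:implementation} and invoke \refthm{abs-impl}---is the paper's approach, and your treatment of $\beta$-projection, overhead transparency, overhead termination, and lax determinism is essentially correct (for determinism you omit that the well-named invariant is needed to make environment look-up unique in $\tomachbv/\tomachbi$, but this is minor).

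The halt argument, however, has a real gap. Your characterisation of final states is wrong on two counts. First, ``$\tomachcfour$ pushes the hole out of every abstraction'' is false: $\tomachcfour$ fires only when the hole sits \emph{directly} under a $\lambda$ with no trailing ES; if the innermost level of $\kctx$ is $\ctxhole\esub\var\val\envthree$, then $\tomachcfour$ does not apply, and a syntactically stuck state $\emptyenv\lrsep\kctxp{\ctxhole\esub\var\val\envthree}$ with $\var\neq\varstar$ can arise with the hole at arbitrary depth, not only at top level. Second, and more serious, even granting your shape $\env\lrsep\ctxhole\envtwo$, the conclusion that $\indenv\envtwo\ctxholep{\unf\env\indsub\envtwo}$ is a strong fireball does not follow from goodness: open goodness only gives that $\indsub\envtwo$ is a \emph{fireball} substitution and $\indenv\envtwo$ contains \emph{open} inert terms, not strong ones. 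A value in $\indsub\envtwo$ may have an unevaluated body (take $\unf\env=\vartwo\var$ and $\indsub\envtwo(\var)=\la\varthree(\delta\delta)$: the result $\vartwo(\la\varthree(\delta\delta))$ is not a strong fireball). Compatibility confines such values to argument position, but argument positions of strong fireballs must themselves be strong fireballs, which you cannot ensure for a non-empty $\envtwo$ that the strong phase has not visited.

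The paper's route avoids this by arguing that the \emph{reachable} final states are exactly $\env\lrsep\ctxhole$, so that $\unf\state=\unf\env$ is directly a strong fireball by goodness. The missing piece in your proof is to rule out the bad stuck states: the only $\lrsep$-state with empty $\env$ is the one produced by $\tomachctwo$, and at that moment the innermost ES of $\kctx$ is $\esub\varstar\mol$ because the leftmost ES of a pristine environment always has variable $\varstar$ (this is forced by the pristine invariant together with $\unf\emptyenv=\varstar$); since $\varstar\in\fv\emptyenv$, either $\tomachcthree$ or $\tomachcfive$ fires, making $\env$ non-empty or re-entering the $\rlsep$-phase. Thus no ``harmless stuck configuration'' is ever reached, and the only final state has $\kctx=\ctxhole$.
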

\begin{proof}
The strong fireball strategy $(\tovsubs,\eqstruct)$ is a structural strategy by \Cref{propappendix:strong-bisimulation}.
We show that $(\tomachscam, \tovsubs, \eqstruct, \unf\cdot)$
form a relaxed implementation system, and obtain the statement by \refthm{abs-impl}. First of all, the 
initialization constraint for the \SCAM{} is given by \reflemma{transl-properties}. About the 
conditions for a relaxed implementation system:
    \begin{enumerate}        
        
        \item \emph{Relaxed $\beta$-projection:} by \refthpboth{machine-final}{projection-ms}.
        
        \item \emph{Overhead transparency:} by \reflemma{overhead-transp}.
        
        \item \emph{Overhead transitions terminates:} it follows by \refcorollary{bound-com}.
        
    \item \emph{Halt:} by inspection of the \SCAM{} transitions, the only normal states of the machine have the form 
$\env\lrsep$. Since this state is good, $\unf\env$ is a strong fireball. By \reflemma{harmony}%
    , $\unf\env$ is a $\tovsubs$-normal form.
    
    \item \emph{Relaxed determinism}:
	  	\begin{itemize}
          \item \emph{$\tovsubs$ is diamond}: by \refpropp{vsc-diamond}{diamond}.
			\item \emph{$\tomachscam$ is deterministic}: by a simple inspection of the transitions and the well-naming 
property. Well-naming grants uniqueness of lookup in the environment during transitions $\tomachbv$ and 
$\tomachbi$.
		\end{itemize}

    \end{enumerate}
\end{proof}

\section{Proofs of Section~\ref*{SECT:COMPLEXITY} (Complexity)}
\label{app:complexity}

In this section we prove that the \SCAM can be implemented within a
bilinear time overhead. The fundamental invariant is the \emph{size} invariant, proved in \refthm{size-invariant}: it basically shows that the size of the abstractions present in the unevaluated parts of a reachable state is bound by the size of the initial state. 

First of all, we show that the measure of the initial state is linearly related to the size of the initial \lat{}:

\begin{lemma}[Linear compilation]
   \label{lappendix:bound-measure-size}
   \NoteState{l:bound-measure-size}
   Let $\tm$ be a \lat{}. Then 
   $\size{\mytr\tm} \leq 2\size\tm$.
\end{lemma}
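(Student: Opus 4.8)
The plan is to prove a statement strictly stronger than the one required, by simultaneous induction on the structure of $\tm$, carrying along a size bound for the \emph{auxiliary} transformation $\auxtr\cdot$ as well. This strengthening is forced by the shape of the translation: on an application, $\mytr{\tm\tmtwo}$ is assembled by concatenating the environments returned by $\auxtr\tm$ and $\auxtr\tmtwo$, so any usable induction hypothesis must already control the size of those auxiliary environments. Concretely, I would prove for every \lat{} $\tm$ the conjunction of: (i) $\size{\mytr\tm} \leq 2\size\tm$; and (ii) if $\auxtr\tm = (\var,\env)$ then $\size\env \leq 2\size\tm$. The one preliminary ingredient needed is that $\size\cdot$ is additive over environment concatenation, i.e. $\size{\env\envtwo} = \size\env + \size\envtwo$, which follows by a one-line induction on $\envtwo$ from the clause $\size{\env\esub\var\mol} = 1 + \size\env + \size\mol$ (base case $\size{\env\epsilon} = \size\env$).

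The induction then splits into three cases. For a variable $\tm = \var$: $\mytr\var = \esub\varstar\var$ has size $1 + 0 + 1 = 2 = 2\size\var$, giving (i), while $\auxtr\var = (\var,\epsilon)$ has $\size\epsilon = 0 \leq 2\size\var$, giving (ii). For an abstraction $\tm = \la\var\tmtwo$: both $\mytr{\la\var\tmtwo} = \esub\varstar{\la\var{\mytr\tmtwo}}$ and the environment of $\auxtr{\la\var\tmtwo} = (\varthree, \esub\varthree{\la\var{\mytr\tmtwo}})$ have size $2 + \size{\mytr\tmtwo}$, which by the induction hypothesis (i) on $\tmtwo$ is $\leq 2 + 2\size\tmtwo = 2\size{\la\var\tmtwo}$, establishing both (i) and (ii) at once. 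For an application $\tm\tmtwo$, writing $\auxtr\tm = (\var,\env)$ and $\auxtr\tmtwo = (\vartwo,\envtwo)$: both $\mytr{\tm\tmtwo} = \esub\varstar{\var\vartwo}\env\envtwo$ and the environment of $\auxtr{\tm\tmtwo}$ have, using additivity and $\size{\esub\varstar{\var\vartwo}} = 2$, size $2 + \size\env + \size\envtwo$; the induction hypothesis (ii) applied to $\tm$ and $\tmtwo$ yields $\size\env + \size\envtwo \leq 2\size\tm + 2\size\tmtwo$, so the total is $\leq 2 + 2\size\tm + 2\size\tmtwo = 2\size{\tm\tmtwo}$, again establishing both (i) and (ii).

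I do not expect a genuine obstacle here: the computation is routine once additivity is in place. The only real design choice—and the single point worth flagging—is the strengthening to the mutual statement (ii). A direct induction on (i) alone would stall in the application case, since the size of $\mytr{\tm\tmtwo}$ is expressed entirely through the auxiliary environments $\env$ and $\envtwo$, whose sizes are not controlled by (i); making the auxiliary bound part of the induction hypothesis is exactly what closes the loop, with the abstraction case feeding (i) into (ii) and the application case feeding (ii) into itself and into (i).
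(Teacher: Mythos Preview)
Your proposal is correct and matches the paper's proof essentially line for line: the paper also strengthens to a mutual statement bounding the size of the auxiliary environment $\env$ in $\auxtr\tm = (\var,\env)$ by $2\size\tm$, and runs the same three-case structural induction with the same arithmetic. The additivity fact you single out is exactly the paper's \reflemmap{easy-size-ps}{a}, used implicitly in the application case.
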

\begin{proof}
   We prove this statement mutually with the corresponding statement for the auxiliary translation, \ie{} that $\size{\env} \leq 2\size\tm$ when $(\_,\env) \defeq \auxtr\tm$.
   We proceed by induction on the structure of $\tm$:
   \begin{itemize}
      \item If $\tm=\var$, then $\mytr\tm = \esub\varstar\var$, and $\size{\mytr\tm} = 2 = 2\size\tm$.
      \item If $\tm=\la\var\tmtwo$, then $\mytr\tm = \esub\varstar{\la\var{\mytr\tmtwo}}$, and $\size{\mytr\tm} = 2 + \size{\mytr\tmtwo} \leq_{\ih{}} 2 + 2\size\tmtwo = 2(\size\tmtwo + 1) = 2\size\tm$.
      \item If $\tm=\tmtwo\tmthree$, then $\mytr\tm = \esub\varstar{\var\vartwo}\env\envtwo$ where $(\var,\env) \defeq \mytr\tmtwo$ and $(\vartwo,\envtwo) \defeq \mytr\tmthree$. By \ih{} $\size\env \leq 2\size\tmtwo$ and $\size\envtwo \leq 2\size\tmthree$. Hence $\size{\mytr\tm} = \size{\esub\varstar{\var\vartwo}\env\envtwo} = 2 + \size\env + \size\envtwo \leq_{\ih{}} 2 + 2\size\tmtwo + 2\size\tmthree = 2(\size\tmtwo + \size\tmthree + 1) = 2\size\tm$.
   \end{itemize}
   Concerning the auxiliary translation:
   \begin{itemize}
      \item If $\tm=\var$, then $\auxtr\tm = (\var, \emptyenv)$, and $\size{\emptyenv} = 0 < 2 = 2\size\tm$.
      \item If $\tm=\la\var\tmtwo$, then $\auxtr\tm = (\varthree,\esub\varthree{\la\var{\mytr\tmtwo}})$, and $\size{\env} = 2 + \size{\mytr\tmtwo} \leq_{\ih{}} 2 + 2\size\tmtwo = 2\size\tm$.
      \item If $\tm=\tmtwo\tmthree$, then $\auxtr\tm = (\varthree,\esub\varthree{\var\vartwo}\env\envtwo)$ where $(\var,\env) \defeq \mytr\tmtwo$ and $(\vartwo,\envtwo) \defeq \mytr\tmthree$. By \ih{} $\size\env \leq 2\size\tmtwo$ and $\size\envtwo \leq 2\size\tmthree$. Hence $\size{\esub\varthree{\var\vartwo}\env\envtwo} = 2 + \size\env + \size\envtwo \leq_{\ih{}} 2 + 2\size\tmtwo + 2\size\tmthree = 2\size\tm$.
   \qedhere
   \end{itemize}
\end{proof}

In the proof of the size invariant we shall use repeatedly the following trivial properties of size:
\begin{lemma}[Properties of $\size\cdot$]
   \label{l:easy-size-ps}
   For all environments $\env,\envtwo$:
   \begin{enumerate}
      \item \label{p:easy-size-ps-a}
         $\size{\env\envtwo} = \size\env + \size\envtwo$
      \item \label{p:easy-size-ps-d}
         $\size{\env\isub\var\vartwo} = \size\env$
      \item \label{p:easy-size-ps-e}
         if $\env \AlphaEq \envtwo$, then $ \size\env = \size\envtwo $
      \item \label{p:easy-size-ps-g}
         $\size{\esub\var \mol \envtwo}= \size{\esub\varstar \mol \envtwo}$
   \end{enumerate}
\end{lemma}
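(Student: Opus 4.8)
The plan is to prove the four points in the order that makes the dependencies transparent, observing first that point~\ref{p:easy-size-ps-g} is an immediate consequence of point~\ref{p:easy-size-ps-a}. For point~\ref{p:easy-size-ps-a}, $\size{\env\envtwo} = \size\env + \size\envtwo$, I would argue by induction on the structure of $\envtwo$. In the base case $\envtwo = \emptyenv$ the concatenation is just $\env$ and $\size\emptyenv = 0$. In the inductive case $\envtwo = \envthree\esub\var\mol$, unfolding the definition of $\size\cdot$ on the outer ES gives $\size{\env\envthree\esub\var\mol} = 1 + \size{\env\envthree} + \size\mol$, and the induction hypothesis rewrites $\size{\env\envthree}$ as $\size\env + \size\envthree$, which regroups to $\size\env + \size{\envthree\esub\var\mol}$. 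Point~\ref{p:easy-size-ps-g} then follows by computing $\size{\esub\var\mol} = 1 + \size\mol$ directly from the ES clause (the bound variable plays no role in $\size\cdot$), so that by point~\ref{p:easy-size-ps-a} both $\size{\esub\var\mol\envtwo}$ and $\size{\esub\varstar\mol\envtwo}$ equal $1 + \size\mol + \size\envtwo$.

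For point~\ref{p:easy-size-ps-d}, $\size{\env\isub\var\vartwo} = \size\env$, the key observation is that the meta-level renaming $\isub\var\vartwo$ only ever replaces a variable occurrence by another variable and possibly $\alpha$-renames binders, neither of which affects $\size\cdot$, since the three bite clauses $\size\var = 1$, $\size{\var\vartwo} = 1$, and $\size{\la\var\env} = 1 + \size\env$ are insensitive to the actual names involved. I would make this precise by a mutual structural induction on $\env$ and on bites $\mol$, carrying along the auxiliary statement $\size{\mol\isub\var\vartwo} = \size\mol$. Each inductive case simply commutes the renaming past the top constructor (distinguishing, as in the definition of the substitution, whether the outer bound variable coincides with $\var$) and applies the induction hypothesis; the $\size\cdot$ computation is unchanged because the number of constructors is preserved by renaming.

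Finally, point~\ref{p:easy-size-ps-e}, that $\env \AlphaEq \envtwo$ implies $\size\env = \size\envtwo$, I would prove by induction on the derivation of $\env \AlphaEq \envtwo$ (\refdef{es-calculus-alpha}/\refdef{crumbling-alpha}), again strengthened mutually with the analogous statement for bites. The reflexivity, symmetry, transitivity, and the two structural closure cases are immediate from the induction hypothesis; the two interesting cases are the renaming ones (Rename ES and Rename abstraction), where the right-hand side contains a subterm of the form $\env\isub\var\vartwo$, and there I would invoke point~\ref{p:easy-size-ps-d} (together with its bite version) to conclude that the renaming leaves the size untouched. None of these steps presents a genuine obstacle; if one had to single out the most delicate part it would be keeping the mutual induction on environments and bites coherent in point~\ref{p:easy-size-ps-d}, and ensuring that the renaming cases of point~\ref{p:easy-size-ps-e} correctly appeal to it, but both amount to routine bookkeeping once the mutual statements are set up.
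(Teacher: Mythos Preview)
Your proposal is correct. For point~\ref{p:easy-size-ps-a} you match the paper exactly: induction on $\envtwo$, base case trivial, inductive case unfolds the outer ES and regroups. For the remaining three points you give considerably more machinery than the paper does: the paper dispatches \ref{p:easy-size-ps-d}, \ref{p:easy-size-ps-e}, and \ref{p:easy-size-ps-g} each with the single sentence ``Obvious because the definition of $\size\cdot$ does not care about names'', since an inspection of the clauses for $\size\cdot$ shows that no variable identity ever enters the computation. Your mutual inductions and the derivation-induction for $\AlphaEq$ are perfectly valid ways to make that observation formal, and your reduction of \ref{p:easy-size-ps-g} to \ref{p:easy-size-ps-a} is a nice touch, but none of it is strictly needed once one notices that $\size\cdot$ is name-blind by definition.
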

\begin{proof}~
   \begin{enumerate}
      \item
         By induction on the structure of $\envtwo$:
         \begin{itemize}
            \item Case $\emptyenv$: $\size{\env\emptyenv} = \size{\env} = \size{\env} + 0 =\size{\env} + \size{\emptyenv}$.
            \item Case $\envthree\esub\var\mol$: $\size{\env\envtwo} = \size{\env\envthree\esub\var\mol} = 1 + \size{\env\envthree} + \size\mol =_\ih
             1 + \size\env + \size\envthree + \size\mol
             = \size\env + \size{\envthree\esub\var\mol} = \size\env + \size\envtwo$.
         \end{itemize}
      \item Obvious because the definition of $\size\cdot$ does not care about names.
      \item Obvious because the definition of $\size\cdot$ does not care about names.
      \item Obvious because the definition of $\size\cdot$ does not care about names.
   \qedhere
   \end{enumerate}
\end{proof}

We turn to the proof of the size invariant:
   \begin{theorem}[Size invariant]\label{thm:size-invariant}
      Let $\state=\env \gensep\kctx$ be a state reachable starting from an initial state $\state_0 = \env_0 \rlsep \ctxhole$.
      Then $\size\val \leq \size{\env_0}$ holds for every abstraction $\val$ either in $\env$
      (when the state is $\env \mathrel\rlsep \kctx$) or in $\wstenv\kctx$.
   \end{theorem}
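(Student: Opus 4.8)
The plan is to prove the statement by induction on the length of the execution $\exec \colon \state_0 \tomachscam^* \state$, checking that each of the nine transitions preserves the bound. Two elementary facts about $\size\cdot$ will do most of the work and I would isolate them first. The first is a \emph{size-monotonicity} lemma: every value $\val$ occurring (at any depth) inside an environment $\env$ satisfies $\size\val \leq \size\env$, and likewise $\size\val \leq \size\mol$ for a value occurring in a bite $\mol$; this is a routine mutual structural induction using \reflemmap{easy-size-ps}{a}. The second is that the only operations performed by the transitions that could manufacture a value are $\alpha$-renaming $\rename\cdot$, the micro-substitution $\isub\varfour\varthree$ of a variable for a variable, and the relocation of existing bites; by \reflemmap{easy-size-ps}{e}, \reflemmap{easy-size-ps}{d} and \reflemmap{easy-size-ps}{g} all three preserve the size of every value involved. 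For the base case $\state = \state_0 = \env_0 \rlsep \ctxhole$ we have $\wstenv{\ctxhole} = \emptyenv$ (no values to bound), and every value in $\env_0$ is bounded by $\size{\env_0}$ by size-monotonicity.

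\textbf{The benign transitions.} For the inductive step I would consider the last transition $\statetwo \tomachscam \state$ and use the induction hypothesis on $\statetwo$. The overhead transition $\tomachsub$ turns $\env\esub\var\vartwo$ into $\env\isub\var\vartwo$, a variable-for-variable renaming that creates no new value and preserves all sizes. The search transition $\tomachcone$ moves an ES from the left of $\rlsep$ into $\kctx$; by \reflemmap{unsenv-prefix}{one} we have $\wstenv{\kctxp{\ctxhole\esub\var\mol}} = \esub\var\mol\wstenv\kctx$, so the multiset of values occurring in $\env \cup \wstenv\kctx$ is merely relocated, and the bound is inherited from $\statetwo$. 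The transitions $\tomachctwo$ (with $\env = \emptyenv$) and $\tomachcthree, \tomachgc$ either start from an empty left component or remove an ES from $\kctx$; in each case $\wstenv\kctx$ loses or keeps its values, using \reflemmap{unsenv-prefix}{one} to track $\wstenv$, so no value grows.

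\textbf{The two $\beta$-transitions and $\tomachcfive$.} This is the heart of the argument. In $\tomachbv$ and $\tomachbi$ the machine copies $\rename{(\wstenv\kctx(\vartwo))} = \la\varfour(\esub\varstar\mol\envtwo)$: every value appearing in the newly inserted fragment $\esub\var\mol\envtwo$ (respectively $\esub\var\mol\envtwo\isub\varfour\varthree$) has, by \reflemmap{easy-size-ps}{g}, \reflemmap{easy-size-ps}{e} and \reflemmap{easy-size-ps}{d}, exactly the size of a corresponding value occurring inside $\wstenv\kctx(\vartwo)$; since $\wstenv\kctx(\vartwo)$ is itself a value occurring in the $\wstenv\kctx$ of $\statetwo$, size-monotonicity plus the induction hypothesis bound it by $\size{\env_0}$. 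As $\kctx$ is untouched by $\tomachbv$, the right component is unchanged, and for $\tomachbi$ the new $\wstenv$ is $\esub\varfour\varthree\wstenv\kctx$ by \reflemmap{unsenv-prefix}{one}, whose extra bite is inert. The transition $\tomachcfive$ re-enters the body $\envtwo$ of $\la\vartwo\envtwo = \wstenv\kctx(\var)$, returning to the $\rlsep$ phase with $\envtwo$ as left component; since $\la\vartwo\envtwo$ is a value in the $\wstenv$ of $\statetwo$ (again via \reflemmap{unsenv-prefix}{one}), every value in $\envtwo$ is bounded by $\size{\la\vartwo\envtwo} \leq \size{\env_0}$, and $\wstenv\kctx$ is unchanged by \reflemmap{unsenv-prefix}{two}.

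\textbf{Main obstacle.} The delicate point—and the reason the statement is asymmetric between the two phases—is the transition $\tomachcfour$, which reassembles a strongly evaluated body into the value $\la\vartwo\env$. Such an abstraction may have grown under strong evaluation and need not be bounded by $\size{\env_0}$; fortunately it is deposited into the left component of an $\lrsep$-state, where the theorem does \emph{not} demand a bound, and by \reflemmap{unsenv-prefix}{two} the context component $\wstenv\kctx$ (the only part required to stay bounded) is unchanged. The crux of the whole proof is therefore the bookkeeping showing that a strongly evaluated (possibly large) value never migrates into a position where it is required to be bounded: it can only sit in $\env$ during an $\lrsep$-phase or be reintegrated by $\tomachcfour$, and the $\beta$-transitions duplicate solely from $\wstenv\kctx$, which never holds such values. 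Verifying this invariant-preservation for $\tomachcfour$ and $\tomachcfive$, with the right phase on each side, is the step I expect to require the most care.
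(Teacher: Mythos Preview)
Your proposal is correct and follows essentially the same approach as the paper: induction on the execution length, with a case analysis on the last transition, using \reflemmap{unsenv-prefix}{one}/\reflemmap{unsenv-prefix}{two} to track how $\wstenv\kctx$ changes and the size-preservation facts of \reflemma{easy-size-ps} to handle the $\alpha$-renaming and variable-for-variable substitution in the $\beta$-transitions. Your explicit isolation of the size-monotonicity lemma and your emphasis on why $\tomachcfour$ is harmless (the reassembled value lands only in the unconstrained $\lrsep$-left component) are expository additions, but the argument is the same.
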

   \begin{proof}
By induction on the execution $\exec:\state_0 \tosmach^* \state$.

If $\exec$ is empty then $\state = \state_0$; in this case, the size invariant is trivial by the definition of size.

If $\exec$ is non-empty we look at the last transition $\statetwo  \tosmach \state$, knowing by \ih that the size invariant holds 
for $\statetwo$:
\begin{itemize}

 \item 
$\env \esub\var{\vartwo\,\varthree} \rlsep \kctx
\tomachbv 
\env (\esub\var \mol \envtwo \isub\varfour\varthree) \rlsep    \kctx$
with $\rename{(\wstenv\kctx(\vartwo))} = \la\varfour(\esub{\varstar}\mol\envtwo) $ and $\wstenv\kctx(\varthree)= \val$ for some $\val$. Every value $\val$ in $\wstenv\kctx$ is such that $\size\val \leq \size{\env_0}$ because
    the property holds by \ih.
    Moreover, values in $\env (\esub\var \mol \envtwo \isub\varfour\varthree)$ are either
    renamings of $\alpha$-renamings of values in $\wstenv\kctx$ or
    values in $\env$ and thus in
    $\env \esub\var{\vartwo\,\varthree}$. Therefore the property holds by \ih, \reflemmap{easy-size-ps}{d} and \reflemmap{easy-size-ps}{e}.

\item 
$\env \esub\var{\vartwo\,\varthree}  \rlsep   \kctx
 \tomachbi 
\env \esub\var \mol \envtwo  \rlsep  \kctxp{ \ctxhole\esub\varfour\varthree   }$
with $\rename{(\wstenv\kctx(\vartwo))} = \la\varfour(\esub{\varstar}\mol\envtwo) $ and $\wstenv\kctx(\varthree)= \itm$ for some inert term $\itm$. Every value $\val$ in $\wstenv{\kctxp{\ctxhole\esub\varfour\varthree}} =_\reflemmaeqp{unsenv-prefix}{one}\esub\varfour\varthree\wstenv\kctx$ is such that $\size\val \leq \size{\env_0}$ because
    the property holds by \ih for $\wstenv\kctx$.
    Moreover, values in $\env \esub\var \mol \envtwo$ are either
    $\alpha$-renamings of values in $\wstenv\kctx$ or
    values in $\env$ and thus in
    $\env \esub\var{\vartwo\,\varthree}$. Therefore the property holds by \ih and \reflemmap{easy-size-ps}{e}.

\item ${\env \esub\var\vartwo}  \rlsep  \kctx
\tomachsub
{\env \isub\var\vartwo}  \rlsep  \kctx $
with $\var\neq\varstar$. Every abstraction in $\wstenv\kctx$ or in $\env\isub\var\vartwo$ is an abstraction in $\wstenv\kctx$ or a
    renaming of an abstraction in $\env\esub\var\vartwo$. The property follows from \ih by \reflemmap{easy-size-ps}{d}.

\item ${\env \esub\var\mol}  \rlsep  \kctx
\tomachcone 
\env  \rlsep  \kctxp{ {\ctxhole\esub\var\mol}  }
$
when $\mol$ is an abstraction or when $\mol$ is $\vartwo$ or $\vartwo\varthree$ but $\vartwo$ is not defined in $\wstenv\kctx$ or $\wstenv\kctx(\vartwo)$ is not a value. Every abstraction in $\env$ or in $\wstenv{\kctxp{{\ctxhole\esub\var\mol}}} =_\reflemmaeqp{unsenv-prefix}{one}\esub\var\mol\wstenv\kctx$ is also in $\env \esub\var\mol$ or in $\wstenv\kctx$. Therefore the property holds by \ih.

\item $\emptyenv  \rlsep  \kctx
\tomachctwo 
\emptyenv  \lrsep  \kctx$. Immediate because the context does not change and the environment is empty both before and after the transition.

\item $\env  \lrsep  \kctxp{ {\ctxhole\esub\var\mol  }}
\tomachcthree 
{\env \esub\var\mol}  \lrsep  \kctx
$
where $\mol$ is a variable or an application. Every abstraction in $\wstenv\kctx$ is also in $\wstenv{\kctxp{{\ctxhole\esub\var\mol}}} =_\reflemmaeqp{unsenv-prefix}{one}\esub\var\mol\wstenv\kctx$. Therefore the property holds by \ih.

\item $\env  \lrsep  \kctxp{ {\ctxhole\esub\var\val  }}
\tomachgc 
\env  \lrsep  \kctx$
with $\var \notin \fv\env$. Every abstraction in $\wstenv\kctx$ is also in $\wstenv{\kctxp{{\ctxhole\esub\var\val}}} =_\reflemmaeqp{unsenv-prefix}{one}\esub\var\val\wstenv\kctx$. Therefore the property holds by \ih.

\item $\env  \lrsep  \kctxp{\envtwo {\esub\var{\la\vartwo\ctxhole }  }}
\tomachcfour 
{\envtwo \esub\var{\la\vartwo\env}}  \lrsep   \kctx$. 
Trivial because $\wstenv{\kctxp{\envtwo {\esub\var{\la\vartwo\ctxhole }  }}} =_\reflemmaeqp{unsenv-prefix}{two} \wstenv\kctx$.

\item $\env  \lrsep  \kctxp{ {\ctxhole\esub\var{\la\vartwo\envtwo} } }
\tomachcfive 
\envtwo  \rlsep  \kctxp{\env {\esub\var{\la\vartwo\ctxhole}}  }
$ with $\var \in \fv\env$. Every abstraction in $\envtwo$ or in $\wstenv{\kctxp{\env {\esub\var{\la\vartwo\ctxhole}}  }} =_\reflemmaeqp{unsenv-prefix}{two} \wstenv\kctx$ is also in
    $\wstenv{\kctxp{ {\ctxhole\esub\var{\la\vartwo\envtwo} } }} =_\reflemmaeqp{unsenv-prefix}{one}
    \esub\var{\la\vartwo\envtwo}\wstenv\kctx$. Therefore the property holds by \ih.
    \qedhere
\end{itemize}
\end{proof}

\subsection{Number of overhead transitions}
The aim of this sub-section is to provide a bound on the number of machine steps as a function of the
number of $\beta$-steps, \ie{} \refcorollaryboth{bound-com}. The result is obtained
as a corollary of \reflemmaboth{strong-bound-open} and \reflemmaboth{meas-during-exec}.

\bigskip

We estimate the number of overhead transitions in a modular way: first bounding those in all strong phases, then those in the open phases.

The number of transitions of strong phases is bound by the number of transitions of open phases: we provide a global analysis in the following lemma.

\begin{lemma}[Open phases bound strong phases]
   \label{lappendix:strong-bound-open}
   \NoteState{l:strong-bound-open}
   Let $\exec\colon\state_0 \tomach{} \state$ an execution of the \SCAM{}. Then:
   \[\size\exec_{\admsym_2} + \size\exec_{\admsym_3} + \size\exec_{\gc} + \size\exec_{\admsym_4} + \size\exec_{\admsym_5} \leq \size\exec_{\betain} + 4\size\exec_{\admsym_1} + 1.\]
\end{lemma}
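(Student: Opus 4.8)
The plan is to prove the inequality by a conservation (token-counting) argument over the execution $\exec\colon\state_0 \tomachscam^* \state$, tracking three non-negative counters attached to the machine context $\kctx$ and exploiting that the machine strictly alternates between open ($\rlsep$) and strong ($\lrsep$) phases. The crucial starting point is that the initial context $\ctxhole$ ($= \ctxhole\emptyenv$) has no abstraction layers and no explicit substitutions in the environments to the right of its hole, so all three counters are $0$ at $\state_0$.

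First I would fix the three counters of a context $\kctx$: the number of \emph{abstraction layers}, i.e. of occurrences of the constructor $\env\esub\var{\la\vartwo\kctx}\envtwo$; and the number of \emph{awaiting} explicit substitutions, i.e. those occurring in the environments to the right of the hole (the $\envtwo$ in $\ctxhole\envtwo$) or to the right of an abstraction binder (the $\envtwo$ in $\env\esub\var{\la\vartwo\kctx}\envtwo$), which I split into \emph{value} and \emph{non-value} awaiting substitutions according to whether their bite is a value; the left-of-binder environments hold only already-evaluated material and are not counted. Inspecting the nine transitions of \reffig{scam-phases} one reads off their effect: the layer count is increased (by one) only by $\admsym_5$ and decreased only by $\admsym_4$; the value awaiting count is increased only by $\admsym_1$ (when it moves a value bite into $\kctx$) and decreased only by $\gc$ and by $\admsym_5$; and the non-value awaiting count is increased only by $\admsym_1$ and by $\betain$ — the latter inserting the renaming ES $\esub\varfour\varthree$, whose bite $\varthree$ is a variable and hence never a value — and decreased only by $\admsym_3$; the remaining transitions ($\betaabs$, $\rensym$, $\admsym_2$) leave $\kctx$ untouched. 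Since each counter starts at $0$ and stays non-negative, the number of decrements never exceeds the number of increments, which yields $\size\exec_{\admsym_4} \leq \size\exec_{\admsym_5}$, $\size\exec_{\gc} + \size\exec_{\admsym_5} \leq \size\exec_{\admsym_1}$ and $\size\exec_{\admsym_3} \leq \size\exec_{\admsym_1} + \size\exec_{\betain}$.

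Next I would treat $\admsym_2$ by a phase-parity argument: assigning weight $0$ to $\rlsep$-states and $1$ to $\lrsep$-states, only $\admsym_2$ (raising it from $0$ to $1$) and $\admsym_5$ (lowering it from $1$ to $0$) change this weight, so since $\exec$ begins in an $\rlsep$-state the difference $\size\exec_{\admsym_2} - \size\exec_{\admsym_5}$ equals the final weight, whence $\size\exec_{\admsym_2} \leq \size\exec_{\admsym_5} + 1$. Combining, from $\size\exec_{\gc} + \size\exec_{\admsym_5} \leq \size\exec_{\admsym_1}$ I get $\size\exec_{\admsym_5} \leq \size\exec_{\admsym_1}$, hence $\size\exec_{\admsym_4} \leq \size\exec_{\admsym_1}$ and $\size\exec_{\admsym_2} \leq \size\exec_{\admsym_1} + 1$; adding the five bounds, grouping $\gc$ with $\admsym_5$ so that the pair is charged to $\admsym_1$ only once,
\[
\size\exec_{\admsym_2} + \size\exec_{\admsym_3} + (\size\exec_{\gc} + \size\exec_{\admsym_5}) + \size\exec_{\admsym_4} \leq (\size\exec_{\admsym_1} + 1) + (\size\exec_{\admsym_1} + \size\exec_{\betain}) + \size\exec_{\admsym_1} + \size\exec_{\admsym_1},
\]
which is exactly $\size\exec_{\betain} + 4\size\exec_{\admsym_1} + 1$.

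The main obstacle is the per-transition bookkeeping that makes the counters genuinely conserved and pins down the asymmetric coefficients. In particular one must check that $\admsym_5$ does not destroy the awaiting substitutions trailing the consumed value ES but merely relocates them to the right of the freshly created binder (and that $\admsym_4$ later restores them to the right of the hole), so that only the consumed value substitution is charged; and that an awaiting substitution keeps its value/non-value status throughout, since the open-phase transitions rewrite the left environment and never the right ones. The observation that $\betain$ contributes only to the non-value counter — so that $\gc$, $\admsym_5$, and through the layer count $\admsym_4$ are all chargeable to $\admsym_1$ alone, while $\betain$ is charged only once, through $\admsym_3$ — is exactly what produces the coefficient $1$ on $\size\exec_{\betain}$ and $4$ on $\size\exec_{\admsym_1}$.
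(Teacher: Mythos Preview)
Your proof is correct and follows essentially the same push/pop counting argument as the paper. The paper states the four inequalities $\size\exec_{\admsym_2} \leq \size\exec_{\admsym_5} + 1$, $\size\exec_{\admsym_3} + \size\exec_{\gc} \leq \size\exec_{\betain} + \size\exec_{\admsym_1}$, $\size\exec_{\admsym_4} \leq \size\exec_{\admsym_5}$, and $\size\exec_{\admsym_5} \leq \size\exec_{\admsym_1}$ directly, each justified by the same ``these transitions pop, those push'' observation on $\wstenv\kctx$; your version refines the second and fourth of these by splitting the awaiting substitutions into value and non-value, yielding the sharper $\size\exec_{\gc} + \size\exec_{\admsym_5} \leq \size\exec_{\admsym_1}$, but the final summation and the underlying mechanism are the same.
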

\begin{proof} The statement is a consequence of the following inequalities:
   \begin{itemize}
   \item $\size\exec_{\admsym_2} \leq \size\exec_{\admsym_5} + 1$ because only ${\admsym_5}$ switches the phase to $\rlsep$ (plus 1 because the initial state is in the $\rlsep$ phase).
   \item $\size\exec_{\admsym_3} + \size\exec_{\gc}\leq \size\exec_{\betain} + \size\exec_{\admsym_1}$ because ${\admsym_3}$ and $\gc$ pop entries of the form $\esub\var\mol$ from the machine environment, which are pushed only by $\betain$ and $\admsym_1$.
   \item $\size\exec_{\admsym_4} \leq \size\exec_{\admsym_5}$ because ${\admsym_4}$ pops entries of the form $\esub\var{\la\vartwo\ctxhole}$ from the machine environment, which are pushed only by ${\admsym_5}$.
   \item $\size\exec_{\admsym_5} \leq \size\exec_{\admsym_1}$ because ${\admsym_5}$ pops entries of the form $\esub\var{\la\vartwo\envtwo}$ from the machine environment, which are pushed only by ${\admsym_1}$.
\end{itemize}
\end{proof}

To bound the number of overhead transitions of open phases, we introduce a new measure $\tmpMeasure{\cdot}$ over machine states:

\begin{center}
	{\small
   $\begin{array}{lllcl}
        \textsc{Environments}
  &
  &\size\epsilon \defeq 0 
  &&
  \size{\env\esub\var\mol} \defeq 1 + \size\env + \size\mol
\end{array}$

 $\begin{array}{lllll}
  \textsc{Contexts}
  &
  \tmpMeasure{\ctxhole} \defeq 0
  &
  \tmpMeasure{\env\esub\var{\la\vartwo\kctx}} \defeq \tmpMeasure{\kctx}
  &
  \tmpMeasure{\kctx\esub\var\mol} \defeq \tmpMeasure{\kctx} + \size{\mol} 
\end{array}$

	$\begin{array}{llllll}
  \textsc{States}
  &&
  \tmpMeasure{\env ~\rlsep~ \kctx} \defeq \size\env + \tmpMeasure\kctx
  &&
  \tmpMeasure{\env ~\lrsep~ \kctx} \defeq \tmpMeasure\kctx
\end{array}$
}
\end{center}

Note: the measure $\tmpMeasure{\cdot}$ completely ignores the environment $\env$ on the left of the cursor during the strong phase (case $\tmpMeasure{\env ~\lrsep~ \kctx}$). In fact, that environment has already been fully evaluated, and thus it should not contribute in any way. The environment $\env$ in $\tmpMeasure{\env ~\rlsep~ \kctx}$, instead, contributes with its size: in this way, the measure strictly decreases after an open overhead transition, since $\tomachsub$ and $\csym_1$ pop an entry from the environment on the left of the cursor.

\medskip

First, we prove a couple of trivial properties of the measure, that will be used
repeatedly in the proof of \reflemmaappendix{meas-during-exec}.
\begin{lemma}[Properties of $\tmpMeasure\cdot$]
   \label{l:easy-measure-ps}
   For all contexts $\kctx,\kctxtwo$:
   \begin{enumerate}
      \item \label{p:easy-measure-ps-c}
         $ \tmpMeasure{\kctxp{\kctxtwo}} = \tmpMeasure\kctx + \tmpMeasure\kctxtwo $
      \item \label{p:easy-measure-ps-f}
         $\tmpMeasure{\kctxp{\ctxhole\esub\var\mol}} = \tmpMeasure\kctx + \size\mol$
   \end{enumerate}
\end{lemma}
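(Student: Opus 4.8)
The plan is to derive point~\ref{p:easy-measure-ps-f} as an instance of point~\ref{p:easy-measure-ps-c}, so that the only real work is the first equation. Indeed $\ctxhole\esub\var\mol$ is itself a machine context (the case $\ctxhole\envtwo$ of the grammar with $\envtwo = \esub\var\mol$), and the defining clauses $\tmpMeasure\ctxhole \defeq 0$ and $\tmpMeasure{\kctx\esub\var\mol} \defeq \tmpMeasure\kctx + \size\mol$ give $\tmpMeasure{\ctxhole\esub\var\mol} = \size\mol$. Hence instantiating point~\ref{p:easy-measure-ps-c} with $\kctxtwo \defeq \ctxhole\esub\var\mol$ yields $\tmpMeasure{\kctxp{\ctxhole\esub\var\mol}} = \tmpMeasure\kctx + \tmpMeasure{\ctxhole\esub\var\mol} = \tmpMeasure\kctx + \size\mol$, which is exactly point~\ref{p:easy-measure-ps-f}.

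For point~\ref{p:easy-measure-ps-c} I would proceed by structural induction on $\kctx$, reading a machine context through the inductive shape matching the three clauses of $\tmpMeasure\cdot$: the empty hole $\ctxhole$, an appended explicit substitution $\kctx'\esub\var\mol$, or an abstraction-wrapping $\env\esub\var{\la\vartwo\kctx'}$ (trailing environments in the production $\env\esub\var{\la\vartwo\kctx}\envtwo$ are recovered by iterating the append-$\mathsf{ES}$ construction). The engine of the induction is that plugging commutes with these three constructions, the hole of $\kctx$ being in every case the hole of the inner context $\kctx'$: namely $\ctxhole\ctxholep\kctxtwo = \kctxtwo$, $(\kctx'\esub\var\mol)\ctxholep\kctxtwo = (\kctx'\ctxholep\kctxtwo)\esub\var\mol$, and $(\env\esub\var{\la\vartwo\kctx'})\ctxholep\kctxtwo = \env\esub\var{\la\vartwo{\kctx'\ctxholep\kctxtwo}}$.

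With these identities the computation is routine. For $\kctx = \ctxhole$ both sides equal $\tmpMeasure\kctxtwo$. For $\kctx = \kctx'\esub\var\mol$ I would use the append clause and the induction hypothesis, $\tmpMeasure{(\kctx'\ctxholep\kctxtwo)\esub\var\mol} = \tmpMeasure{\kctx'\ctxholep\kctxtwo} + \size\mol = \tmpMeasure{\kctx'} + \tmpMeasure\kctxtwo + \size\mol$, and rearrange to $\tmpMeasure{\kctx'\esub\var\mol} + \tmpMeasure\kctxtwo$. For $\kctx = \env\esub\var{\la\vartwo\kctx'}$ the clause $\tmpMeasure{\env\esub\var{\la\vartwo\kctx}} \defeq \tmpMeasure\kctx$ collapses the measure onto the inner context, giving $\tmpMeasure{\env\esub\var{\la\vartwo{\kctx'\ctxholep\kctxtwo}}} = \tmpMeasure{\kctx'\ctxholep\kctxtwo} = \tmpMeasure{\kctx'} + \tmpMeasure\kctxtwo = \tmpMeasure{\env\esub\var{\la\vartwo\kctx'}} + \tmpMeasure\kctxtwo$.

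I do not expect a genuine obstacle: this is a bookkeeping induction, as the measure ignores the leading environment on the left of the cursor and only accumulates the sizes of the bites sitting to the right of the hole. The one point requiring care is to fix once and for all the inductive presentation of machine contexts that matches the defining clauses of $\tmpMeasure\cdot$, and to check that the three clauses indeed cover every machine context, so that the induction is exhaustive; the commutation of plugging with the three constructions is then immediate from the fact that, in each constructor, the inner context keeps the unique hole.
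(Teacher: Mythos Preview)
Your proposal is correct and matches the paper's proof essentially line for line: the paper also proves point~\ref{p:easy-measure-ps-c} by structural induction on $\kctx$ with the same three cases ($\ctxhole$, $\kctx'\esub\var\mol$, and $\env\esub\var{\la\vartwo\kctx'}$) and the same computations, and then derives point~\ref{p:easy-measure-ps-f} by instantiating point~\ref{p:easy-measure-ps-c} with $\kctxtwo = \ctxhole\esub\var\mol$, exactly as you do.
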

\begin{proof}~
   \begin{enumerate}
      \item By induction on the structure of $\kctx$:
       \begin{itemize}
         \item Case $\ctxhole$: trivial.
        \item Case $\kctx\esub\var\mol$:
         $\tmpMeasure{\kctxthree\esub\var\mol\ctxholep\kctxtwo} = \tmpMeasure{\kctxthreep\kctxtwo\esub\var\mol} = \tmpMeasure{\kctxthreep\kctxtwo} + \size\mol =_\ih
\tmpMeasure{\kctxthree} + \tmpMeasure{\kctxtwo} + \size\mol = \tmpMeasure{\kctxthree\esub\var\mol} + \tmpMeasure\kctxtwo$.
        \item Case $\env\esub\var{\la\vartwo\kctxthree}$:
         $\tmpMeasure{\env\esub\var{\la\vartwo\kctxthree}\ctxholep\kctxtwo} = \tmpMeasure{\env\esub\var{\la\vartwo\kctxthreep\kctxtwo}} =
          \tmpMeasure{\kctxthreep\kctxtwo} =_\ih
          \tmpMeasure\kctxthree + \tmpMeasure\kctxtwo =
          \tmpMeasure{\env\esub\var{\la\vartwo\kctxthree}} + \tmpMeasure\kctxtwo
         $.
       \end{itemize}

      \item
       $\tmpMeasure{\kctxp{\ctxhole\esub\var\mol}}
       =_{\refpointeq{easy-measure-ps-c}} \tmpMeasure\kctx + \tmpMeasure{\ctxhole\esub\var\mol}
       = \tmpMeasure\kctx + \size\mol$.
      
   \qedhere
   \end{enumerate}
\end{proof}

\begin{lemma}[Measure during execution]
   \label{lappendix:meas-during-exec}
   \NoteState{l:meas-during-exec}
   Let $\state$ be a state reachable from $\state_0$, and $\state \tomachhole a \statetwo$.
   \begin{itemize}
      \item Beta transitions increase the measure: if $a \in\set{\betaabs, \betain}$ then $\tmpMeasure\statetwo \leq \tmpMeasure\state + \tmpMeasure{\state_0}$.
      \item Open overhead decreases the measure: if $a \in\set{\rensym, \admsym_1}$ then $\tmpMeasure\statetwo < \tmpMeasure\state$.
      \item Strong phase does not increase the measure: if $a \in\set{\admsym_2, \admsym_3, \gc, \admsym_4, \admsym_5 }$ then $\tmpMeasure\statetwo \leq \tmpMeasure\state$.
   \end{itemize}
\end{lemma}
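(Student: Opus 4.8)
The plan is to proceed by a routine case analysis on the nine \SCAM{} transitions $\state \tomachhole{a} \statetwo$, computing $\tmpMeasure\statetwo$ from $\tmpMeasure\state$ in each case. Two preliminary remarks organize the bookkeeping. First, reading $\tmpMeasure\kctx$ off its definition, it equals the sum of the bite-sizes of the ES occurring in $\wstenv\kctx$, while the abstraction-body construction $\env\esub\var{\la\vartwo\kctx}$ contributes nothing beyond $\tmpMeasure\kctx$; the two manipulations I will use repeatedly are the decomposition $\tmpMeasure{\kctxp\kctxtwo}=\tmpMeasure\kctx+\tmpMeasure\kctxtwo$ and the append rule $\tmpMeasure{\kctxp{\ctxhole\esub\var\mol}}=\tmpMeasure\kctx+\size\mol$ from \reflemma{easy-measure-ps}, together with the size-stability facts of \reflemma{easy-size-ps} (variable renaming, \reflemmap{easy-size-ps}{d}, and $\alpha$-renaming, \reflemmap{easy-size-ps}{e}, preserve $\size\cdot$; and $\size{\esub\var\mol\envtwo}=\size{\esub\varstar\mol\envtwo}$ by \reflemmap{easy-size-ps}{g}). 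Second, since $\state_0=\env_0\rlsep\ctxhole$ one has $\tmpMeasure{\state_0}=\size{\env_0}$, which is exactly the quantity that bounds copied values in the size invariant (\refthm{size-invariant}).

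For the two open-overhead transitions the decrease is immediate. In $\tomachsub$ the left environment loses the ES $\esub\var\vartwo$ of total size $2$ while $\size{\env\isub\var\vartwo}=\size\env$ by \reflemmap{easy-size-ps}{d}, so $\tmpMeasure\statetwo=\tmpMeasure\state-2$. In $\tomachcone$ ($\admsym_1$) the ES $\esub\var\mol$ is moved from the left environment, where it weighed $1+\size\mol$, into $\wstenv{\kctxp{\ctxhole\esub\var\mol}}$, where by \reflemmap{easy-measure-ps}{f} it weighs $\size\mol$; hence the measure drops by exactly $1$. Both are strict decreases, as required.

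For the five strong-phase transitions the measure never increases, and every computation reduces to the two remarks above. Transition $\tomachctwo$ only switches the phase on $\emptyenv\rlsep\kctx$, leaving $\tmpMeasure\kctx$ unchanged; $\tomachcthree$ and $\tomachgc$ move (resp.\ discard) an ES out of $\wstenv\kctx$ into the left environment, which in the $\lrsep$-phase measure is ignored, so the measure can only drop; $\tomachcfour$ reassembles $\la\vartwo\env$ using $\tmpMeasure{\envtwo\esub\var{\la\vartwo\ctxhole}}=\tmpMeasure{\ctxhole}=0$, leaving both sides equal to $\tmpMeasure\kctx$. The only delicate strong case is $\tomachcfive$: there the body $\envtwo$ of the value $\la\vartwo\envtwo$, which contributed $\size{\la\vartwo\envtwo}=1+\size\envtwo$ to the context measure on the left, becomes the new left environment in the $\rlsep$ phase, contributing $\size\envtwo$; using \reflemmap{easy-measure-ps}{c} for the rest, the two accounts match up to the unit for the abstraction, giving $\tmpMeasure\statetwo=\tmpMeasure\state-1$.

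The crux, and the only place where the measure genuinely grows, is the pair of $\beta$-transitions, which is where I expect the real work. Here $\rename{(\wstenv\kctx(\vartwo))}=\la\varfour(\esub\varstar\mol\envtwo)$ is an $\alpha$-copy of the value $\wstenv\kctx(\vartwo)$, and the step replaces the bite $\vartwo\varthree$ of size $1$ by the copied body $\esub\var\mol\envtwo$. I would bound its size by chaining \reflemmap{easy-size-ps}{d}, \reflemmap{easy-size-ps}{g} and \reflemmap{easy-size-ps}{e}, namely $\size{\esub\var\mol\envtwo\isub\varfour\varthree}=\size{\esub\varstar\mol\envtwo}=\size{\wstenv\kctx(\vartwo)}-1$, and then invoking the size invariant (\refthm{size-invariant}), which applies since $\state$ is reachable and $\wstenv\kctx(\vartwo)$ occurs in $\wstenv\kctx$, to get $\size{\esub\varstar\mol\envtwo}\le\size{\env_0}-1$. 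For $\tomachbv$ this gives $\tmpMeasure\statetwo=\size\env+\size{\esub\varstar\mol\envtwo}+\tmpMeasure\kctx\le\tmpMeasure\state-2+\size{\env_0}$, and for $\tomachbi$ the extra ES $\esub\varfour\varthree$ adds one further unit to the context measure via \reflemmap{easy-measure-ps}{f}, giving $\tmpMeasure\statetwo\le\tmpMeasure\state-1+\size{\env_0}$. In both cases $\tmpMeasure\statetwo\le\tmpMeasure\state+\size{\env_0}=\tmpMeasure\state+\tmpMeasure{\state_0}$. The main obstacle is precisely this reliance on the quantitative size invariant to keep each $\beta$-step's growth bounded by the fixed amount $\tmpMeasure{\state_0}$; everything else is elementary size/measure arithmetic over the transition shapes.
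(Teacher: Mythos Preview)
Your proposal is correct and follows essentially the same approach as the paper: a case analysis on the nine transitions, using \reflemma{easy-size-ps} and \reflemma{easy-measure-ps} for the bookkeeping and the size invariant (\refthm{size-invariant}) to bound the copied body in the two $\beta$-transitions. Your accounting is in fact slightly tighter than the paper's in places (you track the exact decrements), but the structure and key ingredients are identical.
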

\begin{proof}
   We proceed by cases on the transition $a$:
   \begin{itemize}
      \item Case $\betaabs$:
         $ \env \esub\var{\vartwo\,\varthree} \rlsep \kctx
          \tomachbv
         \env (\esub\var \mol \envtwo \isub\varfour\varthree) \rlsep \kctx$
         where $\wstenv\kctx(\vartwo) = \val$ and $\rename\val = \la\varfour{\esub\varstar\mol\envtwo}$.
         First of all, note that by the size invariant:
          $$\begin{array}{ll}
            & \size{\esub\varstar\mol\envtwo}\\
          < & \size{\la\varfour{\esub\varstar\mol\envtwo}} \\
          = & \size{\rename\val} \\
          =_{\reflemmaeqp{easy-size-ps}{e}} & \size{\val} \\
          \leq_{\refthmeq{size-invariant}} & \tmpMeasure{\state_0}
          \end{array}$$

         Then:
         $$\begin{array}{ll}
         & \tmpMeasure{\env (\esub\var \mol \envtwo \isub\varfour\varthree) \rlsep \kctx}\\
         = & \size{\env (\esub\var \mol \envtwo \isub\varfour\varthree)} + \tmpMeasure\kctx\\
         =_{\reflemmaeqp{easy-size-ps}{a}} & \size\env + \size{\esub\var \mol \envtwo \isub\varfour\varthree} + \tmpMeasure\kctx\\
         =_{\reflemmaeqp{easy-size-ps}{d}} & \size\env + \size{\esub\var \mol \envtwo} + \tmpMeasure\kctx\\
         =_{\reflemmaeqp{easy-size-ps}{g}} & \size\env + \size{\esub\varstar \mol \envtwo} + \tmpMeasure\kctx\\
         \leq & \size\env + \tmpMeasure{\state_0} + \tmpMeasure\kctx\\
         < & \size{\env \esub\var{\vartwo\,\varthree}} + \tmpMeasure\kctx + \tmpMeasure{\state_0}\\
         = & \tmpMeasure{\env \esub\var{\vartwo\,\varthree} \rlsep \kctx} + \tmpMeasure{\state_0}
         \end{array}$$


      \item Case $\betain$:
      $ \env \esub\var{\vartwo\,\varthree} ~\rlsep~ \kctx
       \tomachbi 
      \env \esub\var \mol \envtwo ~\rlsep~ \kctxp{\ctxhole\esub\varfour\varthree} $
      where $\wstenv\kctx(\vartwo) = \val$ and $\rename\val = \la\varfour{\esub\varstar\mol\envtwo}$.

         Then:
         $$\begin{array}{ll}
         & \tmpMeasure{\env \esub\var \mol \envtwo \rlsep \kctxp{\ctxhole \esub\varfour\varthree}}\\
         = & \size{\env \esub\var \mol \envtwo} + \tmpMeasure{\kctxp{\ctxhole \esub\varfour\varthree}}\\
         =_{\reflemmaeqp{easy-measure-ps}{f}} & \size{\env \esub\var \mol \envtwo} + \tmpMeasure\kctx \\
         =_{\reflemmaeqp{easy-size-ps}{a}} & \size\env + \size{\esub\var \mol \envtwo} + \tmpMeasure\kctx \\
         =_{\reflemmaeqp{easy-size-ps}{g}} & \size\env + \size{\esub\varstar \mol \envtwo} + \tmpMeasure\kctx \\
         \leq & \size\env + \tmpMeasure{\state_0} + \tmpMeasure\kctx \\
         < & \size{\env \esub\var{\vartwo\,\varthree}} +\tmpMeasure\kctx + \tmpMeasure{\state_0}\\
         = & \tmpMeasure{\env \esub\var{\vartwo\,\varthree} \rlsep \kctx} + \tmpMeasure{\state_0}
         \end{array}$$


      \item Case $\tomachsub$:
      $ \env \esub\var\vartwo \rlsep \kctx
      \tomachsub
      \env \isub\var\vartwo \rlsep \kctx$ where $\var \neq \varstar$.
      Then $$\begin{array}{ll}
         & \size{\env \isub\var\vartwo \rlsep \kctx}\\
       = & \size{\env \isub\var\vartwo} + \tmpMeasure\kctx\\
       =_{\reflemmaeqp{easy-size-ps}{d}} & \size\env + \tmpMeasure\kctx\\
       < & \size{\env \esub\var\vartwo} + \tmpMeasure\kctx\\
       = & \tmpMeasure{\env \esub\var\vartwo \rlsep \kctx}
       \end{array}$$


      \item Case $\csym_1$:
      $ \env \esub\var\mol \rlsep \kctx
      \tomachcone
      \env \rlsep \kctxp{\ctxhole \esub\var\mol  }$.
      Then $$\begin{array}{ll}
        & \tmpMeasure{\env \rlsep \kctxp{\ctxhole \esub\var\mol}}\\
      = & \size\env + \tmpMeasure{\kctxp{\ctxhole\esub\var\mol}}\\
      =_{\reflemmaeqp{easy-measure-ps}{f}} & \size\env + \tmpMeasure\kctx + \size{\mol}\\
      < & 1 + \size{\env} + \size\mol + \tmpMeasure \kctx\\
      = & \size{\env \esub\var\mol} + \tmpMeasure\kctx \\
      = & \tmpMeasure{\env \esub\var\mol \rlsep \kctx}
      \end{array}$$


      \item Case $\csym_2$:
      $ \rlsep \kctx
       \tomachctwo 
       \lrsep \kctx $.
      Then $$\begin{array}{ll}
          & \tmpMeasure{\lrsep \kctx}\\
        = & \tmpMeasure\kctx\\
        = & \tmpMeasure{\rlsep \kctx}\\
       \end{array}$$


      \item Case $\csym_3$:
      $ \env \lrsep \kctxp{\ctxhole \esub\var\mol}
       \tomachcthree 
      \env \esub\var\mol \lrsep \kctx$.
      Then $$\begin{array}{ll}
        & \tmpMeasure{\env \esub\var\mol \lrsep \kctx}\\
      = & \tmpMeasure\kctx\\
      < & \tmpMeasure\kctx + \size\mol \\
      =_{\reflemmaeqp{easy-measure-ps}{f}} & \tmpMeasure{\kctxp{\ctxhole\esub\var\mol}}\\
      = & \tmpMeasure{\env \lrsep \kctxp{\ctxhole \esub\var\mol}}\\
      \end{array}$$


      \item Case $\gc$:
      $ \env \lrsep \kctxp{\ctxhole \esub\var\val  }
       \tomachgc 
      \env \lrsep \kctx $.
      Then
       $$\begin{array}{ll}
         & \tmpMeasure{\env \lrsep \kctx}\\
       = & \tmpMeasure\kctx\\
       < & \tmpMeasure\kctx + \size\val\\
       =_{\reflemmaeqp{easy-measure-ps}{f}} & \tmpMeasure{\kctxp{\ctxhole\esub\var\val}}\\
       = & \tmpMeasure{\env \lrsep \kctxp{\ctxhole \esub\var\val}}\\
       \end{array}$$


      \item Case $\csym_4$:
       $\env \lrsep \kctxp{\envtwo \esub\var{\la\vartwo\ctxhole }}
        \tomachcfour
        \envtwo \esub\var{\la\vartwo\env} \lrsep  \kctx$.
      Then
      $$\begin{array}{ll}
        & \tmpMeasure{\envtwo \esub\var{\la\vartwo\env} \lrsep \kctx}\\
      = & \tmpMeasure\kctx\\
      = & \tmpMeasure\kctx + \tmpMeasure{\envtwo \esub\var{\la\vartwo\ctxhole}}\\
      =_{\reflemmaeqp{easy-measure-ps}{c}} & \tmpMeasure{\kctxp{\envtwo \esub\var{\la\vartwo\ctxhole }}}\\
      = & \tmpMeasure{\env \lrsep \kctxp{\envtwo \esub\var{\la\vartwo\ctxhole }}}
      \end{array}$$


      \item Case $\csym_5$:
      $\env \lrsep \kctxp{ \ctxhole\esub\var{\la\vartwo\envtwo}}
       \tomachcfive
       \envtwo \rlsep \kctxp{\env \esub\var{\la\vartwo\ctxhole}}$.
      Then
      $$\begin{array}{ll}
        & \tmpMeasure{\envtwo \rlsep \kctxp{\env \esub\var{\la\vartwo\ctxhole}}}\\
      = & \size\envtwo + \tmpMeasure{\kctxp{\env \esub\var{\la\vartwo\ctxhole}}}\\
      =_{\reflemmaeqp{easy-measure-ps}{c}} & \size\envtwo + \tmpMeasure\kctx + \tmpMeasure{\env \esub\var{\la\vartwo\ctxhole}}\\
      = & \size\envtwo + \tmpMeasure\kctx \\
      < & \tmpMeasure\kctx + \size{\la\vartwo\envtwo}\\
      =_{\reflemmaeqp{easy-measure-ps}{f}} & \tmpMeasure{\kctxp{ \ctxhole\esub\var{\la\vartwo\envtwo}}}\\
      = & \tmpMeasure{\env \lrsep \kctxp{ \ctxhole\esub\var{\la\vartwo\envtwo}}}
      \end{array}$$
   \end{itemize}
\end{proof}

As a consequence of the two previous lemmas, we obtain the following combined bound on the number of machine transitions:

 \begin{corollary}[Bi-linear number of overhead transitions]
    \label{cappendix:bound-com}
    \NoteState{c:bound-com}
    Let $\tm$ be a \lat{} and $\exec\colon \compil\tm \tomachscam^* \state$ a  \SCAM{} execution. Then $\size\exec \in \bigo((1 + \sizebeta\exec) \cdot \size\tm)$.
 \end{corollary}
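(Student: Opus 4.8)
The statement to prove is the bilinear bound on the total number of \SCAM{} transitions, \refcorollaryboth{bound-com}:
\[\size\exec \in \bigo((1 + \sizebeta\exec) \cdot \size\tm).\]

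The plan is to combine the two preceding lemmas --- \reflemmaboth{strong-bound-open} (open phases bound strong phases) and \reflemmaboth{meas-during-exec} (measure during execution) --- using the measure $\tmpMeasure\cdot$ as a decreasing potential for the open overhead transitions. First I would partition $\size\exec$ by transition type: the two $\beta$-transitions ($\betaabs$ and $\betain$), the open overhead transitions ($\rensym$ and $\admsym_1$), and the five strong-phase transitions ($\admsym_2, \admsym_3, \gc, \admsym_4, \admsym_5$). The number of $\beta$-transitions is $\sizebeta\exec$ by definition. The strong-phase transitions are already bounded in terms of the $\beta$- and open-overhead transitions by \reflemma{strong-bound-open}, so the crux is to bound the open overhead $\size\exec_{\rensym} + \size\exec_{\admsym_1}$.

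For the open overhead, I would argue via the potential $\tmpMeasure\cdot$. By \reflemma{meas-during-exec}, each open overhead step strictly decreases $\tmpMeasure\cdot$ (by at least $1$, since the measures are natural numbers), strong-phase steps do not increase it, and each $\beta$-transition increases it by at most $\tmpMeasure{\state_0}$. Hence, summing over the whole execution, the total decrease contributed by open overhead steps cannot exceed the initial measure plus the total increase due to $\beta$-transitions:
\[\size\exec_{\rensym} + \size\exec_{\admsym_1} \leq \tmpMeasure{\state_0} + \sizebeta\exec \cdot \tmpMeasure{\state_0}.\]
Now $\tmpMeasure{\state_0} = \tmpMeasure{\mytr\tm \rlsep \ctxhole} = \size{\mytr\tm}$, which by \reflemma{bound-measure-size} is at most $2\size\tm$. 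Therefore $\size\exec_{\rensym} + \size\exec_{\admsym_1} \in \bigo((1 + \sizebeta\exec)\cdot\size\tm)$.

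Finally I would assemble the pieces. Plugging the open-overhead bound into \reflemma{strong-bound-open} gives the same asymptotic bound $\bigo((1+\sizebeta\exec)\cdot\size\tm)$ for the combined strong-phase transitions, and adding the $\sizebeta\exec$ $\beta$-transitions leaves the bound unchanged. Summing the three contributions yields $\size\exec \in \bigo((1+\sizebeta\exec)\cdot\size\tm)$, as required. The only mildly delicate point --- more bookkeeping than genuine obstacle --- is the telescoping argument for the potential: one must be careful that strong-phase steps are non-increasing (handled by the third clause of \reflemma{meas-during-exec}) so that they cannot ``recharge'' the potential between open overhead steps, and that the measure is non-negative so that the number of strict decreases is genuinely bounded by the net available potential. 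Since $\tmpMeasure\cdot$ takes values in $\nat$ and the per-$\beta$-step increase is uniformly bounded by $\tmpMeasure{\state_0}$, the telescoping goes through cleanly.
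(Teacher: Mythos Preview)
Your proposal is correct and mirrors the paper's proof almost exactly: the paper also uses \reflemma{meas-during-exec} as a telescoping potential to bound $\size\exec_{\rensym} + \size\exec_{\admsym_1}$ by $(1+\sizebeta\exec)\cdot\size{\mytr\tm}$, then invokes \reflemma{strong-bound-open} for the strong-phase transitions, and finally relates $\size{\mytr\tm}$ to $\size\tm$ via \reflemma{bound-measure-size}. Your added remark about non-negativity of $\tmpMeasure\cdot$ and the non-increase of strong-phase steps makes explicit exactly the bookkeeping the paper leaves implicit.
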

\begin{proof}
   Let $\state_0 \defeq \mytr\tm \rlsep \ctxhole$. Then $\tmpMeasure{\state_0} = \size{\mytr\tm}$, and by \reflemma{bound-measure-size} $\tmpMeasure{\state_0} \in \bigo(\size\tm)$. To prove the required statement, it suffices to show that overhead transitions are bilinear, \ie{} that:
   \[\size\exec_{\rensym} + \size\exec_{\admsym_1} +\size\exec_{\admsym_2} + \size\exec_{\admsym_3} + \size\exec_{\gc} + \size\exec_{\admsym_4} + \size\exec_{\admsym_5} \in \bigo((1 + \size\exec_\beta) \cdot \size\tm).\]
   Note that \reflemmaappendix{meas-during-exec} implies that $\tmpMeasure{\state} \leq \tmpMeasure{\state_0} + \size\exec_\beta  \cdot \size{\mytr\tm} - \size\exec_{\rensym} - \size\exec_{\admsym_1}$. Since $\tmpMeasure{\state_0} = \size{\mytr\tm}$ and $\tmpMeasure\state \geq 0$, we obtain $\size\exec_{\rensym} + \size\exec_{\admsym_1} \leq (1 + \size\exec_\beta) \cdot \size{\mytr\tm}$, \ie{} the number of $\rensym$ and $\admsym_1$ transitions is bilinear. To bound the number of the remaining overhead transitions, just use \reflemmaappendix{strong-bound-open}.
\end{proof}

\subsection{Bi-linearity of the \SCAM{}}
In this subsection, we formalize the arguments in the paper about the cost of implementing \SCAM{} execution on Random Access Machines. We proceed in the following way:
\begin{enumerate}
 \item we prove in \reflemma{bilinear-scam} that all machine steps but garbage collection run in bi-linear time,
 \item we derive in \cref{c:bound-cost-gc} that the space consumption of the machine is also bi-linear,
 \item we conclude noting that garbage collection, which runs in time proportional to the amount of space to
  be freed, is also bi-linear and therefore the \SCAM{} runs in bi-linear time (\cref{c:bilinear-scam} which proves
  \refthm{bilinear-scam}).
\end{enumerate}

\begin{lemma}[The \SCAM{} without garbage-collection is bilinear]
   \label{l:bilinear-scam}
   For any \lat{} $\tm$ and any \SCAM{} execution $\exec\colon \compil\tm \rlsep \tomachscam^* \env \lrsep\ctxhole$, the cost of implementing $\exec$ on a RAM, excluding the cost of $\tomachgc$ steps, is $\bigo((1 + \sizebeta\exec)\size\tm)$.
\end{lemma}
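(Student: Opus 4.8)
The plan is to separate, as is standard for complexity analyses of abstract machines, the \emph{count} of the transitions from the \emph{cost} of realizing a single one, and then multiply and sum. The number of transitions is already under control: by \refcorollary{bound-com} the whole execution $\exec$ satisfies $\size\exec \in \bigo((1+\sizebeta\exec)\size\tm)$, so in particular both the number $\sizebeta\exec$ of $\beta$-transitions and the number of overhead transitions are bounded by this same bilinear quantity. It thus suffices to bound the per-step cost of every transition \emph{but} $\tomachgc$, and to check that the resulting weighted sum stays bilinear.

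First I would fix the data-structure assumptions recalled in the section: the crumbled environment is realized as a global store in which each variable is a memory location, each occurrence a pointer, and an entry $\esub\var\mol$ is the content of the location named by $\var$; consequently environment look-up $\wstenv\kctx(\var)$ costs $\bigo(1)$. The bidirectional navigation of the store---right-to-left in open phases, left-to-right in strong phases, entering and leaving abstraction bodies at phase switches---is implemented by adapting McBride's space-conscious derivative/zipper technique to our DAG framework, giving $\bigo(1)$ per move. With these assumptions the pure search/relocation transitions $\tomachcone, \tomachctwo, \tomachcthree, \tomachcfour, \tomachcfive$ merely move the cursor and shift a single entry, hence cost $\bigo(1)$ each. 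The renaming transition $\tomachsub$ also costs $\bigo(1)$: by the pristine invariant (\reflemma{aux-most-once}) the renamed variable has at most one occurrence when the rule fires, and a bidirectional link maintained between a variable and its unique occurrence lets the machine jump to it in constant time.

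The two $\beta$-transitions $\tomachbv$ and $\tomachbi$ are the only ones whose cost is not constant, because they copy and $\alpha$-rename the looked-up abstraction $\rename{(\wstenv\kctx(\vartwo))}$. Here I would invoke the size invariant (\refthm{size-invariant}): every abstraction that can be duplicated along the run is bounded in size by $\size{\env_0}$, and by linear compilation (\reflemma{bound-measure-size}) $\size{\env_0} = \size{\mytr\tm} \leq 2\size\tm$. Copying a term in the proof-net/DAG representation is linear in its size, and the look-ups of $\vartwo$ and $\varthree$ stay $\bigo(1)$, so each $\beta$-transition costs $\bigo(\size\tm)$. Summing up, the overhead transitions contribute $\bigo(1)$ each, for a total of $\bigo((1+\sizebeta\exec)\size\tm)$ by \refcorollary{bound-com}, while the $\beta$-transitions contribute $\bigo(\size\tm)$ each, for a total of $\bigo(\sizebeta\exec\cdot\size\tm)$, which is absorbed by the same bound; excluding $\tomachgc$, the global cost is therefore $\bigo((1+\sizebeta\exec)\size\tm)$.

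I expect the main obstacle to be implementative rather than combinatorial: justifying that the bidirectional traversal of the crumbled DAG, together with the link maintained for $\tomachsub$, truly runs in worst-case $\bigo(1)$ \emph{across phase switches}, where the cursor enters and exits abstraction bodies. This is exactly where McBride's derivative technique, generalized from lists to our more general graph, is needed, and it is the one point that does not follow from the abstract counting and size invariants but requires a careful description of the concrete representation (detailed in the OCaml reference implementation).
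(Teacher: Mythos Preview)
Your proposal is correct and follows essentially the same approach as the paper: bound the number of transitions via \refcorollary{bound-com}, charge $\bigo(\size\tm)$ to each $\beta$-transition via the size invariant, and $\bigo(1)$ to every other non-$\tomachgc$ step, then sum. Your version is considerably more detailed than the paper's, which simply states the two bullet points and adds them up, but the structure and the key ingredients are identical.
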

\begin{proof}
 Consider that:
 \begin{itemize}
  \item each $\tofs$ step costs $\bigo(\size\tm)$ because, by \refthm{size-invariant}, the actual representation that encodes the value to be copied and renamed has size $\bigo(\size\tm)$. There are $\sizebeta\exec$ such steps.
  \item each overhead step except $\tomachgc$ costs $\bigo(1)$ and there are $\bigo((1 + \sizebeta\exec) \size\tm)$ such steps by
    \refcorollary{bound-com}.
 \end{itemize}
 Adding all the costs together yields the expected bound.
\end{proof}
\begin{corollary}[Cost of $\tomachgc$ steps]\label{c:bound-cost-gc}
 A RAM in $\bigo((1 + \sizebeta\exec)\size\tm)$ cannot create more than $\bigo((1 + \sizebeta\exec)\size\tm)$ new constructors.
 Since the number of initial node is also $\bigo(\size\tm)$, garbage-collection cannot recover more than
 $\bigo((1 + \sizebeta\exec)\size\tm)$ constructors and thus the overall cost of garbage-collection is also
 $\bigo((1 + \sizebeta\exec)\size\tm)$.
\end{corollary}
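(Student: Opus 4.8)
The plan is to follow the standard three-step schema for complexity analyses of abstract machines: bound the number of transitions, bound the cost of each transition, and combine them, handling garbage collection with an amortized argument because a per-step analysis of $\tomachgc$ is too loose. The transition count is already available: by \refcorollary{bound-com} the number of overhead transitions of $\exec$ is $\bigo((1+\sizebeta\exec)\size\tm)$, and there are exactly $\sizebeta\exec$ $\beta$-transitions by definition, so the grand total number of transitions is $\bigo((1+\sizebeta\exec)\size\tm)$.

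Next I would fix the concrete implementation of bites and crumbled environments on a RAM, as sketched before the statement: variables are memory locations, occurrences are pointers, and an ES $\esub\var\mol$ stores $\mol$ at the location of $\var$, so that environment look-up is $\bigo(1)$; the state is traversed as a pointer-based DAG, using McBride's space-conscious derivative technique to realize the bi-directional moves of the zig-zag strategy in $\bigo(1)$. Under these assumptions each search transition costs $\bigo(1)$; the renaming transition $\tomachsub$ costs $\bigo(1)$, because by the pristine invariant (\reflemma{aux-most-once}) the renamed variable occurs at most once when the rule fires, so a single bi-directional link reaches the occurrence; and each $\beta$-transition $\tomachbv$ or $\tomachbi$ costs $\bigo(\size\tm)$, its only non-constant work being the copy and $\alpha$-renaming of a value, whose size is $\bigo(\size\tm)$ by the size invariant (\refthm{size-invariant}) together with linear compilation (\reflemma{bound-measure-size}). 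Summing the $\sizebeta\exec$ $\beta$-transitions and the $\bigo((1+\sizebeta\exec)\size\tm)$ overhead transitions distinct from $\tomachgc$ gives the bound $\bigo((1+\sizebeta\exec)\size\tm)$ for everything except garbage collection.

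The main obstacle is garbage collection, where the natural per-step bound is insufficient. The occurrence test is $\bigo(1)$ via a reference counter on each variable, but the work of actually erasing $\esub\var\val$ is proportional to $\size\val$, which is $\bigo(\size\tm)$ by the size invariant; since there may be $\bigo((1+\sizebeta\exec)\size\tm)$ such steps, multiplying yields only the loose quadratic bound $\bigo((1+\sizebeta\exec)\size\tm\cdot\size\tm)$. I would replace this by a global, amortized argument: every constructor is deallocated at most once, so the aggregate cost of all $\tomachgc$ steps is bounded by the total number of constructors ever created. The initial state contributes $\bigo(\size\tm)$ constructors, and by the previous paragraph the rest of the machine runs in time $\bigo((1+\sizebeta\exec)\size\tm)$, whence it cannot allocate more than $\bigo((1+\sizebeta\exec)\size\tm)$ new constructors; hence the overall cost of garbage collection is $\bigo((1+\sizebeta\exec)\size\tm)$ as well. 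Adding this to the cost of all other transitions yields the claimed bound $\bigo((1+\sizebeta\exec)\size\tm)$ and closes the proof.
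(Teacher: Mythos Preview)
Your proposal is correct and the amortized argument in your third paragraph is exactly the paper's reasoning for this corollary: the non-$\tomachgc$ work runs in bilinear time, a RAM in that time allocates at most bilinearly many constructors, the initial state contributes $\bigo(\size\tm)$ more, and garbage collection can free each constructor at most once. The only difference is scope: your first two paragraphs re-derive the bilinearity of the non-$\tomachgc$ transitions, but in the paper that is already packaged as the preceding \reflemma{bilinear-scam}, which this corollary simply takes as input; so for the present statement only your third paragraph is needed.
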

\begin{corollary}[The \SCAM{} is bilinear]
   \label{c:bilinear-scam}
   \NoteState{thm:bilinear-scam}
   Let $\tm$ be a $\l$-term and  $\exec\colon \compil\tm \tomachscam^* \state$ a \SCAM{} execution. Then $\exec$ can be implemented on a RAM in $\bigo((1 + \sizebeta\exec)\size\tm)$ time.
\end{corollary}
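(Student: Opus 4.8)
The plan is to obtain the bound by decomposing the total running time of $\exec$ on a RAM into two disjoint contributions---the cost of all transitions \emph{except} garbage collection, and the cumulative cost of the $\tomachgc$ transitions---and to argue that each contribution is separately $\bigo((1 + \sizebeta\exec)\size\tm)$, so that their sum is bilinear as well. Since both pieces are already isolated as earlier results, the corollary is essentially an assembly step.

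First I would invoke \reflemma{bilinear-scam}, which establishes that the cost of implementing $\exec$, \emph{excluding} the $\tomachgc$ steps, is $\bigo((1 + \sizebeta\exec)\size\tm)$. This rests on two ingredients already in place: a per-transition cost analysis (each $\beta$-transition costs $\bigo(\size\tm)$ by the size invariant \refthm{size-invariant}, which caps the size of the duplicated value, while every non-$\tomachgc$ overhead transition costs $\bigo(1)$ thanks to the pointer-based DAG representation, the zipper-style bidirectional traversal, and the pristine invariant that confines a $\tomachsub$ jump to a single occurrence), combined with the bilinear transition count of \refcorollary{bound-com}. One subtlety to address is that \reflemma{bilinear-scam} is phrased for executions reaching a final state $\env \lrsep \ctxhole$, whereas the corollary allows an arbitrary reachable $\state$; this is harmless, because both the per-step bounds and \refcorollary{bound-com} hold for every prefix of an execution, so the identical reasoning applies to $\exec\colon\compil\tm \tomachscam^* \state$ directly.

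Second I would account for the garbage-collection steps via \cref{c:bound-cost-gc}. Here the hard part---and the only genuinely delicate point of the whole argument---is that $\tomachgc$ cannot be bounded step-by-step: a single $\tomachgc$ step may erase a value of size $\bigo(\size\tm)$, and there can be $\bigo((1 + \sizebeta\exec)\size\tm)$ such steps by \refcorollary{bound-com}, which would yield only a quadratic estimate. The correct reasoning is amortised and global: a RAM running within $\bigo((1 + \sizebeta\exec)\size\tm)$ time can allocate at most $\bigo((1 + \sizebeta\exec)\size\tm)$ new constructors, and the initial code contributes only $\bigo(\size\tm)$ further nodes, so the total number of constructors that can ever be reclaimed is $\bigo((1 + \sizebeta\exec)\size\tm)$; since a $\tomachgc$ step costs time proportional to the space it frees, the aggregate cost of all $\tomachgc$ steps is $\bigo((1 + \sizebeta\exec)\size\tm)$ as well.

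Finally, summing the two bilinear contributions gives the claimed $\bigo((1 + \sizebeta\exec)\size\tm)$ time bound, which completes the proof. I expect no further difficulty beyond the amortised garbage-collection estimate already captured by \cref{c:bound-cost-gc}: it is the single place where naive per-step accounting breaks down and where one must instead reason about the total space allocated over the entire run, whereas the remainder is a routine combination of \reflemma{bilinear-scam} and \cref{c:bound-cost-gc}.
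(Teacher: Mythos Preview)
Your proposal is correct and follows exactly the paper's approach: the paper's proof is simply ``Immediate by \reflemma{bilinear-scam} and \refcorollary{bound-cost-gc},'' which is precisely your decomposition into non-$\tomachgc$ cost plus amortised $\tomachgc$ cost. Your remark that \reflemma{bilinear-scam} is stated for a final state but extends to arbitrary prefixes is a valid clarification the paper leaves implicit.
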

\begin{proof}
 Immediate by \reflemma{bilinear-scam} and \refcorollary{bound-cost-gc}.
\end{proof}

\section{Implementation in OCaml}
\label{sect:ocaml}
\lstset{
 language=caml,
 mathescape=true,
 columns=[c]fixed,
 basicstyle=\small\ttfamily,
 keywordstyle=\bfseries,
 upquote=true,
 commentstyle=\color{gray},
 breaklines=true,
 showstringspaces=false,
 emph={None, Some}, emphstyle=\color{blue},
 emph={[2]content, copying, prev, rc, occurs, Body, Container, Var, Lam, App, Shared}, emphstyle={[2]\color{blue}},
 emph={[3]int, option, bool, exp_subst, bite, environment, revenvironment, body_or_container, zipper, var}, emphstyle={[3]\color{olive}},
}

We describe now an implementation of the \SCAM{} in OCaml, that can be
downloaded at \url{https://tinyurl.com/y5remxo8}.
The code is meant to demonstrate concretely that every
machine transition can be implemented in $\bigo(1)$. Moreover we tried very hard to minimize the
memory footprint, avoiding doubly-linked data structures almost everywhere.

The implementation is also useful to experiment with ideas and
variations and to trace execution on interesting terms. The code is not meant
instead to be tested via benchmarks, to compare the implementation with other ones:
it heavily employs mutation, which is costly in OCaml because of the write
barrier of the garbage collector. Moreover OCaml is garbage collected,
but our machine already takes care of garbage collecting. Therefore, for the
sake of comparing with other implementations, the code should be rewritten in
a low-level language like C.

The code is attached to the submission. It is self-contained and it implements
a parser for pure terms, the crumbling translation, and the \SCAM{}
itself. It is also possible to compile the code to JavaScript and run it in a browser. In this case the user can write down a term to be reduced in the browser and look at every intermediate machine states.

The code snippets that we show in this section have been obtained by removing
from the submitted code the pretty-printing statements used only to show how
the machine runs.

\subsection{Preliminaries: machine moves and zippers}

The \SCAM{} works on a graph of memory cells that encodes machine states.
It crawls the graph looking for the next redex, which is reduced performing
local graph modifications only. To reach the next redex, the machine moves in two different directions:
\begin{enumerate}
\item \emph{bi-directional horizontal visits of environments}: the machine alternates left to right (in the open phase) and right to left (in the strong phase) walks on environments. Note that, because of sharing, environments are not simply lists but DAG structures.
`\item \emph{bi-directional vertical visits of abstractions}: switching from an open phase to a strong one is done when entering into the body of an abstraction, itself an environment, and the opposite switch requires to exit the abstraction.
\end{enumerate}
As we recall next, the space-conscious way of visiting bidirectionally a list is using a \emph{zipper}, rather than doubly linking the list---the technique also smoothly scales up to trees. Our implementation uses the same principle, except that we have to deal with a rich graph structure (an enriched DAG), not just lists or trees.

\subsubsection{Zippers} In functional programming, imperative data structures equipped with a cursor, such that the only modifications can happen locally where the cursor is placed, are effectively replaced by zipper-like data structures \cite{DBLP:journals/jfp/Huet97}.

The zipper --- the first zipper-like data structure ever discovered --- represents a list and a cursor into it with two lists --- actually two stacks: the first one is the tail of the list, i.e. the suffix of the list that starts at the cursor; the second one is the prefix of the list already traversed, with all pointers reversed to obtain again a list. Moving the cursor one position to the right or to the left, for example, just corresponds to popping the head of one list and pushing it on top of the other. Inserting an element where the cursor is corresponds to pushing an element on top of the suffix list, etc.

The zipper is interesting also in an imperative setting because it provides in $\bigo(1)$ most operations normally implemented using bi-directional lists, but it uses the same amount of space of a normal list (plus one additional pointer).

Zipper-like data structures can be obtained for all kind of algebraic data types via formal derivation \cite{Mcbride01thederivative}. In particular, it is easy to obtain a zipper-like data structure for trees. To obtain the zipper the idea is simply: every time a pointer is traversed, it is also reversed and the entry-point into the data structure becomes a tuples of forward pointers --- to proceed in the visit --- and backward pointers --- to go back.

\begin{figure*}[!t]
\includegraphics[width=15cm]{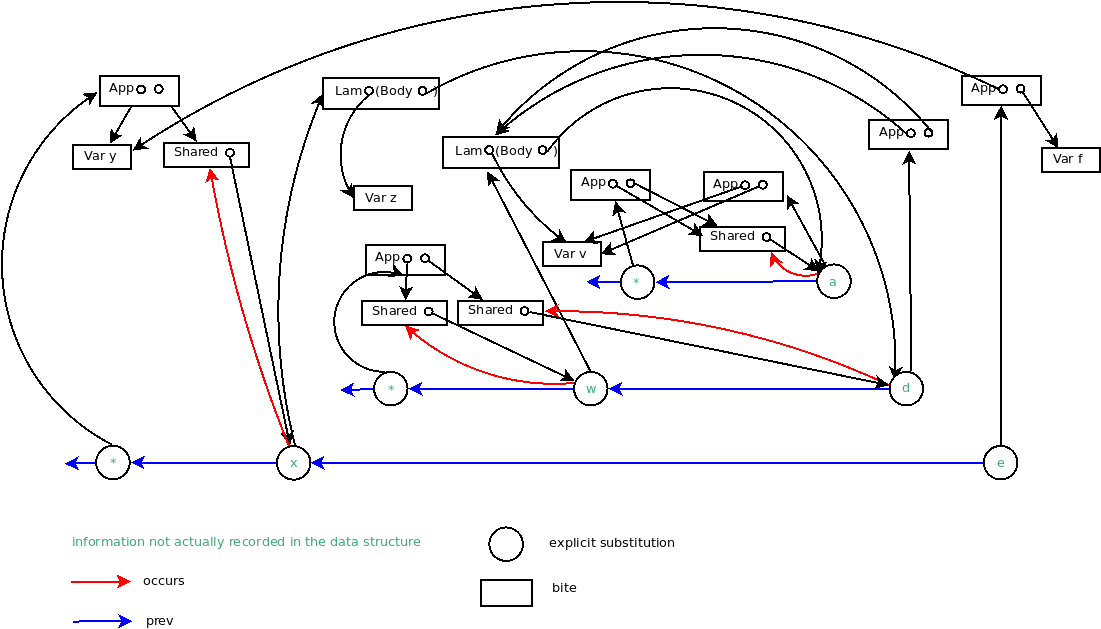}
\caption{\label{f:exampleterm}Example of a crumbled environment as a graph: $\esub{\varstar}{yx}
 \esub{x}{\la z
   \esub{\varstar}{wd}
   \esub{w}{\la v \esub{\varstar}{aa}\esub{a}{vv}}
   \esub{d}{zz}
 }
 \esub{e}{yf}
$ }
\end{figure*}

\subsection{Data structures}

We discuss now all the data structures used for the implementation, shown in
Table~\ref{t:data}.

\begin{table}[h]
\begin{lstlisting}
type var = { name : int }

type exp_subst =
 { mutable content : bite
 ; mutable copying : bool
 ; mutable prev : exp_subst option
 ; mutable rc : int
 ; mutable occurs : bite option }
and environment = exp_subst
and revenvironment = exp_subst
and body_or_container =
 | Body of environment
 | Container of zipper option
and bite =
 | Var of var
 | Lam of {v: var ; mutable b: body_or_container}
 | App of bite * bite
 | Shared of {mutable c: exp_subst}
and zipper = environment option * revenvironment option
\end{lstlisting}
\caption{Data structures\label{t:data}}
\end{table}

\subsubsection{Bites and crumbled environments}

An example of the representation in memory of a crumbled environment is given in
Figure~\ref{f:exampleterm}.

The type of bites is \verb+bite+. A bite is either an occurrence of a variable, an abstraction or an application.

Free variables and variables bound by abstractions are encoded by \verb+Var+ cells. A \verb+Var+ holds a name (an integer), used only for pretty-printing purposes. Variables bound in the context are encoded by \verb+Shared+ cells, that hold (mutable) pointers to elements of the \verb+exp_subst+ datatype, which encodes explicit substitutions. We shall come back soon to the details of \verb+exp_subst+.

Application cells \verb+App+ just hold two pointers to the two arguments.

Abstraction cells \verb+Lam+ hold a pointer to the bound variables and a (mutable) value of type \verb+body_or_container+, that encodes the two possible forms of an abstraction in our zipper-like data structure: when the body of the abstraction is not being visited, the abstraction holds a pointer to its body, which is an environment encoded as a pointer to its rightmost explicit substitution. When the body is being visited, instead, the abstraction points back to its innermost enclosing abstraction, if it exists. To identify the enclosing abstraction, however, a single pointer to the bite is not sufficient to resume the visit later; instead we identify the enclosing abstraction --- which is always the definiens of an explicit substitution --- by pointing to its explicit substitution via a zipper over the environment the explicit substitution belongs to. We use the \verb+Body+ constructor to label the pointer to the body and the \verb+Container+ constructor to label the optional label to the zipper pointing to the enclosing abstraction.

The datatype \verb+zipper+ represents standard zippers over environments, which are lists. It is defined as a pair made of an environment, of type \verb+environment+, and an environment where all pointers are reversed, of type \verb+revenvironment+. Both environments are identified by the first explicit substitution of the list.

\begin{figure*}[t]
\includegraphics[width=15cm]{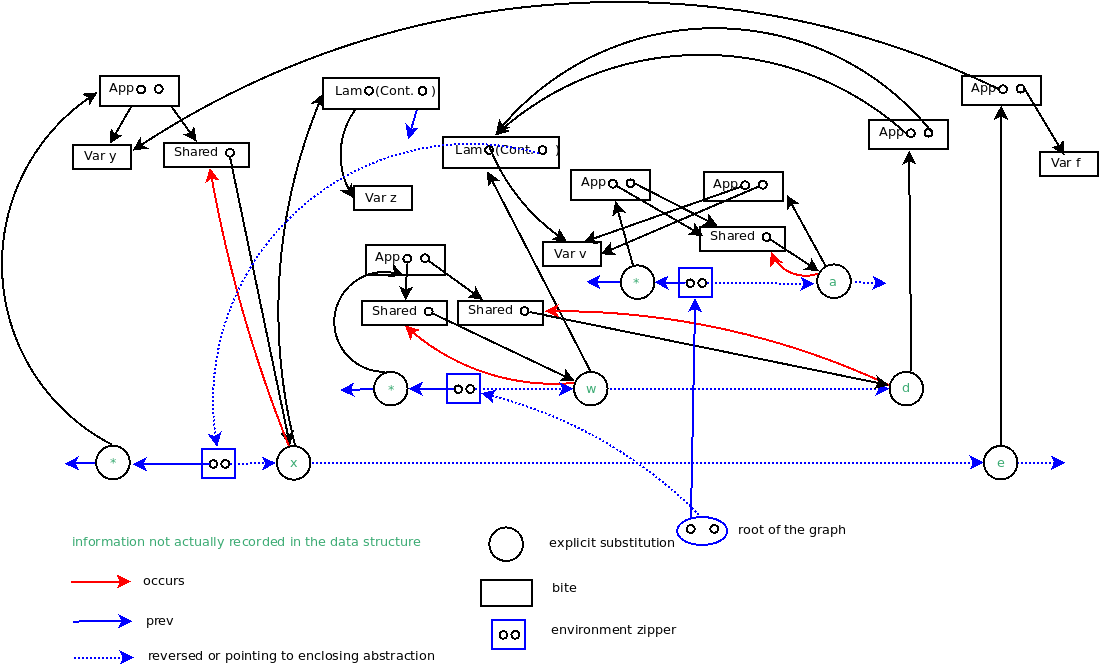}
\caption{\label{f:examplestate}Example of a machine state as a graph:
$\esub{\varstar}{aa}
 \lrsep
\esub{\varstar}{yx}
 \esub{x}{\la z
   \esub{\varstar}{wd}
   \esub{w}{\la v \ctxhole\esub{a}{vv}}
   \esub{d}{zz}
 }
 \esub{e}{yf}
$ }
\end{figure*}

Explicit substitutions are represented by records of type \verb+exp_subst+ made of several mutable fields:
\begin{itemize}
 \item \verb+content+, a bite, it is the bite $\mol$ of the ES $\esub\var\mol$. The variable $\var$, instead, is unnamed and identified with the memory location  of the cell.\footnote{We write (in green) the name $\var$ in the example graphs of this section for legibility purposes only, even if the name is lost in the implementation.}
 \item \verb+copying+ is a boolean that is used to implement the linear graph
   copying algorithm described in \cite{DBLP:conf/ppdp/AccattoliB17}, used to implement the $\alpha$-renaming in the $\beta$-transitions of the \SCAM{}. The algorithm is the same used
   in mark\&sweep garbage collectors to copy a graph of linked cells to a new
   memory region without loosing sharing. Normally, the boolean is set to
   false. It temporarily becomes \verb+true+ during the copy
   when the cell has already been copied and therefore pointers to the cell
   are to be forwarded to the copy.
 \item \verb+prev+, another optional explicit substitution, is the next substitution in the environment containing the explicit substitution. Therefore an environment is represented like in
   C lists by cells pointing to the next cell or to \verb+None+
   if the cell is the last one of the list.
 \item \verb+rc+, for \emph{reference counter}, an integer, holding the number of occurrences in the graph of
   the variable bound by the
    explicit substitution. When it becomes zero the explicit substitution can be garbage
   collected.

 \item \verb+occurs+, an optional bite: roughly, it indicates whether the variable bound by the ES has been 
introduced by the crumbling transformation. More precisely, variables generated during crumbling
   occurs exactly once; later, when rule $\tomachsub$ is fired, a machine
   invariant (see \reflemma{aux-aux-aux-most-once})
   grants the variable to be still occurring at most
   once. The \verb+occurs+ field, if set and if the explicit substitution is
   part of a pristine environment, points to the unique occurrence and it
   is used to implement rule $\tomachsub$ in $\bigo(1)$. This is the only pointer that
   violates acyclicity of the memory cells graph.
\end{itemize}

\subsubsection{Contexts and states}

We recall here the definition of machine contexts and machine states:

\begin{center}$\begin{array}{r\colspace rcl \colspace\colspace\colspace r\colspace rcl }
\textsc{Machine contexts} & \kctx & \grameq & \ctxhole\envtwo \mid \env\esub\var{\la\vartwo\kctx}\envtwo
\\
\textsc{State} & \state & \grameq & \env~\rlsep~\kctx \mid \env~\lrsep~\kctx
\end{array}$\end{center}

A machine state is of the form $\env \gensep \kctx$ where $\env$ is a crumbled environment, $\kctx$ is a machine context and together they identify a precise position in the crumbled environment $\kctxp\env$. Observe that
$\kctxp\env$ can be uniquely rewritten as $\kctxp{\env \envtwo}$ where
$\kctx$ is either $\ctxhole$ or it has the form $\kctxtwop{\env_1\esub\var{\la\vartwo\ctxhole}\env_2}$.
Therefore, to identify the position, it is sufficient to use a zipper for
environments to encode $\env\envtwo$ and a second optional zipper for environments to identify $\esub\var{\la\vartwo\ctxhole}$ inside ${\env_1\esub\var{\la\vartwo\ctxhole}\env_2}$. We already took care of identifying ${\env_1\esub\var{\la\vartwo\ctxhole}\env_2}$ inside $\kctxtwop{\env_1\esub\var{\la\vartwo\ctxhole}\env_2}$ by making the traversed abstraction ${\la\vartwo\ctxhole}$ point back, via another optional zipper, to its innermost enclosing abstraction.

Figure~\ref{f:examplestate} shows the OCaml representation of a machine state
$\env \lrsep \kctx$ such that $\kctxp\env$ is identical to the crumbled environment of Figure~\ref{f:exampleterm}. The root of the graph is the pair of zippers just discussed. Dotted blue lines represents \verb+prev+ arcs that have been reversed or arcs used to make abstractions point to the innermost enclosing abstraction.

To summarize so far, the number of memory cells used to represent a machine state is linear in the number of cells used to represent a crumbled environment: for each traversed abstraction we added a new cell holding the two pointers of a zipper.

\subsection{Machine moves and garbage collection}
\begin{table}[t]
\begin{small}
\begin{lstlisting}
let rec gc {prev=p;content=c;_} =
 Option.iter gc p;
 gc_bite c
and gc_bite =
 function
    Var _ -> ()
  | Shared {c=v} -> v.rc <- v.rc - 1
  | App(t1,t2) -> gc_bite t1; gc_bite t2
  | Lam{b=Body e;_} -> gc e
  | Lam{b=Container _;_} -> assert false
\end{lstlisting}
\end{small}
\caption{\label{l:gc}Garbage collection}
\end{table}

This section ends with the code of the
right-to-left and left-to-right evaluation phases. The code also depends on
two additional functions: \verb+gc+, shown in Table~\ref{l:gc}, that
is meant to garbage collect the term and \verb+copy_env+ that
creates in linear time an $\alpha$-renamed copy of an environment using a
modified mark\&sweep algorithm described in \cite{DBLP:conf/ppdp/AccattoliB17}.

Both evaluation functions take in input the entry point of the graph, i.e.
the pairs made of the zipper and the optional zipper previously discussed.
The call to start evaluation on a crumbled environment \verb+e+ is
\verb+eval_RL (Some e,None) None+ where $\verb+(Some e, None)+$ is the
initial zipper on the unvisited environment \verb+e+ and the second \verb+None+
signifies that we never crossed an abstraction so far.

The two \verb+eval+ functions trivially implement the reduction rules of the
machine. The only interesting observations are:
\begin{enumerate}
 \item \verb+eval_LR+ and \verb+eval_LR+ are mutual functions as expected.
 \item all recursive calls to one of the two eval functions are in tail position
   and therefore there is no inner cost in space to evaluate them.
 \item the calls to the two auxiliary functions
   \verb+gc+ and \verb+copy_env+ are not in tail position. These functions,
   moreover, have complexity linear in the size of their input both in space
   and in time.
 \item all other operations except the calls to \verb+gc+ and \verb+copy_env+
   have constant complexity as expected: they just increment/decrement numbers, 
   read or assign pointers, pattern-match and build algebraic data types, and
   test pointer equalities.
\end{enumerate}

The \verb+gc+ function written in OCaml is a bit weird: since OCaml has automatic garbage collection, \verb+gc+ just needs to traverse the data structure to decrement all the reference counters. Then the implementation of the $\tomachgc$ rule will not use in the recursive call the explicit substitution to be garbage collected, triggering the garbage collector of OCaml. An implementation in C would just free all cells during the recursion.

\begin{small}
\begin{lstlisting}
let rec eval_RL ((n,z) as zip : zipper) (k : zipper option) =
 match n with
   None ->
    (* $\tomachctwo$ *)
    eval_LR (None,z) k
 | Some n ->
   match n.content with
    | (App(_ ,App _) | App(App _,_)
      |App(_, Lam _) | App(Lam _,_)
      |App(Shared {c={content=Lam{b=Container _;_};_}},_)) -> assert false
    | App(Shared {c={content=Lam{v=y;b=Body e;_};_} as r}, (Shared {c={content=Lam _;_} as y'})) ->
       (* $\tomachbv$ *)
       r.rc <- r.rc - 1 ;
       y'.rc <- y'.rc - 1 ;
       let e' = copy_env y y' n e in
       eval_RL (Some e',z) k
    | App(Shared {c={content=Lam{v=y;b=Body e};_} as r}, t) ->
       (* $\tomachbi$ *)
       r.rc <- r.rc - 1 ;
       let y' = mk_exp_subst t in
       let e' = copy_env y y' n e in
       y'.prev <- z;
       eval_RL (Some e',Some y') k
    | Shared {c={content=(Shared _ | Var _);_} as c} when n.prev <> None ->
        (* $\tomachsub$ *)
        (match n.occurs with
            Some (Shared r as o) ->
             c.occurs <- Some o;
             r.c <- c ;
             eval_RL (n.prev,z) k
          | _ -> assert false) ;
    | (Lam _ | Var _ | App(Var _, _)
       |Shared _ | App(Shared _, _)) ->
         (* $\tomachcone$ *)
         let p = n.prev in
       eval_RL (Some e',z) k
    | App(Shared {c={content=Lam{v=y;b=Body e};_} as r}, t) ->
       (* $\tomachbi$ *)
       r.rc <- r.rc - 1 ;
       let y' = mk_exp_subst t in
       let e' = copy_env y y' n e in
       y'.prev <- z;
       eval_RL (Some e',Some y') k
    | Shared {c={content=(Shared _ | Var _);_} as c} when n.prev <> None ->
        (* $\tomachsub$ *)
        (match n.occurs with
            Some (Shared r as o) ->
             c.occurs <- Some o;
             r.c <- c ;
             eval_RL (n.prev,z) k
          | _ -> assert false) ;
    | (Lam _ | Var _ | App(Var _, _)
       |Shared _ | App(Shared _, _)) ->
         (* $\tomachcone$ *)
         let p = n.prev in
         n.prev <- z;
         eval_RL (p,Some n) k
and eval_LR ((n,z) as zip : zipper) (k : zipper option) =
   match z,k with
      None,None ->
       (* normal form reached! *)
       n
   | None,Some (n',Some ({prev=z'';content=Lam({b=Container k';_} as r);_} as z')) ->
      (* $\tomachcfour$ *)
      r.b <- Body (match n with None -> assert false | Some n -> n);
      z'.prev <- n';
      eval_LR (Some z',z'') k'
   | None,Some _ -> assert false
   | Some ({prev=p;rc;_} as zz),_ ->
      match zz.content with
       | Lam {b=Body e;_} when rc = 0 ->
           (* $\tomachgc$ *)
           gc e;
           eval_LR (n,p) k
       | Lam ({b=Body e;_} as r) ->
           (* $\tomachcfive$ *)
           r.b <- Container k;
           eval_RL (Some e,None) (Some (n,z))
       | Lam {b=Container _;_} ->
           assert false
       | (Var _ | Shared _ | App _) ->
           (* $\tomachcthree$ *)
           zz.prev <- n;
           eval_LR (Some zz,p) k
\end{lstlisting}
\end{small}

\section{Proofs of Section~\ref*{SECT:COMPLEMENTS} (Implosiveness at Work)}
\label{app:complements}

\begin{proposition}[Implosive family]
\label{propappendix:implosive-family}
   \NoteState{prop:implosive-family}
Let $\tm_n$ and $\tmtwo_n$ as in \Cref{SECT:COMPLEMENTS}.
\begin{enumerate}
	\item \emph{External strategy (exponentially many steps)}: $\tm_n (\toms\toes)^{2^n-1} \tmtwo_n$ and $\tmtwo_n$ is a strong fireball.
\item \emph{\SCAM{} Implosion (linearly many steps)}: $\exec_n: \compil{\tm_n} \tomachscam^* \state_n$ with $\unf{\state_n} = \tmtwo_n$ and $\sizebeta{\exec_n} = n$.
\end{enumerate}
\end{proposition}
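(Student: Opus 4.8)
The plan is to prove the two points of Proposition~\ref{prop:implosive-family} separately, establishing first the behaviour of the external strategy $\tovsubs$ on the pure $\l$-terms $\tm_n$ (the exponential side), and then tracing the \SCAM{} execution on $\compil{\tm_n}$ (the implosive, linear side). The whole statement is an instance of the general phenomenon highlighted in the introduction, so the work is almost entirely a concrete computation guided by the right inductive invariants; the conceptual content is already available in \refth{machine-final} and the size/goodness invariants, which guarantee correctness and let me read the \SCAM{} output back to a VSC term.

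First I would prove point~1 by induction on $n$. The key observation is the one flagged in \refSECT{COMPLEMENTS}: each $\beta$-redex in $\tm_n$ is a copy of $\pi \defeq \la\var\la\vartwo((\vartwo\var)\var)$ applied to a single value argument, the argument gets duplicated (it occurs twice as $\var$ in the body of $\pi$), and these features are stable under evaluation. Concretely, I would set up the induction so that firing the outer $\pi$-redex of $\tm_{n+1} = \pi(\la\varthree\tm_n)$ costs one $\toms$ step (the multiplicative/$\beta$ step turning $\pi(\la\varthree\tm_n)$ into an ES) followed by the $\toes$ step substituting the value, producing $\la\vartwo((\vartwo(\la\varthree\tm_n))(\la\varthree\tm_n))$, which exposes \emph{two} independent copies of $\tm_n$. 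Each copy must then be reduced to $\tmtwo_n$ by the \ih{}, costing $2^n-1$ external $\toms\toes$ pairs each; hence the recurrence for the total count $c_{n+1} = 1 + 2c_n$ with $c_1 = 1$, whose solution is $c_n = 2^n-1$. That the normal form is exactly $\tmtwo_n$, a strong fireball, follows from \reflemma{harmony} together with the inductive shape of $\tmtwo_n$; I would also invoke the diamond property (\refpropp{vsc-diamond}{diamond}) to justify that the displayed evaluation, being one to normal form, has the canonical number of $\toms$ steps, so the count is well-defined independently of the order chosen.

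For point~2, I would run the \SCAM{} on $\compil{\tm_n}$ and exhibit $\sizebeta{\exec_n} = n$ by induction. The crucial point—exactly the one stressed in the design discussion—is that all substitutions in this family happen on \emph{arguments} of applications, i.e.\ on variable occurrences that are \emph{not} applied, so the associated micro-exponential steps are useless; the \SCAM{} therefore never copies the shared abstraction but instead evaluates it once, in place, under deep implosive sharing. Thus each copy of $\pi$ fires a single $\beta$-transition ($\tomachbi$, since the argument $\la\varthree\tm_{n-1}$ reads back to an inert once its own value is kept shared), and the two occurrences of the duplicated argument become two \texttt{Shared} pointers to the \emph{same} environment entry rather than two separate redexes. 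I would make this precise by identifying an invariant on the reachable states describing the crumbled environment after processing level $k$, showing that exactly one new $\beta$-transition is performed per level and no duplication of pending redexes occurs. The read-back $\unf{\state_n} = \tmtwo_n$ then follows from the goodness/garbage invariants and modular read-back (\reflemma{read-back-decomposition}), since $\state_n$ is a normal state whose frame reads back to the strong fireball $\tmtwo_n$, the sharing being invisible up to $\unf\cdot$.

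The main obstacle I expect is the bookkeeping in point~2: the \SCAM{} alternates open and strong (zig-zag) phases, and to count only $n$ $\beta$-transitions I must carefully track how, at each level, the machine (i) fires the $\pi$-redex once, (ii) enters the shared abstraction $\la\varthree\tm_{n-1}$ via a $\tomachcfive$ transition to evaluate its body once, and (iii) resumes without ever re-entering or re-copying it. Formulating the inductive invariant on states so that the shared subterm is literally a single environment entry pointed to twice—and proving that the interleaving of $\tomachcfour/\tomachcfive$ and open-phase transitions does not spuriously duplicate work—is the delicate part; everything else is routine unfolding backed by the invariants already proved. I would lean on \reflemmap{unsenv-prefix}{one} and \reflemmap{unsenv-prefix}{two} to compute the induced environments $\wstenv\kctx$ across phase switches, keeping the $\beta$-count visibly equal to the number of levels.
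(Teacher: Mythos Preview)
Your plan for point~1 is correct and matches the paper's proof exactly: induction on $n$, fire the outer $\pi$-redex with one $\toms\toes$ pair, expose two copies of $\tm_n$ inside external contexts, apply the \ih{} to each, and solve the recurrence $c_{n+1}=1+2c_n$.

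For point~2 your overall strategy (trace the \SCAM{} execution level by level and count one $\beta$-transition per level) is also what the paper does, but there are two issues worth flagging.

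First, a concrete error: the $\beta$-transition that fires at each level is $\tomachbv$, not $\tomachbi$. The argument of $\pi$ is always a value (either $I$ or $\la\varthree\tm_{n-1}$), so $\wstenv\kctx(\varthree')$ is an abstraction and the side condition of $\tomachbv$ applies. Your justification ``the argument \ldots\ reads back to an inert once its own value is kept shared'' confuses the machine's test (which looks at the bite $\wstenv\kctx(\varthree')$, a value here) with some read-back property. This does not change the $\beta$-count, but it does change the resulting state and reveals a misreading of the transition rules.

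Second, you are making the induction harder than necessary. The paper's key simplification, which you do not mention, is that $\tm_n$ is \emph{closed}. After firing the one $\tomachbv$ at level $n{+}1$, entering and exiting the (normal) body of $\la\vartwo'$, garbage-collecting the now-dead ES for $\pi$, and executing $\tomachcfive$ to enter $\la\varthree$, the machine reaches a state of the form $\mytr{\tm_n}\rlsep\kctx$ where $\tm_n$ has no free variables bound by $\kctx$. Hence the subsequent execution on $\mytr{\tm_n}$ is literally the same as the execution from $\compil{\tm_n}$, and the induction closes immediately. You can therefore dispense with the goodness/garbage-free invariants and \reflemma{read-back-decomposition} in favour of a direct trace; the heavy machinery you propose to lean on is not needed for this example.
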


\begin{proof}
\hfill
\begin{enumerate}
	\item By induction on $n$. 
	\begin{itemize}
	\item Case $n=1$: 
	$$\tm_1 = \pi I = (\la\var\la\vartwo ((\vartwo \var)\var)) I \toms \la\vartwo ((\vartwo \var)\var)\esub\var I \toes \la\vartwo((\vartwo I) I) = \tmtwo_1$$
	and $\tmtwo_1$ is evidently a strong fireball.
	\item Case $n+1$: 
	$$\tm_{n+1} = \pi (\la\varthree \tm_n) = (\la\var\la\vartwo ((\vartwo \var)\var)) (\la\varthree \tm_n) \toms (\la\vartwo ((\vartwo \var)\var)) \esub\var{\la\varthree \tm_n} \toes \la\vartwo((\vartwo(\la\varthree \tm_n)) (\la\varthree \tm_n)).$$ 
	By \ih, $\tm_n (\toms\toes)^{2^n-1} \tmtwo_n$ and the two copies of $\tm_n$ in $ \la\vartwo((\vartwo(\la\varthree \tm_n)) (\la\varthree \tm_n))$ are inside an external evaluation context. Then $ \la\vartwo((\vartwo(\la\varthree \tm_n)) (\la\varthree \tm_n)) (\toms\toes)^{2^n-1}  \la\vartwo((\vartwo(\la\varthree \tmtwo_n)) (\la\varthree \tm_n)) (\toms\toes)^{2^n-1}  \la\vartwo((\vartwo(\la\varthree \tmtwo_n)) (\la\varthree \tmtwo_n)) = \tmtwo_{n+1}$. The number of $\toms\toes$ double steps is $2^n-1 + 2^n-1 +1 = 2^{n+1} -1$. By \ih, $\tmtwo_n$ is a strong fireball and so $\tmtwo_{n+1}$ is a strong fireball.
	\end{itemize}
	
	\item We sketch the idea. The crumbling of $\tm_{n+1}$ is:
	$$\esub\varstar{\varfour\varthree'}\esub\varfour{\la\var\mytr{\la\vartwo ((\vartwo \var)\var)}}\esub{\varthree'}{\la\varthree\mytr{\tm_n}} $$
	Thus the machine does
	\[\begin{array}{llllll}
	\esub\varstar{\varfour\varthree'}\esub\varfour{\la\var\mytr{\la\vartwo ((\vartwo \var)\var)}}\esub{\varthree'}{\la\varthree\mytr{\tm_n}} \rlsep \ctxhole
	& \tomachcone
	\\
	\esub\varstar{\varfour\varthree'}\esub\varfour{\la\var\mytr{\la\vartwo ((\vartwo \var)\var)}} \rlsep \ctxhole\esub{\varthree'}{\la\varthree\mytr{\tm_n}}
	& \tomachcone
	\\
	\esub\varstar{\varfour\varthree'} \rlsep \ctxhole\esub\varfour{\la\var\mytr{\la\vartwo ((\vartwo \var)\var)}}\esub{\varthree'}{\la\varthree\mytr{\tm_n}}
	& \tomachbv
	\\
	\esub\varstar{\la{\vartwo'}\mytr{((\vartwo' \varthree')\varthree')}} \rlsep \ctxhole\esub\varfour{\la\var\mytr{\la\vartwo ((\vartwo \var)\var)}}\esub{\varthree'}{\la\varthree\mytr{\tm_n}}
		& \tomachcone
	\\
	\rlsep\ctxhole\esub\varstar{\la{\vartwo'}\mytr{((\vartwo' \varthree')\varthree')}}  \esub\varfour{\la\var\mytr{\la\vartwo ((\vartwo \var)\var)}}\esub{\varthree'}{\la\varthree\mytr{\tm_n}}
	\end{array}\]
	At this point the machine changes phase and enters $\l{\vartwo'}$. Since the body is normal and the occurrences of $\varthree'$ appear as arguments, the machine shall go through it without performing any $\beta$-transition, and thus without copying $\la\varthree\mytr{\tm_n}$. Thus the machine gets to the state
	\[\begin{array}{llllll}
		\esub\varstar{\la{\vartwo'}\mytr{((\vartwo' \varthree')\varthree')}}  \lrsep \ctxhole\esub\varfour{\la\var\mytr{\la\vartwo ((\vartwo \var)\var)}}\esub{\varthree'}{\la\varthree\mytr{\tm_n}}
	\end{array}\]
	Next, the machine garbage collects the ES on $\varfour$ and enters $\l\varthree$.
		\[\begin{array}{llllll}
		 \tomachgc &\esub\varstar{\la{\vartwo'}\mytr{((\vartwo' \varthree')\varthree')}}  \lrsep \ctxhole\esub{\varthree'}{\la\varthree\mytr{\tm_n}}
		 \\
		 		 \tomachcfive &\mytr{\tm_n}  \rlsep \esub\varstar{\la{\vartwo'}\mytr{((\vartwo' \varthree')\varthree')}}\esub{\varthree'}{\la\varthree\ctxhole}

	\end{array}\]
Note that $\tm_n$ is closed, so that its execution does not depend on the context\\ $\esub\varstar{\la{\vartwo'}\mytr{((\vartwo' \varthree')\varthree')}}\esub{\varthree'}{\la\varthree\ctxhole}$. Therefore the machine repeats the same sequence of transitions---that contains only one $\beta$-transition---for $\tm_n$. Therefore, the machine executes $\tm_{n+1}$ doing only $n+1$ $\beta$-transitions. By bilinearity, the whole execution has length $\bigo(n+1)$.
\qedhere
\end{enumerate}
\end{proof}

}{}

\end{document}